\newtheorem{theorem}{Theorem}[section]
\newtheorem{corollary}[theorem]{Corollary}
\newtheorem{definition}[theorem]{Definition}
\newtheorem{example}[theorem]{Example}
\newtheorem{lemma}[theorem]{Lemma}
\newtheorem{proposition}[theorem]{Proposition}
\newenvironment{proof}[1][Proof]{\noindent\textbf{#1.} }{\ \rule{0.5em}{0.5em}}
\begin{document}

\frontmatter
\title{Symmetries of Differential equations and Applications in
Relativistic Physics.}
\author{Andronikos Paliathanasis}
\date{Athens, 2014}
\maketitle

%TCIMACRO{%
%\TeXButton{preface}{\chapter*
%\newpage
%\section*{Preface}
%This thesis is part of the PhD program of the Department of Astronomy,
%Astrophysics and Mechanics of the Faculty of Physics of the University of Athens, Greece.}}%
%BeginExpansion
\chapter*
\newpage
\section*{Preface}
This thesis is part of the PhD program of the Department of Astronomy,
Astrophysics and Mechanics of the Faculty of Physics of the University of Athens, Greece.%
%EndExpansion
\clearpage
\thispagestyle{plain}

\vspace*{.35\textheight}{\centering In memory of my grandmother Amalia}

%TCIMACRO{%
%\TeXButton{abstract}{\newpage
%\section*{Abstract}
%In this thesis, we study the one parameter point transformations which leave invariant the differential equations.
%In particular we study the Lie and the Noether point symmetries of second order differential equations.
%We establish a new geometric method which relates the point symmetries of the differential equations
%with the collineations of the underlying manifold where the motion occurs.
%This geometric method is applied in order the two and three dimensional
%Newtonian dynamical systems to be classified in relation to the point symmetries;
%to generalize the Newtonian Kepler-Ermakov system in Riemannian spaces;
%to study the symmetries between classical and quantum systems and to investigate the geometric
%origin of the Type II hidden symmetries for the homogeneous heat equation and for the Laplace equation
%in Riemannian spaces. At last but not least, we apply this geometric approach in order to determine
%the dark energy models by use the Noether symmetries as a geometric criterion in modified theories of gravity.}}%
%BeginExpansion
\newpage
\section*{Abstract}
In this thesis, we study the one parameter point transformations which leave invariant the differential equations.
In particular we study the Lie and the Noether point symmetries of second order differential equations.
We establish a new geometric method which relates the point symmetries of the differential equations
with the collineations of the underlying manifold where the motion occurs.
This geometric method is applied in order the two and three dimensional
Newtonian dynamical systems to be classified in relation to the point symmetries;
to generalize the Newtonian Kepler-Ermakov system in Riemannian spaces;
to study the symmetries between classical and quantum systems and to investigate the geometric
origin of the Type II hidden symmetries for the homogeneous heat equation and for the Laplace equation
in Riemannian spaces. At last but not least, we apply this geometric approach in order to determine
the dark energy models by use the Noether symmetries as a geometric criterion in modified theories of gravity.%
\newpage
\section*{Acknowledgement}

This doctoral thesis would not have been possible without the guidance and support
of numerous people and I wish to express my gratitude here.\\\\
Professor Peter G.L. Leach deserves a special mention. It was your research that inspired me to wade into the
symmetries of differential equations and your encouragement
when I first set about my own research proved invaluable.\\\\
I am grateful for my collaboration with my Supervisor, Professor Michael Tsamparlis.
Your faith in my abilities and your advice, on both my research and my career,
throughout the years, has allowed me to grow to the scientist I am today.
You introduced me to the subject of differential geometry and taught me the geometric thought process.\\\\
I would especially like to thank both my advisors Dr. Spyros Basilakos, who opened up to
me the subject of cosmology, and Dr. Christos Efthimiopoulos; the discussions we had on the
topic have been priceless. Also, I feel fortunate and grateful for my examination committee, Prof. S. Capozziello,
Prof. P. J. Ioannou, Prof T. Apostolatos and Prof A. H. Kara. Without your important comments
I could not have completed this thesis.\\ \\
Above all, I feel grateful for my friends and family. Stelios, Sifis, Venia and Thanos your
friendship and encouragement all these years have been my most valuable support mechanism.\\\\
Last but not least, I would not be who I am today without my family.
My parents Maria and Sotiris, and my grandfather Ioannis, you have always believed in and you
have supported me in all the important decisions of my life. Marianthi, your patience,
your kindness and your help on all levels, is more than any brother deserves.%
%EndExpansion

%TCIMACRO{%
%\TeXButton{tables}{\tableofcontents
%\listoftables}}%
%BeginExpansion
\tableofcontents
\listoftables%
%EndExpansion

%TCIMACRO{\TeXButton{mainmatter}{\mainmatter}}%
%BeginExpansion
\mainmatter%
%EndExpansion

\part{Introduction}

\chapter{Introduction \label{summary}}

\section{Summary}

In this thesis, we study the geometric properties of the Lie and the Noether
point symmetries of second order differential equations. In particular, we
find a connection between the point symmetries of some class of second order
differential equations with the collineations of the underlying manifold
where the "motion" occurs.

The novelty here is that we provide a geometrical method to determine the
symmetries of dynamical systems. The importance of Lie and Noether
symmetries is that they offer invariant functions which can be used to find
analytic solutions of the dynamical system.

The above mentioned geometric method is applied in order the two and three
dimensional Newtonian dynamical systems to be classified in relation to the
point symmetries; to generalize the Newtonian Kepler-Ermakov system in
Riemannian spaces; to study the symmetries between classical and quantum
systems and to investigate the geometric origin of the Type II hidden
symmetries for the homogeneous heat equation and for the Laplace equation in
Riemannian spaces. At last but not least, we apply this geometric approach
in order to determine the dark energy models by use the Noether symmetries
as a geometric criterion in modified theories of gravity.

The plan of the thesis is as follows.

\subsection{Summary of Part I:\ Introduction}

In Part I we give the basic properties and definitions of one parameter
point transformations.

\bigskip

In Chapter \ref{chapter1}, we study the geometry of the one parameter point
transformation, the properties of Lie algebras and the invariant functions.
Moreover, the Lie and the Noether symmetries of ordinary and partial
differential equations are analyzed and two schemes for using Lie symmetries
to construct solutions are presented. Furthermore, we study the action of
point transformation on linear differential geometry object.

\subsection{Summary of Part II: Symmetries of ODEs}

In Part II we study the geometric origin of the Lie and the Noether point
symmetries of second order ordinary differential equations.

\bigskip

In Chapter \ref{LieSymGECh}, we consider the set of autoparallels - not
necessarily affinely parameterized - of a symmetric connection. We find that
the major symmetry condition relates the Lie symmetries with the special
projective symmetries of the connection. We derive the Lie symmetry
conditions for a general system of second order ODE polynomial in the first
derivatives and we apply these conditions in the special case of geodesic
equations of Riemannian spaces. Furthermore we give the generic Lie symmetry
vector of the geodesic equations in terms of the special projective
collineations of the metric and their degenerates and the generic Noether
symmetry vector of the geodesic Lagrangian in terms of the homothetic
algebra of the Riemannian space. Finally we apply the results to various
cases and eventually we give the Lie symmetries, the Noether symmetries and
the associated conserved quantities of Einstein spaces, the G\"{o}del
spacetime, the Taub spacetime and the Friedman Robertson Walker spacetimes.

\bigskip

In Chapter \ref{chapter3}, we generalize the results of the previous chapter
in the case of the equations of motion of a particle moving in a Riemannian
space under the action of a general force $F^{i}$. We apply these results in
order to determine all two dimensional and and all three dimensional
Newtonian dynamical systems which admit Lie and\ Noether point symmetries.
We demonstrate the use of the results in two cases. The non-conservative
Kepler - Ermakov system and the case of the H\`{e}non Heiles type potentials.

\bigskip

In Chapter \ref{chapter4}, we generalize the two-dimensional autonomous
Hamiltonian--Kepler--Ermakov dynamical system to three dimensions using the $%
sl(2,R)$ invariance of Noether symmetries and determine all
three-dimensional autonomous Hamiltonian--Kepler--Ermakov dynamical systems
which are Liouville integrable via Noether symmetries. Subsequently, we
generalize the autonomous Kepler--Ermakov system in a Riemannian space which
admits a gradient homothetic vector by the requirements (a) that it admits a
first integral (the Riemannian Ermakov invariant) and (b) it has $sl(2,R)$
invariance. We consider both the non-Hamiltonian and the Hamiltonian
systems. In each case, we compute the Riemannian--Ermakov invariant and the
equations defining the dynamical system. We apply the results in general
relativity and determine the autonomous
Hamiltonian-Riemannian--Kepler--Ermakov system in the spatially flat
Friedman Robertson Walker spacetime. We consider a locally rotational
symmetric spacetime of class A and discuss two cosmological models. The
first cosmological model consists of a scalar field with an exponential
potential and a perfect fluid with a stiff equation of state. The second
cosmological model is the$f(R)$-modified gravity model of $\Lambda _{bc}$%
CDM. It is shown that in both applications the gravitational field equations
reduce to those of the generalized autonomous Riemannian--Kepler--Ermakov
dynamical system which is Liouville integrable via Noether integrals.

\subsection{Summary of Part III: Symmetries of PDEs}

In Part III we study the geometric origin of the Lie and the Noether point
symmetries of second order ordinary differential equations.

\bigskip

In Chapter \ref{chapter5}, we attempt to extend this correspondence of point
symmetries and collineations of the space to the case of second order
partial differential equations. We examine the PDE of the form $%
A^{ij}u_{ij}-F(x^{i},u,u_{i})=0$,where $u=u\left( x^{i}\right) $ and $u_{ij}$
stands for the second partial derivative. We find that if the coefficients $%
A_{ij}$ are independent of $u$ then the Lie point symmetries of the PDE form
a subgroup of the conformal symmetries of the metric defined by the
coefficients $A_{ij}$. We specialize the study to linear forms of $%
F(x^{i},u,u_{i})~$and write the Lie symmetry conditions for this case. We
apply this result to two cases. The Poisson/Yamabe equation for which we
derive the Lie symmetry vectors. Subsequently we consider the heat equation
with a flux in an n-dimensional Riemannian space and show that the Lie
symmetry algebra is a subalgebra of the homothetic algebra of the space. We
discuss this result in the case of de Sitter space time and in flat space.

\bigskip

In Chapter \ref{chapter6}, we determine the Lie point symmetries of the Schr%
\"{o}dinger and the Klein Gordon equations in a general Riemannian space. It
is shown that these symmetries are related with the homothetic and the
conformal algebra of the metric of the space respectively. We consider the
kinematic metric defined by the classical Lagrangian and show how the Lie
point symmetries of the Schr\"{o}dinger equation and the Klein Gordon
equation are related with the Noether point symmetries of this Lagrangian.
The general results are applied to two practical problems a. The
classification of all two and three dimensional potentials in a Euclidian
space for which the Schr\"{o}dinger equation and the Klein Gordon equation
admit Lie point symmetries and b. The application of Lie point symmetries of
the Klein Gordon equation in the exterior Schwarzschild spacetime and the
determination of the metric by means of conformally related Lagrangians.

\bigskip

In Chapter \ref{chapter7}, we study the geometric origin of Type II hidden
symmetries for the Laplace equation and for the homogeneous heat equation in
certain Riemannian spaces. As concerns the homogeneous heat equation, we
study the reduction of the heat equation in Riemannian spaces which admit a
gradient Killing vector, a gradient homothetic vector and in Petrov Type D,
N, II and Type III spacetimes. In each reduction we identify the source of
the Type II hidden symmetries. More specifically we find that (a) if we
reduce the heat equation by the symmetries generated by the gradient KV the
reduced equation is a linear heat equation in the nondecomposable space. (b)
If we reduce the heat equation via the symmetries generated by the gradient
HV the reduced equation is a Laplace equation for an appropriate metric. In
this case the Type II hidden symmetries are generated from the proper CKVs.
(c) In the Petrov space--times the reduction of the heat equation by the
symmetry generated from the nongradient HV gives PDEs which inherit the Lie
symmetries hence no Type II hidden symmetries appear. \ For the reduction of
the Laplace equation we consider Riemannian spaces which admit a gradient
Killing vector, a gradient Homothetic vector and a special Conformal Killing
vector. In each reduction we identify the source of Type II hidden
symmetries. We find that in general the Type II hidden symmetries of the
Laplace equation are directly related to the transition of the CKVs from the
space where the original equation is defined to the space where the reduced
equation resides. In particular we consider the reduction of the Laplace
equation (i.e., the wave equation) in the Minkowski space and obtain the
results of all previous studies in a straightforward manner. We consider the
reduction of Laplace equation in spaces which admit Lie point symmetries
generated from a non-gradient HV and a proper CKV and we show that the
reduction with these vectors does not produce Type II hidden symmetries. We
apply the results to general relativity and consider the reduction of
Laplace equation in locally rotational symmetric space times (LRS) and in
algebraically special vacuum solutions of Einstein's equations which admit a
homothetic algebra acting simply transitively. In each case we determine the
Type II hidden symmetries. We apply the general results to cases in which
the initial metric is specified.

\subsection{Summary of Part IV: Noether symmetries and theories of gravity}

In Part IV, we apply the Noether symmetry approach as a geometric criterion,
in order to probe the nature of dark energy in modified theories of gravity.

\bigskip

In Chapter \ref{chapter8}, we discuss the conformal equivalence of
Lagrangians for scalar fields in a Riemannian space of dimension $4$ and $n$
respectively.\ In particular we enunciate a theorem which proves that the
field equations for a non-minimally coupled scalar field are the same at the
conformal level with the field equations of the minimally coupled scalar
field. The necessity to preserve Einstein's equations in the context of
Friedmann Robertson Walker spacetime leads us to apply, the current general
analysis to the scalar field (quintessence or phantom) in spatially flat FRW
cosmologies. Furthermore, we apply the Noether symmetry approach in non
minimally coupled scalar field in a spatially flat FRW spacetime and by
using the Noether invariants we determine analytical solutions for the field
equations. Moreover we apply the same procedure for a minimally coupled
scalar field in a spatially flat FRW spacetime and in Biachi Class A
homogeneous spacetimes.

\bigskip

In Chapter \ref{chapter9}, a detailed study of the modified gravity, $f(R)~$%
models is performed, using that the Noether point symmetries of these models
are geometric symmetries of the mini superspace of the theory. It is shown
that the requirement that the field equations admit Noether point symmetries
selects definite models in a self-consistent way. As an application in
Cosmology we consider the Friedman -Robertson-Walker spacetime and show that
the only cosmological model which is integrable via Noether point symmetries
is the $\Lambda _{bc}$CDM model, which generalizes the Lambda Cosmology.
Furthermore using the corresponding Noether integrals we compute the
analytic form of the main cosmological functions.

\bigskip

In Chapter \ref{chapter10}, we apply the Noether symmetry approach in the $%
f\left( T\right) $ modified theory of gravity in a spatially flat FRW
spacetime and in static spherically symmetric spacetime. \ First, we present
a full set of Noether symmetries for some minisuperspace models and we find
that only the $f\left( T\right) =T^{n}$ model admits extra Noether
symmetries. The existence of extra Noether integrals can be used in order to
simplify the system of differential equations as well as to determine the
integrability of the model. Then, we compute analytical solutions and find
that spherically symmetric solutions in $f(T)$ gravity can be recast in
terms of Schwarzschild-like solutions modified by a distortion function
depending on a characteristic radius.

\bigskip

Finally, in Chapter \ref{Discussion} we discuss our results.

\chapter{Point transformations and Invariant functions \label{chapter1}}

\section{Introduction}

Lie symmetry of a differential equation is a one parameter point
transformation which leaves the differential equation invariant. Lie
symmetries\footnote{%
In the following sections, by Lie symmetry we mean point symmetry. There are
also generalized Lie symmetries which are not point symmetries.} is the main
tool to study nonlinear differential equations. Indeed Lie symmetries
provide invariant functions which can be used to construct analytic
solutions for a differential equation. These solutions we call invariant
solutions. One such example is the solution $u\left( x,y\right) =e^{k\left(
t-x\right) }~$of the wave equation%
\begin{equation*}
u_{xx}-u_{tt}=0
\end{equation*}%
which is found by applying the Lie symmetry $X=\partial _{x}+ku\partial
_{u}. $

The structure of the chapter is as follows. In section \ref{PTran}, we study
the geometry of the one parameter point transformation, the properties of
Lie algebras and the invariant functions. Invariant functions are functions
which remain unchanged under the action of a point transformation. In
section \ref{LieSym}, the Lie symmetries of ordinary and partial
differential equations are analyzed and two schemes for using Lie symmetries
to construct solutions are presented. In section \ref{NoetherS}, Noether
symmetries, a special class of Lie symmetries, are discussed. Noether
symmetries are admited only by systems whose equation of motion result from
a variational principle. Noether symmetries are important because they
produce conservation laws and can be used to simplify the differential
equations.

In section \ref{geomObj}, the action of point transformation on linear
differential geometry object is examined. Collineations are point
transformations which do not leave necessary invariant a geometric object.
In particular we study the collineations of the metric (Conformal motions)
and of the Christoffel symbols (Projective collineations) of a Riemannian
space.

\section{Point Transformations}

\label{PTran}

Let $M~$be a manifold of class$~~C^{p}~$with$~p\succeq 2$ and let $U$ be a
neighborhood in $M.$ Consider two points $P,Q\in U~$with coordinates $\left(
x_{P},y_{P}\right) $ and \ $\left( x_{Q}^{\prime },y_{Q}^{\prime }\right) $
respectively. A point transformation on $U$ is a relation among the
coordinates of the points $P,Q\in U$ which is defined by the transformation
equations%
\begin{equation*}
x_{Q}^{\prime }=x^{\prime }\left( x_{P},y_{P}\right) ~~,~~y_{Q}^{\prime
}=y^{\prime }\left( x_{P},y_{P}\right)
\end{equation*}%
where the functions $x^{\prime }\left( x,y\right) ,~y^{\prime }\left(
x,y\right) ~$are $C^{p-1}$ and
\begin{equation}
\det \left\vert \frac{\partial \left( x^{\prime },y^{\prime }\right) }{%
\partial \left( x,y\right) }\right\vert \neq 0.  \label{PT0}
\end{equation}%
Condition (\ref{PT0} ) means that the functions $x^{\prime }\left(
x,y\right) ,~y^{\prime }\left( x,y\right) $ are independent. A special class
of point transformations are the one parameter point transformations defined
as follows \cite{StephaniB}.

\begin{definition}
\label{1ppt}The one parameter point transformations are point
transformations that depend on one arbitrary parameter as follows%
\begin{equation}
x^{\prime }=y^{\prime }\left( x,y,\varepsilon \right) ~,~y^{\prime
}=y^{\prime }\left( x,y,\varepsilon \right)  \label{PT1}
\end{equation}%
where $\varepsilon \in
%TCIMACRO{\U{211d} }%
%BeginExpansion
\mathbb{R}
%EndExpansion
$ and the transformation satisfies the following conditions.

a) They are well defined, that is, that if~%
\begin{equation*}
x^{\prime }\left( x_{1},y_{1},\varepsilon \right) =x^{\prime }\left(
x_{2},y_{2},\varepsilon \right) ,~y^{\prime }\left( x_{1},y_{1},\varepsilon
\right) =y^{\prime }\left( x_{2},y_{2},\varepsilon \right)
\end{equation*}%
then $x_{1}=y_{1}$ and $y_{1}=y_{2}.$

b) They can be composed, that is, that if~%
\begin{equation*}
x^{\prime }=y^{\prime }\left( x,y,\varepsilon \right) ~~,~~y^{\prime
}=y^{\prime }\left( x,y,\varepsilon \right)
\end{equation*}%
and~%
\begin{equation*}
x^{\prime \prime }=x^{\prime \prime }\left( x^{\prime },y^{\prime
},\varepsilon ^{\prime }\right) ~,~~y^{\prime \prime }=y^{\prime \prime
}\left( x^{\prime },y^{\prime },\varepsilon ^{\prime }\right)
\end{equation*}%
are two successive one parameter point transformations, there is a one
parameter point transformation parametrized by the real parameter $%
\varepsilon ^{\prime \prime }=\left( \varepsilon ,\varepsilon ^{\prime
}\right) $ so that~$x^{\prime \prime }=x^{\prime \prime }\left(
x,y,\varepsilon ^{\prime \prime }\right) ,~~y^{\prime \prime }=y^{\prime
\prime }\left( x,y,\varepsilon ^{\prime \prime }\right) .$

c) They are invertible, that is, for each one parameter point transformation%
\begin{equation*}
x^{\prime }=y^{\prime }\left( x,y,\varepsilon \right) ~~,~~y^{\prime
}=y^{\prime }\left( x,y,\varepsilon \right)
\end{equation*}%
there exists the inverse transformation%
\begin{equation*}
x^{\prime \prime }\left( x^{\prime },y^{\prime },\varepsilon _{inv}\right)
=x~~,~~y^{\prime \prime }\left( x^{\prime },y^{\prime },\varepsilon
_{inv}\right) =y
\end{equation*}

d)There is the identity transformation defined by the value\footnote{%
Without abandoning generality, we can take $\varepsilon _{0}=0$.} $%
\varepsilon =\varepsilon _{0},~$that is~%
\begin{equation*}
x^{\prime }\left( x,y,\varepsilon _{0}\right) =x~~,~~y^{\prime }\left(
x,y,\varepsilon _{0}\right) =y.
\end{equation*}
\end{definition}

From the above it follows that the one parameter point transformations form
a group. A group of one parameter point transformation defines a family of
curves in $M,$ which are parametrized by the parameter $\varepsilon $ and
are called the \textbf{orbits of the group of transformations}. These curves
may be viewed as the integral curves of a differentiable vector field $X\in
M.$

\subsection{Infinitesimal Transformations}

Let $\bar{x}\left( x,y,\varepsilon \right) ,~\bar{y}\left( x,y,\varepsilon
\right) $ be the parametric equations of a group orbit through the point $%
P\left( x,y,0\right) $. The tangent vector at the point $P=P\left(
x,y,0\right) $ is given by%
\begin{equation*}
X_{P}=\frac{\partial \bar{x}}{\partial \varepsilon }|_{\varepsilon
\rightarrow 0}\partial _{x}|_{P}+\frac{\partial \bar{y}}{\partial
\varepsilon }|_{\varepsilon \rightarrow 0}\partial _{y}|_{P}.
\end{equation*}%
The vector $X_{P}$ defines near the point $P\left( x,y,0\right) $ a point
transformation%
\begin{equation}
\bar{x}=x+\varepsilon \xi _{P}~~,~~\bar{y}=y+\varepsilon \eta _{P}
\label{PT.02}
\end{equation}%
where we have set%
\begin{equation}
\xi _{P}=\frac{\partial \bar{x}}{\partial \varepsilon }|_{\varepsilon
\rightarrow 0}~~,\text{~~}\eta =\frac{\partial \bar{y}}{\partial \varepsilon
}|_{\varepsilon \rightarrow 0}.
\end{equation}%
The point transformation (\ref{PT.02}) is a one parameter a point
transformation which is called an \textbf{infinitesimal point transformation}%
. The vector field $X_{P}$ is called the \textbf{generator} of the
infinitesimal transformation (\ref{PT.02}) along the orbit through the point
$P$. Evidently, the infinitesimal transformation \textit{moves} a point
along the orbit of the group through that point.

\begin{example}
Compute the generator of the infinitesimal transformation for the one
parameter point transformation%
\begin{eqnarray*}
\bar{x} &=&x\cos \varepsilon -y\sin \varepsilon \\
\bar{y} &=&x\sin \varepsilon +y\cos \varepsilon .
\end{eqnarray*}

Solution: We have
\begin{eqnarray*}
\xi \left( x,y\right) &=&\frac{\partial \bar{x}}{\partial \varepsilon }%
|_{\varepsilon \rightarrow 0}=-\left( x\sin \varepsilon +y\cos \varepsilon
\right) |_{\varepsilon \rightarrow 0}=-y \\
\eta \left( x,y\right) &=&\frac{\partial \bar{y}}{\partial \varepsilon }%
|_{\varepsilon \rightarrow 0}=\left( x\cos \varepsilon -y\sin \varepsilon
\right) |_{\varepsilon \rightarrow 0}=x
\end{eqnarray*}%
from which follows that the generator of the infinitesimal transformation is~%
$X=-y\partial _{x}+x\partial _{y}.$
\end{example}

It has been showed that a one parameter point transformation fixes an
infinitesimal generator up to a constant depending on the parametrization of
the group orbit. In the following section, it will be shown that the
converse holds true, that is,\ for an infinitesimal generator there always
exists a unique one parameter point transformation.

\subsubsection{Integral curves}

Consider a differentiable vector field $X\in M$ given by$~X=X^{i}\partial
_{i}$. At each point of $P\in M$, $X$ determines a smooth curve $\gamma
_{X}\left( \varepsilon ,P\right) =c_{X}^{i}\left( \varepsilon ,P\right)
\partial _{i}$, where $\varepsilon \in J_{\varepsilon }~$and $J_{\varepsilon
}~$is an open intevral of $%
%TCIMACRO{\U{211d} }%
%BeginExpansion
\mathbb{R}
%EndExpansion
,$ as follows
\begin{equation}
\frac{dc_{X}^{i}\left( \varepsilon ,P\right) }{d\varepsilon }=X^{i}\left(
c_{X}^{i}\left( \varepsilon ,P\right) \right) ~,~~c_{X}^{i}\left( 0,P\right)
=x^{i}\left( P\right) .  \label{PT.01}
\end{equation}%
The curve $\gamma _{X}\left( J_{\varepsilon },P\right) $ is called the%
\textbf{\ integral curve }of $X$ through $P$.

Equation (\ref{PT.01}) defines an autonomous system of ordinary differential
equations (ODEs) with solutions $\gamma _{X}\left( t,P\right) $ subject to
the initial conditions $c_{X}^{i}\left( 0,P\right) =x^{i}\left( P\right) $.
The existence and uniqueness of integral curves is given by, the following
theorem~\cite{SergeBook}.

\begin{theorem}
\label{IntC}Let $\gamma _{1}\left( J_{1},P\right) ~$and $\gamma _{2}\left(
J_{2},P\right) $ be two integral curves of the vector field $X$ on $M,$ with
the same initial condition $x^{i}\left( P\right) $.~Then $\gamma _{1}\left(
J_{1},P\right) ~$and $\gamma _{2}\left( J_{2},P\right) $ are equal on $%
J_{1}\cap J_{2}$,~where $J_{1},J_{2}$ are two open intervals of $%
%TCIMACRO{\U{211d} }%
%BeginExpansion
\mathbb{R}
%EndExpansion
.$
\end{theorem}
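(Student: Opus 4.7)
The plan is to reduce the statement to the standard uniqueness theorem for initial value problems of ordinary differential equations and then globalize by a connectedness argument on $J_1 \cap J_2$. First I would note that, since $P\in \gamma_1 \cap \gamma_2$ and both curves satisfy $c^i(0,P)=x^i(P)$, the intersection $J_1 \cap J_2$ is a nonempty open interval of $\mathbb{R}$ containing $0$, hence connected. The strategy is then to define the agreement set
\begin{equation*}
S=\{\varepsilon \in J_1\cap J_2 \ : \ \gamma_1(\varepsilon,P)=\gamma_2(\varepsilon,P)\}
\end{equation*}
and prove that $S$ is nonempty, open and closed in $J_1\cap J_2$, which forces $S=J_1\cap J_2$.

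For the local step I would work in a coordinate chart $(U,x^i)$ around a point $Q=\gamma_1(\varepsilon_0,P)=\gamma_2(\varepsilon_0,P)$ for $\varepsilon_0 \in S$. In this chart the defining relation (\ref{PT.01}) becomes the autonomous system $\dot c^i=X^i(c^1,\dots,c^n)$ with initial data $c^i(\varepsilon_0)=x^i(Q)$. Since $M$ is of class $C^p$ with $p\succeq 2$ and $X$ is a differentiable vector field, the components $X^i$ are $C^1$ in $U$, hence locally Lipschitz continuous. The Picard--Lindelöf theorem then provides a neighborhood $(\varepsilon_0-\delta,\varepsilon_0+\delta)$ on which the solution of this initial value problem is unique, so $\gamma_1$ and $\gamma_2$ must coincide on that neighborhood. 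This proves that $S$ is open.

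To close the argument I would show that $S$ is also closed: if $\varepsilon_n \in S$ with $\varepsilon_n \to \varepsilon_\ast \in J_1\cap J_2$, then by continuity of $\gamma_1$ and $\gamma_2$ we have $\gamma_1(\varepsilon_\ast,P)=\lim \gamma_1(\varepsilon_n,P)=\lim \gamma_2(\varepsilon_n,P)=\gamma_2(\varepsilon_\ast,P)$, so $\varepsilon_\ast \in S$. Nonemptiness is clear since $0\in S$. Connectedness of the interval $J_1 \cap J_2$ then yields $S=J_1 \cap J_2$, which is precisely the claim.

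The only nontrivial point is the local step, which rests on choosing the coordinate chart correctly at each $\varepsilon_0 \in S$ and invoking the Lipschitz hypothesis on $X^i$; once this is in place the manifold-level statement follows purely from the topological openness/closedness argument, so no additional obstacle is expected. If desired, one can alternatively appeal directly to the local flow box theorem for smooth vector fields, which packages both existence and uniqueness into a single statement and eliminates the need to invoke Picard--Lindelöf explicitly.
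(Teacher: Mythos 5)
Your proof is correct: it is the standard connectedness argument (the agreement set is nonempty, open by Picard--Lindel\"of in a chart, and closed by continuity, hence all of $J_{1}\cap J_{2}$). Note, however, that the paper does not prove this theorem at all --- it states it and cites an external reference --- so there is no in-paper argument to compare against; your write-up supplies the proof the paper omits.

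Two small remarks. First, the closedness step, which identifies $\gamma_{1}(\varepsilon_{\ast},P)$ with $\gamma_{2}(\varepsilon_{\ast},P)$ via limits of a common sequence, implicitly uses that $M$ is Hausdorff; this is the standard convention (and surely intended here), but it is worth stating, since uniqueness of integral curves genuinely fails on non-Hausdorff manifolds. Second, the regularity bookkeeping is slightly loose: the paper takes $M$ of class $C^{p}$ with $p\geq 2$ and $X$ merely ``differentiable,'' so to invoke Picard--Lindel\"of you should say explicitly that you assume the components $X^{i}$ are $C^{1}$ (or at least locally Lipschitz) in the chosen charts, which is the reading consistent with the paper's setup. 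With those caveats made explicit the argument is complete.
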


In the case where $\gamma _{X}\left( J_{\varepsilon },P\right) $ defines a
one parameter point transformation, the system (\ref{PT.01}) defines the
generator of the infinitesimal transformation. Due to the uniqueness of the
solution there exists only one parameter point transformation for each
vector field $X$.

\begin{example}
Consider the space $%
%TCIMACRO{\U{211d} }%
%BeginExpansion
\mathbb{R}
%EndExpansion
^{2}$ with coordinates $\left( x,y\right) $ and the vector field $%
X=y\partial _{x}+x\partial _{y}$. Find the integral curve of $X~$through the
point $P=\left( x_{0},y_{0}\right) .$

Solution: Let $\gamma _{P}\left( \varepsilon \right) =\left( x\left(
\varepsilon \right) ,y\left( \varepsilon \right) \right) ~$be the integral
curve of $X$ passing through $P$. The system of autonomous first order
equations defining the integral curves are%
\begin{equation*}
\frac{dx}{d\varepsilon }=y~~,~~\frac{dy}{d\varepsilon }=x
\end{equation*}%
with the initial condition $x\left( 0\right) =x_{0}~,~y\left( 0\right)
=y_{0} $. The solution of this system is%
\begin{eqnarray}
x\left( \varepsilon \right) &=&x_{0}\cosh \varepsilon +y_{0}\sinh \varepsilon
\label{PT.01a} \\
y\left( \varepsilon \right) &=&x_{0}\sinh \varepsilon +y_{0}\cosh
\varepsilon .  \label{PT.01b}
\end{eqnarray}%
Equations (\ref{PT.01a}),(\ref{PT.01b}) define the rotation in the
hyperbolic space.
\end{example}

\subsection{Invariant Functions}

\label{Fun}

Let $F\left( x,y\right) $ be a function in $M$. Under the one parameter
point transformation~%
\begin{equation*}
\bar{x}=\bar{x}\left( x,y,\varepsilon \right) ,~\bar{y}=\bar{y}\left(
x,y,\varepsilon \right)
\end{equation*}
the function becomes $\bar{F}\left( \bar{x},\bar{y}\right) .$

\begin{definition}
\label{invariantFunctions}The function $F$ is invariant under the one
parameter point transformation if and only if $\bar{F}\left( \bar{x},\bar{y}%
\right) =0$ when$~F\left( x,y\right) =0$ at all points where the one
parameter point transformation acts. Equivalently, the generator $X$ of the
point transformation $\ $ is a symmetry of the function $F$ if
\begin{equation}
X\left( F\right) =0~~,~~modF=0  \label{InF.01}
\end{equation}
\end{definition}

The symmetry condition (\ref{InF.01}) is equivalent to the first order
partial differential equation (PDE)
\begin{equation}
\xi \frac{\partial F}{\partial x}+\eta \frac{\partial F}{\partial y}=0.
\label{InF.02}
\end{equation}

In order to determine all functions which are invariant under the
infinitesimal generator $X$ one has to solve the associated Lagrange system%
\begin{equation*}
\frac{dx}{\xi \left( x,y\right) }=\frac{dy}{\eta \left( x,y\right) }.
\end{equation*}

The characteristic function or zero order invariant $W$ of $X$ is defined as
follows
\begin{equation}
dW=\frac{dx}{\xi \left( x,y\right) }-\frac{dy}{\eta \left( x,y\right) }.
\end{equation}%
The zero order invariant is indeed invariant under the $X,$ that is $X\left(
W\right) =0.$ Therefore, any function of the form $F=F\left( W\right) ,$
where $W$ is the zero order invariant satisfies (\ref{InF.02}) and it is
invariant under the one parameter point transformation with generator $X.$

\subsection{Lie Algebras}

\label{LieAlgebra}

In section \ref{PTran} we considered the one parameter point transformation
which depend on one parameter $\varepsilon $. However transformations can
depend on more that one parameter, as follows%
\begin{equation}
\bar{x}=\bar{x}\left( x,y,\mathbf{E}\right) ~~,~~\bar{y}=\bar{y}\left( x,y,%
\mathbf{E}\right)  \label{LA.00}
\end{equation}%
where $\mathbf{E}=\varepsilon ^{\beta }\partial _{\beta },$~is a vector
field in the $%
%TCIMACRO{\U{211d} }%
%BeginExpansion
\mathbb{R}
%EndExpansion
^{\kappa },~\beta =1...\kappa $ with the same properties of definition \ref%
{1ppt}, is a multi parameter point transformation,

For every parameter $\varepsilon ^{\beta }~$of the multi parameter point
transformation (\ref{LA.00}) an infinitesimal generator can be defined
\begin{equation*}
X_{\beta }=\xi _{\beta }\left( x,y\right) \partial _{x}+\eta _{\beta }\left(
x,y\right) \partial _{y}
\end{equation*}%
where
\begin{equation*}
\xi _{\beta }=\frac{\partial \bar{x}}{\partial \varepsilon _{\beta }}%
|_{\varepsilon _{\beta }\rightarrow 0}~~,~~\eta _{\beta }=\frac{\partial
\bar{y}}{\partial \varepsilon _{\beta }}|_{\varepsilon _{\beta }\rightarrow
0}.
\end{equation*}

Let $F\left( x,y\right) $ be a function in $M$ which is invariant under a
multiparameter point transformation. Since the multi parameter
transformation can be described as $m$ one parameter point transformations, $%
F$ is invariant under $m$ infinitesimal generators.

\begin{definition}
A Lie algebra is a finite dimensional linear space $G,$ in which a binary
operator, denoted $\left[ ~,~\right] $ has been defined which has the
following properties

i) $\left[ X,X\right] =0$ for all $X\in G$

ii) $\left[ X,\left[ Y,Z\right] \right] +\left[ Y,\left[ Z,X\right] \right] +%
\left[ Z,\left[ X,Y\right] \right] =0$ for all $X,Y,Z\in G$.

iii)If $X_{A},X_{B}\in G$ then$~\left[ X_{A},X_{B}\right] =C_{AB}^{C}X_{C}$,
$X_{C}\in G$. The quantities $C_{AB}^{C}$ are constants and are called the
\textbf{structure constants} of the Lie algebra.
\end{definition}

The operator $\left[ ~,~\right] $ is called the \textbf{commutator} and it
is defined by the following expression%
\begin{equation}
\left[ X_{A},X_{B}\right] =X_{A}X_{B}-X_{B}X_{A}=-\left[ X_{B},X_{N}\right] .
\label{LA.01}
\end{equation}

From the definition of the commutator (\ref{LA.01}) and from the
requirements of the Lie algebra, it follows that the structure constants are
antisymmetric in the two lower indices, i.e.%
\begin{equation}
C_{AB}^{C}+C_{BA}^{C}=0  \label{LA.02}
\end{equation}%
and they have to satisfy the Jacobi identity%
\begin{equation}
C_{AB}^{E}C_{DE}^{C}+C_{BD}^{E}C_{AE}^{C}+C_{DA}^{E}C_{BE}^{C}=0.
\label{LA.03}
\end{equation}%
The structure constants characterize the Lie algebra because every set of
constants $C_{AB}^{C}$ which satisfy (\ref{LA.02}) and (\ref{LA.03}) defines
\emph{locally} a unique Lie group. An important property of the structure
constants is that they do not change under a coordinate transformation. The
structure constants do change under a transformation of the basis; this
property is useful because it can be used to simplify the structure
constants of the given group.

\begin{definition}
Let $G,H$ be closed Lie algebras with elements $\left\{ X_{A}\right\}
,~\left\{ Y_{a}\right\} $ respectively. If $\dim H\leq \dim G$ and $Y_{a}\in
G$ then $H$ is called a Lie subalgebra of $G.$
\end{definition}

Suppose \ that the vector fields $X,Y$ leave invariant a function $F=F\left(
x,y\right) $. If $\left[ X,Y\right] =Z~$with~$Z\neq X,Y$, i.e. the
generators $X,Y$ do not form a closed Lie algebra, then $F$ is also
invariant under the action of $Z$. This process can be used to find extra
symmetries.

\begin{example}
\label{exso3}The vector fields
\begin{equation*}
X_{1}=\sin \theta \partial _{\phi }+\cos \theta \cot \phi \partial _{\theta
}~\ ,~X_{2}=\cos \theta \partial _{\phi }-\sin \theta \cot \phi \partial
_{\theta }~~,~~X_{3}=\partial _{\theta }
\end{equation*}%
span the $so\left( 3\right) $ Lie algebra with\ structure constants~$%
C_{12}^{3}=C_{31}^{2}=C_{23}^{1}=1,$ i.e. the commutators are%
\begin{equation*}
\left[ X_{1},X_{2}\right] =X_{3}~~,~~\left[ X_{3},X_{1}\right] =X_{2}~~,~~%
\left[ X_{2},X_{3}\right] =X_{1}
\end{equation*}%
If the function $F=F\left( \theta ,\phi \right) $ is invariant under the
infinitesimal generators $X_{1},X_{3}$, then it is also invariant under the
action of $X_{2}$. In that case, it is easy to see that $F=F_{0},$ where $%
F_{0}$ is a constant.
\end{example}

\section{Lie symmetries of differential equations}

\label{LieSym}

Previously, we studied the case when a function $F\in M~$is invariant under
the action of a one parameter point transformation. In the following we
consider the case of differential equations (DEs) which are invariant under
a group of one parameter point transformations.

\subsection{Prolongation of point transformations}

\label{inF}

In order to study the action of a point transformation to a differential
equation~$H\left( x,y,y^{\prime },...,y^{\left( n\right) }\right) ~$where~$%
y=y\left( x\right) $, we have to prolong the point transformation to the
derivatives $y^{\left( n\right) }$. The infinitesimal transformation (\ref%
{PT.02}) in the jet space $B_{M}=\left\{ x,y,y^{\prime },...,y^{\left(
n\right) }\right\} $ is%
\begin{eqnarray*}
\bar{x} &=&x+\varepsilon \xi \\
\bar{y} &=&x+\varepsilon \eta \\
\bar{y}^{\left( 1\right) } &=&y^{\left( 1\right) }+\varepsilon \eta ^{\left[
1\right] } \\
&&... \\
\bar{y}^{\left( n\right) } &=&y^{\left( n\right) }+\varepsilon \eta ^{\left[
n\right] }
\end{eqnarray*}%
where $y^{\left( n\right) }=\frac{d^{n}y}{dx^{n}},~\bar{y}^{\left( n\right)
}=\frac{d^{n}\bar{y}}{d\bar{x}^{n}}$ and
\begin{equation*}
\eta ^{\left[ 1\right] }=\frac{\partial \bar{y}^{\left( 1\right) }}{\partial
\varepsilon }~~,...,~~\eta ^{\left[ n\right] }=\frac{\partial \bar{y}%
^{\left( n\right) }}{\partial \varepsilon }.
\end{equation*}%
That means that the variation equals the difference of the derivatives
before and after the action of the one parameter transformation. For the
first prolongation function $\eta ^{\left[ 1\right] }~$we have%
\begin{equation*}
\eta ^{\left[ 1\right] }\equiv \lim_{\varepsilon \rightarrow 0}\left[ \frac{1%
}{\varepsilon }\left( \bar{y}^{\left( 1\right) }-y^{\left( 1\right) }\right) %
\right] =\frac{d\eta }{dx}-y^{\left( 1\right) }\frac{d\xi }{dx}.
\end{equation*}%
Similarly for$~\eta ^{\left[ n\right] }$ we have the expression%
\begin{equation}
\eta ^{\left[ n\right] }=\frac{d\eta ^{n-1}}{dx}-y^{\left( n\right) }\frac{%
d\xi }{dx}=\frac{d^{n}}{dx^{n}}\left( \eta -y^{\left( 1\right) }\xi \right)
+y^{\left( n+1\right) }\xi .  \label{PP.011}
\end{equation}

Finally, the extension of the infinitesimal generator in the jet space $%
B_{M}~$is
\begin{equation*}
X^{\left[ n\right] }=X+\eta ^{\left[ 1\right] }\partial _{y^{\left( 1\right)
}}+...+\eta ^{\left[ n\right] }\partial _{y^{\left[ n\right] }}.
\end{equation*}%
The field $X^{\left[ n\right] }\in B_{M}$ is called the \textbf{nth
prolongation} of the generator $X,~$where
\begin{equation*}
X=\xi \left( x,y\right) \partial _{x}+\eta \left( x,y\right) \partial _{y}
\end{equation*}%
is the infinitesimal point generator in the space $\left\{ x,y\right\} $.

It is possible to write the prolongation coefficients in terms of the
partial derivatives of the components $\xi \left( x,y\right) ,~\eta \left(
x,y\right) $. The first and the second prolongation are expressed as follows%
\begin{equation*}
X^{\left[ 1\right] }=X+\left[ \eta _{,x}+y^{\left( 1\right) }\left( \eta
_{,y}-\xi _{,x}\right) -y^{\left( 1\right) ^{2}}\xi _{,y}\right] \partial
_{y^{\left( 1\right) }}
\end{equation*}%
\begin{equation}
X^{\left[ 2\right] }=X^{\left[ 1\right] }+\left[
\begin{array}{c}
\eta _{,xx}+2\left( \eta _{,xy}-\xi _{,xx}\right) y^{\left( 1\right)
}+\left( \eta _{,yy}-2\xi _{,xy}\right) y^{\left( 1\right) ^{2}}+ \\
-y^{\left( 1\right) ^{3}}\xi _{,yy}+\left( \eta _{,y}-2\xi _{,x}-3\xi
_{,y}y^{\left( 1\right) }\right) y^{\left( 2\right) }%
\end{array}%
\right] \partial _{y^{\left( 2\right) }}  \label{PP.01D}
\end{equation}%
where the comma $","$ denotes partial derivative.

Some important observations \cite{BlumanB} for the prolongation coefficient $%
\eta ^{\left[ n\right] }$ are:

(a) $\eta ^{\left[ n\right] }$ is linear in $y^{\left( n\right) }$

(b) $\eta ^{\left[ n\right] }$ is a polynomial in the derivatives $y^{\left(
1\right) },...,y^{\left( n\right) }$ whose coefficients are linear
homogeneous in the functions $\xi \left( x,y\right) ,$ $\eta \left(
x,y\right) $ up to nth order partial derivatives.

\subsubsection{Multiparameter prolongation}

In the case the differential equation $H$ depends on $n$ independent$~$%
variables~$\left\{ x^{i}:i=1..n\right\} $ and $m$ dependent variables~$%
\left\{ u^{A}:A=1...m\right\} $, i.e. $H=H\left(
x^{i},u^{A},u_{,i}^{A},u_{,ij}^{A},..\right) ,$ we consider the one
parameter point transformation%
\begin{equation*}
\bar{x}^{i}=\Xi ^{i}\left( x^{i},u^{A},\varepsilon \right) ~~,~~\bar{u}%
^{A}=\Phi ^{A}\left( x^{i},u^{A},\varepsilon \right) .
\end{equation*}%
In this case the generating vector is
\begin{equation}
X=\xi ^{i}\left( x^{k},u^{A}\right) \partial _{i}+\eta ^{A}\left(
x^{k},u^{A}\right) \partial _{A}  \label{MP.01}
\end{equation}%
where%
\begin{equation*}
\xi ^{i}\left( x^{k},u^{A}\right) =\frac{\partial \Xi ^{i}\left(
x^{i},u^{A},\varepsilon \right) }{\partial \varepsilon }|_{\varepsilon
\rightarrow 0}~~,~~\eta ^{A}\left( x^{k},u^{A}\right) =\frac{\partial \Phi
\left( x^{i},u^{A},\varepsilon \right) }{\partial \varepsilon }%
|_{\varepsilon \rightarrow 0}.
\end{equation*}

To extend the generator vector in the jet space $\bar{B}_{\bar{M}}=\left\{
x^{i},u^{A},u_{,i}^{A},u_{,ij}^{A},..,u_{ij...i_{n}}^{A}\right\} $ we apply
the same procedure as in section \ref{inF}. Therefore the vector field $X^{%
\left[ n\right] }\in \bar{B}_{\bar{M}}$
\begin{equation*}
X^{\left[ n\right] }=X+\eta _{i}^{A}\partial _{u_{i}}+...+\eta
_{ij..i_{n}}^{A}\partial _{u_{ij..i_{n}}}
\end{equation*}%
is defined as the nth prolongation of the generator (\ref{MP.01}), where%
\footnote{%
Where $D_{i}=\frac{\partial }{\partial x^{i}}+u_{i}^{A}\frac{\partial }{%
\partial u^{A}}+u_{ij}^{A}\frac{\partial }{\partial u_{j}^{A}}%
+...+u_{ij..i_{n}}^{A}\frac{\partial }{\partial u_{jk..i_{n}}^{A}}.$}%
\begin{equation}
\eta _{i}^{A}=D_{i}\eta ^{A}-u_{,j}^{A}D_{i}\xi ^{j}  \label{MP.02A}
\end{equation}%
\begin{equation}
\eta _{ij..i_{n}}^{A}=D_{i_{n}}\eta _{ij..i_{n-1}}^{A}-u_{ij..k}D_{i_{n}}\xi
^{k}.  \label{MP.02}
\end{equation}

In terms of the partial derivatives of the components $\xi ^{i}\left(
x^{k},u^{A}\right) ,~\eta ^{A}\left( x^{k},u^{A}\right) $ of the generator
vector (\ref{MP.01}), the first and the second prologations of (\ref{MP.01})
are expressed as follows%
\begin{equation}
X^{\left[ 1\right] }=X+\left( \eta _{,i}^{A}+u_{,i}^{B}\eta _{,B}^{A}-\xi
_{,i}^{j}u_{,j}^{A}-u_{,i}^{A}u_{,j}^{B}\xi _{,B}^{j}\right) \partial
_{u_{i}^{A}}  \label{MP.03}
\end{equation}%
\begin{equation}
X^{\left[ 2\right] }=X^{\left[ 1\right] }+\left[
\begin{array}{c}
\eta _{,ij}^{A}+2\eta _{,B(i}^{A}u_{,j)}^{B}-\xi _{,ij}^{k}u_{,k}^{A}+\eta
_{,BC}^{A}u_{,i}^{B}u_{,j}^{C}-2\xi _{,(i\left\vert B\right\vert
}^{k}u_{j)}^{B}u_{,k}^{A}+ \\
-\xi _{,BC}^{k}u_{,i}^{B}u_{,j}^{A}u_{,k}^{A}+\eta _{,B}^{A}u_{,ij}^{B}-2\xi
_{,(j}^{k}u_{,i)k}^{A}+-\xi _{,B}^{k}\left(
u_{,k}^{A}u_{,ij}^{B}+2u_{(,j}^{B}u_{,i)k}^{A}\right)%
\end{array}%
\right] \partial _{u_{ij}}.  \label{MP.04}
\end{equation}

\subsection{Lie symmetries of ODEs}

In the previous sections we analyzed the invariance of functions under the
action of a point transformation. In the following we define the invariance
of ordinary differential equations (ODEs) under a one parameter point
transformation.

Consider the $N-$dimensional system of ODEs\footnote{%
In the following equations,~$t$ is the independent parameter and $%
x^{i}=x^{i}\left( t\right) $ the dependent parameters.}%
\begin{equation}
x^{\left( n\right) i}=\omega ^{i}\left( t,x^{k},\dot{x}^{k},\ddot{x}%
^{k},...,x^{\left( n-1\right) i}\right)  \label{Ls.01}
\end{equation}%
where $\dot{x}^{i}=\frac{dx^{i}}{dt}$ , $x^{\left( n\right) }=\frac{d^{n}x}{%
dt^{n}}$ and the infinitesimal point transformation with infinitesimal
generator $X$ is
\begin{eqnarray}
\bar{t} &=&t+\varepsilon \xi \left( t,x^{k}\right)  \label{Ls.02} \\
\bar{x}^{i} &=&x+\varepsilon \eta ^{i}\left( t,x^{k}\right)  \label{Ls.03}
\end{eqnarray}

\begin{theorem}
Let
\begin{equation}
X=\xi \left( t,x^{k}\right) \partial _{t}+\eta ^{i}\left( t,x^{k}\right)
\partial _{i}  \label{Ls.04}
\end{equation}%
be the infinitesimal generator of point transformation (\ref{Ls.02})-(\ref%
{Ls.03}) and
\begin{equation*}
X^{\left[ n\right] }=X+\eta _{\left[ 1\right] }^{i}\partial _{\dot{x}%
^{i}}+...+\eta _{\left[ n\right] }^{i}\partial _{x^{\left( n\right) i}}
\end{equation*}%
be the nth prolongation of $X$, where $\eta _{\left[ n\right] }^{i}$ is
given by (\ref{MP.02}). We shall say that the $N-$dimensional system of ODEs
(\ref{Ls.01}) is invariant under the point transformation (\ref{Ls.02}), (%
\ref{Ls.03}) if and only if there exists a function $\lambda $ such as the
following condition holds%
\begin{equation}
\left[ X^{\left[ n\right] },A\right] =\lambda A  \label{Ls.05}
\end{equation}%
where\footnote{%
For Hamiltonian systems, the operator $\mathbf{A}$ is called \textbf{%
Hamiltonian vector field}.}
\begin{equation*}
A=\frac{\partial }{\partial t}+\dot{x}^{i}\frac{\partial }{\partial x^{i}}%
+...+\omega ^{i}\left( t,x^{k},\dot{x}^{k},\ddot{x}^{k},...,x^{\left(
n-1\right) i}\right) \frac{\partial }{\partial x^{\left( n\right) i}}.
\end{equation*}%
In that case we say that $X$ is a \textbf{Lie point symmetry }of the\textbf{%
\ }$N-$dimensional system of ODEs (\ref{Ls.01}).
\end{theorem}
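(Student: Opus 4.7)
The plan is to treat the ODE system geometrically as a codimension-$N$ submanifold of the $(n+1)N+1$-dimensional jet space $J^n$, or equivalently as the integral foliation of the total-derivative vector field $A$. The point of the theorem is to translate the classical statement that a symmetry maps solutions to solutions into a purely algebraic Lie-bracket condition on $A$ and the prolongation $X^{[n]}$.

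First I would recall that a curve $x^i(t)$ solves (\ref{Ls.01}) if and only if its $n$-th lift $\bigl(t, x^i(t), \dot x^i(t), \ldots, x^{(n-1)i}(t)\bigr)$ into $J^{n-1}$ is an integral curve of $A$, and that the prolongation $X^{[n]}$ is precisely the infinitesimal generator in jet space of the pushforward of (\ref{Ls.02})--(\ref{Ls.03}) to $J^n$ by construction (this was the content of section \ref{inF}). Thus invariance of the ODE under the one-parameter group $\phi_\varepsilon$ generated by $X$ is equivalent to the following jet-space statement: the prolonged flow $\Phi_\varepsilon := \phi_\varepsilon^{[n]}$ maps integral curves of $A$ to integral curves of $A$, i.e.\ $\Phi_\varepsilon$ preserves the one-dimensional distribution $\mathrm{span}(A)$.

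Next I would show the equivalence of this geometric statement with $[X^{[n]},A] = \lambda A$ for some function $\lambda$. For the forward direction, preservation of the line field $\mathrm{span}(A)$ under $\Phi_\varepsilon$ means that at every point $p$ one has $(\Phi_\varepsilon)_* A_{\Phi_\varepsilon^{-1}(p)} = \mu(\varepsilon,p)\, A_p$ for a smooth nonvanishing function $\mu$ with $\mu(0,p)=1$. Differentiating in $\varepsilon$ at $\varepsilon=0$ and using the definition of the Lie derivative yields
\begin{equation*}
\mathcal{L}_{X^{[n]}} A \;=\; -\partial_\varepsilon\mu|_{\varepsilon=0}\, A,
\end{equation*}
which upon setting $\lambda := -\partial_\varepsilon\mu|_{\varepsilon=0}$ gives exactly $[X^{[n]},A] = \lambda A$. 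Conversely, if $[X^{[n]},A] = \lambda A$ holds, I would solve the ODE $\dot\mu = -\lambda\circ\Phi_\varepsilon\cdot\mu$, $\mu(0,\cdot)=1$ along the flow of $X^{[n]}$ and verify by uniqueness of ODE solutions that $(\Phi_\varepsilon)_* A = \mu\,A$, so that integral curves of $A$ are carried to integral curves of $A$, up to reparametrization; pushed back to $(t,x^i)$-space this is exactly the statement that solutions map to solutions.

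The principal technical obstacle is the second direction: one must ensure that the reparametrization is well-defined on the domain of the flow and that the equivalence between ``preserves the line field of $A$'' and ``maps solutions to solutions'' really does go through when $\xi \not\equiv 0$, because the transformation then changes the independent variable $t$ as well. This is where one uses that $A$ has a nonzero $\partial_t$ component and hence that integral curves of $A$ project regularly onto graphs $t \mapsto (x^i(t),\dot x^i(t),\ldots)$, so the reparametrization is always locally realized by some new parameter $\bar t$, which can then be identified with the transformed time variable in (\ref{Ls.02}). Once this point is settled, the rest is a direct computation with the Lie bracket and the prolongation formulas (\ref{MP.02A})--(\ref{MP.02}).
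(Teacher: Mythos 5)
Your argument is correct, but note that the paper itself offers no proof of this statement: the phrasing ``We shall say that \ldots if and only if \ldots'' shows that the commutator condition $\left[ X^{\left[ n\right] },A\right] =\lambda A$ is being adopted as the \emph{definition} of invariance, and the text proceeds directly to extracting the determining equations (\ref{Ls.06}) from it. What you supply is the missing link between this algebraic condition and the primitive notion of a symmetry, namely that the prolonged flow carries lifted solution curves to lifted solution curves. Your route --- identifying solutions with integral curves of the dynamical vector field $A$, observing that a transformation which changes $t$ can only be expected to preserve the \emph{unparametrized} integral curves and hence the line field $\mathrm{span}(A)$ rather than $A$ itself, and converting invariance of that line field under the prolonged flow into the bracket condition, with $\lambda$ absorbing the reparametrization --- is the standard geometric derivation and is sound in both directions, including the converse via the linear equation for the multiplier $\mu$ along the flow of $X^{\left[ n\right] }$. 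One point you have inherited from the paper and should repair explicitly: as printed, $A$ depends only on derivatives up to order $n-1$ yet carries its last coefficient on $\partial /\partial x^{\left( n\right) i}$, so $A$ really lives on the $(n-1)$-jet space (the last term should be $\omega ^{i}\,\partial /\partial x^{\left( n-1\right) i}$), and the bracket must accordingly be taken with $X^{\left[ n-1\right] }$, or with $X^{\left[ n\right] }$ restricted to the submanifold $x^{\left( n\right) i}=\omega ^{i}$, so that both vector fields are defined on the same manifold; your foliation picture implicitly places everything on the correct space, but the mismatch should be stated and resolved. With that adjustment your proof is complete, and it establishes as a theorem what the paper merely stipulates.
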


If $f^{i}$ is a solution of the system (\ref{Ls.01}), i.e. $Af^{i}=0,$ then
condition (\ref{Ls.05}) becomes $X\left( Af^{i}\right) =0,~$that is,
\begin{equation}
\eta _{\left[ n\right] }^{i}=X^{\left[ n-1\right] }\omega ^{i}\left( t,x^{k},%
\dot{x}^{k},\ddot{x}^{k},...,x^{\left( n-1\right) i}\right) .  \label{Ls.06}
\end{equation}%
Equations (\ref{Ls.06}) are called the determining equations. The solution
of the determining equations (\ref{Ls.06}) gives the infinitesimal
generators of the transformation (\ref{Ls.02})-(\ref{Ls.03}).

In general, a function $H\left( t,\dot{x}^{k},\ddot{x}^{k},...,x^{\left(
n\right) k}\right) =0$ is invariant under the transformation (\ref{Ls.02})-(%
\ref{Ls.03}) if and only if%
\begin{equation}
X^{\left[ n\right] }\left( H\right) =\lambda H~,~modH=0.
\label{Ip.01a}
\end{equation}%
where $\lambda $ is a function to be determined~\cite{IbragB}.

Below we give an example in which the Lie symmetries are calculated using
the symmetry condition (\ref{Ls.06}).

\begin{example}
\label{ExLs}Find the Lie symmetries of the ODE~$\ddot{x}=0.$

Solution: Condition (\ref{Ls.06}) gives $\eta _{\left[ 2\right] }=0.~$From (%
\ref{PP.01D}) the following condition is found
\begin{equation}
\eta _{,tt}+2\left( \eta _{,tx}-\xi _{,tt}\right) \dot{x}+\left( \eta
_{,xx}-2\xi _{,tx}\right) \dot{x}^{2}-\dot{x}^{3}\xi _{,xx}=0  \label{Ip.03}
\end{equation}%
since $\ddot{x}=0$. Functions $\xi ,\eta $ are dependent only on the
variables $\left\{ t,x\right\} $, hence equation (\ref{Ip.03}) is a
polynomial of $\dot{x}.$ This polynomial must vanish identically hence the
coefficients of all powers of $\dot{x}$ must vanish. Therefore, we have the
following determining equations%
\begin{eqnarray*}
\left( \dot{x}\right) ^{0} &:&\eta _{,tt}=0 \\
\left( \dot{x}\right) ^{1} &:&\eta _{,xy}-\xi _{,tt}=0 \\
\left( \dot{x}\right) ^{2} &:&\eta _{,xx}-2\xi _{,tx}=0 \\
\left( \dot{x}\right) ^{3} &:&\xi _{,xx}=0.
\end{eqnarray*}%
whosw solution is%
\begin{eqnarray*}
\xi \left( t,x\right) &=&a_{1}+a_{2}t+a_{3}t^{2}+a_{4}x+a_{5}tx \\
\eta \left( t,x\right) &=&a_{6}+a_{7}t+a_{8}x+a_{3}tx+a_{5}x^{2}.
\end{eqnarray*}%
We conclude that the second order ODE $\ddot{x}=0$ has eight Lie pont
symmetries generated by the generic vector field\footnote{%
As many as the unspecified constants in the expression of the generic
symmetric vector.}%
\begin{equation}
X=\left( a_{1}+a_{2}t+a_{3}t^{2}+a_{4}x+a_{5}tx\right) \partial _{t}+\left(
a_{6}+a_{7}t+a_{8}x+a_{3}tx+a_{5}x^{2}\right) \partial _{x}.  \label{ex.00}
\end{equation}%
These vectors are the generators of the projective algebra $sl\left(
3,R\right) $ of the 2-d Euclidian plane. Furthermore, this is the maximum
number of symmetries that a single second order ODE (in one variable!) can
have.
\end{example}

Lie symmetries can be used to find invariants or to reduce the order of an
ODE. Furthermore an ODE is characterized by the admitted algabra of Lie
symmetries. For second order ODEs we have the following theorem.

\begin{theorem}
\label{LieTheor}If a second order ODE$~$admits as Lie point symmetries the
eight \ of $sl\left( 3,R\right) $, then, there exists a transformation which
brings the equation to the form $x^{\ast \prime \prime }=0$ and vice versa.
\end{theorem}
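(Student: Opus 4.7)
\medskip

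\noindent\textbf{Proof plan.} The converse direction is immediate from Example \ref{ExLs}: there we showed directly that the ODE $x^{\ast \prime \prime}=0$ admits exactly eight Lie point symmetries generated by \eqref{ex.00}, and they realise $sl(3,R)$. Since the Lie symmetry condition is covariant under point transformations, any ODE that is point-equivalent to $x^{\ast\prime\prime}=0$ automatically inherits the same eight dimensional $sl(3,R)$ algebra, and no further argument is needed for this implication.

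For the forward direction, suppose $\ddot{x}=\omega(t,x,\dot{x})$ admits eight Lie point symmetries spanning $sl(3,R)$. I would proceed in two steps: first normalise the eight generators to the canonical projective form \eqref{ex.00} by a point transformation of the base space $(t,x)$, and then show that the resulting symmetry conditions force $\omega^{\ast}\equiv 0$ in the new coordinates. For the normalisation step I invoke the classical Lie classification of finite dimensional Lie algebras of vector fields on a two dimensional manifold: up to a local point transformation, the only faithful transitive realisation of $sl(3,R)$ on a $2$-manifold is the standard projective one, namely the generic vector \eqref{ex.00}. Transitivity in our case is automatic, because the two dimensional abelian subalgebra of $sl(3,R)$ (the translations in the canonical form) must be realised by two vector fields whose span has rank two on an open set; otherwise the prolonged algebra could not attain the maximal dimension eight of a second order ODE. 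Hence a point transformation $(t,x)\mapsto (t^{\ast},x^{\ast})$ exists sending the eight symmetries to \eqref{ex.00}.

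In these adapted coordinates the ODE reads $\ddot{x}^{\ast}=\omega^{\ast}(t^{\ast},x^{\ast},\dot{x}^{\ast})$, and I would apply the symmetry condition $\eta^{[2]}=X^{[1]}\omega^{\ast}$ generator by generator from \eqref{ex.00}: (i) the translations $\partial_{t^{\ast}}$ and $\partial_{x^{\ast}}$ yield $\omega^{\ast}_{,t^{\ast}}=\omega^{\ast}_{,x^{\ast}}=0$, so $\omega^{\ast}=\omega^{\ast}(\dot{x}^{\ast})$; (ii) the scaling $t^{\ast}\partial_{t^{\ast}}$, whose second prolongation computes to $t^{\ast}\partial_{t^{\ast}}-\dot{x}^{\ast}\partial_{\dot{x}^{\ast}}-2\ddot{x}^{\ast}\partial_{\ddot{x}^{\ast}}$, gives the Euler type relation $\dot{x}^{\ast}\,d\omega^{\ast}/d\dot{x}^{\ast}=2\omega^{\ast}$, whence $\omega^{\ast}=c(\dot{x}^{\ast})^{2}$; (iii) the projective generator $t^{\ast 2}\partial_{t^{\ast}}+t^{\ast}x^{\ast}\partial_{x^{\ast}}$, after a short prolongation computation, then produces the algebraic relation $c\dot{x}^{\ast}(t^{\ast}\dot{x}^{\ast}+2x^{\ast})=0$, which forces $c=0$. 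Thus $\omega^{\ast}\equiv 0$, and the ODE is transformed into $x^{\ast\prime\prime}=0$.

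The principal obstacle is the normalisation step: the structural statement that $sl(3,R)$ has essentially a unique transitive realisation as vector fields on the plane. A fully self-contained proof of this would start with a two dimensional abelian subalgebra of $sl(3,R)$, straighten it to $\{\partial_{\tau},\partial_{\chi}\}$ using the Frobenius theorem, and then successively determine the remaining six generators from the $sl(3,R)$ commutation relations, checking at each step that the ambiguity is exhausted by a further point transformation. Once this normalisation is in hand, Step 2 reduces to a short direct computation with the second prolongation formula \eqref{PP.01D}.
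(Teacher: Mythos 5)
The paper does not actually prove Theorem \ref{LieTheor}: it is stated as a classical result of Lie, illustrated by examples, and the reader is referred to the linearization literature, so there is no internal proof to compare yours against. Your sketch follows the standard route, and the parts you carry out are correct. The converse is exactly as you say: Example \ref{ExLs} exhibits the realisation (\ref{ex.00}) of $sl(3,R)$ for $\ddot{x}=0$, and point-equivalence preserves the point-symmetry algebra. Your prolongation computations in the forward direction also check out: $\partial_{t^{\ast}}$ and $\partial_{x^{\ast}}$ remove the $t^{\ast},x^{\ast}$ dependence; $t^{\ast}\partial_{t^{\ast}}$, whose second prolongation coefficient is $-2\ddot{x}^{\ast}$, forces $\omega^{\ast}=c\,\dot{x}^{\ast 2}$; and $t^{\ast 2}\partial_{t^{\ast}}+t^{\ast}x^{\ast}\partial_{x^{\ast}}$, whose second prolongation coefficient is $-3t^{\ast}\ddot{x}^{\ast}$, yields $c\,\dot{x}^{\ast}\left(t^{\ast}\dot{x}^{\ast}+2x^{\ast}\right)=0$ and hence $c=0$.

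The genuinely incomplete step is the normalisation, and within it the transitivity claim. You argue that the two-dimensional abelian subalgebra must span a rank-two distribution on an open set ``otherwise the prolonged algebra could not attain the maximal dimension eight''; as stated this is close to assuming what is to be proved, since one must first rule out intransitive realisations of $sl(3,R)$ by point symmetries of a second-order ODE --- note that commuting fields such as $\partial_{t}$ and $f(x)\partial_{t}$ span only a one-dimensional distribution. The standard ways to close this are either to invoke Lie's classification of finite-dimensional Lie algebras of vector fields on two-dimensional manifolds (under which $sl(3,R)$ admits, up to a local point transformation, only the projective realisation), or to observe that a point symmetry of a second-order ODE is fixed by finitely many jet coefficients at a point, so the isotropy subalgebra has dimension at most six and the orbit must be two-dimensional. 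You flag this honestly as the principal obstacle, but it carries essentially all the mathematical content of the theorem; the subsequent prolongation computation is routine. Since the paper itself supplies no proof, your treatment is no less complete than the source, but the normalisation argument should be written out or replaced by an explicit citation to Lie's classification before the proof can be considered finished.
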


For instance the equations
\begin{equation*}
~\ddot{x}+\frac{1}{x}\dot{x}^{2}=0
\end{equation*}%
\begin{equation*}
\ddot{x}+3x\dot{x}+x^{3}=0
\end{equation*}%
\begin{equation*}
\ddot{x}+\omega _{0}^{2}x+\sin t=0
\end{equation*}%
\begin{equation*}
\left( 4+x^{2}+t^{2}\right) \ddot{x}-2t\dot{x}^{3}+2x\dot{x}^{2}-2t\dot{x}%
+2x=0
\end{equation*}%
are equivalent to the equation of motion of a free particle \cite%
{Leach80a,Prince01,Karasu09}, because they are invariant under the action of
the Lie algebra $sl\left( 3,R\right) .$ The transformation where the
admitted $sl\left( 3,R\right) $ algebra is written in the form (\ref{ex.00})
is a coordinate transformation on $M:$ $\left( t,x\right) \rightarrow \left(
\tau ,y\right) $ which transforms the ODE to the form $\frac{d^{2}y}{d\tau
^{2}}=0.$ This procedure is called linearization process (see \cite%
{Mahq07,Mahq08} and references therein).

Below we present two methods where the use of Lie symmetries reduces the
order of an ODE.

\subsubsection{Canonical coordinates}

\label{Cancoor}

\begin{definition}
Let $X~$be a vector field with coordinates $X=\xi \left( t,x\right) \partial
_{t}+\eta \left( t,x\right) \partial _{x}$. If under the coordinate
transformation $\left\{ t,x\right\} \rightarrow \left\{ r,s\right\} $ holds
that%
\begin{equation}
Xr=0~~,~~Xs=1  \label{cc.00}
\end{equation}%
then, we say that $\{r,s\}$ are the \textbf{canonical coordinates} of $X,$
i.e. $X=\partial _{s}$.
\end{definition}

Canonical coordinates can be used to reduce by one the order of an ODE. \
Consider the $nth$~order ODE $(n\succeq 2)$
\begin{equation}
\frac{d^{n}s}{dr^{2}}=\bar{\omega}\left( r,s,\frac{ds}{dr},...,\frac{d^{n-1}s%
}{dr^{n-1}}\right) .  \label{cc.00a}
\end{equation}%
Let $X_{C}=\partial _{s}$ be a Lie symmetry of (\ref{cc.00a}) written in
canonical coordinates.~The nth prolongation of the symmetry vector is $%
X_{C}^{\left[ n\right] }=X_{C},~$so condition (\ref{Ls.06}) sets the
constraint%
\begin{equation*}
\frac{\partial }{\partial s}\bar{\omega}\left( r,s,\frac{ds}{dr},...,\frac{%
d^{n-1}s}{dr^{n-1}}\right) =0.
\end{equation*}%
This implies that the function $\bar{\omega}$ is independent of $s,$
consequently (\ref{cc.00a}) can be written in the form%
\begin{equation}
\frac{d^{n-1}S}{dr^{n-1}}=\bar{\omega}\left( r,S,\frac{dS}{dr},...,\frac{%
d^{n-2}S}{dr}\right)  \label{cc.00c}
\end{equation}%
which is~a $\left( n-1\right) $ order ODE, where $S~$is defined~by $S=\frac{%
ds}{dr}$.

\begin{example}
Use the canonical coordinates of the Lie symmetry $X=x\partial
_{x}-t\partial _{t}$ to reduce the order of the ODE
\begin{equation}
\ddot{x}+x\dot{x}=0  \label{cc.000}
\end{equation}%
Solution: The canonical coordinates of $X$ are%
\begin{equation*}
x=e^{s}~~,~~t=re^{-s}.
\end{equation*}%
For the first and the second derivative of $x\left( t\right) $ we compute
\begin{equation*}
\dot{x}=\frac{e^{2s}s^{\prime }}{1-rs^{\prime }}
\end{equation*}%
\begin{equation*}
\ddot{x}=\frac{e^{3s}}{\left( 1-rs^{\prime }\right) ^{3}}\left[ \left(
2s^{\prime 2}+s^{\prime \prime }\right) \left( 1-rs^{\prime }\right)
+s^{\prime }\left( s^{\prime }+rs^{\prime \prime }\right) \right]
\end{equation*}%
where $s^{\prime }=\frac{ds}{dr}\,$.

Replacing in \ref{cc.000}, we find the reduced equation%
\begin{equation}
\left( 2S^{2}+S^{\prime }\right) \left( 1-rS\right) +S\left( S+rS^{\prime
}\right) +S\left( 1-rS\right) ^{2}=0  \label{cc.00d}
\end{equation}%
where we have set $S=s^{\prime }$. Equation (\ref{cc.00d}) is an Abel
equation of the first kind \cite{PolyaninB} invariant under $X=\frac{d}{ds}.$
\end{example}

\subsubsection{Invariants}

As in the case of functions in $M~$(see section \ref{Fun}), condition (\ref%
{Ip.01a}) is equivalent to the following Lagrange system%
\begin{equation}
\frac{dt}{\xi }=\frac{dx}{\eta }=\frac{d\dot{x}}{\eta _{\left[ 1\right] }}%
=...=\frac{dx^{\left( n\right) }}{\eta _{\left[ n\right] }}.  \label{Ip.02}
\end{equation}%
The Lagrange system (\ref{Ip.02}) provides us with characteristic functions
\begin{equation*}
W^{\left[ 0\right] }\left( t,x\right) ,~W^{\left[ 1\right] i}\left( t,x,\dot{%
x}\right) ,~W^{\left[ n\right] }\left( t,x,\dot{x},...,\dot{x}^{\left(
n\right) }\right)
\end{equation*}%
where $W^{\left[ n\right] }$ is called the \textbf{nth order invariant }of
the Lie symmetry vector. If (\ref{Ls.04}) is a Lie symmetry for the ODE%
\begin{equation}
x^{\left( n\right) }=\omega \left( t,\dot{x},\ddot{x},...,x^{\left(
n-1\right) }\right)  \label{Ip.03a}
\end{equation}%
it follows that (\ref{Ip.03a}) can be written as a function of the
characteristic functions $W^{\left[ 1\right] },...,W^{\left[ n\right] },~$of
(\ref{Ls.04}).

Let $u=W^{\left[ 0\right] }~,~v=W^{\left[ 1\right] },$ where $W^{\left[ 0%
\right] },~W^{\left[ 1\right] }$ are the zero and the first order invariants
of a Lie symmetry repetitively. From the invariants $u,~v,we$ define the
differential invariants
\begin{equation}
\frac{dv}{du}~,...,\frac{d^{n-1}v}{du^{n-1}}.  \label{Ip.04}
\end{equation}%
where
\begin{equation*}
\frac{dv}{du}=\frac{\frac{\partial v}{\partial t}+\frac{\partial v}{\partial
x}\dot{x}+...+\frac{\partial v}{\partial \dot{x}}\ddot{x}}{\frac{\partial v}{%
\partial t}+\frac{\partial v}{\partial x}\dot{x}}.
\end{equation*}

The differential invariants are functions of $x^{\left( n\right) }$, hence,
it is feasible that (\ref{Ip.03a}) may be written in terms of the
differential invariants (\ref{Ip.04}), i.e.%
\begin{equation}
\frac{d^{n-1}v}{du^{n-1}}=\Omega \left( u,v,\frac{dv}{du}~,...\frac{d^{n-1}v%
}{du^{n-1}}\right)
\end{equation}%
which is a $\left( n-1\right) $ order ODE.

\begin{example}
Use the first order invariants of the Lie symmetry $X=x\partial
_{x}-nt\partial _{t}$ to reduce the order of the Lane-Emden equation%
\begin{equation*}
\ddot{x}+\frac{2}{t}\dot{x}+x^{2n+1}=0
\end{equation*}%
where $n\neq -\frac{1}{2},0$. The Lane-Emden equation arises in the study of
equilibrium configurations of a spherical gas cloud acting under the mutual
attractions of its molecules and subject to the laws of thermodynamics \cite%
{LELeach,Khalique}.

Solution: The first prolongation of $X$ is
\begin{equation*}
X^{\left[ 1\right] }=x\partial _{x}-nt\partial _{t}+\left( n+1\right) \dot{x}%
\partial _{\dot{x}}
\end{equation*}%
hence, the corresponding Lagrange system is%
\begin{equation*}
\frac{dt}{-nt}=\frac{dx}{x}=\frac{d\dot{x}}{\left( n+1\right) \dot{x}}.
\end{equation*}%
The zero and the first order invariants are found to be
\begin{equation*}
u=xt^{\frac{1}{n}}~~,~~v=\dot{x}t^{1+\frac{1}{n}}.
\end{equation*}%
The differential invariant is defined as%
\begin{equation*}
\frac{dv}{du}=\frac{\left( 1+\frac{1}{n}\right) \dot{x}+t\ddot{x}}{\frac{1}{n%
}t^{-1}x+\dot{x}}.
\end{equation*}%
Substituting in the Lane-Emden equation we obtain the first order ODE%
\begin{equation}
v^{\prime }\left( u+nv\right) -\left( 1-n\right) v+nu^{2n+1}=0.  \label{Abel}
\end{equation}%
where $v^{\prime }=\frac{dv}{du}$. Equation (\ref{Abel}) is an Abel equation
of the second kind \cite{PolyaninB}.
\end{example}

An interesting application of the Lie invariants to the classical Kepler
system can be found in \cite{PrinceKepler} where the authors derive the
Runge-Lenz vector using the first order invariants of a point transformation.

It is possible an ODE to admit many symmetries which span a Lie algebra $%
G_{m}$ of dimension $m>1$. The following theorem relates the Lie point
symmetries of the reduced and of the original equation.

\begin{theorem}
\label{RedTh}Consider an ODE which admits the Lie point symmetries $%
X_{1},~X_{2}$ which are such that $\left[ X_{1},X_{2}\right]
=C_{12}^{1}X_{1}.~$Then, the reduction by $X_{1}$ leads to a reduced
equation which admits $X_{2}$ as a Lie symmetry whereas reduction of the ODE
by $X_{2}$ leads to a reduced equation, which does not admit $X_{1}$ as a
Lie symmetry.
\end{theorem}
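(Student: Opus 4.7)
The plan is to work in the canonical coordinates of the reducing symmetry so that the descent of the second symmetry to the reduced equation becomes an explicit algebraic computation on its coefficient functions, controlled directly by the commutator $[X_{1},X_{2}]$.

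First I would pass to canonical coordinates $(r,s)$ adapted to $X_{1}$, in which $X_{1}=\partial _{s}$, as in Section \ref{Cancoor}. The invariance of the ODE under $X_{1}$ forces the equation to be independent of $s$, so that after setting $S=ds/dr$ it reduces to an equation of order $n-1$ in the variables $(r,S)$, as in (\ref{cc.00c}). For a second vector field $X_{2}=\xi _{2}(r,s)\partial _{r}+\eta _{2}(r,s)\partial _{s}$ to project to a well-defined vector field on the reduced jet space, its coefficients and those of its first prolongation must be expressible as functions of $r$ and $S$ alone; this is precisely the condition that $X_{2}$ preserves the ring of $X_{1}$-invariants and it can be read off directly from the commutator $[X_{1},X_{2}]$.

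Second, I would translate the hypothesis $[X_{1},X_{2}]=C_{12}^{1}X_{1}$ into these coordinates. Since $X_{1}=\partial _{s}$, one has $[X_{1},X_{2}]=(\partial _{s}\xi _{2})\partial _{r}+(\partial _{s}\eta _{2})\partial _{s}$, so the hypothesis gives $\partial _{s}\xi _{2}=0$ and $\partial _{s}\eta _{2}=C_{12}^{1}$. Integrating yields $\xi _{2}=\xi _{2}(r)$ and $\eta _{2}=C_{12}^{1}s+g(r)$ for some $g$, and a direct application of the prolongation formula (\ref{MP.02A}) gives $\eta _{2}^{[1]}=g^{\prime }(r)+(C_{12}^{1}-\xi _{2}^{\prime }(r))S$, which depends only on $(r,S)$. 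Therefore $X_{2}$ descends to the well-defined vector field $\widetilde{X}_{2}=\xi _{2}(r)\partial _{r}+\eta _{2}^{[1]}\partial _{S}$ on the reduced equation, and since $X_{2}$ is a Lie symmetry of the original ODE its projection $\widetilde{X}_{2}$ is a Lie symmetry of the reduced equation.

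Third, I would repeat the analysis after swapping the roles of $X_{1}$ and $X_{2}$. In canonical coordinates $(\widetilde{r},\widetilde{s})$ with $X_{2}=\partial _{\widetilde{s}}$, the hypothesis reads $[X_{2},X_{1}]=-C_{12}^{1}X_{1}$, which gives $\partial _{\widetilde{s}}\widetilde{\xi }_{1}=-C_{12}^{1}\widetilde{\xi }_{1}$ and $\partial _{\widetilde{s}}\widetilde{\eta }_{1}=-C_{12}^{1}\widetilde{\eta }_{1}$. Provided $C_{12}^{1}\neq 0$, this forces $\widetilde{\xi }_{1}$ and $\widetilde{\eta }_{1}$ to have a genuine exponential dependence on $\widetilde{s}$, so $X_{1}$ fails to preserve the ring of $X_{2}$-invariants and does not project to a vector field on the reduced equation; in particular it is not inherited as a Lie symmetry. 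The main obstacle is the conceptual distinction between ``$X_{2}$ projects to a vector field on the reduced equation'' and ``$\widetilde{X}_{2}$ is a Lie symmetry of that equation'', but once projection is established the symmetry property follows from the determining equations (\ref{Ls.06}) for $X_{2}$ applied to the canonical form of the ODE. One should also note that the theorem implicitly assumes $C_{12}^{1}\neq 0$: when $C_{12}^{1}=0$ the two symmetries commute, the two reductions are symmetric, and each one inherits the other symmetry.
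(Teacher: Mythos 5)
Your argument is correct. Note first that the paper itself states Theorem \ref{RedTh} without proof (it is quoted as a classical fact and only illustrated by the remark on the Abelian case that follows it), so there is no in-paper proof to compare against; what you have written is the standard canonical-coordinates argument, and it is the right one. The two computations that carry the proof are both done properly: with $X_{1}=\partial _{s}$ the hypothesis forces $\xi _{2}=\xi _{2}(r)$ and $\eta _{2}=C_{12}^{1}s+g(r)$, so that $\eta _{2}^{[1]}=g^{\prime }(r)+(C_{12}^{1}-\xi _{2}^{\prime }(r))S$ and, by induction, every higher prolongation coefficient is a function of $r,S,S^{\prime },\dots $ alone; moreover the prolongation of the projected field coincides with the projection of the prolonged field, which is exactly what is needed to pass from ``$X_{2}$ is a symmetry of the original ODE'' to ``$\widetilde{X}_{2}$ is a symmetry of the reduced ODE''. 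In the reversed reduction the relation $\partial _{\widetilde{s}}\widetilde{\xi }_{1}=-C_{12}^{1}\widetilde{\xi }_{1}$, $\partial _{\widetilde{s}}\widetilde{\eta }_{1}=-C_{12}^{1}\widetilde{\eta }_{1}$ gives the exponential factor $e^{-C_{12}^{1}\widetilde{s}}$, which survives in every prolongation coefficient and makes $X_{1}$ at best a nonlocal symmetry of the reduced equation, since $\widetilde{s}=\int \widetilde{S}\,d\widetilde{r}$.

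Your closing observation is also worth keeping: as stated, the theorem's second assertion is false when $C_{12}^{1}=0$, and the paper itself concedes this in the sentence immediately following the theorem (Abelian case). So the hypothesis should be read as $C_{12}^{1}\neq 0$ for the non-inheritance half, exactly as you flag.
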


In case the generators $X_{1},X_{2}$ \ form an Abelian Lie algebra, i.e. $%
\left[ X_{1},X_{2}\right] =0,$ then, the reduction preserves the Lie
symmetries. Theorem \ref{RedTh} is important because it gives a hint as to
which generator the reduction of an ODE must start in order to continue the
reduction process with the reduced equation.

Since the reduced equation is different from the original equation, it is
possible the reduced equation to admit extra Lie symmetries, which are not
Lie symmetries of the original equation. These new Lie symmetries have been
named \textbf{Type II hidden symmetries}. Type II hidden symmetries can be
used to continue the reduction process and - if it is feasible - to find a
solution of the original equation \cite{LeGovAb99}.

There are other methods to apply Lie symmetries; a quite intriguing
application of Lie symmetries is to produce integrals or Lagrangian
functions for a system of ODEs by the \ method of Jacobi's last multiplier
(see \cite{Nucci08,Nucci09,Leach2012} and references therein).

\section{Lie symmetries of PDEs}

A partial differential equation (PDE) is a function $H=H\left(
x^{i},u^{A},u_{,i}^{A},u_{,ij}^{A},..,u_{ij...i_{n}}^{A}\right) $ in the jet
space $\bar{B}_{\bar{M}},~$where $x^{i}$ are the independent variables and $%
u^{A}$ are the dependent variables. As in the case of ODEs we define the
invariance of a PDE under the infinitesimal point transformation%
\begin{eqnarray}
\bar{x}^{i} &=&x^{i}+\varepsilon \xi ^{i}\left( x^{k},u^{B}\right)
\label{pde1} \\
\bar{u}^{A} &=&\bar{u}^{A}+\varepsilon \eta ^{A}\left( x^{k},u^{B}\right)
\label{pde2}
\end{eqnarray}%
with generator $X$ as follows.

\begin{definition}
Let
\begin{equation}
X=\xi ^{i}\left( x^{k},u^{B}\right) \partial _{t}+\eta ^{A}\left(
x^{k},u^{B}\right) \partial _{B}  \label{pde3a}
\end{equation}%
be the generator of the infinitesimal point transformation (\ref{pde1})-(\ref%
{pde2}) and
\begin{equation*}
X^{\left[ n\right] }=X+\eta _{\left[ i\right] }^{A}\partial _{\dot{x}%
^{i}}+...+\eta _{\left[ ij...i_{n}\right] }^{A}\partial _{u_{ij...i_{n}}}
\end{equation*}%
be the nth prolongation vector, where $\eta _{\left[ ij...i_{n}\right] }^{A}$
is given from (\ref{MP.02}). Then, the transformation (\ref{pde1})-(\ref%
{pde2}) leaves invariant the PDE
\begin{equation}
H\left( x^{i},u^{A},u_{,i}^{A},u_{,ij}^{A},..\right) =0  \label{pde3}
\end{equation}%
if there exist a function $\lambda $ where the following condition holds,%
\begin{equation}
X^{\left[ n\right] }\left( H\right) =\lambda H~~,~~modH=0.
\label{pde4}
\end{equation}%
The infinitesimal generator $X$ is called a Lie point symmetry of the PDE (%
\ref{pde3}).
\end{definition}

In case the PDE (\ref{pde3}) can we written in solved form, i.e.%
\begin{equation*}
u_{ij...i_{n}}^{A}=h^{A}\left(
x^{i},u^{B},u_{,i}^{B},u_{,ij}^{B},..,u_{ij...i_{n-1}}^{B}\right)
\end{equation*}%
then the Lie condition (\ref{pde4}) is equivalent to the following system of
equations
\begin{equation}
\eta _{\left[ ij...i_{n}\right] }^{A}=X^{\left[ n-1\right] }h^{A}\left(
x^{i},u^{B},u_{,i}^{B},u_{,ij}^{B},..,u_{ij...i_{n-1}}^{B}\right) .
\label{pde5}
\end{equation}

Consequently the family of all solutions of (\ref{pde3}) is invariant under (%
\ref{pde1})-(\ref{pde2}) if condition (\ref{pde4}) holds. In the example
below we compute the Lie symmetries of the homogeneous heat equation.

\begin{example}
Find the Lie symmetries of the 1+1 heat equation
\begin{equation}
H\left( u_{,t},u_{,xx}\right) :u_{,xx}-u_{,t}=0.  \label{pde6}
\end{equation}

Solution: Let
\begin{equation*}
X=\xi ^{t}\left( t,x,u\right) \partial _{t}+\xi ^{x}\left( t,x,u\right)
\partial _{x}+\eta \left( t,x,u\right) \partial _{u}
\end{equation*}%
be a Lie symmetry of (\ref{pde6}). Then condition (\ref{pde5}) gives%
\begin{equation*}
\eta _{\left[ xx\right] }-\eta _{\left[ t\right] }=0.
\end{equation*}%
Substituting $\eta _{\lbrack t]},~\eta _{\left[ xx\right] }$ ~from (\ref%
{MP.03}),(\ref{MP.04}) and collecting terms of derivatives of $u\left(
t,x\right) $ we find the following determining equations%
\begin{equation*}
\xi _{,u}^{t}=0~~,~~\xi _{,x}^{t}=0~~~,~~\xi _{,u}^{x}=0~~,~~\xi
_{,ttt}^{t}=0
\end{equation*}%
\begin{equation*}
\xi _{,t}^{x}=-2\eta _{,xu}~~,~~\xi _{,x}^{x}=\frac{1}{2}\xi
_{,t}^{t}~~,~~\eta _{,uu}=0
\end{equation*}%
\begin{equation*}
\eta _{,xx}=\eta _{,t}~~,~~4\eta _{,tu}+\xi _{,tt}=0.
\end{equation*}%
The solution of the system of equations is
\begin{eqnarray*}
\xi ^{t}\left( t,x,u\right) &=&a_{1}t^{2}+2a_{2}t+a_{3} \\
\xi ^{x}\left( t,x,u\right) &=&a_{1}tx+a_{2}x+a_{4}t+a_{5} \\
\eta \left( t,x,u\right) &=&-\frac{a_{1}}{2}\left( t+\frac{1}{2}x^{2}\right)
u-\frac{a_{4}}{2}xu+a_{6}u+b\left( t,x\right)
\end{eqnarray*}%
where $b\left( t,x\right) $ is a solution of the heat equation. The generic
vector of the infinitesimal transformation which leaves invariant the heat
equation is
\begin{eqnarray*}
X &=&\left( a_{1}t^{2}+2a_{2}t+a_{3}\right) \partial _{t}+\left(
a_{1}tx+a_{2}x+a_{4}t+a_{5}\right) \partial _{x}+ \\
&&+\left[ -\frac{a_{1}}{2}\left( t+\frac{1}{2}x^{2}\right) u-\frac{a_{4}}{2}%
xu+a_{6}u+b\left( t,x\right) \right] \partial _{u}.
\end{eqnarray*}
\end{example}

\subsubsection{Invariant solutions}

In section \ref{Cancoor}, it has been explained how to use the canonical
coordinates to reduce the order of an ODE. The infinitesimal generator $%
X_{3}=\partial _{t}$ is a Lie symmetry of the heat equation (\ref{pde6}),
written in canonical coordinates. Nevertheless (\ref{pde6}) cannot be
reduced as it happened with the case of ODEs. However, one can calculate
from the Lie symmetry $X_{3}$ the zero order invariants, which are $\
y=x~,~w=u$. We select $y$ to be the independent variable and $w=w\left(
y\right) .$ Substituting the invariants in (\ref{pde6}) we get the ODE~$%
w_{yy}=0.$ Then, the following solution can be found easily%
\begin{equation}
u\left( x\right) =c_{1}x+c_{2}  \label{pde7}
\end{equation}%
which is independent on the variable $t.$ That is, the use of the Lie
symmetry $X_{3}~$ reduces by one the number of independent variables but not
the order of the PDE. The solution (\ref{pde7}) is called \textbf{an
invariant solution}.

\begin{definition}
The function $u=U\left( x^{i}\right) $ is an invariant solution
corresponding to the infinitesimal generator (\ref{pde3a}), if and only if ~$%
U\left( x^{i}\right) ~$is an invariant of the infinitesimal generator, i.e. $%
X\left( U\right) =0$ $\ $and solves the PDE~(\ref{pde3}).
\end{definition}

By definition, a Lie symmetry maps a solution onto a solution. The action of
the point transformation with infinitesimal generator $X_{3}$ on a solution $%
u$ of (\ref{pde6}) is $u\left( \bar{t},\bar{x}\right) =u\left( t+\varepsilon
,x\right) $, that is, $X_{3}\bar{u}=0$ or $\bar{u}_{,t}=0$, hence the use of
Lie point symmetries gives a constraint equation. However, if a solution of
a PDE is already known we can apply the point transformation to obtain a
family of solutions. This new solutions will depend on at most as many new
parameters as there are in the symmetry transformation used.

\begin{theorem}
Assume that the PDE (\ref{pde3}) admits a Lie point symmetry and let $%
u=u\left( x^{i}\right) $ be a solution of (\ref{pde3}) which is not
invariant under this Lie symmetry. Then, under the transformation (\ref{pde1}%
)-(\ref{pde2}) the solution $u\left( x^{i}\right) $ defines a family of
solutions of the PDE.
\end{theorem}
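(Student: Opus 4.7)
The plan is to exploit directly the symmetry condition $X^{[n]}(H)=\lambda H$ modulo $H=0$, which by construction encodes the fact that the point transformation generated by $X$ sends any solution of the PDE to another solution. The task then reduces to (i) making precise what "sends to a solution" means at the level of the graph of a function, and (ii) showing that when the original solution is not invariant the resulting family is genuinely parametrized by $\varepsilon$.

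First I would set up the bookkeeping. Let $u=\Phi(x^i)$ be a given solution, so that $H(x^i,\Phi,\Phi_{,i},\Phi_{,ij},\ldots)=0$ identically in $x^i$. The one-parameter point transformation acts on the graph $\{(x^i,\Phi(x^i))\}$ by
\begin{equation*}
\bar{x}^i=\bar{x}^i(x^k,\Phi(x^k),\varepsilon),\qquad \bar{u}=\bar{u}(x^k,\Phi(x^k),\varepsilon).
\end{equation*}
At $\varepsilon=0$ this is the identity, so the Jacobian $\partial \bar{x}^i/\partial x^j$ equals the identity; by continuity it remains invertible for $\varepsilon$ in a neighbourhood of $0$, and the implicit function theorem lets me solve $x^k=x^k(\bar{x}^i,\varepsilon)$ and thereby define a new function $\bar{\Phi}(\bar{x}^i;\varepsilon):=\bar{u}(x^k(\bar{x}^i,\varepsilon),\Phi(x^k(\bar{x}^i,\varepsilon)),\varepsilon)$. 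The prolongation formulas (\ref{MP.02A})--(\ref{MP.02}) guarantee that the transformed derivatives of $\bar{\Phi}$ with respect to $\bar{x}^i$ coincide with the transformed jet coordinates $\bar{u}_{,i},\bar{u}_{,ij},\ldots$ obtained from $X^{[n]}$.

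Next I would invoke the symmetry condition. Since $X^{[n]}H=\lambda H$ on the solution submanifold, the function $H$ evaluated on the prolonged image of the graph of $\Phi$ satisfies $\frac{d}{d\varepsilon}H\bigr|_{\bar{\Phi}} = \lambda\, H\bigr|_{\bar{\Phi}}$. This is a linear first-order ODE in $\varepsilon$ with zero initial value (because $H\bigr|_{\Phi}=0$ at $\varepsilon=0$), hence by uniqueness $H\bigr|_{\bar{\Phi}}\equiv 0$ for all admissible $\varepsilon$. Therefore $\bar{\Phi}(\cdot;\varepsilon)$ is a solution of the PDE for every $\varepsilon$ in the neighbourhood.

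Finally I must show that this family is nontrivial, i.e. that the map $\varepsilon\mapsto \bar{\Phi}(\cdot;\varepsilon)$ is not constant. Non-invariance of $\Phi$ means precisely that $X(\Phi-u)\bigr|_{u=\Phi(x)}=\eta^A(x,\Phi)-\xi^i(x,\Phi)\Phi_{,i}\not\equiv 0$, which is the derivative at $\varepsilon=0$ of $\bar{\Phi}(\bar{x};\varepsilon)-\Phi(\bar{x})$ with respect to $\varepsilon$. Because this derivative is nonzero on an open set, distinct small values of $\varepsilon$ produce distinct solutions, yielding the claimed one-parameter family. The main obstacle I anticipate is the bookkeeping in step (ii): one must be careful that the transformed jet really equals the jet of the transformed function (which is exactly the content of the prolongation being well defined) and that the implicit-function inversion is valid; once those are granted, the invariance of $H$ does the rest via the uniqueness argument for the linear ODE in $\varepsilon$.
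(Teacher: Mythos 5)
Your argument is correct, and it is the standard proof of this classical fact. Be aware, though, that the paper itself offers no proof of this theorem: it is stated as a known result and immediately followed by the worked example of the heat-equation symmetry $X_{1}$, so there is no in-paper argument to compare yours against. Your three ingredients --- (i) the implicit-function inversion defining $\bar{\Phi}(\bar{x};\varepsilon)$ from the transformed graph, (ii) the linear ODE $\frac{d}{d\varepsilon}\,H|_{\bar{\Phi}}=\lambda\,H|_{\bar{\Phi}}$ with zero initial data forcing $H|_{\bar{\Phi}}\equiv 0$, and (iii) the identification of $\partial_{\varepsilon}\bar{\Phi}|_{\varepsilon=0}$ with the characteristic $\eta-\xi^{i}\Phi_{,i}$ to obtain non-triviality of the family --- are exactly what is needed. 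One small point of care in (ii): the relation $X^{\left[ n\right] }(H)=\lambda H$ must be read as an identity on a neighbourhood in the jet space, not merely as a relation restricted to the submanifold $H=0$; otherwise the chain-rule computation $\frac{d}{d\varepsilon}\left( H\circ \mathrm{Fl}_{\varepsilon }\right) =\left( X^{\left[ n\right] }H\right) \circ \mathrm{Fl}_{\varepsilon }$ cannot be converted into the claimed linear ODE, since the flowed points are not known in advance to lie on $H=0$. The paper's own convention (introducing $\lambda$ precisely so that the jet variables are treated as independent) supplies this identity, so the step is legitimate once phrased that way; your wording ``on the solution submanifold'' should be adjusted accordingly.
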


For instance, the point transformation corresponding to the Lie symmetry
\begin{equation*}
X_{1}=t^{2}\partial _{t}+tx\partial _{x}-\frac{1}{2}\left( t+\frac{1}{2}%
x^{2}\right) u\partial _{u}
\end{equation*}%
of the heat equation (\ref{pde6}) is%
\begin{equation*}
\bar{t}=t\left( 1-\varepsilon t\right) ^{-1}~~,~~\bar{x}=x\left(
1-\varepsilon t\right) ^{-1}
\end{equation*}%
\begin{equation*}
\bar{u}=u\sqrt{1-\varepsilon t}e^{-\frac{\varepsilon x^{2}}{4\left(
1-\varepsilon t\right) }}.
\end{equation*}%
Then, if $\bar{u}_{c}=\bar{u}_{c}\left( \bar{t},\bar{x}\right) $ is a
solution of (\ref{pde6}), which is not an invariant solution, the function
\begin{equation*}
u_{c}\left( x,t\right) =\frac{1}{\sqrt{1-\varepsilon t}}e^{\frac{\varepsilon
x^{2}}{4\left( 1-\varepsilon t\right) }}\bar{u}_{c}\left( \frac{x}{%
1-\varepsilon t},\frac{t}{1-\varepsilon t}\right)
\end{equation*}%
is also a solution of (\ref{pde6}).

\section{Lie B\"{a}cklund symmetries}

So far we have considered point transformations which depend on the
variables of the base manifold only. However there exist point
transformations who are defined in the jet space and depend also on the
derivatives.

The infinitesimal transformation
\begin{eqnarray}
\bar{x}^{i} &=&x^{i}+\varepsilon \xi ^{i}\left(
x^{i},u,u_{,i},u_{,ij}...\right)  \label{LB.01} \\
\bar{u} &=&u+\varepsilon \eta \left( x^{i},u,u_{,i},u_{,ij}...\right)
\label{LB.02}
\end{eqnarray}%
is called \textbf{Lie B\"{a}cklund transformation}. This point
transformation does not concern the present work but for the completeness we
we give the basic definitions.

\begin{definition}
The generator $X=\xi \left( x^{i},u,u_{,i},u_{,ij}...\right) \partial
_{i}+\eta \left( x^{i},u,u_{,i},u_{,ij}...\right) \partial _{u}$ generates a
\textbf{Lie B\"{a}cklund symmetry} for the DE $H\left(
x^{i},u,u_{,i},u_{,ij}...\right) =0,$ if and only if there exist a function $%
\lambda \left( x^{i},u,u_{,i},u_{,ij}...\right) $ such as the following
condition holds.
\begin{equation*}
\left[ X,H\right] =\lambda H~\ ,~modH=0.
\end{equation*}
\end{definition}

It follows from the above definition that a Lie B\"{a}cklund symmetry
preserves the set of solutions $u$ of the DE, i.e. $X\left( u\right) =Cu,$
where $C$ is a constant. A special class of Lie B\"{a}cklund symmetries are
the contact symmetries. \textbf{Contact symmetry} is a Lie B\"{a}cklund
symmetry where the generator depends only on the first derivative of $u_{,i}$%
, i.e. the generator of transformation (\ref{LB.01})-(\ref{LB.01}) has the
form $X=\xi \left( x^{i},u,u_{,i}\right) \partial _{i}+\eta \left(
x^{i},u,u_{,i}\right) \partial _{u}$.

\begin{proposition}
The generator $\bar{X}=\left( \eta -\xi ^{k}u_{,k}\right) \partial _{u}$ is
the canonical form of the operator $X=\xi \partial _{i}+\eta \partial _{u}.$
\end{proposition}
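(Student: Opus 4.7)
The plan is to show that $X$ and $\bar X$ differ, as derivations on the jet space, by the total-derivative operator $\xi^{i}D_{i}$ (with $D_{i}$ as in the footnote to (\ref{MP.02})), and that this difference acts trivially in the Lie--Bäcklund sense: $\xi^{i}D_{i}H$ vanishes modulo $H=0$ on prolonged solutions. Consequently $X$ and $\bar X$ satisfy the symmetry condition (\ref{pde4}) for exactly the same equations, and since $\bar X$ has no components along the independent variables $x^{i}$ it qualifies as the canonical (evolutionary) representative of the equivalence class.

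First I would compare the two fields at the zeroth jet level. A direct substitution of $\bar X=(\eta-\xi^{k}u_{,k})\partial_{u}$ gives
\begin{equation*}
X-\bar X \;=\; \xi^{i}\partial_{i}+\xi^{k}u_{,k}\partial_{u}\;=\;\xi^{i}\bigl(\partial_{i}+u_{,i}\partial_{u}\bigr),
\end{equation*}
which is precisely $\xi^{i}D_{i}$ restricted to functions of $(x^{i},u)$. I would then upgrade this to an identity on all jet levels by induction on the order $n$, proving
\begin{equation*}
X^{[n]} \;=\; \bar X^{[n]}+\xi^{i}D_{i}.
\end{equation*}
For $X$ the inductive step uses the recursion $\eta^{A}_{ij\dots i_{n}}=D_{i_{n}}\eta^{A}_{ij\dots i_{n-1}}-u_{ij\dots k}D_{i_{n}}\xi^{k}$ from (\ref{MP.02}); for $\bar X$ the $\xi$-components vanish, so the recursion collapses to $\bar\eta^{A}_{ij\dots i_{n}}=D_{i_{n}}\bar\eta^{A}_{ij\dots i_{n-1}}$. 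Subtracting the two expressions and applying the inductive hypothesis to $\bar\eta^{A}_{ij\dots i_{n-1}}$ produces exactly the additional $\xi^{i}D_{i}$ contribution on the new jet coordinate.

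Given this identity, for any DE $H=0$ one has $X^{[n]}(H)=\bar X^{[n]}(H)+\xi^{i}D_{i}H$; since $D_{i}H$ vanishes on every prolonged solution, the extra term is zero modulo $H$. Therefore $X^{[n]}(H)=\lambda H$ if and only if $\bar X^{[n]}(H)=\lambda' H$, so $X$ and $\bar X$ are equivalent Lie--Bäcklund symmetries and $\bar X$ is the canonical form. The only non-routine ingredient is the inductive verification of $X^{[n]}=\bar X^{[n]}+\xi^{i}D_{i}$; this is the main obstacle, but it is a purely computational bookkeeping exercise with the prolongation recursion already given.
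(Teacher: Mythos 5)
Your proof is correct and follows essentially the same route as the paper: both rest on the decomposition $X=\bar X+\xi^{i}D_{i}$ together with the fact that $\xi^{i}D_{i}$ is a trivial (Lie--Bäcklund) symmetry because $D_{i}H$ vanishes on prolonged solutions. The paper simply asserts this by noting that the canonical form of $D_{i}$ is zero and subtracting $\xi^{k}D_{k}$ from $X$, whereas you additionally verify the prolongation identity $X^{[n]}=\bar X^{[n]}+\xi^{i}D_{i}$ by induction, which is a sound and slightly more careful rendering of the same argument.
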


The operator $D_{i}=\partial _{i}+u_{,i}\partial _{u}+u_{,ij}\partial
_{u_{i}}+...$ is always a Lie B\"{a}cklund symmetry (the trivial one$,$
since $D_{i}$ in the canonical form is $\bar{D}_{i}=\left(
u_{,i}-u_{,i}\right) \partial _{u}+...=0$), hence $f^{i}\left(
x^{i},u,u_{,i},u_{,ij}...\right) D_{i}$ is also a Lie B\"{a}cklund symmetry.
If $X=\xi \partial _{i}+\eta \partial _{u}$ is a Lie B\"{a}cklund symmetry,
then the generator $\bar{X},~$
\begin{equation*}
\bar{X}=X-fD_{i}=\left( \xi ^{k}-f^{k}\right) \partial _{k}+\left( \eta
-f^{k}u_{,k}\right) \partial _{u}+...
\end{equation*}%
is also a Lie B\"{a}cklund symmetry. Since $f^{k}$ is an arbitrary function
we set $f^{k}=\xi ^{k}$ and obtain%
\begin{equation*}
\bar{X}=\left( \eta -\xi ^{k}u_{,k}\right) \partial _{u}.
\end{equation*}

We can always absorb the term $\xi ^{k}u_{k}$ inside the $\eta $ and have
the final result that $\bar{X}=Z\left( x^{i},u,u_{,i},u_{,ij}...\right)
\partial _{u}$ is the generator of a Lie B\"{a}cklund symmetry.

For a PDE the generator of a Lie point symmetry is
\begin{equation*}
X=\xi \left( x^{i},u\right) \partial _{i}+\eta \left( x^{i},u\right)
\partial _{u}
\end{equation*}%
then in this case the vector $\bar{X}$ can be written as%
\begin{equation*}
\bar{X}=\left( \eta \left( x^{i},u\right) -\xi ^{k}\left( x^{i},u\right)
u_{,k}\right) \partial _{u}
\end{equation*}%
which is linear in the first derivative; that is, for PDEs, Lie symmetries
are equivalent to contact symmetries that are linear in the first derivative
$u_{k}$ (\emph{that property does not hold for ODEs}).~For example, the Lie
point symmetry $X=\frac{\partial }{\partial x}$ of a PDE is equivalent to
the Lie B\"{a}cklund symmetry $\bar{X}=-u_{,x}\frac{\partial }{\partial u}$.

\section{Noether point symmetries}

\label{NoetherS}

In the following sections we consider the Lie symmetries and the
conservation laws of systems admitting a Lagrangian function, i.e. systems
whose equations of motion follow from a variational principle (e.g. Hamilton
principle).

\subsection{Noether symmetries of ODEs}

In Analytical Mechanics the Lagrangian $L=L\left( t,x^{k},\dot{x}^{k}\right)
$ is a function describing the dynamics of a system. The equations of motion
of the dynamical system follow from the action of the Euler Lagrange vector $%
E_{i}$ on the Lagragian $L$, i.e.%
\begin{equation}
E_{i}\left( L\right) =0.  \label{Ns.01}
\end{equation}%
where the \textbf{Euler Lagrange vector}
\begin{equation}
E_{i}=\frac{d}{dt}\frac{\partial }{\partial \dot{x}^{i}}-\frac{\partial }{%
\partial x^{i}}
\end{equation}

If the Lagrangian is invariant under the action of the transformation (\ref%
{Ls.02})-(\ref{Ls.03}), then, it is easy to observe that the Euler Lagrange
equations (\ref{Ns.01}) are invariant under the transformation (\ref{Ls.02}%
)-(\ref{Ls.03}). E. Noether proved that if the action of a one parameter
point transformation leaves invariant the Euler Lagrange equations (\ref%
{Ns.01}) then there exists a conserved quantity corresponding to the point
transformation.

\begin{theorem}
\label{NotTheor1}Let
\begin{equation}
X=\xi \left( t,x^{k}\right) \partial _{t}+\eta ^{i}\left( t,x^{k}\right)
\partial _{i}  \label{Ns.02}
\end{equation}%
be the infinitesimal generator of transformation (\ref{Ls.02})-(\ref{Ls.03})
and
\begin{equation}
L=L\left( t,x^{k},\dot{x}^{k}\right)  \label{Ns.02a}
\end{equation}%
be a Lagrangian describing the dynamical system (\ref{Ns.01}). The action of
the transformation (\ref{Ls.02})-(\ref{Ls.03}) on (\ref{Ns.02a})\footnote{%
For the application of Noether theorem in higher order Lagrangians see \cite%
{Miron95}} leaves the Euler Lagrange equations (\ref{Ns.01}) invariant, if
and only if there exist a function $f=f\left( t,x^{k}\right) $ such that the
following condition holds
\begin{equation}
X^{\left[ 1\right] }L+L\frac{d\xi }{dt}=\frac{df}{dt}  \label{Ns.03}
\end{equation}%
where $X^{\left[ 1\right] }$ is the first prolongation of (\ref{Ns.02}).
\end{theorem}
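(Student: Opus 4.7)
The plan is to analyze how the action functional $S=\int L\,dt$ transforms under the one-parameter point transformation (\ref{Ls.02})--(\ref{Ls.03}), keeping only first-order terms in $\varepsilon$, and then to exploit the fact that Euler--Lagrange equations are insensitive to total-derivative modifications of the Lagrangian. This reduces the theorem to a clean algebraic identity.

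First I would compute the transformed measure. Since $\bar t=t+\varepsilon\xi(t,x^k)$, the Jacobian gives $d\bar t=(1+\varepsilon\,D_t\xi)\,dt+O(\varepsilon^{2})$, where $D_t$ is the total derivative along solutions. Next, interpreting $\bar L$ as a function on the first jet space, the action of the first prolongation yields $\bar L(\bar t,\bar x^k,\dot{\bar x}^k)=L(t,x^k,\dot x^k)+\varepsilon X^{[1]}L+O(\varepsilon^2)$. Multiplying and collecting terms,
\begin{equation*}
\bar L\,d\bar t-L\,dt=\varepsilon\bigl(X^{[1]}L+L\,D_t\xi\bigr)\,dt+O(\varepsilon^{2}).
\end{equation*}
Therefore the transformation leaves the action invariant modulo boundary terms precisely when $X^{[1]}L+L\,\dfrac{d\xi}{dt}=\dfrac{df}{dt}$ for some function $f(t,x^k)$.

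To turn this into a statement about the Euler--Lagrange equations themselves, I would invoke the standard fact that $E_{i}\!\left(\dfrac{dg}{dt}\right)=0$ for any $g(t,x^k)$, so that two Lagrangians differing by a total time derivative produce the same equations of motion. Hence the Euler--Lagrange equations derived from $\bar L$ agree with those derived from $L$ iff the first-order difference $X^{[1]}L+L\,\dot\xi$ is itself a total derivative, which is exactly condition (\ref{Ns.03}). The converse direction—that condition (\ref{Ns.03}) implies invariance of the equations of motion—follows by reversing the argument and noting that adding $\varepsilon\,df/dt$ to the Lagrangian does not change its extremals.

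The main obstacle I expect is pinning down the precise sense of ``leaves the Euler--Lagrange equations invariant'': one has to be careful that the infinitesimal change $X^{[1]}L+L\,\dot\xi$ lives on the first jet bundle (as guaranteed by the first-prolongation formula), and that the implication ``total derivative of $f(t,x^k)\Rightarrow$ trivial Euler--Lagrange contribution'' is what underlies passing from invariance of the action (up to a gauge term) to invariance of the dynamics. Once this correspondence is established, the theorem is immediate; indeed, the identity
\begin{equation*}
X^{[1]}L+L\,\dot\xi=-(\eta^{i}-\xi\dot x^{i})E_{i}(L)+\frac{d}{dt}\!\left[\xi L+(\eta^{i}-\xi\dot x^{i})\,\frac{\partial L}{\partial\dot x^{i}}\right]
\end{equation*}
provides the bridge: on solutions $E_{i}(L)=0$, so (\ref{Ns.03}) is equivalent to the existence of a conserved Noether charge, which is the shadow of the invariance of the Euler--Lagrange equations.
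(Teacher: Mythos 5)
Your argument is correct, and in fact the paper offers no proof of this theorem at all --- it is quoted as classical Noether theory --- so your write-up supplies what the thesis leaves implicit. The computation of the first-order variation $\bar L\,d\bar t - L\,dt=\varepsilon\bigl(X^{[1]}L+L\,D_t\xi\bigr)dt$ and the reduction to a total-derivative condition is the standard route, and the Rund--Trautman identity you quote,
\begin{equation*}
X^{[1]}L+L\,\dot\xi=-(\eta^{i}-\xi\dot x^{i})E_{i}(L)+\frac{d}{dt}\Bigl[\xi L+(\eta^{i}-\xi\dot x^{i})\,\frac{\partial L}{\partial\dot x^{i}}\Bigr],
\end{equation*}
does check out with the paper's sign convention $E_{i}=\frac{d}{dt}\frac{\partial}{\partial\dot x^{i}}-\frac{\partial}{\partial x^{i}}$; it also hands you Theorem \ref{NotTheor2} for free. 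Two small points worth flagging. First, your ``only if'' direction rests on the fact that, for first-order Lagrangians in one independent variable, two Lagrangians yield identical Euler--Lagrange expressions precisely when their difference is $\frac{d}{dt}f(t,x^k)$ (a null Lagrangian); you use this silently and it deserves a sentence. Second, ``leaves the Euler--Lagrange equations invariant'' in the literal sense (solutions map to solutions) is strictly weaker than ``the action changes by a boundary term'' --- the paper itself notes later that Noether symmetries form a proper subalgebra of the Lie symmetries --- so the equivalence you prove is really with the divergence-invariance of the action, which is the intended reading of the theorem; stating that interpretation explicitly would make the ``if and only if'' honest.
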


If the generator (\ref{Ns.02}) satisfies (\ref{Ns.03}), the generator (\ref%
{Ns.02}) is a Noether symmetry of the dynamical system described by the
Lagrangian (\ref{Ns.02a}). Noether symmetries form a Lie algebra called the
\textbf{Noether algebra}. If the dynamical system (\ref{Ns.01}) admits Lie
symmetries which span a\ Lie algebra $G_{m}$ of dimension $m\succeq 1$ then
the Noether symmetries of (\ref{Ns.01}) form a Lie algebra $G_{h}~,~h\geq 0,$
which is a subalgebra of $G_{m},$ $G_{h}\subseteq G_{m}$.

\begin{theorem}
\label{NotTheor2}For any Noether point symmetry (\ref{Ns.02}) of the
Lagrangian (\ref{Ns.02a}) there corresponds a function $\phi \left( t,x^{k},%
\dot{x}^{k}\right) $
\begin{equation}
\phi =\xi \left( \dot{x}^{i}\frac{\partial L}{\partial \dot{x}^{i}}-L\right)
-\eta ^{i}\frac{\partial L}{\partial x^{i}}+f  \label{Ns.04}
\end{equation}%
which is a first integral i.e. $\frac{d\phi }{dt}=0$ of the equations of
motion. The function (\ref{Ns.04}) is called a \textbf{Noether integral}
(first integral) of (\ref{Ns.01}).
\end{theorem}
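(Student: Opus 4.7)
The plan is to prove $\dot{\phi}=0$ (modulo the Euler--Lagrange equations) by direct computation, using the Noether symmetry condition (\ref{Ns.03}) to rewrite $\dot{f}$ and using (\ref{Ns.01}) to eliminate second derivatives. Introduce the canonical momenta $p_i := \partial L/\partial \dot{x}^i$ so that the candidate integral is
\begin{equation*}
\phi = \xi(\dot{x}^i p_i - L) - \eta^i p_i + f,
\end{equation*}
which is the Legendre-type form conventionally used. (Here I read the $\partial L/\partial x^i$ in the stated (\ref{Ns.04}) as $\partial L/\partial \dot{x}^i$, since otherwise the dimensions do not match the standard Noether charge.)

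First, I would differentiate $\phi$ along a solution curve. The key simplification is that on solutions
\begin{equation*}
\frac{d}{dt}\bigl(\dot{x}^i p_i - L\bigr) = \ddot{x}^i p_i + \dot{x}^i \dot{p}_i - \dot{L} = -\frac{\partial L}{\partial t},
\end{equation*}
where I used $\dot{p}_i = \partial L/\partial x^i$ from (\ref{Ns.01}) together with the chain-rule expansion of $\dot{L}$. This telescopes the $\xi$-piece of $\dot{\phi}$ down to $-\xi L_{,t}+\dot{\xi}(\dot{x}^i p_i - L)$. A similar observation for the $\eta^i p_i$ term gives $\dot{\eta}^i p_i + \eta^i L_{,i}$.

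Second, I would expand the first prolongation in the Noether condition (\ref{Ns.03}) explicitly, using $\eta^{[1]i} = \dot{\eta}^i - \dot{x}^i \dot{\xi}$ (from (\ref{PP.011})):
\begin{equation*}
\xi L_{,t} + \eta^i L_{,i} + (\dot{\eta}^i - \dot{x}^i \dot{\xi})p_i + L\dot{\xi} = \dot{f}.
\end{equation*}
Substituting this expression for $\dot{f}$ into the computed $\dot{\phi}$, the terms $\xi L_{,t}$, $\eta^i L_{,i}$ and $\dot{\eta}^i p_i$ cancel in pairs, and what remains is $\dot{\xi}(\dot{x}^i p_i - L) - \dot{x}^i \dot{\xi} p_i + L\dot{\xi}$, which is identically zero. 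Hence $\dot{\phi}=0$ on solutions of (\ref{Ns.01}), which is precisely the first-integral statement.

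The main obstacle is organizational rather than conceptual: one must carefully separate the purely kinematic identities (chain rule for $\dot L$, product rule for $\frac{d}{dt}(\eta^i p_i)$, prolongation formula for $\eta^{[1]i}$) from the one dynamical input (the Euler--Lagrange equations used once, to rewrite $\dot p_i$). If this bookkeeping is done cleanly, the cancellation in the final step is automatic; the only subtlety is consistently distinguishing total $t$-derivatives (which act on $x^i(t)$ along a solution) from the explicit partial $\partial/\partial t$ appearing in $L$ and in $\xi,\eta^i,f$.
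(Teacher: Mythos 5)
Your proof is correct, and your reading of the misprint in (\ref{Ns.04}) is the right one: the second term must be $\eta^{i}\frac{\partial L}{\partial \dot{x}^{i}}$, as the paper itself confirms when it restates the integral as (\ref{NS.16}) in the geodesic chapter. The paper gives no proof of Theorem \ref{NotTheor2} (it is quoted as the classical Noether theorem), and your argument is the standard direct verification: the identity $\frac{d}{dt}(\dot{x}^{i}p_{i}-L)=-L_{,t}$ and $\dot{p}_{i}=L_{,i}$ on solutions, followed by substitution of the prolonged Noether condition $\xi L_{,t}+\eta^{i}L_{,i}+(\dot{\eta}^{i}-\dot{x}^{i}\dot{\xi})p_{i}+L\dot{\xi}=\dot{f}$, after which the residue $\dot{\xi}(\dot{x}^{i}p_{i}-L)-\dot{x}^{i}\dot{\xi}p_{i}+L\dot{\xi}$ vanishes identically. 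The bookkeeping you describe is exactly the one required, and the single dynamical input (the Euler--Lagrange equations, used once) is correctly isolated.
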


Since a\ Noether symmetry leaves the DEs (\ref{Ns.01}) invariant, it is also
a Lie symmetry of (\ref{Ns.01}) and from (\ref{Ls.05}) we can say that~(\ref%
{Ns.04}) satisfies the condition $X\left( \phi \right) =0$, that is, Noether
integrals are invariant functions of the Noether symmetry vector $X.~$

The existence and the number of Noether symmetries characterize a dynamical
system. If a dynamical system~(\ref{Ns.01}) on $n$ degrees of freedom admits
(at least) $n$ linear independent first integrals\footnote{%
Not necessarilly Noether point integrals.} $\phi _{J},~J=1...n,$ which are
in involution, i.e.
\begin{equation}
\left\{ \phi _{J},\phi _{K}\right\} =0  \label{Ns.LI}
\end{equation}%
where $\left\{ ~,~\right\} $ is the Poisson bracket, then the solution of
the dynamical system can be obtained by quadratures.

We calculate the Noether symmetries of the free particle and of the harmonic
oscillator.

\begin{example}
Find the Noether symmetries of the Lagrangian $L=\frac{1}{2}\dot{x}^{2}.$

Solution: The Lagrangian corresponds to the equation of motion of the free
particle~$\ddot{x}=0$. In example \ref{ExLs}, the Lie symmetries of the free
particle were calculated. The generic Lie symmetry vector is applied to
condition (\ref{Ns.03}) and it is found that the generic Noether symmetry of
the free particle is%
\begin{equation}
X=\left( a_{1}+2a_{4}t+a_{5}t^{2}\right) \partial _{t}+\left(
a_{2}+a_{3}t+a_{4}x+a_{5}tx\right) \partial _{x}.  \label{Ns.05}
\end{equation}%
From the generic vector (\ref{Ns.05}) and (\ref{Ns.04}), it is found that
the corresponding Noether integral is%
\begin{equation*}
\phi =a_{1}\dot{x}^{2}+a_{2}\dot{x}+a_{3}\left( t\dot{x}-x\right)
+a_{4}\left( t\dot{x}^{2}-x\dot{x}\right) +a_{5}\left( t^{2}\dot{x}^{2}-2tx%
\dot{x}+x^{2}\right)
\end{equation*}
\end{example}

\begin{example}
Find the Noether symmetries of the one dimensional harmonic oscillator with
Lagrangian
\begin{equation}
L=\frac{1}{2}\dot{x}^{2}-\frac{1}{2}x^{2}.  \label{oNs.01}
\end{equation}

Solution: To derive the Noether symmetries of (\ref{oNs.01}) we apply
theorem \ref{NotTheor1}. Let $X=\xi \left( t,x\right) \partial _{t}+\eta
\left( t,x\right) \partial _{x}$ be the infinitesimal generator. The first
prolongation $X^{\left[ 1\right] }~$is
\begin{equation*}
X^{\left[ 1\right] }=\xi \partial _{t}+\eta \partial _{x}+\eta _{\left[ 1%
\right] }\partial _{\dot{x}}.
\end{equation*}%
where $\eta _{\left[ 1\right] }=\left[ \eta _{,t}+\dot{x}\left( \eta
_{,x}-\xi _{,t}\right) -\dot{x}^{2}\xi _{,x}\right] .~$For the terms of~(\ref%
{Ns.03}) we have%
\begin{equation*}
\frac{df}{dt}=f_{,t}+\dot{x}f_{,x}.
\end{equation*}%
\begin{equation*}
X^{\left[ 1\right] }L=\dot{x}\eta _{,t}+\dot{x}^{2}\left( \eta _{,x}-\xi
_{,t}\right) -\dot{x}^{3}\xi _{,x}-x\eta
\end{equation*}%
\begin{equation*}
L\frac{d\xi }{dt}=\frac{1}{2}\dot{x}^{2}\xi _{,t}-\frac{1}{2}x^{2}\xi _{,t}+%
\frac{1}{2}\dot{x}^{3}\xi _{,x}-\frac{1}{2}x^{2}\dot{x}\xi _{,x}.
\end{equation*}%
Replacing in (\ref{Ns.03}) we find:%
\begin{eqnarray*}
0 &=&-\left[ x\eta +\frac{1}{2}x^{2}\xi _{,t}+f_{,t}\right] +\dot{x}\left[
\eta _{,t}-\frac{1}{2}x^{2}\xi _{,x}-f_{,x}\right] \\
&&+\dot{x}^{2}\left[ \eta _{,x}-\frac{1}{2}\xi _{,t}\right] +\dot{x}^{3}%
\left[ \xi _{,x}\right]
\end{eqnarray*}%
The determining equations are%
\begin{equation}
\xi _{,x}=0~~,~~\eta _{,x}-\frac{1}{2}\xi _{,t}=0  \label{oNs.02}
\end{equation}%
\begin{equation}
\eta _{,t}-\frac{1}{2}x^{2}\xi _{,x}-f_{,x}=0~~,~~x\eta +\frac{1}{2}x^{2}\xi
_{,t}+f_{,t}=0.  \label{oNs.03}
\end{equation}%
The solution of the system of equations (\ref{oNs.02})-(\ref{oNs.03}) gives
the generic Noether symmetry \cite{WW76}
\begin{eqnarray}
X &=&\left( a_{1}+a_{4}\cos 2t+a_{5}\sin 2t\right) \partial _{t}+  \notag \\
&&+\left( a_{2}\sin t+a_{3}\cos t-a_{4}x\sin 2t+a_{5}x\cos 2t\right)
\partial _{x}  \label{oNs.04}
\end{eqnarray}%
with the corresponding gauge function%
\begin{equation*}
f\left( t,x\right) =a_{2}x\cos t-a_{3}\sin t-a_{3}x^{2}\cos
2t-a_{5}x^{2}\sin 2t.
\end{equation*}
\end{example}

It should be underlined that the free particle and the harmonic oscillator
have the same number of Noether symmetries. It can be easily noticed that
the Lie algebras (\ref{Ns.05}) and (\ref{oNs.04}) are the same Lie algebra
in different representations.

\subsection{Noether point symmetries of PDEs}

In case of PDEs\
\begin{equation}
H\left( x^{i},u,u_{i},u_{ij}\right) =0  \label{Ns.06b}
\end{equation}%
which arise from a variational principle the following theorem holds.

\begin{theorem}
The action of the transformation (\ref{pde1})-(\ref{pde2}) on the Lagrangian
\begin{equation}
L_{P}=L_{P}\left( x^{k},u,u_{k}\right)  \label{Ns.06a}
\end{equation}%
leaves (\ref{Ns.06b}) invariant if there exists a vector field $%
F^{i}=F^{i}\left( x^{i},u\right) $ such that
\begin{equation*}
X^{\left[ 1\right] }L_{P}+L_{P}D_{i}\xi ^{i}=D_{i}F^{i}.
\end{equation*}%
The generator of the point transformation (\ref{pde1})-(\ref{pde2}) is
called \textbf{Noether symmetry}. The corresponding\textbf{\ Noether flow} is%
\begin{equation}
\Phi ^{i}=\xi ^{k}\left( u_{k}\frac{\partial L}{\partial u_{i}}-L\right)
-\eta \frac{\partial L}{\partial u_{i}}+F^{i}  \label{Ns.06}
\end{equation}%
and satisfies the condition $D_{i}\Phi ^{i}.$
\end{theorem}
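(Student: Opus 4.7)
The plan is to adapt the ODE proof of Theorems \ref{NotTheor1} and \ref{NotTheor2} to the PDE setting, exploiting the fact that the independent variable $t$ is now replaced by the coordinates $x^i$, so that total derivatives $d/dt$ are replaced by the divergence operator $D_i$ acting on a vector-valued quantity rather than on a scalar. First I would write the action as $S=\int L_P\,d^{n}x$ and study how both the integrand and the volume element transform under the infinitesimal point transformation (\ref{pde1})--(\ref{pde2}) generated by $X=\xi^i\partial_i+\eta\partial_u$. The key linear-algebra input is that the Jacobian $\det(\partial\bar{x}/\partial x)$ expands to $1+\varepsilon D_i\xi^i+O(\varepsilon^2)$, so at first order in $\varepsilon$ the transformed Lagrangian density is
\begin{equation*}
\bar{L}_P\,d^n\bar{x}=\bigl(L_P+\varepsilon\,X^{[1]}L_P+\varepsilon\,L_P D_i\xi^i\bigr)d^nx+O(\varepsilon^2),
\end{equation*}
where $X^{[1]}$ is the first prolongation (\ref{MP.03}) that accounts for how the derivatives $u_i$ transform. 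Requiring that the action be invariant up to a surface term produces exactly the Noether condition $X^{[1]}L_P+L_P D_i\xi^i=D_iF^i$ stated in the theorem; conversely, if this condition holds, the extremals of $S$ are mapped to extremals of $S$, so the Euler--Lagrange equation (\ref{Ns.06b}) is left invariant.

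Next, to establish the conservation law $D_i\Phi^i=0$, I would compute $D_i\Phi^i$ directly from (\ref{Ns.06}) and compare with the Noether condition. The prolongation formula (\ref{MP.03}) gives
\begin{equation*}
X^{[1]}L_P=\xi^k\frac{\partial L_P}{\partial x^k}+\eta\frac{\partial L_P}{\partial u}+(D_i\eta-u_{,j}D_i\xi^j)\frac{\partial L_P}{\partial u_{i}},
\end{equation*}
and a short rearrangement of the definition of $\Phi^i$ yields
\begin{equation*}
D_i\Phi^i=D_iF^i-X^{[1]}L_P-L_P D_i\xi^i+\bigl(\eta-\xi^k u_{,k}\bigr)\!\left[\frac{\partial L_P}{\partial u}-D_i\frac{\partial L_P}{\partial u_{i}}\right],
\end{equation*}
after cancellation of the terms coming from differentiating $L_P$ along the orbit and using $D_i(\xi^k L_P)=L_P D_i\xi^k+\xi^k D_i L_P$. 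The first three terms vanish by the Noether symmetry condition, and the bracket is precisely the Euler--Lagrange expression $E(L_P)$, which vanishes on solutions of (\ref{Ns.06b}). This gives $D_i\Phi^i=0$ on-shell, as required.

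The main obstacle, as in the ODE case, is bookkeeping rather than ideas: the prolonged vector $X^{[1]}$ contains three different types of terms (from translating $x^i$, $u$, and $u_{,i}$ respectively), and to cancel them against $D_i\Phi^i$ one must repeatedly interchange the operators $D_i$ and $\partial/\partial u_{,i}$, using the identity $D_i\partial_{u_{,j}}=\partial_{u_{,j}}D_i+\delta^j_i\partial_u+\ldots$ carefully. A secondary subtlety is that $F^i$ is allowed to depend only on $(x^k,u)$ while $\Phi^i$ depends also on first derivatives through $L_P$ and $\partial L_P/\partial u_{,i}$; one has to check that the ansatz for $F^i$ is compatible with the structure of the Noether condition, which follows once both sides are expanded as polynomials in the first derivatives $u_{,i}$. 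Once these manipulations are organized, both claims of the theorem follow in parallel, exactly mirroring Theorems \ref{NotTheor1} and \ref{NotTheor2} with $d/dt$ upgraded to the divergence $D_i$.
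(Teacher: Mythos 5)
Your argument is correct, and it is the standard proof of Noether's first theorem for first-order field Lagrangians; note that the paper states this theorem without proof (it is quoted as background in Chapter 1), so there is no in-text argument to compare against. The two ingredients you isolate are exactly the right ones: the first-order expansion of the Jacobian, $\det(\partial\bar{x}/\partial x)=1+\varepsilon D_i\xi^i+O(\varepsilon^2)$, which produces the divergence term $L_P D_i\xi^i$ in the symmetry condition, and the Noether--Boyer identity $X^{[1]}L_P+L_P D_i\xi^i=(\eta-\xi^k u_k)\,E(L_P)+D_i\bigl[\xi^i L_P+(\eta-\xi^k u_k)\,\partial L_P/\partial u_i\bigr]$, of which your rearranged expression for $D_i\Phi^i$ is simply the transposition; combined with the symmetry condition it gives $D_i\Phi^i=0$ on-shell. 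Two small remarks. First, your computation implicitly corrects an index slip in (\ref{Ns.06}): as printed, the term $\xi^k(u_k\,\partial L/\partial u_i-L)$ leaves a dangling index $k$ on $\xi^k L$; it should read $\xi^k u_k\,\partial L/\partial u_i-\xi^i L$ so that the free index is $i$ throughout, and this is the form your divergence identity requires. Second, the commutator you quote as a bookkeeping aid has the wrong sign: acting on functions of $(x^k,u,u_k)$ one has $D_i\partial_{u_{,j}}=\partial_{u_{,j}}D_i-\delta_i^j\partial_u$; since you only invoke it heuristically this does not affect the argument, but it would matter if the cancellations were written out in full.
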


The implementation of Noether flows in PDEs differs from that in the ODEs.
Conservation flow (\ref{Ns.06}) is used to reduce the order of (\ref{Ns.06b}%
) by defining a new dependent variable $v\left( x^{i}\right) $. \ It has
been proved that the solution of the system
\begin{equation}
v_{,i}\left( x^{k}\right) =\Phi _{i}\left( x^{k},u,u_{k}\right)
\label{Ns.07}
\end{equation}%
is also a solution of (\ref{Ns.06b}). Furthermore, it is feasible new
symmetries to arise from the system (\ref{Ns.07}) which are not symmetries
of (\ref{Ns.06b}). These new symmetries have been called \textbf{potential
symmetries} \cite{Bluman88}.

\section{Collineations of Riemannian manifolds}

\label{geomObj}

Previously, we studied the case when a function is invariant under a point
transformation. In the following subsections we consider the cases in which
a function is invariant under a point transformation. In order to do this we
shall need the concept of the Lie derivative. In particular, the geometric
objects which will be considered the metric tensor $g_{ij}$ and the
connection coefficients $\Gamma _{jk}^{i}$ in a Riemannian space.

\begin{definition}
Consider an n dimensional space $A_{n}$ of class $C^{p}$. An object is
called a\textbf{\ geometric object }of class $r~\left( r\leq p\right) $ if
it has the following properties $.$

i) In each coordinate system $\left\{ x^{i}\right\} ,$ it has a well
determined set of components $\Omega ^{a}\left( x^{k}\right) $.

ii) Under a coordinate transformation $x^{i^{\prime }}=J^{i}\left(
x^{k}\right) $ the new components $\Omega ^{a^{\prime }}$ of the object in
the new coordinates $\left\{ x^{i^{\prime }}\right\} $ are represented as
well determined functions of class $r^{\prime }=p-r$ of the old components $%
\Omega ^{a}$ in the old coordinates $\left\{ x^{i}\right\} $, of the
functions $J^{i}$ and of their sth derivatives $\left( 1\leq s\leq p\right) $%
, that is, the new components $\Omega ^{i^{\prime }}$ of the object can be
represented by equations of the form%
\begin{equation}
\Omega ^{a^{\prime }}=\Phi ^{a}\left( \Omega ^{k},x^{k},x^{k^{\prime
}}\right) .  \label{go.01}
\end{equation}

iii) The functions $\Phi ^{a}$ have the group properties, that is, they
satisfy \ the following relations%
\begin{equation}
\Phi ^{a}\left( \Phi \left( \Omega ,x^{k},x^{k^{\prime }}\right)
,x^{k},x^{k^{\prime }}\right) =\Phi ^{a}\left( \Omega ,x^{k},x^{k^{\prime
}}\right)
\end{equation}%
\begin{equation}
\Phi ^{a}\left( F\left( \Omega ,x^{k},x^{k^{\prime }}\right) ,x^{k^{\prime
}},x^{k}\right) =\Omega ^{a}
\end{equation}
\end{definition}

The coordinate transformation law $\Phi \left( \Omega ,x^{k},x^{k^{\prime
}}\right) $ characterizes the geometric object. In case that the function $%
\Phi $ contains only $\Omega $ and the partial derivatives of $J^{k}$ with
respect to $x^{k},\,\ $the geometric object is said to be a \textbf{%
differential geometric object}.

Furthermore, we say that a geometric object is linear if for the
transformation law $\Phi \left( \Omega ,x^{k},x^{k^{\prime }}\right) $ it
holds%
\begin{equation}
\Phi ^{a}\left( \Omega ,x^{k},x^{k^{\prime }}\right) =J_{b}^{a}\left(
x^{k},x^{k^{\prime }}\right) \Omega ^{b}+C\left( x^{k},x^{k^{\prime
}}\right) .
\end{equation}%
When the transformation law is
\begin{equation}
\Phi ^{a}\left( \Omega ,x^{k},x^{k^{\prime }}\right) =J_{b}^{a}\left(
x^{k},x^{k^{\prime }}\right) \Omega ^{b}
\end{equation}
we say that the geometric object is a \textbf{linear homogeneous geometric
object}. One important calss of linear homogeneous geometric objects are the%
\textbf{\ tensors.}

\subsection{Lie Derivative}

Let $\Omega $ be a geometric object in $A_{n}$ with transformation law (\ref%
{go.01}) and consider the infinitesimal transformation
\begin{equation}
\bar{x}^{i}=x^{i}+\varepsilon \xi ^{i}\left( x^{k}\right) .  \label{go.02}
\end{equation}%
where $\xi =\xi ^{i}\left( x^{k}\right) $ are the components of the
infinitesimal generator. From the transformation law (\ref{go.01}), the
geometric object in the coordinate system $x^{i^{\prime }}=\bar{x}^{i}$ is $%
\Phi \left( \Omega ^{k},x^{k},x^{k^{\prime }}\right) $. The Lie derivative
of the geometric object $\Omega $ with respect to $\xi $ is defined as
follows
\begin{equation}
L_{X}\Omega =\lim_{\varepsilon \rightarrow 0}\frac{1}{\varepsilon }\left[
\Phi \left( \Omega ^{k},x^{k},x^{k^{\prime }}\right) -\Omega \right] .
\label{go.03}
\end{equation}

In order for the Lie derivative of a geometric object to be again a
geometric object (not necessarily of the same type)\ it is necessary and
sufficient that the geometric object be linear~\cite{Yano}. \

By definition, the Lie derivative of a geometric object depends on the
transformation law. The transformation law (\ref{go.01}) for functions is $%
f^{\prime }\left( x^{i^{\prime }}\right) =f\left( x^{i^{\prime }}\right) $,
hence, under the point transformation (\ref{go.02}) we have%
\begin{equation*}
f^{\prime }\left( x^{i^{\prime }}\right) =f\left( x^{i}+\varepsilon \xi
^{i}\right) =f\left( x^{i}\right) +\varepsilon f_{,i}\xi ^{i}+O\left(
\varepsilon ^{2}\right) .
\end{equation*}%
From (\ref{go.03}) the Lie derivative of functions $f$ along the vector
field $\xi $ is computed to be%
\begin{equation*}
L_{\xi }f=f_{,i}\xi ^{i}=\xi \left( f\right) .
\end{equation*}

The transformation law for a tensor field~$T$ of rank $\left( r,s\right) $ is%
\begin{equation*}
T_{~~j_{i}^{\prime }...j_{s}^{\prime }}^{i_{1}^{\prime }...i_{r}^{\prime
}}=J_{i_{1}}^{i_{1}^{\prime }}...J_{i_{r}}^{i_{r}^{\prime }}J_{j_{1}^{\prime
}}^{j_{1}}...J_{j_{s}}^{j_{s}^{\prime }}T_{~~j_{i}...j_{s}}^{i_{1}...i_{r}}
\end{equation*}%
where $J$ is the Jacobian of the transformation. Thus, from (\ref{go.03}),
the Lie derivative of $T$ with respect to $\xi $ is%
\begin{eqnarray}
L_{\xi }T_{~~j_{i}...j_{s}}^{i_{1}...i_{r}} &=&\xi
^{k}T_{~~j_{i}...j_{s,k}}^{i_{1}...i_{r}}-T_{~~j_{i}...j_{s}}^{m...i_{r}}\xi
_{,m}^{i_{1}}-T_{~~j_{i}...j_{s}}^{i_{1}m...i_{r}}\xi _{m}^{i_{2}}+...
\notag \\
&&...+T_{~~m...j_{s}}^{i_{1}...i_{r}}\xi
_{,j_{1}}^{m}+T_{~~j_{i}m...j_{s}}^{i_{1}...i_{r}}\xi _{j_{2}}^{m}+....
\label{go.03a}
\end{eqnarray}%
In case of vector fields $X$ and 1-forms $\omega $ expression (\ref{go.03a})
gives%
\begin{equation}
L_{\xi }X=\xi ^{k}X_{,k}^{i}-X^{k}\xi _{,k}^{i}=\left[ \xi ,X\right]
\end{equation}%
\begin{equation}
L_{\xi }\omega =\xi ^{k}\omega _{i,k}-\omega _{k}\xi _{,i}^{k}.
\end{equation}%
For a second order tensor $T_{ij}$, expression (\ref{go.03a}) becomes%
\begin{equation}
L_{\xi }T_{ij}=T_{ij,k}\xi ^{k}+T_{ik}\xi _{,j}^{k}+T_{kj}\xi _{,i}^{k}.
\label{go.03b}
\end{equation}

The connection coefficients $\Gamma _{jk}^{i}$ are linear differential
geometric objects with transformation law%
\begin{equation}
\Gamma _{j^{\prime }k^{\prime }}^{i^{\prime }}=J_{i}^{i^{\prime
}}J_{j^{\prime }}^{j}J_{k^{\prime }}^{k}\Gamma _{jk}^{i}+\frac{\partial
x^{i^{\prime }}}{\partial x^{r}}\frac{\partial ^{2}x^{r}}{\partial
x^{j^{\prime }}\partial x^{k^{\prime }}}.
\end{equation}%
Connection coefficients have different transformation law from tensor
fields, that is, the Lie derivative $L_{\xi }\Gamma _{jk}^{i}$ will be
different from that of tensors. Applying (\ref{go.03}),we find that the Lie
derivative $L_{\xi }\Gamma _{jk}^{i}$ is expressed as follows%
\begin{equation}
L_{\xi }\Gamma _{jk}^{i}=\xi _{,jk}^{i}+\Gamma _{jk,r}^{i}\xi ^{r}-\xi
_{,r}^{i}\Gamma _{jk}^{r}+\xi _{,j}^{s}\Gamma _{sk}^{i}+\xi _{,k}^{s}\Gamma
_{js}^{i}.  \label{go.04}
\end{equation}%
In case of symmetric connection $\Gamma _{jk}^{i}=\Gamma _{kj}^{i},$
condition (\ref{go.04}) can be written in the equivalent form%
\begin{equation}
L_{\xi }\Gamma _{jk}^{i}=\xi _{;jk}^{i}-R_{jkl}^{i}\xi ^{l}
\end{equation}%
where $R_{jkl}^{i}$ is the curvature tensor and the semicolon~\thinspace $%
";" $ means covariant derivative.

\subsubsection{Collineations}

In section \ref{Fun} we gave the conditions under which a function is
invariant under a one parameter point transformation. Similarly for linear
(homogeneous) differential geometric objects there is the following
definition.

\begin{definition}
A linear differential geometric object $~\Omega \left( x^{i}\right) $ is
invariant under a one parameter point transformation
\begin{equation}
\bar{x}^{i}=\bar{x}^{i}\left( x^{k},\varepsilon \right)  \label{go.05}
\end{equation}%
if and only if $~\bar{\Omega}\left( \bar{x}^{i}\right) =\Omega \left(
x^{i}\right) $ at all points where the one parameter point transformation
acts. Equivalently, the Lie derivative of $~$the geometric object $\Omega $
with respect to the infinitesimal generator of (\ref{go.05}) vanishes, that
is, $L_{\xi }\Omega =0.$
\end{definition}

A direct result which arises from the definition of the Lie derivative and
the transformation law of linear differential geometric objects is that if a
linear differential geometric object $\Omega $ is invariant under the
transformation (\ref{go.05}), then there exist a coordinate system with
respect to which the components of $\Omega $ are independent of one of the
coordinates.

One can generalize the concept of symmetry in the sense that one does not
require the Lie derivative to be equal to zero, but to another tensor. That
is, the Lie derivative of the linear differential geometric object $\Omega $
with respect to the infinitesimal generator $\xi $ is%
\begin{equation*}
L_{\xi }\Omega =\Psi
\end{equation*}%
where $\Psi $ has the same numbers of components and symmetries of the
indices with $\Omega .$ In this case the infinitesimal generator $\xi $ is
said to be a \textbf{collineation }of $\Omega ,$ the type of collineations
being defined by $\Psi $. Collineations are a powerful tool in the study of
the geometric properties of Riemannian manifolds.

In Riemannian geometry, the geometric object $\Omega $ can be the metric
tensor $g_{ij}$ or any other geometric object defined from it (e.g. the
connection coefficients).

\begin{definition}
All collineations involving geometric objects $\Omega $ derived from the
metric $g_{ij}$ of a Riemannian space shall be called \textbf{geometric
collineations}. In particular, the collineation defined by the metric $%
L_{\xi }g_{ij}$ is called the \textbf{generic collineation} because any
other geometric collineation can be written in terms of it. Furthermore, the
geometric collineations can be written in terms of the irreducible parts $%
\psi ,H_{ij}$ as follows%
\begin{equation}
L_{\xi }g_{ij}=2\psi g_{ij}+2H_{ij}  \label{cc.001}
\end{equation}%
where the function $\psi $ is called the \textbf{conformal factor} and $%
H_{ij}$ is a \textbf{symmetric traceless tensor}.
\end{definition}

The role of the quantities $\psi ,~H_{ij}$ is important because they can be
used as the variables in terms of which one can study any geometric
collineation. To do that, one has to express the Lie derivative of any
metric tensor in terms of the generic symmetry variables $\psi ,~H_{ij}$ and
their derivatives.

In the following we shall be interested in geometric collineations,
particularly in the collineations of the metric and of the connection
coefficients of a Riemannian space.

\subsection{Motions of Riemannian spaces}

Consider an n dimensional Riemannian space $V^{n}$ with line element%
\begin{equation}
ds^{2}=g_{ij}dx^{i}dx^{j}.  \label{mr.01}
\end{equation}%
where $g_{ij}$ is the metric tensor.

\begin{definition}
The point transformation (\ref{go.05}) is called a motion of $V^{n},$~if and
only if the line element is invariant under the action of (\ref{go.05}).
Equivalently, the Lie derivative of the metric tensor $g_{ij}$ with respect
to the infinitesimal generator $\xi $ of (\ref{go.05}) vanishes, i.e.%
\begin{equation}
L_{\xi }g_{ij}=0.  \label{mr.02}
\end{equation}
\end{definition}

The point transformations (\ref{go.05}) which are motions of $V^{n}$ form a
group named the \textbf{group of motions}. Since $g_{ij}$ is a metric,
condition (\ref{mr.02}) can be written in the equivalent form%
\begin{equation}
L_{\xi }g_{ij}=2\xi _{\left( i;j\right) }=0.  \label{mr.03}
\end{equation}%
This equation is known as \textbf{Killing equation} and $\xi $ is called an
\textbf{isometry }or \textbf{Killing Vector }(KV). The KVs of a metric form
a Lie algebra, which is called the \textbf{Killing algebra}.

Motions are important in physics. For instance, the Euclidian space\ admits
as motions the group of translations and the group of rotations~$T\left(
3\right) \otimes so\left( 3\right) $ and this implies the conservation of
linear and angular momentum respectively. As a second example in Cosmology
the assumption that the universe is isotropic and homogeneous about all
points leads to the Friedmann~Robertson~Walker (FRW) spacetime.

The maximum dimension of the Killing algebra that $V^{n}$ can admit is given
in the following theorem.

\begin{theorem}
\label{maxss}If an n dimensional Riemannian space $V^{n}$ admits a Killing
algebra $G_{KV},$ then, ~$0\leq \dim G_{KV}\leq \frac{1}{2}n\left(
n+1\right) .$
\end{theorem}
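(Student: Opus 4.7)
The plan is to show that any Killing vector on $V^n$ is completely determined by the pair $(\xi_i(P),\xi_{i;j}(P))$ at a single point $P$, and then to count the free parameters in such data.

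First I would observe that the Killing equation (\ref{mr.03}) is equivalent to the antisymmetry $\xi_{i;j}=-\xi_{j;i}$, so at any fixed point the tensor $\xi_{i;j}(P)$ already has only $\frac{1}{2}n(n-1)$ independent components. The central technical step is then to derive a closed algebraic formula for the second covariant derivative of a Killing vector in terms of $\xi$ itself. Writing the Ricci identity
\[
\xi_{i;jk}-\xi_{i;kj}=-R^{l}{}_{ijk}\,\xi_{l}
\]
for the three cyclic permutations of the triple $(i,j,k)$, and converting each term containing a mixed order of differentiation via the Killing skew-symmetry $\xi_{a;b}=-\xi_{b;a}$, I obtain three relations whose appropriate linear combination, simplified with the first Bianchi identity $R^{l}{}_{ijk}+R^{l}{}_{jki}+R^{l}{}_{kij}=0$, collapses to
\[
\xi_{i;jk}=R^{l}{}_{kij}\,\xi_{l}.
\]
Getting the index combinatorics and signs right at this step is what I expect to be the main obstacle.

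Once this identity is in hand, the Killing equation together with the formula for $\xi_{i;jk}$ can be regarded, along any smooth curve $\gamma$ issuing from $P$, as a closed first order linear ODE system for the unknowns $\xi_i$ and $\xi_{i;j}$, with the curvature providing the coefficients. Standard ODE existence and uniqueness then guarantees that a solution is determined in a normal neighbourhood of $P$ by its initial data at $P$; since $V^n$ is connected one extends by chaining normal neighbourhoods along a piecewise smooth path. Hence the evaluation map
\[
\xi\;\longmapsto\;\bigl(\xi_i(P),\,\xi_{i;j}(P)\bigr)
\]
is a linear injection of the Killing algebra $G_{KV}$ into the space of admissible initial data.

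To finish, I would count the dimension of this target space: the vector $\xi_i(P)$ contributes $n$ free real components, while the antisymmetric matrix $\xi_{i;j}(P)$ contributes $\frac{1}{2}n(n-1)$, for a total of $n+\frac{1}{2}n(n-1)=\frac{1}{2}n(n+1)$. Injectivity therefore yields $\dim G_{KV}\leq \frac{1}{2}n(n+1)$, and the trivial lower bound $\dim G_{KV}\geq 0$ is immediate since $\xi=0$ always solves the Killing equation, completing the proof.
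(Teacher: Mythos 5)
Your proof is correct and is the canonical argument: the identity $\xi_{i;jk}=R^{l}{}_{kij}\xi_{l}$ obtained from the cyclic Ricci identities plus the first Bianchi identity turns the pair $(\xi_i,\xi_{i;j})$ into the solution of a closed linear first-order system along curves, so the evaluation map at a point is injective into a space of dimension $n+\frac{1}{2}n(n-1)=\frac{1}{2}n(n+1)$. The paper states this classical theorem without proof, so there is nothing to compare against; your argument (which implicitly uses connectedness of $V^{n}$, as is standard) is exactly the one found in the literature it cites.
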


A Riemannian space which admits a Killing algebra of dimension $\frac{1}{2}%
n\left( n+1\right) $ is~called a\textbf{\ maximally symmetric space}. For
example, the Euclidian space $E^{3}$ and the Minkowski spacetime $M^{4}~$are
maximally symmetric spaces.

A special class of KVs are the gradient KVs. A KV $\xi $ is called \textbf{%
gradient} iff $\xi _{i;j}=0$,~that is,
\begin{equation*}
\xi _{\left( i;j\right) }=0~~\text{and~~}\xi _{\left[ i;j\right] }=0.
\end{equation*}%
For every gradient KV $\xi ,~$there exists a function $S$ so $%
S_{,k}g^{ik}=\xi ^{i}$ and $S_{;ij}=0.$

\begin{theorem}
\label{decSp}If $V^{n}$ admits $p~$gradient KVs $\left( \text{where~}p\leq
n\right) $ then, $V^{n}$ is called a $p~$\textbf{decomposable space }and in
this case there exists a coordinate system in which the line element (\ref%
{mr.01}) can be written in the form%
\begin{equation*}
ds^{2}=M_{\alpha \beta }dz^{\alpha }dz^{\beta }+h^{AB}\left( y^{A}\right)
dx_{A}dx_{B}
\end{equation*}%
where $\alpha ,\beta =1,2..,p$, $A,B=p+1,...n$ and $M_{a\beta }=diag\left(
c_{1},c_{2},...,c_{p}\right) $,~$c_{1},c_{2},...,c_{p}$ are constants.
\end{theorem}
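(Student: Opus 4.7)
The plan is to build an adapted chart in three steps: (i) straighten out the gradient Killing vectors simultaneously, (ii) exploit the gradient condition to show that the $g_{\alpha\beta}$-block is constant and that the $g_{\alpha A}$-block is integrable away, and (iii) diagonalize the resulting constant symmetric block by a linear change of basis.

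First I would record the two properties of a gradient KV $\xi_\alpha=S_\alpha{}^{,i}\partial_i$ which I will use repeatedly: the covariantly constant condition $S_{\alpha;ij}=0$ implies both $[\xi_\alpha,\xi_\beta]=0$ and $(g(\xi_\alpha,\xi_\beta))_{;k}=S_\alpha{}^{,i}{}_{;k}S_{\beta,i}+S_\alpha{}^{,i}S_{\beta,i;k}=0$, so the inner products $c_{\alpha\beta}:=g(\xi_\alpha,\xi_\beta)$ are genuine constants. Because the $\xi_\alpha$ are mutually commuting, by the Frobenius / simultaneous straightening theorem there exist local coordinates $(z^\alpha,y^A)$, $\alpha=1,\dots,p$, $A=p+1,\dots,n$, in which $\xi_\alpha=\partial/\partial z^\alpha$. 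In this chart the Killing equation $L_{\xi_\alpha}g_{ij}=0$ reduces to $\partial g_{ij}/\partial z^\alpha=0$, so all metric components depend only on the $y^A$.

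Next I would use that each $\xi_\alpha$ is gradient, i.e.\ the 1-form $\xi_\alpha{}^{\flat}=g_{i\alpha}dx^{i}=dS_\alpha$ is closed. Writing out $\partial_{j}g_{i\alpha}=\partial_{i}g_{j\alpha}$ with indices in the two blocks yields: (a) $\partial_{A}g_{\beta\alpha}=\partial_{\beta}g_{A\alpha}=0$, so $g_{\alpha\beta}$ is a constant matrix $M_{\alpha\beta}=c_{\alpha\beta}$, and (b) $\partial_{A}g_{B\alpha}=\partial_{B}g_{A\alpha}$, i.e.\ for each $\alpha$ the 1-form $g_{A\alpha}dy^{A}$ is closed in the $y$-variables. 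This last condition is exactly the integrability needed to perform the coordinate shift $z^{\alpha}\mapsto \tilde z^{\alpha}=z^{\alpha}+f^{\alpha}(y)$ with $\partial_{A}f^{\beta}=-(M^{-1})^{\beta\alpha}g_{\alpha A}$; under such a shift one checks that $\xi_\alpha=\partial/\partial\tilde z^{\alpha}$ is preserved, $M_{\alpha\beta}$ is unchanged, and the new cross-block vanishes, $\tilde g_{\alpha A}=0$. The $\tilde g_{AB}$ component remains a function of $y^{A}$ alone.

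Finally I would diagonalize the constant symmetric matrix $M_{\alpha\beta}$ by a linear (hence globally defined) transformation of the $\tilde z^{\alpha}$ coordinates; this amounts to replacing the KV basis $\{\xi_\alpha\}$ by a suitable linear combination, which again is made of gradient KVs, so the adapted form of the metric is preserved. The line element then takes the stated decomposed form with $M_{\alpha\beta}=\operatorname{diag}(c_{1},\dots,c_{p})$ and a $y$-dependent block $h_{AB}(y)$.

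The main obstacle I anticipate is step (ii): namely, verifying that the shift needed to annihilate the cross-block $g_{\alpha A}$ is globally integrable and that it is compatible with both the straightening of the $\xi_\alpha$ and the constancy of $M_{\alpha\beta}$. Everything hinges on the observation that closedness of $\xi_\alpha{}^{\flat}$ furnishes exactly the Poincar\'e-type integrability condition $\partial_{A}g_{B\alpha}=\partial_{B}g_{A\alpha}$, so the potentials $f^{\alpha}(y)$ exist at least locally; the constancy of $M_{\alpha\beta}$ protects the diagonal block under the shift. Once this is in hand, the diagonalization step is routine linear algebra.
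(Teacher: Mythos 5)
The paper states this theorem without proof (it is quoted as a classical result on decomposable spaces), so there is no internal argument to compare against; your proposal supplies the standard proof and it is essentially correct. The three ingredients are all in order: $\xi_{\alpha;ij}=0$ gives both $[\xi_\alpha,\xi_\beta]=0$ and the constancy of the Gram matrix $c_{\alpha\beta}=g(\xi_\alpha,\xi_\beta)$; simultaneous straightening plus the Killing equations make all components independent of the $z^\alpha$; and closedness of $\xi_\alpha{}^{\flat}$ gives exactly $\partial_A g_{B\alpha}=\partial_B g_{A\alpha}$, which is the Poincar\'e integrability condition for the potentials $f^{\alpha}(y)$ that remove the cross-block while leaving $\xi_\alpha=\partial_{\tilde z^\alpha}$ and $M_{\alpha\beta}$ intact. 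The final linear diagonalization is routine.

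Two points deserve an explicit sentence in a polished write-up. First, you tacitly assume the $p$ gradient KVs are pointwise linearly independent (needed for the straightening theorem); this is the intended reading of ``admits $p$ gradient KVs'' but should be said. Second, and more substantively, step (ii) inverts $M_{\alpha\beta}$. For a positive-definite metric the Gram matrix of independent vectors is automatically invertible, but the thesis applies this theorem chiefly to Lorentzian metrics, where a covariantly constant vector can be null and the span of the $\xi_\alpha$ can be degenerate --- the pp-wave spacetimes are the standard example of a space with a gradient (covariantly constant) null KV that is \emph{not} decomposable in the stated sense. So the theorem, and hence your proof, requires the additional hypothesis that $\det(c_{\alpha\beta})\neq 0$ (equivalently, that the distribution spanned by the gradient KVs is non-degenerate). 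This is a caveat the paper itself suppresses, but your argument breaks at precisely that point, so it is worth making the hypothesis explicit rather than leaving it implicit in the formula $\partial_A f^{\beta}=-(M^{-1})^{\beta\alpha}g_{\alpha A}$.
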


\begin{example}
\label{Kvs2dsphere}Compute the KVs of the Euclidian sphere with line element%
\begin{equation}
ds^{2}=d\phi ^{2}+\sin \phi ^{2}d\theta ^{2}.  \label{mr.04}
\end{equation}

Solution: In order to find the KVs we solve the Killing equation (\ref{mr.03}%
).\ This leads to the system of equations%
\begin{equation*}
\xi _{,\theta }^{\theta }=0
\end{equation*}%
\begin{equation*}
2\xi _{,\phi }^{\phi }+2\xi ^{\theta }\sin \theta \cos \theta =0
\end{equation*}%
\begin{equation*}
\xi _{,\phi }^{\theta }+\xi _{,\theta }^{\phi }-2\xi ^{\phi }\cot \theta =0.
\end{equation*}%
whose solution are the elements of the $so\left( 3\right) $ Lie algebra (see
example \ref{exso3}).
\end{example}

The 2D Euclidian sphere (\ref{mr.04}) admits a three dimensional Killing
algebra, hence, it is a maximally symmetric space. Moreover, all spaces of
constant curvature are maximally symmetric spaces.

\subsubsection{Conformal motions}

When the point transformation (\ref{go.05}) does not change the angle
between two directions at a point, the transformation (\ref{go.05}) is
called a \textbf{conformal motion}. Technically we have the following
definition.

\begin{definition}
The infinitesimal generator $\xi $ of the point transformation (\ref{go.05})
is called \textbf{Conformal Killing Vector} (CKV) if the Lie derivative of
the metric $g_{ij}$ with respect to $\xi $ is a multiple of $g_{ij}$. That
is if,
\begin{equation}
L_{\xi }g_{ij}=2\psi g_{ij}
\end{equation}%
where $\psi =\frac{1}{n}\xi _{;k}^{k}$. In the case where $\psi _{;ij}=0,$ $%
\xi $ is a special CKV (sp.CKV)\footnote{%
For the conformal factor of a sp.CKV holds $\psi _{;ij}=0,$ that is, $\psi
_{,i}$ is a gradient KV. A Riemannian space admits a sp.CKV if and only if
admits a gradient KV and a gradient HV \cite{HallspCKV}.} and if $\psi =$%
constant, $\xi $ is a Homothetic Killing Vector (HV).
\end{definition}

The CKVs\ of a metric form a Lie algebra, which is called the c\textbf{%
onformal algebra}, $G_{CV}$. Obviously the KVs and the homothetic vector are
elements of the conformal algebra $G_{CV}$. If $G_{HV}$ is the algebra of
HVs (including the algebra $G_{KV}$ of KVs), then the following theorem
applies.

\begin{theorem}
Let $V^{n}$ be an n dimensional Riemannian space,~$n\geq 2,$ which admits a
conformal algebra $G_{CK}$, a homothetic algebra~$G_{HV}$ and a Killing
algebra~$G_{KV},$ then

i) $G_{KV}\subseteq G_{HV}\subseteq G_{CV}$.

ii) for arbitrary $n$, $G_{H-K}=G_{HV}-G_{HV}\cap G_{KV},$ then $0\leq \dim
G_{H-K}\leq 1$; that is, $V^{n}$ admits at most one HV.

iii) $V^{2}$ admits an infinite dimentional conformal algebra $G_{CV}$,

iv) for $n>2$, $0\leq \dim G_{CV}\leq \frac{1}{2}\left( n+1\right) \left(
n+2\right) \,.$
\end{theorem}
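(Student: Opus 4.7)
The plan is to handle each of the four claims in turn, using only the definitions of KV, HV and CKV (conformal factor $\psi=0$, $\psi=\text{const}$, $\psi$ arbitrary, respectively) plus a short prolongation argument for the last part.

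For (i) I would proceed directly from definitions. A KV $\xi$ satisfies $L_\xi g_{ij}=0$, which is exactly the CKV equation with $\psi=0$ (a constant), so $\xi$ is an HV; similarly an HV has $\psi=\text{const}$, hence is a CKV. This gives the chain of inclusions.

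For (ii) I would exploit linearity of the conformal factor in $\xi$. Suppose $H_1,H_2\in G_{HV}$ are two proper HVs with conformal constants $c_1,c_2\neq 0$. Then $\eta:=c_2 H_1-c_1 H_2$ is a CKV with conformal factor $c_2 c_1-c_1 c_2=0$, so $\eta\in G_{KV}$. Thus any two proper HVs differ by a KV up to a scalar multiple, which says that the quotient $G_{HV}/(G_{HV}\cap G_{KV})\cong G_{H-K}$ has dimension at most $1$.

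For (iii) I would reduce to isothermal coordinates, which always exist locally on a $2$-manifold: write $g=e^{2\sigma(x,y)}(dx^2+dy^2)$. A straightforward computation of $L_\xi g_{ij}=2\psi g_{ij}$ in these coordinates gives
\begin{equation*}
\xi^1_{\;,x}=\xi^2_{\;,y},\qquad \xi^1_{\;,y}=-\xi^2_{\;,x},
\end{equation*}
i.e.\ the Cauchy--Riemann equations for $\xi^1+i\xi^2$. Since the space of holomorphic functions of a complex variable is infinite dimensional, $G_{CV}$ is infinite dimensional.

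For (iv), which is the main technical step, the plan is the classical prolongation argument. Starting from $\xi_{(i;j)}=\psi g_{ij}$ with $\psi=\frac{1}{n}\xi^{k}_{\;;k}$, I would differentiate once and use the Ricci identity to express $\xi_{i;jk}$ in terms of $\xi_{i}$, $\xi_{[i;j]}$, $\psi$, $\psi_{,i}$ and curvature. Differentiating again and using that $n>2$ (the step where dimension enters, via the trace identity $(n-2)\psi_{;ij}=\cdots$) yields a closed linear system of covariant ODEs for the jet $(\xi_i,\xi_{[i;j]},\psi,\psi_{,i})$ along any curve. By the uniqueness theorem for such systems, a CKV is determined by this data at a single point. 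Counting components gives
\begin{equation*}
n+\tfrac{1}{2}n(n-1)+1+n=\tfrac{1}{2}(n+1)(n+2),
\end{equation*}
so $\dim G_{CV}\leq \tfrac{1}{2}(n+1)(n+2)$. The main obstacle is the clean derivation of the closed prolongation system (in particular the use of the $n>2$ factor to solve for $\psi_{;ij}$); everything else is counting and definitions.
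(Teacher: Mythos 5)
Your proposal is correct in all four parts. Note that the paper itself states this theorem as standard background material in the chapter on collineations and gives no proof at all, so there is no in-paper argument to compare against; your definitional argument for (i), the linear-combination trick for (ii), the Cauchy--Riemann reduction for (iii), and the classical prolongation/jet-counting argument for (iv) (with the dimension entering precisely through the factor $n-2$ when solving for $\psi_{;ij}$) are exactly the standard proofs. The only caveat worth recording is in (iii): isothermal coordinates as you write them presuppose a positive-definite metric, whereas the paper routinely works with Lorentzian signature; there one passes to null coordinates and finds the general CKV depends on two arbitrary functions of one variable each (consistent with the paper's own Example on the conformal algebra of the flat $2$-dimensional space, where the generic CKV is built from $f_{1}\left( t+i\sqrt{\varepsilon }x\right)$ and $f_{2}\left( t-i\sqrt{\varepsilon }x\right)$), so the conclusion that $\dim G_{CV}=\infty$ is unaffected.
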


In the following by the term \emph{proper conformal algebra} we mean the
algebra of CKVs which are not HV/KVs. A generalization of theorem \ref{decSp}
for spaces which admit a gradient CKV is the following.

\begin{theorem}
If $V^{n}$ admits a gradient CKV then there exists a coordinate system in
which the line element can be written as follows%
\begin{equation*}
ds^{2}=f^{2}\left( x^{n}\right) \left[ dx^{n}+h^{AB}\left( x^{A}\right)
dx^{A}dx^{B}\right]
\end{equation*}%
where $A,B=1,2,...n-1$. In these coordinates the CKV is $X_{C}=\partial
_{x^{n}}$ with conformal factor $\psi =\frac{f_{,x^{n}}}{f}$. In the case in
which $f\left( x^{n}\right) =\exp \left( x^{n}\right) ,$ then $\psi =1$,
hence, $X_{C}$ becomes HV and if $f_{,x^{n}}=0$, $X_{C}$ becomes KV.
\end{theorem}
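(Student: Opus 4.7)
The plan is to build the adapted chart from the geometry of $\xi$ directly: use the gradient property to obtain a preferred family of transverse hypersurfaces, use the conformal property to propagate them orthogonally along the flow, and then read off the form of $g_{ij}$ from the conformal Killing equation written in these coordinates.

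First, because $\xi$ is gradient there is a scalar $S$ with $\xi_{i}=S_{,i}$, and the raised vector $\xi^{i}=g^{ij}S_{,j}$ is orthogonal to every level hypersurface $\Sigma_{c}=\{S=c\}$. I pick one such $\Sigma_{c_{0}}$, choose coordinates $x^{A}$ on it ($A=1,\dots,n-1$), and extend them to a neighbourhood by demanding $\xi(x^{A})=0$, so that each $x^{A}$ is constant along the integral curves of $\xi$. Taking the flow parameter of $\xi$ as the remaining coordinate $x^{n}$, I obtain $\xi=\partial_{x^{n}}$ by construction.

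The step I expect to be the main obstacle — and the only point where the conformal property is essential rather than cosmetic — is showing that the transverse vectors $\partial_{x^{A}}$ remain orthogonal to $\xi$ for all $x^{n}$, not only on the initial slice. Let $\phi_{t}$ denote the flow of $\xi$; by construction $(\phi_{t})_{\ast}\xi=\xi$ and $(\phi_{t})_{\ast}\partial_{x^{A}}=\partial_{x^{A}}$. Because $\xi$ is a CKV, $\phi_{t}^{\ast}g=\Omega_{t}\,g$ for some positive conformal factor $\Omega_{t}$, so
\[
g(\xi,\partial_{x^{A}})\big|_{\phi_{t}(p)}=(\phi_{t}^{\ast}g)(\xi,\partial_{x^{A}})\big|_{p}=\Omega_{t}(p)\,g(\xi,\partial_{x^{A}})\big|_{p}=0,
\]
so orthogonality is propagated and $g_{An}=0$ throughout the chart.

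With $g_{An}=0$ established, the gradient condition collapses the remaining components cleanly. From $\xi_{A}=g_{An}=0$ and $\xi_{i}=S_{,i}$ I get $S_{,A}=0$, hence $S=S(x^{n})$ and $g_{nn}=\xi_{n}=S_{,n}$ depends only on $x^{n}$; I set $f^{2}(x^{n}):=g_{nn}$. Because $\xi=\partial_{x^{n}}$, the conformal Killing equation reduces to $\partial_{n}g_{ij}=2\psi g_{ij}$. The $(n,n)$ component yields $\psi=f_{,x^{n}}/f$, while the $(A,B)$ components give $\partial_{n}\ln(g_{AB}/f^{2})=0$, so $g_{AB}(x^{n},x^{C})=f^{2}(x^{n})\,h_{AB}(x^{C})$ for some symmetric $h_{AB}$ on the transverse slices. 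Reassembling, the line element takes the stated form $ds^{2}=f^{2}(x^{n})[(dx^{n})^{2}+h_{AB}(x^{C})\,dx^{A}dx^{B}]$, and the two degenerate cases follow at once: $f=e^{x^{n}}$ gives $\psi=1$ so $\xi$ is an HV, while $f_{,x^{n}}=0$ gives $\psi=0$ so $\xi$ is a KV.
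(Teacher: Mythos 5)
Your proof is correct; the paper states this theorem without proof (it is quoted as a known generalization of the decomposition theorem for gradient KVs), and your argument is the standard adapted\mbox{-}coordinate construction one would expect: flow coordinates for $\xi$, propagation of orthogonality $g_{An}=0$ via the conformal invariance of the flow (equivalently, the linear ODE $\partial_{n}g_{An}=2\psi g_{An}$ with zero initial data), the gradient condition forcing $g_{nn}=S_{,n}=f^{2}(x^{n})$, and the $(A,B)$ components of $L_{\xi}g_{ij}=2\psi g_{ij}$ yielding $g_{AB}=f^{2}h_{AB}(x^{C})$. The only implicit hypothesis worth recording is that $g(\xi,\xi)=S_{,n}\neq 0$ (automatic in positive\mbox{-}definite signature, but needed to set $f^{2}=g_{nn}$ when the metric is indefinite, where a null gradient CKV would escape this normal form).
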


Two metrics $g_{ij},~\bar{g}_{ij}$ are \textbf{conformally related} if $\bar{%
g}_{ij}=N^{2}g_{ij}$ where the function ~$N^{2}$ is the \textbf{conformal
factor}. If $\xi $ is a CKV\ of the metric $\bar{g}_{ij}$ so that $L_{\xi }%
\bar{g}_{ij}=2\bar{\psi}\bar{g}_{ij},$ then $\xi $ is also a CKV of the
metric $g_{ij}$, that is $L_{\xi }g_{ij}=2\psi g_{ij}$ with conformal factor
$\psi \left( x^{k}\right) $
\begin{equation}
\psi =\bar{\psi}N^{2}-NN_{,i}\xi ^{i}.
\end{equation}%
The last relation implies that two conformally related metrics have the same
conformal algebra, but with different subalgebras; that is, a KV for one may
be proper CKV for the other. This is an important observation which shall be
useful later. A special class of conformally related metrics are the
conformally flat metrics. A space $V^{n}$ is conformally flat if the metric $%
g_{ij}$ of $V^{n}$ satisfies the relation $g_{ij}=N^{2}s_{ij}~$where $%
s_{ij}\,$~is the metric of a flat space which has the same signature with $%
g_{ij}$. \newline

For conformally flat spaces the following proposition aplies.

\begin{proposition}
Let $V^{n}$ be an n-dimensional Riemannian space, $n\geq 2,$

i) If $V^{n},~n>2$ is conformally flat then $V^{n}$ admits a conformal
algebra of dimension $\frac{1}{2}\left( n+1\right) \left( n+2\right) ~$.

ii) A three dimensional space is conformally flat if the Cotton-York tensor%
\begin{equation*}
C^{ij}=2\varepsilon ^{ikr}\left( R_{k}^{j}-\frac{1}{4}\delta _{k}^{j}\right)
_{;r}\newline
\end{equation*}%
vanishes.

iii) $V^{n},$ $n\geq 4\ $\ is conformally flat if the Weyl tensor%
\begin{equation*}
R_{ijkr}=C_{ijkr}+\frac{2}{n-2}\left( g_{i[k}R_{r]j}-g_{j[k}R_{r]i}\right) -%
\frac{2}{\left( n-1\right) \left( n-2\right) }Rg_{i[k}g_{r]j}
\end{equation*}

vanishes.

iv) If $V^{n},$ $n>3$ is a maximally symmetric space then $V^{n}$ is
conformally flat.

v) All two dimensional spaces are conformally flat .
\end{proposition}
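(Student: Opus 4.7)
The plan is to address the five parts largely independently, leveraging the earlier fact that conformally related metrics share the same conformal algebra together with classical results (Weyl--Schouten and the existence of isothermal coordinates).

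For (i) I would reduce to computing $\dim G_{CV}$ for a flat metric. Since $V^{n}$ is conformally flat, $g_{ij}=N^{2}s_{ij}$ with $s_{ij}$ flat of the same signature, and the observation that conformally related metrics share their conformal algebra yields $\dim G_{CV}(g)=\dim G_{CV}(s)$. In Cartesian coordinates the conformal Killing equation on flat $\mathbb{R}^{n}$ ($n\geq 3$) becomes a linear overdetermined system whose general solution decomposes into translations ($n$ parameters), rotations/boosts ($n(n-1)/2$), a dilation ($1$), and special conformal transformations ($n$), adding up to $(n+1)(n+2)/2$. Combined with the upper bound already stated in the preceding theorem, this establishes (i). For (iv), I would use that a maximally symmetric space has constant sectional curvature $k$, so $R_{ijkl}=k(g_{ik}g_{jl}-g_{il}g_{jk})$; substituting into the Weyl tensor formula in (iii) gives $C_{ijkl}\equiv 0$, and (iii) then yields conformal flatness. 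Alternatively, one exhibits the metric explicitly as $g_{ij}=\delta_{ij}/(1+kr^{2}/4)^{2}$ via stereographic-type coordinates, which is manifestly conformally flat.

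For (ii) and (iii) I would invoke the Weyl--Schouten theorem. The necessity direction is computational: compute the conformal transformation law of the Weyl tensor $C_{ijkr}$ under $\bar g=N^{2}g$ and show that $C_{ijkr}$ is conformally invariant of type $(1,3)$, so $C_{ijkr}(g)=0$ whenever $g$ is conformal to a flat metric. In dimension three $C_{ijkr}$ vanishes identically by algebraic symmetries, so the obstruction moves to the next order and one computes that the Cotton--York tensor $C^{ij}$ is conformally covariant and vanishes on flat space. Sufficiency is the analytic core: assuming the relevant obstruction vanishes, one sets $\bar g_{ij}=e^{-2\omega}g_{ij}$ and seeks $\omega$ so that $\bar g$ is flat; writing out $\overline R_{ijkl}=0$ reduces to a first-order PDE system for $\omega_{,i}$ whose integrability conditions are exactly the vanishing of the Weyl tensor (for $n\geq 4$) or of the Cotton--York tensor (for $n=3$). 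I would state the Weyl--Schouten theorem as classical and sketch only the integrability computation. For (v) I would cite the existence of isothermal coordinates on any 2D Riemannian manifold, obtained by solving an elliptic Beltrami-type equation locally; in such coordinates $ds^{2}=e^{2\phi}(dx^{2}\pm dy^{2})$, which is conformally flat by construction.

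The hard part will be the sufficiency direction in (ii) and (iii): identifying the Weyl and Cotton--York tensors as the precise integrability obstructions requires a careful expansion of the conformal transformation of the Riemann curvature and organizing the result into its trace and trace-free parts. Parts (i), (iv), (v) are essentially corollaries of (iii) and classical facts, and do not require new technique beyond what has been developed in the chapter so far.
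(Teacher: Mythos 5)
Your sketch is correct, but there is nothing in the paper to compare it against: the proposition is stated in the background chapter as a collection of classical facts (with references to Barnes and related literature) and no proof is given. Your route is the standard one. For (i) the reduction to the flat case via the invariance of the conformal algebra under conformal rescaling is exactly the mechanism the paper sets up in the preceding paragraph, and the count $n+\tfrac{n(n-1)}{2}+1+n=\tfrac{1}{2}(n+1)(n+2)$ agrees with the paper's explicit list of the flat-space conformal algebra (gradient KVs, rotations, the homothety, and the special CKVs). For (ii) and (iii) note that the proposition only asserts the sufficiency direction (vanishing obstruction $\Rightarrow$ conformally flat), so the Weyl--Schouten integrability argument you flag as the hard part is the only part actually required; the necessity computation you describe is extra. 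Parts (iv) and (v) are handled as you say, via constant sectional curvature annihilating the Weyl tensor and via isothermal coordinates respectively. One cosmetic point: the Cotton--York expression as printed in the statement omits the scalar curvature in the trace term (it should read $R_{k}^{j}-\tfrac{1}{4}R\,\delta_{k}^{j}$), which is a typo in the source rather than a defect of your argument.
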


A\ result which will be used in subsequent sections is the following.

\begin{example}[The conformal algebra of the flat space.]
\label{ExCAflat}Consider a flat space of dimension $n>2~$ with metric
\begin{equation*}
ds^{2}=\varepsilon dt^{2}+\delta _{AB}dy^{A}dy^{B}~,~\varepsilon =\pm 1.
\end{equation*}%
The conformal algebra of the space consists of the following vectors\newline
$n-$ gradient KVs,
\begin{equation*}
K_{G}^{1}=\partial _{t}~,~K_{G}^{A}=\partial _{A}\newline
\end{equation*}%
$\frac{n\left( n-1\right) }{2}~$ non~gradient KVs (rotations). $~$%
\begin{equation*}
X_{R}^{1A}=y^{A}\partial _{t}-\varepsilon t\partial _{A}
\end{equation*}%
\begin{equation*}
~X_{R}^{AB}=y^{B}\partial _{A}-y^{A}\partial _{B}
\end{equation*}%
one gradient HV$~$%
\begin{equation*}
H=t\partial _{t}+\sum\limits_{A}y^{A}\partial _{A}
\end{equation*}%
$n~$special CKVs%
\begin{equation*}
X_{C}^{1}=\frac{1}{2}\left( t^{2}-\varepsilon \sum\limits_{A}\left(
y^{A}\right) ^{2}\right) \partial _{t}+t\sum\limits_{A}y^{A}\partial _{A}
\end{equation*}%
\begin{equation*}
X_{C}^{A}=ty^{A}\partial _{t}+\frac{1}{2}\left( \varepsilon t^{2}+\left(
y^{A}\right) ^{2}-\sum\limits_{B\neq A}\left( y^{B}\right) ^{2}\right)
\partial _{A}+y^{A}\sum\limits_{B\neq A}y^{B}\partial _{B}
\end{equation*}%
where $y^{A}=1...n-1~$with conformal factor $\psi _{C}^{1}=t~$and $\psi
_{C}^{A}=y^{A}.$

For $n>2$ the flat space does not admit proper (non special) CKVs.

For $n=2$ the vector field%
\begin{equation*}
X=\left[ f_{1}\left( t+i\sqrt{\varepsilon }x\right) -f_{2}\left( t-i\sqrt{%
\varepsilon }x\right) +c_{0}\right] \partial _{t}+i\sqrt{\varepsilon }\left[
f_{1}\left( t+i\sqrt{\varepsilon }x\right) +f_{2}\left( t-i\sqrt{\varepsilon
}x\right) \right] \partial _{x}
\end{equation*}%
is the generic CKV, that is, it includes the KVs, the HV, and the CKVs.
\end{example}

\subsection{Symmetries of the connection}

Let $\xi $ be the generator of an infinitesimal transformation\ of (\ref%
{go.05}). In a Riemannian space with metric $g_{ij}$ we have the identity%
\begin{equation}
L_{\xi }\Gamma _{jk}^{i}=g^{ir}\left[ \left( L_{\xi }g_{rk}\right)
_{;j}+\left( L_{\xi }g_{rj}\right) _{;k}-\left( L_{\xi }g_{jk}\right) _{;r}%
\right] .  \label{scc.01}
\end{equation}

If $\xi $ is a HV or KV then from (\ref{scc.01}) follows that $L_{\xi
}\Gamma _{jk}^{i}$ vanishes, which implies that the $\Gamma _{jk}^{i}$ are
invariant under the action of transformation (\ref{go.05}).

\begin{definition}
The infinitesimal generator $\xi $ of the point transformation (\ref{go.05})
caries a geodesic into a geodesic and also preserves the affine parameter
iff the Lie derivative of connection coefficients $\Gamma _{jk}^{i}$ with
respect to $\xi $ vanishes, that is iff%
\begin{equation}
L_{\xi }\Gamma _{jk}^{i}=0  \label{scc.02}
\end{equation}%
The infinitesimal generator $\xi $ is called an \textbf{Affine Killing}
vector or \textbf{Affine collineation} (AC).
\end{definition}

ACs of $V^{n}$ form a Lie algebra, which is called \textbf{Affine algebra}, $%
G_{AC}$.\ Obviously the homothetic algebra\footnote{%
Note that the proper CKVs do not satisfy condition (\ref{scc.02}).} $%
G_{HV},~ $is a subalgebra of $G_{AC}$, $G_{HV}\subseteq G_{AC}.~$We shall
say that a spacetime admits proper ACs when $\dim G_{HV}\prec \dim G_{AC}$.

For instance, in the case of flat space condition (\ref{scc.02}) becomes~$%
\xi _{,jk}^{i}=0,~$therefore, the general solution is~$\xi
^{i}=A_{j}^{i}x^{j}+B^{i}~$where $A_{j}^{i},B^{i}$ are~$n\left( n+1\right) ~$
constants. Therefore the flat space admits a $n\left( n+1\right) $
dimensional Affine algebra (including the KVs and the HV). We have the
inverse result.

\begin{theorem}
If an n dimensional Riemannian space $V^{n}$ admits an Affine algebra $%
G_{AC} $ and\ $\dim G_{AC}=n\left( n+1\right) $, then $V^{n}$ is a flat
space.
\end{theorem}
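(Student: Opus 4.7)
My plan is to combine the curvature expression for $L_{\xi}\Gamma_{jk}^{i}$ proved just before this subsection, namely $L_{\xi}\Gamma_{jk}^{i}=\xi_{;jk}^{i}-R_{jkl}^{i}\xi^{l}$, with the identity $L_{\xi}R^{i}_{jkl}=0$ that follows automatically from $L_{\xi}\Gamma=0$ (since $R$ is polynomial in $\Gamma$ and its derivatives). The dimension hypothesis will then let me feed arbitrary ``initial data'' into this identity at every point, and a short index contraction will force the Riemann tensor to vanish.

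\medskip

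First I would note that the AC condition $L_{\xi}\Gamma^{i}_{jk}=0$ is, by the formula above, equivalent to
\begin{equation*}
\xi^{i}_{;jk}=R^{i}_{jkl}\xi^{l}.
\end{equation*}
Setting $A^{i}_{j}:=\xi^{i}_{;j}$, this is a closed first-order system in the $n+n^{2}=n(n+1)$ unknowns $(\xi^{i},A^{i}_{j})$ along any curve in $V^{n}$. Standard ODE uniqueness then shows that the evaluation map $\mathrm{ev}_{p}:G_{AC}\to \mathbb{R}^{n(n+1)}$, $\xi\mapsto(\xi^{i}(p),\xi^{i}_{;j}(p))$, is injective, so $\dim G_{AC}\leq n(n+1)$. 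Under the maximality hypothesis, $\mathrm{ev}_{p}$ is a linear injection between spaces of equal dimension, hence a bijection at every point $p$; in particular every $(v,A)\in \mathbb{R}^{n(n+1)}$ is realised as the initial data of some AC.

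\medskip

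Next, since $R^{i}_{jkl}$ is built polynomially from $\Gamma^{i}_{jk}$ and $\partial\Gamma^{i}_{jk}$, the identity $L_{\xi}\Gamma^{i}_{jk}=0$ yields $L_{\xi}R^{i}_{jkl}=0$. Expanding the Lie derivative of the $(1,3)$ tensor $R^{i}_{jkl}$ by the formula (\ref{go.03a}) and evaluating at $p$ with $\xi^{i}(p)=v^{i}$, $\xi^{i}_{;j}(p)=A^{i}_{j}$ gives
\begin{equation*}
v^{m}R^{i}_{jkl;m}-R^{m}_{jkl}A^{i}_{m}+R^{i}_{mkl}A^{m}_{j}+R^{i}_{jml}A^{m}_{k}+R^{i}_{jkm}A^{m}_{l}=0,
\end{equation*}
which by the previous paragraph must hold for every $(v,A)\in \mathbb{R}^{n(n+1)}$.

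\medskip

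Finally, setting $v=0$ leaves an identity linear in the arbitrary tensor $A^{a}_{b}$. Collecting the coefficient of $A^{a}_{b}$ produces the algebraic curvature identity
\begin{equation*}
-R^{b}_{jkl}\delta^{i}_{a}+R^{i}_{akl}\delta^{b}_{j}+R^{i}_{jal}\delta^{b}_{k}+R^{i}_{jka}\delta^{b}_{l}=0,
\end{equation*}
valid for all free indices. Contracting the pair $(a,b)$ via $\delta^{a}_{b}$ collapses each of the four terms to $R^{i}_{jkl}$ (the first with a minus sign, the others with a plus), giving $2R^{i}_{jkl}=0$. Since $p$ was arbitrary, $R^{i}_{jkl}\equiv 0$ on $V^{n}$, and $V^{n}$ is flat.

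\medskip

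I expect the only mildly delicate point to be the surjectivity step in the first paragraph: although injectivity of $\mathrm{ev}_{p}$ is immediate from ODE uniqueness, one has to be careful to turn the global dimension count $\dim G_{AC}=n(n+1)$ into pointwise freedom in prescribing $(v,A)$. The key observation is just that $\mathrm{ev}_{p}$ is a linear map into a space of the same dimension, so injectivity upgrades to bijectivity at every $p$; no further integrability analysis of the $(\xi^{i},A^{i}_{j})$ system is needed beyond this.
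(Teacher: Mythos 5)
Your proof is correct, but there is nothing in the paper to compare it against: the theorem is stated without proof (the text only verifies the easy direction, that flat space has the $n(n+1)$-dimensional affine algebra $\xi^{i}=A^{i}_{j}x^{j}+B^{i}$, and then asserts the converse). What you have written is essentially the classical maximal-mobility argument of Yano/Eisenhart, and all the steps check out: the AC equation $\xi^{i}_{;jk}=R^{i}_{jkl}\xi^{l}$ closes the first-order system in $(\xi^{i},A^{i}_{j})$, giving injectivity of $\mathrm{ev}_{p}$ and hence the bound $\dim G_{AC}\leq n(n+1)$; maximality then upgrades this to pointwise surjectivity of the initial data; and the coefficient of $A^{a}_{b}$ in $L_{\xi}R^{i}_{jkl}=0$, contracted on $(a,b)$, yields $(-1+1+1+1)R^{i}_{jkl}=0$. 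Two small points deserve a sentence each in a final write-up. First, the injectivity of $\mathrm{ev}_{p}$ (propagation of vanishing initial data along curves) needs $V^{n}$ connected, which should be stated. Second, the claim that $L_{\xi}\Gamma^{i}_{jk}=0$ implies $L_{\xi}R^{i}_{jkl}=0$ because ``$R$ is polynomial in $\Gamma$ and its derivatives'' is heuristic as written, since the Lie derivative of a non-tensorial object is defined through its transformation law; the clean justification is either the identity $L_{\xi}R^{i}_{jkl}=\nabla_{k}\bigl(L_{\xi}\Gamma^{i}_{lj}\bigr)-\nabla_{l}\bigl(L_{\xi}\Gamma^{i}_{kj}\bigr)$, or the observation that the flow of $\xi$ preserves the connection and therefore its curvature. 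With those two remarks made precise, the argument is a complete proof of the stated theorem.
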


A generalization of affine symmetry which is of interest is the Projective
collineation.

\begin{definition}
The infinitesimal generator $\xi $ of the point transformation (\ref{go.05})
is called a \textbf{Projective Collineation} (PC) if there exists a function%
\footnote{%
In general, $\xi $ is PC if there exists a one form $\omega _{i}$ such~that$%
~L_{\xi }\Gamma _{jk}^{i}=2\omega _{(j}\delta _{k)}^{i}.$ \ In \ a
Riemannian space, the one form $\omega _{i}$ is necessarily closed, that is,
there exist a function $\phi $ such as $\omega _{i}=\phi _{,i}.$} $\phi
\left( x^{k}\right) $ such that
\begin{equation}
L_{\xi }\Gamma _{jk}^{i}=\phi _{,j}\delta _{k}^{i}+\phi _{,k}\delta _{j}^{i}
\label{scc.03}
\end{equation}%
or equivalently%
\begin{equation*}
\xi _{\left( i;j\right) ;k}=2g_{ij}\phi _{,k}+2g_{k(i}\phi _{,j).}
\end{equation*}%
The function $\phi $ is called the \textbf{projective function}. If the
projective function satisfies the condition $\phi _{;ij}=0,$ then we say
that $\xi ~$is a \textbf{special PC} (sp.PC). Projective transformations
transform the system of geodesics (auto parallel curves) of $V^{n}$ into the
same system but they do not preserve the affine parameter.
\end{definition}

The PCs of $V^{n}$ form a Lie algebra which is called \textbf{Projective
algebra}, $G_{PC}$. The affine algebra $G_{AC}$ is a subalgebra of $G_{PC}$,
$G_{AC}\subseteq G_{PC}.$

In flat space condition (\ref{scc.03}) gives the generic projective
collineation
\begin{equation*}
\xi _{i}=A_{ij}x^{j}+\left( B_{j}x^{j}\right) x_{i}+C_{i}
\end{equation*}%
where $A_{ij},B_{j}$,$C_{i}$ are $n\left( n+2\right) $ constants.

\begin{theorem}
If an n dimensional Riemannian space $V^{n}~$admits a projective\ algebra $%
G_{PC},$ then$~\dim G_{PC}\leq n(n+2$). In case $\dim G_{PC}=n\left(
n+2\right) $ $\ $then $V^{n}$ is a maximally symmetric space \cite{Barnes}.
\end{theorem}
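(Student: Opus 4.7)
The plan is to prove the dimension bound by prolonging the projective equation to a closed first-order system whose dimension at a point is exactly $n(n+2)$, and then to derive the constant-curvature conclusion in the maximal case as an integrability condition.

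First, I would rewrite the projective equation $L_{\xi}\Gamma^{i}_{jk}=\phi_{,j}\delta^{i}_{k}+\phi_{,k}\delta^{i}_{j}$ using the standard identity for the Lie derivative of a symmetric connection in terms of covariant derivatives and curvature, yielding an expression of the form
\begin{equation*}
\xi^{i}_{;jk}=-R^{i}_{jkl}\xi^{l}+\phi_{,j}\delta^{i}_{k}+\phi_{,k}\delta^{i}_{j}.
\end{equation*}
This shows that all second covariant derivatives of $\xi^{i}$ are algebraically determined by the values of $\xi^{i}$, $\xi^{i}_{;j}$, and $\phi_{,i}$ at the point, together with the curvature. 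The key observation is that only derivatives of $\phi$ enter the system, so $\phi$ itself contributes no parameter (an additive constant in $\phi$ does not affect $\xi$).

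Next, I would close the system at the first-prolongation level by showing that $\phi_{,jk}$ is likewise algebraically determined by $(\xi^{i},\xi^{i}_{;j},\phi_{,i})$ and the curvature. To do this I would differentiate the prolongation of $\xi^{i}_{;jk}$ once more, then use the Ricci identity
\begin{equation*}
\xi^{i}_{;jkl}-\xi^{i}_{;jlk}=R^{i}_{mkl}\xi^{m}_{;j}-R^{m}_{jkl}\xi^{i}_{;m},
\end{equation*}
together with the contracted Bianchi identity, to eliminate the antisymmetric part and recover a formula for $\phi_{,jk}$ as a linear combination of $\xi^{m}$, $\xi^{m}_{;n}$, and $\phi_{,m}$ with coefficients built from the curvature and its derivatives. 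Once this is done, every higher derivative of $\xi$ and $\phi$ is determined by these three blocks of data at any chosen base point.

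Counting the independent data at a point then gives $n$ parameters for $\xi^{i}$, $n^{2}$ parameters for $\xi^{i}_{;j}$, and $n$ parameters for $\phi_{,i}$, so $\dim G_{PC}\leq n+n^{2}+n=n(n+2)$, which is the required bound. For the equality case, if $\dim G_{PC}=n(n+2)$, the initial data $(\xi^{i},\xi^{i}_{;j},\phi_{,i})$ at any point can be prescribed arbitrarily. Substituting the prolongation formula back into the Ricci identity must then hold identically in these free parameters; collecting the coefficients of the free $\xi$-parameters forces the curvature tensor to satisfy
\begin{equation*}
R^{i}_{jkl}=K\left(\delta^{i}_{k}g_{jl}-\delta^{i}_{l}g_{jk}\right)
\end{equation*}
for some scalar $K$, and the second Bianchi identity together with the contracted form of this condition forces $K$ to be constant. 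Hence $V^{n}$ has constant curvature and, by a classical theorem, is maximally symmetric.

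The main obstacle will be the second step: carefully handling the symmetrizations and antisymmetrizations to extract the $\phi_{,jk}$ relation from the Ricci identity without introducing spurious algebraic constraints. The curvature bookkeeping for the maximal case is similarly delicate, since one must show that the coefficients of every free combination of $(\xi^{i},\xi^{i}_{;j},\phi_{,i})$ vanish independently in order to conclude the constant-curvature form of $R^{i}_{jkl}$.
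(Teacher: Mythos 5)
The paper does not actually prove this theorem: it is quoted as a known result with a citation to Barnes, and the only supporting computation in the surrounding text is the integration of the projective condition in flat space, which exhibits the generic projective collineation $\xi _{i}=A_{ij}x^{j}+\left( B_{j}x^{j}\right) x_{i}+C_{i}$ depending on $n(n+2)$ constants --- i.e.\ it shows the bound is attained in the flat case but establishes neither the upper bound nor the rigidity statement. Your prolongation argument therefore supplies a genuine proof where the paper offers none, and it is the classical (Eisenhart-style) one: close the linear system on the data $\left( \xi ^{i},\xi _{;j}^{i},\phi _{,i}\right) $, count $n+n^{2}+n=n(n+2)$, and extract the constant-curvature condition from the integrability conditions in the maximal case. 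Two points deserve care. First, the sign in your prolonged equation depends on the curvature convention; with the identity used in the paper, $L_{\xi }\Gamma _{jk}^{i}=\xi _{;jk}^{i}-R_{jkl}^{i}\xi ^{l}$, one gets $\xi _{;jk}^{i}=+R_{jkl}^{i}\xi ^{l}+\phi _{,j}\delta _{k}^{i}+\phi _{,k}\delta _{j}^{i}$ rather than your minus sign, so you should fix a convention and keep it consistent through the Ricci-identity manipulations. Second, in the rigidity step the coefficient analysis most naturally yields the vanishing of the projective Weyl tensor, after which Beltrami's theorem (a projectively flat Riemannian space has constant curvature) gives the conclusion; your more direct derivation of $R_{jkl}^{i}=K\left( \delta _{k}^{i}g_{jl}-\delta _{l}^{i}g_{jk}\right) $ with $K$ constant via Schur's lemma works for $n>2$, but the case $n=2$ (where the projective Weyl tensor vanishes identically) must be handled through the integrability condition on $\phi _{;jk}$ instead. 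Neither issue is fatal to the plan.
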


Some useful propositions for the existence of sp. PCs are the following \cite%
{Barnes,HalldaCosta,HallR}.

\begin{proposition}
Let $V^{n}$ be an n dimensional Riemannian space, then

i) If $V^{n}$ admits a $p\leq n$ dimensional Lie algebra of sp.PCs then also
admits $p$ gradient KVs and a gradient HV and if $p=n$ the space is flat
(the reverse also holds true).

ii) A maximally symmetric space which admits a proper AC or a sp.PC is a
flat space.
\end{proposition}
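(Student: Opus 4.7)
The plan is to prove the two items separately, using the integrability machinery built in the subsection on symmetries of the connection; in particular the identity
\[
L_\xi \Gamma^i_{jk}=g^{ir}\!\left[(L_\xi g_{rk})_{;j}+(L_\xi g_{rj})_{;k}-(L_\xi g_{jk})_{;r}\right]
\]
will be the workhorse. Throughout, for a sp.PC $\xi$ write $\phi$ for its projective function, $S_{ij}:=L_\xi g_{ij}=2\xi_{(i;j)}$.

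For (i), the first observation is immediate: the defining requirement $\phi_{;ij}=0$ means $Y_i:=\phi_{,i}$ is covariantly constant, so $Y$ is simultaneously closed (by construction) and Killing (since $Y_{(i;j)}=0$). Hence every sp.PC furnishes a gradient KV through its projective function, and a $p$-dimensional Lie algebra of sp.PCs whose projective functions are independent modulo constants provides $p$ linearly independent gradient KVs. To obtain the gradient HV I plug the sp.PC condition $L_\xi\Gamma^i_{jk}=\phi_{,j}\delta^i_k+\phi_{,k}\delta^i_j$ into the identity above, and combine the three cyclic permutations of the resulting relation in $S_{ij;k}$, $\phi_{,\cdot}$, $g_{\cdot\cdot}$. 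The result can be rearranged to produce a scalar $\sigma$ with $\sigma_{;ij}=g_{ij}$, whose gradient is the desired gradient HV; in flat space the pattern is already visible in the model sp.PCs $\xi^i_{(a)}=x_a x^i$, which factor as $\phi_a\cdot H$ with $H$ the gradient homothety, and the same decomposition $\xi=\phi H+\text{(AC)}$ extends to the general case. When $p=n$ I invoke Theorem \ref{decSp}: the $n$ gradient KVs make $V^n$ fully decomposable into a product with constant diagonal metric $M_{\alpha\beta}dz^\alpha dz^\beta$, which is flat. The converse follows from Example \ref{ExCAflat}, where the $n$ gradient translations, the gradient $H$, and the $n$ vectors $x_a x^i$ are all explicit.

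For (ii) I split into the two alternatives. If $\xi$ is a proper AC in a maximally symmetric $V^n$, then $L_\xi\Gamma^i_{jk}=0$ in the identity gives $S_{rk;j}+S_{rj;k}-S_{jk;r}=0$; swapping $r\leftrightarrow j$ and adding yields $S_{ij;k}=0$. Applying the Ricci identity to the covariantly constant tensor $S_{ij}$ and substituting the constant-curvature form $R^m{}_{ikl}=K(\delta^m_k g_{il}-\delta^m_l g_{ik})$, a single trace with $g^{jl}$ collapses to $S_{ij}=\tfrac{S}{n}g_{ij}$ with $S$ constant, so $\xi$ is homothetic; since a non-flat constant-curvature space has no proper HV (its homothetic algebra equals its Killing algebra), properness of $\xi$ forces $K=0$. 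If instead $\xi$ is a sp.PC, part (i) supplies a nontrivial covariantly constant $Y_i=\phi_{,i}$; the Ricci identity for $Y$ together with the constant-curvature Riemann tensor gives $K(Y_jg_{ik}-Y_kg_{ij})=0$, and tracing over $j,i$ yields $(n-1)KY_k=0$, so again $K=0$.

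The main obstacle is the gradient-HV construction in (i): all other steps are either direct applications of the Ricci identity, a trace argument in constant curvature, or the invocation of Theorem \ref{decSp}, whereas solving the overdetermined system $\sigma_{;ij}=g_{ij}$ out of the cyclic $S_{ij;k}$ identities requires the correct algebraic combination and a check that the resulting $\sigma$ is globally well defined; once this is in hand, independence considerations match the $p$ projective functions to the $p$ gradient KVs, and the rest of the proposition slots into place.
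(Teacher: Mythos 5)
You should first be aware that the paper itself offers no proof of this proposition: it is quoted from the literature (the citations to Barnes and to Hall et al.\ immediately preceding it), so your argument has to stand on its own. Part (ii) does stand: the cyclic combination of $S_{ik;j}+S_{ij;k}-S_{jk;i}=0$ giving $S_{ij;k}=0$ for an AC, the Ricci identity plus the constant-curvature form of $R^{m}{}_{ikl}$ and a single trace forcing $L_{\xi}g_{ij}=\tfrac{S}{n}g_{ij}$ with $S$ constant, the appeal to the absence of proper homotheties on non-flat maximally symmetric spaces, and the analogous trace $K(n-1)Y_{k}=0$ for the covariantly constant $Y_{i}=\phi_{,i}$ are all correct and complete. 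Likewise the opening of part (i) (each proper sp.PC yields the gradient KV $\phi_{,i}$ since $\phi_{;ij}=0$) and the $p=n$ endgame via Theorem \ref{decSp} are sound, modulo the linear independence of the projective gradients, which you at least flag.

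The genuine gap is exactly where you locate it yourself: the gradient HV is asserted, not constructed, and the route you sketch does not close it. Combining the cyclic permutations of the identity only yields
\begin{equation*}
S_{ij;k}=4\phi_{,k}g_{ij}+2\phi_{,i}g_{jk}+2\phi_{,j}g_{ik},
\end{equation*}
whose right-hand side is not the covariant derivative of anything proportional to $g_{ij}$, so no scalar $\sigma$ with $\sigma_{;ij}=g_{ij}$ falls out of "rearranging" this relation. The missing idea is to contract it with the covariantly constant vector $Y^{i}=g^{ij}\phi_{,j}$ that you have already produced: setting $Q_{ij}=S_{ij}-4\phi g_{ij}$ one gets $Q_{ij;k}=2\left(Y_{i}g_{jk}+Y_{j}g_{ik}\right)$, and then $W_{j}:=Q_{ij}Y^{i}-2\phi Y_{j}$ satisfies $W_{j;k}=2\left(Y_{i}Y^{i}\right)g_{jk}$ with $Y_{i}Y^{i}$ constant; since $W_{j;k}$ is symmetric, $W_{j}$ is locally a gradient, and since the metric is positive definite and $Y\neq 0$ the factor $Y_{i}Y^{i}$ is a nonzero constant, so $W$ is the required gradient HV. (In Lorentzian signature $Y$ could be null and this step fails, which is why the statement is for Riemannian $V^{n}$.) Your alternative suggestion that $\xi=\phi H+\mathrm{(AC)}$ "extends to the general case" cannot be used here, because it presupposes the gradient HV $H$ whose existence is precisely what is to be proved. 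As written, part (i) is therefore incomplete at its central step.
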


A Riemannian space is possible to admit more general collineations, e.g.
Curvature collineations. A full classification of the collineations of a
Riemannian space (with definite or indefinite metric) can be found in \cite%
{Katz69}. A summary of the above definitiosn is given in Table \ref{Table1}.

%TCIMACRO{\TeXButton{B}{\begin{table}[tcp] \centering}}%
%BeginExpansion
\begin{table}[tcp] \centering%
%EndExpansion
\caption{Collineations of a Riemannian space }%
\begin{tabular}{ccc}
\hline\hline
\textbf{Collineation~}$\mathcal{L}_{\xi }\mathbf{A}=\mathbf{B}$ & \textbf{\ }%
$\mathbf{A}$ & $\mathbf{B}$ \\ \hline
Killing Vector (KV) & $g_{ij}$ & $0$ \\
Homothetic vector (HV) & $g_{ij}$ & $2\psi g_{ij},~\psi _{,i}=0$ \\
Conformal Killing vector (CKV) & $g_{ij}$ & $2\psi g_{ij},~\psi ,_{i}\neq 0$
\\
Affine Collineation (AC) & $\Gamma _{jk}^{i}$ & $0$ \\
Proj. Collineations (PC) & $\Gamma _{jk}^{i}$ & $2\phi _{(,j}\delta
_{k)}^{i},$ $\phi ,_{i}\neq 0$ \\
Sp. Proj. collineation (sp.PC) & $\Gamma _{jk}^{i}$ & $2\phi _{(,j}\delta
_{k)}^{i},$ $\phi _{;jk}=0$ \\ \hline\hline
\end{tabular}%
\label{Table1}%
%TCIMACRO{\TeXButton{E}{\end{table}}}%
%BeginExpansion
\end{table}%
%EndExpansion

We note that the Lie symmetries $\left( sl\left( 3,R\right) \right) $\ of
the free particle form the projective algebra of the two dimensional
Euclidian space. Therefore, the natural question which arises is the
following:

\begin{center}
\textit{Is there any connection between the Lie symmetries of differential
equations of second order and collineations? }
\end{center}

In the following chapters this will be confirmed and will be used to apply
the Lie symmetries of DEs in various areas of Geometry and Physics.

\part{Symmetries of ODEs}

\chapter{Lie symmetries of geodesic equations\label{LieSymGECh}}

\section{Introduction}

In a Riemannian space the affinely parameterized geodesics are determined
uniquely by the metric. Therefore one should expect a close relation between
the geodesics as a set of homogeneous ordinary differential equations (ODEs)
linear in the highest order term and quadratic non-linear in first order
terms, and the metric as a second order symmetric tensor. A system of such
ODEs is characterized (perhaps not fully) by its Lie symmetries and a metric
by its collineations. Therefore it is reasonable to expect that the Lie
symmetries of the system of geodesics of a metric will be closely related
with the collineations of the metric. That such a relation exists it is easy
to see by the following simple example. Consider on the Euclidian plane a
family of straight lines parallel to the $x-$axis. These curves can be
considered either as the integral curves of the ODE\ $\frac{d^{2}y}{dx^{2}}%
=0 $ or as the geodesics of the Euclidian metric $dx^{2}+dy^{2}$.
Subsequently consider a symmetry operation defined by a reshuffling of these
lines without preserving necessarily their parametrization. According to the
first interpretation this symmetry operation is a Lie symmetry of the ODE\ \
$\ddot{y}=0$ and according to the second interpretation it is a (special)
projective symmetry of the Euclidian two dimensional metric.

What has been said for a Riemannian space can be generalized to a space in
which there is only a linear connection. In this case the geodesics are
called autoparallels (or paths) and they comprise again a system of ODEs
linear in the highest order term and quadratic non-linear in the first order
terms. In this case one is looking for relations between the Lie symmetries
of the autoparallels and the projective or affine collineations of the
connection.

The above matters have been discussed extensively in a series of interesting
papers. Classic is the work of Katzin and Levin \cite%
{KatzinL74,KatzinL76,KatzinL81}. Important contributions have also been done
by Aminova \cite{Aminova94,Aminova95,Aminova00,Aminova2006,Aminova2010},
Prince and Crampin \cite{Prince84} and many others. More recent is the work
of Feroze et al \cite{FMkvs}. In \cite{FMkvs} they have considered the KVs
of the metric and their relation to the Lie symmetries of the system of
affinely parameterized geodesics of maximally symmetric spaces of low
dimension. In the same paper a conjecture is made, which essentially says
that the maximally symmetric spaces of non-vanishing curvature do not admit
further Lie symmetries.

In the following we consider the set of autoparallels - not necessarily
affinely parameterized - of a symmetric connection. We find that the major
symmetry condition relates the Lie symmetries with the special projective
symmetries of the connection. A\ similar result has been obtained by Prince
and Crampin in \cite{Prince84} using the bundle formulation of second order
ODEs.

Furthermore, because the geodesic equations follow from the variation of the
geodesic Lagrangian defined by the metric and due to the fact that the
Noether symmetries are a subgroup of the Lie group of Lie symmetries of
these equations, one should expect a relation / identification of the
Noether symmetries of this Lagrangian with the projective collineations of
the metric or with its degenerates. Recent work in this direction has been
by Bokhari et all \cite{Bok06,Bok07} in which the relation of the Noether
symmetries with the KVs of some special spacetimes is discussed.

In section \ref{TheSymCon} we derive the Lie symmetry conditions for a
general system of second order ODE polynomial in the first derivatives. In
section \ref{Liegeod} we apply these conditions in the special case of
Riemannian spaces and in Theorem \ref{TheoremGEs} we give the Lie symmetry
vectors in terms of the special projective collineations of the metric and
their degenerates. In section \ref{NSSS} we give the second result of this
work, that is Theorem \ref{TheoremGE2}, which relates the Noether symmetries
of the geodesic Lagrangian defined by the metric with the homothetic algebra
of the metric and comment on the results obtained so far in the literature.
Finally in section \ref{GEsAppl} we apply the results to various cases and
eventually we give the Lie symmetries, the Noether symmetries and the
associated conserved quantities of Einstein spaces, the G\"{o}del spacetime,
the Taub spacetime and the Friedman Robertson Walker spacetimes.

\section{The Lie symmetry conditions in an affine space}

\label{TheSymCon}

We consider the system of ODEs:
\begin{equation}
\ddot{x}^{i}+\Gamma _{jk}^{i}\dot{x}^{j}\dot{x}^{k}+\sum%
\limits_{m=0}^{n}P_{j_{1}...j_{m}}^{i}\dot{x}^{j_{1}}\ldots \dot{x}^{j_{m}}=0
\label{de.1}
\end{equation}%
where $\Gamma _{jk}^{i}$ are the connection coefficients of the space and $%
P_{j_{1}...j_{m}}^{i}(t,x^{i})$ are smooth functions completely symmetric in
the lower indices and derive the Lie point symmetry conditions in geometric
form using the standard approach. Equation (\ref{de.1}) is quite general and
covers most of the standard cases autonomous and non autonomous equations.
For instance for~all $\mathbf{P}=0$ equation (\ref{de.1}) becomes%
\begin{equation}
\ddot{x}+\Gamma _{jk}^{i}\dot{x}^{j}\dot{x}^{k}=0  \label{de.1a}
\end{equation}%
which are the geodesic equations. In case $P^{i}\mathbf{=}F^{i}$ and $%
P_{j_{1}...j_{m}}^{i}=0$ equation (\ref{de.1}) becomes
\begin{equation}
\ddot{x}+\Gamma _{jk}^{i}\dot{x}^{j}\dot{x}^{k}+F^{i}\left( t,x^{i}\right) =0
\label{de.1b}
\end{equation}%
which are the equations of motions of a particle in a curved space under the
action of the force $\mathbf{F}$. Furthermore because the $\Gamma _{jk}^{i}$%
's are not assumed to be symmetric, the results are valid in a space with
torsion. Obviously they hold in a Riemannian space provided the connection
coefficients are given in terms of the Christofell symbols.

Following, the standard procedure (see e.g. \cite{StephaniB,BlumanB}) we
find that the Lie symmetry conditions for the values of $m\leq 4$ are%
\footnote{%
The detailed calculations are given in Appendix \ref{appendixLieCon}.}

\begin{align}
L_{\eta }P^{i}+2\xi ,_{t}P^{i}+\xi P^{i},_{t}+\eta ^{i},_{tt}+\eta
^{j},_{t}P_{.j}^{i}& =0  \label{de.02} \\
L_{\eta }P_{j}^{i}+\xi ,_{t}P_{j}^{i}+\xi P_{j}^{i},_{t}+\left( \xi
,_{k}\delta _{j}^{i}+2\xi ,_{j}\delta _{k}^{i}\right) P^{k}+2\eta
^{i},_{t|j}-\xi ,_{tt}\delta _{k}^{i}+2\eta ^{k},_{t}P_{.jk}^{i}& =0
\label{de.03} \\
L_{\eta }P_{jk}^{i}+L_{\eta }\Gamma _{jk}^{i}+\left( \xi ,_{d}\delta
_{(k}^{i}+\xi ,_{(k}\delta _{|d|}^{i}\right) P_{.j)}^{d}+\xi
P_{.kj,t}^{i}-2\xi ,_{t(j}\delta _{k)}^{i}+3\eta ^{d},_{t}P_{.dkj}^{i}& =0
\label{de.04} \\
L_{\eta }P_{.jkd}^{i}-\xi ,_{t}P_{.jkd}^{i}+\xi ,_{e}\delta
_{(k}^{i}P_{.dj)}^{e}+\xi P_{.jkd,t}^{i}+4\eta ^{e},_{t}P_{.jkde}^{i}-\xi
_{(,j|k}\delta _{d)}^{i}& =0~  \label{de.05}
\end{align}%
and for $\left( \dot{x}\right) ^{l},l~\geq 4$
\begin{align}
& L_{\eta }P_{j_{1}...j_{m}}^{i}+P_{j_{1}...j_{m}~,t}^{i}\xi +\left(
2-m\right) \xi _{,t}P_{j_{1}...j_{m}}^{i}+  \notag \\
& +\xi _{,r}\left( 2-\left( m-1\right) \right) P_{j_{1}...j_{m-1}}^{i}\delta
_{j_{m}}^{r}+\left( m+1\right) P_{j_{1}...j_{m+1}}^{i}\eta
_{,t}^{j_{m+1}}+\xi _{,j}P_{j_{1}...j_{m-1}}^{j}\delta _{j_{m}}^{i}=0.
\label{de.06}
\end{align}

From the above general relations it is possible to extract the Lie symmetry
conditions for the various values of functions $\mathbf{P}$. For example the
Lie symmetry conditions for the geodesic equations (\ref{de.1a}) are as
follows.
\begin{align}
\eta ^{i},_{tt}& =0  \label{de.09} \\
2\eta ^{i},_{t|j}-\xi ,_{tt}\delta _{k}^{i}& =0  \label{de.10} \\
L_{\eta }\Gamma _{jk}^{i}-2\xi ,_{t(j}\delta _{k)}^{i}& =0  \label{de.11} \\
\xi _{(,j|k}\delta _{d)}^{i}& =0.  \label{de.12}
\end{align}%
Moreover, the symmetry conditions for the system (\ref{de.1b}) are

\begin{align}
L_{\eta }F^{i}+2\xi ,_{t}F^{i}+\xi F^{i},_{t}+\eta ^{i},_{tt}& =0
\label{de.13} \\
\left( \xi ,_{k}\delta _{j}^{i}+2\xi ,_{j}\delta _{k}^{i}\right) F^{k}+2\eta
^{i},_{t|j}-\xi ,_{tt}\delta _{k}^{i}& =0  \label{de.14} \\
L_{\eta }\Gamma _{jk}^{i}-2\xi ,_{t(j}\delta _{k)}^{i}& =0  \label{de.15} \\
\xi _{(,j|k}\delta _{d)}^{i}& =0.  \label{de.16}
\end{align}

In the same manner we work for more terms of $\mathbf{P}.$ We note the
appearance of the term $L_{\eta }\Gamma _{jk}^{i}$ in these expressions.

\section{Lie symmetries of autoparallel equation}

\label{autop}

Consider a $C^{p}$ manifold $M$ of dimension $n$, endowed with a $\Gamma
_{jk}^{i}$ symmetric\footnote{%
The coefficients $\Gamma _{jk}^{i}$ in general are not symmetric in the
lower indices. In the autoparallel equation (\ref{PCA.1}) the antisymmetric
part of $\Gamma _{\lbrack jk]}^{i}$ (the torsion) does not play a role.}
connection. In a local coordinate system $\{x^{i}:i=1,\ldots ,n\}$ the
connection $\Gamma _{jk}^{i}\partial _{i}=\nabla _{j}\partial _{k}$ and the
\textbf{autoparallels} of the connection are defined by the requirement
\begin{equation}
\ddot{x}^{i}+\Gamma _{jk}^{i}\dot{x}^{j}\dot{x}^{k}+\phi \left( t\right)
\dot{x}^{i}=0  \label{PCA.1}
\end{equation}%
where $t$ is a parameter along the paths. When $\phi \left( t\right) $
vanishes, we say that the autoparallels are \textbf{affinely parameterized}
and in this case $t$ is called an\textbf{\ affine parameter}, that is one
has (\ref{de.1a}). Consider the infinitesimal transformation%
\begin{equation}
\bar{t}=t+\varepsilon \xi \left( t,x^{k}\right) ~,~~\bar{x}%
^{i}=x^{i}+\varepsilon \eta ^{i}\left( t,x^{k}\right)  \label{PCA.1a}
\end{equation}%
with infinitesimal generator
\begin{equation}
X=\xi \left( t,x^{k}\right) \partial _{t}+\eta ^{i}\left( t,x^{k}\right)
\partial _{t}.  \label{PCA.1B}
\end{equation}%
\ The autoparallels (\ref{PCA.1}) are invariant under transformation (\ref%
{PCA.1a}) if
\begin{equation}
X^{\left[ 2\right] }\left( \ddot{x}^{i}+\Gamma _{jk}^{i}\dot{x}^{j}\dot{x}%
^{k}+\phi \left( t\right) \dot{x}^{i}\right) =0\;  \label{PCA.1C}
\end{equation}%
where $X^{\left[ 2\right] }$ is the second prolongation of (\ref{PCA.1B}%
).~For $P_{j_{1}...j_{m}}^{i}=0~$for $m\neq 0$ and $P_{j_{1}}^{i}=\phi
\left( t\right) ,~~\Gamma _{jk}^{i}\left( x^{k}\right) +P_{jk}^{i}\left(
t,x^{k}\right) =\Gamma _{jk}^{i}\left( t,x^{k}\right) ~$and from conditions (%
\ref{de.02})-(\ref{de.06})~we have the Lie symmetry conditions for the
autoparallel equations (\ref{PCA.1})
\begin{equation}
\eta _{,tt}^{i}+\eta _{,t}^{i}\phi =0  \label{PCA.31}
\end{equation}%
\begin{equation}
\xi _{,tt}\delta _{j}^{i}-\xi \phi _{,t}\delta _{j}^{i}-2\left[ \eta
_{,tj}^{i}+\eta _{,t}^{k}\Gamma _{(kj)}^{i}\right] -\phi \xi _{,t}\delta
_{j}^{i}=0  \label{PCA.32}
\end{equation}

\begin{equation}
\mathcal{L}_{\mathbf{\eta }}\Gamma _{(jk)}^{i}=-2\phi \xi _{(,j}\delta
_{k)}^{i}+\xi \Gamma _{(kj),t}^{i}+2\xi _{,t(j}\delta _{k)}^{i}
\label{PCA.34}
\end{equation}
\begin{equation}
\xi _{(,j|k}\delta _{d)}^{i}=0\;.  \label{PCA.35}
\end{equation}

Define the quantity
\begin{equation}
\Phi =\xi _{,t}-\phi \xi .  \label{PCA.36}
\end{equation}%
Then condition (\ref{PCA.34}) is written $\left( \text{note that }\phi
,_{i}=0\right) $:%
\begin{equation}
L_{\eta }\Gamma _{(jk)}^{i}=2\Phi _{(,j}\delta _{k)}^{i}-\xi \Gamma
_{(kj),t}^{i}.  \label{PCA.37}
\end{equation}%
If we consider the vector\textbf{\ }$\mathbb{\xi }\mathbf{=}\xi \partial
_{t} $ (which does not have components along $\partial _{i}$) we compute%
\begin{equation*}
L_{\mathbf{\xi }}\Gamma _{(jk)}^{i}=\xi \Gamma _{(kj),t}^{i}
\end{equation*}%
hence (\ref{PCA.37}) is written
\begin{equation}
L_{X}\Gamma _{(jk)}^{i}=2\Phi _{(,j}\delta _{k)}^{i}.  \label{PCA.38}
\end{equation}

We observe that this condition is the condition~for a projective
collineation of the connection $\Gamma _{(jk)}^{i}$ along the symmetry
vector $X$ and with projecting function $\Phi $. Concerning the other
conditions we note that (\ref{PCA.32}) can be written in covariant form as
follows:
\begin{equation}
\Phi _{,t}\delta _{j}^{i}-2\eta _{,t\mid j}^{i}=0  \label{PCA.39}
\end{equation}%
where $\eta _{,t\mid j}^{i}=\eta _{,tj}^{i}+\eta _{,t}^{k}\Gamma _{(kj)}^{i}$
is the covariant derivative with respect to $\Gamma _{(kj)}^{i}$ of the
vector $\eta _{,t}^{i}.$ Condition (\ref{PCA.35}) implies that $\xi ,_{{i}}$
is a gradient KV of the metric of the space $x^{i}$. Condition (\ref{PCA.31}%
) is obviously in covariant form with respect to the index $i.$

We arrive at the following conclusion.

a) The conditions for the Lie symmetries of the autoparallel equations (\ref%
{PCA.1}) are covariant equations because if we consider the connection in
the augmented $n+1$ space $\{x^{i},t\}$, all components of $\Gamma $ which
contain an index along the direction of $t$ vanish, therefore the partial
derivatives with respect to $t$ can be replaced with covariant derivatives
with respect to $t$.

b) Equation (\ref{PCA.31}) gives the functional dependence of $\eta ^{i}$ on
$t$ and the non-affine parametrization function $\phi (t)$.

c) Equation (\ref{PCA.35}) gives that the vector $\xi _{,i}$ is a gradient
Killing vector of the $n-$dimensional space $\{x^{i}\}$.

d) Equation (\ref{PCA.32}) relates the functional dependence of $\eta ^{i}$
and $\xi $ in terms of $t$.

e) Equation (\ref{PCA.34}) is the most important equation for our purpose,
because it states that \emph{the symmetry vector (\ref{PCA.1a})\ is an
affine collineation in the jet space }$\{t,x^{i}\}$ because it preserves
both the geodesics and their affine parameter. In the space $\{x^{i}\}$ the
vectors $\eta ^{i}\emph{(t,x)}\partial _{x^{i}}$ \emph{are projective
collineations} because they preserve the geodesics but not necessarily their
parametrization.

In the following we restrict our considerations to the case of Riemannian
connections that is the $\Gamma _{jk}^{i}$ are symmetric and the covariant
derivative of the metric vanishes.

\section{Lie and Noether symmetries of geodesic equations}

\label{Liegeod}

We compute the Lie symmetry vectors of geodesics equations (\ref{de.1a})
with affine parametrization; that is, we assume $\phi =0$ and $\Gamma
_{jk,t}^{i}=0.$ The later is a reasonable assumption because the $\Gamma
_{jk}^{i\prime }$'s are computed in terms of the metric which does not
depend on the parameter $t$. Under these assumptions the symmetry conditions
for the geodesics equations (\ref{de.1a}) are (\ref{de.09})-(\ref{de.12}).
We proceed with the solution of this system of equations\footnote{%
See also \cite{Prince84} Table II.}.

Equation (\ref{de.09}) implies%
\begin{equation}
\eta ^{i}(t,x)=A^{i}(x)t+B^{i}(x)  \label{PCA.59}
\end{equation}%
where $A^{i}(x),B^{i}(x)$ are arbitrary differentiable vector fields.

The solution of equation (\ref{de.12}) is%
\begin{equation}
\xi (t,x)=C_{J}(t)S^{J}(x)+D(t)  \label{PCA.60}
\end{equation}%
where $C_{J}(t),D(t)$ are arbitrary functions of the affine parameter $t$
and $S^{J}(x)$ is a function whose gradient is a gradient KV, i.e.
\begin{equation}
S^{J}(x)_{|(i,j)}=0.  \label{PCA.61}
\end{equation}%
The index $J$ runs through the number of gradient KVs of the metric.
Condition (\ref{de.10}) gives%
\begin{equation}
2A(x)_{|j}^{i}=\left[ C_{J}(t),_{tt}S^{J}(x)+D(t),_{tt}\right] \delta
_{j}^{i}.  \label{PCA.61a}
\end{equation}%
Because the left hand side is a function of $x$ only, we must have%
\begin{align}
D(t),_{tt}& =M\Rightarrow D(t)=\frac{1}{2}Mt^{2}+Kt+L\text{ \ where }M,K,L%
\text{ constants}  \label{PCA.62} \\
C_{J}(t),_{tt}& =G_{J}=\text{constant }\Rightarrow C_{J}(t)=\frac{1}{2}%
G_{J}t^{2}+E_{J}t+F_{J}\text{\ where }G_{J},E_{J},F_{J}\text{ constants.}
\label{PCA.63}
\end{align}%
Replacing in (\ref{PCA.61a}) we find%
\begin{equation}
2A(x)_{;j}^{i}=\left( G_{J}S^{J}(x)+M\right) \delta _{j}^{i}\Rightarrow
A(x)_{i;j}=\frac{1}{2}\left( G_{J}S^{J}(x)+M\right) g_{ij}  \label{PCA.64}
\end{equation}%
where we have lowered the index because the connection is metric (i.e. $%
g_{ij|k}=0)$. The last equation implies that the vector $A(x)^{i}$ is a
conformal Killing vector with conformal factor $\psi =\frac{1}{2}%
(G_{J}S^{J}(x)+M)$. Because $A(x)_{[i;j]}=0$ \ this vector is a gradient CKV.

We continue with condition (\ref{de.11}) and replace $\eta ^{i}(t,x)$ from (%
\ref{PCA.59})%
\begin{align}
L_{A}\Gamma _{jk}^{i}t+L_{B}\Gamma _{jk}^{i}& =2\xi ,_{t}{}_{(,j}\delta
_{k)}^{i}=2\left[ \left( G_{J}t+E_{J}\right) S^{J}(x)+Mt+K\right]
_{|(j}\delta _{k)}^{i}=2\left( G_{J}t+E_{J}\right) S^{J}(x)_{|(j}\delta
_{k)}^{i}\Rightarrow  \notag \\
L_{A}\Gamma _{jk}^{i}& =2G_{J}S^{J}(x)_{,(j}\delta _{k)}^{i}  \label{PCA.65}
\\
L_{B}\Gamma _{jk}^{i}& =2E_{J}S^{J}(x)_{,(j}\delta _{k)}^{i}.  \label{PCA.66}
\end{align}

The last two equations imply that the vectors $A^{i}(x),$ $B^{i}(x)$ \ are
\emph{special projective collineations} or \emph{affine collineations} of
the metric - or one of their specializations - with projective functions $%
G_{J}S^{J}(x)$ and $E_{J}S^{J}(x)$ or zero respectively. Note that relations
(\ref{PCA.65}), (\ref{PCA.66}) remain true if we add a KV to the vectors $%
A^{i}(x),$ $B^{i}(x)$ , therefore these vectors are determined up to a KV%
\footnote{%
Because $A^{i}$ is a projective collineation and a CKV it must be a HV.}.

It is well known that in a Riemannian space a CKV\ $K^{i}$ with conformal
factor $\psi (x^{i})$ satisfies the identity:%
\begin{equation}
L_{K}\Gamma _{(jk)}^{i}=g^{is}\left[ \psi _{,j}g_{ks}+\psi _{,k}g_{js}-\psi
_{,s}g_{jk}\right] .  \label{PCA.67}
\end{equation}%
Applying this identity to the CKV $A^{i}$ we find:%
\begin{equation}
G_{J}S^{J}(x)_{,k}=0\Rightarrow G_{J}S^{J}(x)=2\rho =\text{constant.}
\label{PCA.68}
\end{equation}

This implies that $A^{i}$\emph{\ is a gradient HV (not necessarily proper)
with homothetic factor} $\rho +\frac{1}{2}M$ . Furthermore (\ref{PCA.64})
implies:%
\begin{align}
2A^{i}& =(2\rho +M)x^{i}+2L^{i}\Rightarrow  \notag \\
A^{i}& =(\rho +\frac{1}{2}M)x^{i}+L^{i}  \label{PCA.68a}
\end{align}%
where $L^{i}$ is a non-gradient KV.

We continue with the special projective collineation vector $B^{i}$. For
this vector we use the property that for a symmetric connection the
following identity, holds%
\begin{equation*}
\mathcal{L}_{\mathbf{B}}\Gamma _{(jk)}^{i}=B_{;jk}^{i}-R_{jkl}^{i}B^{l}.
\end{equation*}%
Replacing the left hand side from (\ref{PCA.66}) we find%
\begin{equation}
B_{;jk}^{i}-R_{jkl}^{i}B^{l}=2E_{J}S^{J}(x)_{,(j}\delta _{k)}^{i}.
\label{PCA.69}
\end{equation}%
Contracting the indices $i,j$ we find%
\begin{equation}
\left( B_{;i}^{i}-(n+1)E_{J}S^{J}(x)\right) _{;k}=0  \label{PCA.70}
\end{equation}%
which implies
\begin{equation}
B_{;i}^{i}=(n+1)E_{J}S^{J}(x)+2b  \label{PCA.70a}
\end{equation}%
where $b=$constant. In case this vector is an affine collineation then $%
B_{;i}^{i}=2b.$ Using the above results we find for $\xi (t,x)$%
\begin{align}
\xi (t,x)& =C_{J}(t)S^{J}+D(t)  \notag \\
& =\left( \frac{1}{2}G_{J}t^{2}+E_{J}t+F_{J}\right) S^{J}+\frac{1}{2}%
Mt^{2}+Kt+L  \notag \\
& =\frac{1}{2}(G_{J}S^{J}+M)t^{2}+(E_{J}S^{J}+K)t+F_{J}S^{J}+L  \notag
\end{align}

We summarize the above results in the following Theorem.

\begin{theorem}
\label{TheoremGEs} The Lie symmetry vector
\begin{equation*}
X=\xi (t,x)\partial _{t}+\eta ^{i}(t,x)\partial _{x^{i}}
\end{equation*}%
of the equation of geodesics (\ref{de.1a}) in a Riemannian space is
generated from the elements of the special projective algebra as follows.

Case A. The metric admits gradient KVs. Then

a. The function%
\begin{equation}
\xi (t,x)=\frac{1}{2}\left( G_{J}S^{J}+M\right) t^{2}+\left[ E_{J}S^{J}+K%
\right] t+F_{J}S^{J}+L,  \label{PCA.70b}
\end{equation}%
where $G_{J},M,b,K,F_{J}$ and $L$ \ are constants and the index $J\ $runs
along the number of gradient KVs

b. The vector%
\begin{equation}
\eta ^{i}(t,x)=A^{i}(x)t+B^{i}(x)+D^{i}(x)  \label{PCA.70d}
\end{equation}%
where the vector $A^{i}(x)$ is a gradient HV with conformal factor $\psi =%
\frac{1}{2}\left( G_{J}S^{J}+M\right) $ (if it exists), $D^{i}(x)$ is a
non-gradient KV of the metric and $B^{i}(x)$ is either a special projective
collineation with projection function $E_{J}S^{J}(x)$ or an AC and $E_{J}=0$
in (\ref{PCA.70b}).

Case \noindent B. The metric does not admit gradient KVs. Then

a. The function%
\begin{equation}
\xi (t,x)=\frac{1}{2}Mt^{2}+Kt+L  \label{PCA.70cc}
\end{equation}

b. The vector%
\begin{equation}
\eta ^{i}(t,x)=A^{i}(x)t+B^{i}(x)+D^{i}(x),  \label{PCA.70e}
\end{equation}%
where $A^{i}(x)$ is a gradient HV with conformal factor $\psi =\frac{1}{2}M,$
$D^{i}(x)$ is a non-gradient KV\ of the metric and $B^{i}(x)$ is an AC. If
in addition the metric does not admit a gradient HV, then%
\begin{align}
\xi (t)& =Kt+L  \label{PCA.70g} \\
\eta ^{i}(x)& =B^{i}(x)+D^{i}(x).  \label{PCA.70f}
\end{align}
\end{theorem}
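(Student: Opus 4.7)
\textbf{Proof proposal for Theorem \ref{TheoremGEs}.} The plan is to integrate the Lie symmetry conditions (\ref{de.09})--(\ref{de.12}) in the order they most naturally separate variables, exploiting the fact that $\Gamma^i_{jk}$ and $g_{ij}$ are independent of $t$. First, equation (\ref{de.09}) is ODE in $t$ only, giving immediately $\eta^i(t,x)=A^i(x)t+B^i(x)$ for some $x$-dependent vector fields. Next, equation (\ref{de.12}) $\xi_{(,j|k}\delta^i_{d)}=0$ forces $\xi_{,j}$ to be a gradient Killing covector on the base manifold; integrating this yields $\xi(t,x)=C_J(t)S^J(x)+D(t)$, where each $S^J$ is a potential for a gradient KV and the index $J$ ranges over the (possibly empty) set of gradient KVs.

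The heart of the argument is the substitution of the above ans\"atze into (\ref{de.10}). Because the left hand side $2A(x)^i_{|j}$ depends only on $x$ and the right hand side $\xi_{,tt}\delta^i_j=(C_{J,tt}S^J+D_{,tt})\delta^i_j$ carries all the $t$-dependence, separation of variables forces $D_{,tt}=M$ and $C_{J,tt}=G_J$ with $M,G_J$ constants, giving the polynomial forms of $D(t)$ and $C_J(t)$ announced in (\ref{PCA.62})--(\ref{PCA.63}), and
\[
A_{(i;j)}=\tfrac{1}{2}(G_J S^J+M)g_{ij}.
\]
Since $A_{[i;j]}=0$ automatically (one checks that $A_i\,dx^i$ is closed via the same separation), $A^i$ is a \emph{gradient} CKV with conformal factor $\psi=\tfrac12(G_J S^J+M)$. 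Applying the standard identity (\ref{PCA.67}) for $L_K\Gamma^i_{(jk)}$ along a CKV to $A^i$ and matching with what (\ref{de.11}) will require yields $\psi_{,k}=0$, i.e.\ $G_J S^J=\text{const}$; in particular $A^i$ is actually a gradient HV, and one may solve (\ref{PCA.68a}) for $A^i$ explicitly up to a non-gradient KV.

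Finally I substitute everything into (\ref{de.11}). Using $\xi_{,t}=C_{J,t}S^J+D_{,t}$ and the fact that $\Gamma^i_{jk,t}=0$, the condition splits into the $t$-linear and $t$-constant pieces
\[
L_A\Gamma^i_{jk}=2G_J S^J_{,(j}\delta^i_{k)},\qquad L_B\Gamma^i_{jk}=2E_J S^J_{,(j}\delta^i_{k)}.
\]
By the definition of projective collineations these say precisely that $A^i$ and $B^i$ are special projective collineations with projection functions $G_J S^J$ and $E_J S^J$ respectively, or in the degenerate case $E_J=0$ that $B^i$ is an affine collineation; since both conditions are invariant under adding a KV, the decomposition $\eta^i=A^i t+B^i+D^i$ in the statement follows, with $D^i$ a non-gradient KV absorbed freely. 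Assembling the formulas for $\xi(t,x)$ and $\eta^i(t,x)$ produces Case A, and specializing to the situation with no gradient KVs (so all $S^J$ are absent, i.e.\ $G_J=E_J=F_J=0$) yields Case B, with the sub-case of no gradient HV killing $A^i$ as well and giving (\ref{PCA.70g})--(\ref{PCA.70f}).

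The main obstacle I expect is the rigorous deduction that the gradient CKV $A^i$ must actually be a gradient HV: this requires carefully combining the CKV identity (\ref{PCA.67}) with condition (\ref{de.11}) rather than using (\ref{de.10}) alone, since naively (\ref{de.10}) only gives a CKV. Once that step is cleaned up, the rest is bookkeeping of the $t$-powers and recognition of the projective / affine collineation conditions.
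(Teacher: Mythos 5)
Your proposal is correct and follows essentially the same route as the paper: integrate (\ref{de.09}) and (\ref{de.12}) to get the forms of $\eta^i$ and $\xi$, separate the $t$-dependence in (\ref{de.10}) to conclude $A^i$ is a gradient CKV with factor $\tfrac12(G_JS^J+M)$, split (\ref{de.11}) into its $t$-linear and $t$-constant parts to identify $A^i,B^i$ as special projective/affine collineations, and then invoke the CKV identity (\ref{PCA.67}) to force $G_JS^J=\mathrm{const}$ so that $A^i$ is in fact a gradient HV. The "obstacle" you flag at the end is handled in the paper exactly as you propose; the only detail you omit is the supplementary computation of $B^i{}_{;i}$ via the curvature identity, which is not needed for the statement of the theorem.
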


\subsection{Noether symmetries and conservation laws}

\label{NSSS}

In a Riemannian space the equations of geodesics (\ref{de.1a}) are produced
from the geodesic Lagrangian:%
\begin{equation}
L=\frac{1}{2}g_{ij}\dot{x}^{i}\dot{x}^{j}~.  \label{NS.6}
\end{equation}%
The infinitesimal generator
\begin{equation}
X=\xi \left( t,x^{k}\right) \partial _{t}+\eta ^{i}\left( t,x^{k}\right)
\partial _{x^{i}}  \label{NS.6a}
\end{equation}%
is a Noether symmetry of Lagrangian (\ref{NS.6}) if there exists a smooth
function $f(t,x^{i})$ such that%
\begin{equation}
X^{\left[ 1\right] }L+\frac{d\xi }{dt}L=\frac{df}{dt}  \label{NS.7}
\end{equation}%
where
\begin{equation*}
X^{\left[ 1\right] }=\xi \left( t,x^{k}\right) \partial _{t}+\eta ^{i}\left(
t,x^{k}\right) \partial _{x^{i}}+\left( \eta _{\left[ 1\right] }^{i}\right)
\partial _{\dot{x}^{i}}
\end{equation*}%
is the first prolongation of $\mathbf{X}.$ We compute%
\begin{equation*}
X^{\left[ 1\right] }L=\frac{1}{2}\left( \eta ^{k}g_{ij,k}\dot{x}^{i}\dot{x}%
^{j}+2\frac{d\eta ^{k}}{dt}g_{ik}\dot{x}^{i}-2\dot{x}^{i}\dot{x}^{j}\frac{%
d\xi }{dt}g_{ij}\right) .
\end{equation*}%
Replacing the total derivatives in the rhs
\begin{align*}
\frac{d\xi }{dt}& =\xi _{,t}+\dot{x}^{k}\xi _{,k} \\
\frac{d\eta ^{i}}{dt}& =\eta _{,t}^{i}+\dot{x}^{k}\eta _{,k}^{i}
\end{align*}%
we find that%
\begin{equation*}
X^{\left[ 1\right] }L=\frac{1}{2}\left(
\begin{array}{c}
\eta ^{k}g_{ij,k}\dot{x}^{i}\dot{x}^{j}+2\eta _{,t}^{i}g_{ij}\dot{x}%
^{j}+\eta _{,r}^{i}g_{ik}\dot{x}^{k}\dot{x}^{r}+ \\
+\eta _{,r}^{i}g_{kj}\dot{x}^{k}\dot{x}^{r}-2\xi _{,t}g_{ij}\dot{x}^{i}\dot{x%
}^{j}-2\xi _{,k}g_{ij}\dot{x}^{i}\dot{x}^{j}\dot{x}^{k}%
\end{array}%
\right) .
\end{equation*}%
The term%
\begin{equation*}
\frac{d\xi }{dt}L=\frac{1}{2}\left( \xi _{,t}+\dot{x}^{k}\xi _{,k}\right)
g_{ij}\dot{x}^{i}\dot{x}^{j}.
\end{equation*}%
Finally the Noether symmetry condition (\ref{NS.7}) is

\begin{equation*}
-2f_{,t}+\left[ 2\eta _{,t}^{i}g_{ij}-2f_{,i}\right] \dot{x}^{j}-\xi
_{,k}g_{ij}\dot{x}^{i}\dot{x}^{j}\dot{x}^{k}+\left[ \eta ^{k}g_{ij,k}+\eta
_{,i}^{k}g_{ik}+\eta _{,i}^{k}g_{kj}-g_{ij}\xi _{,t}\right] \dot{x}^{i}\dot{x%
}^{j}=0.
\end{equation*}

This relation is an identity hence the coefficient of each power of $\dot{x}%
^{j}$ must vanish. This results in the equations:
\begin{align}
\xi _{,k}& =0  \label{NS.8} \\
L_{\eta }g_{ij}& =2\left( \frac{1}{2}\xi _{,t}\right) g_{ij}  \label{NS.9} \\
\eta _{,t}^{,i}g_{ij}& =f_{,i}  \label{NS.10} \\
f_{,t}& =0  \label{NS.11}
\end{align}

\bigskip Condition (\ref{NS.8}) gives $\xi_{,k}=0 \Rightarrow \xi=\xi\left(
t\right) $.

Condition (\ref{NS.11}) implies $f\left( x^{k}\right) $ and then condition (%
\ref{NS.8}) gives that $\eta ^{i}$ is of the form:%
\begin{equation}
\eta _{i}=f_{,i}t+K_{i}(x^{j}).  \label{NS.12}
\end{equation}%
Then from (\ref{NS.9}) follows that $\xi _{,t}$ must be at most linear in $%
t. $ Hence $\xi (t)$ must be at most a function of $t^{2}.$ Furthermore from
(\ref{NS.9}) follows that $\eta ^{i}$ is at most a CKV with conformal factor
$\psi _{H}=\frac{1}{2}(At+B),$ where $A,B$ are constants. We consider
various cases.

Case 1: Suppose $\xi =$constant=$C_{1}$. Then $\eta ^{i}$ is a KV\ of the
metric which is independent of $t.$ This implies that either $f_{,i}=0$ and $%
f=$constant $=A=0$ or that $f_{,i}$ is a gradient KV. In this case the
Noether symmetry vector is:%
\begin{equation*}
X^{i}=C_{1}\partial _{t}+g^{ij}\left( f_{,j}t+K_{j}(x^{r})\right) ,
\end{equation*}%
where $K^{i}$ is a non-gradient KV of $g_{ij}.$

Case 2: Suppose $\xi =2t.$ Then $\eta ^{i}$ is a HV of the metric $g_{ij}$
with homothetic factor $1$. Then $\eta _{i}=H_{i}(x^{j})$ , $%
f_{,i}=0\Rightarrow f=$constant$=0$ where $H^{i}$ is a HV of $g_{ij}$ with
homothetic factor $\psi ,$ not necessarily a gradient HV. In this case the
Noether symmetry vector is:%
\begin{equation*}
X^{i}=2\psi t\partial _{t}+H^{i}(x^{r}).
\end{equation*}

Case 3: $\xi (t)=t^{2}.$ Then $\eta ^{i}$ is a HV of the metric $g_{ij}$
(the variable $t$ cancels) with homothetic factor $1.$ Again $f_{,i}$ is a
gradient HV with homothetic factor $\psi $ and the Noether symmetry vector is%
\begin{equation*}
X^{i}=\psi t^{2}\partial _{t}+g^{ij}f_{,j}t.
\end{equation*}
Therefore we have the result.

\begin{theorem}
\label{TheoremGE2}The Noether Symmetries of the geodesic Lagrangian follow
from the KVs and the HV of the metric $g_{ij}$ as follows:%
\begin{eqnarray}
X &=&\left( C_{3}\psi t^{2}+2C_{2}\psi t+C_{1}\right) \partial _{t}+  \notag
\\
&&+\left[ C_{J}S^{J,i}+C_{I}KV^{Ii}+C_{IJ}tS^{J,i}+C_{2}H^{i}+C_{3}t(GHV)^{i}%
\right] \partial _{i}  \label{NS.14a}
\end{eqnarray}%
with corresponding gauge function%
\begin{equation}
f(x^{k})=C_{1}+C_{2}+C_{I}+C_{J}+\left[ C_{IJ}S^{J}\right] +C_{3}\left[ GHV%
\right] ,  \label{NS.15}
\end{equation}%
where $S^{J,i}$ are the $C_{J}$ gradient KVs, $KV^{Ii}$ are the $C_{I}$
non-gradient KVs, $H^{i}$ is a HV\ not necessarily gradient and $(GHV)^{i}$
is the gradient HV (if it exists) of the metric $g_{ij}$.
\end{theorem}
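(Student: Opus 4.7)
The plan is to apply the Noether symmetry condition \eqref{NS.7} directly to the geodesic Lagrangian \eqref{NS.6} and read off the constraints on $\xi$, $\eta^{i}$, and the gauge $f$ by collecting powers of $\dot{x}^{i}$. In fact the system \eqref{NS.8}--\eqref{NS.11} is already derived in the discussion preceding the theorem, so what remains is to integrate this system and exhibit the generic vector \eqref{NS.14a}.

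First I would observe that \eqref{NS.8} forces $\xi=\xi(t)$, while \eqref{NS.11} forces $f=f(x^{k})$; together with \eqref{NS.10} this implies the decomposition
\begin{equation*}
\eta_{i}(t,x)=f_{,i}(x)\,t+K_{i}(x).
\end{equation*}
Substituting this into the conformal--type equation \eqref{NS.9} and separating the $t$-dependence yields two conditions: at order $t^{0}$ the vector $K^{i}$ must satisfy $L_{K}g_{ij}=2\psi_{0}g_{ij}$ with $\psi_{0}=\tfrac{1}{2}\xi_{,t}|_{t=0}$, and at order $t^{1}$ the vector $g^{ij}f_{,j}$ must satisfy $L_{g^{\cdot j}f_{,j}}g_{ij}=2\psi_{1}g_{ij}$ with $\psi_{1}=\tfrac{1}{2}\xi_{,tt}$. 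Since both conformal factors are constants (independent of $x^{k}$), each vector is either a KV or a HV; moreover, because $g^{ij}f_{,j}$ is by construction a gradient, any homothetic part of it must be a gradient HV. This also forces $\xi_{,tt}$ to be constant, so that $\xi(t)$ is at most a polynomial of degree two in $t$.

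The remaining step is to assemble the free parameters. Writing $\xi(t)=\tfrac{1}{2}At^{2}+Bt+C$, the constant term contributes the $C_{1}\partial_{t}$ piece together with an arbitrary KV $C_{I}KV^{Ii}$ and the gradient contribution $C_{J}S^{J,i}$ (arising when $f_{,i}$ is a gradient KV, so $f=C_{J}S^{J}$). The linear-in-$t$ term produces the HV piece $C_{2}H^{i}$ with $\xi=2\psi C_{2}t$, and allows the additional gradient-KV term $C_{IJ}tS^{J,i}$ with gauge $C_{IJ}S^{J}$. The quadratic-in-$t$ term survives only when a gradient HV $(GHV)^{i}$ exists, contributing $C_{3}t(GHV)^{i}$ with $\xi=C_{3}\psi t^{2}$ and gauge $C_{3}(GHV)$. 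Adding the independent pieces gives the generic Noether vector \eqref{NS.14a} with gauge \eqref{NS.15}.

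The only delicate point, which I view as the main obstacle, is the careful separation of powers of $t$ in \eqref{NS.9} combined with the observation that the Lie derivative of a metric along a gradient vector can yield at most a \emph{constant} conformal factor (i.e.\ a gradient HV) and never a proper CKV; this is what prevents proper conformal Killing vectors from producing Noether symmetries of the geodesic Lagrangian, and is what ultimately restricts the Noether algebra to the homothetic algebra of $g_{ij}$ augmented by the trivial time translations and rescalings.
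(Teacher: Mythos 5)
Your proposal is correct and follows essentially the same route as the paper: the paper likewise integrates the system \eqref{NS.8}--\eqref{NS.11}, deduces $\xi=\xi(t)$, $f=f(x^{k})$, writes $\eta_{i}=f_{,i}t+K_{i}(x)$, and then splits into the cases $\xi=\mathrm{const}$, $\xi\propto t$, $\xi\propto t^{2}$, which is just your separation of \eqref{NS.9} by powers of $t$ done case by case. One caveat on your closing remark: it is \emph{not} true in general that the Lie derivative of the metric along a gradient vector yields at most a constant conformal factor --- proper gradient CKVs with non-constant $\psi$ exist (the paper itself exhibits the normal form of a metric admitting one). The correct reason proper CKVs are excluded here is the one your own argument already supplies: the conformal factor in \eqref{NS.9} equals $\tfrac{1}{2}\xi_{,t}$, a function of $t$ alone, while each coefficient of the $t$-expansion of $L_{\eta}g_{ij}$ depends only on $x^{k}$, so matching forces both conformal factors to be constants.
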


The importance of Theorems \ref{TheoremGEs} and \ref{TheoremGE2} are that
one is able to compute the Lie symmetries and the Noether symmetries of the
geodesic equations in a Riemannian space by computing the corresponding
collineation vectors avoiding the cumbersome formulation of the Lie symmetry
method. It is also possible to use the inverse approach and prove that a
space does not admit KVs, HVs, ACs and special PCs by using the
calculational approach of the Lie symmetry method (assisted with algebraic
manipulation programmes)\ and avoid the hard approach of Differential
Geometric methods. In Section \ref{GEsAppl} we demonstrate the use of the
above results.

\subsubsection{Noether Integrals of geodesic equations}

We know that, if the infinitesimal generator (\ref{NS.6a}) is a Noether
symmetry with Noether function $f,$ then the quantity:
\begin{equation}
\phi =\xi \left( \dot{x}^{i}\frac{\partial L}{\partial \dot{x}^{i}}-L\right)
-\eta ^{i}\frac{\partial L}{\partial \dot{x}^{i}}+f  \label{NS.16}
\end{equation}%
is a First Integral of $L$ which satisfies $X\phi $\thinspace $=0,~\frac{%
d\phi }{dt}=0$. For the Lagrangian defined by the metric $g_{ij},$ i.e. $L=%
\frac{1}{2}g_{ij}\dot{x}^{i}\dot{x}^{j},$ we compute:
\begin{equation}
\phi =\frac{1}{2}\xi g_{ij}\dot{x}^{i}\dot{x}^{j}-g_{ij}\eta ^{i}\dot{x}%
^{j}+f.  \label{NS.17}
\end{equation}%
In (\ref{NS.14a}) we have computed the generic form of the Noether symmetry
and the associated Noether function for this Lagrangian. Substituting into (%
\ref{NS.17}) we find the following expression for the generic First Integral
of the geodesic equations:%
\begin{align}
\phi & =\frac{1}{2}\left[ C_{3}\psi t^{2}+2C_{2}\psi t+C_{1}\right] g_{ij}%
\dot{x}^{i}\dot{x}^{j}  \notag \\
& +\left[
C_{J}S^{J,i}+C_{I}KV^{Ii}+C_{IJ}tS^{J,i}+C_{2}H^{i}(x^{r})+C_{3}t(GHV)^{,i}%
\right] g_{ij}\dot{x}^{j}  \notag \\
& +C_{1}+C_{2}+C_{I}+C_{J}+\left[ C_{IJ}S^{J}\right] +C_{3}\left[ GHV\right]
.  \label{NS.21}
\end{align}

From the generic expression we obtain the following first integrals\footnote{%
GHV stands for gradient HV}%
\begin{eqnarray}
C_{I} &\neq &0:\phi _{C_{I}}=KV_{i}^{I}\dot{x}^{i}-C_{I}  \label{NS.22} \\
C_{J} &\neq &0:g_{ij}S^{J,i}\dot{x}^{j}-C_{J}  \label{NS.23} \\
C_{IJ} &\neq &0:tg_{ij}S^{J,i}\dot{x}^{j}-S^{J}  \label{NS.24} \\
C_{1} &\neq &0:\phi _{C_{1}}=\frac{1}{2}g_{ij}\dot{x}^{i}\dot{x}^{j}
\label{NS.25} \\
C_{2} &\neq &0:\phi _{C_{2}}=t\psi g_{ij}\dot{x}^{i}\dot{x}^{j}-g_{ij}H^{i}%
\dot{x}^{j}+C_{2}  \label{NS.26} \\
C_{3} &\neq &0:\phi _{C_{3}}=\frac{1}{2}t^{2}\psi g_{ij}\dot{x}^{i}\dot{x}%
^{j}-t(GHV)_{,i}\dot{x}^{i}+\left[ GHV\right] .  \label{NS.27}
\end{eqnarray}

We conclude that the first Integrals of the Noether symmetry vectors of the
geodesic equations are:

a) linear, the $\phi _{I},\phi _{J},\phi _{IJ}\,\ $

b) quadratic, the $\phi _{c1},\phi _{c2},\phi _{3}.~$

These results are compatible with the corresponding results of Katzin and
Levine \cite{KatzinL81}.

In a number of recent papers \ \cite{Feroze2011,Hussain2010,Feroze2010}, the
authors study the relation between the Noether symmetries of the geodesic
Lagrangian. They also make a conjecture concerning the relation between the
Noether symmetries and the conformal algebra of spacetime and concentrate
especially on conformally flat spacetimes. In \cite{Feroze2010} it is also
claimed that the author has found new conserved quantities for spaces of
different curvatures, which seem to be of non Noetherian character.
Obviously due to the above results the conjecture/results in these papers
should be revised and the word `conformal' should be replaced with the word
`homothetic'.

\section{Applications}

\label{GEsAppl}

We apply the general Theorems \ref{TheoremGEs} and \ref{TheoremGE2} in
various curved spaces where we determine the Lie and the Noether symmetries
of the corresponding geodesic equations.

\subsection{The geodesic symmetries of Einstein spaces}

Suppose $Y$ is a projective collineation with projection function $\phi
(x^{k}),$ such that
\begin{equation*}
L_{Y}\Gamma _{jk}^{i}=\phi _{,j}\delta _{k}^{i}+\phi _{,k}\delta _{j}^{i}.
\end{equation*}%
For a proper Einstein space $(R\neq 0)$ we have $R_{ab}={\frac{R}{n}g_{ab}}$
~from which follows \cite{StephaniE}%
\begin{equation}
L_{Y}g_{ab}=\frac{n(1-n)}{R}\phi _{;ab}-L_{Y}(\ln R)g_{ab}.  \label{NPP.2}
\end{equation}

Using the contracted Bianchi identity
\begin{equation*}
\left[ R^{ij}-\frac{1}{2}Rg^{ij}\right] _{;j}=0
\end{equation*}
it follows that in an Einstein space of dimension\footnote{%
Recall that all two dimensional spaces are Einstein spaces.} $n>2$ the
curvature scalar $R=$constant and (\ref{NPP.2}) reduces to%
\begin{equation*}
L_{Y}g_{ab}=\frac{n(1-n)}{R}\phi _{;ab}.
\end{equation*}

It follows that if $Y^{i}$ generates either an affine or a special
projective collineation, then $\phi _{;ab}=0.$ Hence $X^{a}$ reduces to a
KV. This means that proper Einstein spaces do not admit HV, ACs, special PCs
and gradient KVs (\cite{Yano,Barnes})

The above results and Theorem \ref{TheoremGEs} lead to the following result.

\begin{proposition}
\label{ThES}The Lie symmetries of the geodesic equations in a proper
Einstein space with curvature scalar $R$ $\neq 0$ are given by the vectors
\begin{equation*}
X=\left( Kt+L\right) \partial _{t}+D^{i}\left( x\right) \partial _{i}~
\end{equation*}%
where $D^{i}(x)~$is a nongradient KV and $K,L~$are constants
\end{proposition}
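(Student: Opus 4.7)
The plan is to invoke Theorem \ref{TheoremGEs}, which expresses the generic Lie symmetry of the geodesic equations as a combination of gradient KVs, non-gradient KVs, HVs, ACs and special PCs of the metric. The proof then reduces to showing that, in a proper Einstein space with $R\neq 0$ and $n>2$, the only surviving members of this collineation list are non-gradient KVs, at which point the form of $X$ is read off from Case B of the Theorem.

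First I would establish that $R$ is constant. This follows from the contracted Bianchi identity $(R^{ij}-\tfrac{1}{2}Rg^{ij})_{;j}=0$ applied to the Einstein condition $R_{ij}=(R/n)g_{ij}$, which yields $(\tfrac{1}{n}-\tfrac{1}{2})R_{,i}=0$ and hence $R_{,i}=0$ for $n>2$. With $R$ constant, equation (\ref{NPP.2}) simplifies to
\begin{equation*}
L_{Y}g_{ab}=\tfrac{n(1-n)}{R}\,\phi _{;ab}
\end{equation*}
for any projective collineation $Y$ with projection function $\phi$. For an AC one has $\phi=0$, and for a special PC one has $\phi_{;ab}=0$ by definition; in both cases the right-hand side vanishes, so $L_{Y}g_{ab}=0$ and $Y$ reduces to a KV. This eliminates proper ACs and special PCs from the list of collineations in Theorem \ref{TheoremGEs}.

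Next I would rule out non-trivial HVs and gradient KVs. For a HV with constant homothetic factor $\psi$ the standard identity $L_{Y}R_{ab}=0$ combined with $R_{ab}=(R/n)g_{ab}$ and $L_{Y}g_{ab}=2\psi g_{ab}$ gives $2\psi R_{ab}=0$, hence $\psi=0$ since $R\neq 0$, so every HV is already a KV (in particular there is no gradient HV). For a gradient KV $Y_i=S_{,i}$, the Killing condition together with the gradient condition gives $S_{;ij}=0$; applying the Ricci identity to the vanishing one-form $\nabla Y$ yields $R^{d}{}_{cab}Y_{d}=0$ for all $c,a,b$, and contracting the appropriate indices together with $R_{ab}=(R/n)g_{ab}$ produces $(R/n)Y_{c}=0$, so $Y\equiv 0$. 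Thus no gradient KV exists.

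Having eliminated gradient KVs, gradient HVs, proper HVs, proper ACs and special PCs, I would apply Case B of Theorem \ref{TheoremGEs} in its strongest form (no gradient KV, no gradient HV), giving $\xi(t)=Kt+L$ and $\eta^{i}(x)=B^{i}(x)+D^{i}(x)$ with $B^{i}$ an AC. Since in a proper Einstein space every AC is a KV, $B^{i}$ merges into $D^{i}$ and one obtains precisely $X=(Kt+L)\partial_{t}+D^{i}(x)\partial_{i}$ with $D^{i}$ a non-gradient KV, as claimed. The only genuinely delicate step is the simultaneous collapse of ACs, sp.PCs and HVs to KVs via the Einstein condition; once this is in hand, the conclusion is immediate from Theorem \ref{TheoremGEs}.
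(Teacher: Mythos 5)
Your proof is correct and follows essentially the same route as the paper: constancy of $R$ via the contracted Bianchi identity, the relation $L_{Y}g_{ab}=\tfrac{n(1-n)}{R}\phi_{;ab}$ to collapse ACs and special PCs to KVs, and then reading off Case~B of Theorem~\ref{TheoremGEs}. The only difference is that you supply explicit micro-arguments (the Ricci identity for gradient KVs, $L_{Y}R_{ab}=0$ for HVs) for the facts the paper simply cites from \cite{Yano,Barnes}.
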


For the Noether symmetries of Einstein spaces we have the following

\begin{proposition}
\label{NES}The Noether symmetries of the geodesic equations in a proper
Einstein space with curvature scalar $R$\ $\neq 0$\ are given by the vectors%
\newline
\begin{equation*}
X=L\partial _{t}+D^{i}\left( x\right) \partial _{i}~~,~f=\,\text{constant}~
\end{equation*}
\end{proposition}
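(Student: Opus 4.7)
The plan is to combine the structural description of Noether symmetries of the geodesic Lagrangian given in Theorem \ref{TheoremGE2} with the rigidity of the collineation structure of proper Einstein spaces that was already invoked in the proof of Proposition \ref{ThES}. Recall that Theorem \ref{TheoremGE2} expresses every Noether symmetry vector in the form
\begin{equation*}
X=\left( C_{3}\psi t^{2}+2C_{2}\psi t+C_{1}\right) \partial _{t}+\left[ C_{J}S^{J,i}+C_{I}KV^{Ii}+C_{IJ}tS^{J,i}+C_{2}H^{i}+C_{3}t(GHV)^{i}\right] \partial _{i},
\end{equation*}
with the gauge function $f(x^{k})=C_{1}+C_{2}+C_{I}+C_{J}+C_{IJ}S^{J}+C_{3}(GHV)$. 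So the task reduces to identifying which of the geometric ingredients $S^{J}$, $KV^{Ii}$, $H^{i}$ and $(GHV)^{i}$ actually exist on a proper Einstein manifold with $R\neq 0$.

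For this I would reuse the identity $L_{Y}g_{ab}=\tfrac{n(1-n)}{R}\phi _{;ab}$ that was derived for the Einstein case together with the fact that $R$ is constant when $n>2$. As was observed before Proposition \ref{ThES}, this identity forces any homothetic vector, affine collineation or special projective collineation to degenerate to an ordinary Killing vector, and furthermore no gradient Killing vector can exist (otherwise one would obtain a non-trivial parallel one-form incompatible with $R_{ab}=\tfrac{R}{n}g_{ab}$, $R\neq 0$). Hence the space admits neither a homothetic vector $H^{i}$, nor a gradient homothetic vector $(GHV)^{i}$, nor any gradient Killing vectors $S^{J}$.

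Substituting these non-existence results back into the generic Noether symmetry of Theorem \ref{TheoremGE2}, every coefficient attached to a collineation that is not a non-gradient KV must be set to zero: the terms carrying $C_{2}$, $C_{3}$, $C_{J}$ and $C_{IJ}$ vanish identically. What remains is
\begin{equation*}
X=C_{1}\partial _{t}+C_{I}KV^{Ii}\partial _{i},\qquad f=C_{1}+C_{I}=\text{constant},
\end{equation*}
which is exactly the statement of the Proposition upon relabelling $C_{1}=L$ and absorbing $C_{I}KV^{Ii}$ into a generic non-gradient Killing field $D^{i}(x)$.

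There is no genuine obstacle to this argument: the hard analytic input, namely that proper Einstein spaces with non-zero scalar curvature admit only non-gradient Killing vectors among the homothetic and (special) projective collineations, has already been absorbed in the discussion preceding Proposition \ref{ThES}. The only point to be careful about is the two-dimensional case, where every metric is trivially Einstein but $R$ need not be constant; however the hypothesis $R\neq 0$ together with the Noether framework is used for $n>2$, so the Bianchi-identity step is legitimate. The rest is bookkeeping inside formula \eqref{NS.14a}.
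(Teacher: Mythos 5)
Your proposal is correct and follows essentially the same route as the paper: the paper also obtains Proposition \ref{NES} by combining the generic Noether symmetry of Theorem \ref{TheoremGE2} with the fact (established via the identity $L_{Y}g_{ab}=\frac{n(1-n)}{R}\phi _{;ab}$ and the constancy of $R$ for $n>2$) that proper Einstein spaces admit no HV, no gradient HV and no gradient KVs, so that only the $\partial _{t}$ term and the non-gradient KVs survive with constant gauge function. Your added remark justifying the absence of gradient KVs (a parallel vector field would force $\frac{R}{n}\xi _{b}=0$) is a small but correct supplement to what the paper merely cites.
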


Proposition \ref{ThES} extends and amends the conjecture of \cite{FMkvs} to
the more general case of Einstein spaces.

We apply the results to the maximally symmetric space of Euclidian 2d sphere.

\subsubsection{Euclidian 2d sphere}

The geodesic Lagrangian of the Euclidian 2d sphere is%
\begin{equation*}
L\left( \phi ,\dot{\phi},\theta ,\dot{\theta}\right) =\frac{1}{2}\dot{\phi}%
^{2}+\sin ^{2}\phi ~\dot{\theta}^{2}
\end{equation*}%
and the geodesic equations are
\begin{equation*}
\ddot{\phi}-\frac{1}{2}\sin 2\phi ~\dot{\theta}^{2}=0
\end{equation*}%
\begin{equation*}
\ddot{\theta}+\cot \phi ~\dot{\phi}\dot{\theta}=0
\end{equation*}

The Euclidian 2d sphere is an Einstein space with curvature scalar $R=2$,
therefore propositions (\ref{ThES}) and (\ref{NES}) apply. The KVs of the
Euclidian 2d sphere are the elements of the $so\left( 3\right) $ Lie algebra
(See example \ref{Kvs2dsphere})
\begin{equation*}
X_{1}=\sin \theta \partial _{\phi }+\cos \theta \cot \phi \partial _{\theta
}~\ ,~X_{2}=\cos \theta \partial _{\phi }-\sin \theta \cot \phi \partial
_{\theta }~~,~~X_{3}=\partial _{\theta }.
\end{equation*}

Consequently the Lie symmetries of geodesic equations are the elements of
the $so\left( 3\right) $ plus the vectors $\partial _{t},~t\partial _{t}$. $%
\ $Likewise$~$the Noether symmetries are the elements of $so\left( 3\right) $
plus the vector $\partial _{t}$. The corresponding Noether integrals are%
\begin{align*}
\phi _{1}& =\dot{\theta}\sin ^{2}\phi ~ \\
\phi _{2}& =\dot{\phi}\sin \theta +\frac{1}{2}\dot{\theta}\sin 2\phi \cos
\theta ~ \\
\phi _{3}& =\dot{\phi}\cos \theta -\frac{1}{2}\dot{\theta}\sin 2\phi \sin
\theta ~~
\end{align*}%
and the Hamiltonian constant.

\subsection{The geodesic symmetries of G\"{o}del spacetime}

The G\"{o}del metric in Cartesian coordinates is

\begin{equation*}
ds^{2}=-dt^{2}-2e^{ax}dtdy+dx^{2}-\frac{1}{2}e^{2ax}dy^{2}+dz^{2}.
\end{equation*}%
The geodesic Lagrangian is
\begin{equation}
L=\frac{1}{2}\left( -t^{\prime 2}-2e^{ax}t^{\prime }y^{\prime }+dx^{2}-\frac{%
1}{2}e^{2ax}y^{\prime 2}+z^{\prime 2}\right)  \label{gLa}
\end{equation}%
where $^{\prime }$ means $\frac{d}{ds}$ where $"s"$ is an affine parameter.
The geodesic equations are%
\begin{align*}
t^{\prime \prime }+2at^{\prime }x^{\prime }+ae^{ax}x^{\prime }y^{\prime }& =0
\\
x^{\prime \prime }+ae^{ax}t^{\prime }y^{\prime }+\frac{1}{2}%
ae^{2ax}y^{\prime 2}& =0 \\
y^{\prime \prime }-2ae^{-ax}t^{\prime }x^{\prime }& =0 \\
z^{\prime \prime }& =0.
\end{align*}

The special projective algebra of the G\"{o}del metric has as follows:%
\begin{equation*}
Y^{1}=\partial _{z}~,~Y^{3}=\partial _{x}-ay\partial _{y}~,~Y^{4}=\partial
_{t}~,~Y^{5}=\partial _{y}~,~Y^{6}=z\partial _{z}
\end{equation*}%
\begin{equation*}
Y^{2}=\left( -\frac{2}{a}e^{-ax}\right) \partial _{t}+y\partial _{x}+\left(
\frac{2e^{-2ar}-a^{2}y^{2}}{2a}\right) \partial _{y}
\end{equation*}%
where $Y^{1}$ is a gradient KV $\left( S_{1}=z\right) $, $Y^{2-5}$ are non
gradient KVs and $Y^{6}$ is a proper AC. The G\"{o}del spacetime does not
admit proper sp.PC \cite{HalldaCosta}.

Applying theorem (\ref{TheoremGEs}) we find that, the G\"{o}del spacetime
admits ten Lie point symmetries as follows

\begin{equation*}
X_{1}=\partial _{s}~~,~~X_{2}=s\partial _{s}~~,~~X_{3}=z\partial
_{s}~~,~~X_{4}=Y^{4}
\end{equation*}%
\begin{equation*}
~X_{5}=Y^{2}~~,~~X_{6}=Y^{3}~~,~~X_{7}=Y^{5}~
\end{equation*}%
\begin{equation*}
X_{8}=Y^{1}~,~X_{9}=sY^{1}~~,~~X_{10}=Y^{6}
\end{equation*}%
whose Lie algebra is given in Table \ref{LieGodel}.

%TCIMACRO{\TeXButton{B}{\begin{table}[tbp] \centering}}%
%BeginExpansion
\begin{table}[tbp] \centering%
%EndExpansion
\caption{Lie algebra of the Gödel geodesic symmetries}%
\begin{tabular}{ccccccccccc}
\hline\hline
$\left[ X_{I},X_{J}\right] $ & $X_{1}$ & $X_{2}$ & $X_{3}$ & $X_{4}$ & $%
X_{5} $ & $X_{6}$ & $X_{7}$ & $X_{8}$ & $X_{9}$ & $X_{10}$ \\ \hline
$X_{1}$ & $0$ & $X_{1}$ & $0$ & $0$ & $0$ & $0$ & $0$ & $0$ & $X_{8}$ & $0$
\\
$X_{2}$ & $-X_{1}$ & $0$ & $-X_{3}$ & $0$ & $0$ & $0$ & $0$ & $0$ & $-X_{9}$
& $0$ \\
$X_{3}$ & $0$ & $X_{3}$ & $0$ & $0$ & $0$ & $0$ & $0$ & $-X_{1}$ & $%
X_{10}-X_{2}$ & $-X_{3}$ \\
$X_{4}$ & $0$ & $0$ & $0$ & $0$ & $0$ & $0$ & $0$ & $0$ & $0$ & $0$ \\
$X_{5}$ & $0$ & $0$ & $0$ & $0$ & $0$ & $X_{5}$ & $-X_{7}$ & $0$ & $0$ & $0$
\\
$X_{6}$ & $0$ & $0$ & $0$ & $0$ & $-X_{5}$ & $0$ & $aX_{7}$ & $0$ & $0$ & $0$
\\
$X_{7}$ & $0$ & $0$ & $0$ & $0$ & $X_{7}$ & $-aX_{7}$ & $0$ & $0$ & $0$ & $0$
\\
$X_{8}$ & $0$ & $0$ & $X_{1}$ & $0$ & $0$ & $0$ & $0$ & $0$ & $0$ & $X_{8}$
\\
$X_{9}$ & $-X_{8}$ & $X_{9}$ & $X_{2}-X_{10}$ & $0$ & $0$ & $0$ & $0$ & $0$
& $0$ & $X_{9}$ \\
$X_{10}$ & $0$ & $0$ & $X_{3}$ & $0$ & $0$ & $0$ & $0$ & $-X_{8}$ & $-X_{19}$
& $0$ \\ \hline\hline
\end{tabular}%
\label{LieGodel}%
%TCIMACRO{\TeXButton{E}{\end{table}}}%
%BeginExpansion
\end{table}%
%EndExpansion

There are two Lie subalgebras. One spanned by the vectors $\left\{
X_{1},X_{4},X_{5},X_{6},X_{7},X_{8},X_{9}\right\} $ and a second spanned by
the vectors $\left\{ X_{2},X_{3},X_{10}\right\} $.~It can be shown that the
first subalgebra consists of the Noether symmetries of the Lagrangian (\ref%
{gLa}). The corresponding Noether integrals are%
\begin{eqnarray*}
\phi _{8} &=&z^{\prime } \\
\phi _{9} &=&sz^{\prime }-z \\
\phi _{4} &=&t^{\prime }+y^{\prime }e^{ax}~ \\
\phi _{6} &=&x^{\prime }+ay\phi _{7} \\
\phi _{7} &=&e^{\alpha x}\left( t^{\prime }+\frac{1}{2}y^{\prime
}e^{ax}\right) \\
\phi _{5} &=&\frac{2}{a}e^{-ax}t^{\prime }+\frac{3}{a}y^{\prime
}+2yx^{\prime }+ay^{2}\phi _{7}.
\end{eqnarray*}%
The Noether constant corresponding to the Noether symmetry $X_{1}=\partial
_{s}$ is the total energy i.e. the Hamiltonian.

\subsection{The geodesic symmetries of Taub spacetime}

Consider the Taub spacetime with line element

\begin{equation}
ds^{2}=x^{-\frac{1}{2}}\left( -dt^{2}+dx^{2}\right) +x\left(
dy^{2}+dz^{2}\right) .  \label{NGRG.24}
\end{equation}%
The geodesic Lagrangian is
\begin{equation}
L=\frac{1}{2}\left( x^{-\frac{1}{2}}\left( -t^{\prime 2}+x^{\prime 2}\right)
+x\left( y^{\prime 2}+z^{\prime 2}\right) \right)  \label{NGRG.24a}
\end{equation}%
and the geodesic equations are%
\begin{eqnarray*}
t^{\prime \prime }-\frac{1}{2x}t^{\prime }x^{\prime } &=&0 \\
y^{\prime \prime }+\frac{1}{x}y^{\prime }x^{\prime } &=&0 \\
z^{\prime \prime }+\frac{1}{x}z^{\prime }x^{\prime } &=&0 \\
\ddot{x}-\frac{1}{4x}\left( t^{\prime 2}+x^{\prime 2}\right) -\frac{x^{\frac{%
1}{2}}}{2}\left( y^{\prime 2}+z^{\prime 2}\right) &=&0.
\end{eqnarray*}

In order to find the Lie symmetries of the geodesic equations for the Taub
spacetime (\ref{NGRG.24}) we need to have the special projective algebra of (%
\ref{NGRG.24}). The spacetime (\ref{NGRG.24}) admits a five dimensional
special projective algebra\footnote{%
The spacetime (\ref{NGRG.24}) does not admit proper AC or sp.PC.} which
consists from four non gradient KVs $Y^{1-4}$ and a non gradient HV~$Y^{5}~$%
\cite{Taub96}.
\begin{eqnarray*}
Y_{1} &=&\partial _{t}~,~Y_{2}=z\partial _{y}-y\partial _{z}~ \\
Y_{3} &=&\partial _{y}~,~Y_{4}=\partial _{z}~
\end{eqnarray*}%
\begin{equation*}
Y_{5}=\frac{4}{3}t\partial _{t}+\frac{4}{3}x\partial _{x}+\frac{1}{3}%
y\partial _{y}+\frac{1}{3}z\partial _{z}~,~\psi =1.
\end{equation*}

Applying theorem (\ref{TheoremGEs}) we find that the geodesic equations of (%
\ref{NGRG.24}) admit seven Lie symmetries%
\begin{eqnarray*}
X_{1} &=&\partial _{s}~\ ,~X_{2}=s\partial _{s}~,~X_{3}=\partial
_{t}~~,~X_{4}=z\partial _{y}-y\partial _{z} \\
X_{5} &=&\partial _{y}~~,~~X_{6}=\partial _{z}~~,~X_{7}=\frac{4}{3}t\partial
_{t}+\frac{4}{3}x\partial _{x}+\frac{1}{3}y\partial _{y}+\frac{1}{3}%
z\partial _{z}
\end{eqnarray*}%
with Lie algebra is given in Table \ref{LieGodel}.

%TCIMACRO{\TeXButton{B}{\begin{table}[tbp] \centering}}%
%BeginExpansion
\begin{table}[tbp] \centering%
%EndExpansion
\caption{Lie algebra of the Taub geodesic symmetries}%
\begin{tabular}{cccccccc}
\hline\hline
$\left[ X_{I},X_{J}\right] $ & \thinspace \thinspace $X_{1}$ & $X_{2}$ & $%
X_{3}$ & $X_{4}$ & $X_{5}$ & $X_{6}$ & \thinspace $X_{7}$ \\ \hline
$X_{1}$ & $0$ & \thinspace $X_{1}$ & $0$ & $0$ & $0$ & $0$ & $0$ \\
$X_{2}$ & \thinspace $-X_{1}$ & $0$ & $0$ & $0$ & $0$ & $0$ & $0$ \\
$X_{3}$ & $0$ & $0$ & $0$ & $0$ & $0$ & $0$ & $\frac{4}{3}X_{3}$ \\
$X_{4}$ & $0$ & $0$ & $0$ & $0$ & \thinspace $X_{6}$ & \thinspace \thinspace
\thinspace $-X_{5}$ & $0$ \\
$X_{5}$ & $0$ & $0$ & $0$ & $-X_{6}$ & $0$ & $0$ & $\frac{1}{3}X_{5}$ \\
$X_{6}$ & $0$ & $0$ & $0$ & $X_{5}$ & $0$ & $0$ & $\frac{1}{3}X_{6}$ \\
$X_{7}$ & $0$ & $0$ & $-\frac{4}{3}X_{7}$ & $0$ & \thinspace $-$\thinspace
\thinspace \thinspace $\frac{1}{3}X_{5}$ & \thinspace \thinspace \thinspace $%
-\frac{1}{3}X_{6}$ & $0$ \\ \hline\hline
\end{tabular}%
\label{TaubL}%
%TCIMACRO{\TeXButton{E}{\end{table}}}%
%BeginExpansion
\end{table}%
%EndExpansion

Similarly from Theorem \ref{TheoremGE2} we have that the geodesic Lagrangian
(\ref{NGRG.24a}) admits a six dimensional Noether algebra, with elements
\begin{equation*}
X_{1},~X_{3},~X_{4}~,~X_{5}~,~X_{6}~,~X_{8}=2X_{2}+X_{7}
\end{equation*}%
and correspoding Noether integrals%
\begin{eqnarray*}
\phi _{3} &=&x^{-\frac{1}{2}}t^{\prime }~,~\phi _{5}=xy^{\prime } \\
\phi _{6} &=&xz^{\prime }~,~\phi _{4}=x\left( zy^{\prime }-yz^{\prime
}\right) \\
\phi _{1} &=&x^{-\frac{1}{2}}\left( -t^{\prime 2}+x^{\prime 2}\right)
+x\left( y^{\prime 2}+z^{\prime 2}\right) \\
\phi _{8} &=&s\phi _{1}-\frac{1}{3}x^{-\frac{1}{2}}\left[ 4xx^{\prime }+x^{%
\frac{3}{2}}\left( yy^{\prime }+zz^{\prime }\right) -4tt^{\prime }\right] .
\end{eqnarray*}

\subsection{The geodesic symmetries of a 1+3 decomposable spacetime metric}

We consider next the metric which is a 1+3 decomposable, that is it has the
form:\
\begin{equation}
ds_{4}=-d\tau ^{2}+U^{2}\delta _{\alpha \beta }dx^{\alpha }dx^{\beta }
\label{RW.2}
\end{equation}%
where Greek indices take the values $1,2,3.$ It is well known \cite{TNA}
that this metric admits 15 CKVs. Seven of these vectors are KVs (the six
nongradient KVs of the 3-metric $\mathbf{r}_{\mu \nu },\mathbf{I}_{\mu }$
plus the gradient KV $\partial _{\tau })$ and nine proper CKVs. The vectors
of this conformal algebra are shown in Table \ref{C13me}.

%TCIMACRO{\TeXButton{B}{\begin{table}[tbp] \centering}}%
%BeginExpansion
\begin{table}[tbp] \centering%
%EndExpansion
\caption{The conformal algebra of the 1+3 metric}{\small $%
\begin{tabular}{cccccc}
\hline\hline
$\mathbf{K}$ & \textbf{CKVs of }$ds_{3}^{2}$ & $\mathbf{\psi }_{3}$ &
\textbf{\#} & \textbf{CKVs of }$ds_{1+3}^{2}$ & $\mathbf{\psi }_{1+3}$ \\
\hline
$1$ & $H=x^{a}\partial _{a}$ & $\psi _{+}(H)=U\left( 1-\frac{1}{4}x^{\alpha
}x_{\alpha }\right) $ & 1 & $H_{1}^{+}=-\psi _{+}(H)\cos \tau \partial
_{\tau }+H\sin \tau $ & $\psi _{+}(H)\sin \tau $ \\
$1$ & $H=x^{a}\partial _{a}$ & $\psi _{+}(H)=U\left( 1-\frac{1}{4}x^{\alpha
}x_{\alpha }\right) $ & 1 & $H_{2}^{+}=\psi _{+}(H)\sin \tau \partial _{\tau
}+H\cos \tau $ & $\psi _{+}(H)\cos \tau $ \\
$1$ & $C_{\mu }=\left( \delta _{\mu }^{a}-\frac{1}{2}Ux_{\mu }x^{a}\right)
\partial _{\alpha }$ & $\psi _{+}(C_{\mu })=-Ux^{\mu }$ & 3 & $Q_{\mu
}^{+}=-\psi _{+}(C_{\mu })\cos \tau \partial _{\tau }+C_{\mu }\sin \tau $ & $%
\psi _{+}(C_{\mu })\sin \tau $ \\
$1$ & $C_{\mu }=\left( \delta _{\mu }^{\alpha }-\frac{1}{2}Ux_{\mu
}x^{a}\right) \partial _{\alpha }$ & $\psi _{+}(C_{\mu })=-Ux^{\mu }$ & 3 & $%
Q_{\mu +3}^{+}=\psi _{+}(C_{\mu })\sin \tau \partial _{\tau }+C_{\mu }\cos
\tau $ & $\psi _{+}(C_{\mu })\cos \tau $ \\
$-1$ & $H=x^{\alpha }\partial _{\alpha }$ & $\psi _{-}(H)=U\left( 1+\frac{1}{%
4}x^{\alpha }x_{\alpha }\right) $ & 1 & $H_{1}^{-}=\psi _{-}(H)\cosh \tau
\partial _{\tau }+H\sinh \tau $ & $\psi _{-}(H)\sinh \tau $ \\
$-1$ & $H=x^{\alpha }\partial _{\alpha }$ & $\psi _{-}(H)=U\left( 1+\frac{1}{%
4}x^{\alpha }x_{\alpha }\right) $ & 1 & $H_{2}^{-}=\psi _{-}(H)\sinh \tau
\partial _{\tau }+H\cosh \tau $ & $\psi _{-}(H)\cosh \tau $ \\
$-1$ & $C_{\mu }=\left( \delta _{\mu }^{\alpha }+\frac{1}{2}Ux_{\mu
}x^{a}\right) \partial _{\alpha }$ & $\psi _{-}(C_{\mu })=Ux^{\mu }$ & 3 & $%
Q_{\mu }^{-}=\psi _{-}(C_{\mu })\cosh \tau \partial _{\tau }+C_{\mu }\sinh
\tau $ & $\psi _{-}(C_{\mu })\sinh \tau $ \\
$-1$ & $C_{\mu }=\left( \delta _{\mu }^{\alpha }+\frac{1}{2}Ux_{\mu
}x^{a}\right) \partial _{\alpha }$ & $\psi _{-}(C_{\mu })=Ux^{\mu }$ & 3 & $%
Q_{\mu +3}^{-}=\psi _{-}(C_{\mu })\sinh \tau \partial _{\tau }+C_{\mu }\cosh
\tau $ & $\psi _{-}(C_{\mu })\cosh \tau $ \\ \hline\hline
\end{tabular}%
$}\label{C13me}%
%TCIMACRO{\TeXButton{E}{\end{table}}}%
%BeginExpansion
\end{table}%
%EndExpansion

According to Theorem \ref{TheoremGE2} this metric admits the following
Noether symmetries
\begin{equation*}
\partial _{s}\,~,~~s\partial _{\tau }~~,~~\mathbf{r}_{\mu \nu }~~,~~\mathbf{I%
}_{\mu }~~,~~\partial _{\tau }~
\end{equation*}%
with first integrals%
\begin{align*}
\phi _{s}& =\frac{1}{2}g_{ij}x^{\prime i}x^{\prime j}~ \\
\phi _{\tau }& =\tau ^{\prime }~~,~\ \phi _{\tau +1}=s\tau ^{\prime }-\tau \\
\phi _{I}& =\mathbf{I}_{i}^{I}x^{\prime i}~~,~~\phi _{r}=~\mathbf{r}_{\left(
AB\right) _{i}}x^{\prime j}.
\end{align*}%
The Lie symmetries of the geodesic equations of (\ref{RW.2}) are the Noether
symmetries plus the vectors $s\partial _{s}~,~\tau \partial _{s}$.

\subsection{The geodesic symmetries of the FRW metrics}

In a recent paper Bokhari and Kara \cite{Bok07} studied the Lie symmetries
of the conformally flat Friedman Robertson Walker (FRW) metric with the view
to understand how Noether symmetries compare with conformal Killing vectors.
More specifically they considered the conformally flat FRW\ metric\footnote{%
The second metric $ds^{2}=-t^{-\frac{4}{3}}dt^{2}+dx^{2}+dy^{2}+dz^{2}$ they
consider is the Minkowski metric whose Lie and Noether symmetries are well
known.}

\begin{equation*}
ds^{2}=-dt^{2}+t^{\frac{4}{3}}\left( dx^{2}+dy^{2}+dz^{2}\right)
\end{equation*}%
and found that the Noether symmetries are the seven vectors%
\begin{equation*}
\partial _{s},~S^{J}~,~r_{AB}
\end{equation*}%
where $S_{J}$ are the gradient KVs $\partial _{x}\partial _{y},\partial _{z}$
and $r_{AB}$ are the three nongradient KVs (generating $so\left( 3\right) $)
whereas the vector $\partial _{s}$ counts for the gauge freedom in the
affine parametrization of the geodesics. Therefore they confirm our Theorem %
\ref{TheoremGE2} that the Noether vectors coincide with the KVs and the HV
of the metric. Furthermore their claim that `\textit{...the conformally
transformed Friedman model admits additional conservation laws not given by
the Killing or conformal Killing vectors}' \ is not correct.

In the following lines, we compute all the Noether point symmetries of the
FRW\ spacetimes. To do that we have to have the homothetic algebra of these
models \cite{MM86}. There are two cases to consider, the conformally flat
models $(K=0)$ and the non conformally flat models ($K\neq 0)$.

In the following we need the conformal algebra of the flat metric, which in
Cartesian coordinates (See example \ref{ExCAflat}{}) is given in Table \ref%
{conflat}.

%TCIMACRO{\TeXButton{B}{\begin{table}[tbp] \centering}}%
%BeginExpansion
\begin{table}[tbp] \centering%
%EndExpansion
\caption{The conformal algebra of a flat 3d metric}%
\begin{tabular}{ccccc}
\hline\hline
\textbf{CKV} & \textbf{Components} & \textbf{\#} & $\mathbf{\psi (\xi )}$ &
\textbf{Comment} \\ \hline
$P_{I}$ & $\partial _{I}$ & $3$ & $0$ & gradient KV \\
$r_{AB}$ & $2\delta _{\lbrack A}^{d}x_{B]}\partial _{d}$ & $3$ & $0$ &
nongradient KV \\
$H$ & $x^{a}\partial _{a}$ & $1$ & $1$ & gradient HV \\
$K_{I}$ & $\left[ 2x_{I}x^{d}-\delta _{I}^{d}(x_{a}x^{a})\right] \partial
_{d}$ & $3$ & $2x_{I}$ & nongradient SCKV \\ \hline\hline
\end{tabular}%
\label{conflat}%
%TCIMACRO{\TeXButton{E}{\end{table}}}%
%BeginExpansion
\end{table}%
%EndExpansion

Case A\textbf{:} $K\neq 0$

The metric is
\begin{equation}
ds=R^{2}\left( \tau \right) \left[ -d\tau ^{2}+\frac{1}{\left( 1+\frac{1}{4}%
Kx^{i}x_{i}\right) ^{2}}\left( dx^{2}+dy^{2}+dz^{2}\right) \right] .
\label{RW.3}
\end{equation}%
For a general $R\left( \tau \right) $ this metric admits the nongradient KVs
$~\mathbf{P}_{I},~\mathbf{r}_{\mu \nu }$ (see Table 4) and does not admit a
HV. Therefore the Noether symmetries of the geodesic Lagrangian
\begin{equation*}
L=-\frac{1}{2}R^{2}\left( \tau \right) t^{\prime 2}+\frac{1}{2}\frac{%
R^{2}\left( \tau \right) }{\left( 1+\frac{1}{4}Kx^{i}x_{i}\right) ^{2}}%
\left( x^{\prime 2}+y^{\prime 2}+z^{\prime 2}\right)
\end{equation*}%
of the FRW metric (\ref{RW.3}) are:%
\begin{equation*}
\partial _{s}~,~\mathbf{P}_{I}~,~\mathbf{r}_{\mu \nu }
\end{equation*}%
with Noether integrals
\begin{equation}
\phi _{s}=\frac{1}{2}g_{ij}x^{\prime i}x^{\prime j}~,~\phi _{I}=\mathbf{P}%
_{i}^{I}x^{\prime i}~,~\phi _{r}=~\mathbf{r}_{\left( AB\right)
_{i}}x^{\prime j}.
\end{equation}%
Concerning the Lie symmetries we note that the FRW spacetimes do not admit
ACs \textbf{\ }\cite{MaartensAC} and furthermore does not admit gradient
KVs. Therefore they do not admit special PCs. The Lie symmetries of these
spacetimes are then%
\begin{equation*}
\partial _{s},~s\partial _{s},~\mathbf{P}_{I}~,~\mathbf{r}_{\mu \nu }.
\end{equation*}

For special functions $R(\tau )$ it is possible to have more KVs and HV. In
Table \ref{SpecialK} we give the special forms of the scale factor $R(t)$
and the corresponding extra KVs and HV for $K=\pm 1$.
%TCIMACRO{\TeXButton{B}{\begin{table}[tbp] \centering}}%
%BeginExpansion
\begin{table}[tbp] \centering%
%EndExpansion
\caption{The special forms of the scale factor for |K|=1}$%
\begin{tabular}[t]{cccccc}
\hline\hline
$\mathbf{K}$ & Proper CKV & \# & Conformal Factor & $R(\tau )$\ for KVs~ & $%
R(\tau )$\ for HV \\ \hline
$\pm 1$ & $\mathbf{P}_{{\tau }}=\partial _{{\tau }}$ & $1$ & $(\ln R\left(
\tau \right) ),_{\tau }$ & $c~$ & $\exp \left( \tau \right) $ \\
$1$ & $\mathbf{H}_{1}^{+}$ & $1$ & -$\frac{\psi _{+}(\mathbf{H})}{R\left(
\tau \right) }\left( R\left( \tau \right) \cos \tau \right) ,_{\tau }~$ & $%
\frac{c}{\cos \tau }$ & $\nexists $ \\
$1$ & $\mathbf{H}_{2}^{+}$ & $1$ & $\frac{\psi _{+}(\mathbf{H})}{R(\tau )}%
\left( R\left( \tau \right) \sin \tau \right) ,_{\tau }$ & $\frac{c}{\sin
\tau }$ & $\nexists $ \\
$1$ & $\mathbf{Q}_{\mu }^{+}$ & $3$ & -$\frac{\psi _{+}(\mathbf{C}_{\mu })}{%
R(\tau )}\left( R\left( \tau \right) \cos \tau \right) ,_{\tau }$ & $\frac{c%
}{\cos \tau }$ & $\nexists $ \\
$1$ & $\mathbf{Q}_{\mu +3}^{+}$ & $3$ & $\frac{\psi _{+}(\mathbf{C}_{\mu })}{%
R(\tau )}\left( R\left( \tau \right) \sin \tau \right) ,_{\tau }$ & $\frac{c%
}{\sin \tau }$ & $\nexists $ \\
$-1$ & $\mathbf{H}_{1}^{-}$ & $1$ & $\frac{\psi _{-}(\mathbf{H})}{R(\tau )}%
\left( R\left( \tau \right) \cosh \tau \right) ,_{\tau }$ & $\frac{c}{\cosh
\tau }$ & $\nexists $ \\
$-1$ & $\mathbf{H}_{2}^{-}$ & $1$ & $\frac{\psi _{-}(\mathbf{H})}{R(\tau )}%
\left( R\left( \tau \right) \sinh \tau \right) ,_{\tau }$ & $\frac{c}{\sinh
\tau }$ & $\nexists $ \\
$-1$ & $\mathbf{Q}_{\mu }^{-}$ & $3$ & $\frac{\psi _{-}(\mathbf{C}_{\mu })}{%
R(\tau )}\left( R\left( \tau \right) \cosh \tau \right) ,_{\tau }$ & $\frac{c%
}{\cosh \tau }$ & $\nexists $ \\
$-1$ & $\mathbf{Q}_{\mu +3}^{-}$ & $3$ & $\frac{\psi _{-}(\mathbf{C}_{\mu })%
}{R(\tau )}\left( R\left( \tau \right) \sinh \tau \right) ,_{\tau }$ & $%
\frac{c}{\sinh \tau }$ & $\nexists $ \\ \hline\hline
\end{tabular}%
$\label{SpecialK}%
%TCIMACRO{\TeXButton{E}{\end{table}}}%
%BeginExpansion
\end{table}%
%EndExpansion

From Table \ref{SpecialK} we infer the following additional Noether
symmetries of the FRW-like Lagrangian for special forms of the scale factor

Case A(1): $R\left( t\right) =c=$constant, the space is the 1+3 decomposable.

Case A(2) $K=1,~R\left( t\right) =\exp \left( \tau \right) .~$In this case
the space is flat and admits as Lie point symmetries the $sl\left(
4+2,R\right) .$

Case A(3a)~$K=1,~R\left( \tau \right) =\frac{c}{\cos \tau }$. In this case
we have the additional non-gradient KVs $H_{1}^{+},~Q_{\mu }^{+}.$ Therefore
the Noether symmetries are:%
\begin{equation*}
\partial _{s}~,~\mathbf{P}_{I}~,~\mathbf{r}_{\mu \nu }~,H_{1}^{+},~Q_{\mu
}^{+}~
\end{equation*}%
with Noether Integrals
\begin{equation*}
\phi _{s}~,~\phi _{I}~,~\phi _{r}~,~\phi _{H_{1}^{+}}=\left(
H_{1}^{+}\right) _{i}x^{\prime i}~~\text{and }~\phi _{Q_{\mu }^{+}}=\left(
Q_{\mu }^{+}\right) _{i}x^{\prime i}.
\end{equation*}%
The Lie symmetries are%
\begin{equation*}
\partial _{s}~,~s\partial _{s}~,~\mathbf{P}_{I}~,~\mathbf{r}_{\mu \nu
}~,H_{1}^{+},~Q_{\mu }^{+}
\end{equation*}

Case A(3b)~$K=1,~R\left( \tau \right) =\frac{c}{\sin \tau }.$ In this case
we have the two nongradient KVs $H_{2}^{+},~Q_{\mu +3}^{+}.$ The Noether
Symmetries are%
\begin{equation*}
\partial _{s}~,~\mathbf{P}_{I}~,~\mathbf{r}_{\mu \nu }~,H_{2}^{+},~Q_{\mu
+3}^{+}~~:~f=\text{constant}~
\end{equation*}%
with Noether Integrals
\begin{equation*}
\phi _{s}~,~\phi _{I}~,~\phi _{r}~,~\phi _{H_{2}^{+}}=\left(
H_{2}^{+}\right) _{i}x^{\prime i}~~\text{and }~\phi _{Q_{\mu +3}^{+}}=\left(
Q_{\mu +3}^{+}\right) _{i}x^{\prime i}.
\end{equation*}%
The Lie symmetries are{\large \ }%
\begin{equation*}
\partial _{s}~,~s\partial _{s}~,~\mathbf{P}_{I}~,~\mathbf{r}_{\mu \nu
}~,H_{2}^{+},~Q_{\mu +3}^{+}.
\end{equation*}

\textbf{Case A(4a)} $K=-1,~R\left( \tau \right) =\frac{c}{\cosh \tau }$ . In
this case we have the two additional nongradient KVs $H_{1}^{-},~Q_{\mu
}^{\_}.$ The Noether Symmetries are%
\begin{equation*}
\partial _{s}~,~\mathbf{P}_{I}~,~\mathbf{r}_{\mu \nu }~,H_{1}^{-},~Q_{\mu
}^{\_}
\end{equation*}%
with Noether Integrals
\begin{equation*}
\phi _{s}~,~\phi _{I}~,~\phi _{r}~,~\phi _{H_{1}^{-}}=\left(
H_{1}^{-}\right) _{i}x^{\prime i}~~\text{and }~\phi _{Q_{\mu }^{-}}=\left(
Q_{\mu }^{-}\right) _{i}x^{\prime i}.
\end{equation*}%
The Lie symmetries are{\large \ }%
\begin{equation*}
\partial _{s}~,~s\partial _{s}~,~\mathbf{P}_{I}~,~\mathbf{r}_{\mu \nu
}~,H_{1}^{-},~Q_{\mu }^{\_}.
\end{equation*}

Case A(4b)~$K=-1,~R\left( \tau \right) =\frac{c}{\sin \tau },$ we have the
nongradient KV $H_{2}^{-},~Q_{\mu +3}^{\_}$.~The Noether Symmetries are%
\begin{equation*}
\partial _{s}~,~\mathbf{P}_{I}~,~\mathbf{r}_{\mu \nu }~,H_{2}^{-},~Q_{\mu
+3}^{\_}~
\end{equation*}%
with Noether Integrals
\begin{equation*}
\phi _{s}~,~\phi _{I}~,~\phi _{r}~,~\phi _{H_{1}^{-}}=\left(
H_{2}^{-}\right) _{i}x^{\prime i}~\text{and }~\phi _{Q_{\mu +3}^{-}}=\left(
Q_{\mu +3}^{-}\right) _{i}x^{\prime i}.
\end{equation*}%
The Lie symmetries are{\large \ }%
\begin{equation*}
\partial _{s}~,~s\partial _{s}~,~\mathbf{P}_{I}~,~\mathbf{r}_{\mu \nu
}~,H_{2}^{-},~Q_{\mu +3}^{\_}.
\end{equation*}

Case B: $K=0$

In this case the metric is
\begin{equation*}
ds=R^{2}\left( t\right) \left( -dt^{2}+dx^{2}+dy^{2}+dz^{2}\right)
\end{equation*}%
and admits three nongradient KVs $\mathbf{P}_{I}$ and three nongradient KVs $%
\mathbf{r}_{AB}$. Therefore the Noether symmetries are

\begin{equation*}
\partial _{s}~,\mathbf{P}_{I}~,~\mathbf{r}_{AB}~:~f=\text{constant}~
\end{equation*}%
with Noether Integrals%
\begin{equation*}
\phi _{s}=\frac{1}{2}g_{ij}x^{\prime i}x^{\prime j}~,~\phi _{P_{I}}=\mathbf{P%
}_{i}^{I}x^{\prime i}~\text{and }\phi _{F}=~\left( \mathbf{r}_{AB}\right)
_{i}x^{\prime i}.
\end{equation*}%
The Lie symmetries are%
\begin{equation*}
\partial _{s}~,~s\partial _{s}~,~\mathbf{P}_{I}~,~\mathbf{r}_{AB}.
\end{equation*}

Again for special forms of the scale factor one obtains extra KVs and HV as
shown in Table \ref{SpecialK0}.

%TCIMACRO{\TeXButton{B}{\begin{table}[tbp] \centering}}%
%BeginExpansion
\begin{table}[tbp] \centering%
%EndExpansion
\caption{The special forms of the scale factor for K=0
}%
\begin{tabular}{ccccc}
\hline\hline
\textbf{\#} & Proper CKV & Conformal Factor $\psi $ & $R(\tau )$ for\ KVs~ &
$R\left( \tau \right) $ for HV \\ \hline
\textit{1} & $\mathbf{P}_{{\tau }}=\partial _{{\tau }}$ & $(\ln R(\tau ))_{,{%
\tau }}$ & $c$ & $\exp \left( \tau \right) $ \\
\textit{3} & $\mathbf{M}_{{\tau \alpha }}=x_{{\alpha }}\partial _{{\tau }}+{%
\tau }\partial _{{\alpha }}$ & $x_{{\alpha }}(\ln R(\tau ))_{,{\tau }}$ & $c$
& $\nexists $ \\
\textit{1} & $\mathbf{H}=\mathbf{P}_{{\tau }}+x^{a}\partial _{a}$ & $\mathbf{%
\tau }(\ln R(\tau ))+1$ & $c/\tau $ & $\nexists $ \\
\textit{1} & $\mathbf{K}_{{\tau }}=2{\tau }\mathbf{H}+\left( x^{c}x_{c}-\tau
^{2}\right) \partial _{{\tau }}$ & $-(\ln R(\tau ))_{,\tau }\left( -\tau
^{2}+r^{2}\right) +2\epsilon \tau $ & $\nexists $ & $\nexists $ \\
\textit{3} & $\mathbf{K}_{\mu }=2x_{\mu }\mathbf{H}-\left( x^{c}x_{c}-\tau
^{2}\right) \partial _{\mu }~$ & $2x_{\mu }\left[ \tau (\ln R(\tau ))_{,\tau
}+1\right] $ & $c/\tau $ & $\nexists $ \\ \hline\hline
\end{tabular}%
\label{SpecialK0}%
%TCIMACRO{\TeXButton{E}{\end{table}}}%
%BeginExpansion
\end{table}%
%EndExpansion

From Table \ref{SpecialK0} we have the following special cases.

Case B(1): $R\left( t\right) =c=$constant. Then the space is the Minkowski
space and admits as Lie symmetries the $sl\left( 4+2,R\right) .$

Case B(2): $R\left( t\right) =\exp \left( \tau \right) .$ Then $\mathbf{P}_{{%
\tau }}~$becomes a gradient HV $\left( \psi =1,~\text{gradient function}~%
\frac{1}{2}\exp \left( 2\tau \right) \right) .$ Hence the Noether symmetries
are%
\begin{equation*}
\partial _{s}~,\mathbf{P}_{I}~,~\mathbf{r}_{AB}~,~2s\partial _{s}+\mathbf{P}%
_{{\tau }}~,~~s^{2}\partial _{s}+s\mathbf{P}_{{\tau }}~
\end{equation*}%
with Noether Integrals
\begin{equation*}
\phi _{s}~,~\phi _{P_{I}}~,~\phi _{F}~,~\phi _{\mathbf{P}_{{\tau }%
}}=sg_{ij}x^{\prime i}x^{\prime j}-g_{ij}\left( \mathbf{P}_{{\tau }}\right)
^{i}x^{\prime j}~\text{and }~\phi _{\mathbf{Y+1}}=\frac{1}{2}s^{2}g_{ij}\dot{%
x}^{i}\dot{x}^{j}-s\left( \mathbf{P}_{{\tau }}\right) _{i}x^{\prime i}+%
\mathbf{P}_{{\tau }}.
\end{equation*}%
The Lie symmetries are{\large \ }%
\begin{equation*}
\partial _{s}~,~s\partial _{s}~,~~\mathbf{P}_{I}~,~\mathbf{r}_{AB}~,~\mathbf{%
P}_{{\tau }}~,s^{2}\partial _{s}+s\mathbf{P}_{{\tau }}.~
\end{equation*}

Case B(3\textbf{):} $R\left( t\right) =\tau ^{-1}~.$ Then $\ $we have four
additional nongradient KVs, the $\mathbf{H},$ and $\mathbf{K}_{\mu },$ and
the Noether symmetries are:
\begin{equation*}
\partial _{s}~,~\mathbf{P}_{I},~\mathbf{r}_{AB}~,~\mathbf{H~,~K}_{\mu }
\end{equation*}%
with Noether Integrals
\begin{equation*}
\phi _{s}~,~\phi _{P_{I}}~,~\phi _{F}~,~\phi _{~\mathbf{H}}=\left( \mathbf{H}%
\right) _{i}x^{\prime i}~~\text{and }~\phi _{\mathbf{K}_{\mu }}=\left(
\mathbf{K}_{\mu }\right) _{i}x^{\prime i}.
\end{equation*}%
The Lie symmetries are:{\large \ }%
\begin{equation*}
\partial _{s}~,~s\partial _{s}~,~\mathbf{P}_{I},~\mathbf{r}_{AB}~,~\mathbf{%
H~,~K}_{\mu }.
\end{equation*}

\section{Conclusion}

We derived the symmetry conditions for the admittance of a Lie point
symmetry by the equations of autoparallels (paths) in an affine space. The
important conclusion is that the Lie symmetry vector is an Affine
Collineation in the jet space $\{t,x^{i}\}$ ( it preserves both the
autoparallels and their parametrization) while in the space $\{x^{i}\}$ the
vectors $\eta ^{i}(t,x)\partial _{x^{i}}$ are projective collineations (they
preserve the autoparallels but not necessarily their parametrization).

The symmetry conditions are applied to the geodesics of a Riemannian space
were they are solved and the generic Lie symmetry vector is obtained in
terms of the special projective algebra (and its degeneracies KVs, HKV, ACs)
of the metric. Furthermore we derived the Noether symmetries of the geodesic
Lagrangian and it was proved that Noether symmetries are generateted from
the homothetic algebra of the metric. We applied the results to the case of
Einstein spaces and obtained the Lie symmetry vectors in terms of the KVs of
the metric, in agreement with the conjecture made in \cite{FMkvs}.

Finally, the Lie and the Noether symmetries of the geodesic equations were
computed in the G\"{o}del spacetime, the Taub spacetime and the Friedman
Robertson Walker spacetimes. In each case the Noether symmetries were
computed explicitly together with the corresponding first integrals.

%TCIMACRO{\TeXButton{\newpage}{\newpage}}%
%BeginExpansion
\newpage%
%EndExpansion

%TCIMACRO{\TeXButton{AppendixA}{\begin{subappendices}}}%
%BeginExpansion
\begin{subappendices}%
%EndExpansion

\section{The determining equations}

\label{appendixLieCon}

Below we calculate the determining equations for the system (\ref{de.1}).
Let~$X=\xi \left( t,x^{k}\right) \partial _{t}+\eta ^{i}\left(
t,x^{k}\right) \partial _{t}~$be the infinitesimal generator of a one
parameter point transformation. $X$ is a Lie symmetry of (\ref{de.1}) if the
following condition, holds%
\begin{equation}
\eta _{\left[ 2\right] }^{i}=-X^{\left[ 1\right] }\left( \Gamma _{jk}^{i}%
\dot{x}^{j}\dot{x}^{k}+\sum\limits_{m=0}^{n}P_{j_{1}...j_{m}}^{i}\dot{x}%
^{j_{1}}\ldots \dot{x}^{j_{m}}\right)  \label{symcon.0}
\end{equation}%
where
\begin{equation*}
X^{\left[ 2\right] }=X+\eta _{\left[ 1\right] }^{i}\partial _{\dot{x}%
^{i}}+\eta _{\left[ 2\right] }^{i}\partial _{\dot{x}^{i}}
\end{equation*}%
is the second prolongation of $X$ and $\eta _{\left[ 1\right] }^{i},\eta _{%
\left[ 2\right] }^{i}$ are the prolongation functions%
\begin{equation*}
\eta _{\left[ 1\right] }^{i}=\left( \eta _{,t}^{i}\right) +\left( \eta
_{,k}^{i}-\xi _{,t}\delta _{k}^{i}\right) \dot{x}^{k}-\left( \xi
_{,j}\right) \dot{x}^{i}\dot{x}^{j}
\end{equation*}%
\begin{eqnarray*}
\eta _{\left[ 2\right] }^{i} &=&\left( \eta _{,tt}^{i}\right) +\left( 2\eta
_{,tj}^{i}-\xi _{,tt}\delta _{j}^{i}\right) \dot{x}^{j}+\left( \eta
_{,jk}^{i}-2\xi _{,tj}\delta _{k}^{i}\right) \dot{x}^{j}\dot{x}^{k}+ \\
&&-\left( \xi _{,jk}\right) \dot{x}^{i}\dot{x}^{j}\dot{x}^{k}+\left( \eta
_{,j}^{i}-\xi _{,j}\dot{x}^{i}\right) \ddot{x}^{j}-2\ddot{x}^{i}\left( \xi
_{,j}\dot{x}^{j}+\xi _{,t}\right) .
\end{eqnarray*}

Replacing $\ddot{x}^{i}$ from (\ref{de.1}) we find eventually
\begin{eqnarray*}
\eta _{\left[ 2\right] }^{i} &=&\left( \eta _{,tt}^{i}-\eta _{,j}^{i}P^{j}\
+2\xi _{,t}P^{i}\right) + \\
&&+\left( 2\eta _{,tr}^{i}-\xi _{,tt}-\eta _{,j}^{i}P_{r}^{j}+\xi
_{,j}P^{j}\delta _{r}^{i}+2\xi _{,r}P^{i}+2\xi _{,t}P_{r}^{i}\right) \dot{x}%
^{r}+ \\
&&+\left(
\begin{array}{c}
\eta _{,rk}^{i}-2\xi _{,tr}\delta _{k}^{i}-\eta _{,j}^{i}P_{rk}^{j}-\eta
_{,j}^{i}\Gamma _{rk}^{j}+ \\
+\xi _{,j}P_{r}^{j}\delta _{k}^{i}+2\xi _{,r}P_{k}^{i}+2\xi _{,t}\Gamma
_{rk}^{i}+2\xi _{,t}P_{rk}^{i}%
\end{array}%
\right) \dot{x}^{r}\dot{x}^{k} \\
&&+\left(
\begin{array}{c}
-\xi _{,rk}\delta _{s}^{i}-\eta _{,j}^{i}P_{rks}^{j}+\xi
_{,j}P_{rk}^{j}\delta _{s}^{i}+ \\
+3\xi _{,r}\Gamma _{ks}^{i}+2\xi _{,t}P_{rks}^{i}+2\xi
_{,r}P_{j_{1}j_{2}}^{i}%
\end{array}%
\right) \dot{x}^{s}\dot{x}^{r}\dot{x}^{k} \\
&&+\left[ -\left( \eta _{,j}^{i}\sum\limits_{m=4}^{n}P_{j_{1}...j_{m}}^{j}%
\dot{x}^{j_{1}}\ldots \dot{x}^{j_{m}}\right) +\xi _{,j}\dot{x}%
^{i}\sum\limits_{m=3}^{n}P_{j_{1}...j_{m}}^{j}\dot{x}^{j_{1}}\ldots \dot{x}%
^{j_{m}}\right] + \\
&&+\left[ 2\xi _{,r}\dot{x}^{r}\sum\limits_{m=3}^{n}P_{j_{1}...j_{m}}^{j}%
\dot{x}^{j_{1}}\ldots \dot{x}^{j_{m}}+\left( 2\xi
_{,t}\sum\limits_{m=4}^{n}P_{j_{1}...j_{m}}^{j}\dot{x}^{j_{1}}\ldots \dot{x}%
^{j_{m}}\right) \right]
\end{eqnarray*}%
We have computed the lhs of (\ref{symcon.0}). It remains to compute the rhs%
\begin{equation*}
X^{[1]}\left( \Gamma _{jk}^{i}\dot{x}^{j}\dot{x}^{k}+\sum%
\limits_{m=0}^{n}P_{j_{1}...j_{m}}^{i}\dot{x}^{j_{1}}\ldots \dot{x}%
^{j_{m}}\right) =X^{[1]}\left( \Gamma _{jk}^{i}\dot{x}^{j}\dot{x}^{k}\right)
+X^{\left[ 1\right] }\left( \sum\limits_{m=0}^{n}P_{j_{1}...j_{m}}^{i}\dot{x}%
^{j_{1}}\ldots \dot{x}^{j_{m}}\right) .
\end{equation*}

\textbf{Analysis of the term }$X^{[1]}\left( \Gamma _{jk}^{i}\dot{x}^{j}\dot{%
x}^{k}\right) \mathbf{:}$%
\begin{eqnarray*}
X^{\left[ 1\right] }\left( \Gamma _{jk}^{i}\dot{x}^{j}\dot{x}^{k}\right)
&=&X\left( \Gamma _{jk}^{i}\right) \dot{x}^{j}\dot{x}^{k}+2\Gamma _{jk}^{i}%
\dot{x}^{j}\delta \dot{x}^{k} \\
&=&\left( \xi \Gamma _{jk,t}^{i}+\eta ^{l}\Gamma _{jk,l}^{i}\right) \dot{x}%
^{j}\dot{x}^{k}+2\Gamma _{jk}^{i}\dot{x}^{j}\eta _{\left[ 1\right] }^{k}.
\end{eqnarray*}%
hence
\begin{eqnarray*}
X^{\left[ 1\right] }\left( \Gamma _{jk}^{i}\dot{x}^{j}\dot{x}^{k}\right)
&=&2\Gamma _{rs}^{i}\eta _{,t}^{s}\dot{x}^{r}-2\Gamma _{rs}^{i}\xi _{,r}\dot{%
x}^{s}\dot{x}^{r}\dot{x}^{r}+ \\
&&+\left( \xi \Gamma _{rk,t}^{i}+\eta ^{l}\Gamma _{rk,l}^{i}+2\Gamma
_{rs}^{i}\eta _{,k}^{s}-2\Gamma _{rs}^{i}\xi _{,t}\delta _{k}^{s}\right)
\dot{x}^{r}\dot{x}^{k}.
\end{eqnarray*}

\textbf{Analysis of the term }$X^{\left[ 1\right] }\left(
\sum\limits_{m=0}^{n}P_{j_{1}...j_{m}}^{i}\dot{x}^{j_{1}}\ldots \dot{x}%
^{j_{m}}\right) :$%
\begin{equation*}
X^{\left[ 1\right] }\left( \sum\limits_{m=0}^{n}P_{j_{1}...j_{m}}^{i}\dot{x}%
^{j_{1}}\ldots \dot{x}^{j_{m}}\right) =\sum\limits_{m=0}^{n}\left(
XP_{j_{1}...j_{m}}^{i}\right) \dot{x}^{j_{1}}\ldots \dot{x}%
^{j_{m}}+\sum\limits_{m=0}^{n}mP_{j_{1}...j_{m}}^{i}\dot{x}^{j_{1}}\ldots
\left( \eta _{\left[ 1\right] }^{j_{m}}\right)
\end{equation*}%
The first term becomes
\begin{eqnarray*}
\sum\limits_{m=0}^{n}\left( XP_{j_{1}...j_{m}}^{i}\right) \dot{x}%
^{j_{1}}\ldots \dot{x}^{j_{m}} &=&\sum\limits_{m=0}^{n}\left(
P_{j_{1}...j_{m}~,t}^{i}\xi +P_{j_{1}...j_{m}~,r}^{i}\eta ^{r}\right) \dot{x}%
^{j_{1}}\ldots \dot{x}^{j_{m}} \\
&=&\left( P_{,t}^{i}\xi +P_{,r}^{i}\eta ^{r}\right) +\left( P_{k,t}^{i}\xi
+P_{k,r}^{i}\eta ^{r}\right) \dot{x}^{k}+ \\
&&+\left( P_{ks,t}^{i}\xi +P_{ks,r}^{i}\eta ^{r}\right) \dot{x}^{k}\dot{x}%
^{s}+\left( P_{ksl,t}^{i}\xi +P_{ksl,r}^{i}\eta ^{r}\right) \dot{x}^{k}\dot{x%
}^{s}\dot{x}^{l}+ \\
&&+\sum\limits_{m=4}^{n}\left( P_{j_{1}...j_{m}~,t}^{i}\xi
+P_{j_{1}...j_{m}~,r}^{i}\eta ^{r}\right) \dot{x}^{j_{1}}\ldots \dot{x}%
^{j_{m}}.
\end{eqnarray*}%
The second term
\begin{eqnarray*}
\sum\limits_{m=0}^{n}mP_{j_{1}...j_{m}}^{i}\left( \delta \dot{x}%
^{j_{1}}\right) \ldots \delta \dot{x}^{j_{m}}
&=&\sum\limits_{m=0}^{n}mP_{j_{1}...j_{m}}^{i}\eta _{,t}^{j_{1}}\ldots \dot{x%
}^{j_{m-1}}+ \\
&&+\sum\limits_{m=0}^{n}mP_{j_{1}...j_{m}}^{i}\left( \eta
_{,s}^{j_{1}}-\delta _{s}^{j_{1}}\xi _{,t}\right) \ldots \dot{x}^{j_{m}}~%
\dot{x}^{s} \\
&&-\sum\limits_{m=0}^{n}mP_{j_{1}...j_{m}}^{i}\dot{x}^{j_{1}}\ldots \dot{x}%
^{j_{m}}\dot{x}^{s}\xi _{,s}.
\end{eqnarray*}%
The term $\sum\limits_{m=0}^{n}mP_{j_{1}...j_{m}}^{i}\dot{x}^{j_{1}}\ldots
\dot{x}^{j_{m-1}}\eta _{,t}^{j_{m}}$ can be written as
\begin{eqnarray*}
\sum\limits_{m=0}^{n}mP_{j_{1}...j_{m}}^{i}\dot{x}^{j_{1}}\ldots \dot{x}%
^{j_{m-1}}\eta _{,t}^{j_{m}} &=&P_{k}^{i}\eta _{,t}^{k}+2P_{ks}^{i}\eta
_{,t}^{k}\dot{x}^{s}+3P_{ksr}^{i}\eta _{,t}^{k}\dot{x}^{s}\dot{x}^{r}+ \\
&&+4P_{rksl}^{i}\eta _{,t}^{r}\dot{x}^{s}\dot{x}^{k}\dot{x}%
^{l}+\sum\limits_{m=5}^{n}mP_{j_{1}...j_{m}}^{j}\dot{x}^{j_{1}}\ldots \dot{x}%
^{j_{m}}.
\end{eqnarray*}%
that is%
\begin{eqnarray*}
\sum\limits_{m=0}^{n}mP_{j_{1}...j_{m}}^{i}\left( \eta _{,s}^{j_{1}}-\delta
_{s}^{j_{1}}\xi _{,t}\right) \ldots \dot{x}^{j_{m}}~\dot{x}^{s}
&=&P_{k}^{i}\left( \eta _{,s}^{k}-\delta _{s}^{k}\xi _{,t}\right) \dot{x}%
^{s}+2P_{kr}^{i}\left( \eta _{,s}^{k}-\delta _{s}^{k}\xi _{,t}\right) \dot{x}%
^{s}\dot{x}^{r}+ \\
&&+3P_{krl}^{i}\left( \eta _{,s}^{r}-\delta _{s}^{r}\xi _{,t}\right) \dot{x}%
^{s}\dot{x}^{r}\dot{x}^{l}+ \\
&&+\sum\limits_{m=4}^{n}mP_{j1...j_{m}}^{i}\left( \eta _{,s}^{j_{1}}-\delta
_{s}^{j_{1}}\xi _{,t}\right) \dot{x}^{s}\ldots \dot{x}^{j_{m}}
\end{eqnarray*}%
\begin{eqnarray*}
~-\sum\limits_{m=0}^{n}mP_{j_{1}...j_{m}}^{i}\dot{x}^{j_{1}}\ldots \dot{x}%
^{j_{m}}\dot{x}^{s}\xi _{,s} &=&-\left( P_{k}^{i}\xi _{,s}\right) \dot{x}^{k}%
\dot{x}^{s}+ \\
&&-2\left( P_{kr}^{i}\xi _{,s}\right) \dot{x}^{k}\dot{x}^{r}\dot{x}%
^{s}-\left( \sum\limits_{m=3}^{n}mP_{j_{1}...j_{m}}^{i}\dot{x}^{j_{1}}\ldots
\dot{x}^{j_{m}}\dot{x}^{s}\xi _{,s}\right) .
\end{eqnarray*}%
Finally we have
\begin{eqnarray*}
\delta \left( \sum\limits_{m=0}^{n}P_{j_{1}...j_{m}}^{i}\dot{x}%
^{j_{1}}\ldots \dot{x}^{j_{m}}\right) &=&\left( \left( P_{,t}^{i}\xi
+P_{,r}^{i}\eta ^{r}\right) +P_{k}^{i}\eta _{,t}^{k}\right) +\left(
\begin{array}{c}
P_{s,t}^{i}\xi +P_{s,r}^{i}\eta ^{r}+ \\
+2P_{ks}^{i}\eta _{,t}^{k}+P_{k}^{i}\left( \eta _{,s}^{k}-\delta _{s}^{k}\xi
_{,t}\right)%
\end{array}%
\right) \dot{x}^{s}+ \\
&&+\left(
\begin{array}{c}
P_{ks,t}^{i}\xi +P_{ks,r}^{i}\eta ^{r}+3P_{ksr}^{i}\eta _{,t}^{r}+ \\
+2P_{kr}^{i}\left( \eta _{,s}^{r}-\delta _{s}^{r}\xi _{,t}\right)
-P_{k}^{i}\xi _{,s}%
\end{array}%
\right) \dot{x}^{k}\dot{x}^{s}+ \\
&&+\left(
\begin{array}{c}
P_{ksl,t}^{i}\xi +P_{ksl,r}^{i}\eta ^{r}+4P_{rksl}^{i}\eta _{,t}^{r}+ \\
+3P_{krl}^{i}\left( \eta _{,s}^{r}-\delta _{s}^{r}\xi _{,t}\right)
-2P_{kr}^{i}\xi _{,s}%
\end{array}%
\right) \dot{x}^{k}\dot{x}^{s}\dot{x}^{l}+ \\
&&+\left(
\begin{array}{c}
\sum\limits_{m=4}^{n}\left( P_{j_{1}...j_{m}~,t}^{i}\xi
+P_{j_{1}...j_{m}~,r}^{i}\eta ^{r}\right) \dot{x}^{j_{1}}\ldots \dot{x}%
^{j_{m}}+ \\
+\sum\limits_{m=5}^{n}mP_{j_{1}...j_{m}}^{i}\eta _{,t}^{j_{1}}\ldots \dot{x}%
^{j_{m}}+ \\
+\sum\limits_{m=4}^{n}mP_{j_{1}...j_{m}}^{i}\left( \eta _{,s}^{j_{1}}-\delta
_{s}^{j_{1}}\xi _{,t}\right) \dot{x}^{s}\ldots \dot{x}^{j_{m}}+ \\
-\left( \sum\limits_{m=3}^{n}mP_{j_{1}...j_{m}}^{i}\dot{x}^{j_{1}}\ldots
\dot{x}^{j_{m}}\dot{x}^{s}\xi _{,s}\right)%
\end{array}%
\right) .
\end{eqnarray*}

Collecting terms and setting the coefficient of each product of $\dot{x}%
^{j_{1}}$ equal to zero we obtain determining equations (\ref{de.02})-(\ref%
{de.06}).

$\left( \dot{x}^{i}\right) ^{0}:$%
\begin{eqnarray*}
0 &=&\left( \eta _{,tt}^{i}-\eta _{,j}^{i}P^{j}\ +2\xi _{,t}P^{i}\right)
+\left( \left( P_{,t}^{i}\xi +P_{,r}^{i}\eta ^{r}\right) +P_{k}^{i}\eta
_{,t}^{k}\right) \\
&=&L_{\eta }P^{i}+2\xi _{,t}P^{i}+P_{,t}^{i}\xi +P_{k}^{i}\eta
_{,t}^{k}+\eta _{,tt}^{i}.
\end{eqnarray*}

$\left( \dot{x}^{i}\right) ^{1}:$

\begin{eqnarray*}
0 &=&\left[ 2\eta _{,tr}^{i}+2\Gamma _{rs}^{i}\eta _{,t}^{s}-\xi
_{,tt}\delta _{r}^{i}\right] -\eta _{,j}^{i}P_{r}^{j}+\left[ \xi
_{,j}P^{j}\delta _{r}^{i}+2\xi _{,r}P^{i}+2\xi _{,t}P_{r}^{i}\right] + \\
&&+P_{s,t}^{i}\xi +P_{s,r}^{i}\eta ^{r}+2P_{ks}^{i}\eta
_{,t}^{k}+P_{k}^{i}\left( \eta _{,s}^{k}-\delta _{s}^{k}\xi _{,t}\right) \\
&=&\left( 2\eta _{,t|r}^{i}-\xi _{,tt}\delta _{r}^{i}\right) +L_{\eta
}P_{r}^{i}+2P_{ks}^{i}\eta _{,t}^{k}+\left( \xi _{,r}P^{r}\delta
_{r}^{i}+2\xi _{,r}P^{i}+P_{r,t}^{i}\xi +\xi _{,t}P_{r}^{i}\right) .
\end{eqnarray*}

$\left( \dot{x}^{i}\right) ^{2}:$

\begin{eqnarray*}
0 &=&L_{\eta }\Gamma _{jk}^{i}+\xi \Gamma _{rk,t}^{i}+\left[ -2\xi
_{,tr}\delta _{k}^{i}\right] +\left[ -\eta _{,j}^{i}P_{rk}^{j}\right] +\left[
\xi _{,j}P_{r}^{j}\delta _{k}^{i}+2\xi _{,r}P_{k}^{i}\right] + \\
&&+2\xi _{,t}P_{rk}^{i}+P_{ks,t}^{i}\xi +P_{ks,r}^{i}\eta
^{r}+3P_{ksr}^{i}\eta _{,t}^{r}+2P_{kr}^{i}\left( \eta _{,s}^{r}-\delta
_{s}^{r}\xi _{,t}\right) -P_{k}^{i}\xi _{,r} \\
&=&\left( L_{X}\Gamma _{jk}^{i}-2\xi _{,tr}\delta _{k}^{i}\right) +L_{\eta
}P_{rk}^{i}+P_{ks,t}^{i}\xi +3P_{ksr}^{i}\eta _{,t}^{r}+\left( \xi
_{,j}P_{r}^{j}\delta _{k}^{i}+\xi _{,r}P_{k}^{i}\right) .
\end{eqnarray*}

$\left( \dot{x}^{i}\right) ^{3}:$

\begin{eqnarray*}
0 &=&\left[ \xi _{,rk}\delta _{s}^{i}+\xi _{,r}\Gamma _{ks}^{i}\right] -\eta
_{,j}^{i}P_{rks}^{j}+\xi _{,j}P_{rk}^{j}\delta _{s}^{i}+2\xi
_{,t}P_{rks}^{i}+P_{ksl,t}^{i}\xi + \\
&&+P_{ksl,r}^{i}\eta ^{r}+4P_{rksl}^{i}\eta _{,t}^{r}+3P_{krl}^{i}\left(
\eta _{,s}^{r}-\delta _{s}^{r}\xi _{,t}\right) -2P_{kr}^{i}\xi _{,s}+2\xi
_{,s}P_{kr}^{i} \\
&=&\left[ -\xi _{,r|k}\delta _{s}^{i}\right] +L_{\eta
}P_{krs}^{i}-P_{krs}^{i}\xi _{,t}+\xi _{,j}P_{rk}^{j}\delta
_{s}^{i}+P_{ksl,t}^{i}\xi +4P_{rksl}^{i}\eta _{,t}^{r}.
\end{eqnarray*}%
the rest terms gives

\begin{eqnarray*}
0 &=&\sum\limits_{m=4}^{n}\left( L_{\eta }P_{j_{1}...j_{m}}^{j}\dot{x}%
^{j_{1}}\ldots \dot{x}^{j_{m}}\right) +\sum\limits_{m=4}^{n}\left(
P_{j_{1}...j_{m}~,t}^{i}\xi \right) \dot{x}^{j_{1}}\ldots \dot{x}^{j_{m}}+ \\
&&+\left( 2\xi _{,t}\sum\limits_{m=4}^{n}P_{j_{1}...j_{m}}^{i}\dot{x}%
^{j_{1}}\ldots \dot{x}^{j_{m}}-\sum\limits_{m=4}^{n}mP_{j_{1}...j_{m}}^{i}%
\dot{x}^{j_{1}}\ldots \dot{x}^{j_{m}}\right) + \\
&&+\xi _{,r}\dot{x}^{r}\left( 2\sum\limits_{K=m-1}^{n}P_{j_{1}...j_{K}}^{i}%
\dot{x}^{j_{1}}\ldots \dot{x}^{j_{K}}-\sum%
\limits_{K=m-1}^{n}KP_{j_{1}...j_{m}}^{i}\dot{x}^{j_{1}}\ldots \dot{x}%
^{j_{K}}\right) + \\
&&+\left( \sum\limits_{C=m+2}^{n}CP_{j_{1}...j_{c}}^{i}\eta
_{,t}^{j_{1}}\ldots \dot{x}^{j_{c}}\right) +\xi _{,j}\dot{x}^{i}\left(
\sum\limits_{K=m-1}^{n}P_{j_{1}...j_{K}}^{j}\dot{x}^{j_{1}}\ldots \dot{x}%
^{j_{K}}\right) .
\end{eqnarray*}

%TCIMACRO{\TeXButton{EndAppendixA}{\end{subappendices}}}%
%BeginExpansion
\end{subappendices}%
%EndExpansion

\chapter{Motion on a curved space\label{chapter3}}

\section{Introduction}

\label{Introduction}The study of Lie point symmetries of a given system of
ODEs consists of two steps: (a)\ the determination of the conditions, which
the components of the Lie symmetry vectors must satisfy and (b) the solution
of the system of these conditions. These conditions can be quite involved,
but today it is possible to use algebraic computing programs to derive them
(for a review see \cite{But2003}). Therefore the essential part of the work
is the second step. For a small number of equations (say up to three) one
can possibly employ again computer algebra to look for a solution of the
system. However for a large number of equations such an attempt is
prohibitive and one has to go back to traditional methods to determine the
solution.

In Chapter \ref{LieSymGECh} the Lie and the Noether point symmetries of the
geodesic equations were calculated in terms of the special projective
algebra of the space. The purpose of the present chapter is to extend the
previous results and to provide an alternative way to solve the system of
Lie symmetry conditions for the second order equations of the form%
\begin{equation}
\ddot{x}^{i}+\Gamma _{jk}^{i}\dot{x}^{j}\dot{x}^{k}=F^{i}.  \label{L2P.1}
\end{equation}%
Here $\Gamma _{jk}^{i}(x^{r})$ are general functions, a dot over a symbol
indicates derivation with respect to the parameter $t$ along the solution
curves and $F^{i}(x^{j})$ is a $C^{p}$ vector field. This type of equations
is important, because it contains the equations of motion of a dynamical
system in a Riemannian space, in which the functions $\Gamma
_{jk}^{i}(x^{r}) $ are the connection coefficients of the metric and $t$
being an affine parameter along the trajectory. In the following we assume
this identification of $\Gamma _{jk}^{i}$'s\footnote{%
Of course it is possible to look for a metric for which a given set of $%
\Gamma _{jk}^{i}$ are the connection coefficients, or, even avoid the metric
altogether. However we shall not attempt this in the present work. For such
an attempt see \cite{Mahq08}.}.

The key idea, which is proposed here, is to express the system of Lie
symmetry conditions of (\ref{L2P.1}) in a Riemannian space in terms of
collineation (i.e. symmetry) conditions of the metric. If this is achieved,
then the Lie point symmetries of (\ref{L2P.1}) will be related to the
collineations of the metric, hence their determination will be transferred
to the geometric problem of determining the generators of a specific type of
collineations of the metric. One then can use of existing results of
Differential Geometry on collineations to produce the solution of the Lie
symmetry problem.

The natural question to ask is: If the Lie symmetries of the dynamical
systems moving in a given Riemannian space are from the same set of
collineations of the space, how will one select the Lie symmetries of a
specific dynamical system?\ The answer is as follows. The left hand side of
Equation (\ref{L2P.1}) contains the metric and its derivatives and it is
\emph{common to all} dynamical systems moving in the same Riemannian space.
Therefore geometry (i.e. collineations)\ enters in the left hand side of (%
\ref{L2P.1}) only. A dynamical system is defined by the force field $F^{i},$
which enters into the right hand side of (\ref{L2P.1}) only. We conclude
that there must exist constraint conditions, which will involve the
components of the collineation vectors and the force field $F^{i}$, which
will select the appropriate Lie symmetries for a specific dynamical system.

Indeed Theorem \ref{The general conservative system} (see section \ref%
{LieSymRS}) relates the Lie symmetry generators of (\ref{L2P.1}) with the
elements of the special projective Lie algebra of the space where motion
occurs, and provides these necessary constraint conditions.

What has been said for the Lie point symmetries of (\ref{L2P.1}) applies
also to Noether symmetries. The Noether symmetries are Lie point symmetries
which satisfy the constraint%
\begin{equation}
X^{\left[ 1\right] }L+L\frac{d\xi }{dt}=\frac{df}{dt}.  \label{L2p.3}
\end{equation}%
\newline
Theorem \ref{The Noether symmetries of a conservative system} (see section %
\ref{NoetherSymC}) relates the generators of Noether symmetries of (\ref%
{L2P.1}) with the homothetic algebra of the metric and provides the required
constraint conditions.

In the following sections we apply Theorem \ref{The general conservative
system} and Theorem \ref{The Noether symmetries of a conservative system} to
determine all two dimensional (section \ref{Newtonian dynamical systems
which admit Lie symmetries}) and all three dimensional (section \ref{Lie
point symmetries of three dimensional autonomous Newtonian systems})
Newtonian dynamical systems moving under the action of a general force $%
F^{i},$ which admit Lie and\ Noether point symmetries. In section \ref%
{Motion on the two dimensional sphere} we apply the results to determine the
conservative dynamical systems which move in a two-dimensional space of
constant non-vanishing curvature and admit Noether point symmetries. The
case of a conservative force has been addressed previously for the two
dimensional case by Sen \cite{Sen} and more recently by Damianou \emph{et al}
\cite{Damianou99} and for the three dimensional by \cite{Damianou2004}$.$ As
it will be shown both treatments are incomplete. We demonstrate the use of
the results in two cases. The non-conservative Kepler - Ermakov system \cite%
{Karasu,LeachK,MoyoL} and the case of the H\`{e}non Heiles type potentials
\cite{LeachHHP,Shrauner}. In both cases we recover and complete the existing
results.

\section{Lie symmetries of a dynamical system in a Riemannian space}

\label{LieSymRS}

I section \ref{TheSymCon} the Lie symmetry conditions were calculated for a
general system of ODEs polynomial in the velocities, therefore the Lie
symmetry conditions (determining equations) for equation (\ref{L2P.1}) with $%
F^{i}=F^{i}\left( t,x^{k}\right) $ are

\begin{equation}
L_{\eta }F^{i}+2\xi ,_{t}F^{i}+\xi F_{,t}^{i}+\eta ^{i},_{tt}=0
\label{PLS.09}
\end{equation}%
\begin{equation}
\left( \xi ,_{k}\delta _{j}^{i}+2\xi ,_{j}\delta _{k}^{i}\right) F^{k}+2\eta
^{i},_{t|j}-\xi ,_{tt}\delta _{j}^{i}=0  \label{PLS.10}
\end{equation}%
\begin{equation}
L_{\eta }\Gamma _{(jk)}^{i}=2\xi ,_{t(j}\delta _{k)}^{i}  \label{PLS.11a}
\end{equation}%
\begin{equation}
\xi _{(,i|j}\delta _{r)}^{k}=0.  \label{PLS.12}
\end{equation}

Equation (\ref{PLS.11a}) means that $\eta ^{i}$ is a projective collineation
of the metric with projective function $\xi _{,t}$. Furthermore, equation (%
\ref{PLS.12}) means that $\xi _{,j}$ is a gradient KV of $g_{ij}$; that is, $%
\eta ^{i}$ is a special projective collineation of the metric. Equation (\ref%
{PLS.09}) gives\footnote{%
Recall that $L_{\eta }F_{j}=F_{,jk}\eta ^{k}+\eta ^{k},_{j}F,_{k}.$}%
\begin{equation}
\left( L_{\eta }g^{ij}\right) F_{j}+g^{ij}L_{\eta }F_{j}+2\xi
_{,t}g^{ij}F_{j}+\xi g^{ij}F_{j,t}+\eta _{,tt}^{i}=0.  \label{de.20}
\end{equation}%
This equation restricts $\eta ^{i}$ further because it relates it directly
to the metric symmetries. Finally equation (\ref{PLS.10}) gives%
\begin{equation}
-\delta _{j}^{i}\xi _{,tt}+\left( \xi _{,j}\delta _{k}^{i}+2\delta
_{j}^{i}\xi _{,k}\right) F^{k}+2\eta _{,tj}^{i}+2\Gamma _{jk}^{i}\eta
_{,t}^{k}=0.  \label{de.21d}
\end{equation}

Equations (\ref{de.20}),(\ref{de.21d}) are the constraint conditions which
relate the components $\xi ,~\eta ^{i}$ of the Lie point symmetry vector
with the vector $F^{i}$.

\begin{proposition}
The Lie point symmetries of the dynamical system (\ref{L2P.1}) where $%
F^{i}=F^{i}\left( t,x^{k}\right) ,$ are generated from the special
projective algebra of the space where the motion occurs.
\end{proposition}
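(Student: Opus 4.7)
The plan is to show that the four determining equations (\ref{PLS.09})--(\ref{PLS.12}) split naturally into a \emph{geometric} pair and a \emph{dynamical} pair. The geometric pair (\ref{PLS.11a}) and (\ref{PLS.12}) involve only $\xi$, $\eta^{i}$ and the connection, not the force $F^{i}$; they are identical in form to the symmetry conditions for affinely parameterized geodesics solved in Theorem \ref{TheoremGEs}. The dynamical pair (\ref{PLS.09}) and (\ref{PLS.10}) then serve purely as constraints that select, from the algebra produced by the geometric pair, those vectors compatible with the given $F^{i}$. So the content of the proposition is that the geometric pair alone pins down $\eta^{i}$ to lie in the special projective algebra.

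First I would treat (\ref{PLS.12}). Symmetrizing the free indices shows $\xi_{,(i|j)}=0$ whenever we also contract with $\delta^{k}_{r)}$, which after a short index manipulation forces $\xi_{,i}$ to be a gradient Killing vector of $g_{ij}$. Consequently $\xi (t,x)=C_{J}(t)S^{J}(x)+D(t)$, exactly as in (\ref{PCA.60})--(\ref{PCA.61}) of Chapter \ref{LieSymGECh}, with the index $J$ running over the gradient KVs of the metric. In particular $\xi_{,t}$ is also (for each fixed $t$) the gradient of a Killing scalar.

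Second I would treat (\ref{PLS.11a}). This is the condition $L_{\eta}\Gamma^{i}_{(jk)} = \phi_{,j}\delta^{i}_{k}+\phi_{,k}\delta^{i}_{j}$ with $\phi = \xi_{,t}$, i.e.\ the statement that $\eta^{i}$ is a projective collineation of the metric with projective function $\xi_{,t}$. Since the previous step forces the spatial gradient of $\xi_{,t}$ to be itself a gradient KV, its Hessian satisfies $(\xi_{,t})_{;ij}=0$, which is precisely the condition distinguishing \emph{special} projective collineations from general ones. Thus every solution $\eta^{i}$ of (\ref{PLS.11a}) subject to (\ref{PLS.12}) must be a special projective collineation of $g_{ij}$, possibly degenerating (as in Theorem \ref{TheoremGEs}) to an AC, HV or KV depending on whether the projective function vanishes, is constant, or is non-constant.

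Finally I would observe that equations (\ref{PLS.09}) and (\ref{de.21d}) involve $F^{i}$ algebraically and its derivatives, together with $\eta^{i}_{,tt}$ and $\xi_{,tt}$; they produce no new generators but only impose linear constraints on the coefficients of the decomposition of $\eta^{i}$ and $\xi$ in the special projective basis. Hence the admitted Lie symmetry vectors form a subalgebra of the special projective algebra of $(M,g)$, which is the claim. The one delicate point — and the place where some care is required — is checking that no generator outside the special projective algebra can satisfy (\ref{PLS.11a}) once (\ref{PLS.12}) is imposed; this is handled by the Hessian argument above, together with the identity $L_{\eta}\Gamma^{i}_{(jk)} = \eta^{i}_{;jk}-R^{i}_{\ jkl}\eta^{l}$ used to transfer the statement into covariant form, exactly as in the derivation of (\ref{PCA.70})--(\ref{PCA.70a}).
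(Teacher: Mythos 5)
Your proposal is correct and follows essentially the same route as the paper: condition (\ref{PLS.12}) forces $\xi_{,i}$ to be a gradient KV so that $\xi=C_{J}(t)S^{J}(x)+D(t)$, condition (\ref{PLS.11a}) then identifies $\eta^{i}$ as a projective collineation with projective function $\xi_{,t}$ whose Hessian vanishes (hence a special PC or one of its degenerates), and (\ref{PLS.09})--(\ref{PLS.10}) act only as constraints coupling the resulting generators to $F^{i}$. The paper states this more tersely, while you make the Hessian argument explicit, but the decomposition into a geometric pair and a dynamical pair is exactly the paper's.
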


In the case where the dynamical system (\ref{L2P.1}) is autonomous and
conservative, that is, $F^{i}=g^{ij}V_{,j}\left( x^{k}\right) $ and $V_{,j}$
is not a gradient KV of the metric, the solution of the determining
equations is given by the following theorem (for a proof see Appendix \ref%
{proofautonomous}).~

\begin{theorem}
\label{The general conservative system} The Lie point symmetries of the
equations of motion of an autonomous conservative system
\begin{equation}
\ddot{x}^{i}+\Gamma _{jk}^{i}\dot{x}^{j}\dot{x}^{k}=g^{ij}V_{,i}
\label{PP.01}
\end{equation}%
in a general Riemannian space with metric $g_{ij},$ are given in terms of
the generators $Y^{i}$ of the special projective Lie algebra of the metric $%
g_{ij}$ as follows. \newline
\emph{Case I}\textbf{\ \ }Lie symmetries due to the affine algebra. The
resulting Lie symmetries are%
\begin{equation}
\mathbf{X}=\left( \frac{1}{2}d_{1}a_{1}t+d_{2}\right) \partial
_{t}+a_{1}Y^{i}\partial _{i}  \label{PP.02}
\end{equation}%
where $a_{1}$ and $d_{1}$ are constants, provided the potential satisfies
the condition%
\begin{equation}
\ L_{Y}V^{,i}+d_{1}V^{,i}=0.  \label{PP.03}
\end{equation}%
\emph{Case IIa}\textbf{\ \ }The Lie symmetries are generated by the gradient
homothetic algebra and $Y^{i}\neq V^{,i}$. The Lie symmetries are
\begin{equation}
\mathbf{X}=~2\psi \int T\left( t\right) dt\partial _{t}+T\left( t\right)
Y^{i}\partial _{i}  \label{PP.04}
\end{equation}%
where the function $T\left( t\right) $ is the solution of the equation $%
~T_{,tt}=a_{1}T~~$provided the potential $V(x^{i})$ satisfies the condition
\begin{equation}
L_{\mathbf{Y}}V^{,i}+4\psi V^{,i}+a_{1}Y^{i}=0.  \label{PP.03a}
\end{equation}%
\emph{Case IIb}\textbf{\ }The \ Lie symmetries are generated by the gradient
HV $Y^{i}=\kappa V^{,i},$ where $\kappa $ is a constant. In this case the
potential is the gradient HV\ of the metric and the Lie symmetry vectors are
\begin{equation}
\mathbf{X}=\left( -c_{1}\sqrt{\psi k}\cos \left( 2\sqrt{\frac{\psi }{k}}%
t\right) +c_{2}\sqrt{\psi k}\sin \left( 2\sqrt{\frac{\psi }{k}}t\right)
\right) \partial _{t}+\left( c_{1}\sin \left( 2\sqrt{\frac{\psi }{k}}%
t\right) +c_{2}\cos \left( 2\sqrt{\frac{\psi }{k}}t\right) \right)
H^{,i}\partial _{i}.  \label{PP.08}
\end{equation}%
\emph{Case IIIa} \ The \ Lie symmetries due to the proper special projective
algebra. In this case the Lie symmetry vectors are (the index $J$ counts the
gradient KVs)%
\begin{equation}
\mathbf{X}_{J}=\left( C\left( t\right) S_{J}+D\left( t\right) \right)
\partial _{t}+T\left( t\right) Y^{i}\partial _{i}  \label{PP.10}
\end{equation}%
where the functions $C(t),T(t),D(t)$ are solutions of the system of
simultaneous equations
\begin{equation}
D_{,t}=\frac{1}{2}d_{1}T~~,~~\
T_{,tt}=a_{1}T~~,~~T_{,t}=c_{2}C~~,~~D_{,tt}=d_{c}C~~,~~C_{,t}=a_{0}T
\end{equation}%
and in addition the potential satisfies the conditions
\begin{align}
L_{Y}V^{,i}+2a_{0}SV^{,i}+d_{1}V^{,i}+a_{1}Y^{i}& =0  \label{PP.11} \\
\left( S_{,k}\delta _{j}^{i}+2S,_{j}\delta _{k}^{i}\right) V^{,k}+\left(
2Y^{i}{}_{;~j}-a_{0}S\delta _{j}^{i}\right) c_{2}-d_{c}\delta _{j}^{i}& =0.
\label{PP.12}
\end{align}%
\emph{Case IIIb}\textbf{\ \ }Lie symmetries due to the proper special
projective algebra and $Y_{J}^{i}=\lambda S_{J}V^{,i},$ where~$V^{,i}$ is a
gradient HV and $S_{J}^{,i}$ is a gradient KV. The Lie symmetry vectors are%
\begin{equation}
X_{J}=\left( C\left( t\right) S_{J}+d_{1}\right) \partial _{t}+T(t)\lambda
S_{J}V^{,i}\partial _{i}
\end{equation}%
where the functions $C\left( t\right) $ and $T(t)$ are computed from the
relations
\begin{equation}
T_{,tt}+2C_{,t}=\lambda _{1}T~~,~~T_{,t}=\lambda _{2}C~~,~~C_{,t}=a_{0}T
\label{SP.20}
\end{equation}%
and the potential satisfies the conditions%
\begin{align}
L_{\mathbf{Y}_{J}}V^{,i}+\lambda _{1}S_{J}V^{,i}& =0  \label{SP.21} \\
C\left( \lambda _{1}S_{J}\delta _{j}^{i}+2S_{J},_{j}V^{,i}\right) +\lambda
_{2}\left( 2\lambda S_{J,j}V^{.i}+\left( 2\lambda S_{J}-a_{0}S_{J}\right)
\delta _{j}^{i}\right) & =0.  \label{SP.22}
\end{align}
\end{theorem}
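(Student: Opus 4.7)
My strategy is to extract the maximum geometric information from the four determining equations (\ref{PLS.09})--(\ref{PLS.12}) and reduce the problem to ODE systems for the $t$-dependence of the symmetry components together with algebraic constraints on the potential $V$. I would first use (\ref{PLS.12}), which says that $\xi_{,i}$ is a gradient KV of $g_{ij}$, to write
\begin{equation*}
\xi(t,x) = \sum_{J} C_{J}(t)\, S^{J}(x) + D(t),
\end{equation*}
where $S^{J}$ are the scalar potentials of the gradient KVs. Then I would invoke (\ref{PLS.11a}), which at fixed $t$ asserts that $\eta^{i}$ is a projective collineation of $g_{ij}$ with projection function $\xi_{,t}$. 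Since the spatial gradient of $\xi_{,t}$ equals $\sum_{J} C_{J,t}(t)\, S^{J}_{,j}(x)$, which is itself a gradient KV, the projection function has vanishing second covariant derivative and hence $\eta^{i}$ must be a \emph{special} projective collineation. This forces the factorization
\begin{equation*}
\eta^{i}(t,x) = T(t)\, Y^{i}(x) + (\text{KV/HV terms multiplied by functions of } t),
\end{equation*}
with $Y^{i}$ drawn from the special projective algebra of $g_{ij}$ (as classified in Chapter \ref{chapter1}).

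With these forms in hand, I would substitute into the remaining determining equations (\ref{PLS.09}) and (\ref{PLS.10}), using that the force is time-independent, $F^{i} = g^{ij} V_{,j}$. Each equation then splits cleanly into an $x$-dependent part, built from $Y^{i}$, $S^{J}$, $V^{,i}$ and their derivatives, multiplied by a $t$-dependent part, built from $C_{J}, D, T$ and their derivatives. Separating the two types of dependence yields, on the one hand, a coupled ODE system for the time functions and, on the other, the algebraic constraint conditions on $V$ stated in the theorem. Concretely, (\ref{PLS.09}) turns into a Lie-derivative condition on $V^{,i}$ along $Y^{i}$ plus multiples of $V^{,i}$ and $Y^{i}$, while (\ref{PLS.10}) produces a mixed algebraic-differential relation coupling $V^{,k}$, $Y^{i}{}_{;j}$ and $S^{J}_{,j}$.

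The final step is to organize the analysis by the stratum of the special projective algebra containing $Y^{i}$. If $Y^{i}$ is an AC, its projection function vanishes, forcing $C_{J}=0$, $T_{,tt}=0$ and $D_{,tt}=0$; this produces (\ref{PP.02}) and the single constraint (\ref{PP.03}) on $V$ (Case I). If $Y^{i}$ is a gradient HV with conformal factor $\psi$ and is not parallel to $V^{,i}$, the time ODE reduces to $T_{,tt} = a_{1} T$ and one obtains (\ref{PP.04}) with the potential condition (\ref{PP.03a}) (Case IIa). If instead $Y^{i} = \kappa V^{,i}$ is a gradient HV proportional to the potential gradient, previously independent tensorial pieces collapse and the time equation becomes oscillatory, yielding (\ref{PP.08}) (Case IIb). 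For $Y^{i}$ a proper sp.PC, the gradient-KV potential $S^{J}$ enters simultaneously through $\xi_{,t}$ and through the projection of $\eta^{i}$, producing the full coupled system for $C, D, T$ together with the two conditions (\ref{PP.11})--(\ref{PP.12}) (Case IIIa); the degenerate sub-case $Y^{i} = \lambda S_{J} V^{,i}$ gives Case IIIb with the analogous system (\ref{SP.20})--(\ref{SP.22}). The main obstacle I foresee is exactly the bookkeeping in these resonant degenerations IIb and IIIb: there the standard $x$-$t$ separation of terms fails because previously independent tensorial pieces become parallel, so one must carefully regroup them before the time ODEs can be integrated to the stated closed forms.
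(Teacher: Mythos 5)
Your proposal follows essentially the same route as the paper's proof in Appendix \ref{proofautonomous}: integrate (\ref{PLS.12}) to get $\xi=C_J(t)S^J+D(t)$, use (\ref{PLS.11a}) to force the separable form $\eta^i=T(t)Y^i(x)$ with $Y^i$ in the special projective algebra, substitute into (\ref{PLS.09})--(\ref{PLS.10}) to split off the time ODEs from the spatial constraints on $V$, and then run the case analysis including the resonant degenerations $Y^i=\kappa V^{,i}$ and $Y^i=\lambda S_JV^{,i}$. The only cosmetic difference is that you stratify by the type of collineation from the outset, whereas the paper first branches on the behavior of $T(t)$ (zero, constant, non-constant) and only then deduces which collineation types are admissible in each branch; the two organizations yield the same cases.
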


An immediate important application of Theorem \ref{The general conservative
system} concerns the important case of spaces of constant curvature. As we
have seen already (see also \cite{Barnes}) the special projective algebra of
a space of constant curvature consists of non-gradient KVs only. Therefore
we have the following corollary.

\begin{corollary}
\label{The general conservative system CC}The Lie point symmetries of the
equations of motion of an autonomous conservative system (\ref{PP.01})$~$in
a space of constant curvature are elements of the non-gradient KVs algebra.
\end{corollary}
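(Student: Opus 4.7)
The plan is to apply Theorem \ref{The general conservative system} case by case, exploiting the fact that the special projective algebra of a space of constant non-zero curvature is simply its non-gradient Killing algebra. This geometric fact would be put in place first: it follows from the Einstein-space identity (\ref{NPP.2}), applicable since every constant-curvature space is Einstein with $R\neq 0$. Indeed, any special PC (and a fortiori any AC) satisfies $L_{Y}g_{ab}=\tfrac{n(1-n)}{R}\phi _{;ab}=0$ and therefore reduces to a KV; a proper HV would contradict $R=\text{const}$; and gradient KVs are excluded by Theorem \ref{decSp} since a constant-curvature space is not decomposable.

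With this classification in hand, Cases IIa and IIb of Theorem \ref{The general conservative system} are vacuous because both presuppose the existence of a gradient HV. Similarly Cases IIIa and IIIb are vacuous because both presuppose at least one gradient KV $S_{J}$. Only Case I can survive, producing a symmetry of the form
\begin{equation*}
\mathbf{X}=\left( \tfrac{1}{2}d_{1}a_{1}t+d_{2}\right) \partial _{t}+a_{1}Y^{i}\partial _{i},
\end{equation*}
with $Y^{i}$ drawn from the affine algebra of the space. In our setting this affine algebra coincides with the non-gradient Killing algebra, so $Y^{i}$ is necessarily a non-gradient KV; the accompanying selection rule $L_{Y}V^{,i}+d_{1}V^{,i}=0$ then cuts down to those non-gradient KVs that are compatible with the prescribed potential $V$.

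Consequently the spatial generator of every admitted Lie point symmetry is an element of the non-gradient KV algebra of the space, which is precisely the assertion of the corollary; the time component is determined to be an affine function of $t$ whose coefficients are tied to the choice of $Y^{i}$. The only real obstacle in the argument is the preliminary geometric classification of the special projective algebra of a constant-curvature space; once that is admitted, the corollary is an immediate specialisation of Theorem \ref{The general conservative system} and requires no further calculation.
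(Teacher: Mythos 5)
Your proof is correct and follows essentially the same route as the paper: the paper simply asserts (citing Barnes and its own earlier discussion) that the special projective algebra of a space of constant non-vanishing curvature consists of non-gradient KVs only, and then concludes that only Case I of Theorem \ref{The general conservative system} can apply. You supply the justification behind that assertion via the Einstein-space identity and the decomposability theorem --- which matches the argument the paper gives later for proper Einstein spaces --- with the one caveat that, like the paper, you must implicitly exclude the flat case $R=0$, where the special projective algebra is much larger.
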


This implies that \ in spaces of constant curvature \ it is enough to
consider Case I only.

If the system (\ref{L2P.1}) is autonomous but not conservative moving under
the action of the external force $F^{i}$ the previous results remain valid
except the cases IIb, IIIb which are not applicable. We emphasize that (with
appropriate adjustments) the results apply to affine spaces in which there
exists a connection $\Gamma _{jk}^{i}$ but not necessarily a metric.

\section{Noether symmetries of a dynamical system in a Riemannian space}

\label{NoetherSymC}

Consider a particle moving in a Riemannian space with metric $g_{ij}$ under
the influence of the potential $V\left( t,x^{k}\right) .$ The Lagrangian
describing the motion of the particle is
\begin{equation}
L=\frac{1}{2}g_{ij}\dot{x}^{i}\dot{x}^{j}-V\left( t,x^{k}\right) .
\label{NSCS.1}
\end{equation}%
A Lie symmetry vector $\mathbf{X}=\xi \left( t,x^{k}\right) \partial
_{t}+\eta ^{i}\left( t,x^{k}\right) \partial _{x^{i}}$ is a Noether symmetry
of the Lagrangian if it satisfies the condition
\begin{equation}
\mathbf{X}^{\left[ 1\right] }L+\frac{d\xi }{dt}L=\frac{df}{dt}
\label{NSCS.1a}
\end{equation}%
where $\mathbf{X}^{\left[ 1\right] }$ is the first prolongation of $\mathbf{X%
}.$ It can be shown that condition (\ref{NSCS.1a}) is equivalent to the
system of equations:%
\begin{align}
V_{,k}\eta ^{k}+V\xi _{,t}+\xi V_{,t}& =-f_{,t}  \label{NSCS.4} \\
\eta _{,t}^{i}g_{ij}-\xi _{,j}V& =f_{,j}  \label{NSCS.5} \\
L_{\eta }g_{ij}& =2\left( \frac{1}{2}\xi _{,t}\right) g_{ij}  \label{NSCS.6}
\\
\xi _{,k}& =0.  \label{NSCS.7}
\end{align}

Equation (\ref{NSCS.7}) implies $\xi =\xi \left( t\right) $, and then from (%
\ref{NSCS.6}) follows that $\eta ^{i}$ is a HV. Therefore we have the
following result

\begin{proposition}
The Noether symmetries of the Lagrangian (\ref{NSCS.1}) are generated, from
the homothetic algebra of the metric $g_{ij}$ of the space where motion
occurs.
\end{proposition}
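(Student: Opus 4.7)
The plan is to derive the Noether condition (\ref{NSCS.1a}) in full, expand it as a polynomial in the velocities $\dot x^{i}$, and read off the restrictions on $\xi$, $\eta^{i}$ and the gauge $f$. Explicitly, first I would compute the three ingredients
\begin{equation*}
X^{[1]}L=\tfrac{1}{2}\eta^{k}g_{ij,k}\dot x^{i}\dot x^{j}+g_{ij}\eta^{i}_{,t}\dot x^{j}+g_{ij}\eta^{i}_{,k}\dot x^{k}\dot x^{j}-V_{,k}\eta^{k}-\xi V_{,t},
\end{equation*}
\begin{equation*}
\tfrac{d\xi}{dt}L=\bigl(\xi_{,t}+\xi_{,k}\dot x^{k}\bigr)\bigl(\tfrac{1}{2}g_{ij}\dot x^{i}\dot x^{j}-V\bigr),\qquad \tfrac{df}{dt}=f_{,t}+f_{,k}\dot x^{k},
\end{equation*}
insert them in (\ref{NSCS.1a}), and treat the resulting identity as a polynomial in $\dot x^{i}$ (since $\xi,\eta^{i},f,V$ do not depend on $\dot x$). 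Separating by degree in $\dot x$ reproduces the four determining equations (\ref{NSCS.4})--(\ref{NSCS.7}) exactly as listed in the excerpt.

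Next I would read off the structural content of these equations. From the cubic-in-$\dot x$ coefficient, equation (\ref{NSCS.7}) gives $\xi=\xi(t)$. Substituting this into the quadratic-in-$\dot x$ coefficient, equation (\ref{NSCS.6}) becomes
\begin{equation*}
\mathcal{L}_{\eta}g_{ij}=\xi_{,t}(t)\, g_{ij},
\end{equation*}
whose right-hand side is spatially constant. This is the decisive step: for each fixed $t$, $\eta^{i}(t,\cdot)$ is a conformal Killing vector of $g_{ij}$ whose conformal factor $\psi(t)=\tfrac{1}{2}\xi_{,t}(t)$ is independent of the spatial coordinates. By the very definition of a homothetic vector given in Section~\ref{geomObj} (a CKV with $\psi_{,i}=0$), $\eta^{i}(t,\cdot)$ therefore lies in the homothetic algebra $G_{HV}$ of $g_{ij}$ at every $t$. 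Writing the generic element of $G_{HV}$ as a linear combination of the KVs $K_{I}^{i}$ and a HV $H^{i}$ with homothetic constant $\psi_{H}$, the most general $\eta^{i}$ compatible with (\ref{NSCS.6}) has the form
\begin{equation*}
\eta^{i}(t,x)=T(t)H^{i}(x)+C^{I}(t)K_{I}^{i}(x),\qquad T(t)\psi_{H}=\tfrac{1}{2}\xi_{,t}(t),
\end{equation*}
which shows precisely that the spatial part of every Noether generator is built from the homothetic algebra, with only the time-dependence of the coefficients $T(t),C^{I}(t)$ to be fixed.

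Finally I would use the two remaining determining equations (\ref{NSCS.4}) and (\ref{NSCS.5}) as constraints that (a) tie the time-dependence of $T(t),C^{I}(t)$ and $\xi(t)$ together, and (b) fix the gauge function $f(t,x)$ and restrict the admissible potentials $V(t,x)$. Equation (\ref{NSCS.5}), $\eta^{i}_{,t}g_{ij}-\xi_{,j}V=f_{,j}$, reduces (using $\xi_{,j}=0$) to $f_{,j}=g_{ij}\eta^{i}_{,t}$, which is integrable in $x$ only if $\eta^{i}_{,t}$ is itself (up to a KV) a gradient vector, matching Theorem~\ref{TheoremGE2}'s pattern in the free case $V=0$. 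Equation (\ref{NSCS.4}) then determines $f_{,t}$ and, via $f_{,tj}=f_{,jt}$, yields the compatibility condition between $V$ and the chosen homothetic generator, producing the familiar classification of admissible potentials. The conclusion of the proposition then follows: every Noether symmetry of (\ref{NSCS.1}) has spatial part in $G_{HV}$.

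The main obstacle I anticipate is the logical passage in the middle paragraph: one must justify that ``spatially-constant conformal factor'' forces membership in the homothetic algebra even though $\eta^{i}$ carries a parametric $t$-dependence. I would handle this by treating $t$ as an external parameter, invoking the definition of HV verbatim at each $t$, and then using the fact that $G_{HV}$ is a finite-dimensional real vector space (Theorem statement in Section~\ref{geomObj} bounding $\dim G_{HV}\le \dim G_{KV}+1$) to write $\eta^{i}(t,\cdot)$ as a finite linear combination with $t$-dependent coefficients. Once this is secured the remaining steps are bookkeeping.
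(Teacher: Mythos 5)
Your proposal is correct and follows essentially the same route as the paper: expand the Noether condition (\ref{NSCS.1a}) as a polynomial in $\dot x^{i}$, obtain the determining equations (\ref{NSCS.4})--(\ref{NSCS.7}), conclude $\xi=\xi(t)$ from the cubic coefficient, and read off from $L_{\eta}g_{ij}=\xi_{,t}(t)\,g_{ij}$ that $\eta^{i}$ is a KV/HV since the conformal factor is spatially constant; your extra care in justifying the passage from ``spatially constant conformal factor'' to membership in the finite-dimensional algebra $G_{HV}$, and your sketch of how (\ref{NSCS.4})--(\ref{NSCS.5}) fix the time-dependence and the gauge, is exactly what the paper defers to Theorem \ref{The Noether symmetries of a conservative system}. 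One transcription slip: your displayed $X^{[1]}L$ omits the terms $-\left(\xi_{,t}+\xi_{,k}\dot x^{k}\right)g_{ij}\dot x^{i}\dot x^{j}$ coming from the $-\dot x^{i}\,d\xi/dt$ part of the prolongation coefficient $\eta^{i}_{[1]}$, without which the quadratic coefficient would come out as $L_{\eta}g_{ij}=-\xi_{,t}g_{ij}$ rather than (\ref{NSCS.6}); the determining equations you actually quote are nonetheless the correct ones.
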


In the case the potential is autonomous, that is $V\left( t,x^{k}\right)
=V\left( x^{k}\right) ,$ the solution of (\ref{NSCS.4})-(\ref{NSCS.7})
relates the Noether symmetries of (\ref{NSCS.1}) with the elements of the
homothetic algebra of the metric $g_{ij}$ as follows.

\begin{theorem}
\label{The Noether symmetries of a conservative system}\ The Noether
symmetries of an autonomous conservative dynamical system moving in a
Riemannian space with metric $g_{ij}$ described by the Lagrangian (\ref%
{NSCS.1}) are generated from the homothetic Lie algebra of the metric $%
g_{ij} $ as follows.

\emph{Case I.} The KVs and the HV\ satisfy the condition:%
\begin{equation}
V_{,k}Y^{k}+2\psi _{Y}V+c_{1}=0.  \label{NSCS.14}
\end{equation}%
The Noether symmetry vector is%
\begin{equation}
\mathbf{X}=2\psi _{Y}t\partial _{t}+Y^{i}\partial _{i},~~f=c_{1}t,
\end{equation}%
where $T\left( t\right) =a_{0}\neq 0.$

\emph{Case II.} The metric admits the gradient KVs $S_{J}$, the gradient HV $%
H^{,i}$ and the potential satisfies the condition%
\begin{equation}
V_{,k}Y^{,k}+2\psi _{Y}V=c_{2}Y+d.  \label{NSCS.17}
\end{equation}%
In this case the Noether symmetry vector and the Noether function are%
\begin{equation}
\mathbf{X}=2\psi _{Y}\int T\left( t\right) dt\partial _{t}+T\left( t\right)
S_{J}^{,i}\partial _{i}~~~,~~f\left( t,x^{k}\right) =T_{,t}S_{J}\left(
x^{k}\right) ~+d\int Tdt.  \label{NSCS.17a}
\end{equation}%
and the functions $T(t)$ and $K\left( t\right) $ ($T_{,t}\neq 0$) are
computed from the relations%
\begin{equation}
T_{,tt}=c_{2}T~,~K_{,t}=d\int Tdt+\text{constant}  \label{NSCS.18}
\end{equation}%
where $c_{2}$ is a constant.

In addition to the above there is also the standard Noether symmetry $%
\partial _{t}$.
\end{theorem}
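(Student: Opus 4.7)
The plan is to solve the determining system (\ref{NSCS.4})--(\ref{NSCS.7}) step by step, exploiting the autonomy $V_{,t}=0$ and reading off the constraints on $\eta^i$, $\xi$, $f$ in terms of the homothetic algebra. First, equation (\ref{NSCS.7}) yields $\xi=\xi(t)$ immediately, and then (\ref{NSCS.6}) reduces to $L_\eta g_{ij}=\xi_{,t}(t)\,g_{ij}$. Since the conformal factor is independent of the spatial coordinates, this says that for each fixed $t$ the spatial vector field $\eta(t,\cdot)$ is a conformal Killing vector of $g_{ij}$ with \emph{constant} conformal factor, i.e.\ an element of the homothetic algebra of $g_{ij}$. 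Because that algebra is finite-dimensional and contains at most one HV (say $H^i$ with $\psi_H=\mathrm{const}$) together with the KVs $K_\alpha^i$, we may write $\eta^i(t,x)=T(t)H^i(x)+\sum_\alpha A_\alpha(t)K_\alpha^i(x)$, and the conformal factor constraint forces $\xi_{,t}=2\psi_H T(t)$, hence $\xi(t)=2\psi_H\!\int T\,dt+C_1$.

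Next I would exploit the gauge-function equation (\ref{NSCS.5}), which now reads $g_{ij}\dot\eta^i=f_{,j}$, i.e.\ $T_{,t}H_j+\sum_\alpha A_{\alpha,t}K_{\alpha\,j}=f_{,j}$. Integrability $f_{,[ij]}=0$ together with the identity $K_{\alpha\,[i;j]}$ being generically nonzero for non-gradient KVs implies that the coefficients of non-gradient elements must be constants; only the coefficients in front of \emph{gradient} homothetic vectors can depend on $t$. This splits the analysis into the two cases announced in the theorem: (Case I) $T$ and all $A_\alpha$ are constants, so $\eta^i=Y^i$ is any fixed element of the homothetic algebra and $f=f(t)$; (Case II) the time-dependent part is supported on gradient homothetic vectors $S_J^{,i}$ (or the gradient HV), in which case $f(t,x)=T_{,t}(t)S_J(x)+K(t)$ up to constants.

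Finally, I would substitute these forms into the potential equation (\ref{NSCS.4}), which becomes $Y^k V_{,k}+\xi_{,t}V+f_{,t}=0$. In Case I, $\xi_{,t}=2\psi_Y$ (constant) and $f_{,t}=\mathrm{const}$, so the equation separates into the purely spatial constraint $V_{,k}Y^k+2\psi_Y V+c_1=0$, giving (\ref{NSCS.14}) and the Noether vector $\mathbf{X}=2\psi_Y t\,\partial_t+Y^i\partial_i$ with $f=c_1 t$. In Case II, with $\eta^i=T(t)S_J^{,i}$ and $\xi_{,t}=2\psi_{S_J}T$, the spatial/temporal separation in $T V_{,k}S_J^{,k}+2\psi_{S_J}T V+T_{,tt}S_J+K_{,t}=0$ forces $V_{,k}S_J^{,k}+2\psi_{S_J}V=c_2 S_J+d$ together with $T_{,tt}=c_2 T$ and $K_{,t}=-d\!\int T\,dt$ (up to an additive constant), which is exactly (\ref{NSCS.17})--(\ref{NSCS.18}) and yields (\ref{NSCS.17a}). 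The standard Noether symmetry $\partial_t$ appears as the trivial solution $\xi=\mathrm{const}$, $\eta^i=0$, $f=\mathrm{const}$.

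The routine parts are the decomposition of $\eta$ into the finite-dimensional homothetic algebra and the substitution into the remaining equations. The main obstacle is the clean separation in Case II: one must argue carefully that the coefficients of non-gradient KVs cannot carry genuine $t$-dependence, because otherwise the mixed partial condition $\dot\eta_{i;j}=\dot\eta_{j;i}$ (which the gauge equation demands) fails for non-gradient Killing generators. Once this splitting is established, the theorem follows by direct separation of variables in the potential equation.
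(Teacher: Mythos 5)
Your proposal is correct and follows essentially the same route as the paper: solve the determining system (\ref{NSCS.4})--(\ref{NSCS.7}) by noting $\xi=\xi(t)$, reading (\ref{NSCS.6}) as membership of $\eta^i$ in the homothetic algebra with $\xi_{,t}=2\psi T(t)$, using (\ref{NSCS.5}) to restrict genuine $t$-dependence to gradient elements, and separating variables in (\ref{NSCS.4}) to obtain the two cases. The only discrepancies are immaterial sign/bookkeeping conventions for the constants $c_{2}$, $d$ and the function $K(t)$, where the paper's own statement in (\ref{NSCS.17a})--(\ref{NSCS.18}) is itself not fully consistent.
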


The first integrals for the Noether symmetry vectors have as follows.

\begin{proposition}
\label{The Noether Integrals}For the Noether vector $\partial _{t}$ the
Noether integral is the Hamiltonian $\emph{E.}$ For the Noether vectors of
Case I and Case II the Noether integrals are respectively:
\begin{equation}
\phi _{I}=2\psi _{Y}t\emph{E}-g_{ij}Y^{i}\dot{x}^{j}+c_{1}t  \label{NSCS.16}
\end{equation}%
\begin{equation}
\phi _{II}=2\psi _{Y}\emph{E}\int Tdt~-Tg_{ij}H^{i}\dot{x}^{j}+T_{,t}H+d\int
Tdt.  \label{NSCS.19}
\end{equation}
\end{proposition}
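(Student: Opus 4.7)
The plan is straightforward: apply the general Noether first-integral formula (equation (\ref{NS.16}) from Section \ref{NSSS}) specialized to the Lagrangian (\ref{NSCS.1}), and then substitute the explicit forms of $\xi$, $\eta^i$, and $f$ obtained in Theorem \ref{The Noether symmetries of a conservative system} for each case.

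First I would compute the canonical ingredients for $L=\tfrac{1}{2}g_{ij}\dot{x}^{i}\dot{x}^{j}-V(x^{k})$, namely
\begin{equation*}
\frac{\partial L}{\partial \dot{x}^{i}}=g_{ij}\dot{x}^{j}, \qquad \dot{x}^{i}\frac{\partial L}{\partial \dot{x}^{i}}-L=\tfrac{1}{2}g_{ij}\dot{x}^{i}\dot{x}^{j}+V=E,
\end{equation*}
so that the generic Noether integral (\ref{NS.16}) reduces to the compact form
\begin{equation*}
\phi =\xi\, E-\eta^{i}g_{ij}\dot{x}^{j}+f.
\end{equation*}
This is the working formula for everything that follows. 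The trivial Noether symmetry $\partial_{t}$ corresponds to $\xi=1$, $\eta^{i}=0$, $f=0$, which immediately yields $\phi=E$, i.e.\ the Hamiltonian is conserved.

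Next I would substitute the Case I data $\xi=2\psi_{Y}t$, $\eta^{i}=Y^{i}$, $f=c_{1}t$ into the working formula to obtain $\phi_{I}=2\psi_{Y}t\,E-g_{ij}Y^{i}\dot{x}^{j}+c_{1}t$, which is (\ref{NSCS.16}). For Case II, I would insert $\xi=2\psi_{Y}\int T(t)\,dt$, $\eta^{i}=T(t)H^{i}$ together with the gauge function $f=T_{,t}H+d\int T\,dt$ from (\ref{NSCS.17a}), giving $\phi_{II}=2\psi_{Y}E\int T\,dt-T g_{ij}H^{i}\dot{x}^{j}+T_{,t}H+d\int T\,dt$, which is (\ref{NSCS.19}).

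There is essentially no obstacle: the content is a direct substitution, since Noether's theorem itself (Theorem \ref{NotTheor2}) already guarantees that the resulting quantities satisfy $d\phi/dt=0$ whenever the triple $(\xi,\eta^{i},f)$ solves the Noether condition (\ref{NSCS.1a}). The only small check worth recording is a sanity verification that $\tfrac{d\phi_{I}}{dt}=0$ and $\tfrac{d\phi_{II}}{dt}=0$ hold on solutions of the Euler--Lagrange equations, using the constraints (\ref{NSCS.14}) and (\ref{NSCS.17}) on the potential and the defining relations (\ref{NSCS.18}) for $T(t)$; this recovers the consistency of Theorem \ref{The Noether symmetries of a conservative system} with Noether's theorem and confirms that the listed $\phi_{I}$, $\phi_{II}$ are indeed first integrals.
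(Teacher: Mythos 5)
Your proposal is correct and follows exactly the route the paper intends: specialize the general Noether first-integral formula (\ref{NS.16}) to the Lagrangian (\ref{NSCS.1}), note that $\xi(\dot{x}^{i}\partial L/\partial\dot{x}^{i}-L)=\xi E$, and substitute the symmetry data $(\xi,\eta^{i},f)$ from Theorem \ref{The Noether symmetries of a conservative system} for each case. The closing sanity check that $d\phi/dt=0$ via the constraints (\ref{NSCS.14}), (\ref{NSCS.17}) and (\ref{NSCS.18}) is a reasonable addition but not needed, since Noether's theorem already guarantees it.
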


For the case of motion in spaces of constant curvature we have the following
result.

\begin{proposition}
\label{The Noether symmetries of a conservative system CC}The Noether
symmetry vectors of the Lagrangian (\ref{NSCS.1}) of an autonomous
conservative dynamical system moving in a space of constant curvature, are
generated by the non-gradient KVs of the space$.$ Hence only Case I survives.
\end{proposition}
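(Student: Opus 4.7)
The plan is to read off the statement directly from Theorem \ref{The Noether symmetries of a conservative system} once the homothetic algebra of the underlying geometry has been identified. First I would record the structural fact already exploited in Corollary \ref{The general conservative system CC}: a space of constant non-vanishing curvature admits neither gradient Killing vectors nor a homothetic vector. Indeed, a KV $\xi$ is gradient iff $\xi_{i;j}=0$, and contracting this with the Riemann tensor via the Ricci identity yields $R^{l}{}_{ijk}\xi_{l}=0$; substituting $R^{l}{}_{ijk}=K(\delta^{l}_{j}g_{ik}-\delta^{l}_{k}g_{ij})$ with $K\neq 0$ forces $\xi=0$. The non-existence of a proper HV in a maximally symmetric space of non-zero curvature was recalled earlier (a maximally symmetric space admitting a proper AC, hence in particular a proper HV, must be flat). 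Consequently the homothetic algebra of the space coincides with the algebra of non-gradient KVs.

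Next I would dispose of the two cases of the theorem in turn. Case II produces a Noether vector of the form (\ref{NSCS.17a}) whose spatial part is built from a gradient KV $S_{J}^{,i}$, while its temporal component and gauge function are driven by a gradient HV $H^{,i}$; with neither ingredient available, Case II is vacuous. In Case I the Noether generator reads $\mathbf{X}=2\psi_{Y}t\,\partial_{t}+Y^{i}\partial_{i}$ with $Y^{i}$ an element of the homothetic algebra. Since $Y$ must be a non-gradient KV we obtain $\psi_{Y}=0$, the temporal piece drops out, and the symmetry reduces to $\mathbf{X}=Y^{i}\partial_{i}$ subject to the selection rule $V_{,k}Y^{k}+c_{1}=0$ inherited from (\ref{NSCS.14}).

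The universally present $\partial_{t}$ (reflecting autonomy of the Lagrangian) is itself an instance of Case I with $Y=0$ and $\psi_{Y}=0$, so no extra generators escape the classification. I do not foresee any serious obstacle: the whole argument is a short specialization of the general theorem, once the two non-existence results for the geometry of constant non-zero curvature are in hand. The proof runs parallel to (and is in effect a Noether-level counterpart of) the reasoning that led to Corollary \ref{The general conservative system CC} on the Lie-symmetry side.
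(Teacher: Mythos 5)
Your proposal is correct and follows essentially the same route as the paper, which states the proposition as an immediate specialization of Theorem \ref{The Noether symmetries of a conservative system} using the fact that the homothetic algebra of a space of constant non-vanishing curvature consists of non-gradient KVs only (so Case II is vacuous and Case I reduces to $Y^{i}\partial_{i}$ with $\psi_{Y}=0$). Your added justifications of the two geometric non-existence facts are sound, with the minor caveat that the absence of a proper HV is more directly covered by the paper's earlier observation that proper Einstein spaces (hence spaces of constant non-zero curvature) admit no HV, ACs, special PCs or gradient KVs, rather than by the "proper AC" proposition, since in the paper's terminology a proper HV is not a proper AC.
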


In the following sections, the autonomous non linear Newtonian systems which
admit Lie and Noether point symmetries are calculated with the use of
Theorems \ref{The general conservative system} and \ref{The Noether
symmetries of a conservative system}.

\section{2D autonomous systems which admit Lie/Noether point symmetries}

\label{Newtonian dynamical systems which admit Lie symmetries}

In this section we apply Theorems \ref{The general conservative system} and %
\ref{The Noether symmetries of a conservative system} to determine \emph{all}
Newtonian dynamical systems with two degrees of freedom which admit at least
one Lie/Noether symmetry. The reason for considering this problem is that a
Lie/Noether symmetry lead to invariants/first integrals, which can be used
in many ways in order to study a given system of differential equations e.g.
to simplify, to determine the integrability of the system etc. Because the
Newtonian systems move in $E^{2}$ we need to consider the generators of the
special projective algebra of $E^{2}$ and then use the constraint conditions
for each case to determine the functional form of the force field $F^{i}.~$

%TCIMACRO{\TeXButton{B}{\begin{table}[tbp] \centering}}%
%BeginExpansion
\begin{table}[tbp] \centering%
%EndExpansion
\caption{Special projective algebra of the 2d Euclidian space}%
\begin{tabular}{ccc}
\hline\hline
\textbf{Collineation} & \textbf{Gradient} & \textbf{Non-gradient} \\ \hline
Killing vectors & $\partial _{x}~,~\partial _{y}~$ & $y\partial
_{x}-x\partial _{y}$ \\
Homothetic vector & $x\partial _{x}+y\partial _{y}~$ &  \\
Affine Collineation & $x\partial _{x}~,~y\partial _{y}$ & $y\partial
_{x}~,~x\partial _{y}$ \\
Sp. Projective collineation &  & $x^{2}\partial _{x}+xy\partial
_{y}~,~xy\partial _{x}+y^{2}\partial _{y}$ \\ \hline\hline
\end{tabular}%
\label{ProjectiveFlat2d}%
%TCIMACRO{\TeXButton{E}{\end{table}}}%
%BeginExpansion
\end{table}%
%EndExpansion

We consider Cartesian coordinates so that the metric of the space is
\begin{equation*}
ds^{2}=dx^{2}+dy^{2}.
\end{equation*}%
In Table \ref{ProjectiveFlat2d} we give the elements of the projective Lie
algebra of $E^{2}~$in Cartesian coordinates. We note that the special
projective algebra of the two dimensional Lorentz space
\begin{equation*}
ds^{2}=-dx^{2}+dy^{2}
\end{equation*}%
is the same with that of the space $E^{2},$ with the difference that the non
gradient Killing vector is replaced with $y\partial _{x}+x\partial _{y}.$ We
shall use this observation in later chapters where we study the Lie and the
Noether point symmetries in scalar field cosmology.

We examine the cases where the force $F^{i}$ (a) is non-conservative and (b)
is conservative. In certain cases the results are common to both cases,
however for clarity it is better to consider the two cases separately.
Finally for economy of space, easy reference and convenience we present the
results in the form of Tables.

In order to indicate how the results of the Tables are obtained we consider
Case I \& II of theorem \ref{The general conservative system}. The Lie point
symmetry vectors for Case I are given by (\ref{PP.02}) i.e.%
\begin{equation}
\mathbf{X}=\left( \frac{1}{2}d_{1}a_{1}t+d_{2}\right) \partial
_{t}+a_{1}Y^{i}\partial _{i},
\end{equation}%
where $a_{1}$ and $d_{1}$ are constants and $Y^{i}$ is a vector of the
affine algebra of $E^{2}.$ The force field must satisfy condition (\ref%
{PP.03}) i.e.%
\begin{equation*}
L_{Y}\mathbf{F}+d_{1}\mathbf{F}=0.
\end{equation*}%
Writing
\begin{equation*}
\mathbf{F}=F^{x}\partial _{x}~+F^{y}~\partial _{y}~\text{and~}\mathbf{Y}%
=Y^{x}\partial _{x}~+Y^{y}~\partial _{y}
\end{equation*}%
we obtain a system of two differential equations involving the unknown
quantities $F^{x},F^{y}$ and the known quantities $Y^{x},Y^{y}.$ For each
vector $\mathbf{Y}$ we replace $Y^{x},Y^{y}$ from Table \ref%
{ProjectiveFlat2d} and solve the system to compute $F^{x},F^{y}.$ For
example for the gradient KV $\partial _{x}$ we have $Y^{x}=1,Y^{y}=0$ and
find the solution $F^{x}\left( x,y\right) =e^{-dx}f\left( y\right) ,$ $%
F^{y}\left( x,y\right) =e^{-dx}g\left( y\right) $ where $d$ is a constant
and $f\left( y\right) ,g\left( y\right) $ are arbitrary functions of their
argument. Working similarly we determine the form of the force field for all
cases of Theorem \ref{The general conservative system}. The results are
given in Tables \ref{2dN1} and \ref{2dN2}.

%TCIMACRO{\TeXButton{B}{\begin{table}[tbp] \centering}}%
%BeginExpansion
\begin{table}[tbp] \centering%
%EndExpansion
\caption{Two dimensional Newtonian systems admit Lie symmetries (1/4)}%
\begin{tabular}{ccc}
\hline\hline
\textbf{Lie }$\downarrow $\textbf{\ }$~~F^{i}\rightarrow $ & $\mathbf{F}%
^{x}\left( x,y\right) \mathbf{/F}^{\theta }\left( r,\theta \right) $ & $%
\mathbf{F}^{y}\left( x,y\right) \mathbf{/F}^{\theta }\left( r,\theta \right)
$ \\ \hline
$\frac{d}{2}t\partial _{t}+\partial _{x}$ & $e^{-dx}f\left( y\right) $ & $%
e^{-dx}g\left( y\right) $ \\
$\frac{d}{2}t\partial _{t}+\partial _{y}$ & $e^{-dy}f\left( x\right) $ & $%
e^{-dy}g\left( x\right) $ \\
$\frac{d}{2}t\partial _{t}+\left( y\partial _{x}-x\partial _{y}\right) $ & $%
f\left( r\right) e^{-d\theta }$ & $g\left( r\right) e^{-d\theta }$ \\
$\frac{d}{2}t\partial _{t}+x\partial _{x}+y\partial _{y}$ & $x^{\left(
1-d\right) }f\left( \frac{y}{x}\right) $ & $x^{\left( 1-d\right) }g\left(
\frac{y}{x}\right) $ \\
$\frac{d}{2}t\partial _{t}+x\partial _{x}$ & $x^{\left( 1-d\right) }f\left(
y\right) $ & $x^{-d}g\left( y\right) $ \\
$\frac{d}{2}t\partial _{t}+y\partial _{y}$ & $y^{-d}f\left( x\right) $ & $%
y^{\left( 1-d\right) }g\left( x\right) $ \\
$\frac{d}{2}t\partial _{t}+y\partial _{x}$ & $\left( \frac{x}{y}g\left(
y\right) +f\left( y\right) \right) e^{-d\frac{x}{y}}$ & $g\left( y\right)
e^{-d\frac{x}{y}}$ \\
$\frac{d}{2}t\partial _{t}+x\partial _{y}$ & $f\left( x\right) e^{-d\frac{y}{%
x}}$ & $\left( \frac{y}{x}f\left( x\right) +g\left( x\right) \right) e^{-d%
\frac{y}{x}}$ \\ \hline\hline
\end{tabular}%
\label{2dN1}%
%TCIMACRO{\TeXButton{E}{\end{table}}}%
%BeginExpansion
\end{table}%
%EndExpansion

%TCIMACRO{\TeXButton{B}{\begin{table}[tbp] \centering}}%
%BeginExpansion
\begin{table}[tbp] \centering%
%EndExpansion
\caption{Two dimensional Newtonian systems admit Lie symmetries (2/4)}%
\begin{tabular}{lll}
\hline\hline
\textbf{Lie }$\downarrow $\textbf{\ }$~~V\rightarrow $ & $\mathbf{F}%
^{x}\left( x,y\right) $ & $\mathbf{F}^{y}\left( x,y\right) $ \\ \hline
$T\left( t\right) \partial _{x}$ & $-mx+f\left( y\right) $ & $g\left(
y\right) $ \\
$T\left( t\right) \partial _{y}$ & $f\left( x\right) $ & $-my+g\left(
x\right) $ \\
$2\int T\left( t\right) dt~\partial _{t}+T\left( t\right) \left( x\partial
_{x}+y\partial _{y}\right) $ & $-\frac{m}{4}x+x^{-3}f\left( \frac{y}{x}%
\right) $ & $-\frac{m}{4}y+y^{-3}g\left( \frac{y}{x}\right) $ \\ \hline\hline
\end{tabular}%
\label{2dN2}%
%TCIMACRO{\TeXButton{E}{\end{table}}}%
%BeginExpansion
\end{table}%
%EndExpansion

Case III: $Y^{i}$ is a special PC. There is only one dynamical system in
this case, which is the forced oscillator, acted upon the external force $%
F^{i}=\left( \omega x+a\right) \partial _{x}+\left( \omega y+b\right)
\partial _{y}$, that is the system is conservative. As it can be seen from
Table \ref{ProjectiveFlat2d} the Lie symmetry algebra of the forced
oscillator is the $sl\left( 4,R\right) .$ This result agrees with that of
\cite{Prince01}.

Except the above three cases we have to consider the Lie point symmetries
generated from linear combinations of the vectors $Y^{i}.$ It is found that
the only new cases are the ones given given in Tables \ref{2dN3} and \ref%
{2dN4}

We assume now $F^{i}$ to be conservative with potential function $V(x,y).$
In this case the results of the previous Tables differentiate. The results
of the calculations are given in Tables \ref{2dCN1}, \ref{2dCN3} and \ref%
{2dCN4}~.

%TCIMACRO{\TeXButton{B}{\begin{table}[tbp] \centering}}%
%BeginExpansion
\begin{table}[tbp] \centering%
%EndExpansion
\caption{Two dimensional conservative Newtonian systems admit Lie symmetries
(1/3)}%
\begin{tabular}{cccc}
\hline\hline
\textbf{Lie }$\downarrow $\textbf{\ }$~~V\rightarrow $ & $\mathbf{d=0}$ & $%
\mathbf{d\neq 0}$ & $\mathbf{d=2}$ \\ \hline
$\frac{d}{2}t\partial _{t}+\partial _{x}$ & $c_{1}x+f\left( y\right) $ & $%
f\left( y\right) e^{-dx}$ & $f\left( y\right) e^{-2x}$ \\
$\frac{d}{2}t\partial _{t}+\partial _{y}$ & $c_{1}y+f\left( x\right) $ & $%
f\left( x\right) e^{-dy}$ & $f\left( x\right) e^{-2y}$ \\
$\frac{d}{2}t\partial _{t}+\left( y\partial _{x}-x\partial _{y}\right) $ & $%
\theta +f\left( r\right) $ & $f\left( r\right) e^{-d\theta }$ & $f\left(
r\right) e^{-2\theta }$ \\
$\frac{d}{2}t\partial _{t}+\left( x\partial _{x}+y\partial _{y}\right) $ & $%
x^{2}f\left( \frac{y}{x}\right) $ & $x^{2-d}f\left( \frac{y}{x}\right) $ & $%
c_{1}\ln x~+f\left( \frac{y}{x}\right) $ \\
$\frac{d}{2}t\partial _{t}+x\partial _{x}$ & $c_{1}x^{2}+f\left( y\right) $
& $\nexists $ & $\nexists $ \\
$\frac{d}{2}t\partial _{t}+y\partial _{y}$ & $c_{1}y^{2}+f\left( x\right) $
& $\nexists $ & $\nexists $ \\
$\frac{d}{2}t\partial _{t}+y\partial _{x}$ & $x^{2}+y^{2}+c_{1}x$ & $%
\nexists $ & $\nexists $ \\
$\frac{d}{2}t\partial _{t}+x\partial _{y}$ & $x^{2}+y^{2}+c_{1}y$ & $%
\nexists $ & $\nexists $ \\
&  &  &  \\
\multicolumn{2}{c}{\textbf{Lie }$\downarrow $\textbf{\ }$~~V\rightarrow $} &
\multicolumn{2}{c}{$\mathbf{T}_{,tt}\mathbf{=mT.}$} \\ \hline
\multicolumn{2}{c}{$T\left( t\right) \partial _{x}$} & \multicolumn{2}{c}{$-%
\frac{mx^{2}}{2}+c_{1}x+f\left( y\right) $} \\
\multicolumn{2}{c}{$T\left( t\right) \partial _{y}$} & \multicolumn{2}{c}{$-%
\frac{my^{2}}{2}+c_{1}y+f\left( x\right) $} \\
\multicolumn{2}{c}{$2\int T\left( t\right) dt~\partial _{t}+T\left( t\right)
\left( x\partial _{x}+y\partial _{y}\right) $} & \multicolumn{2}{c}{$-\frac{m%
}{8}\left( x^{2}+y^{2}\right) +\frac{1}{x^{2}}f\left( \frac{y}{x}\right) $}
\\ \hline\hline
\end{tabular}%
\label{2dCN1}%
%TCIMACRO{\TeXButton{E}{\end{table}}}%
%BeginExpansion
\end{table}%
%EndExpansion

As it was stated in section \ref{Introduction} the determination of all two
dimensional potentials which admit a Lie point symmetry has been addressed
previously in \cite{Sen} and \cite{Damianou99}. Our results contain the
results of both these papers and additionally some cases missing, mainly in
the linear combinations of the HV with the KVs.

\subsection{2D autonomous Newtonian systems which admit Noether symmetries}

\label{Newtonian Hamiltomian systems which admit Noether Symmetries}

Noether symmetries are associated with a Lagrangian. Therefore we consider
only the case in which the force $F^{i}$ is conservative. Furthermore
Noether symmetries are special Lie symmetries, hence we look into the two
dimensional potentials which admit a Lie point symmetry. These potentials
were determined in the previous section. We apply Theorem \ref{The Noether
symmetries of a conservative system} to these potentials and select the
potentials which admit a Noether symmetry. The calculations are similar to
the ones for the Lie point symmetries and are omitted. The results are
listed in Tables \ref{2dNS1} and \ref{2dNS2}. In the next section we apply
the same method to determine the three dimensional autonomous Newtonian
systems which admit Lie and Noether symmetries.

\bigskip
%TCIMACRO{\TeXButton{B}{\begin{table}[ht] \centering}}%
%BeginExpansion
\begin{table}[ht] \centering%
%EndExpansion
\caption{Two dimensional autonomous Newtonian systems admiting Noether
symmetries (1/2)}%
\begin{tabular}{cccc}
\hline\hline
\textbf{Noether Symmetry} & $\mathbf{V}\left( x,y\right) $ & \textbf{Noether
Symmetries} & $\mathbf{V}\left( x,y\right) $ \\ \hline
$\partial _{x}$ & $cx+f\left( y\right) $ & $\partial _{x}+b\partial _{y}$ & $%
f\left( y-bx\right) -cx$ \\
$\partial _{y}$ & $cy+f\left( x\right) $ & $\left( a+y\right) \partial
_{x}+\left( b-x\right) \partial _{y}$ & $f\left( \frac{1}{2}\left(
x^{2}+y^{2}\right) +ay-bx\right) $ \\
$y\partial _{x}-x\partial _{y}$ & $c\theta +f\left( r\right) $ & $2t\partial
_{t}+\left( x+ay\right) \partial _{x}+\left( y-ax\right) \partial _{y}$ & $%
r^{-2}~f\left( \theta -a\ln r\right) $ \\
$2t\partial _{t}+x\partial _{x}+y\partial _{y}$ & $x^{-2}f\left( \frac{y}{x}%
\right) $ & $2t\partial _{t}+\left( a+x\right) \partial _{x}+\left(
b+y\right) \partial _{y}$ & $f\left( \frac{b+x}{a+x}\right) \left(
a+x\right) ^{-2}-c\left( a+x\right) ^{-2}\left( \frac{1}{2}x^{2}+ax\right) $
\\ \hline\hline
\end{tabular}%
\label{2dNS1}%
%TCIMACRO{\TeXButton{E}{\end{table}}}%
%BeginExpansion
\end{table}%
%EndExpansion

%TCIMACRO{\TeXButton{B}{\begin{table}[H] \centering}}%
%BeginExpansion
\begin{table}[H] \centering%
%EndExpansion
\caption{Two dimensional autonomous Newtonian systems admiting Noether
symmetries (2/2)}%
\begin{tabular}{cc}
\hline\hline
\textbf{Noether \ }$\downarrow $\textbf{\ }$V\rightarrow $ & $\mathbf{T}%
_{,tt}\mathbf{=mT}$ \\ \hline
$\,T\left( t\right) \partial _{x}$ & $f\left( y\right) -cx-\frac{m}{2}x^{2}$
\\
$T\left( t\right) \partial _{y}$ & $f\left( x\right) -cy-\frac{m}{2}y^{2}$
\\
$2\int T\left( t\right) dt~\partial _{t}+T\left( t\right) \left( x\partial
_{x}+y\partial _{y}\right) $ & $x^{-2}f\left( \frac{y}{x}\right) -\frac{m}{8}%
\left( x^{2}+y^{2}\right) $ \\
&  \\
\textbf{Noether }$\downarrow $\textbf{\ }$~~V\rightarrow $ & $\mathbf{T}%
_{,tt}\mathbf{=mT}$ \\ \hline
$\,T\left( t\right) \partial _{x}+bT\left( t\right) \partial _{y}$ & $-\frac{%
m}{2}\left( x^{2}+y^{2}\right) -\frac{m}{2}(y-bx)^{2}+f\left( y-bx\right)
-cx $ \\
$2\int T\left( t\right) dt~\partial _{t}+T\left( t\right) \left( \left(
a+x\right) \partial _{x}+\left( b+y\right) \partial _{y}\right) $ & $f\left(
\frac{b+x}{a+x}\right) \left( a+x\right) ^{-2}-\frac{c}{2}\left( a+x\right)
^{-2}x\left( x+2a\right) $ \\
& $-\frac{xm\left( x+2a\right) }{8\left( a+x\right) ^{4}}\left\{
\begin{array}{c}
\left( \left( a+x\right) ^{2}+a^{2}\right) y\left( y+2b\right) + \\
+x\left( x+2a\right) \left( b+\left( a+x\right) \right) \left( -b+\left(
a+x\right) \right)%
\end{array}%
\right\} $ \\ \hline\hline
\end{tabular}%
\label{2dNS2}%
%TCIMACRO{\TeXButton{E}{\end{table}}}%
%BeginExpansion
\end{table}%
%EndExpansion

\section{3D autonomous Newtonian systems admit Lie/Noether point symmetries}

\label{Lie point symmetries of three dimensional autonomous Newtonian
systems}

In this section we determine all Newtonian systems with \emph{\ three }%
degrees of freedom which admit at least one Lie/Noether symmetry (except the
obvious symmetry $\partial _{t}$)$.$ In order to use Theorem \ref{The
general conservative system} we need the special projective algebra of the
Euclidian 3D metric

\begin{equation}
ds_{E}^{2}=dx^{2}+dy^{2}+dz^{2}.
\end{equation}

This algebra consists of 15 vectors\footnote{%
These vectors are not all linearly independent i.e. the HV and the rotations
are linear combinations of the ACs} as follows: Six KVs $\partial _{\mu
}~,~x_{\nu }\partial _{\mu }-x_{\mu }\partial _{\nu }~$ one HV$~R\partial
_{R},~$nine ACs~$x_{\mu }\partial _{\mu }~,~x_{\nu }\partial _{\mu }$ and
three SPCs $x_{\mu }^{2}\partial _{\mu }+x_{\mu }x_{\nu }\partial _{\nu
}+x_{\mu }x_{\sigma }\partial _{\sigma },~$where\footnote{%
If $x_{\mu }=x,~$then~$\left\{ x_{\nu }=y~,~x_{\sigma }=z\right\} ~$or~$%
\left\{ x_{\nu }=z~,~x_{\sigma }=y\right\} $}$~\mu \neq \nu \neq \sigma $ , $%
r_{\left( \mu \nu \right) }^{2}=x_{\mu }^{2}+x_{\nu }^{2},~\theta _{\left(
\mu \nu \right) }=\arctan \left( \frac{x_{\nu }}{x_{\mu }}\right) $ and $%
R,\theta ,\phi $ are spherical coordinates$.$

In the computation of Lie symmetries we consider only the linearly
independent vectors of the special projective group. We do not consider
their linear combinations because the resulting Lie symmetries are too many;
on the other hand they can be computed in the standard way.

\subsection{3D autonomous Newtonian systems admit Lie point symmetries}

In Tables \ref{3DLs1} and~\ref{3DLs2} we list the Lie point symmetries and
the functional dependence of the components of the force for Case I and II
of Theorem \ref{The general conservative system}.

%TCIMACRO{\TeXButton{B}{\begin{table}[tbp] \centering}}%
%BeginExpansion
\begin{table}[tbp] \centering%
%EndExpansion
\caption{Three dimensional autonomous Newtonian systems admit Lie symmetries
(1/2)}$%
\begin{tabular}{cccc}
\hline\hline
\textbf{Lie symmetry} & $\mathbf{F}_{\mu }\left( x_{\mu },x_{\nu },x_{\sigma
}\right) $ & $\mathbf{F}_{\nu }\left( x_{\mu },x_{\nu },x_{\sigma }\right) $
& $\mathbf{F}_{\sigma }\left( x_{\mu },x_{\nu },x_{\sigma }\right) $ \\
\hline
$\frac{d}{2}t\partial _{t}+\partial _{\mu }$ & $e^{-dx_{\mu }}f\left( x_{\nu
},x_{\sigma }\right) $ & $e^{-dx_{\mu }}g\left( x_{\nu },x_{\sigma }\right) $
& $e^{-dx_{\mu }}h\left( x_{\nu },x_{\sigma }\right) $ \\
$\frac{d}{2}t\partial _{t}+\partial _{\theta _{\left( \mu \nu \right) }}$ & $%
e^{-d\theta _{\left( \mu \nu \right) }}f\left( r_{\left( \mu \nu \right)
},x_{\sigma }\right) $ & $e^{-d\theta _{\left( \mu \nu \right) }}g\left(
r_{\left( \mu \nu \right) },x_{\sigma }\right) $ & $e^{-d\theta _{\left( \mu
\nu \right) }}h\left( r_{\left( \mu \nu \right) },x_{\sigma }\right) $ \\
$\frac{d}{2}t\partial _{t}+R\partial _{R}$ & $x_{\mu }^{1-d}f\left( \frac{%
x_{\nu }}{x_{\mu }},\frac{x_{\sigma }}{x_{\mu }}\right) $ & $x_{\mu
}^{1-d}g\left( \frac{x_{\nu }}{x_{\mu }},\frac{x_{\sigma }}{x_{\mu }}\right)
$ & $x_{\mu }^{1-d}h\left( \frac{x_{\nu }}{x_{\mu }},\frac{x_{\sigma }}{%
x_{\mu }}\right) $ \\
$\frac{d}{2}t\partial _{t}+x_{\mu }\partial _{\mu }$ & $x_{\mu
}^{1-d}f\left( x_{\nu },x_{\sigma }\right) $ & $x_{\mu }^{1-d}g\left( x_{\nu
},x_{\sigma }\right) $ & $x_{\mu }^{1-d}h\left( x_{\nu },x_{\sigma }\right) $
\\
$\frac{d}{2}t\partial _{t}+x_{\nu }\partial _{\mu }$ & $e^{-d\frac{x_{\mu }}{%
x_{\nu }}}\left[ \frac{x_{\mu }}{x_{\nu }}g\left( x_{\nu },x_{\sigma
}\right) +f\left( x_{\nu },x_{\sigma }\right) \right] $ & $e^{-d\frac{x_{\mu
}}{x_{\nu }}}g\left( x_{\nu },x_{\sigma }\right) $ & $e^{-d\frac{x_{\mu }}{%
x_{\nu }}}h\left( x_{\nu },x_{\sigma }\right) $ \\ \hline\hline
\end{tabular}%
$\label{3DLs1}%
%TCIMACRO{\TeXButton{E}{\end{table}}}%
%BeginExpansion
\end{table}%
%EndExpansion

%TCIMACRO{\TeXButton{B}{\begin{table}[tbp] \centering}}%
%BeginExpansion
\begin{table}[tbp] \centering%
%EndExpansion
\caption{Three dimensional autonomous Newtonian systems admit Lie symmetries
(1/2)}$%
\begin{tabular}{cccc}
\hline\hline
\textbf{Lie symmetry} & $\mathbf{F}_{\mu }\left( x_{\mu },x_{\nu },x_{\sigma
}\right) $ & $\mathbf{F}_{\nu }\left( x_{\mu },x_{\nu },x_{\sigma }\right) $
& $\mathbf{F}_{\sigma }\left( x_{\mu },x_{\nu },x_{\sigma }\right) $ \\
\hline
$t\partial _{\mu }$ & $f\left( x_{\nu },x_{\sigma }\right) $ & $g\left(
x_{\nu },x_{\sigma }\right) $ & $h\left( x_{\nu },x_{\sigma }\right) $ \\
$t^{2}\partial _{t}+tR\partial _{R}$ & $\frac{1}{x_{\mu }^{3}}f\left( \frac{%
x_{\nu }}{x_{\mu }},\frac{x_{\sigma }}{x_{\mu }}\right) $ & $\frac{1}{x_{\mu
}^{3}}g\left( \frac{x_{\nu }}{x_{\mu }},\frac{x_{\sigma }}{x_{\mu }}\right) $
& $\frac{1}{x_{\mu }^{3}}h\left( \frac{x_{\nu }}{x_{\mu }},\frac{x_{\sigma }%
}{x_{\mu }}\right) $ \\
$e^{\pm t\sqrt{m}}\partial _{\mu }$ & $-mx_{\mu }+f\left( x_{\nu },x_{\sigma
}\right) $ & $g\left( x_{\nu },x_{\sigma }\right) $ & $h\left( x_{\nu
},x_{\sigma }\right) $ \\
$\frac{1}{\sqrt{m}}e^{\pm t\sqrt{m}}\partial _{t}\pm e^{\pm t\sqrt{m}%
}R\partial _{R}$ & $-\frac{m}{4}x_{\mu }+\frac{1}{x_{\mu }^{3}}f\left( \frac{%
x_{\nu }}{x_{\mu }},\frac{x_{\sigma }}{x_{\mu }}\right) $ & $-\frac{m}{4}%
x_{\nu }+\frac{1}{x_{\mu }^{3}}g\left( \frac{x_{\nu }}{x_{\mu }},\frac{%
x_{\sigma }}{x_{\mu }}\right) $ & $-\frac{m}{4}x_{\sigma }+\frac{1}{x_{\mu
}^{3}}h\left( \frac{x_{\nu }}{x_{\mu }},\frac{x_{\sigma }}{x_{\mu }}\right) $
\\ \hline\hline
\end{tabular}%
$\label{3DLs2}%
%TCIMACRO{\TeXButton{E}{\end{table}}}%
%BeginExpansion
\end{table}%
%EndExpansion

For the remaining Case III of Theorem \ref{The general conservative system},
we have that the force $F^{\mu }$ admits Lie symmetries generated from the
proper sp. PCs if the force is the isotropic oscillator, that is, ~$F^{\mu
}=\left( \omega x^{\mu }+c^{\mu }\right) \partial _{\mu }$ where $\omega
,~c^{\mu }$ are constants. From Tables \ref{3DLs1} and \ref{3DLs2} we infer%
{\LARGE \ }that the isotropic oscillator admits 24 Lie point symmetries
generating the $Sl\left( 5,R\right) $, as many as the free particle \cite%
{Prince01}.

In order to demonstrate the use of the above Tables let us require the
equations of motion of a Newtonian dynamical system which is invariant under
the $sl(2,R)$ algebra. We know \cite{Leach1991} that $sl(2,R)$ is generated
by the following Lie symmetries%
\begin{equation*}
\partial _{t}~,~2t\partial _{t}+R\partial _{R}~,t^{2}\partial
_{t}+tR\partial _{R}.
\end{equation*}

From Table \ref{3DLs1}\ and from Table \ref{3DLs2} we have that the force
must be of the form%
\begin{equation}
F=\left( \frac{1}{x_{\mu }^{3}}f\left( \frac{x_{\nu }}{x_{\mu }},\frac{%
x_{\sigma }}{x_{\mu }}\right) ,\frac{1}{x_{\mu }^{3}}g\left( \frac{x_{\nu }}{%
x_{\mu }},\frac{x_{\sigma }}{x_{\mu }}\right) ,\frac{1}{x_{\mu }^{3}}h\left(
\frac{x_{\nu }}{x_{\mu }},\frac{x_{\sigma }}{x_{\mu }}\right) \right)
\end{equation}%
therefore, the equations of motion of this system in Cartesian coordinates
are:%
\begin{equation}
(\ddot{x},\ddot{y},\ddot{z})=\left( \frac{1}{x_{\mu }^{3}}f\left( \frac{%
x_{\nu }}{x_{\mu }},\frac{x_{\sigma }}{x_{\mu }}\right) ,\frac{1}{x_{\mu
}^{3}}g\left( \frac{x_{\nu }}{x_{\mu }},\frac{x_{\sigma }}{x_{\mu }}\right) ,%
\frac{1}{x_{\mu }^{3}}h\left( \frac{x_{\nu }}{x_{\mu }},\frac{x_{\sigma }}{%
x_{\mu }}\right) \right) .
\end{equation}%
Immediately we recognize that this dynamical system is the well known and
important generalized Kepler Ermakov system (see \cite{Leach1991}). A\
different representation of $sl(2,R)$ consists of the vectors
\begin{equation*}
\partial _{t}~,~\frac{1}{\sqrt{m}}e^{\pm t\sqrt{m}}\partial _{t}\pm e^{\pm t%
\sqrt{m}}R\partial _{R}
\end{equation*}%
For this representation from Table \ref{3DLs2} we have%
\begin{equation}
F^{\prime }=-\frac{m}{4}\left( x_{\mu },x_{\nu },x_{\sigma }\right) +\left(
\frac{1}{x_{\mu }^{3}}f\left( \frac{x_{\nu }}{x_{\mu }},\frac{x_{\sigma }}{%
x_{\mu }}\right) ,\frac{1}{x_{\mu }^{3}}g\left( \frac{x_{\nu }}{x_{\mu }},%
\frac{x_{\sigma }}{x_{\mu }}\right) ,\frac{1}{x_{\mu }^{3}}h\left( \frac{%
x_{\nu }}{x_{\mu }},\frac{x_{\sigma }}{x_{\mu }}\right) \right)
\end{equation}%
which leads again to the \emph{autonomous Kepler Ermakov system}. In a
subsequent chapter, we shall apply the results obtained here to study the
integrability of the 3D Hamiltonian Kepler--Ermakov system and generalize it
in a Riemannian space.

In case the force is given by the potential~$V=V\left( x^{\mu }\right) $,
that is, that the system is conservative, we obtain the results in Table \ref%
{3DCL}.

%TCIMACRO{\TeXButton{B}{\begin{table}[H] \centering}}%
%BeginExpansion
\begin{table}[H] \centering%
%EndExpansion
\caption{Three dimensional conservative Newtonian systems admiting Lie
symmetries}$%
\begin{tabular}{cccc}
\hline\hline
{\small {\textbf{Lie /V(x,y,z)} }} & $\mathbf{d=0}$ & $\mathbf{d=2}$ & $%
\mathbf{d\neq 0,2}$ \\ \hline
$\frac{d}{2}t\partial _{t}+\partial _{\mu }$ & $c_{1}x_{\mu }+f\left( x_{\nu
},x_{\sigma }\right) $ & $e^{-2x_{\mu }}f\left( x_{\nu },x_{\sigma }\right) $
& $e^{-dx_{\mu }}f\left( x_{\nu },x_{\sigma }\right) $ \\
$\frac{d}{2}t\partial _{t}+\partial _{\theta _{\left( \mu \nu \right) }}$ & $%
\,c_{1}\theta _{\left( \mu \nu \right) }+f\left( r_{\left( \mu \nu \right)
},x_{\sigma }\right) $ & $e^{-2\theta _{\left( \mu \nu \right) }}f\left(
r_{\left( \mu \nu \right) },x_{\sigma }\right) $ & $e^{-d\theta _{\left( \mu
\nu \right) }}f\left( r_{\left( \mu \nu \right) },x_{\sigma }\right) $ \\
$\frac{d}{2}t\partial _{t}+R\partial _{R}$ & $x^{2}f\left( \frac{x_{\nu }}{%
x_{\mu }},\frac{x_{\sigma }}{x_{\mu }}\right) $ & $c_{1}\ln \left( x_{\mu
}\right) +f\left( \frac{x_{\nu }}{x_{\mu }},\frac{x_{\sigma }}{x_{\mu }}%
\right) $ & $x^{2-d}f\left( \frac{x_{\nu }}{x_{\mu }},\frac{x_{\sigma }}{%
x_{\mu }}\right) $ \\
$\frac{d}{2}t\partial _{t}+x_{\mu }\partial _{\mu }$ & $c_{1}x_{\mu
}^{2}+f\left( x_{\nu },x_{\sigma }\right) $ & $\nexists $ & $\nexists $ \\
$\frac{d}{2}t\partial _{t}+x_{\nu }\partial _{\mu }$ & $c_{1}x_{\mu
}+c_{2}\left( x_{\mu }^{2}+x_{\nu }^{2}\right) +f\left( x_{\sigma }\right) $
& $\nexists $ & $\nexists $ \\
&  &  &  \\
\multicolumn{1}{l}{\textbf{Lie}} & \multicolumn{1}{l}{$\mathbf{V}\left(
x,y,z\right) $} & \multicolumn{1}{l}{\textbf{Lie }} & \multicolumn{1}{l}{$%
\mathbf{V}\left( x,y,z\right) $} \\ \hline
\multicolumn{1}{l}{$t\partial _{\mu }$} & \multicolumn{1}{l}{$c_{1}x_{\mu
}+f\left( x_{\nu },x_{\sigma }\right) $} & \multicolumn{1}{l}{$e^{\pm t\sqrt{%
m}}\partial _{\mu }$} & \multicolumn{1}{l}{$-\frac{m}{2}x_{\mu
}^{2}+c_{1}x_{\mu }+f\left( x_{\nu },x_{\sigma }\right) $} \\
\multicolumn{1}{l}{$t^{2}\partial _{t}+tR\partial _{R}$} &
\multicolumn{1}{l}{$\frac{1}{x_{\mu }^{2}}f\left( \frac{x_{\nu }}{x_{\mu }},%
\frac{x_{\sigma }}{x_{\mu }}\right) $} & \multicolumn{1}{l}{$\frac{1}{\sqrt{m%
}}e^{\pm t\sqrt{m}}\partial _{t}+e^{\pm t\sqrt{m}}R\partial _{R}$} &
\multicolumn{1}{l}{$-\frac{m}{8}\left( x_{\mu }^{2}+x_{\nu }^{2}+x_{\sigma
}^{2}\right) +\frac{1}{x_{\mu }^{2}}f\left( \frac{x_{\nu }}{x_{\mu }},\frac{%
x_{\sigma }}{x_{\mu }}\right) $} \\ \hline\hline
\end{tabular}%
$\label{3DCL}%
%TCIMACRO{\TeXButton{E}{\end{table}}}%
%BeginExpansion
\end{table}%
%EndExpansion
\bigskip

\subsection{3D autonomous Newtonian systems which admit Noether point
symmetries}

\label{Noether point symmetries1}

In this section using theorem \ref{The Noether symmetries of a conservative
system} we determine all autonomous Newtonian Hamiltonian systems with
Lagrangian%
\begin{equation}
L=\frac{1}{2}\left( \dot{x}^{2}+\dot{y}^{2}+\dot{z}^{2}\right) -V\left(
x,y,z\right)  \label{NPC.09}
\end{equation}%
which admit a non-trivial Noether point symmetry. This problem has been
considered previously in \cite{Damianou99,Damianou2004}, however as we shall
show the results in these works are not complete. We note that the Lie
symmetries of a conservative system are not necessarily Noether symmetries.
The inverse is of course true.

Before we continue we note that the homothetic algebra of the Euclidian 3d
space $E^{3}$ has dimension seven and consists of three gradient KVs $%
\partial _{\mu }~~$with gradient function $x_{\mu }$, three non-gradient KVs
$x_{\nu }\partial _{\mu }-x_{\mu }\partial _{\nu }$ generating the
rotational algebra $so\left( 3\right) ,~$and a gradient HV $H^{i}=R\partial
_{R}~$\ with gradient function $H=\frac{1}{2}R^{2}$ , where $R^{2}=x^{\mu
}x_{\mu }.$

The Noether point symmetries generated from the homothetic algebra i.e. the
non-gradient $so(3)$ elements included, are shown in Table \ref{3DNNS1}.
Moreover, the Noether symmetries generated from the gradient homothetic
algebra are listed in Table \ref{3DNNS2}.

%TCIMACRO{\TeXButton{B}{\begin{table}[tbp] \centering}}%
%BeginExpansion
\begin{table}[tbp] \centering%
%EndExpansion
\caption{Three dimensional conservative Newtonian systems admit Noether
symmetries (1/5)}%
\begin{tabular}{cc}
\hline\hline
\textbf{Noether Symmetry} & $\mathbf{V(x,y,z)}$ \\ \hline
$\partial _{\mu }$ & $-px_{\mu }+f\left( x^{\nu },x^{\sigma }\right) $ \\
$x_{\nu }\partial _{\mu }-x_{\mu }\partial _{\nu }$ & $-p\theta _{\left( \mu
\nu \right) }+f\left( r_{\left( \mu \nu \right) },x^{\sigma }\right) ~$ \\
$2t\partial _{t}+R\partial _{R}$ & $\frac{1}{R^{2}}f\left( \theta ,\phi
\right) ~$or $\frac{1}{x_{\mu }^{2}}f\left( \frac{x_{\nu }}{x_{\mu }},\frac{%
x_{\sigma }}{x_{\mu }}\right) $ \\ \hline\hline
\end{tabular}%
\label{3DNNS1}%
%TCIMACRO{\TeXButton{E}{\end{table}}}%
%BeginExpansion
\end{table}%
%EndExpansion

%TCIMACRO{\TeXButton{B}{\begin{table}[tbp] \centering}}%
%BeginExpansion
\begin{table}[tbp] \centering%
%EndExpansion
\caption{Three dimensional conservative Newtonian systems admit Noether
symmetries (2/5)}%
\begin{tabular}{cc}
\hline\hline
\textbf{Noether Symmetry} & $\mathbf{V(x,y,z)~~/~~T}_{,tt}\mathbf{=mT}$ \\
\hline
$T\left( t\right) \partial _{\mu }$ & $-\frac{m}{2}x_{\mu }^{2}-px_{\mu
}+f\left( x_{\nu },x_{\sigma }\right) $ \\
$\left( 2\int T\left( t\right) dt\right) \partial _{t}+T\left( t\right)
R\partial _{R}$ & $-\frac{m}{8}R^{2}+\frac{1}{R^{2}}f\left( \theta ,\phi
\right) $ or~$-\frac{m}{8}R^{2}+\frac{1}{x_{\mu }^{2}}f\left( \frac{x_{\nu }%
}{x_{\mu }},\frac{x_{\sigma }}{x_{\mu }}\right) $ \\ \hline\hline
\end{tabular}%
\label{3DNNS2}%
%TCIMACRO{\TeXButton{E}{\end{table}}}%
%BeginExpansion
\end{table}%
%EndExpansion

The corresponding Noether integrals are computed easily from proposition \ref%
{The Noether Integrals}. In Tables \ref{3DNNS3}, \ref{3DNNS4} and \ref%
{3DNNS5} (see Appendix \ref{appendixTables2}) we give a complete list of the
potentials resulting form the linear combinations of the elements of the
homothetic algebra. From the Tables we infer that the isotropic linear
forced oscillator admits 12 Noether point symmetries, as many as the free
particle.

As it has been remarked above, the determination of the Noether point
symmetries admitted by an autonomous Newtonian Hamiltonian system has been
considered previously in \cite{Damianou2004}. Our results extend the results
of \cite{Damianou2004} and coincide with them if we set the constant $p=0.~$%
For example in page 12 case 1 and page 15 case 6 of \cite{Damianou2004} the
terms $-\frac{p}{a}x_{\mu }$ and $p\arctan \left( l\left( \theta ,\phi
\right) \right) $ are missing respectively. Furthermore the potential given
in page/line 12/1, 13/2, 13/3 of \cite{Damianou2004} admits Noether
symmetries only when \ $\lambda =0~$and $b_{1,2}\left( t\right) =const.$
This is due to the fact that the vectors given in \cite{Damianou2004} are
KVs and in order to have $b_{,t}\neq 0$ they must be given by Case II\ of
theorem \ref{The Noether symmetries of a conservative system} above, that
is, the KVs must be gradient. However the KVs used are linear combinations
of translations and rotations which are non-gradient.

It is possible that there exist integrable Newtonian dynamical systems for
potentials not included in these Tables, for example systems which admit
only dynamical symmetries \cite{Sarlet,Kalotas} with integrals quadratic in
momenta~\cite{Crampin}. However these systems are not integrable via Noether
point symmetries.

We remark that from the above results we are also able to give, without any
further calculations, the Lie and the Noether point symmetries of a
dynamical system 'moving' in a three dimensional flat space whose metric has
Lorentzian signature simply by taking one of the coordinates to be complex,
for example by setting $x^{1}=ix^{1}.$

\section{Motion on the two dimensional sphere}

\label{Motion on the two dimensional sphere}\textbf{\ }\label{Noether in a
space of constant curvature}

A first application of the results of section \ref{Noether point symmetries1}
is the determination of Lie and Noether point symmetries admitted by the
equations of motion of a Newtonian particle moving in a two dimensional
space of constant non-vanishing curvature.

Before we continue it is useful to recall some facts concerning spaces of
constant curvature. Consider a $n+1$ dimensional flat space with fundamental
form%
\begin{equation*}
ds^{2}=\sum_{a}c_{a}(dz^{a})^{2}\quad a=1,2...,n+1
\end{equation*}%
where $c_{a}$ are real constants. The hypersurfaces defined by
\begin{equation*}
\sum_{a}c_{a}(dz^{a})^{2}=eR_{0}^{2}
\end{equation*}%
where $R_{0}$\ is an arbitrary constant and $e=\pm 1$ are called \textbf{%
fundamental hyperquadrics} of the space. When all coefficients $c_{a}$ are
positive the space is Euclidian and $e=+1$. In this case there is one family
of hyperquadrics which is the hyperspheres. In all other cases (excluding
the case when all $c_{a}$ 's are negative) there are two families of
hyperquadrics corresponding to the values $e=+1$ \ and $e=-1.$ It has been
shown that in all cases the hyperquadrics are spaces of constant curvature
(see \cite{Eisenhart} p202).~

Consider an autonomous dynamical system moving in the two dimensional sphere
(Euclidian $\left( \varepsilon =1\right) ~$or Hyperbolic~$\left( \varepsilon
=-1\right) $) with Lagrangian\footnote{%
We use spherical coordinates which are natural in the case of spaces of
constant curvature.~} \cite{Carinena}%
\begin{equation}
L\left( \phi ,\theta ,\dot{\phi},\dot{\theta}\right) =\frac{1}{2}\left( \dot{%
\phi}^{2}+\mathrm{Sinn}^{2}\phi ~\dot{\theta}^{2}\right) -V\left( \theta
,\phi \right)  \label{MCC.01}
\end{equation}%
where
\begin{equation*}
\mathrm{Sinn}\phi =\left\{
\begin{array}{cc}
\mathrm{\sin }\phi & \varepsilon =1 \\
\mathrm{\sinh }\phi & \varepsilon =-1%
\end{array}%
\right. ~,~\mathrm{Cosn}\phi =\left\{
\begin{array}{cc}
\cos \phi & \varepsilon =1 \\
\mathrm{\cosh }\phi & \varepsilon =-1.%
\end{array}%
\right. ~
\end{equation*}

The equations of motion are%
\begin{eqnarray}
\ddot{\phi}-\mathrm{Sinn}\phi ~\mathrm{Cosn}\phi \mathrm{~}\dot{\theta}%
^{2}+V_{,\phi } &=&0  \label{MCC.02} \\
\ddot{\theta}+2\frac{\mathrm{Cosn}\phi }{\mathrm{Sinn}\phi }~\dot{\theta}%
\dot{\phi}+\frac{1}{\mathrm{Sinn}^{2}\phi }V_{,\theta } &=&0.  \label{MCC.03}
\end{eqnarray}

For the Lagrangian (\ref{MCC.01}) proposition \ref{The Noether symmetries of
a conservative system CC} applies and we use it to find the potentials $%
V\left( \theta ,\phi \right) $ for which additional Noether point
symmetries, hence Noether integrals are admitted.

The homothetic algebra of a metric of spaces of constant curvature consists
only of non-gradient KVs (hence $\psi =0)$ as follows

(a) $\varepsilon =1~$ (Euclidian case)

\begin{equation}
CK_{e}^{1}=\sin \theta \partial _{\phi }+\cos \theta \cot \phi \partial
_{\theta },~CK_{e}^{2}=\cos \theta \partial _{\phi }-\sin \theta \cot \phi
\partial _{\theta },~CK_{e}^{3}=\partial _{\theta }  \label{MCC.03.1}
\end{equation}

(b) $\varepsilon =-1~$(Hyperbolic case)%
\begin{equation}
CK_{h}^{1}=\sin \theta \partial _{\phi }+\cos \theta ~\mathrm{\coth }\phi
\partial _{\theta },~CK^{2}=\cos \theta \partial _{\phi }-\sin \theta ~%
\mathrm{\coth }\phi \partial _{\theta },~CK^{3}=\partial _{\theta }.
\label{MCC.03.2}
\end{equation}

Therefore the Noether vectors and the Noether function are%
\begin{equation}
\mathbf{X}=CK_{e,h}^{i}\partial _{i},~~f=pt
\end{equation}%
provided the potential satisfies the condition%
\begin{equation}
\mathcal{L}_{CK}V+p=0.  \label{MCC.04}
\end{equation}%
The first integrals given by proposition \ref{The Noether Integrals} are
\begin{equation}
\phi _{II}=-g_{ij}^{i}CK_{e,h}^{i}\dot{x}^{j}+pt  \label{MCC.05}
\end{equation}%
and are time dependent if $p\neq 0$.

\subsection{Noether Symmetries}

We consider two cases, the case $V(\theta ,\phi )=$constant which concerns
the geodesics of the space, and the case $V(\theta ,\phi )\neq $constant.

For the case of geodesics it has been shown (see section \ref{GEsAppl}) that
the Noether point symmetries are the three elements of $so(3)$ with
corresponding Noether integrals%
\begin{eqnarray}
I_{CK_{e,h}^{1}} &=&\dot{\phi}\sin \theta +\dot{\theta}\cos \theta ~\mathrm{%
Sinn}\phi ~\mathrm{Cosn}\phi \\
I_{CK_{e,h}^{2}} &=&\dot{\phi}\cos \theta -\dot{\theta}\sin \theta ~\mathrm{%
Sinn}\phi ~\mathrm{Cosn}\phi \\
I_{CK_{e,h}^{3}} &=&\dot{\theta}~\mathrm{Sinn}^{2}\phi .
\end{eqnarray}%
These integrals are in involution with the Hamiltonian hence the system is
Liouville integrable.

In the case $V(\theta ,\phi )\neq $constant we find the results of Table \ref%
{CCInt}

%TCIMACRO{\TeXButton{B}{\begin{table}[tbp] \centering}}%
%BeginExpansion
\begin{table}[tbp] \centering%
%EndExpansion
\caption{Noether symmetries/Integrals and potentials for the Lagrangian of
the 2D sphere }$%
\begin{tabular}{ccc}
\hline\hline
\textbf{Noether~Symmetry} & $\mathbf{V}\left( \theta ,\phi \right) $ &
\textbf{Noether~Integral} \\ \hline
$CK_{e,h}^{1}$ & $F\left( \cos \theta ~\mathrm{Sinn}\phi \right) $ & $%
I_{CK_{e,h}^{1}}$ \\
$CK_{e,h}^{2}$ & $F\left( \sin \theta ~\mathrm{Sinn}\phi \right) $ & $%
I_{CK_{e,h}^{2}}$ \\
$CK_{e,h}^{3}$ & $F\left( \phi \right) $ & $I_{CK_{e,h}^{3}}$ \\
$aCK_{e,h}^{1}+bCK_{e,h}^{2}$ & $F\left( \frac{1+\tan ^{2}\theta }{\mathrm{%
Sinn}^{2}\phi ~\left( a-b\tan \theta \right) ^{2}}\right) $ & $%
aI_{CK_{e,h}^{1}}+bI_{CK_{e,h}^{2}}$ \\
$aCK_{e,h}^{1}+bCK_{e,h}^{3}$ & $F\left( a\cos \theta \mathrm{Sinn}\phi
-\varepsilon ~b~\mathrm{Cosn}\phi \right) $ & $%
aI_{CK_{e,h}^{1}}+bI_{CK_{e,h}^{3}}$ \\
$aCK_{e,h}^{2}+bCK_{e,h}^{3}$ & $F\left( a\sin \theta \mathrm{Sinn}\phi
-\varepsilon ~b~\mathrm{Cosn}\phi \right) $ & $%
aI_{CK_{e,h}^{2}}+bI_{CK_{e,h}^{3}}$ \\
$aCK_{e,h}^{1}+bCK_{e,h}^{2}+cCK_{e,h}^{3}$ & $F\left( \left( a\cos \theta
-b\sin \theta \right) ~\mathrm{Sinn}\phi -\varepsilon ~c~\mathrm{Cosn}\phi
\right) $ & \thinspace $%
aI_{CK_{e,h}^{1}}+bI_{CK_{e,h}^{2}}+cI_{CK_{e,h}^{3}} $ \\ \hline\hline
\end{tabular}%
$\label{CCInt}%
%TCIMACRO{\TeXButton{E}{\end{table}}}%
%BeginExpansion
\end{table}%
%EndExpansion

The first integrals which correspond to each potential of Table \ref{CCInt}
are in involution with the Hamiltonian and independent. Hence the
corresponding systems are integrable. From Table \ref{CCInt} we infer the
following result.

\begin{proposition}
\label{prop CCs}A dynamical system with Lagrangian (\ref{MCC.01})~has one,
two or four Noether point symmetries hence Noether integrals.
\end{proposition}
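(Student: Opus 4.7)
By Proposition \ref{The Noether symmetries of a conservative system CC} the Noether point symmetries of the Lagrangian (\ref{MCC.01}) consist of the universal time translation $\partial_t$ (whose Noether integral is the Hamiltonian) together with those non-gradient KVs $Y$ of the 2-sphere metric that satisfy the constraint $\mathcal{L}_Y V + p = 0$ for some constant $p$. The KV algebra $\mathfrak{g}$ is three-dimensional in both signatures: it is $so(3)$ spanned by (\ref{MCC.03.1}) when $\varepsilon = 1$, and $so(2,1)$ spanned by (\ref{MCC.03.2}) when $\varepsilon = -1$. Hence the total number of Noether point symmetries is $1+k$, where
\begin{equation*}
\mathfrak{s} := \{\, Y \in \mathfrak{g} \mid \mathcal{L}_Y V \text{ is a constant}\,\}, \qquad k := \dim \mathfrak{s}.
\end{equation*}
The proposition is equivalent to the statement $k \in \{0,1,3\}$, i.e.\ the case $k=2$ is excluded.

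The first step is the key algebraic observation that $\mathfrak{s}$ is in fact a Lie subalgebra of $\mathfrak{g}$. Indeed if $Y_1,Y_2 \in \mathfrak{s}$ with $Y_i V = -p_i$ and $p_i$ constants, then
\begin{equation*}
[Y_1,Y_2]V \;=\; Y_1(Y_2 V) - Y_2(Y_1 V) \;=\; Y_1(-p_2) - Y_2(-p_1) \;=\; 0,
\end{equation*}
so $[Y_1,Y_2] \in \mathfrak{s}$ (with Noether constant zero). Thus the proof reduces to a classification of Lie subalgebras of $\mathfrak{g}$ that can arise as $\mathfrak{s}$, and in particular to the exclusion of 2-dimensional candidates.

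For $\varepsilon=1$ (the case $\mathfrak{g}=so(3)$) the exclusion is purely algebraic: in an orthonormal basis the bracket coincides with the ordinary cross product, so for any two linearly independent generators $Y_1,Y_2$ the commutator $[Y_1,Y_2]$ is orthogonal to $\operatorname{span}(Y_1,Y_2)$ and therefore independent of it. Consequently $so(3)$ admits no 2-dimensional subalgebra, so $k \in \{0,1,3\}$ and the Noether count is $1+k \in \{1,2,4\}$. For $\varepsilon=-1$ (the case $\mathfrak{g}=so(2,1)$) one must treat in addition the parabolic (non-abelian 2-dimensional) subalgebras. Using the basis (\ref{MCC.03.2}) with brackets $[CK^1,CK^2]=-CK^3$, $[CK^1,CK^3]=-CK^2$, $[CK^2,CK^3]=CK^1$, one would parametrise an arbitrary parabolic subalgebra and insert it into the constraint $[Y_1,Y_2]V=0$; the resulting system of first-order linear PDEs for $V(\phi,\theta)$ is to be integrated explicitly and its solutions compared against the rows of Table~\ref{CCInt}. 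The hard part, and the main obstacle, is showing that every such $V$ in fact matches the last row of Table~\ref{CCInt}, i.e.\ is invariant under the full $so(2,1)$, thereby promoting $k$ from $2$ to $3$. Once this matching is carried out, the two signatures yield the same conclusion $k\in\{0,1,3\}$, whence the total Noether symmetry count is one, two, or four.
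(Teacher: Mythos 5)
Your overall strategy is the same as the paper's: both arguments rest on the observation that the Noether point symmetries form a Lie algebra, so that the set $\mathfrak{s}$ of Killing vectors satisfying $\mathcal{L}_{Y}V+p=0$ is a Lie subalgebra of the three-dimensional Killing algebra $\mathfrak{g}$, and the count $1+\dim\mathfrak{s}$ reduces to excluding $\dim\mathfrak{s}=2$. You improve on the paper by actually proving the closure property via $[Y_{1},Y_{2}]V=Y_{1}(-p_{2})-Y_{2}(-p_{1})=0$, which the paper only asserts, and your treatment of $\varepsilon=1$ is complete and correct: $so(3)$ has no two-dimensional subalgebra, hence $\dim\mathfrak{s}\in\{0,1,3\}$.

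The gap is the case $\varepsilon=-1$, which you leave as a programme (``parametrise the parabolic subalgebras, integrate the resulting PDEs, compare against Table \ref{CCInt}'') rather than a proof; as submitted, the hyperbolic half of the proposition is unproved. Moreover, the step you defer cannot be completed in the way you hope. The paper disposes of both signatures with the single appeal to semisimplicity, but semisimplicity forbids two-dimensional subalgebras only in the compact case; $so(2,1)\cong sl(2,\mathbb{R})$ contains Borel subalgebras, exactly the parabolic ones you flag. These are in fact realised as $\mathfrak{s}$: in the upper half-plane model of the $\varepsilon=-1$ metric, with Killing vectors $e=\partial_{x}$, $h=x\partial_{x}+y\partial_{y}$, $f=(x^{2}-y^{2})\partial_{x}+2xy\partial_{y}$ and $[h,e]=-e$, the potential $V=\ln y$ gives $eV=0$ and $hV=1$ (both constants) but $fV=2x$, which is not constant. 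Hence $\mathfrak{s}=\mathrm{span}\{e,h\}$ is two-dimensional and the system has exactly three Noether point symmetries (including $\partial_{t}$), a value the proposition excludes. So the dichotomy $\dim\mathfrak{s}\in\{0,1,3\}$, and with it the stated count ``one, two or four,'' holds only for the Euclidean sphere, where your argument is already complete; your instinct that the hyperbolic case is the genuine obstacle is correct, but the resolution is that the exclusion fails there rather than that it requires a longer computation.
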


\begin{proof}
For the case of the free particle we have the maximum number of four Noether
symmetries (the rotation group $so(3)$ plus the $\partial _{t}$). In the
case the potential is not constant the Noether symmetries are produced by
the non-gradient KVs with Lie algebra%
\begin{equation*}
\left[ X_{A},X_{B}\right] =C_{AB}^{C}X_{C}
\end{equation*}%
where $C_{12}^{3}=C_{31}^{2}=~C_{23}^{1}=1$ for $\varepsilon =1~$and $\bar{C}%
_{21}^{3}=\bar{C}_{23}^{1}=\bar{C}_{31}^{2}=1$ for $\varepsilon =-1.~$%
Because the Noether point symmetries form a Lie algebra and the Lie algebra
of the KVs is semisimple the system will admit either none, one or three
Noether point symmetries generated by the KVs. The case of three is when $%
V\left( \theta ,\phi \right) =V_{0}$ that is the case of geodesics,
therefore the Noether point symmetries will be (including $\partial _{t}$)
either one, two or four.
\end{proof}

We note that the two important potentials of Celestial Mechanics,\ that is $%
V_{1}=-\frac{\mathrm{Cosn}\phi }{\mathrm{Sinn}\phi }~,~V_{2}=\frac{1}{2}%
\frac{\mathrm{Sinn}^{2}\phi }{\mathrm{Cosn}^{2}\phi }~\ $which according to
Bertrand 's Theorem \cite{Carinena,Kozlov,Vozmi} \ produce closed orbits on
the sphere are included in Table \ref{CCInt}. Hence the dynamical systems
they define are Liouville integrable via Noether point symmetries $%
~CK_{e,h}^{3}$. \ The potential $V_{1}$ corresponds to the Newtonian Kepler
potential and $V_{2}$ is the analogue of the harmonic oscillator. We also
note that our results contain those of \cite{Carinena,Voy}

We emphasize that the potentials listed in Table \ref{CCInt} concern
dynamical systems with Lagrangian (\ref{MCC.01}) which are integrable via
Noether point symmetries.

\section{Applications}

\label{sectio3ap}

In this section we demonstrate the application of the results of section \ref%
{Newtonian dynamical systems which admit Lie symmetries} in two cases. The
first case is the Kepler-Ermakov system, which (in general) is not a
conservative dynamical system and the second is the H\`{e}non - Heiles type
potential.

\subsection{Lie point symmetries of the Kepler-Ermakov system.}

The Ermakov systems are time dependent dynamical systems, which contain an
arbitrary function of time (the frequency function) and two arbitrary
homogeneous functions of dynamical variables. A central feature of Ermakov
systems is their property of always having a first integral. The
Kepler-Ermakov system is an autonomous Ermakov system defined by the
equations \cite{Athorne1991}
\begin{eqnarray}
\ddot{x}+\frac{x}{r^{3}}H\left( x,y\right) -\frac{1}{x^{3}}f\left( \frac{y}{x%
}\right) &=&0  \label{KE1} \\
\ddot{y}+\frac{y}{r^{3}}H\left( x,y\right) -\frac{1}{y^{3}}g\left( \frac{y}{x%
}\right) &=&0  \label{KE2}
\end{eqnarray}%
where $H,f,g~$are arbitrary functions. In \cite{Karasu} it has been shown
that this system admits Lie point symmetries for certain forms of the
function $H\left( x,y\right) .$ Furthermore it has been shown that for
special classes of these equations there exists a Lagrangian (see also \cite%
{LeachK}).

In the following we demonstrate the use of our results by finding the Lie
point symmetries simply by reading the entries of the proper Tables. Looking
at the Tables we find that equations (\ref{KE1}), (\ref{KE2}) admit a Lie
point symmetry for the following two cases.

\textbf{Case 1. }When $H\left( x,y\right) =\frac{h\left( \frac{y}{x}\right)
}{x}.$ Then from Tables \ref{2dN1} and \ref{2dN2} $\left( m=0\right) $ we
have that the Lie point symmetries are
\begin{equation}
X=\left( c_{1}+c_{2}2t+c_{3}t^{2}\right) \partial _{t}+\left(
c_{2}x+c_{3}tx\right) \partial _{x}+\left( c_{2}y+c_{3}ty\right) \partial
_{y}.  \label{NEP.01}
\end{equation}

\textbf{Case 2. }When $H\left( x,y\right) =\omega ^{2}r^{3}+\frac{h\left(
\frac{y}{x}\right) }{x}$ where $m=-4\omega ^{2}$ and $m\neq 0$. In this case
Table \ref{2dN2} for $m\neq 0$ applies and the Lie point symmetry generator
is
\begin{equation*}
X=\left( c_{1}-\frac{c_{2}}{\omega }\cos \left( 2\omega t\right) +\frac{c_{3}%
}{\omega }\sin \left( 2\omega t\right) \right) \partial _{t}+\left(
c_{2}\sin \left( 2\omega t\right) +c_{3}\cos \left( 2\omega t\right) \right)
x\partial _{x}+\left( c_{2}\sin \left( 2\omega t\right) +c_{3}\cos \left(
2\omega t\right) \right) y\partial _{y}.
\end{equation*}%
These symmetries coincide with the ones found in \cite{Karasu}. We note that
in both cases the Lie symmetry vectors come from the HV $x\partial
_{x}+y\partial _{y}$ of the Euclidean metric.

In a subsequent publication \cite{LeachK} it was shown that the Lagrangian
considered in \cite{Karasu} was incorrect and that the correct Lagrangian
is:
\begin{equation}
L=\frac{1}{2}\left( \dot{r}^{2}+r^{2}\dot{\theta}^{2}\right) -\frac{1}{2}%
\omega ^{2}r^{2}-\frac{\mu }{2r^{2}}-\frac{C\left( \theta \right) }{2r^{2}}
\end{equation}%
where $C(\theta )=\sec ^{2}\theta f(\tan \theta )+\csc ^{2}\theta g(\tan
\theta )$ and the functions $f,g$ satisfy two compatibility conditions (see
equation (5.2) of \cite{LeachK}).

We observe that the Lie symmetries are also Noether symmetries and that the
Noether Integrals (in addition to the Hamiltonian $E$) corresponding the
these Noether symmetries are%
\begin{eqnarray}
I_{1} &=&2tE-r\dot{r} \\
I_{2} &=&t^{2}E-tr\dot{r}+\frac{1}{2}r^{2}
\end{eqnarray}%
for $\omega =0$.~When $\omega \neq 0$ the Noether integrals are
\begin{eqnarray}
I_{1}^{\prime } &=&-\frac{1}{\omega }\cos \left( 2\omega t\right) E-\sin
\left( 2\omega t\right) r\dot{r}+\omega \cos \left( 2\omega t\right) r^{2} \\
I_{2}^{^{\prime }} &=&\frac{1}{\omega }\sin \left( 2\omega t\right) E-\cos
\left( 2\omega t\right) r\dot{r}-\omega \sin \left( 2\omega t\right) r^{2}.
\end{eqnarray}

In total we have three Noether integrals. Since we do not look for
generalized symmetries, we do not expect to find the Ermakov - Lewis
invariant \cite{MoyoL}.

\subsection{Point symmetries of the H\`{e}non - Heiles potential}

The H\`{e}non - Heiles potential
\begin{equation*}
V\left( x,y\right) =\frac{1}{2}\left( x^{2}+y^{2}\right) +x^{2}y-\frac{1}{3}%
y^{2}
\end{equation*}%
has been used as a model for the galactic cluster. Computer analysis has
suggested that for sufficiently small values of the energy, there exists a
first integral independent of energy. In \cite{LeachHHP} it is proposed to
study if there exists a Lie point symmetry of the potential which could
justify such a first integral. Working in a slightly more general scenario,
in \cite{LeachHHP} considered potentials of the form
\begin{equation}
V\left( x,y\right) =\frac{1}{2}\left( x^{2}+y^{2}\right)
+Ax^{3}+Bx^{2}y+Cxy^{2}+Dy^{3}  \label{Leach1}
\end{equation}%
where \thinspace $A,B,C,D$ are real parameters. The H\`{e}non - Heiles
potential is the special case for \thinspace $A=C=0,B=1,D=-\frac{1}{3}$.

Using standard Lie analysis in \cite{LeachHHP} it is shown that only the
potentials $V_{1}\left( x,y\right) =\frac{1}{2}\left( x^{2}+y^{2}\right)
+x^{3},$ \ $V_{2}\left( x,y\right) =\frac{1}{2}\left( x^{2}+y^{2}\right)
+y^{3},$ \ $V_{3}\left( x,y\right) =\frac{1}{2}\left( x^{2}+y^{2}\right) \pm
\left( ay\pm x\right) ^{3},$ \ $V_{4}\left( x,y\right) =\frac{1}{2}\left(
x^{2}+y^{2}\right) \pm \left( ay\mp x\right) ^{3}$ admit Lie point
symmetries, hence the H\`{e}non - Heiles potential does not admit a Lie
point symmetry and the existence of a first integral it is not justified. We
apply the results of sections \ref{Newtonian dynamical systems which admit
Lie symmetries}, \ref{Newtonian Hamiltomian systems which admit Noether
Symmetries} to give the Lie point symmetries and the Noether quantities of
these potentials, simply by reading the relevant Tables.

The potential $V_{1}\left( x,y\right) $ is of the form $cy^{2}+f\left(
x\right) $. Hence from Table \ref{2dCN1} \ the~~Lie point symmetries
admitted by this potential are:
\begin{equation*}
X=c_{0}\partial _{t}+c_{1}\sin t\partial _{y}\ +c_{2}\cos t\partial
_{y}+c_{3}y\partial _{y}.
\end{equation*}%
We note that the Lie symmetry $y\partial _{y},$ which is due to the Affine
collineation, has not been found in \cite{LeachHHP}.

The potential $V_{2}\left( x,y\right) $\textbf{\ }is obtained by $%
V_{1}\left( x,y\right) $ with $x,y$ interchanged. Therefore the Lie point
symmetries admitted by the potential $V_{2}\left( x,y\right) $ are
\begin{equation*}
X=c_{0}\partial _{t}+c_{1}\sin t\partial _{x}\ +c_{2}\cos t\partial
_{x}+c_{3}x\partial _{x}
\end{equation*}%
and again in \cite{LeachHHP} the Lie point symmetry $y\partial _{y}$ is
missing.

The potential $V_{3}\left( x,y\right) $ is of the form $\frac{1}{2}\left(
x^{2}+y^{2}\right) +f\left( x-ay\right) $. Hence from Tables \ref{2dCN3} and %
\ref{2dCN4} the admitted Lie symmetries are
\begin{equation*}
X=c_{0}\partial _{t}+(c_{1}\cos t+c_{2}\sin t)\left( a\partial _{x}\pm
\partial _{y}\right) +c_{3}\left( ax+y\right) \left( a\partial _{x}+\partial
_{y}\right) .
\end{equation*}

The potential $V_{4}\left( x,y\right) $ is of the same form as $V_{3}\left(
x,y\right) $ with $x,y$ interchanged. Therefore the Lie point symmetries
are:
\begin{equation*}
X=c_{0}\partial _{t}+a(c_{1}\cos t+c_{2}\sin t)\left( a\partial _{x}\mp
\partial _{y}\right) +c_{3}\left( ax+y\right) \left( a\partial _{x}+\partial
_{y}\right).
\end{equation*}

We observe that in all four cases the Lie point symmetries depend on four
free parameters (the $c_{0},c_{1},c\,_{2},c_{3}).$ The parameter $c_{0}$
determines the vector $c_{0}\partial _{t}$ and the rest $c_{1},c\,_{2},c_{3}$
the $x-y$ part of the symmetry generators.

%TCIMACRO{\TeXButton{B}{\begin{table}[tbp] \centering}}%
%BeginExpansion
\begin{table}[tbp] \centering%
%EndExpansion
\caption{Noether symmetries of Hènon - Heiles potential}%
\begin{tabular}{ccc}
\hline\hline
$\mathbf{V}\left( x,y\right) $ & \textbf{Noether Symmetry} & \textbf{Noether
Integral} \\ \hline
$\frac{1}{2}\left( x^{2}+y^{2}\right) +x^{3}$ & $\sin t\partial _{y}\ $ & $%
\dot{y}\sin t-y\cos t$ \\
& $\cos t\partial _{y}$ & $\dot{y}\cos t+y\sin t$ \\
$\frac{1}{2}\left( x^{2}+y^{2}\right) +y^{3}$ & $\sin t\partial _{x}$ & $%
\dot{x}\sin t-x\cos t$ \\
& $\cos t\partial _{x}$ & $\dot{x}\cos t+x\sin t$ \\
$\frac{1}{2}\left( x^{2}+y^{2}\right) \pm \left( ay\pm x\right) ^{3}$ & $%
\sin t\left( \mp a\partial _{x}+\partial _{y}\right) $ & $\left( \mp a\dot{x}%
+\dot{y}\right) \sin t-\left( \mp ax+y\right) \cos t$ \\
& $\cos t\left( \mp a\partial _{x}+\partial _{y}\right) $ & $\left( \mp a%
\dot{x}+\dot{y}\right) \cos t+\left( \mp ax+y\right) \sin t$ \\
$\frac{1}{2}\left( x^{2}+y^{2}\right) \pm \left( ay\mp x\right) ^{3}$ & $%
\sin t\left( \pm a\partial _{x}+\partial _{y}\right) $ & $\left( \pm a\dot{x}%
+\dot{y}\right) \sin t-\left( \pm ax+y\right) \cos t$ \\
& $\cos t\left( \pm a\partial _{x}+\partial _{y}\right) $ & $\left( \pm a%
\dot{x}+\dot{y}\right) \cos t+\left( \pm ax+y\right) \sin t$ \\ \hline\hline
\end{tabular}%
\label{HHP}%
%TCIMACRO{\TeXButton{E}{\end{table}}}%
%BeginExpansion
\end{table}%
%EndExpansion

The Lie point symmetries which are possibly Noether symmetries are the ones
generated by the KVs. We check that the Lie point symmetries which are due
to the gradient KVs are Noether Symmetries of the potentials (plus the $%
\partial _{\,t}$ whose Noether integral is the Hamiltonian). The Noether
integrals and the Noether functions corresponding to each of these
symmetries are given in Table \ref{HHP}. The results coincide with those of
\cite{LeachHHP,Shrauner}.

\section{Conclusion}

We have shown in Theorem \ref{The general conservative system} and Theorem %
\ref{The Noether symmetries of a conservative system} that the Lie and the
Noether point symmetries for the general class of equations of motion (\ref%
{L2P.1}) are generated from the special projective Lie algebra and the
homothetic Lie algebra respectively of the metric of the space where motion
takes place. The specific subalgebra is determined by a set of differential
conditions which involve the potential defining the dynamical system. The
results apply to both conservative and non conservative dynamical systems.
They also apply to affine spaces and they are independent of the signature
of the metric and the dimension of the space.

The essence of the above is that they reduce the problem of finding the Lie
and the Noether point symmetries of second order systems of equations of the
form (\ref{L2P.1}) to the geometric requirement of finding the special
projective algebra of the a metric (or more general of an affine) space.
Because there is a plethora of results in existing studies on the projective
algebra of Riemannian spaces, it is possible that the problem of finding the
Lie and the Noether point symmetries of an autonomous conservative dynamical
is already solved! As it has been shown, one such case is the case of spaces
of constant curvature. The power of the geometric approach is that it \emph{%
gives all }the Lie and the Noether point symmetries without the use of
computer programs.

An additional point, which could be of interest, is one to reverse the
argument and use the computational approach of the Lie symmetries of the
geodesic equations of a space to compute the projective group, which can be
a formidable task in Differential Geometry. Aminova \cite%
{Aminova00,Aminova2006,Aminova2010} has shown that if one chooses the Cartan
parametrization of the geodesic equations then the Lie symmetries generate
the projective algebra of the underline metric. Because up to now there do
not seem to exist either a general method or general theorems which allow
the computation of the projective algebra of a metric, this approach could
be valuable.

We have applied these theorems to classify all two and three dimensional
Newtonian dynamical systems which admit at least one Lie symmetry, and in
the case of conservative forces, all two and three dimensional potentials $%
V\left( x^{k}\right) $ which admit a Lie symmetry and a Noether point
symmetry. These results complete previous results \cite%
{Sen,Damianou99,Damianou2004} concerning the Noether point symmetries of the
two and three dimensional Newtonian dynamical systems. We note that, due to
the geometric derivation and the tabular presentation, the results can be
extended easily to higher dimensional flat spaces; however at the cost of
convenience because the linear combinations of the symmetry vectors increase
dramatically.

We have demonstrated the application of the results in various important
cases. We considered the Kepler-Ermakov system, which is an autonomous, but
in general not conservative dynamical system and we determined the classes
of this type of systems which admit Lie and Noether point symmetries; we
also considered the case of the H\`{e}non Heiles type potentials and
determined their Lie point symmetries and their Noether symmetries. These
results are compatible and complete previous results in the literature.

In the following chapter, we apply the results obtained here to study the
Liouville integrability of the three dimensional Hamiltonian Kepler-Ermakov
system and generalize it in a Riemannian space.

%TCIMACRO{\TeXButton{newpage}{\newpage}}%
%BeginExpansion
\newpage%
%EndExpansion

%TCIMACRO{\TeXButton{AppendixA}{\begin{subappendices}}}%
%BeginExpansion
\begin{subappendices}%
%EndExpansion

\section{Proof of main Theorem}

\label{proofautonomous}

Below, we give the proof of Theorem \ref{The general conservative system}.

Equation (\ref{PLS.12}) gives:%
\begin{equation}
\xi \left( t,x^{i}\right) =C\left( t\right) S\left( x^{i}\right) +D\left(
t\right)  \label{PLS.13}
\end{equation}%
where $S^{,i}$ is a gradient KV. Replacing this in (\ref{PLS.11a}) we find%
\begin{equation}
L_{\eta }\Gamma _{(jk)}^{i}=2C_{,t}S_{,(j}\delta _{k)}^{i}.  \label{PLS.14}
\end{equation}%
Because $\Gamma _{(jk)}^{i}$ is a function of $x^{i}$ only and $\eta
^{i}(t,x^{i})$ we must have%
\begin{equation}
\eta ^{i}\left( t,x^{j}\right) =T\left( t\right) Y^{i}\left( x^{j}\right)
\label{PLS.15}
\end{equation}%
hence (\ref{PLS.14}) becomes:%
\begin{equation}
T\left( t\right) L_{Y}\Gamma _{jk}^{i}=2C_{,t}S_{,(j}\delta _{k)}^{i}
\label{PLS.16}
\end{equation}%
from which follows%
\begin{equation}
C_{,t}=a_{0}T  \label{PLS.17}
\end{equation}%
\begin{equation}
L_{Y}\Gamma _{jk}^{i}=2\alpha _{0}S_{,(j}\delta _{k)}^{i}  \label{PLS.17a}
\end{equation}%
where $a_{0}=0$, when $Y^{i}$ is a KV,HKV,AC and $a_{0}\neq 0$ if $Y^{i}~$is
a special PC$.$

The remaining equations (\ref{PLS.09})-(\ref{PLS.10}) are written ($%
F^{i}=g_{ij}V^{,i}(x^{i}))$%
\begin{equation}
TL_{Y}V^{,i}+2\left( C_{,t}S+D_{,t}\right) V^{,i}+T_{,tt}Y^{i}=0
\label{PLS.18}
\end{equation}%
\begin{equation}
C\left( S_{,k}\delta _{j}^{i}+2S,_{j}\delta _{k}^{i}\right)
V^{,k}+2T_{,t}Y^{i}{}_{;~j}-\left( C_{,tt}S+D_{,tt}\right) \delta _{j}^{i}=0
\label{PLS.19}
\end{equation}%
Equation (\ref{PLS.19}) is written as%
\begin{equation*}
C(t)\left( S_{,k}\delta _{j}^{i}+2S,_{j}\delta _{k}^{i}\right)
V^{,k}+2T_{,t}Y^{i}{}_{;~j}+\left( -C_{,tt}\right) S\delta _{j}^{i}+\left(
-D_{,tt}\right) \delta _{j}^{i}=0
\end{equation*}%
which due to (\ref{PLS.17}) is simplified as follows%
\begin{equation}
C(t)\left( S_{,k}\delta _{j}^{i}+2S,_{j}\delta _{k}^{i}\right) V^{,k}+\left(
2Y^{i}{}_{;~j}-a_{0}S\delta _{j}^{i}\right) T_{,t}+\left( -D_{,tt}\right)
\delta _{j}^{i}=0.  \label{PLS.21}
\end{equation}

Collecting the results we have the system of equations%
\begin{equation}
C_{,t}=a_{0}T  \label{PLS.21a}
\end{equation}%
\begin{equation}
L_{Y}\Gamma _{jk}^{i}=2S_{,(j}\delta _{k)}^{i}  \label{PLS.21b}
\end{equation}%
\begin{eqnarray}
TL_{Y}V^{,i}+2\left( C_{,t}S+D_{,t}\right) V^{,i}+T_{,tt}Y^{i} &=&0
\label{PLS.22} \\
C\left( S_{,k}\delta _{j}^{i}+2S,_{j}\delta _{k}^{i}\right) V^{,k}+\left(
2Y^{i}{}_{;~j}-a_{0}S\delta _{j}^{i}\right) T_{,t}+\left( -D_{,tt}\right)
\delta _{j}^{i} &=&0  \label{PLS.23}
\end{eqnarray}

where $a_{0}=0$, when $Y^{i}$ is a KV,HKV,AC and $a_{0}\neq 0$ if $Y^{i}~$is
a sp. PC$.~$We consider various cases.

\textbf{Case I. }$T\left( t\right) =0.$~In this case (\ref{PLS.21a}) implies
$C_{,t}=0$ and (\ref{PLS.22}),(\ref{PLS.23}) give%
\begin{equation}
D_{,t}V^{,i}=0  \label{PLS.24}
\end{equation}%
\begin{equation}
C\left( S_{,k}\delta _{j}^{i}+2S,_{j}\delta _{k}^{i}\right) V^{,k}+\left(
-D_{,tt}\right) \delta _{j}^{i}=0  \label{PLS.25a}
\end{equation}%
From (\ref{PLS.24}) follows $D_{,t}=0$ and consequently (\ref{PLS.25a})
implies%
\begin{equation}
C\left( S_{,k}\delta _{j}^{i}+2S,_{j}\delta _{k}^{i}\right) V^{,k}=0.
\label{PLS.26}
\end{equation}%
Because $\left( S_{,k}\delta _{j}^{i}+2S,_{j}\delta _{k}^{i}\right)
V^{,k}\neq 0$ it follows that $C\left( t\right) =0.$ Therefore in this case
we have the Lie Symmetry%
\begin{equation}
X=d_{1}\partial _{t}  \label{PLS.27}
\end{equation}

\textbf{Case II.} $T\left( t\right) =a_{0}~\neq 0$. ~Equation (\ref{PLS.21a}%
) implies $C_{,t}=a_{0}a_{1}$. Then (\ref{PLS.22}),(\ref{PLS.23}) give:%
\begin{eqnarray}
a_{1}L_{Y}V^{,i}+2\left( C_{,t}S+D_{,t}\right) V^{,i} &=&0  \label{PLS.28} \\
C\left( S_{,k}\delta _{j}^{i}+2S,_{j}\delta _{k}^{i}\right) V^{,k}+\left(
-D_{,tt}\right) \delta _{j}^{i} &=&0  \label{PLS.29}
\end{eqnarray}%
It follows~$2a_{0}a_{1}=c_{1}~,~\frac{2D_{,t}}{a_{1}}=d_{1}~$where
\begin{equation}
L_{Y}V^{,i}+\left( c_{1}S+d_{1}\right) V^{,i}=0  \label{PLS.30a}
\end{equation}%
Then (\ref{PLS.29}) is written%
\begin{equation}
C\left( S_{,k}\delta _{j}^{i}+2S,_{j}\delta _{k}^{i}\right) V^{,k}=0
\label{PLS.31}
\end{equation}%
from which we infer that $C\left( t\right) =0.$ This means $a_{0}=0$ hence $%
Y^{i}$ can only be KV, HV or AC and furthermore $c_{1}=0.$ We conclude that
provided the potential satisfies the condition

\begin{equation}
L_{Y}V^{,i}+d_{1}V^{,i}=0  \label{PLS.32}
\end{equation}%
the symmetry vector is%
\begin{equation}
X=\left( \frac{1}{2}d_{1}a_{1}t+d_{2}\right) \partial
_{t}+a_{1}Y^{i}\partial _{i}  \label{PLS.33}
\end{equation}%
where $Y^{i}$ is a KV, HV or AC.

\textbf{Case III.} $T_{,t}\neq 0$.~In this case we have the system of
simultaneous equations%
\begin{equation}
C_{,t}=a_{0}T  \label{PLS.34}
\end{equation}%
\begin{eqnarray}
TL_{Y}V^{,i}+2\left( C_{,t}S+D_{,t}\right) V^{,i}+T_{,tt}Y^{i} &=&0
\label{PLS.35} \\
C\left( S_{,k}\delta _{j}^{i}+2S,_{j}\delta _{k}^{i}\right) V^{,k}+\left(
2Y^{i}{}_{;~j}-a_{0}S\delta _{j}^{i}\right) T_{,t}+\left( -D_{,tt}\right)
\delta _{j}^{i} &=&0  \label{PLS.36}
\end{eqnarray}%
Suppose that $Y^{i}$ is a non-gradient KV or non-gradient HV or AC. Then $S=$%
constant so that $S^{,i}=0.$ Then (\ref{PLS.36}) becomes:%
\begin{equation}
2T_{,t}Y{}_{i;~j}+\left( -a_{0}T_{,t}S-D_{,tt}\right) g_{ij}=0
\label{PLS.36a}
\end{equation}%
and follows (by taking the antisymmetric part in the indices $i,j)$ that $%
T_{,t}=0$ contrary to our assumption. Therefore $Y^{i}$ must be a gradient
KV, gradient HV or sp.PC$.$ We consider various subcases.\newline

\textbf{Case III. a.} $~Y^{i}$ is a gradient KV/HKV and $Y^{i}\neq V^{,i}.~$%
Then we have $a_{0}=0~,~C\left( t\right) =c_{1}=$constant and equations (\ref%
{PLS.35}),(\ref{PLS.36}) are written as follows:%
\begin{eqnarray}
L_{Y}V^{,i}+2\frac{D_{,t}}{T}V^{,i}+\frac{T_{,tt}}{T}Y^{i} &=&0
\label{PLS.37} \\
c_{1}\left( S_{,k}\delta _{j}^{i}+2S,_{j}\delta _{k}^{i}\right)
V^{,k}+\left( 2\psi T_{,t}-D_{,tt}\right) \delta _{j}^{i} &=&0.
\label{PLS.38}
\end{eqnarray}%
From (\ref{PLS.37}) we infer%
\begin{equation}
2\frac{D_{,t}}{T}=d_{1}~~,~\frac{T_{,tt}}{T}=a_{1}  \label{PLS.39}
\end{equation}%
and
\begin{equation}
L_{Y}V^{,i}+d_{1}V^{,i}+a_{1}Y^{i}=0.  \label{PLS.40}
\end{equation}%
From (\ref{PLS.38}) we find
\begin{equation}
2\psi T_{,t}-D_{,tt}=m  \label{PLS.41}
\end{equation}%
that is:
\begin{equation}
c_{1}\left( S_{,k}\delta _{j}^{i}+2S,_{j}\delta _{k}^{i}\right)
V^{,k}+m\delta _{j}^{i}=0.  \label{PLS.42}
\end{equation}%
The last relation is satisfied only for $c_{1}=0~,~m=0$.\ \ Then (\ref%
{PLS.39}),(\ref{PLS.41}) give:%
\begin{equation}
\frac{T_{,tt}}{T}=a_{1}~\ ,~D\left( t\right) =\frac{1}{2}d_{1}\int T\left(
t\right) dt~,~d_{1}=4\psi .  \label{PLS.43}
\end{equation}%
We conclude that provided the potential satisfies equation (\ref{PLS.40})
and $d_{1}=4\psi $ where $\psi =0$ for a KV and $\psi =1$ for a HV, we have
the Lie symmetry vector%
\begin{equation}
X=~\frac{1}{2}d_{1}\int T\left( t\right) dt\partial _{t}+T\left( t\right)
Y^{i}\partial _{i}.  \label{PLS.44}
\end{equation}

\textbf{Case III. b.}~$Y^{i}$ is a gradient HV and $Y^{i}=\kappa V^{,i}.~$In
this case we have $a_{0}=0~,~C\left( t\right) =c_{1}=$constant and the
system of equations (\ref{PLS.35}),(\ref{PLS.36}) becomes:%
\begin{eqnarray}
L_{Y}V^{,i}+\left( 2\frac{D_{,t}}{T}+\kappa \frac{T_{,tt}}{T}\right) V^{,i}
&=&0  \label{PLS.45} \\
c_{1}\left( S_{,k}\delta _{j}^{i}+2S,_{j}\delta _{k}^{i}\right)
V^{,k}+\left( 2\psi T_{,t}-D_{,tt}\right) \delta _{j}^{i} &=&0.
\label{PLS.46}
\end{eqnarray}%
From (\ref{PLS.46}) follows $c_{1}=0$ which implies the equation%
\begin{equation}
2\psi T_{,t}-D_{,tt}=0.  \label{PLS.47}
\end{equation}%
Because $Y^{i}=V^{,i}$ the $L_{Y}V^{,i}=0$ and we have the second condition%
\begin{equation}
2D_{,t}+\kappa T_{,tt}=0.  \label{PLS.48}
\end{equation}%
We conclude that in this case the Lie symmetry vector is%
\begin{equation}
X=D\left( t\right) \partial _{t}+T\left( t\right) V^{,i}\partial _{i}
\label{PLS.50}
\end{equation}%
where the functions $T\left( t\right) ,$ $D\left( t\right) $ are solutions
of the system of equations (\ref{PLS.46}) and (\ref{PLS.46}).\newline

\textbf{Case III. c.~}$Y^{i}$ is a special PC.~In this case the system of
symmetry conditions reads%
\begin{equation}
C_{,t}=a_{0}T  \label{PLS.51}
\end{equation}%
\begin{eqnarray}
L_{Y}V^{,i}+2\left( \frac{C_{,t}}{T}S+\frac{D_{,t}}{T}\right) V^{,i}+\frac{%
T_{,tt}}{T}Y^{i} &=&0  \label{PLS.52} \\
\left( S_{,k}\delta _{j}^{i}+2S,_{j}\delta _{k}^{i}\right) V^{,k}+\left(
2Y^{i}{}_{;~j}-a_{0}S\delta _{j}^{i}\right) \frac{T_{,t}}{C}+\left( -\frac{%
D_{,tt}}{C}\right) \delta _{j}^{i} &=&0.  \label{PLS.53}
\end{eqnarray}%
Using (\ref{PLS.51}) we write (\ref{PLS.52}) as%
\begin{equation}
L_{Y}V^{,i}+2a_{0}SV^{,i}+2\frac{D_{,t}}{T}V^{,i}+\frac{T_{,tt}}{T}Y^{i}=0
\label{PLS.54}
\end{equation}%
from which follows%
\begin{equation}
\frac{D_{,t}}{T}=\frac{1}{2}d_{1}~~,~\ \frac{T_{,tt}}{T}=a_{1}
\label{PLS.55}
\end{equation}%
where
\begin{equation}
L_{Y}V^{,i}+2a_{0}SV^{,i}+d_{1}V^{,i}+a_{1}Y^{i}=0  \label{PLS.56}
\end{equation}%
Then relation (\ref{PLS.53}) implies the conditions%
\begin{equation}
\frac{T_{,t}}{C}=c_{2}~,~\frac{D_{,tt}}{C}=d_{c}  \label{PLS.57}
\end{equation}%
and%
\begin{equation}
\left( S_{,k}\delta _{j}^{i}+2S,_{j}\delta _{k}^{i}\right) V^{,k}+\left(
2Y^{i}{}_{;~j}-a_{0}S\delta _{j}^{i}\right) c_{2}-d_{c}\delta _{j}^{i}=0.
\label{PLS.58}
\end{equation}%
We conclude that in this case provided the potential function satisfies (\ref%
{PLS.56}), we have the Lie symmetry vector%
\begin{equation}
X=\left( C\left( t\right) S+D\left( t\right) \right) \partial _{t}+T\left(
t\right) Y^{i}\partial _{i}  \label{PLS.59}
\end{equation}%
where the functions $C\left( t\right) ,D\left( t\right) ,T\left( t\right) $
are computed form the equations (\ref{PLS.51}), (\ref{PLS.55}),~(\ref{PLS.57}%
).

\textbf{Case III. d.~}$Y^{i}$ is a special PC of the form $Y^{i}=\lambda
SV^{,i}~,~\lambda =$constant. and $S^{,i}$ is a gradient KV of the metric.

This case is possible only when the potential is such that the vector $%
V^{,i} $ is the gradient HV of the metric (if the metric admits one). Then
it is easy to show that due to (\ref{PLS.34}) equation (\ref{PLS.35})
becomes:%
\begin{equation*}
L_{Y}V^{,i}+2\frac{D_{,t}}{T}V^{,i}+\left( 2\frac{C_{,t}}{T}+\lambda \frac{%
T_{,tt}}{T}\right) SV^{,i}=0.
\end{equation*}%
We compute%
\begin{equation}
L_{Y}V^{,i}=\left[ \lambda SV^{,r},V^{,i}\right] \partial _{r}=-\lambda
S_{,j}V^{,j}V^{,i}  \label{PLS.61}
\end{equation}%
therefore%
\begin{equation}
-\lambda S_{,j}V^{,j}V^{,i}+2\frac{D_{,t}}{T}V^{,i}+\left( 2\frac{C_{,t}}{T}%
+\lambda \frac{T_{,tt}}{T}\right) SV^{,i}=0.  \label{PLS.62}
\end{equation}%
It follows%
\begin{eqnarray}
D_{,t} &=&0  \label{PLS.63} \\
2\frac{C_{,t}}{T}+\lambda \frac{T_{,tt}}{T} &=&\lambda _{1}  \label{PLS.64}
\end{eqnarray}%
and the condition%
\begin{equation}
-\lambda S_{,j}V^{,j}+\lambda _{1}S=0\Rightarrow \lambda
S_{,j}V^{,i}=\lambda _{1}S.  \label{PLS.65}
\end{equation}%
Condition (\ref{PLS.36}) now reads%
\begin{align*}
C\left( S_{,k}\delta _{j}^{i}+2S,_{j}\delta _{k}^{i}\right) V^{,k}+\left(
2Y^{i}{}_{;~j}-a_{0}S\delta _{j}^{i}\right) T_{,t}& =0\Rightarrow \\
C\left( \lambda _{1}S\delta _{j}^{i}+2S,_{j}V^{,i}\right) +\left( 2\lambda
S_{,j}V^{.i}+\left( 2\lambda S-a_{0}S\right) \delta _{j}^{i}\right) T_{,t}&
=0
\end{align*}%
from which follows~$\frac{T_{,t}}{C}=\lambda _{2};$~that is,%
\begin{equation*}
C\left( \lambda _{1}S_{J}\delta _{j}^{i}+2S_{J},_{j}V^{,i}\right) +\lambda
_{2}\left( 2\lambda S_{,j}V^{.i}+\left( 2\lambda S-a_{0}S\right) \delta
_{j}^{i}\right) =0.
\end{equation*}%
\newline
We conclude that in this case we have the Lie symmetry vector%
\begin{equation*}
X=C\left( t\right) S\partial _{i}+T\left( t\right) SV^{,i}\partial _{i}
\end{equation*}%
where the functions $C\left( t\right) ,T\left( t\right) $ are computed form
the solution of the system of simultaneous equations%
\begin{equation*}
C_{,t}=a_{0}T~,~T_{,t}=\lambda _{2}C
\end{equation*}%
\begin{equation*}
2C_{,t}+\lambda T_{,tt}=\lambda _{1}T.
\end{equation*}%
\newpage

\section{Tables of Newtonian systems admit Lie and Noether symmetries}

\label{appendixTables2}

%TCIMACRO{\TeXButton{B}{\begin{table}[H] \centering}}%
%BeginExpansion
\begin{table}[H] \centering%
%EndExpansion
\caption{Two dimensional Newtonian systems admiting Lie symmetries (3/4)}%
\begin{tabular}{ccc}
\hline\hline
\textbf{Lie }$\downarrow $\textbf{\ }$~~F^{i}\rightarrow $ & $\mathbf{F}%
^{x}\left( x,y\right) \mathbf{/F}^{r}\left( r,\theta \right) $ & $\mathbf{F}%
^{y}\left( x,y\right) \mathbf{/F}^{\theta }\left( r,\theta \right) $ \\
\hline
$\frac{d}{2}t\partial _{t}+\partial _{x}+b\partial _{y}$ & $f\left(
y-bx\right) e^{-dx}$ & $g\left( y-bx\right) e^{-dx}$ \\
$\frac{d}{2}t\partial _{t}+\left( a+x\right) \partial _{x}+\left( b+y\right)
\partial _{y}$ & $f\left( \frac{b+y}{a+x}\right) \left( a+x\right) ^{\left(
1-d\right) }$ & $g\left( \frac{b+y}{a+x}\right) \left( a+x\right) ^{\left(
1-d\right) }$ \\
$\frac{d}{2}t\partial _{t}+\left( a+x\right) \partial _{x}+\left(
b+hy\right) \partial _{y}$ & $f\left( \left( \frac{b}{h}+y\right) \left(
a+bx\right) ^{-\frac{h}{b}}\right) \left( a+bx\right) ^{1-\frac{d}{b}}$ & $%
g\left( \left( \frac{b}{h}+y\right) \left( a+bx\right) ^{-\frac{h}{b}%
}\right) \left( a+bx\right) ^{\frac{h-d}{b}}$ \\
$\frac{d}{2}t\partial _{t}+\left( x+y\right) \partial _{x}+\left( x+y\right)
\partial _{y}$ & $\left(
\begin{array}{c}
f\left( y-x\right) x+ \\
+g\left( y-x\right)%
\end{array}%
\right) \left( y+x\right) ^{-\frac{d}{2}}$ & $\left(
\begin{array}{c}
f\left( y-x\right) y+ \\
-g\left( y-x\right)%
\end{array}%
\right) \left( y+x\right) ^{-\frac{d}{2}}$ \\
$\frac{d}{2}t\partial _{t}+\left( a^{2}x+ay\right) \partial _{x}+~~~~~~~~\ ~$
& $a\left( ax+y\right) ^{-\frac{d}{1+a^{2}}}\times $ & $a^{2}\left(
ax+y\right) ^{-\frac{d}{1+a^{2}}}\times \text{\ }$ \\
$+\left( ax+y\right) \partial _{y}$ & $\times \left(
\begin{array}{c}
xa^{2}f\left( y-\frac{x}{a}\right) + \\
+g\left( y-\frac{x}{a}\right)%
\end{array}%
\right) $ & $\times \left(
\begin{array}{c}
af\left( y-\frac{x}{a}\right) + \\
-g\left( y-\frac{x}{a}\right)%
\end{array}%
\right) $ \\
$\frac{d}{2}t\partial _{t}+\left( -ay+x\right) \partial _{x}+\left(
ax+y\right) \partial _{y}$ & $f\left( \theta -a\ln r\right) r^{1-d}$ & $%
g\left( \theta -a\ln r\right) r^{1-d}$ \\ \hline\hline
\end{tabular}%
\label{2dN3}%
%TCIMACRO{\TeXButton{E}{\end{table}}}%
%BeginExpansion
\end{table}%
%EndExpansion

%TCIMACRO{\TeXButton{B}{\begin{table}[H] \centering}}%
%BeginExpansion
\begin{table}[H] \centering%
%EndExpansion
\caption{Two dimensional Newtonian systems admiting Lie symmetries (4/4)}%
\begin{tabular}{ccc}
\hline\hline
\textbf{Lie }$\downarrow $\textbf{\ }$F^{i}\rightarrow $ & $\mathbf{F}%
^{x}\left( x,y\right) $ & $\mathbf{F}^{y}\left( x,y\right) $ \\ \hline
$T\left( t\right) \left( \partial _{x}+b\partial _{y}\right) $ & $%
-mx+f\left( y-bx\right) $ & $-mbx+g\left( y-bx\right) $ \\
$2\int T\left( t\right) dt~\partial _{t}+T\left( t\right) \left[ \left(
a+x\right) \partial _{x}+\left( b+y\right) \partial _{y}\right] $ & $-\frac{m%
}{4}\left( a+x\right) +\left( a+x\right) ^{-3}f\left( \frac{b+y}{a+x}\right)
$ & $-\frac{m}{4}\left( b+y\right) +g\left( \frac{b+y}{a+x}\right) \left(
a+x\right) ^{-3}$ \\ \hline\hline
\end{tabular}%
\label{2dN4}%
%TCIMACRO{\TeXButton{E}{\end{table}}}%
%BeginExpansion
\end{table}%
%EndExpansion

%TCIMACRO{\TeXButton{B}{\begin{table}[H] \centering}}%
%BeginExpansion
\begin{table}[H] \centering%
%EndExpansion
\caption{Two dimensional conservative Newtonian systems admiting Lie symmetries
(2/3)}%
\begin{tabular}{cc}
\hline\hline
\textbf{Lie }$\downarrow $\textbf{\ }$V\rightarrow $ & $\mathbf{T}_{,tt}%
\mathbf{=mT}$ \\ \hline
$T\left( t\right) \left( a\partial _{x}+b\partial _{y}\right) $ & $-\frac{m}{%
2}(x^{2}+y^{2})+c_{1}x+f\left( ay-bx\right) $ \\
$2\int T\left( t\right) dt~\partial _{t}+T\left( t\right) \left[ \left(
a+x\right) \partial _{x}+\left( b+y\right) \partial _{y}\right] $ & $-\frac{m%
}{8}\left( x^{2}+y^{2}+2ax+2by\right) +\left( a+x\right) ^{-2}f\left( \frac{%
b+y}{a+x}\right) $ \\ \hline\hline
\end{tabular}%
\label{2dCN3}%
%TCIMACRO{\TeXButton{E}{\end{table}}}%
%BeginExpansion
\end{table}%
%EndExpansion

%TCIMACRO{\TeXButton{B}{\begin{table}[H] \centering}}%
%BeginExpansion
\begin{table}[H] \centering%
%EndExpansion
\caption{Two dimensional conservative Newtonian systems admiting Lie symmetries
(3/3)}%
\begin{tabular}{ccc}
\hline\hline
\textbf{Lie }$\downarrow $\textbf{\ }$~~V\rightarrow $ & $\mathbf{d=0}$ & $%
\mathbf{d\neq 0}$ \\ \hline
$\frac{d}{2}t\partial _{t}+a\partial _{x}+b\partial _{y}$ & $f\left(
ay-bx\right) $ & $\left[ c_{1}+f\left( ay-bx\right) \right] e^{-d\frac{x}{a}%
} $ \\
$\frac{d}{2}t\partial _{t}+\left( a+x\right) \partial _{x}+\left( b+y\right)
\partial _{y}$ & $f\left( \frac{b+y}{a+x}\right) \left( a+x\right) ^{2}$ & $%
f\left( \frac{b+y}{a+x}\right) \left( a+x\right) ^{\left( 2-d\right) }$ \\
$\frac{d}{2}t\partial _{t}+\left( x+y\right) \partial _{x}+\left( x+y\right)
\partial _{y}$ & $f\left( y-x\right) +c_{1}\left( x+y\right) ^{2}$ & $\left(
x+y\right) ^{\left( 2-\frac{d}{2}\right) ~~}$ \\
$\frac{d}{2}t\partial _{t}+\left( a^{2}x+ay\right) \partial _{x}+\left(
ax+y\right) \partial _{y}$ & $c_{1}\left( x^{2}+y^{2}\right) ~+f\left(
ay-x\right) $ & $c_{1}\left( ax+y\right) ^{\left( 2-\frac{d}{1+a^{2}}\right)
}\text{\ }$ \\
$\frac{d}{2}t\partial _{t}+\left( -ay+x\right) \partial _{x}+\left(
ax+y\right) \partial _{y}$ & $f\left( \theta -a\ln r\right) r^{2}$ & $%
f\left( \theta -a\ln r\right) r^{2-d}$ \\ \hline
&  &  \\ \hline\hline
\textbf{Lie }$\downarrow $\textbf{\ }$~~V\rightarrow $ & $\mathbf{d=2}$ & $%
\mathbf{d=1}$ \\ \hline
$\frac{d}{2}t\partial _{t}+\partial _{x}+b\partial _{y}$ & $\left[
c_{1}+f\left( y-bx\right) \right] e^{-2\frac{x}{a}}$ & $\left[ c_{1}+f\left(
y-bx\right) \right] e^{-\frac{x}{a}}$ \\
$\frac{d}{2}t\partial _{t}+\left( a+x\right) \partial _{x}+\left( b+y\right)
\partial _{y}$ & $f\left( \frac{b+y}{a+x}\right) +c_{1}\ln \left( a+x\right)
$ & $f\left( \frac{b+y}{a+x}\right) \left( a+x\right) $ \\
$\frac{d}{2}t\partial _{t}+\left( x+y\right) \partial _{x}+\left( x+y\right)
\partial _{y}$ & $\left( x+y\right) $ & $\left( x+y\right) ^{\frac{3}{2}}$
\\
$\frac{d}{2}t\partial _{t}+\left( a^{2}x+ay\right) \partial _{x}+\left(
ax+y\right) \partial _{y}$ & $\ln \left( ax+y\right) \text{\ ~}\left(
d=2\left( 1+a^{2}\right) \right) $ & $\left( ax+y\right) ^{\left( \frac{%
1+2a^{2}}{1+a^{2}}\right) }\text{\ }$ \\
$\frac{d}{2}t\partial _{t}+\left( -ay+x\right) \partial _{x}+\left(
ax+y\right) \partial _{y}$ & $c_{1}\ln r+f\left( \theta -a\ln r\right) $ & $%
c_{1}r+f\left( \theta -a\ln r\right) r$ \\ \hline
\end{tabular}%
\label{2dCN4}%
%TCIMACRO{\TeXButton{E}{\end{table}}}%
%BeginExpansion
\end{table}%
%EndExpansion

%TCIMACRO{\TeXButton{B}{\begin{table}[ht] \centering}}%
%BeginExpansion
\begin{table}[ht] \centering%
%EndExpansion
\caption{Three dimensional conservative Newtonian systems admitingt Noether
symmetries (3/5)}%
\begin{tabular}{cc}
\hline\hline
\textbf{Noether Symmetry} & $\mathbf{V(x,y,z)}$ \\ \hline
$a\partial _{\mu }+b\partial _{\nu }$ & $-\frac{p}{a}x_{\mu }+f\left( x^{\nu
}-\frac{b}{a}x^{\mu },x^{\sigma }\right) $ \\
$a\partial _{\mu }+b\left( x_{\nu }\partial _{\mu }-x_{\mu }\partial _{\nu
}\right) $ & $-\frac{p}{\left\vert b\right\vert }\arctan \left( \frac{%
\left\vert b\right\vert x_{\mu }}{\left\vert \left( a+bx_{\nu }\right)
\right\vert }\right) +f\left( \frac{1}{2}r_{\left( \mu \nu \right) }+\frac{a%
}{b}x^{\nu },x^{\sigma }\right) $ \\
$a\partial _{\mu }+b\left( x_{\sigma }\partial _{\nu }-x_{\nu }\partial
_{\sigma }\right) $ & $-\frac{p}{\left\vert b\right\vert }\theta _{\left(
\nu \sigma \right) }+f\left( r_{\left( \nu \sigma \right) },x^{\mu }-\frac{a%
}{b}\theta _{\left( \nu \sigma \right) }\right) $ \\
$a\left( x_{\nu }\partial _{\mu }-x_{\mu }\partial _{\nu }\right) +$ & $%
\frac{p}{a}\arctan \left( \frac{ax_{\nu }+bx_{\sigma }}{x_{\mu }\sqrt{%
a^{2}+b^{2}}}\right) +$ \\
$~~~+b\left( x_{\sigma }\partial _{\mu }-x_{\mu }\partial _{\sigma }\right) $
& \thinspace $+\frac{1}{a}f\left( x_{\sigma }-\frac{a}{b}x_{\nu },x_{\nu
}^{2}\left( 1-\left( \frac{a}{b}\right) ^{2}+\frac{2b}{a}\frac{x_{\sigma }}{%
x_{\nu }}\right) +x_{\mu }^{2}\right) $ \\
$2bt\partial _{t}+a\partial _{\mu }+bR\partial _{R}$ & $-p\frac{x_{\mu
}\left( 2a+bx_{\mu }\right) }{2\left( a+bx_{\mu }^{2}\right) }+\frac{1}{%
\left( a+bx_{\mu }^{2}\right) }f\left( \frac{x_{\nu }}{a+bx_{\mu }},\frac{%
x_{\sigma }}{a+bx_{\mu }}\right) $ \\
$2bt\partial _{t}+a\theta _{\left( \mu \nu \right) }\partial _{\theta
_{\left( \mu \nu \right) }}+bR\partial _{R}$ & $\frac{1}{r_{\left( \mu \nu
\right) }^{2}}f\left( \theta _{\left( \mu \nu \right) }-\frac{a}{b}\ln
r_{\left( \mu \nu \right) },\frac{x_{\sigma }}{r_{\left( \mu \nu \right) }}%
\right) $ \\ \hline\hline
\end{tabular}%
\label{3DNNS3}%
%TCIMACRO{\TeXButton{E}{\end{table}}}%
%BeginExpansion
\end{table}%
%EndExpansion

%TCIMACRO{\TeXButton{B}{\begin{table}[H] \centering}}%
%BeginExpansion
\begin{table}[H] \centering%
%EndExpansion
\caption{Three dimensional conservative Newtonian systems admiting Noether
symmetries (4/5)}%
\begin{tabular}{cc}
\hline\hline
\textbf{Noether Symmetry} & $\mathbf{V(x,y,z)}$ \\ \hline
$a\partial _{\mu }+b\partial _{\nu }+c\partial _{\sigma }$ & $-\frac{p}{a}%
x_{\mu }+f\left( x^{\nu }-\frac{b}{a}x^{\mu },x^{\sigma }-\frac{c}{a}x^{\mu
}\right) $ \\
$a\partial _{\mu }+b\partial _{\nu }+c\left( x_{\nu }\partial _{\mu }-x_{\mu
}\partial _{\nu }\right) $ & $\frac{p}{\left\vert c\right\vert }\arctan
\left( \frac{\left( b-cx_{\mu }\right) }{\left\vert \left( a+cx_{\nu
}\right) \right\vert }\right) $ \\
& ~$+f\left( \frac{c}{2}r_{\left( \mu \nu \right) }-bx_{\mu }+ax_{\nu
},x_{\sigma }\right) $ \\
$a\partial _{\mu }+b\partial _{\nu }+c\left( x_{\sigma }\partial _{\mu
}-x_{\mu }\partial _{\sigma }\right) $ & $-\frac{p}{\left\vert c\right\vert }%
\arctan \left( \frac{\left\vert c\right\vert x_{\mu }}{\left\vert
a+cx_{\sigma }\right\vert }\right) $ \\
& $~+f\left( x_{\nu }-\frac{1}{\left\vert c\right\vert }\arctan \left( \frac{%
\left\vert c\right\vert x_{\mu }}{\left\vert a+cx_{\sigma }\right\vert }%
\right) ,\frac{1}{2}r_{\left( \mu \sigma \right) }-\frac{a}{c}x_{\sigma
}\right) $ \\
$a\partial _{\mu }+b\left( x_{\nu }\partial _{\mu }-x_{\mu }\partial _{\nu
}\right) +$ & $\frac{p}{\sqrt{b^{2}+c^{2}}}\arctan \left( \frac{\left(
ab+b^{2}x_{\nu }+bcx_{\sigma }\right) }{\left\vert bx_{\mu }\right\vert
\sqrt{b^{2}+c^{2}}}\right) +$ \\
$~~+c\left( x_{\sigma }\partial _{\mu }-x_{\mu }\partial _{\sigma }\right) $
& $+f\left( x_{\mu }^{2}+x_{\nu }^{2}\left( 1-\frac{c^{2}}{b^{2}}\right)
+\left( \frac{2a}{b}+\frac{2c}{b}x_{\sigma }\right) x_{\nu },x_{\sigma }-%
\frac{c}{b}x_{\nu }\right) $ \\
$so\left( 3\right) $ linear combination & $p\arctan \left( \lambda \left(
\theta ,\phi \right) \right) +$ \\
& $+~F\left( R,b\tan \theta \sin \phi +c\cos \phi -aM_{1}\right) $ \\
$2ct\partial _{t}+a\partial _{\mu }+b\theta _{\left( \nu \sigma \right)
}\partial _{\theta _{\left( \nu \sigma \right) }}+cR\partial _{R}$ & $\frac{1%
}{r_{\left( \nu \sigma \right) }^{2}}f\left( \theta _{\left( \nu \sigma
\right) }-\frac{b}{c}\ln r_{\left( \nu \sigma \right) },\frac{a+cx_{\mu }}{%
cr_{\left( \nu \sigma \right) }}\right) $ \\
$2lt\partial _{t}+\left( a\partial _{\mu }+b\partial _{\nu }+c\partial
_{\sigma }+lR\partial _{R}\right) $ & $-\frac{px\left( 2a+cx_{\mu }\right) }{%
2\left( a+cx_{\mu }\right) ^{2}}+\frac{1}{\left( a+lx_{\mu }\right) ^{2}}%
f\left( \frac{b+lx_{\nu }}{l\left( a+lx_{\mu }\right) },\frac{c+lx_{\sigma }%
}{l\left( a+lx_{\mu }\right) }\right) $ \\ \hline\hline
\end{tabular}%
\label{3DNNS4}%
%TCIMACRO{\TeXButton{E}{\end{table}}}%
%BeginExpansion
\end{table}%
%EndExpansion

%TCIMACRO{\TeXButton{B}{\begin{table}[H] \centering}}%
%BeginExpansion
\begin{table}[H] \centering%
%EndExpansion
\caption{Three dimensional conservative Newtonian systems admiting Noether
symmetries (5/5)}%
\begin{tabular}{cc}
\hline\hline
\textbf{Noether Symmetry} & $\mathbf{V(x,y,z)~/~T}_{,tt}\mathbf{=mT}$ \\
\hline
$T\left( t\right) \left( a\partial _{\mu }+b\partial _{\nu }+c\partial
_{\sigma }\right) $ & $-\frac{m}{2a}R^{2}+f\left( x^{\nu }-\frac{b}{a}x^{\mu
},x^{\sigma }-\frac{c}{a}x^{\mu }\right) $ \\
$\left( 2l\int T\left( t\right) dt\right) \partial _{t}+$ & $\frac{1}{\left(
a+lx_{\mu }\right) ^{2}}f\left( \frac{b+lx_{\nu }}{l\left( a+lx_{\mu
}\right) },\frac{c+lx_{\sigma }}{l\left( a+lx_{\mu }\right) }\right) +$ \\
$~~+T\left( t\right) \left( a\partial _{\mu }+b\partial _{\nu }+c\partial
_{\sigma }+lR\partial _{R}\right) $ & $~-\frac{m}{8}\left( R^{2}+\frac{2a}{l}%
x_{\mu }+\frac{2c}{l}x_{\nu }+\frac{2b}{l}x_{\sigma }\right) $ \\ \hline
&  \\
\multicolumn{2}{l}{Where$~~\lambda \left( \phi ,\theta \right) =\left(
\left( a^{2}+b^{2}\right) \cos \phi -bc\tan \theta \sin \phi +cM_{1}\right)
\times $} \\
\multicolumn{2}{l}{$~~~~~~~~~~~~~~~~~~~~\times \left\{ M_{2}\left[
-b^{2}M_{1}^{2}-2b\tan \theta \sin \phi M_{1}-a^{2}\sin ^{2}\phi \tan
^{2}\theta \right] \right\} ^{-\frac{1}{2}}$} \\
\multicolumn{2}{l}{~~~$~~~~~M_{1}=\frac{1}{\cos \theta }\sqrt{\sin ^{2}\phi
\left( 2\cos ^{2}\theta -1\right) }~,M_{2}=\sqrt{a^{2}+b^{2}+c^{2}}$} \\
\hline\hline
\end{tabular}%
\label{3DNNS5}%
%TCIMACRO{\TeXButton{E}{\end{table}}}%
%BeginExpansion
\end{table}%
%EndExpansion

%TCIMACRO{\TeXButton{EndAppendixA}{\end{subappendices}}}%
%BeginExpansion
\end{subappendices}%
%EndExpansion

\chapter{The autonomous Kepler Ermakov system in a Riemannian space\label%
{chapter4}}

\section{Introduction}

The Ermakov system has its roots in the study of the one dimensional time
dependent harmonic oscillator
\begin{equation}
\ddot{x}+\omega ^{2}(t)x=0.  \label{AEP.00.1}
\end{equation}%
Ermakov \cite{Ermakov} obtained a first integral $J$ of this equation by
introducing the auxiliary equation
\begin{equation}
\ddot{\rho}+\omega ^{2}(t)\rho =\rho ^{-3}  \label{AEP.00.2}
\end{equation}%
eliminating the $\omega ^{2}(t)$ term and multiplying with the integrating
factor $\rho \dot{x}-\dot{\rho}x$%
\begin{equation}
J=\frac{1}{2}\left[ (\rho \dot{x}-\dot{\rho}x)^{2}+(x/\rho )^{2}\right] .
\label{AEP.00.3}
\end{equation}

The Ermakov system was rediscovered nearly a century after its introduction
\cite{Lewis1967} and subsequently was generalized beyond the harmonic
oscillator to a two dimensional dynamical system which admits a first
integral \cite{RayReid}. In a series of papers the Lie, the Noether and the
dynamical symmetries of this generalized system have been studied. A short
review of these studies \ and a detailed list of relevant references can be
found in \cite{LeachAndriopoulos}. Earlier reviews of the Ermakov system and
its numerous applications in divertive areas of Physics can be found in \cite%
{Rogersetall,Schiefetall}.

The general Ermakov system does not admit Lie point symmetries. The form of
the most general Ermakov system which admits Lie point symmetries has been
determined in \cite{GoedertHaas1998} and it is called the Kepler Ermakov
system \cite{Athorne1991,Leach1991}. It is well known that these Lie point
symmetries are a representation of the $sl(2,R)$ algebra.

In an attempt to generalize the Kepler Ermakov system to higher dimensions,
Leach \cite{Leach1991} used a transformation to remove the time dependent
frequency term and then demanded that the autonomous `generalized' Kepler
Ermakov system will posses two properties: (a)\ a first integral, the
Ermakov invariant and (b) $sl(2,R)$ invariance wrt to Lie symmetries. It has
been shown, that the invariance group of the Ermakov invariant is reacher
than $sl(2,R)$\ \cite{GovinderLeach1994}. The purpose of the present work is
to use Leach's proposal and generalize the autonomous Kepler Ermakov system
in two directions: (a) to higher dimensions using the $sl(2,R)$ invariance
with respect to Noether symmetries (provided the system is Hamiltonian) and
(b) in a Riemannian space which admits a gradient homothetic vector (HV).

The generalization of the autonomous Kepler Ermakov system to three
dimensions using Lie symmetries has been done in \cite{Leach1991}. In the
following sections, we use the results of Chapter \ref{chapter3} to
generalize the subset of autonomous Hamiltonian Kepler Ermakov systems to
three dimensions via Noether symmetries. We show, that there is a family of
three dimensional autonomous Hamiltonian Kepler Ermakov systems parametrized
by an arbitrary function $f$ which admits the elements of $sl(2,R)$ \ as
Noether point symmetries.\ Each member of this family admits two first
integrals, the Hamiltonian and the Ermakov invariant.

We use this result in order to determine all three dimensional Hamiltonian
Kepler Ermakov systems which are Liouville integrable via Noether point
symmetries. To do this we need to determine all members of the family, that
is, those functions $f$ for which the corresponding system admits an
additional Noether symmetry.

The results of Chapter \ref{chapter3} indicate that there are two cases to
be considered, i.e. Noether point symmetries resulting from linear
combinations of (a) translations and (b) rotations (elements of the $%
so\left( 3\right) $ algebra). In each case we\ determine the functions $f$
and the required extra time independent first integral.

The above scenario can be generalized to an $n$ dimensional Euclidian space
as Leach indicates in \cite{Leach1991}, however at the cost of major
complexity and number of cases to be considered. Indeed as it can be seen by
the results of Chapter \ref{chapter3}, the situation is complex enough even
for the three dimensional case.

We continue with the generalization of the Kepler Ermakov system in a
different and more drastic direction. We note that the Ermakov systems
considered so far are based on the Euclidian space, therefore we may call
them Euclidian Ermakov systems. Furthermore the $sl(2,R)$ symmetry algebra
of the autonomous Kepler Ermakov system is generated by the trivial symmetry
$\partial _{t}$ and the gradient HV\ of the Euclidian two dimensional space $%
E^{2}.$ Using this observation we generalize the autonomous Kepler Ermakov
system (not necessarily Hamiltonian) in an $n~$dimensional Riemannian space
which admits a gradient HV using either Lie or Noether point symmetries. The
new dynamical system we call the Riemannian Kepler Ermakov system. This
generalization makes possible the application of the autonomous Kepler
Ermakov system in General Relativity and in particular in Cosmology.

Concerning General Relativity, we determine the four dimensional autonomous
Riemannian Kepler Ermakov system and the associated Riemannian Ermakov
invariant in the spatially flat Freedman - Robertson - Walker (FRW)\
spacetime and we use previous results to calculate the extra Noether point
symmetries. The applications to cosmology concern two models for dark energy
on a locally rotational symmetric (LRS) space time. The first model involves
a scalar field with an exponential potential minimally interacting with a
perfect fluid with a stiff equation of state. The second cosmological model
is the $f(R)$ modified gravity model of $\Lambda _{bc}CDM$. It is shown,
that, in both models the gravitational field equations define an autonomous
Riemannian Kepler Ermakov system which is integrable via Noether integrals.

In section \ref{ErmakovSys}, we review the main features of the two
dimensional autonomous Euclidian Kepler Ermakov system. In section \ref%
{Generalizing the Kepler Ermakov system}, we discuss the general scheme of
generalization of the two dimensional autonomous Euclidian Kepler Ermakov
system to higher dimensions and to a Riemannian space which admits a
gradient HV. In section \ref{The 3d Euclidian Kepler Ermakov system}, we
consider the generalization to the 3D autonomous Euclidian Hamiltonian
Kepler Ermakov system by Noether point symmetries and determine all such
systems which are Liouville integrable. In section \ref{The Riemannian
Kepler Ermakov system}, we define the autonomous Riemannian Kepler Ermakov
system by the requirements that it will admit (a) a first integral (the
Ermakov invariant) and (b) posses $sl(2,R)$ invariance. In section \ \ref%
{The non conservative Riemannian Kepler Ermakov system}, we consider the
non-conservative autonomous Riemannian Kepler Ermakov system and derive the
Riemannian Ermakov invariant and in section \ref{The Hamiltonian Kepler
Ermakov system in a n-dimensional Riemannian space} we repeat the same for
the autonomous Hamiltonian Riemannian Kepler Ermakov system. In the
remaining sections we discuss the applications of the autonomous Hamiltonian
Riemannian Kepler Ermakov system in General Relativity and in Cosmology.

\section{The two dimensional autonomous Kepler Ermakov system}

\label{ErmakovSys}

In \cite{GoedertHaas1998} Hass and Goedert considered the most general 2d
Newtonian Ermakov system to be defined by the equations:%
\begin{align}
\ddot{x}+\omega ^{2}(t,x,y,\dot{x},\dot{y})x& =\frac{1}{yx^{2}}f\left( \frac{%
y}{x}\right)  \label{EqnEr.50} \\
\ddot{y}+\omega ^{2}(t,x,y,\dot{x},\dot{y})y& =\frac{1}{xy^{2}}g\left( \frac{%
y}{x}\right) .  \label{EqnEr.51}
\end{align}%
This system admits the Ermakov first integral
\begin{equation}
I=\frac{1}{2}(x\dot{y}-y\dot{x})^{2}+\int^{y/x}f\left( \tau \right) d\tau
+\int^{y/x}g\left( \tau \right) d\tau .  \label{EqnEr.52}
\end{equation}

If one considers the transformation:%
\begin{align*}
\Omega ^{2}& =\omega ^{2}-\frac{1}{xy^{3}}g\left( \frac{y}{x}\right) \\
F\left( \frac{y}{x}\right) & =f\left( \frac{y}{x}\right) -\frac{x^{2}}{y^{2}}%
g\left( \frac{y}{x}\right)
\end{align*}%
then equations (\ref{EqnEr.50})-(\ref{EqnEr.51}) take the form%
\begin{align}
\ddot{x}+\Omega ^{2}(x,y,\dot{x},\dot{y})x& =\frac{1}{x^{2}y}F\left( \frac{y%
}{x}\right)  \label{EqnEr.53} \\
\ddot{y}+\Omega ^{2}(x,y,\dot{x},\dot{y})y& =0.  \label{EqnEr.54}
\end{align}

Due to the second equation, except for special cases, the new function $%
\Omega $ is independent of $t$; it depends only on the dynamical variables $%
x,y$ and possibly on their derivative. The Ermakov first integral in the new
variables is:%
\begin{equation}
I=\frac{1}{2}(x\dot{y}-y\dot{x})^{2}+\int^{y/x}F\left( \lambda \right)
d\lambda .  \label{EqnEr.55}
\end{equation}%
The system of equations (\ref{EqnEr.53})-(\ref{EqnEr.54}) defines the most
general 2D Ermakov system and produces all its known forms for special
choices of the function $\Omega .$ For example, the weak Kepler Ermakov
system \cite{Leach1991} is defined by the equations \cite{Athorne1991}
\begin{eqnarray}
\ddot{x}+\omega ^{2}(t)x+\frac{x}{r^{3}}H\left( x,y\right) -\frac{1}{x^{3}}%
f\left( \frac{y}{x}\right) &=&0  \label{Kepler-Ermakov 01} \\
\ddot{y}+\omega ^{2}(t)y+\frac{y}{r^{3}}H\left( x,y\right) -\frac{1}{y^{3}}%
g\left( \frac{y}{x}\right) &=&0  \label{Kepler-Ermakov 02}
\end{eqnarray}%
where $H,f,g~$are arbitrary functions of their argument, \ $\Omega $ is of
the form%
\begin{equation}
\Omega ^{2}(x,y)=\omega ^{2}(t)+H\left( x,y\right) /r^{3}  \label{EqnEr.55d}
\end{equation}%
and the Ermakov first integral becomes
\begin{equation}
I=\frac{1}{2}\left( x\dot{y}-y\dot{x}\right) ^{2}+\int^{\frac{y}{x}}\left[
\lambda f\left( \lambda \right) -\lambda ^{-3}g\left( \lambda \right) \right]
d\lambda .  \label{KeplerErma.1}
\end{equation}%
The weak Kepler Ermakov system does not admit Lie point symmetries. However,
the property of having a first integral prevails. The system of equations (%
\ref{Kepler-Ermakov 01}), (\ref{Kepler-Ermakov 02}) admits the $sl\left(
2,R\right) ~$as Lie point symmetries \cite{LeachK} only for $H\left(
x,y\right) =-\mu ^{2}r^{3}+\frac{h\left( \frac{y}{x}\right) }{x}$~where $\mu
$ is either a real or a pure imaginary number. \ This is the Kepler Ermakov
system defined by the equations

\begin{eqnarray}
\ddot{x}+\left( \omega ^{2}(t)-\mu ^{2}\right) x+\frac{1}{r^{3}}h\left(
\frac{y}{x}\right) -\frac{1}{x^{3}}f\left( \frac{y}{x}\right) &=&0
\label{Kepler-Ermakov 1} \\
\ddot{y}+\left( \omega ^{2}(t)-\mu ^{2}\right) y+\frac{1}{r^{3}}\frac{y}{x}%
h\left( \frac{y}{x}\right) -\frac{1}{y^{3}}g\left( \frac{y}{x}\right) &=&0.
\label{Kepler-Ermakov 2}
\end{eqnarray}

It is well known (see \cite{LeachK}) that the oscillator term $\omega
^{2}(t)-\mu ^{2}$ in (\ref{Kepler-Ermakov 1})-(\ref{Kepler-Ermakov 2}) is
removed if one considers new variables $T,X,Y$ defined by the relations:%
\begin{equation}
T=\int \rho ^{-2}dt,X=\rho ^{-1}x~,Y=\rho ^{-1}y  \label{EqnEr.55fa}
\end{equation}%
where $\rho $ is any smooth solution of the time dependent oscillator
equation%
\begin{equation}
\ddot{\rho}+\left( \omega ^{2}(t)-\mu ^{2}\right) \rho =0.
\label{EqnEr.55fb}
\end{equation}

In \cite{LeachK} it is commented that "the effect of $\mu ^{2}$ is to shift
the time dependent frequency function". However this is true as long as $%
\omega (t)\neq 0.$ When $\ \omega (t)=0,$ one has the autonomous Kepler
Ermakov system whose Lie symmetries span the $sl(2,R)\ $algebra with
different representations for $\mu =0$ and $\mu \neq 0.$

Before we justify the need for the consideration of the two cases $\mu =0$
and $\mu \neq 0$, we note that by applying the transformation
\begin{equation}
~s=\int v^{-2}dT~,~\bar{x}=v^{-1}X~,~\bar{y}=v^{-1}Y  \label{EqnEr.55fc}
\end{equation}%
where $\nu $ satisfies the Ermakov Pinney equation
\begin{equation}
\frac{d^{2}v}{dT^{2}}+\frac{\mu ^{2}}{v^{3}}=0  \label{EqnEr.55fd}
\end{equation}%
to the transformed equations
\begin{eqnarray}
\frac{d^{2}X}{dT^{2}}+\frac{1}{R^{3}}h\left( \frac{Y}{X}\right) -\frac{1}{%
X^{3}}f\left( \frac{Y}{X}\right) &=&0  \label{EqnEr.55e} \\
\frac{d^{2}Y}{dT^{2}}+\frac{1}{R^{3}}\frac{Y}{X}h\left( \frac{Y}{X}\right) -%
\frac{1}{Y^{3}}g\left( \frac{Y}{X}\right) &=&0  \label{EqnEr.55ff}
\end{eqnarray}%
we retain the term $\mu ^{2}$ and obtain the autonomous Kepler Ermakov
system of \cite{MoyoL}
\begin{eqnarray}
\ddot{x}-\mu ^{2}x+\frac{1}{r^{3}}h\left( \frac{y}{x}\right) -\frac{1}{x^{3}}%
f\left( \frac{y}{x}\right) &=&0  \label{EqnEr.14} \\
\ddot{y}-\mu ^{2}y+\frac{1}{r^{3}}\frac{y}{x}h\left( \frac{y}{x}\right) -%
\frac{1}{y^{3}}g\left( \frac{y}{x}\right) &=&0.  \label{EqnEr.14a}
\end{eqnarray}

The above transformations show that the consideration of the autonomous
Kepler Ermakov system is not a real restriction.

We discuss now the need for the consideration of the cases $\mu =0$ and $\mu
\neq 0.$ In Section \ref{sectio3ap}, we have determined the Lie symmetries
of the autonomous 2d Kepler Ermakov system and we have found two cases. Case
I concerns the autonomous Kepler Ermakov system with $\mu =0$ and has the
Lie symmetry vectors

\begin{equation}
\mathbf{X}=\left( \bar{c}_{1}+\bar{c}_{2}2t+\bar{c}_{3}t^{2}\right) \partial
_{t}+\left( \bar{c}_{2}+\bar{c}_{3}t\right) r\partial _{r}\qquad (~\mu =0)
\label{Kepler-Ermakov 1b}
\end{equation}

The second case, Case II, concerns the same system with $\mu \neq 0$ and has
the Lie symmetry vectors%
\begin{equation}
\mathbf{X}=\left( c_{1}+c_{2}\frac{1}{\mu }e^{2\mu t}-c_{3}\frac{1}{\mu }%
e^{-2\mu t}\right) \partial _{t}+\left( c_{2}e^{2\mu t}+c_{3}e^{-2\mu
t}\right) ~r\partial _{r}~~~\ ~(~\mu \neq 0)  \label{Kepler-Ermakov 1a}
\end{equation}%
where~in both cases $r\partial _{R}=x\partial _{x}+y\partial _{y}~$is the
gradient HV of the 2D Euclidian metric. Each set of vectors in (\ref%
{Kepler-Ermakov 1a})-(\ref{Kepler-Ermakov 1b}) is a representation of the $%
sl(2,R)$ algebra and furthermore each set of vectors is constructed from the
vector $\partial _{t}$ and the gradient HV $r\partial _{r}$ of the Euclidian
two dimensional space $E^{2}$.

The essence of the difference\ between the two representations is best seen
in the corresponding first integrals. If a Kepler Ermakov system is
Hamiltonian then the Lie point symmetries are also Noether point symmetries
therefore in order to find these integrals we determine the Noether
invariants. The Noether symmetries of the Kepler Ermakov system have been
determined in Section \ref{sectio3ap}. For the convenience of the reader we
repeat the relevant material.

Equations (\ref{EqnEr.14}), (\ref{EqnEr.14a}) follow from the Lagrangian
\cite{LeachK}
\begin{equation}
L=\frac{1}{2}\left( \dot{r}^{2}+r^{2}\dot{\theta}^{2}\right) -\frac{\mu ^{2}%
}{2}r^{2}-\frac{C\left( \theta \right) }{2r^{2}}  \label{Kepler-Ermakov 3}
\end{equation}%
where $C(\theta )=c+\sec ^{2}\theta f(\tan \theta )+\csc ^{2}\theta g(\tan
\theta )$ provided the functions $f,g$ satisfy the constraint:%
\begin{equation}
\sin ^{2}\theta f^{\prime }\left( \tan \theta \right) +\cos ^{2}\theta
~g^{\prime }\left( \tan \theta \right) =0.  \label{Kepler-Ermakov 4}
\end{equation}%
The Ermakov invariant in this case is \cite{LeachK}.
\begin{equation}
J=r^{4}\dot{\theta}^{2}+2C\left( \theta \right) .  \label{EqnEr.18a}
\end{equation}

Because the system is autonomous the first Noether integral is the
Hamiltonian
\begin{equation}
E=\frac{1}{2}\left( \dot{r}^{2}+r^{2}\dot{\theta}^{2}\right) +\frac{1}{2}\mu
^{2}r^{2}+\frac{1}{r^{2}}F\left( \theta \right)
\end{equation}%
In addition to the Hamiltonian, there exist two additional time dependent
Noether integrals as follows:\newline
$\mu =0$%
\begin{eqnarray}
I_{1} &=&2tE-r\dot{r}~ \\
I_{2} &=&t^{2}E-tr\dot{r}+\frac{1}{2}r^{2}  \label{Kepler-Ermakov 6}
\end{eqnarray}%
$\mu \neq 0$
\begin{eqnarray}
I_{1}^{\prime } &=&\left( \frac{1}{\mu }E-r\dot{r}+\mu r^{2}\right) e^{2\mu
t}  \label{Kepler-Ermakov 7} \\
I_{2}^{\prime } &=&\left( \frac{1}{\mu }E+r\dot{r}+\mu r^{2}\right) e^{-2\mu
t}.  \label{Kepler-Ermakov 8}
\end{eqnarray}

We note that the Noether integrals corresponding to the representation (\ref%
{Kepler-Ermakov 1b}) are linear in $t,$ whereas the ones corresponding to
the representation (\ref{Kepler-Ermakov 1a}) are exponential. Therefore the
consideration of the cases $\mu =0$ and $\mu \neq 0$ is not spurious
otherwise we loose important information. This latter fact is best seen in
the applications of Noether symmetries to field theories where the main core
of the theory is the Lagrangian. In these cases the potential is given and,
as it has been shown in Chapter \ref{chapter3}, a given potential admits
certain Noether symmetries only; therefore one has to consider all possible
cases. We shall come to this situation in section \ref{The Riemannian Kepler
Ermakov system in cosmology} where it will be found that the potential
selects the representation (\ref{Kepler-Ermakov 1a}).

To complete this section, we mention that for a Hamiltonian Kepler Ermakov
system the Ermakov invariant (\ref{EqnEr.18a}) is constructed \cite{MoyoL}
from the Hamiltonian and the Noether invariants (\ref{Kepler-Ermakov 7}),(%
\ref{Kepler-Ermakov 8}) as follows:
\begin{equation*}
J=E^{2}-~I_{1}^{\prime }I_{2}^{\prime }.
\end{equation*}

Finally in \cite{MoyoL}, it is shown that the Ermakov invariant is generated
by a dynamical Noether symmetry of the Lagrangian (\ref{Kepler-Ermakov 3}),
a result which is also confirmed in \cite{HaasGoedert2001}.

\section{Generalizing the autonomous Kepler Ermakov system}

\label{Generalizing the Kepler Ermakov system}

We consider the generalization of the two dimensional autonomous Kepler
Ermakov system \cite%
{Leach1991,GoedertHaas1998,Reid2,AthorneC,GovinderL,GovinderL2} using a
geometric point of view. From the results presented so far we have the
following:

(i) Equations (\ref{EqnEr.50})-(\ref{EqnEr.51}) which define the Ermakov
system employ coordinates in the Euclidian two dimensional space, therefore
the system is the \emph{Euclidian} Ermakov system.

(ii) The autonomous 2D Euclidian Kepler Ermakov system is defined by
equations (\ref{EqnEr.14}) and (\ref{EqnEr.14a})

(iii) The Lie symmetries of the Kepler Ermakov system span the $sl(2,R)$
algebra. These symmetries are constructed from the vector $\partial_t$ and
the gradient HV of the space $E^{2}$.

(iv) For the autonomous Hamiltonian Kepler Ermakov system the Lie symmetries
reduce to Noether point symmetries and the Ermakov invariant follows from a
combination of the resulting three Noether integrals, two of which are time
dependent. Furthermore, the Ermakov invariant is the Noether integral of a
dynamical Noether symmetry.

The above observations imply that we may generalize the Kepler Ermakov
system in two directions:

a. Increase the number of dimensions by defining the $n~$dimensional
Euclidian Kepler Ermakov system and/or

b. Generalize the background Euclidian space to be a Riemannian space and
obtain the Riemannian Kepler Ermakov system.

Concerning the defining characteristics of the Kepler Ermakov system we
distinguish three different properties of reduced generality: The property
of having a first integral; the property of admitting Lie/Noether point
symmetries, the $sl(2,R)$ invariance and the property of being Hamiltonian
and admitting $sl(2,R)$ invariance via Noether point symmetries.

Following Leach \cite{Leach1991} we generalize the autonomous Kepler Ermakov
system to higher dimensions by the requirement: The generalized autonomous
(Euclidian) Kepler Ermakov system admits the $sl(2,R)$\ algebra as a Lie
symmetry algebra.

\section{The three dimensional autonomous Euclidian Kepler Ermakov system}

\label{The 3d Euclidian Kepler Ermakov system}

The generalization of the autonomous Euclidian Kepler Ermakov system using $%
sl(2,R)$ invariance of Lie symmetries has been done in \cite%
{Leach1991,GovinderL,GovinderL2}. In this section, using of the results of
Chapter \ref{chapter3}, we give the generalization of the autonomous
Euclidian Hamiltonian Kepler Ermakov system to three dimensions by demanding
$sl(2,R)$ invariance with respect to Noether point symmetries. The reason
for attempting this generalization is that it leads to the potentials for
which the corresponding extended systems are Liouville integrable.\textbf{\ }%
Furthermore indicates the path to the $n$ dimensional Riemannian Kepler
Ermakov system.

Depending on the value $\mu \neq 0$ or $\mu =0,$ we consider the three
dimensional Hamiltonian Kepler Ermakov systems of type I and type II.

\subsection{The 3D autonomous Hamiltonian Kepler Ermakov system of type I $(%
\protect\mu \neq 0)$}

For $\mu \neq 0,$ the admitted Noether symmetries are required to be (see (%
\ref{Kepler-Ermakov 1a}))%
\begin{equation}
X^{1}=\partial _{t},~X_{\pm }=\frac{1}{\mu }e^{\pm 2\mu t}\partial _{t}\pm
e^{\pm 2\mu t}R\partial _{R}.  \label{AEP.05.0a}
\end{equation}%
From Table \ref{3DNNS2} and $T\left( t\right) =\frac{1}{\mu }e^{\pm 2\mu t}$
of section \ref{Noether point symmetries1}, we find that for these vectors
the potential is
\begin{equation*}
V\left( R,\phi ,\theta \right) =-\frac{~\mu ^{2}}{2}R^{2}+\frac{1}{R^{2}}%
f\left( \theta ,\phi \right)
\end{equation*}%
hence, the Lagrangian is%
\begin{equation}
L=\frac{1}{2}\left( \dot{R}^{2}+R^{2}\dot{\phi}^{2}+R^{2}\sin ^{2}\phi ~\dot{%
\theta}^{2}\right) +\frac{~\mu ^{2}}{2}R^{2}-\frac{1}{R^{2}}f\left( \theta
,\phi \right) .  \label{AEP.00}
\end{equation}

The equations of motion, that is, the equations defining the generalized
dynamical system are
\begin{eqnarray}
\ddot{R}-R\dot{\phi}^{2}-R\sin ^{2}\phi ~\dot{\theta}^{2}-\mu ^{2}R-\frac{2}{%
R^{3}}f &=&0  \label{AEP.01a} \\
\ddot{\phi}+\frac{2}{R}\dot{R}\dot{\phi}-\sin \phi \cos \phi ~\dot{\theta}%
^{2}+\frac{1}{R^{4}}f_{,\phi } &=&0  \label{AEP.01b} \\
\ddot{\theta}+\frac{2}{R}\dot{R}\dot{\theta}+\cot \phi ~\dot{\theta}\dot{\phi%
}+\frac{1}{R^{4}\sin ^{2}\phi }f_{,\theta } &=&0.  \label{AEP.01c}
\end{eqnarray}%
The Noether integrals corresponding to the Noether vectors are%
\begin{eqnarray}
E &=&\frac{1}{2}\left( \dot{R}^{2}+R^{2}\dot{\phi}^{2}+R^{2}\sin ^{2}\phi ~%
\dot{\theta}^{2}\right) -\frac{~\mu ^{2}}{2}R^{2}+\frac{1}{R^{2}}f\left(
\theta ,\phi \right)  \label{AEP.02} \\
I_{+} &=&\frac{1}{\mu }e^{2\mu t}E-e^{2~\mu t}R\dot{R}+\mu e^{2\mu t}R^{2}
\label{AEP.03} \\
I_{-} &=&\frac{1}{\mu }e^{-2\mu t}E+e^{-2\mu t}R\dot{R}+\mu e^{-2\mu t}R^{2}
\label{AEP.04}
\end{eqnarray}%
where $E$ is the Hamiltonian. The Noether integrals $I_{\pm }$ are time
dependent. Following \cite{MoyoL} we define the time independent combined
first integral%
\begin{equation}
J=E^{2}-~I_{+}I_{-}=R^{4}\dot{\phi}^{2}+R^{4}\sin ^{2}\phi ~\dot{\theta}%
^{2}+2f\left( \theta ,\phi \right) .  \label{AEP.05}
\end{equation}%
Using (\ref{AEP.05}) the equation of motion (\ref{AEP.01a})\ becomes
\begin{equation}
\ddot{R}-\mu ^{2}R=\frac{J}{R^{3}}  \label{AEP.05b}
\end{equation}%
which is the autonomous Ermakov- Pinney equation \cite{Pinney}. Therefore $J$
is the Ermakov invariant \cite{Leach1991}.

An alternative way to construct the Ermakov invariant (\ref{AEP.05}) is to
use dynamical Noether symmetries \cite{Kalotas}. Indeed one can show that
the Lagrangian (\ref{AEP.00}) admits the dynamical Noether symmetry $%
X_{D}=K_{j}^{i}\dot{x}^{j}\partial _{i}$ where $K_{ij}$ is a Killing tensor
of the second rank whose non-vanishing components are $K_{\phi \phi
}=R^{4}~,~K_{\theta \theta }=R^{4}\sin ^{2}\phi .~$The dynamical Noether
symmetry vector is~$X_{D}=R^{2}\left( \dot{\phi}\partial _{\phi }+\dot{\theta%
}\partial _{\theta }\right) ~$with gauge function $2f\left( \theta ,\phi
\right) $.

\subsection{The 3D autonomous Hamiltonian Kepler Ermakov system of type II $(%
\protect\mu =0)$}

For $\mu =0,$ the Noether point symmetries are required to be (see (\ref%
{Kepler-Ermakov 1b})) \cite{Leach1991}%
\begin{equation}
X^{1}=\partial _{t}~,~X^{2}=2t\partial _{t}~+R\partial
_{R}~,~X^{3}=t^{2}\partial _{t}~+tR\partial _{R}.  \label{AEP.05c}
\end{equation}%
From Table \ref{3DNNS1} and from Table \ref{3DNNS2} for $T(t)=t$ of section %
\ref{Noether point symmetries1}, we find that the potential is
\begin{equation*}
V\left( R,\phi ,\theta \right) =\frac{1}{R^{2}}f\left( \theta ,\phi \right)
\end{equation*}%
hence, the Lagrangian is%
\begin{equation}
L^{\prime }=\frac{1}{2}\left( \dot{R}^{2}+R^{2}\dot{\phi}^{2}+R^{2}\sin
^{2}\phi \dot{\theta}^{2}\right) -\frac{1}{R^{2}}f\left( \theta ,\phi
\right) .  \label{AEP.06}
\end{equation}

The equations of motion are (\ref{AEP.01a}) - (\ref{AEP.01c}) with $\mu =0$%
.~The Noether invariants of the Lagrangian (\ref{AEP.06}) are%
\begin{eqnarray}
E &=&\frac{1}{2}\left( \dot{R}^{2}+R^{2}\dot{\phi}^{2}+R^{2}\sin ^{2}\phi
\dot{\theta}^{2}\right) +\frac{1}{R^{2}}f\left( \theta ,\phi \right)
\label{AEP.07} \\
I_{1} &=&2tE^{\prime }-R\dot{R}  \label{AEP.08} \\
I_{2} &=&t^{2}E^{\prime }-tR\dot{R}+\frac{1}{2}R^{2}.  \label{AEP.09}
\end{eqnarray}

We note that the time dependent first integrals $I_{1,2}$ are linear in $t$
whereas the corresponding integrals $I_{\pm }$ of the case $\mu \neq 0$ are
exponential. From $I_{1,2}$ we define the time independent first integral $%
J=4I_{2}E^{\prime }-I_{1}^{2}$ which is calculated to be
\begin{equation}
J=R^{4}\dot{\phi}^{2}+R^{4}\sin ^{2}\phi ~\dot{\theta}^{2}+2f\left( \theta
,\phi \right) .  \label{AEP.10}
\end{equation}%
Using (\ref{AEP.10}) the equation of motion for $R\left( t\right) $ becomes~$%
\ddot{R}-\frac{J^{\prime }}{R^{3}}=0$\ which is the one dimensional
Ermakov-Pinney equation, hence $J^{\prime }$ is the Ermakov invariant \cite%
{Leach1991}. As it was the case with the three dimensional Hamiltonian
Kepler Ermakov system of type I, the Lagrangian (\ref{AEP.06}) admits the
dynamical Noether symmetry $X_{D}=R^{2}\left( \dot{\phi}\partial _{\phi }+%
\dot{\theta}\partial _{\theta }\right) $ whose\ integral is the (\ref{AEP.10}%
).

\section{Integrability of 3D autonomous Euclidian Kepler Ermakov system}

The 3d autonomous Hamiltonian Euclidian Kepler Ermakov systems need three
independent first integrals in involution in order to be Liouville
integrable. As we have shown each system has the two Noether integrals $%
(E,J) $ , therefore, we need one more Noether symmetry. Such a symmetry
exists only for special forms of the arbitrary function $f\left( \theta
,\phi \right) $. From tables \ref{3DNNS1}, \ref{3DNNS2}, \ref{3DNNS3}, \ref%
{3DNNS4} and \ref{3DNNS5} of Chapter \ref{chapter3} we find that extra
Noether symmetries are possible only\footnote{%
The linear combination of an element of $so(3)$ with a translation does not
give a potential, hence, an additional Noether symmetry.} for linear
combinations of translations (i.e. vectors of the form $\sum%
\limits_{A=1}^{3}a^{A}\partial _{A}$ where $a^{A}$ are constants) and/or
rotations (i.e. elements of $so(3))$.

\subsection{ Noether symmetries generated from the translation group}

We determine the functions $f\left( \theta ,\phi \right) $ for which the 3D
autonomous Hamiltonian Euclidian Kepler Ermakov system admits extra Noether
point symmetries for linear combinations of the translation group.

\underline{The Lagrangian (\ref{AEP.00})}\newline

In Cartesian coordinates the Lagrangian (\ref{AEP.00}) is%
\begin{equation}
L\left( x^{j},\dot{x}^{j}\right) =\frac{1}{2}\left( \dot{x}^{2}+\dot{y}^{2}+%
\dot{z}^{2}\right) +\frac{\mu ^{2}}{2}\left( x^{2}+y^{2}+z^{2}\right) -\frac{%
1}{x^{2}}f_{I}\left( \frac{y}{x},\frac{z}{x}\right)  \label{LIEP.01}
\end{equation}%
where $f_{I}=\left( 1+\frac{y^{2}}{x^{2}}+\frac{z^{2}}{x^{2}}\right) ^{-1}.$
From Table \ref{3DNNS5} with $m=-\mu ^{2}~,~p=0$ we find that the Lagrangian
(\ref{LIEP.01}) admits Noether symmetries, which are produced from a linear
combination of translations, $\ $if the function $f_{I}\left( \frac{y}{x},%
\frac{z}{x}\right) $ has the form%
\begin{equation}
f_{I}\left( \frac{y}{x},\frac{z}{x}\right) =\frac{1}{\left( 1-\frac{a}{b}%
\frac{y}{x}\right) ^{2}}F\left( \frac{b\frac{z}{x}-c\frac{y}{x}}{\left( 1-%
\frac{a}{b}\frac{y}{x}\right) }\right) .  \label{LIEP.02}
\end{equation}

In this case, the Lagrangian (\ref{LIEP.01}) admits at least the following
two extra Noether symmetries%
\begin{equation}
X_{\pm }=e^{\pm \mu t}{\sum\limits_{A=1}^{3}}a^{A}\partial _{A}
\label{LIEP.03a}
\end{equation}%
with corresponding Noether integrals%
\begin{equation}
I_{\pm }=e^{\pm \mu t}\left( {\sum\limits_{A=1}^{3}}a^{A}\dot{x}_{A}\right)
\mp \mu e^{\pm \mu t}\left( \sum\limits_{A=1}^{3}a^{A}x_{A}\right) .
\label{LIEP.04}
\end{equation}%
We note that the first integrals $I_{\pm }$ are time dependent; however the
first integral
\begin{equation}
J_{2}=I_{+}I_{-}=\left( a\dot{x}+b\dot{y}+c\dot{z}\right) ^{2}+\mu
^{2}\left( ax+by+cz\right) ^{2}  \label{LIEP.05}
\end{equation}%
is time independent. As it was the case with the Ermakov invariant (\ref%
{AEP.05}) the integral $J_{2}$ is possible to be constructed directly from
the dynamical Noether symmetry $X_{D}^{\prime }=K_{\left( 2\right) .j}^{i}%
\dot{x}^{i}\partial _{i}$, where $K_{(2)ij}$ is a Killing tensor of the
second rank \cite{Kalotas,Crampin}, with non-vanishing components
\begin{eqnarray*}
K_{11} &=&a^{2}~,~K_{22}=b^{2}~,~K_{33}=c^{2} \\
K_{\left( 12\right) } &=&2ab~,~K_{\left( 13\right) }=2ac~,~K_{\left(
23\right) }=2bc
\end{eqnarray*}%
so that the dynamical symmetry vector is
\begin{equation}
X_{D}^{\prime }=\left( a^{2}+ab+ac\right) \dot{x}\partial _{x}+\left(
b^{2}+ab+bc\right) \dot{y}\partial _{y}+\left( c^{2}+ac+bc\right) \dot{z}%
\partial _{z}.
\end{equation}%
The Ermakov invariant $J$ (see (\ref{AEP.05}) ) in Cartesian coordinates is
\begin{equation}
J=2E\left( x^{2}+y^{2}+z^{2}\right) -\left( x\dot{x}+y\dot{y}+z\dot{z}%
\right) ^{2}.  \label{LIEP.05b}
\end{equation}

The first integrals $J,J_{2}$ are not in involution. Using the Poisson
brackets we construct new first integrals and at some stage one of them will
be in involution. These new first integrals can also be constructed form
corresponding dynamical Noether symmetries.

An example of a\ known Lagrangian of the form (\ref{LIEP.01}) is the three
body Calogero-Moser Lagrangian \cite{Wojcie,HaasJPA,Ranada}
\begin{equation}
L=\frac{1}{2}\left( \dot{x}^{2}+\dot{y}^{2}+\dot{z}^{2}\right) -\frac{\mu
^{2}}{2}\left( x^{2}+y^{2}+z^{2}\right) -\frac{1}{\left( x-y\right) ^{2}}-%
\frac{1}{\left( x-z\right) ^{2}}-\frac{1}{\left( y-z\right) ^{2}}.
\end{equation}%
The extra Noether symmetries of this Lagrangian are produced by the vector (%
\ref{LIEP.03a})$~$for $a^{A}=\left( 1,1,1\right) .$

\underline{The Lagrangian (\ref{AEP.06})}\newline

In Cartesian coordinates the Lagrangian (\ref{AEP.06}) is%
\begin{equation}
L\left( x^{j},\dot{x}^{j}\right) =\frac{1}{2}\left( \dot{x}^{2}+\dot{y}^{2}+%
\dot{z}^{2}\right) -\frac{1}{x^{2}}f_{II}\left( \frac{y}{x},\frac{z}{x}%
\right) .  \label{LIEP.07}
\end{equation}%
According to Tables \ref{3DNNS4} and \ref{3DNNS5} (with $m=0~,~p=0$), the
Lagrangian (\ref{LIEP.07}) admits extra Noether point symmetries for a
linear combination of translations if the function $f$ is of the form (\ref%
{LIEP.02}). In this case the corresponding Noether integrals are%
\begin{equation}
I_{1}^{\prime }={\sum\limits_{A=1}^{3}}a^{A}\dot{x}_{A}~,~I_{2}^{\prime }=t{%
\sum\limits_{A=1}^{3}}a^{A}\dot{x}_{A}-{\sum\limits_{A=1}^{3}}a^{A}x_{A}.
\label{LIEP.10}
\end{equation}

Example of such a Lagrangian is the Calogero-Moser Lagrangian \cite{Wojcie}
(without the oscillator term)
\begin{equation}
L=\frac{1}{2}\left( \dot{x}^{2}+\dot{y}^{2}+\dot{z}^{2}\right) -\frac{1}{%
\left( x-y\right) ^{2}}-\frac{1}{\left( x-z\right) ^{2}}-\frac{1}{\left(
y-z\right) ^{2}}.  \label{LIEP.11}
\end{equation}

For the Lagrangian (\ref{LIEP.11}), we have the first integrals $%
E,J,I_{1}^{\prime },I_{2}^{\prime }$. From the integrals $J,I_{1}^{\prime }$
we construct the integral $\Phi =\left\{ I_{1}^{\prime },\left\{
J,I_{1}^{\prime }\right\} \right\} $ . It is easy to show that the integrals
$E,I_{1}^{\prime },\Phi $ are in involution hence the dynamical system is
Liouville integrable. We remark, that, the first integrals $%
E,J,I_{1}^{\prime },I_{2}^{\prime }$ can also be computed by making use of
the Lax pair tensor \cite{Ranada}.

\subsection{Noether symmetries generated from $so\left( 3\right) $}

The elements of $so\left( 3\right) $ in spherical coordinates are the three
vectors $CK^{1,2,3}~\ $%
\begin{equation}
CK^{1}=\sin \theta \partial _{\phi }+\cos \theta \cot \phi \partial _{\theta
},~CK^{2}=\cos \theta \partial _{\phi }-\sin \theta \cot \phi \partial
_{\theta },~CK^{3}=\partial _{\theta }  \label{CCS0.1}
\end{equation}%
which are also KVs for the Euclidian sphere.

In this case, the symmetry condition becomes%
\begin{equation}
L_{CK}\left[ \frac{1}{R^{2}}f\left( \theta ,\phi \right) \right] +p=0~
\label{CCS0.2}
\end{equation}%
or, equivalently%
\begin{equation}
\frac{1}{R^{2}}\left( R^{2}g_{ij}CK^{i}f^{,j}\right) +p=0\Rightarrow
g_{ij}CK_{\left( 1,2,3\right) }^{i}f^{,j}+p=0  \label{CCS0}
\end{equation}%
where $g_{ij}$ is the metric of the Euclidian sphere, that is%
\begin{equation}
ds^{2}=d\phi ^{2}+\mathrm{\sin }^{2}\phi ~d\theta ^{2}.  \label{CCS}
\end{equation}

We infer that the problem of determining the extra Noether point symmetries
of Lagrangian (\ref{AEP.00}) generated from elements of the $so\left(
3\right) $ is equivalent to the determination of the Noether point
symmetries for motion on the 2D sphere.

It is easy to show, that, the integrals of Table \ref{CCInt} of section \ref%
{Motion on the two dimensional sphere} are in involution with the
Hamiltonian and the Ermakov invariant, therefore, the system is Liouville
integrable via Noether point symmetries.

The above results are extended to the case in which the system moves on the
hyperbolic sphere that is, it has Lagrangian
\begin{equation}
L=\frac{1}{2}\left( \dot{R}^{2}+R^{2}\dot{\phi}^{2}+R^{2}\sinh ^{2}\phi ~%
\dot{\theta}^{2}\right) +\frac{\mu ^{2}}{2}R^{2}-\frac{1}{R^{2}}g\left(
\theta ,\phi \right) .  \label{EP.SO3}
\end{equation}

We reach at the following conclusion.

\begin{proposition}
The three dimensional autonomous Hamiltonian Kepler Ermakov system with
Lagrangian (\ref{AEP.00}) is Liouville Integrable via Noether point
symmetries, which are generated from a linear combination of the three
elements of the~$so\left( 3\right) $ algebra, if and only if the equivalent
dynamical system in the fundamental hyperquadrics of the three dimensional
flat space is integrable.
\end{proposition}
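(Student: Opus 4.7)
The plan is to reduce the integrability question for the three dimensional autonomous Hamiltonian Kepler--Ermakov system to a two dimensional problem by exploiting the Ermakov invariant $J$, and then to exhibit a bijection between the admissible Noether symmetries generated by $so(3)$ on each side. Since the system already possesses the two Noether integrals $E$ and $J$ in involution (as established in equations (\ref{AEP.02}) and (\ref{AEP.05})), Liouville integrability reduces to producing a third independent integral commuting with both.

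First I would pass to the Hamiltonian formulation, setting $p_{R}=\dot R$, $p_{\phi}=R^{2}\dot\phi$, $p_{\theta}=R^{2}\sin^{2}\!\phi\,\dot\theta$, and rewrite the Ermakov invariant (\ref{AEP.05}) as
\begin{equation*}
J=p_{\phi}^{2}+\frac{p_{\theta}^{2}}{\sin ^{2}\phi}+2f(\theta ,\phi).
\end{equation*}
The crucial observation is that $J/2$ coincides with the Hamiltonian $H_{s}$ of the dynamical system on the two dimensional sphere defined by the Lagrangian (\ref{MCC.01}) with potential $V(\theta,\phi)=f(\theta,\phi)$, under the identification $(p_{\phi},p_{\theta})\leftrightarrow (P_{\phi},P_{\theta})$ of the sphere's canonical momenta. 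In particular, $J$ is $R$--independent and lives on the reduced angular phase space $(\phi ,\theta ,p_{\phi},p_{\theta})$. The analogous identification works verbatim on the hyperbolic hyperquadric with $\sin\phi$ replaced by $\sinh\phi$, which is why the statement concerns both signs $\varepsilon=\pm 1$ of the fundamental hyperquadrics.

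Next I would establish the bijection of Noether symmetries. A Noether vector $X=CK^{\phi}\partial_{\phi}+CK^{\theta}\partial_{\theta}$ built from $so(3)$ has no component along $\partial_{t}$ or $\partial_{R}$, so a direct computation shows that its Noether condition applied to the Lagrangian (\ref{AEP.00}) reduces exactly to condition (\ref{CCS0}), which is formally identical to the Noether condition (\ref{MCC.04}) for the two dimensional Lagrangian (\ref{MCC.01}) with potential $f$. Hence every Noether symmetry of the sphere motion lifts to a Noether symmetry of the 3D Kepler--Ermakov system, and conversely. The associated Noether integral $I_{s}=I_{s}(\phi,\theta,p_{\phi},p_{\theta};t)$ listed in Table \ref{CCInt} is, by construction, the Noether integral of the 3D system as well.

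Finally I would verify that $I_{s}$ is in involution with both $E$ and $J$ and is functionally independent of them. Since $I_{s}$ depends only on $(\phi,\theta,p_{\phi},p_{\theta})$, while the $R$--dependent pieces of $E$ enter only as the overall factor $1/R^{2}$ multiplying the angular block, one obtains
\begin{equation*}
\{I_{s},E\}=\frac{1}{R^{2}}\{I_{s},H_{s}\}=0,\qquad \{I_{s},J\}=2\{I_{s},H_{s}\}=0,
\end{equation*}
where the last equalities use the fact that, on the sphere, $I_{s}$ is by assumption in involution with $H_{s}$. Independence of $\{E,J,I_{s}\}$ follows because $I_{s}$ involves no $R$ or $p_{R}$ while $E$ does. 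This gives three independent first integrals in involution on a six dimensional phase space, i.e.\ Liouville integrability, and the converse is read off from the same bijection. The main obstacle I anticipate is the careful bookkeeping around time--dependent Noether integrals (those arising when $p\neq 0$ in (\ref{MCC.05})), where one must track the gauge term $pt$ in the Poisson--bracket computation; this is handled by the standard formula $dI_{s}/dt=\partial_{t}I_{s}+\{I_{s},H\}=0$, which on the reduced system reproduces the sphere's conservation law and therefore remains valid on the full 3D system.
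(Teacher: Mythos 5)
Your proposal is correct and follows essentially the same route as the paper: the Noether condition for an $so(3)$-generated vector applied to the Lagrangian (\ref{AEP.00}) collapses to the condition (\ref{CCS0}) for motion on the two dimensional sphere (or its hyperbolic counterpart), and the resulting integrals are in involution with $E$ and $J$. Your explicit identification $J=2H_{s}$ together with the bracket computation $\{I_{s},E\}=R^{-2}\{I_{s},H_{s}\}$ simply spells out the involution step that the paper dismisses as ``easy to show''.
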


We note that it is possible a three dimensional autonomous Kepler Ermakov
system to admit more Noether symmetries which are due to the rotation group
and the translation group (but not to a linear combination of elements from
the two groups). For example, the 3D Kepler Ermakov system with Lagrangian
\cite{Damianou2004}%
\begin{equation}
L=\frac{1}{2}\left( \dot{x}^{2}+\dot{y}^{2}+\dot{z}^{2}\right) -\frac{1}{%
x^{2}\left( 1-\frac{y}{x}-\frac{z}{x}\right) ^{2}}
\end{equation}%
has the following extra Noether point symmetries (in addition to the
elements of $sl\left( 2,R\right) $)%
\begin{eqnarray*}
Y^{1} &=&\partial _{x}+\partial _{y},~Y^{2}=\partial _{x}+\partial _{z} \\
Y^{3} &=&t\left( \partial _{x}+\partial _{y}\right) ,~Y^{4}=t\left( \partial
_{x}+\partial _{z}\right) \\
Y^{5} &=&\left( y-z\right) \partial _{x}-\left( x+z\right) \partial
_{y}+\left( x+y\right) \partial _{z}.
\end{eqnarray*}%
The vectors $Y^{1,2}$ and $Y^{3,4}$ follow from (\ref{LIEP.03a}) for $%
a_{1}^{A}=\left( 1,1,0\right) ~$\ and $a_{2}^{A}=\left( 1,0,1\right) \ \ $%
respectively, whereas $Y^{5}$ is a linear combination of the three elements
of $so\left( 3\right) $. The \ Noether integrals of the Noether symmetries $%
Y^{1-5}$ are respectively
\begin{eqnarray}
I_{Y_{1}} &=&\dot{x}+\dot{y}~ \\
I_{Y_{2}} &=&\dot{x}+\dot{z} \\
I_{Y_{3}} &=&t\left( \dot{x}+\dot{y}\right) -\left( x+y\right) \\
I_{Y_{4}} &=&t\left( \dot{x}+\dot{z}~\right) -\left( x+z\right) \\
I_{Y_{5}} &=&\left( y-z\right) \dot{x}-\left( x+z\right) \dot{y}+\left(
x+y\right) \dot{z}.
\end{eqnarray}

It is clear that in order to extend the Kepler Ermakov system to higher
dimensions one needs to have the type of results of Chapter \ref{chapter3};
therefore, the remark made in \cite{Leach1991}, that the `notion is easily
generalized to higher dimensions' has to be understood as referring to the
general scenario and not to the actual work.

\section{The autonomous Riemannian Kepler Ermakov system}

\label{The Riemannian Kepler Ermakov system}

As it has been noted in section \ref{ErmakovSys}, the Kepler Ermakov systems
considered so far in the literature are Newtonian Kepler Ermakov systems. In
this section we make a drastic step forward and introduce the autonomous
Riemannian Kepler Ermakov systems of dimension $n.$ The generalization we
consider is based on the following definition

\begin{definition}
The $n~$dimensional autonomous Riemannian Kepler Ermakov system is an
autonomous dynamical system which:\newline
a. It is defined on a Riemannian space which admits a gradient HV\newline
b. Admits a first integral, which we name the Riemannian Ermakov first
integral and it is characterized by the requirement that the corresponding
equation of motion takes the form of the Ermakov Pinney equation.\newline
c. It is invariant at least under the $sl(2,R)$ algebra, which is generated
by the vector $\partial _{t}$ and the gradient HV\ of the space.
\end{definition}

There are two types of $n$ dimensional autonomous Riemannian Kepler Ermakov
systems. The ones which are not Hamiltonian and admit the $sl(2,R)$ algebra
as Lie point symmetries and the ones which are conservative and admit the $%
sl(2,R)$ algebra as Noether point symmetries.

\subsection{The non Hamiltonian autonomous Riemannian Kepler Ermakov system}

\label{The non conservative Riemannian Kepler Ermakov system}

Consider an $n~$dimensional Riemannian space which admits a gradient HV. It
is well known, that the metric of this space can always be written in the
form \cite{Tupper1989,Tupper1990}
\begin{equation}
ds^{2}=du^{2}+u^{2}h_{AB}dy^{A}dy^{B}  \label{NCGE.01}
\end{equation}%
where the Latin capital indices $A,B,..$ take the values $1,\ldots ,n-1~$and
$h_{AB}=h_{AB}\left( y^{C}\right) $ is the generic $n-1$ metric. The
gradient HV of the metric is the vector $H^{i}=u\partial _{u}$ ($\psi =1$)$~$%
generated from the function $H=\frac{1}{2}u^{2}$. We note the relation%
\begin{equation}
h_{DA}\Gamma _{BC}^{A}=\frac{1}{2}h_{DB,C}  \label{NCGE.01a}
\end{equation}%
where $\Gamma _{BC}^{A}$ are the connection coefficients of the $\left(
n-1\right) $ metric $h_{AB}.$ In that space, we consider a particle moving
under the action of the force
\begin{equation*}
F^{i}=F^{u}(u,y^{C})\frac{\partial }{\partial u}+F^{A}(u,y^{C})\frac{%
\partial }{\partial y^{A}}.
\end{equation*}

The equations of motion $\frac{Dx^{i}}{Dt}=F^{i}$ when projected along the
direction of $u$ and in the $\left( n-1\right) $ space give the equations
\begin{eqnarray}
u^{\prime \prime }-uh_{AB}y^{\prime A}y^{\prime B} &=&F^{u}  \label{NCGE.02}
\\
y^{\prime \prime A}+\frac{2}{u}u^{\prime }y^{\prime A}+\Gamma
_{BC}^{A}y^{\prime B}y^{\prime C} &=&F^{A}  \label{NCGE.03}
\end{eqnarray}%
where $u^{\prime }=\frac{du}{ds}$ and $s~$is an affine parameter.

Because the system is autonomous admits the Lie point symmetry $\partial
_{t} $. Using the vector $\partial _{t}$ and the gradient HV $%
H^{i}=u\partial _{u} $ we construct two representations of $sl(2,R)$ by
means of the sets of vectors (see (\ref{AEP.05.0a})~and (\ref{AEP.05c}))%
\begin{equation}
\partial _{s},~2s\partial _{s}+u\partial _{u},~s^{2}\partial _{t}+su\partial
_{u}\qquad \text{ when }\mu =0  \label{NCGE.04}
\end{equation}%
\begin{equation}
\partial _{s},~\frac{1}{\mu }e^{\pm 2\mu s}\partial _{s}\pm e^{\pm 2\mu
s}u\partial _{u}\qquad ~~~\text{ when }\mu \neq 0  \label{NCGE.05}
\end{equation}%
and require that the vectors in each set will be Lie point symmetries of the
system of equations (\ref{NCGE.02}),(\ref{NCGE.03}). In Appendix \ref%
{apen1force} we show that the requirement of the invariance of the force
under both representations (\ref{NCGE.04}), (\ref{NCGE.05}) of $sl(2,R)$
demands that the force be of the form%
\begin{equation}
F^{i}=\left( \mu ^{2}u+\frac{1}{u^{3}}G^{u}\left( y^{C}\right) \right)
\partial _{u}+\frac{1}{u^{4}}G^{A}\left( y^{C}\right) \partial _{A}.
\label{NCGE.05a}
\end{equation}%
Replacing $F^{i}$ in the system of equations (\ref{NCGE.02}),(\ref{NCGE.03})
we find
\begin{eqnarray}
u^{\prime \prime }-uh_{AB}y^{\prime A}y^{\prime B} &=&\mu ^{2}u+\frac{1}{%
u^{3}}G^{u}  \label{NCGE.06} \\
y^{\prime \prime A}+\frac{2}{u}u^{\prime }y^{\prime A}+\Gamma
_{BC}^{A}y^{\prime B}y^{\prime C} &=&\frac{1}{u^{4}}G^{A}.  \label{NCGE.07}
\end{eqnarray}%
\ Multiplying the second equation with $2u^{4}h_{DA}y^{\prime D}$ and using (%
\ref{NCGE.01a}) we have%
\begin{equation}
u^{4}\frac{d}{ds}\left( h_{DB}y^{\prime D}y^{\prime B}\right)
+4u^{3}h_{DA}u^{\prime }y^{\prime A}y^{\prime D}=2G_{D}y^{\prime D}
\end{equation}%
from which follows%
\begin{equation}
\frac{d}{ds}\left( u^{4}h_{DB}y^{\prime D}y^{\prime B}\right)
=2G_{D}y^{\prime D}.
\end{equation}

The rhs is a perfect differential if~$G_{D}=-\Sigma _{,D}~$where $\Sigma
(y^{A})$ is a differentiable function. If this is the case we find the first
integral
\begin{equation}
J=u^{4}h_{DB}y^{\prime D}y^{\prime B}+2\Sigma \left( y^{C}\right) .
\label{NCGE.010}
\end{equation}%
We note that $J$ involves the arbitrary metric $h_{AB}$ and the function $%
\Sigma (y^{A}$). Furthermore equations (\ref{NCGE.06}), (\ref{NCGE.07}%
)~become
\begin{eqnarray}
u^{\prime \prime }-uh_{AB}y^{\prime A}y^{\prime B} &=&\mu ^{2}u+\frac{1}{%
u^{3}}G^{u}\left( y^{C}\right)  \label{NCGE.011} \\
y^{\prime \prime A}+\frac{2}{u}u^{\prime }y^{\prime A}+\Gamma
_{BC}^{A}y^{\prime B}y^{\prime C} &=&-\frac{1}{u^{4}}h^{AB}\Sigma \left(
y^{C}\right) _{,B}~.  \label{NCGE.012}
\end{eqnarray}%
These are the equations defining the $n~$dimensional autonomous Riemannian
Kepler Ermakov system.

Using the first integral (\ref{NCGE.010}), the equation of motion (\ref%
{NCGE.011}) is written as follows
\begin{equation}
u^{\prime \prime }=\mu ^{2}u+\frac{\bar{G}\left( y^{C}\right) }{u^{3}}
\end{equation}%
where $\bar{G}=J+G^{u}\left( y^{C}\right) -2\Sigma \left( y^{C}\right) .$
This is the Ermakov-Pinney equation; hence, we identify (\ref{NCGE.010}) as
the Riemannian Ermakov integral of the autonomous Riemannian Kepler Ermakov
system.

\subsection{The autonomous conservative Riemannian Kepler Ermakov system}

\label{The Hamiltonian Kepler Ermakov system in a n-dimensional Riemannian
space}

In the following we assume that the force is derived from the potential $%
V\left( u,y^{C}\right) ,~$that is, the dynamical system is conservative so
that the equations of motion follow from the Lagrangian%
\begin{equation}
L=\frac{1}{2}\left( u^{\prime 2}+u^{2}h_{AB}y^{\prime A}y^{\prime B}\right)
-V\left( u,y^{C}\right) .  \label{GERS.02}
\end{equation}%
The Hamiltonian is%
\begin{equation}
E=\frac{1}{2}\left( u^{\prime 2}+u^{2}h_{AB}y^{\prime A}y^{\prime B}\right)
+V\left( u,y^{C}\right) .  \label{GERS.03}
\end{equation}%
The equations of motion, i.e. the Euler-Lagrange equations, are%
\begin{eqnarray}
u^{\prime \prime }-uh_{AB}y^{\prime A}y^{\prime B}+V_{,u} &=&0
\label{GERS.03a} \\
y^{\prime \prime A}+\frac{2}{u}u^{\prime }y^{\prime A}+\Gamma
_{BC}^{A}y^{\prime B}y^{\prime C}+\frac{1}{u^{2}}h^{AB}V_{,B} &=&0.
\label{GERS.03b}
\end{eqnarray}%
The demand that Lagrangian (\ref{GERS.02}) admits Noether point symmetries
which are generated from the gradient HV leads to the following cases.

\textbf{Case A:} The Lagrangian (\ref{GERS.02}) admits the Noether point
symmetries (\ref{NCGE.04}) if the potential is of the form
\begin{equation}
V\left( u,y^{C}\right) =\frac{1}{u^{2}}V\left( y^{C}\right) .  \label{GERS.5}
\end{equation}%
The Noether integrals of these Noether point symmetries are
\begin{eqnarray}
E_{A} &=&\frac{1}{2}\left( u^{\prime 2}+u^{2}h_{AB}y^{\prime A}y^{\prime
B}\right) +\frac{1}{u^{2}}V\left( y^{C}\right)  \label{GERSN.1} \\
I_{1} &=&2sE-uu^{\prime }  \label{GERSN.2} \\
I_{2} &=&s^{2}E-suu^{\prime }+\frac{1}{2}u^{2}  \label{GERSN.3}
\end{eqnarray}%
where $E_{A}$ is the $\ $Hamiltonian.

\textbf{Case B: }The Lagrangian (\ref{GERS.02}) admits the Noether point
symmetries (\ref{NCGE.05}) if the potential is of the form
\begin{equation}
V\left( u,y^{c}\right) =-\frac{\mu ^{2}}{2}u^{2}+\frac{1}{u^{2}}V^{\prime
}\left( y^{C}\right) .  \label{GERS.6}
\end{equation}%
The Noether integrals of these Noether point symmetries are
\begin{eqnarray}
E_{B} &=&\frac{1}{2}\left( u^{\prime 2}+u^{2}h_{AB}y^{\prime A}y^{\prime
B}\right) -\frac{\mu ^{2}}{2}u^{2}+\frac{1}{u^{2}}V^{\prime }\left(
y^{C}\right)  \label{GERSN.4} \\
I_{+} &=&\frac{1}{\mu }e^{2\mu s}E-e^{2\mu s}uu^{\prime }+\mu e^{2\mu s}u^{2}
\label{GERSN.5} \\
I_{-} &=&\frac{1}{\mu }e^{-2\mu s}E+e^{-2\mu s}uu^{\prime }+\mu e^{-2\mu
s}u^{2}  \label{GERSN.6}
\end{eqnarray}%
where $E_{B}$ is the $\ $Hamiltonian.

Using the Noether integrals we construct the Riemannian Ermakov invariant $%
J_{G}$, which is common for both Case A and Case B, as follows%
\begin{equation}
J_{G}=u^{4}h_{DB}y^{\prime D}y^{\prime C}+2V^{\prime }\left( y^{C}\right) .
\label{GERSN.7}
\end{equation}%
This coincides with the invariant first integral defined in (\ref{NCGE.010}%
). We note that with the use of the first integral (\ref{GERSN.7}) the
Hamiltonians (\ref{GERSN.1}) and (\ref{GERSN.4}) take the form
\begin{equation}
E=\frac{1}{2}u^{\prime 2}-\frac{\mu ^{2}}{2}u^{2}+\frac{J}{2u^{2}}
\end{equation}%
which is the Hamiltonian for the Ermakov Pinney equation.

As it was the case with the Euclidian case of section \ref{ErmakovSys}, it
can be shown that the Riemannian Ermakov invariant (\ref{GERSN.7}) is due to
a dynamical Noether symmetry\cite{Kalotas}. We collect the results in the
following proposition.

\begin{proposition}
\label{prop ghv}In a Riemannian space with metric $g_{ij}$ which admits a
gradient HV, the equations of motion of a Hamiltonian system moving under
the action of the potential $(\mu \epsilon
%TCIMACRO{\U{2102} }%
%BeginExpansion
\mathbb{C}
%EndExpansion
)$
\begin{equation}
V\left( u,y^{c}\right) =-\frac{\mu ^{2}}{2}u^{2}+\frac{1}{u^{2}}V^{\prime
}\left( y^{C}\right)  \label{GERSN.8}
\end{equation}%
admit the $sl(2,R)\ $invariance and also an invariant first integral, the
Riemannian Ermakov invariant. This latter quantity is also possible to be
identified as the Noether integral of a dynamical Noether symmetry.
\end{proposition}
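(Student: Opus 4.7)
The plan is to verify the proposition in three stages, each mirroring closely the computations already performed for the Euclidian case in Section~\ref{ErmakovSys} and for the $n$-dimensional non-Hamiltonian case in Section~\ref{The non conservative Riemannian Kepler Ermakov system}. First, I will write the equations of motion explicitly using the canonical adapted form $ds^{2}=du^{2}+u^{2}h_{AB}dy^{A}dy^{B}$ of the metric guaranteed by the existence of the gradient HV $H^{i}=u\partial_{u}$ (with $\psi=1$), and identify the relevant representation of $sl(2,R)$. Since $\partial_{s}$ is automatically a Noether symmetry (the system is autonomous), I only need to check that the two additional generators
\begin{equation*}
X_{\pm}=\tfrac{1}{\mu}e^{\pm 2\mu s}\partial_{s}\pm e^{\pm 2\mu s}u\partial_{u}
\end{equation*}
(or, if $\mu=0$, the vectors in~(\ref{NCGE.04})) satisfy the Noether condition~(\ref{L2p.3}). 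This is a direct application of Theorem~\ref{The Noether symmetries of a conservative system} (Case II of Section~\ref{NoetherSymC}): the gradient HV $H^{i}=u\partial_{u}$ is Noether provided the potential satisfies (\ref{NSCS.17}), i.e.\ $V_{,k}H^{,k}+2V=c_{2}H+d$; substituting $V=-\tfrac{\mu^{2}}{2}u^{2}+u^{-2}V'(y^{C})$ gives $c_{2}=-2\mu^{2}$, $d=0$, which is consistent with $T_{,tt}=c_{2}T$ solved by $T(s)=e^{\pm 2\mu s}$. This establishes the $sl(2,R)$ invariance.

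Second, I will read off the three Noether integrals $E_{B},I_{\pm}$ directly from Proposition~\ref{The Noether Integrals}, obtaining~(\ref{GERSN.4})--(\ref{GERSN.6}), and then form the time-independent combination
\begin{equation*}
J_{G}=E_{B}^{2}-\mu^{2}I_{+}I_{-}
\end{equation*}
(with the analogous $4I_{2}E-I_{1}^{2}$ combination when $\mu=0$). A straightforward algebraic manipulation, parallel to the derivation of~(\ref{AEP.05}), should yield $J_{G}=u^{4}h_{AB}y'^{A}y'^{B}+2V'(y^{C})$. Substituting $J_{G}$ back into the Hamiltonian reduces the $u$-equation to the Ermakov--Pinney form $u''-\mu^{2}u=J_{G}/u^{3}$, which justifies calling $J_{G}$ the Riemannian Ermakov invariant.

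Third, I will exhibit $J_{G}$ as the Noether integral of a dynamical (non-point) Noether symmetry of the form $X_{D}=K^{i}{}_{j}\dot{x}^{j}\partial_{i}$, where $K_{ij}$ is a rank-two symmetric tensor with only $(n-1)\times(n-1)$ block non-vanishing, namely $K_{AB}=u^{4}h_{AB}$. The verification that $X_{D}$ is a Noether symmetry reduces to two conditions: (a) that $K_{ij}$ is a Killing tensor of $g_{ij}$, i.e.\ $K_{(ij;k)}=0$; and (b) that a suitable gauge function $f$ exists so that the Noether identity holds with $f=2V'(y^{C})$. Condition~(a) follows by direct computation using~(\ref{NCGE.01a}) and the fact that $u^{4}h_{AB}$ depends on $u$ only through an overall factor; condition~(b) is then automatic once~(a) holds, because $V'$ depends only on $y^{C}$.

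The main obstacle I anticipate is checking (a): the Killing-tensor equation $K_{(ij;k)}=0$ mixes the $u$ derivative of $u^{4}$ with the covariant derivatives of $h_{AB}$ taken with the full Levi-Civita connection of $g_{ij}$ (not of $h_{AB}$), and one has to verify that the cross-terms $\Gamma^{u}_{AB}$, $\Gamma^{A}_{uB}$ conspire to cancel the $4u^{3}$ arising from $\partial_{u}(u^{4})$. This is essentially the Riemannian analogue of the Euclidian verification already indicated after~(\ref{AEP.10}), and it is where the specific warp-product structure enforced by the existence of a gradient HV plays a decisive role; once this cancellation is carried out, the rest of the proof is routine bookkeeping.
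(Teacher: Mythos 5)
Your proposal is correct and follows essentially the same route as the paper: adapted coordinates $ds^{2}=du^{2}+u^{2}h_{AB}dy^{A}dy^{B}$ from the gradient HV, Theorem \ref{The Noether symmetries of a conservative system} to establish the $sl(2,R)$ Noether invariance for the potential (\ref{GERSN.8}), the quadratic combination of $E$ with $I_{\pm}$ (respectively $I_{1},I_{2}$) to produce $J_{G}=u^{4}h_{AB}y^{\prime A}y^{\prime B}+2V^{\prime}$ and the Ermakov--Pinney reduction, and the identification of $J_{G}$ with the integral of the dynamical Noether symmetry built from the Killing tensor $K_{AB}=u^{4}h_{AB}$. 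The only difference is that you spell out the Killing-tensor verification $K_{(ij;k)}=0$ (which indeed closes, the cross terms $\Gamma^{A}_{uB}$ cancelling the $4u^{3}$), a step the paper merely asserts by analogy with the Euclidian case; the small normalization slips in $c_{2}$ and in the combination $E^{2}-\mu^{2}I_{+}I_{-}$ are of the same harmless kind present in the paper itself.
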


Without going into details we state the following general result.

\begin{proposition}
\label{prop gkv}Consider an $n~$ dimensional Riemannian space with an $r$~
decomposable metric,\ which in the Cartesian coordinates $x_{1},\ldots
,x_{r},$ has the general form%
\begin{equation}
ds^{2}=p\eta _{\Sigma \Lambda }dz^{\Sigma }dz^{\Lambda
}+h_{ij}dx^{i}dx^{j}~\ ,~i,j=r+1,\ldots ,n~,~\Sigma =1,\ldots ,r
\label{GERSN.10}
\end{equation}%
where $\eta _{\Sigma \Lambda }$ is a flat non~degenerate metric (of
arbitrary signature). If there exists a potential, so that the vectors $%
e^{\pm \mu s}{\sum\limits_{M}}a^{M}\partial _{^{M}}$ are Noether point
symmetries, where $a_{M}$ are constants, with Noether integrals $~$%
\begin{equation}
I_{\pm }=e^{\pm \mu s}\sum\limits_{M}a^{M}z_{M}^{\prime }\mp \mu e^{\pm \mu
s}\sum\limits_{M}a^{M}z_{M}  \label{GERSN.11}
\end{equation}%
the combined first integral $I=I_{+}I_{-}$ is time independent and it is the
result of a dynamical Noether symmetry.
\end{proposition}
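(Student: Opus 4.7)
\textbf{Proof plan for Proposition \ref{prop gkv}.} The plan is to split the claim into two pieces: (i) that the quantity $I=I_{+}I_{-}$ has no explicit dependence on $s$, and (ii) that $I$ arises as the Noether integral associated to a dynamical (velocity dependent) Noether symmetry generated by a second rank Killing tensor built from the constants $a^{M}$. Throughout I will exploit the decomposable form of the metric, which guarantees that the coordinates $z^{\Sigma}$ are affine (their Christoffels vanish identically) and that the off-block components of the metric are absent, so that any geometric object built solely from $\partial_{z^{\Sigma}}$ couples trivially to the $h_{ij}$-sector.

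First I would simply multiply the two conserved Noether integrals out. Since each of $I_{\pm}$ is a first integral of the Euler Lagrange equations, any algebraic combination of them is again a first integral; the content of (i) is therefore not conservation but the cancellation of the explicit exponentials. A direct computation gives
\begin{equation*}
I_{+}I_{-} \;=\; \Bigl(\sum_{M}a^{M}z'_{M}\Bigr)^{2} \;-\; \mu^{2}\Bigl(\sum_{M}a^{M}z_{M}\Bigr)^{2},
\end{equation*}
because the $e^{+\mu s}$ and $e^{-\mu s}$ factors in $I_{+}$ and $I_{-}$ cancel pairwise. This is manifestly autonomous, establishing (i).

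Next I would identify the Killing tensor responsible for $I$. The quadratic in velocities piece $(\sum_{M}a^{M}z'_{M})^{2}$ suggests the symmetric rank two tensor $K_{MN}=p\,a_{M}a_{N}$ (with indices lowered by $\eta_{\Sigma\Lambda}$) extended by zero into the $h_{ij}$-block. Because the $a^{M}$ are constants and $\Gamma^{\Sigma}_{\Lambda\Xi}=0$ in the flat sector, the Killing condition $K_{(ij;k)}=0$ reduces to $\partial_{(k}K_{ij)}=0$, which is immediate. Thus $\mathbf{K}$ is a bona fide Killing tensor of the full metric (\ref{GERSN.10}). I would then propose the dynamical symmetry generator
\begin{equation*}
X_{D}\;=\;K^{i}{}_{j}\,z'^{j}\,\partial_{i}\;=\;\Bigl(\sum_{M}a^{M}z'_{M}\Bigr)\sum_{N}a^{N}\partial_{N},
\end{equation*}
and verify the Noether condition $X_{D}^{[1]}L+L\,d\xi/ds=df/ds$ with $\xi=0$. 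The kinetic contribution produces $\sum a^{M}z'_{M}$ times the Euler Lagrange derivative of the flat part, while the potential contribution, using that $\partial_{M}V$ supplies exactly the factor $\mu^{2}\sum a^{N}z_{N}$ required by the Noether symmetry hypothesis of $e^{\pm\mu s}\sum a^{M}\partial_{M}$, must be absorbed into the Noether gauge function $f=-\mu^{2}(\sum a^{M}z_{M})^{2}/2$ (up to a sign I would fix in the calculation).

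The main obstacle I anticipate is precisely this last step: verifying that the proposed $X_{D}$ is Noether for the Lagrangian associated with the (a priori unknown) potential that admitted $e^{\pm\mu s}\sum a^{M}\partial_{M}$. The key lemma is that the existence of those two exponential point Noether symmetries forces $\partial_{M}\partial_{N}V = \mu^{2}\eta_{MN}\,(\text{restricted to the flat block})$ in the directions $a^{M}$, i.e.\ $a^{M}a^{N}\partial_{M}\partial_{N}V=\mu^{2}|a|^{2}$; this is exactly the compatibility needed so that $X_{D}$ closes as a Noether vector with the gauge function above. Once this is established, the standard Noether formula
\begin{equation*}
\phi_{D}\;=\;-\,K^{i}{}_{j}z'^{j}\,\frac{\partial L}{\partial z'^{i}}+f
\end{equation*}
reproduces, up to an overall constant, the combined integral $I_{+}I_{-}$ of (i), completing (ii) and the proof.
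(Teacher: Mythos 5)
Your proof is correct and is essentially the approach the paper has in mind: the paper states Proposition \ref{prop gkv} without proof (``without going into details we state the following general result''), but its worked three-dimensional Euclidean computation --- the product of the integrals (\ref{LIEP.04}) giving (\ref{LIEP.05}), together with the rank-two Killing tensor $K_{ij}=a_{i}a_{j}$ generating the dynamical Noether symmetry $X_{D}^{\prime}=K^{i}{}_{j}\dot{x}^{j}\partial_{i}$ --- is precisely your argument specialised to $r=n=3$, $p=1$, $\eta_{\Sigma\Lambda}=\delta_{\Sigma\Lambda}$. Two small remarks: your $I_{+}I_{-}=\left(\sum_{M}a^{M}z_{M}^{\prime}\right)^{2}-\mu^{2}\left(\sum_{M}a^{M}z_{M}\right)^{2}$ carries the algebraically correct minus sign given (\ref{GERSN.11}) (the $+\mu^{2}$ printed in (\ref{LIEP.05}) is a typo of the paper, not of yours), and the condition you actually need from the assumed point symmetries is the first-order relation $a^{M}\partial_{M}V=-\mu^{2}\sum_{M}a_{M}z_{M}$ --- exactly the integrability condition $K_{ij}V^{,j}=f_{,i}$ for the quadratic integral --- of which your second-derivative ``key lemma'' is only a consequence.
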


In the remaining sections we consider applications of the autonomous
Riemannian Kepler Ermakov system in General Relativity and in Cosmology.

\section{The autonomous Riemannian Kepler Ermakov system in General
Relativity}

\label{The RKEGR}

Below, we study the integrability of the Riemannian Kepler Ermakov system
via Noether point symmetries in a conformally flat spacetime which admits a
homothetic Lie algebra with a gradient (proper) HV.

\subsection{The Riemannian Kepler Ermakov system on a 4D FRW spacetime}

Consider the spatially flat FRW spacetime with metric
\begin{equation}
ds^{2}=du^{2}-u^{2}\left( dx^{2}+dy^{2}+dz^{2}\right) .  \label{Apl.1}
\end{equation}%
This metric admits the gradient HV $u\partial _{u}$ and six non gradient KVs
\cite{MM86,TNA} which are the KVs of $E^{3}$.

We consider the autonomous Riemannian Kepler Ermakov system defined by the
Lagrangian (see (\ref{GERSN.8}) ) ($\mu\in \mathbb{C}$ )
\begin{equation}
L=\frac{1}{2}\left( u^{\prime 2}-u^{2}\left( x^{\prime 2}+y^{\prime
2}+z^{\prime 2}\right) \right) +\frac{\mu ^{2}}{2}u^{2}-\frac{1}{u^{2}}%
V\left( x,y,z\right) .  \label{Apl.2}
\end{equation}

The Euler Lagrange equations are%
\begin{eqnarray}
u^{\prime \prime }+u\left( x^{\prime 2}+y^{\prime 2}+z^{\prime 2}\right)
-\mu ^{2}u-\frac{2V\left( x,y,z\right) }{u^{3}} &=&0 \\
x^{\sigma \prime \prime }+\frac{2}{u}u^{\prime }x^{\sigma \prime }-\frac{%
V^{,\sigma }\left( x,y,z\right) }{u^{4}} &=&0
\end{eqnarray}%
where $\sigma =1,2,3$. The Lagrangian (\ref{Apl.2}) has the form of the
Lagrangian (\ref{GERS.02}) for the potential $V\left( u,y^{C}\right) =-\frac{%
\mu ^{2}}{2}u^{2}+\frac{1}{u^{2}}V\left( x,y,z\right) $ hence according to
proposition \ref{prop ghv} possesses $sl(2,R)$ invariance under Noether
point symmetries for \emph{both} representations (\ref{NCGE.04}) and (\ref%
{NCGE.05}). The two time independent invariants are the Hamiltonian and the
Riemannian Ermakov invariant (proposition \ref{prop ghv})
\begin{eqnarray}
E &=&\frac{1}{2}\left( u^{\prime 2}-u^{2}\left( x^{\prime 2}+y^{\prime
2}+z^{\prime 2}\right) \right) -\frac{\mu ^{2}}{2}u^{2}+\frac{1}{u^{2}}%
V\left( x,y,z\right) \\
J_{G_{4}} &=&u^{4}\left( x^{\prime 2}+y^{\prime 2}+z^{\prime 2}\right)
+2V\left( x,y,z\right) .
\end{eqnarray}

Note that had we considered the representation\ (\ref{NCGE.04}) only (that
is we had set $\mu =0)$ then we would have lost all information concerning
the system defined for $\mu \neq 0!$ We emphasize that in applications to
Physics the major datum is the Lagrangian and not the equations of motion,
therefore one should not make mathematical assumptions which restrict the
physical generality.

To assure Liouville integrability we need one more Noether symmetry whose
Noether integral is in involution with $E,J_{G_{4}}.$ This is possible for
certain forms of the potential $V\left( x,y,z\right) .$ Using the general
results of section \ref{Lie point symmetries of three dimensional autonomous
Newtonian systems} where all 3D potentials are given which admit extra
Noether symmetries we find the results of Table \ref{TT4d1}.

%TCIMACRO{\TeXButton{B}{\begin{table}[tbp] \centering}}%
%BeginExpansion
\begin{table}[tbp] \centering%
%EndExpansion
\caption{Potentials for which the Kepler Ermakov in the 4D FRW space admit
Noether symmetries}%
\begin{tabular}{ccc}
\hline\hline
\textbf{Noether Symmetry} & $\mathbf{V(x,y,z)}$ & \textbf{Noether Integral}
\\ \hline
$a\partial _{\mu }+b\partial _{\nu }+c\partial _{\sigma }$ & $-\frac{p}{a}%
x_{\mu }+f\left( x^{\nu }-\frac{b}{a}x^{\mu },x^{\sigma }-\frac{c}{a}x^{\mu
}\right) $ & $I_{T3}=u^{2}\left( aI_{\mu }+bI_{\nu }+cI_{\sigma }\right) $
\\
$a\partial _{\mu }+b\partial _{\nu }+c\left( x_{\nu }\partial _{\mu }-x_{\mu
}\partial _{\nu }\right) $ & $\frac{p}{\left\vert c\right\vert }\arctan
\left( \frac{\left( b-cx_{\mu }\right) }{\left\vert \left( a+cx_{\nu
}\right) \right\vert }\right) $ & $I_{T2Rr}=u^{2}\left( aI_{\mu }+I_{\nu
}+I_{R_{\mu \nu }}\right) $ \\
& ~$+f\left( \frac{c}{2}r_{\left( \mu \nu \right) }-bx_{\mu }+ax_{\nu
},x_{\sigma }\right) $ &  \\
$a\partial _{\mu }+b\partial _{\nu }+c\left( x_{\sigma }\partial _{\mu
}-x_{\mu }\partial _{\sigma }\right) $ & $-\frac{p}{\left\vert c\right\vert }%
\arctan \left( \frac{\left\vert c\right\vert x_{\mu }}{\left\vert
a+cx_{\sigma }\right\vert }\right) $ & $I_{T2R2}=u^{2}\left( aI_{\mu
}+I_{\nu }+I_{R\mu \sigma }\right) $ \\
& $~+f\left( x_{\nu }-\frac{1}{\left\vert c\right\vert }\arctan \left( \frac{%
\left\vert c\right\vert x_{\mu }}{\left\vert a+cx_{\sigma }\right\vert }%
\right) ,\frac{1}{2}r_{\left( \mu \sigma \right) }-\frac{a}{c}x_{\sigma
}\right) $ &  \\
$a\partial _{\mu }+b\left( x_{\nu }\partial _{\mu }-x_{\mu }\partial _{\nu
}\right) +$ & $\frac{p}{\sqrt{b^{2}+c^{2}}}\arctan \left( \frac{\left(
ab+b^{2}x_{\nu }+bcx_{\sigma }\right) }{\left\vert bx_{\mu }\right\vert
\sqrt{b^{2}+c^{2}}}\right) +$ & $I_{T1R3}=u^{2}\left( aI_{\mu }+I_{R_{\mu
\nu }}+I_{R\mu \sigma }\right) $ \\
$~~+c\left( x_{\sigma }\partial _{\mu }-x_{\mu }\partial _{\sigma }\right) $
& $+f\left( x_{\mu }^{2}+x_{\nu }^{2}\left( 1-\frac{c^{2}}{b^{2}}\right)
+\left( \frac{2a}{b}+\frac{2c}{b}x_{\sigma }\right) x_{\nu },x_{\sigma }-%
\frac{c}{b}x_{\nu }\right) $ &  \\
$so\left( 3\right) $ linear combination & $F\left( b\tan \theta \sin \phi
+c\cos \phi -aM_{1}\right) $ & $I_{R3}=u^{2}\left( I_{R_{\mu \nu
}}+I_{R_{\mu \sigma }}+I_{R_{\nu \sigma }}\right) $ \\ \hline\hline
\end{tabular}%
\label{TT4d1}%
%TCIMACRO{\TeXButton{E}{\end{table}}}%
%BeginExpansion
\end{table}%
%EndExpansion

\begin{proposition}
\label{FRW_Flat}The Lagrangian (\ref{Apl.2}) admits an extra Noether
symmetry if and only if the potential $V\left( x,y,z\right) $ has a form
given in Table \ref{TT4d1}\footnote{%
Where $I_{\mu }=\delta _{\mu \rho }\dot{x}^{\rho }~$and $I_{R\mu \nu
}=\delta _{\rho \lbrack \mu }\delta _{\nu ]\sigma }x^{\sigma }\dot{x}%
^{\prime \rho }$}.
\end{proposition}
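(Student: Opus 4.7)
The plan is to invoke Theorem \ref{The Noether symmetries of a conservative system} with the Lagrangian (\ref{Apl.2}) viewed as that of an autonomous conservative system on the $4$D spatially flat FRW spacetime (\ref{Apl.1}), with effective potential $V_{\mathrm{eff}}(u,x,y,z)=-\tfrac{\mu^{2}}{2}u^{2}+u^{-2}V(x,y,z)$. The first step is to record the homothetic algebra of (\ref{Apl.1}): it is generated by the gradient HV $\mathbf{H}=u\partial_{u}$ (with $\psi_{H}=1$ and gradient potential $\tfrac{1}{2}u^{2}$) and by the six KVs inherited from $E^{3}$, namely three gradient translations $\partial_{\mu}$ and three non-gradient rotations $x_{\nu}\partial_{\mu}-x_{\mu}\partial_{\nu}$. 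There are no further KVs, no second gradient HV and no proper AC, so these are the only symmetry data available for Noether generation.

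The second step is to observe that $\partial_{t}$ and $\mathbf{H}$ already yield, through proposition \ref{prop ghv}, the full $sl(2,R)$ Noether algebra responsible for the Hamiltonian $E$ and the Riemannian Ermakov invariant $J_{G_{4}}$, so any additional Noether symmetry must have its spatial part driven by a linear combination of the six spatial KVs. Applying Case I of Theorem \ref{The Noether symmetries of a conservative system} to such a vector $\mathbf{Y}=Y^{A}(x,y,z)\partial_{A}$, which satisfies $\psi_{Y}=0$ and carries no $u$-component, the Noether condition $V_{\mathrm{eff},k}Y^{k}+c_{1}=0$ reads
\begin{equation*}
\frac{1}{u^{2}}\,Y^{A}V_{,A}(x,y,z)+c_{1}=0.
\end{equation*}
The two summands depend on $u$ in incompatible ways, so we are forced to $c_{1}=0$ and $Y^{A}V_{,A}=0$: the potential $V(x,y,z)$ must be invariant under the chosen spatial KV. Running this requirement through every linear combination of translations and rotations yields a system of first-order linear PDEs whose general solutions are precisely the entries of Table \ref{TT4d1}. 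This classification is exactly the three-dimensional Euclidean analysis already carried out in Tables \ref{3DNNS3}--\ref{3DNNS5} of Chapter \ref{chapter3} (with the constraint $p=0$ enforced by the $u^{-2}$ factor), so no fresh calculation is required. The Noether integral $\phi=-g_{ij}\eta^{i}\dot{x}^{j}+f$ then acquires the overall factor $u^{2}$ visible in $I_{T3},\,I_{T2Rr},\,I_{T2R2},\,I_{T1R3},\,I_{R3}$ because $g_{AB}=-u^{2}\delta_{AB}$ on the spatial block.

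The main obstacle will be to rule out a ``mixed'' candidate $\mathbf{X}=\xi(t)\partial_{t}+T(t)\mathbf{H}+S(t)Y^{A}\partial_{A}$ in which $S(t)$ is not constant. Conditions (\ref{NSCS.6})--(\ref{NSCS.7}) still give $\xi=\xi(t)$ and $\xi_{,t}=2T(t)$, while (\ref{NSCS.5}) forces $f_{,u}=T'(t)u$ and $f_{,A}=-u^{2}S'(t)Y_{A}$. The integrability constraint $f_{,uA}=f_{,Au}$ then becomes $-2uS'(t)Y_{A}=0$, which implies $S'(t)\equiv 0$ and reduces the candidate to the pure $sl(2,R)$ plus pure spatial-KV cases treated above. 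With this step completed the classification is exhaustive and the ``if and only if'' statement of the proposition follows.
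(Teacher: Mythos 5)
Your proof is correct and follows essentially the same route as the paper, which at this point simply invokes the Chapter \ref{chapter3} classification of three-dimensional Euclidean potentials; your explicit reduction of the Noether condition to $Y^{A}V_{,A}=0$ with $c_{1}=0$, and the integrability argument $f_{,uA}=f_{,Au}$ that kills time-dependent coefficients on the spatial Killing vectors, usefully fill in steps the paper leaves implicit. One factual slip: the translations $\partial_{\mu}$ are \emph{non-gradient} KVs of the metric (\ref{Apl.1}), as the paper itself states --- the associated one-form is $-u^{2}dx^{\mu}$, which is not closed --- not gradient ones. This is harmless here, because Case I of Theorem \ref{The Noether symmetries of a conservative system} is what you actually apply, and the extra time-dependent symmetries that genuine gradient KVs would permit (Case II) are exactly what your final paragraph rules out; but the misstatement should be corrected. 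Your observation that the $u^{-2}$ factor forces $p=0$ is also right, and is in fact a point where you are more careful than Table \ref{TT4d1}, which carries the $p$-terms over verbatim from the Newtonian Tables \ref{3DNNS3}--\ref{3DNNS5} even though they cannot survive in the FRW setting.
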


For example if $V\left( x,y,z\right) =\left( x^{2}+y^{2}+z^{2}\right) ^{n}$
then the system admits three extra Noether symmetries which are the elements
of $so\left( 3\right) $. If $V\left( x,y,z\right) =V_{0}$ then the system
admits six extra Noether symmetries (the KVs of the three dimensional
Euclidian space).

\subsection{The Riemannian Kepler Ermakov system on a 3D FRW spacetime}

Consider the three dimensional Lorentzian metric
\begin{equation}
ds^{2}=du^{2}-u^{2}\left( dx^{2}+dy^{2}\right)  \label{AGE.00}
\end{equation}%
which admits the gradient HV $u\partial _{u}$ and the three KVs of the
Euclidian metric $E^{2}.$ In that space consider the Lagrangian
\begin{equation}
L^{\prime }=\frac{1}{2}\left( u^{\prime 2}-u^{2}\left( x^{\prime
2}+y^{\prime 2}\right) \right) +\frac{\mu ^{2}}{2}u^{2}-\frac{1}{u^{2}}%
V\left( x,y\right) .  \label{AGE.01}
\end{equation}%
According to proposition \ref{prop ghv} this Lagrangian admits as Noether
point symmetries the elements of $sl\left( 2,R\right) .~$Then from
proposition \ref{prop ghv}\ we have that the Noether\ invariants of these
symmetries are%
\begin{eqnarray}
E &=&\frac{1}{2}\left( u^{\prime 2}-u^{2}\left( x^{\prime 2}+y^{\prime
2}\right) \right) -\frac{\mu ^{2}}{2}u^{2}+\frac{1}{u^{2}}V\left( x,y\right)
\\
J_{G_{3}} &=&u^{4}\left( x^{\prime 2}+y^{\prime 2}\right) +2V\left(
x,y\right) .
\end{eqnarray}

The requirement that the Lagrangian admits an additional Noether symmetry
leads to the condition $L_{KV}V\left( x,y\right) +p=0,$ therefore in that
case we have a 2D potential and we can use the results of Chapter \ref%
{chapter3}. If we demand the new Noether integral to be time independent~$%
\left( p=0\right) $ then the potential $V\left( x,y\right) $ and the new
Noether integrals are given in Table \ref{PtFRWEE}.

Lagrangians with kinetic term $T_{K}=\frac{1}{2}\left( u^{\prime
2}-u^{2}\left( x^{\prime 2}+y^{\prime 2}\right) \right) ~$appear in
cosmological models. In the following section we discuss such applications.

%TCIMACRO{\TeXButton{B}{\begin{table}[tbp]\centering}}%
%BeginExpansion
\begin{table}[tbp]\centering%
%EndExpansion
\caption{Potentials for which the Kepler Ermakov in the 3D FRW space admit
Noether symmetries}%
\begin{tabular}{ccc}
\hline\hline
\textbf{Noether Symmetry} & $\mathbf{V}\left( x,y\right) $ & \textbf{Noether
Integral} \\ \hline
$\partial _{x}$ & $f\left( y\right) $ & $I_{x}=u^{2}x^{\prime }$ \\
$\partial _{y}$ & $f\left( x\right) $ & $I_{y}=u^{2}y^{\prime }$ \\
$y\partial _{x}-x\partial _{y}$ & $f\left( x^{2}+y^{2}\right) $ & $%
I_{xy}=u^{2}\left( yx^{\prime }-xy^{\prime }\right) $ \\
$\partial _{x}+b\partial _{y}$ & $f\left( y-bx\right) $ & $%
I_{xby}=u^{2}\left( x^{\prime }+by^{\prime }\right) $ \\
$\left( a+y\right) \partial _{x}+\left( b-x\right) \partial _{y}$ & $f\left(
\frac{1}{2}\left( x^{2}+y^{2}\right) +ay-bx\right) $ & $I_{abxy}=\left(
a+y\right) u^{2}x^{\prime }+\left( b-x\right) u^{2}y^{\prime }$ \\
$\partial _{x},~\partial _{y}~,~y\partial _{x}-x\partial _{y}$ & $V_{0}$ & $%
I_{x}~,I_{y}~,~I_{xy}$ \\ \hline\hline
\end{tabular}%
\label{PtFRWEE}%
%TCIMACRO{\TeXButton{E}{\end{table}}}%
%BeginExpansion
\end{table}%
%EndExpansion

\begin{center}
\
\end{center}

\section{The Riemannian Kepler Ermakov system in cosmology}

\label{The Riemannian Kepler Ermakov system in cosmology}Below, we consider
two cosmological models for dark energy, a scalar field cosmology and an $%
f(R)$ cosmology in a locally rotational symmetric (LRS) spacetime.

\subsection{The case of scalar field cosmology}

\label{scalar field cosmology}

Consider the Class A LRS spacetime
\begin{equation}
ds^{2}=-N^{2}\left( t\right) dt^{2}+a^{2}\left( t\right) e^{-2\beta \left(
t\right) }dx+a^{2}\left( t\right) e^{\beta \left( t\right) }\left(
dy^{2}+dz^{2}\right)  \label{LRS.01}
\end{equation}%
which is assumed to contain a scalar field with exponential potential $%
V\left( \phi \right) =V_{0}e^{-c\phi }~,~c\neq \sqrt{6k}$ and a perfect
fluid with a stiff equation of state $p=\rho $, where $p$ is the pressure
and $\rho $ is the energy density of the fluid. The conservation equation
for the matter density gives
\begin{equation}
\dot{\rho}+6\rho \frac{\dot{a}}{a}=0\rightarrow \rho =\frac{\rho _{0}}{a^{6}}%
.  \label{LRS.02}
\end{equation}

Einstein field equations for the comoving observers $u^{a}=\frac{1}{N\left(
t\right) }\partial _{t}~,~u^{a}u_{a}=-1$~follow from the autonomous
Lagrangian \cite{Rayn,KotsakisL}
\begin{equation}
L=-3\frac{a}{N}\dot{a}^{2}+\frac{3}{4}\frac{a^{3}}{N}\dot{\beta}+\frac{k}{2}%
\frac{a^{3}}{N}\dot{\phi}^{2}-Nka^{3}e^{-c\phi }-N\frac{\rho _{0}}{a^{3}}.
\label{LRS.03}
\end{equation}%
We set $N^{2}=e^{c\phi }$ and the Lagrangian becomes
\begin{equation}
L=(-3\dot{a}^{2}+\frac{3}{4}a^{2}\dot{\beta}^{2}+\frac{k}{2}a^{2}\dot{\phi}%
^{2})ae^{-\frac{c}{2}\phi }-ka^{3}V_{0}e^{-\frac{c}{2}\phi }-\frac{\rho _{0}%
}{a^{3}e^{\frac{c}{2}\phi }}.  \label{LRS.04}
\end{equation}%
The Hamiltonian is%
\begin{equation}
E=(-3\dot{a}^{2}+\frac{3}{4}a^{2}\dot{\beta}^{2}+\frac{k}{2}a^{2}\dot{\phi}%
^{2})ae^{-\frac{c}{2}\phi }+ka^{3}V_{0}e^{-\frac{c}{2}\phi }+\frac{\rho _{0}%
}{a^{3}e^{\frac{c}{2}\phi }}=0
\end{equation}%
If we consider the transformation%
\begin{equation}
a^{3}=e^{x+y}~,~\phi =\frac{1}{3}\sqrt{\frac{6}{k}}\left( x-y\right)
\label{LRS.07}
\end{equation}%
where
\begin{equation}
x=\frac{1}{1-\bar{c}}\ln \left( \frac{\left\vert 1-\bar{c}\right\vert }{%
\sqrt{2}}ue^{z}\right) ~,~y=\frac{1}{1+\bar{c}}\ln \left( \frac{1+\bar{c}}{%
\sqrt{2}}ue^{-z}\right) ~,~\bar{c}=\frac{c}{\sqrt{6k}}\neq 1  \label{LRS.08}
\end{equation}%
the Lagrangian (\ref{LRS.04}) becomes
\begin{equation}
L=-\frac{2}{3}\dot{u}^{2}+u^{2}\left( \frac{2}{3}\dot{z}^{2}-\frac{3}{8kV_{0}%
}\dot{\beta}^{2}\right) -\frac{\mu ^{2}}{2}u^{2}+\frac{kV_{0}\rho _{0}}{\mu
^{2}}\frac{1}{u^{2}}~\ ,~\mu ^{2}=kV_{0}\left( 1-\bar{c}^{2}\right) \neq 0.
\label{LRS.09}
\end{equation}

We consider a 2D Riemannian space with metric defined by the kinematic terms
of the Lagrangian, that is%
\begin{equation}
ds^{2}=\left( -6da^{2}+\frac{3}{2}a^{2}d\beta ^{2}+ka^{2}d\phi ^{2}\right)
ae^{-\frac{c}{2}\phi }  \label{LRS.05}
\end{equation}

We show easily that this metric admits the gradient HV $H^{i}=\frac{4}{%
6k-c^{2}}\left( ka\partial _{a}+c\partial _{\phi }\right) ~~$with gradient
function~$H=\frac{8k\varepsilon }{c^{2}-6k}ae^{-\frac{c}{2}\phi }.~\ $
Therefore the Lagrangian (\ref{LRS.04}) defines an autonomous Hamiltonian
Riemannian Kepler Ermakov system with potential ($\mu \neq 0$)
\begin{equation}
V(u,y^{A})=-\frac{1}{2}\mu ^{2}u^{2}+\frac{kV_{0}\rho _{0}}{\mu ^{2}}\frac{1%
}{u^{2}}.  \label{LRS.09a}
\end{equation}%
Because $\mu \neq 0$ this Lagrangian admits $sl(2,R)$ invariance only for
the representation (\ref{NCGE.04}) (an additional result which shows the
necessity for the consideration of the cases $\mu =0$ and $\mu \neq 0!).$

Using proposition \ref{prop ghv}, we write the Ermakov invariant
\begin{equation}
J=u^{4}\left( \frac{2}{3}\dot{z}^{2}+\frac{3}{8kV_{0}}\dot{\beta}^{2}\right)
+\frac{kV_{0}\rho _{0}}{\mu ^{2}}\frac{1}{u^{2}}.  \label{LRS.11}
\end{equation}%
The second invariant is the Hamiltonian
\begin{equation}
E=-\frac{2}{3}\dot{u}^{2}+u^{2}\left( \frac{2}{3}\dot{z}^{2}+\frac{3}{8kV_{0}%
}\dot{\beta}^{2}\right) +\frac{\mu ^{2}}{2}u^{2}-\frac{kV_{0}\rho _{0}}{\mu
^{2}}\frac{1}{u^{2}}.  \label{LRS.12}
\end{equation}

We find that the Lagrangian admits three more Noether symmetries%
\begin{equation}
\partial _{\beta },~\partial _{z}~,~z\partial _{\beta }-\beta \partial _{z}
\label{LRS.13}
\end{equation}%
with corresponding integrals%
\begin{equation}
I_{1}=u^{2}\dot{\beta}~,~I_{2}=u^{2}\dot{z}~,~I_{3}=u^{2}\left( \frac{3}{%
8kV_{0}}z\dot{\beta}-\frac{2}{3}\beta \dot{z}\right) .  \label{LRS.14}
\end{equation}%
It is easy to show that three of the integrals are in involution, therefore
the system is Liouville integrable.

\subsection{The case of $f\left( R\right) $ Cosmology}

\label{frgravity}

Consider the modified Einstein-Hilbert action
\begin{equation}
S=\int d^{4}x\sqrt{-g}f\left( R\right)  \label{MLRS.01}
\end{equation}%
where $f(R)$ is a smooth function of the curvature scalar $R.$ The resulting
field equations for this action in the metric variational approach are \cite%
{Sotiriou}
\begin{equation}
f^{\prime }R_{ab}-\frac{1}{2}fg_{ab}+g_{ab}\square f^{\prime
}-f_{;ab}^{\prime }=0  \label{MLRS.02}
\end{equation}%
where $f^{\prime }=\frac{df\left( R\right) }{dR}~$and $f^{\prime \prime
}\neq 0.~$In the LRS spacetime (\ref{LRS.01}), with $N\left( t\right) =1,$
these equations for comoving observers are the Euler-Lagrange equations of
the Lagrangian
\begin{equation}
L=\left( 6af^{\prime }\dot{a}^{2}+6a^{2}f^{\prime \prime }\dot{a}\dot{R}-%
\frac{3}{2}f^{\prime }a^{3}\dot{\beta}^{2}\right) +a^{3}\left( f^{\prime
}R-f\right) .  \label{MLRS.03}
\end{equation}%
The Hamiltonian is%
\begin{equation}
E=\left( 6af^{\prime }\dot{a}^{2}+6a^{2}f^{\prime \prime }\dot{a}\dot{R}-%
\frac{3}{2}f^{\prime }a^{3}\dot{\beta}^{2}\right) -a^{3}\left( f^{\prime
}R-f\right) =0.  \label{MLRS.05}
\end{equation}%
Again we consider the 3d Riemannian space whose metric is defined by the
kinematic part of the Lagrangian (\ref{MLRS.03})
\begin{equation}
ds^{2}=12af^{\prime }da^{2}+12a^{2}f^{\prime \prime }da~dR-3a^{3}f^{\prime
}d\beta ^{2}.  \label{MLRS.04}
\end{equation}%
This metric admits the gradient HV%
\begin{equation}
H^{i}=\frac{1}{2}\left( a\partial _{a}+\frac{f^{\prime }}{f^{\prime \prime }}%
\partial _{R}\right)  \label{MLRS.06}
\end{equation}%
with gradient function $H=3a^{3}f^{\prime }.$

In order to determine the function $f(R)$ we demand the geometric condition
that Lagrangian (\ref{MLRS.03}) admits $s(2,R)$ invariance via Noether
symmetries. Then for each representation (\ref{NCGE.04}), (\ref{NCGE.05}) we
have a different function $f(R)$ hence a different physical theory.

The representation (\ref{NCGE.04}) in the present context is:%
\begin{equation}
\partial _{t}~,~2t\partial _{t}+\frac{1}{2}\left( a\partial _{a}+\frac{%
f^{\prime }}{f^{\prime \prime }}\partial _{R}\right) ~,~t^{2}\partial _{t}+%
\frac{t}{2}\left( a\partial _{a}+\frac{f^{\prime }}{f^{\prime \prime }}%
\partial _{R}\right) .  \label{SL.2R1}
\end{equation}%
The Noether conditions become
\begin{equation}
-4a^{3}f^{\prime }R+\frac{7}{2}a^{3}f+p=0.
\end{equation}%
These vectors are Noether symmetries if $p=0$ and
\begin{equation}
f\left( R\right) =R^{\frac{7}{8}}.
\end{equation}%
However power law $f\left( R\right) $ theories are not cosmologically viable
\cite{Amendola2}.

The second representation (\ref{NCGE.05}) in the present context gives the
vectors%
\begin{equation}
\partial _{t}~,~\frac{1}{\mu }e^{\pm 2\mu t}\partial _{t}\pm \frac{1}{2}%
e^{\pm 2\mu t}\left( a\partial _{a}+\frac{f^{\prime }}{f^{\prime \prime }}%
\partial _{R}\right) .  \label{SL.2R2}
\end{equation}%
The Noether conditions give
\begin{equation}
-4a^{3}f^{\prime }R+\frac{7}{2}a^{3}f+3\mu ^{2}a^{3}f^{\prime }+p=0.
\label{SL.2R2.1}
\end{equation}%
These vectors are Noether symmetries if the constant $p=0$ and \ the
function
\begin{equation}
f\left( R\right) =\left( R-2\Lambda \right) ^{\frac{7}{8}}  \label{SL.2R2.2}
\end{equation}%
where $2\Lambda =3\mu ^{2}.$ This model is the viable $\Lambda _{bc}$%
CDM-like cosmological with $b=1,c=\frac{7}{8}$. \cite{Amendola}.

We note that if we had not considered the latter representation then we
would loose this interesting result. The importance of the result is due to
the fact that it follows from a geometric assumption which is beyond and
above the physical considerations. Furthermore the assumption of Noether
symmetries provides the Noether integrals which allow for an analytic
solution of the model.

For the function (\ref{SL.2R2.2}) the Lagrangian (\ref{MLRS.03}) becomes for
both cases (if $\Lambda =0$ we have the power-law $f\left( R\right) =R^{%
\frac{7}{8}}$)
\begin{equation}
L=\frac{21}{4}a\left( R-2\Lambda \right) ^{-\frac{1}{8}}\dot{a}^{2}-\frac{21%
}{16}a^{2}\left( R-2\Lambda \right) ^{-\frac{9}{8}}\dot{a}\dot{R}-\frac{21}{8%
}a^{3}\left( R-2\Lambda \right) ^{-\frac{1}{8}}\dot{\beta}^{2}-\frac{a^{3}}{8%
}\frac{\left( R-16\Lambda \right) }{\left( R-2\Lambda \right) ^{\frac{1}{8}}}%
.  \label{MLRS.10}
\end{equation}%
Furthermore there exist a coordinate transformation for which the metric (%
\ref{MLRS.04}) is written in the form of (\ref{GERS.02}).

We introduce new variables $u,v,w$ with the relations
\begin{equation}
a=\left( \frac{21}{4}\right) ^{-\frac{1}{3}}\sqrt{ue^{v}}~,~R=2\Lambda +%
\frac{e^{12v}}{u^{4}},~\beta =\sqrt{2}w.
\end{equation}%
In the new variables the Lagrangian (\ref{MLRS.10}) takes the form
\begin{equation}
L=\frac{1}{2}\dot{u}^{2}-\frac{1}{2}u^{2}\left( \dot{v}^{2}+\dot{w}%
^{2}\right) +\frac{\mu ^{2}}{2}u^{2}-\frac{1}{42}\frac{e^{12v}}{u^{2}}.
\label{MLRS.11}
\end{equation}%
The Hamiltonian (\ref{MLRS.05}) in the new coordinates is%
\begin{equation}
E=\frac{1}{2}\dot{u}^{2}-\frac{1}{2}u^{2}\left( \dot{v}^{2}+\dot{w}%
^{2}\right) -\frac{\mu ^{2}}{2}u^{2}+\frac{1}{42}\frac{e^{12v}}{u^{2}}.
\end{equation}%
The Lagrangian (\ref{MLRS.11}) defines a Hamiltonian Riemannian Kepler
Ermakov system with potential%
\begin{equation*}
V\left( u,v\right) =-\frac{\mu ^{2}}{2}u^{2}+\frac{1}{42}\frac{e^{12v}}{u^{2}%
}
\end{equation*}%
from which follows the potential $V(v)=\frac{1}{42}e^{12v}.~\ $ In addition
to the Hamiltonian the dynamical system admits the Riemannian Ermakov
invariant%
\begin{equation}
J_{f}=u^{4}\left( \dot{v}^{2}+\dot{w}^{2}\right) +\frac{1}{21}e^{12v}.
\end{equation}

The Lagrangian (\ref{MLRS.11}) admits the extra Noether point symmetry $%
\partial _{w}$ with Noether integral $I_{w}=u^{2}\dot{w}$ ~(see Table \ref%
{PtFRWEE}). The three integrals $E,I_{w}~$and $J_{f}$ are in involution and
independent, therefore the system is integrable.

\section{Conclusion}

\label{Conclusion}

In this Chapter we have considered the generalization of the autonomous
Kepler Ermakov dynamical system in the spirit of Leach \cite{Leach1991},
that is using invariance with respect to the $sl(2,R)$ Lie and Noether
algebra. We have generalized the autonomous Newtonian Hamiltonian Kepler
Ermakov system to three dimensions using Noether rather than Lie point
symmetries and have determined all such systems which are Liouville
integrable via Noether point symmetries. We introduced the autonomous
Riemannian Kepler Ermakov system in a Riemannian space which admits a
gradient HV. This system is the generalization of the autonomous Euclidian
Kepler Ermakov system and opens new fields of applications for the
autonomous Kepler Ermakov system, especially in relativistic Physics. Indeed
we have determined the autonomous Riemannian Kepler Ermakov system in a
spatially flat FRW\ spacetime which admits a gradient HV. As a further
application we have considered two types of cosmological models, which are
described by the autonomous Riemannian Kepler Ermakov system, the scalar
field cosmology with exponential potential and $f(R)$ gravity in an LRS
spacetime.

%TCIMACRO{\TeXButton{newpage}{\newpage}}%
%BeginExpansion
\newpage%
%EndExpansion

%TCIMACRO{\TeXButton{AppendixA}{\begin{subappendices}}}%
%BeginExpansion
\begin{subappendices}%
%EndExpansion

\section{Appendix}

\label{apen1force}

We require that the force admits two Lie symmetries which are due to the
gradient HV $H=u\partial _{u}~$(if we require the force to be invariant
under three Lie symmetries which are due to the gradient HV then it is
reduced to the isotropic oscillator). From Theorem \ref{The general
conservative system}\ of Chapter \ref{chapter3} we have the following cases.

(I) Case $\mu =0$\newline
In this case the Lie symmetries are
\begin{equation*}
\partial _{s},~2s\partial _{s}+u\partial _{u},~s^{2}\partial _{s}+su\partial
_{u}.
\end{equation*}

The condition which the force must satisfy is%
\begin{equation*}
L_{H}F^{i}+dF^{i}=0.
\end{equation*}%
Replacing components we find the equations%
\begin{eqnarray*}
\left( \frac{\partial }{\partial u}F^{u}\right) u+\left( d-1\right) F^{u}
&=&0 \\
\left( \frac{\partial }{\partial u}F^{A}\right) u+dF^{A} &=&0
\end{eqnarray*}%
from which follows
\begin{equation*}
F^{u}=\frac{1}{u^{\left( d-1\right) }}F^{u}~,~F^{A}=\frac{1}{u^{d}}F^{A}.
\end{equation*}

Because the HV\ is gradient, Case II of Theorem \ref{The general
conservative system} applies and gives the condition
\begin{equation*}
L_{H}F^{i}+4F^{i}+a_{1}H^{i}=0
\end{equation*}%
from which follows $a_{1}=0$ and \ $d=4$. Therefore
\begin{equation*}
F^{u}=\frac{1}{u^{3}}G^{u}\left( y^{C}\right) ~,~F^{A}=\frac{1}{u^{4}}%
G^{A}\left( y^{C}\right) .
\end{equation*}

(II) Case $\mu \neq 0$\newline
In this case the Lie symmetries are
\begin{equation*}
\partial _{s},~\frac{1}{\mu }e^{\pm 2\mu s}\partial _{s}\pm e^{\pm 2\mu
s}u\partial _{u}
\end{equation*}%
The condition which the force must satisfy is%
\begin{equation*}
L_{H}F^{i}+4F^{i}+a_{1}H^{i}=0
\end{equation*}%
We demand $a_{1}\neq 0$ and obtain the system of equations:%
\begin{eqnarray*}
\left( \frac{\partial }{\partial u}F^{u}\right) u+3F^{u}+a_{1}u &=&0 \\
\left( \frac{\partial }{\partial u}F^{A}\right) u+4F^{A} &=&0
\end{eqnarray*}%
whose solution is
\begin{equation*}
F^{u}=\mu ^{2}u+\frac{1}{u^{3}}G^{u}~,~F^{A}=\frac{1}{u^{4}}G^{A}
\end{equation*}%
where we have set $a_{1}=-4\mu ^{2}$.

%TCIMACRO{\TeXButton{AppendixA}{\end{subappendices}}}%
%BeginExpansion
\end{subappendices}%
%EndExpansion

\part{Symmetries of PDEs}

\chapter{Lie symmetries of a general class of PDEs \label{chapter5}}

\section{Introduction}

In the previous chapters we studied the relation between point symmetries
(Lie and Noether symmetries) of second order ordinary differential
equations. Particularly, we considered the case of geodesic equations and
the equations of motion of a particle moving in a Riemannian space. We made
clear that there exists a unique relation between the point symmetries and
the special projective Lie algebra of the underlying space in which the
motion occurs.

In subsequent sections, we will attempt to extend the relation between point
symmetries and collineations of the space to the case of second order
partial differential equations. Obviously, a global answer to this problem
is not possible. However, it will be shown that for many interesting PDEs,
the Lie point symmetries are indeed obtained from the collineations of the
metric. Pioneering work in this direction is the work of Ibragimov \cite%
{IbragB}. Recently, Bozhkov et al. \cite{Bozhkov} have studied the Lie and
the Noether point symmetries of the Poisson equation and showed that the Lie
symmetries of the Poisson PDE are generated from the conformal algebra of
the metric.

In this chapter we show that for a general class of second order PDEs, there
is a close relation between the Lie symmetries and the conformal algebra of
the underlying space. Subsequently, we apply these results to a number of
interesting PDEs and regain existing results in a unified manner.

In Section \ref{The case of the second order PDE} we examine the generic PDE
of the form
\begin{equation}
A^{ij}u_{ij}-F(x^{i},u,u_{i})=0  \label{GPE.0}
\end{equation}%
and derive the Lie symmetry conditions. Furthermore, in case $A$ is
independent on $u,$ i.e. $A_{,u}=0$, then the Lie point symmetries of (\ref%
{GPE.0}) are related to the Conformal vectors\ (CVs) of the linear
homogeneous differential geometric object $A.$ In Section \ref{The Lie
symmetry conditions for a linear function} we consider $F(x,u,u_{i})$ to be
linear in $u_{i}$ and determine the Lie point symmetry conditions in
geometric form. In sections \ref{PoissonSym} and \ref{HeatCon}, we apply the
results of section \ref{The Lie symmetry conditions for a linear function}
in order to determine the Lie point symmetries of the\ Poisson equation, the
Yamabe equation and the heat equation with flux in a $n~$dimensional
Riemannian space. It will be shown that the Lie symmetry vectors of the
Poisson and the Yamabe equation are obtained from the conformal algebra of
the geometric object $A^{ij}$ \cite{Bozhkov} whereas the Lie symmetries of
the heat equations are obtained from the homothetic algebra of the metric.
Furthermore, we determine the Lie symmetries of Laplace equation, the Yamabe
equation and the homogeneous heat equation in various Riemannian spaces.

\section{The case of the second~order PDEs}

\label{The case of the second order PDE}

Attempting to establish a general relation between the Lie symmetries of a
second order PDE of the form (\ref{GPE.0}) and the collineations of a
Riemannian space we derive the Lie symmetry conditions of (\ref{GPE.0}) and
relate them with the collineations of the coefficients $A^{ij}(x,u)$ which
we consider to be the components of a metric. According to the standard
approach \cite{StephaniB,BlumanB,IbragB,OlverB} the symmetry condition is
\begin{equation}
X^{[2]}(H)=\lambda H~~,~modH=0  \label{GPE.10}
\end{equation}%
where $\lambda (x^{i},u,u_{i})$ is a function to be determined. $X^{[2]}$ is
the second prolongation of the Lie symmetry vector
\begin{equation}
X=\xi ^{i}\left( x^{i},u\right) \frac{\partial }{\partial x^{i}}+\eta \left(
x^{i},u\right) \frac{\partial }{\partial u}  \label{GPE.10.1}
\end{equation}%
given by the expression
\begin{equation}
X^{\left[ 2\right] }=\xi ^{i}\frac{\partial }{\partial x^{i}}+\eta \frac{%
\partial }{\partial u}+\eta _{i}^{\left[ 1\right] }\frac{\partial }{\partial
u_{i}}+\eta _{ij}^{\left[ 1\right] }\frac{\partial }{\partial u_{ij}}
\label{GPE.10a}
\end{equation}%
where\footnote{%
See section \ref{inF}.}%
\begin{eqnarray*}
\eta _{i}^{\left[ 1\right] } &=&\eta _{,i}+u_{i}\eta _{u}-\xi
_{,i}^{j}u_{j}-u_{i}u_{j}\xi _{,u}^{j} \\
\eta _{ij}^{\left[ 2\right] } &=&\eta _{ij}+(\eta _{ui}u_{j}+\eta
_{uj}u_{i})-\xi _{,ij}^{k}u_{k}+\eta _{uu}u_{i}u_{j}-(\xi
_{.,ui}^{k}u_{j}+\xi _{.,uj}^{k}u_{i})u_{k} \\
&&+\eta _{u}u_{ij}-(\xi _{.,i}^{k}u_{jk}+\xi _{.,j}^{a}u_{ik})-\left(
u_{ij}u_{k}+u_{i}u_{jk}+u_{ik}u_{j}\right) \xi _{.,u}^{k}-u_{i}u_{j}u_{k}\xi
_{uu}^{k}.
\end{eqnarray*}%
The introduction of the function $\lambda (x^{i},u,u_{i})$ in (\ref{GPE.10})
causes the variables $x^{i},u,u_{i}$ to be independent\footnote{%
See Ibragimov \cite{IbragB} p. 115}.

The symmetry condition (\ref{GPE.10}) when applied to (\ref{GPE.0}) gives:%
\begin{equation}
A^{ij}\eta _{ij}^{\left[ 2\right] }+\left( XA^{ij}\right)
u_{ij}-X^{[1]}(F)=\lambda (A^{ij}u_{ij}-F)  \label{GPE.13}
\end{equation}%
from which follows%
\begin{align}
0& =A^{ij}\eta _{ij}-\eta _{,i}g^{ij}F_{,u_{j}}-X(F)+\lambda F  \notag \\
& +2A^{ij}\eta _{ui}u_{j}-A^{ij}\xi _{,ij}^{a}u_{a}-u_{i}\eta
_{u}g^{ij}F_{,u_{j}}+\xi _{,i}^{k}u_{k}g^{ij}F_{,u_{j}}  \notag \\
& +A^{ij}\eta _{uu}u_{i}u_{j}-2A^{ij}\xi _{.,uj}^{k}u_{i}u_{k}+u_{i}u_{k}\xi
_{,u}^{k}g^{ij}F_{,u_{j}}  \notag \\
& +A^{ij}\eta _{u}u_{ij}-2A^{ij}\xi _{.,i}^{k}u_{jk}+(\xi
^{k}A_{,k}^{ij}+\eta A_{,u}^{ij})u_{ij}-\lambda A^{ij}u_{ij}  \notag \\
& -A^{ij}\left( u_{ij}u_{a}+u_{i}u_{ja}+u_{ia}u_{j}\right) \xi
_{.,u}^{a}-u_{i}u_{j}u_{a}A^{ij}\xi _{uu}^{a}.  \label{Po.0}
\end{align}

We note that we cannot deduce the symmetry conditions before we select a
specific form for the function $F(x^{i},u,u_{i}).$ However, we may determine
the conditions which are due to the second derivative of $u$ because in
these terms no $F$ terms are involved. This observation significantly
reduces the complexity of the remaining symmetry conditions. Following this,
we have the condition%
\begin{align*}
0& =A^{ij}\eta _{u}u_{ij}-A^{ij}(\xi _{.,i}^{k}u_{ja}+\xi
_{.,j}^{k}u_{ik})+(\xi ^{k}A_{,k}^{ij}+\eta A_{,u}^{ij})u_{ij}-\lambda
A^{ij}u_{ij} \\
& -A^{ij}\left( u_{ij}u_{a}+u_{i}u_{ja}+u_{ia}u_{j}\right) \xi
_{.,u}^{a}-u_{i}u_{j}u_{a}A^{ij}\xi _{uu}^{a}
\end{align*}%
from which the following system of equations results%
\begin{align*}
A^{ij}\left( u_{ij}u_{k}+u_{jk}u_{i}+u_{ik}u_{j}\right) \xi _{.,u}^{k}& =0 \\
A^{ij}\eta _{u}u_{ij}-A^{ij}(\xi _{.,i}^{k}u_{jk}+\xi _{.,j}^{k}u_{ik})+(\xi
^{k}A_{,k}^{ij}+\eta A_{,u}^{ij})u_{ij}-\lambda A^{ij}u_{ij}& =0 \\
A^{ij}\xi _{,uu}^{a}& =0.
\end{align*}%
The first equation is%
\begin{equation}
A^{ij}\xi _{.,u}^{k}+A^{kj}\xi _{.,u}^{i}+A^{ik}\xi
_{.,u}^{j}=0\Leftrightarrow A^{(ij}\xi _{.,u}^{k)}=0.  \label{Po.1}
\end{equation}%
From the second equation we get
\begin{equation}
A^{ij}\eta _{u}+\eta A_{,u}^{ij}+\xi ^{k}A_{,k}^{ij}-A^{kj}\xi
_{.,k}^{i}-A^{ik}\xi _{.,k}^{j}-\lambda A^{ij}=0.  \label{Po.2}
\end{equation}%
and the last equation gives the constraint%
\begin{equation}
A^{ij}\xi _{,uu}^{k}=0.  \label{Po.2a}
\end{equation}

It can be easily shown that condition (\ref{Po.1}) implies $\xi
_{.,u}^{k}=0,~$which is a well known result\footnote{%
We give a simple proof for $n=2$ in Appendix \ref{appenA5}. A detailed and
more general proof can be found in \cite{BlumanPaper}.}. From the analysis
so far, we obtain the first result

\begin{proposition}
\label{Coefficient xi second} For the infinitesimal generator (\ref{GPE.10.1}%
) for all second~order PDEs\ of the form (\ref{GPE.0}), holds $\xi
_{.,u}^{i}=0,$ that is, $\xi ^{i}=\xi ^{i}(x^{j}).$ Furthermore, condition (%
\ref{Po.2a}) is identically satisfied.
\end{proposition}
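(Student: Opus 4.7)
The plan is to extract $\xi^i_{,u}=0$ from the cubic-derivative condition (\ref{Po.1}) and then observe that (\ref{Po.2a}) becomes vacuous. The starting point is the identity
\begin{equation*}
A^{ij}\xi^{k}_{,u}+A^{jk}\xi^{i}_{,u}+A^{ik}\xi^{j}_{,u}=0,
\end{equation*}
which is the fully symmetric piece extracted from the coefficient of the cubic $u_{i}u_{j}u_{k}$ in the prolonged Lie symmetry condition (after the coefficients of terms involving second derivatives $u_{ij}$ have been collected).

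My approach would be to exploit the natural non-degeneracy of $A^{ij}$ (which is what makes it meaningful to regard $A^{ij}$ as the inverse of a metric $A_{ij}$, as the paper indicates). First, I would contract the displayed identity with $A_{ij}$, using the defining relation $A_{ij}A^{jk}=\delta_{i}^{k}$ and $A_{ij}A^{ij}=n$ where $n$ is the dimension. The three terms then give
\begin{equation*}
n\,\xi^{k}_{,u}+\delta^{k}_{i}\xi^{i}_{,u}+\delta^{k}_{j}\xi^{j}_{,u}=(n+2)\xi^{k}_{,u}=0,
\end{equation*}
from which $\xi^{k}_{,u}=0$ follows immediately in any dimension $n\geq 1$. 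Hence $\xi^{i}=\xi^{i}(x^{j})$, as claimed. Condition (\ref{Po.2a}) then reduces to $A^{ij}\xi^{k}_{,uu}=0$, but since $\xi^{k}$ is already independent of $u$ this is satisfied identically, with no constraint imposed on $A^{ij}$.

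The main obstacle, and the only place where care is needed, is justifying the non-degeneracy of $A^{ij}$ used in the contraction step; if $A^{ij}$ were degenerate the contraction argument would collapse and one would instead have to work coordinate by coordinate, setting $i=j=k$ (giving $A^{ii}\xi^{i}_{,u}=0$, no sum) and then $i=j\neq k$ (giving $A^{ii}\xi^{k}_{,u}+2A^{ik}\xi^{i}_{,u}=0$), and argue componentwise — essentially a linear-algebra exercise in showing that the symmetric tensor $T^{ijk}=A^{(ij}\xi^{k)}_{,u}$ vanishes only when $\xi^{k}_{,u}=0$. This is why the paper notes that the short argument is given only for $n=2$ in the appendix, while referring to \cite{BlumanPaper} for the general case. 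For our purposes, adopting the non-degenerate (metric) viewpoint of $A^{ij}$ — which is in any case the setting relevant to all subsequent applications (Poisson, Yamabe, heat) — yields the cleanest and most transparent proof.
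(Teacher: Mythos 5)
Your proof is correct and follows essentially the same route as the paper's: Appendix~\ref{appenA5} contracts (\ref{Po.1}) with the inverse matrix of $A^{ij}$ to obtain $(n+2)\xi^{k}_{,u}=0$ (written out there for $n=2$, with the generalization declared straightforward), which is exactly your computation, and the vacuity of (\ref{Po.2a}) then follows as you say. The only difference is that the paper additionally carries out the degenerate $n=2$ case componentwise, whereas you correctly identify that case and sketch the componentwise strategy but defer it, as the paper itself does for general $n$, to \cite{BlumanPaper}.
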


There remains the third symmetry condition (\ref{Po.2}). We consider the
following cases.\newline
$i,j\neq 0:$\newline
We write (\ref{Po.2}) in an alternative form by considering $A^{ij}$ to be
linear a homogeneous differential geometric object as follows:%
\begin{equation}
L_{\xi }A^{ij}=\lambda A^{ij}-(\eta A^{ij})_{,u}.  \label{GPE.32}
\end{equation}%
Then it follows:

\begin{proposition}
\label{Coefficient xi third} For the infinitesimal generator (\ref{GPE.10.1}%
) for all second~order PDEs \ of the form (\ref{GPE.0}) for which $%
A^{ij}{}_{,u}=0,$ i.e. $A^{ij}=A^{ij}(x^{i}),$ the vector $\xi ^{i}$ is a
CKV of the linear homogeneous differential geometric object $A^{ij}$ with
conformal factor ($\lambda -\eta _{u})(x).$
\end{proposition}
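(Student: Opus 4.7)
The plan is to read off the conclusion directly from equation \eqref{Po.2}, which has already been derived in the excerpt as the symmetry condition arising from the coefficient of $u_{ij}$. First, I would invoke Proposition \ref{Coefficient xi second} to ensure that $\xi^{i}=\xi^{i}(x^{j})$, so that no $u$-derivatives of $\xi^{i}$ appear in the formula for a Lie derivative of a contravariant tensor field. Imposing the hypothesis $A^{ij}{}_{,u}=0$ kills the term $\eta A^{ij}{}_{,u}$ in \eqref{Po.2}, leaving
\begin{equation*}
\xi ^{k}A^{ij}{}_{,k}-A^{kj}\xi ^{i}{}_{,k}-A^{ik}\xi ^{j}{}_{,k}=\bigl(\lambda -\eta _{u}\bigr)A^{ij}.
\end{equation*}
The left-hand side is, by the formula \eqref{go.03a} specialized to a rank $(2,0)$ tensor, precisely $L_{\xi }A^{ij}$, so the condition reads
\begin{equation*}
L_{\xi }A^{ij}=\bigl(\lambda -\eta _{u}\bigr)A^{ij},
\end{equation*}
which is the defining equation for $\xi$ to be a CKV of $A^{ij}$ with conformal factor $\psi =\lambda -\eta _{u}$.

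The second step is to check that $\psi$ is genuinely a function of the base coordinates $x^{k}$ alone, as required by the definition of a conformal Killing vector. Since $\xi ^{i}$ and $A^{ij}$ depend only on $x^{k}$, the left-hand side $L_{\xi }A^{ij}$ is a function of $x^{k}$ only. Assuming $A^{ij}$ is non-degenerate (which is tacit in calling it a differential geometric object carrying a conformal structure), contracting with the inverse of $A^{ij}$ one obtains
\begin{equation*}
\lambda -\eta _{u}=\tfrac{1}{n}A_{ij}L_{\xi }A^{ij},
\end{equation*}
a function of $x^{k}$ only. Equivalently, this matches the standard identity $\psi =\tfrac{1}{n}\xi ^{k}{}_{;k}$ for the conformal factor, with covariant derivative taken with respect to the connection compatible with $A^{ij}$.

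The main (and only) subtle point is the last one: \emph{a priori} $\lambda $ is allowed to depend on $u$ and the first derivatives $u_{k}$, while $\eta _{u}$ depends on $(x,u)$, so one must use the structure of the equation \eqref{Po.2} itself to force the $u$- and $u_{k}$-dependence of $\lambda $ to cancel that of $\eta _{u}$, leaving a function of $x$ alone. I would present this as a direct consequence of the equality $(\lambda -\eta _{u})A^{ij}=L_{\xi }A^{ij}(x)$ together with non-degeneracy of $A^{ij}$, rather than as a separate lemma. No further calculation is needed, and the proposition follows.
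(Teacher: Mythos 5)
Your proof is correct and follows essentially the same route as the paper: rewrite the $u_{ij}$-coefficient condition (\ref{Po.2}) as $L_{\xi }A^{ij}=\lambda A^{ij}-(\eta A^{ij})_{,u}$ and specialize to $A^{ij}{}_{,u}=0$. Your additional check that $\lambda -\eta _{u}$ depends on $x^{k}$ alone is a welcome piece of care — the paper only asserts $\lambda =\lambda (x^{i})$ later, after the form of $\eta $ is pinned down — but it does not change the argument in substance.
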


Assuming\footnote{%
The index $t$ refers to the coordinate $x^{0}$ whenever it is involved.} $%
A^{tt}=A^{ti}=0,$ we have\newline
- for $i=j=0$ nothing\newline
- for $i,j\neq 0$ gives (\ref{GPE.32}) and\newline
- for $i=0,j\neq 0$ (\ref{GPE.32}) becomes
\begin{align}
A^{tj}\eta _{u}+\eta A_{,u}^{tj}+\xi ^{k}A_{,k}^{tj}-A^{kj}\xi
_{.,k}^{t}-A^{tk}\xi _{.,k}^{j}-\lambda A^{tj}& =0\Rightarrow  \notag \\
A^{kj}\xi _{.,k}^{t}& =0  \label{GPE.30b}
\end{align}%
which leads to the following general result.

\begin{proposition}
\label{Coefficient xi fourth} For all second~order PDEs \ of the form $%
A^{ij}u_{ij}-F(x^{i},u,u_{i})=0,$ for which $A^{ij}$ is nondegenerate the $%
\xi _{.,k}^{t}=0$, that is, $\xi ^{t}=\xi ^{t}(t)$ .
\end{proposition}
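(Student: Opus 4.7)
The plan is to specialize the second-order symmetry identity (\ref{Po.2}) to the index pair $(i,j)=(t,A)$ with $A$ a spatial index, exploit the block structure $A^{tt}=A^{ti}=0$ to strip the relation down to a single surviving term, and then invoke non-degeneracy of the remaining spatial block to conclude. The result will follow with essentially no new calculation beyond what has already been done in deriving (\ref{Po.2}); the only care required is in how one interprets ``non-degenerate.''

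The first step I would carry out is to substitute $i=t$ directly into (\ref{Po.2}), producing
\[
A^{tj}\eta_u + \eta A^{tj}_{,u} + \xi^k A^{tj}_{,k} - A^{kj}\xi^t_{,k} - A^{tk}\xi^j_{,k} - \lambda A^{tj} = 0 .
\]
Under the standing assumption $A^{tt}=A^{ti}=0$, every factor of the form $A^{tj}$, $A^{tj}_{,u}$, $A^{tj}_{,k}$ and $A^{tk}$ vanishes identically (derivatives of identically zero functions are zero), so the identity collapses to $A^{kj}\xi^t_{,k}=0$ for each $j\neq t$. Moreover, since $A^{tk}=0$ for every $k$, the sum over $k$ retains only the spatial indices, giving $A^{BA}\xi^t_{,B}=0$ for each spatial $A$, where I use capital Latin letters for the spatial coordinates.

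The second and final step is to invoke non-degeneracy: reading $(A^{BA})$ as an invertible square matrix on the spatial slice, I can multiply through by its inverse and conclude $\xi^t_{,B}=0$ for every spatial $B$. Combined with the earlier result (\emph{cf.} Proposition on $\xi^i_{,u}=0$) that $\xi^i=\xi^i(x^j)$ is $u$-independent, this leaves $\xi^t$ as a function of $t$ alone, which is precisely the asserted claim.

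The main subtlety — the one step I would flag explicitly rather than sweep under the rug — is the interpretation of ``$A^{ij}$ is non-degenerate'' in the presence of the hypothesis $A^{tt}=A^{ti}=0$, since the full $(n{+}1)\times(n{+}1)$ matrix is then manifestly singular. The natural reading, and the one consistent with the heat- and Schr\"{o}dinger-type applications treated in the sections that follow, is that non-degeneracy refers to the restricted spatial block $(A^{BA})$, which plays the role of a Riemannian metric on the spatial leaves while the time direction enters only through the lower-order term $F$. Once this convention is fixed, the argument reduces to the one-line specialization of (\ref{Po.2}) described above and presents no further obstacle.
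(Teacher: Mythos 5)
Your proposal is correct and follows essentially the same route as the paper: the paper likewise specializes the second-derivative symmetry condition (\ref{Po.2}) (equivalently (\ref{GPE.32})) to the index pair $i=t$, $j\neq t$, uses $A^{tt}=A^{ti}=0$ to reduce it to $A^{kj}\xi^{t}_{,k}=0$, and concludes $\xi^{t}=\xi^{t}(t)$ from non-degeneracy together with the earlier result $\xi^{i}_{,u}=0$. Your explicit remark that ``non-degenerate'' must refer to the spatial block $(A^{AB})$ is a sensible clarification that the paper leaves implicit.
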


By using that $\xi _{,u}^{i}=0,$ the symmetry condition (\ref{Po.0}) is
simplified as follows
\begin{align}
0& =A^{ij}\eta _{ij}-\eta _{,i}A^{ij}F_{,u_{j}}-X(F)+\lambda F  \notag \\
& +2A^{ij}\eta _{ui}u_{j}-A^{ij}\xi _{,ij}^{a}u_{a}-u_{i}\eta
_{u}A^{ij}F_{,u_{j}}+\xi _{,i}^{k}u_{k}A^{ij}F_{,u_{j}}  \notag \\
& +A^{ij}\eta _{uu}u_{i}u_{j}+A^{ij}\eta _{u}u_{ij}-2A^{ij}\xi
_{.,i}^{k}u_{jk}  \label{GPE.30} \\
& +(\xi ^{k}A_{,k}^{ij}+\eta A_{,u}^{ij})u_{ij}-\lambda A^{ij}u_{ij}  \notag
\end{align}%
which together with the condition (\ref{GPE.32}) are the complete set of
symmetry conditions\emph{\ }for all\emph{\ }second order PDEs of the form $%
A^{ij}u_{ij}-F(x^{i},u,u_{i})=0$. This class of PDEs is quite general. This
fact makes the above result very useful..

In order to continue, we need to consider special forms for the function $%
F(x,u,u_{i}).$

\section{The Lie symmetry conditions for a linear function $F(x,u,u_{i})$}

\label{The Lie symmetry conditions for a linear function}

Consider the function $F(x,u,u_{i})$ to be linear in $u_{i}$, that is, to be
of the form%
\begin{equation}
F(x,u,u_{i})=B^{k}(x,u)u_{k}+f(x,u)  \label{GPE.30a}
\end{equation}%
where $B^{k}(x,u)~$and $~f(x,u),$ are arbitrary functions of their
arguments. In this case, the PDE (\ref{GPE.0})\ is of the form%
\begin{equation}
A^{ij}u_{ij}-B^{k}(x,u)u_{k}-f(x,u)=0.  \label{GPE.30.1}
\end{equation}

The Lie symmetries of this type of PDEs have been studied previously by
Ibragimov\cite{IbragB}. Assuming that at least one of the components of $%
\mathbf{A}$ is $A_{ij}\neq 0$ the Lie symmetry conditions are (\ref{GPE.30})
and (\ref{GPE.32}).

Replacing $F(x,u,u_{1})$ in (\ref{GPE.30}) we find\footnote{%
We ignore the terms with $u_{ij}$ because we have already used them to
obtain condition (\ref{GPE.32}). Indeed, it is clear that these terms give $%
A^{ij}\eta _{u}-2A^{ij}\xi _{.,i}^{k}+\xi ^{k}A_{,k}^{ij}+\eta
A_{,u}^{ij}-\lambda A^{ij}=0,$ which is precisely condition (\ref{GPE.32}).}%
\begin{align}
0& =A^{ij}\eta _{ij}-\eta _{,i}g^{ij}B_{j}-\xi ^{k}f_{,k}-\eta
f_{,u}+\lambda f  \notag \\
& +2A^{ij}\eta _{ui}u_{j}-A^{ij}\xi _{,ij}^{a}u_{a}-u_{i}\eta
_{u}g^{ij}B_{j}+\xi _{,i}^{k}u_{k}g^{ij}B_{j}+\lambda B^{k}u_{k}-\eta
B_{,u}^{k}u_{k}-\xi ^{l}B_{,l}^{k}u_{k}  \notag \\
& +A^{ik}\eta _{uu}u_{i}u_{k}  \label{GPE.33} \\
& +A^{ij}\eta _{u}u_{ij}-2A^{kj}\xi _{.,k}^{i}u_{ji}+(\xi
^{k}A_{,k}^{ij}+\eta A_{,u}^{ij})u_{ij}-\lambda A^{ij}u_{ij}
\end{align}%
from which the subsequent equations follow%
\begin{align}
A^{ij}\eta _{ij}-\eta _{,i}B^{i}-\xi ^{k}f_{,k}-\eta f_{,u}+\lambda f& =0
\label{GPE.34} \\
-2A^{ik}\eta _{ui}+A^{ij}\xi _{,ij}^{k}+\eta _{u}B^{k}-\xi
_{,i}^{k}B^{i}+\xi ^{i}B_{,i}^{k}-\lambda B^{k}+\eta B_{,u}^{k}& =0
\label{GPE.35} \\
A^{ik}\eta _{uu}& =0.  \label{GPE.36}
\end{align}%
Equation (\ref{GPE.36}) gives (because at least one $A^{ik}\neq 0$)
\begin{equation}
\eta =a(x^{i})u+b(x^{i}).  \label{GPE.37}
\end{equation}%
Equation (\ref{GPE.35}) gives the constraint
\begin{equation*}
-2A^{ik}a_{,i}+aB^{k}+auB_{,u}^{k}+A^{ij}\xi _{,ij}^{k}-\xi
_{,i}^{k}B^{i}+\xi ^{i}B_{,i}^{k}-\lambda B^{k}+bB_{,u}^{k}=0.
\end{equation*}

We summarize the above results as follows.

\begin{proposition}
\label{propPDE.1} The determining equations (the Lie symmetry conditions)
for the second order PDEs (\ref{GPE.30.1}), in which at least one of the $%
A_{ij}\neq 0,$ are%
\begin{equation}
A^{ij}(a_{ij}u+b_{ij})-(a_{,i}u+b_{,i})B^{i}-\xi
^{k}f_{,k}-auf_{,u}-bf_{,u}+\lambda f=0  \label{GPE.42}
\end{equation}%
\begin{equation}
A^{ij}\xi _{,ij}^{k}-2A^{ik}a_{,i}+aB^{k}+auB_{,u}^{k}-\xi
_{,i}^{k}B^{i}+\xi ^{i}B_{,i}^{k}-\lambda B^{k}+bB_{,u}^{k}=0  \label{GPE.43}
\end{equation}%
\begin{equation}
L_{\xi ^{i}\partial _{i}}A^{ij}=(\lambda -a)A^{ij}-\eta A^{ij}{}_{,u}
\label{GPE.44}
\end{equation}%
\begin{align}
\eta & =a(x^{i})u+b(x^{i})  \label{GPE.45} \\
\xi _{,u}^{k}& =0\Leftrightarrow \xi ^{k}(x^{i}).  \label{GPE.46}
\end{align}
\end{proposition}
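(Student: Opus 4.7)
The plan is to view Proposition \ref{propPDE.1} as the final bookkeeping step that collects conditions already derived in Section \ref{The case of the second order PDE} together with the specialization $F=B^{k}u_{k}+f$ carried out at the start of Section \ref{The Lie symmetry conditions for a linear function}. Concretely, I would begin from the prolongation formula for $X^{[2]}$, apply the symmetry criterion $X^{[2]}(H)=\lambda H$ with $H=A^{ij}u_{ij}-B^{k}u_{k}-f$, and then organize the resulting identity by polynomial degree in the jet variables $u_{i}$ and $u_{ij}$, which are independent after the introduction of the multiplier $\lambda(x,u,u_{i})$.

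First I would invoke Propositions \ref{Coefficient xi second} and \ref{Coefficient xi third}, which are proved in the excerpt from the $u_{ij}u_{k}$ and $u_{ij}$ coefficients respectively. These already give $\xi^{k}_{,u}=0$ (so $\xi^{k}=\xi^{k}(x^{i})$) and the geometric condition
\begin{equation*}
L_{\xi^{i}\partial_{i}}A^{ij}=(\lambda-\eta_{u})A^{ij}-\eta A^{ij}{}_{,u},
\end{equation*}
which is exactly (\ref{GPE.44}) once the final form of $\eta$ is in hand. The remaining content of the symmetry condition is equation (\ref{GPE.33}), and since the $u_{ij}$ terms there are absorbed by (\ref{GPE.32}), only three blocks of coefficients survive.

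Next I would collect coefficients according to the monomial type. The quadratic-in-$u_{i}$ block gives $A^{ij}\eta_{uu}=0$; because some $A^{ij}\neq 0$ this forces $\eta_{uu}=0$, hence
\begin{equation*}
\eta(x,u)=a(x^{i})u+b(x^{i}),
\end{equation*}
which is (\ref{GPE.45}). Substituting this $\eta$ back into the linear-in-$u_{k}$ block and using $\xi^{k}_{,u}=0$ kills the $\eta_{ui}u_{j}$ contributions and leaves precisely equation (\ref{GPE.35}), which after rewriting in terms of $a,b$ yields (\ref{GPE.43}). The remaining terms — the ones independent of derivatives of $u$ — reproduce (\ref{GPE.34}), which after substitution of $\eta=au+b$ is exactly (\ref{GPE.42}). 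At this point all five listed determining equations have been recovered.

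The main obstacle is bookkeeping rather than anything conceptual: one must keep the multiplier $\lambda(x,u,u_{i})$ in mind when separating coefficients (so that $\lambda A^{ij}u_{ij}$, $\lambda B^{k}u_{k}$ and $\lambda f$ are distributed correctly across the three blocks), and one must verify that the $\eta_{ui}u_{j}$ and $\xi^{k}_{,ij}u_{a}$ cross-terms, once the constraints $\xi^{k}_{,u}=0$ and $\eta_{uu}=0$ are imposed, do not produce any additional independent condition beyond the three equations (\ref{GPE.42})--(\ref{GPE.44}). Once this consistency check is done, the proposition follows immediately.
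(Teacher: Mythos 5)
Your proposal is correct and follows essentially the same route as the paper: after fixing $\xi^{k}_{,u}=0$ and the conformal condition on $A^{ij}$ from the second-derivative terms, one substitutes $F=B^{k}u_{k}+f$ into the residual symmetry condition and splits it into the quadratic, linear and constant blocks in $u_{i}$, obtaining $\eta_{uu}=0$ (hence $\eta=au+b$), the constraint (\ref{GPE.43}), and the constraint (\ref{GPE.42}) respectively. The only slight imprecision is the claim that $\xi^{k}_{,u}=0$ "kills" the $\eta_{ui}u_{j}$ contributions: these survive as the $-2A^{ik}a_{,i}$ term in (\ref{GPE.43}), exactly as in the paper's equation (\ref{GPE.35}).
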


Pay paricular attention to the fact that for all second order PDEs of the
form (\ref{GPE.30.1}), for which, \ $A_{,u}^{ij}=0$, i.e $A^{ij}=$ $%
A^{ij}(x^{i}),$ the $\xi ^{i}(x^{j})$ is a CKV of the metric $A^{ij}.$ Also,
in this case $\lambda =\lambda (x^{i}).$ This result establishes the
relation between the Lie symmetries of this type of PDEs with the
collineations of the metric defined by the coefficients $A_{ij}.$

Moreover, in case the coordinates are $t,x^{i}$ (where $i=1,...,n$) $%
A^{tt}=A^{tx^{i}}=0$ and $A^{ij}$ is a nondegenerate metric we have that%
\begin{equation}
\xi _{,i}^{t}=0\Leftrightarrow \xi ^{t}(t).  \label{GPE.46a}
\end{equation}

These symmetry relations coincide with those given in \cite{IbragB}.
Finally, note that equation (\ref{GPE.43}) can be written as%
\begin{equation}
A^{ij}\xi _{,ij}^{k}-2A^{ik}a_{,i}+[\xi ,B]^{k}+(a-\lambda
)B^{k}+(au+b)B_{,u}^{k}=0.  \label{GPE.47}
\end{equation}%
Having derived the Lie symmetry conditions for the type of PDEs (\ref%
{GPE.30.1}) we continue with the computation of the Lie symmetries of some
important PDEs of this form. Before we proceed, we state two Lemmas which
will be used later (for details, see Appendix \ref{appendixP}).

\begin{lemma}
\label{LemmaPDE.1}For the Lie derivative of the connection coefficients, the
following properties hold.

a. In flat space (in which $\Gamma _{jk}^{i}=0)$ the following identity
holds:%
\begin{equation}
L_{\xi }\Gamma _{ij}^{k}=\xi _{,ij}^{k}.
\end{equation}%
b. For a general metric $g_{ij}$ satisfying the condition $L_{\xi
^{i}\partial _{i}}g_{ij}=-(\lambda -a)g_{ij}$ the following relation holds:
\begin{equation}
g^{jk}L_{\xi }\Gamma _{.jk}^{i}=g^{jk}\xi _{,~jk}^{i}+\Gamma _{~,l}^{i}\xi
^{l}-\xi _{~,l}^{i}\Gamma ^{l}+(a-\lambda )\Gamma _{~}^{i}.
\end{equation}
\end{lemma}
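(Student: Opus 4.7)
\textbf{Proof proposal for Lemma \ref{LemmaPDE.1}.} The plan is to derive both identities from the universal coordinate expression for the Lie derivative of a symmetric connection that was already recorded in the preceding chapter (see equation~(\ref{go.04})):
\begin{equation*}
L_{\xi }\Gamma _{jk}^{i}=\xi _{,jk}^{i}+\Gamma _{jk,r}^{i}\xi ^{r}-\xi _{,r}^{i}\Gamma _{jk}^{r}+\xi _{,j}^{s}\Gamma _{sk}^{i}+\xi _{,k}^{s}\Gamma _{js}^{i}.
\end{equation*}
Part (a) is then immediate: in a Cartesian chart for flat space, every $\Gamma _{jk}^{i}$ together with every partial derivative $\Gamma _{jk,r}^{i}$ vanishes identically, so the only surviving term on the right-hand side is the second partial derivative $\xi _{,jk}^{i}$, which is exactly the stated identity.

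For part (b) I would contract the displayed formula with $g^{jk}$ and set $\Gamma ^{i}:=g^{jk}\Gamma _{jk}^{i}$ as suggested by the statement. Using symmetry of $g^{jk}$ together with $\Gamma _{sk}^{i}=\Gamma _{ks}^{i}$, the last two connection-times-derivative-of-$\xi$ terms merge into a single $2\,g^{jk}\xi _{,j}^{s}\Gamma _{sk}^{i}$. For the term $g^{jk}\Gamma _{jk,l}^{i}\xi ^{l}$, the product rule gives
\begin{equation*}
g^{jk}\Gamma _{jk,l}^{i}=\Gamma ^{i}{}_{,l}-g^{jk}{}_{,l}\Gamma _{jk}^{i},
\end{equation*}
which produces the desired $\Gamma ^{i}{}_{,l}\xi ^{l}$ together with an unwanted correction $-g^{jk}{}_{,l}\xi ^{l}\Gamma _{jk}^{i}$. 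The third term contracts cleanly to $\xi _{,l}^{i}\Gamma ^{l}$.

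The key step is then to eliminate the two unwanted pieces using the hypothesis $L_{\xi }g_{ij}=-(\lambda -a)g_{ij}$, which is equivalent to $L_{\xi }g^{ij}=(\lambda -a)g^{ij}$. Written out in coordinates, the latter yields the identity
\begin{equation*}
\xi ^{l}g^{jk}{}_{,l}=(\lambda -a)\,g^{jk}+g^{lk}\xi _{,l}^{j}+g^{jl}\xi _{,l}^{k}.
\end{equation*}
Contracting this with $\Gamma _{jk}^{i}$ and substituting back, I expect the $g^{lk}\xi _{,l}^{j}\Gamma _{jk}^{i}$ contribution (doubled by the symmetry of $\Gamma $ in the lower indices) to cancel exactly the $2\,g^{jk}\xi _{,j}^{s}\Gamma _{sk}^{i}$ term, after relabelling dummy indices. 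What remains from the conformal hypothesis is precisely $(a-\lambda )\Gamma ^{i}$, which combines with the already-isolated pieces to give the claimed formula.

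The main obstacle is purely bookkeeping: keeping the contraction patterns consistent and verifying that the symmetrization of the two mixed terms $\xi _{,j}^{s}\Gamma _{sk}^{i}$ and $\xi _{,k}^{s}\Gamma _{js}^{i}$ matches, with the right sign, the symmetrization produced by the CKV identity when contracted with $\Gamma _{jk}^{i}$. An off-by-one in symmetrization or a mismatch in the transition between $L_{\xi }g_{ij}$ and $L_{\xi }g^{ij}$ would leave behind a residual term with no counterpart on the right-hand side; so the verification reduces to a careful sign and index check rather than any new geometric input.
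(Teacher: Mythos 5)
Your proposal is correct and follows essentially the same route as the paper's own proof: start from the coordinate formula for $L_{\xi }\Gamma _{jk}^{i}$, contract with $g^{jk}$, rewrite $g^{jk}\Gamma _{jk,l}^{i}\xi ^{l}$ via the product rule as $\Gamma _{,l}^{i}\xi ^{l}-g^{jk}{}_{,l}\xi ^{l}\Gamma _{jk}^{i}$, and use $L_{\xi }g^{jk}=(\lambda -a)g^{jk}$ to trade $\xi ^{l}g^{jk}{}_{,l}$ for the $\xi$-derivative terms plus $(\lambda -a)g^{jk}$. The cancellation you flag as the remaining check does go through: after relabelling, $-\left( g^{lk}\xi _{,l}^{j}+g^{jl}\xi _{,l}^{k}\right) \Gamma _{jk}^{i}=-2g^{lk}\xi _{,l}^{j}\Gamma _{jk}^{i}$ exactly offsets $2g^{jk}\xi _{,j}^{s}\Gamma _{sk}^{i}$, leaving $(a-\lambda )\Gamma ^{i}$ as claimed.
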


\begin{lemma}
\label{LemmaPDE.2}Assume that the vector $\xi ^{i}$ is a CKV of the metric $%
g_{ij}$ with conformal factor $-(\lambda -a)$ i.e. $L_{\xi ^{i}\partial
_{i}}g_{ij}=-(\lambda -a)g_{ij}.$ Then, the following statement is true:%
\begin{equation}
g^{jk}L_{\xi }\Gamma _{.jk}^{i}=\frac{2-n}{2}(a-\lambda )^{,i}
\end{equation}%
where $n=g^{jk}g_{kj}$ is the dimension of the space.
\end{lemma}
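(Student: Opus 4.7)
The plan is to reduce the claim to a direct computation using the standard identity for the Lie derivative of the Levi-Civita connection coefficients of a Riemannian metric, namely
\begin{equation*}
L_{\xi}\Gamma^{i}_{\;jk} \;=\; \tfrac{1}{2}\, g^{il}\!\left[(L_{\xi}g_{lk})_{;j}+(L_{\xi}g_{lj})_{;k}-(L_{\xi}g_{jk})_{;l}\right].
\end{equation*}
This identity already appeared implicitly in the preceding material (it is how $L_{\xi}\Gamma^{i}_{\;jk}$ was expressed in Lemma \ref{LemmaPDE.1}(b)), so I would invoke it without re-deriving it.

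First I would substitute the CKV hypothesis $L_{\xi}g_{ij}=(a-\lambda)g_{ij}$ into the identity above. Using metric compatibility $g_{ij;k}=0$, the covariant derivatives fall only on the scalar $(a-\lambda)$, giving
\begin{equation*}
L_{\xi}\Gamma^{i}_{\;jk} \;=\; \tfrac{1}{2}\!\left[(a-\lambda)_{,j}\,\delta^{i}_{k}+(a-\lambda)_{,k}\,\delta^{i}_{j}-g_{jk}\,(a-\lambda)^{,i}\right].
\end{equation*}
Next I would contract both sides with $g^{jk}$. The first two terms each contribute $(a-\lambda)^{,i}$, while the third contributes $-n(a-\lambda)^{,i}$ because $g^{jk}g_{jk}=n$. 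Adding these yields the desired formula
\begin{equation*}
g^{jk}L_{\xi}\Gamma^{i}_{\;jk} \;=\; \tfrac{2-n}{2}\,(a-\lambda)^{,i}.
\end{equation*}

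There is no genuine obstacle here: the proof is a one-line manipulation once the CKV condition is inserted into the standard identity. The only point to be careful about is a sign/convention check, namely that the conformal factor is read off as $(a-\lambda)$ rather than $-(a-\lambda)$; since the hypothesis is written as $L_{\xi}g_{ij}=-(\lambda-a)g_{ij}=(a-\lambda)g_{ij}$, the arithmetic is consistent. One could optionally note, as a sanity check, that for $n=2$ the right-hand side vanishes, recovering the well-known fact that in two dimensions every CKV is a projective collineation of the Levi-Civita connection in the trace sense $g^{jk}L_{\xi}\Gamma^{i}_{\;jk}=0$.
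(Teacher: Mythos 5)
Your proof is correct and follows essentially the same route as the paper: both start from the identity $L_{\xi}\Gamma^{i}_{\;jk}=\tfrac{1}{2}g^{ir}\left[\nabla_{k}L_{\xi}g_{jr}+\nabla_{j}L_{\xi}g_{kr}-\nabla_{r}L_{\xi}g_{kj}\right]$, substitute $L_{\xi}g_{ij}=(a-\lambda)g_{ij}$ so that the covariant derivatives act only on the scalar factor, and contract with $g^{jk}$ to obtain $\tfrac{2-n}{2}(a-\lambda)^{,i}$. The sign check and the $n=2$ sanity remark are fine additions but the core argument is identical.
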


In the following sections, we study the Lie symmetries of two second order
PDEs which are important in physics. Particularly, we examine the relation
between the Lie symmetries of the Poisson equation and the Heat conduction
equation in a Riemannian manifold with the conformal group of the space. We
will show that the Lie symmetries of the heat conduction equation relates to
the homothetic group of the underlying space whereas the Lie symmetries of
the Poisson equation are associated to the conformal group of the metric
that defines the Laplace operator.

\section{Symmetries of the Poisson equation in a Riemannian space}

\label{PoissonSym}The Lie symmetries of the Poisson equation%
\begin{equation}
\Delta u-f\left( x^{i},u\right) =0.  \label{LE.01a}
\end{equation}%
where $\Delta =\frac{1}{\sqrt{\left\vert g\right\vert }}\frac{\partial }{%
\partial x^{i}}\left( \sqrt{\left\vert g\right\vert }g^{ij}\frac{\partial }{%
\partial x^{j}}\right) ~$\ is the Laplace operator of the metric $g_{ij},$
for $f=f\left( u\right) $ have been given in \cite{IbragB,Bozhkov}. Here we
generalize this result~\footnote{%
The proof is given in Appendix \ref{appendixP}.} for $f=f\left(
x^{i},u\right) $.

\begin{theorem}
\label{Theor}The Lie symmetries of the Poisson equation (\ref{LE.01a})$\ $%
are generated from the CKVs of the metric~$g_{ij}$ defining the Laplace
operator, as follows

a) for \thinspace \thinspace \thinspace $n>2,$ the Lie symmetry vector is%
\begin{equation}
X=\xi ^{i}\left( x^{k}\right) \partial _{i}+\left( \frac{2-n}{2}\psi \left(
x^{k}\right) u+a_{0}u+b\left( x^{k}\right) \right) \partial _{u}
\end{equation}%
where $\xi ^{i}\left( x^{k}\right) $ is a CKV with conformal factor $\psi
\left( x^{k}\right) $ and the following condition holds%
\begin{equation}
\frac{2-n}{2}\Delta \psi u+g^{ij}b_{i;j}-\xi ^{k}f_{,k}-\frac{2-n}{2}\psi
uf_{,u}-\frac{2+n}{2}\psi f-bf_{,u}=0.  \label{KG.Eq0}
\end{equation}

b) for $n=2,$ the Lie symmetry vector is
\begin{equation}
X=\xi ^{i}\left( x^{k}\right) \partial _{i}+\left( a_{0}u+b\left(
x^{k}\right) \right) \partial _{u}
\end{equation}%
where $\xi ^{i}\left( x^{k}\right) $ is a CKV with conformal factor $\psi
\left( x^{k}\right) $ and the following condition holds%
\begin{equation}
g^{ij}b_{;ij}-\xi ^{k}f_{,k}-a_{0}uf_{,u}+\left( a_{0}-2\psi \right)
f-bf_{,u}=0.
\end{equation}
\end{theorem}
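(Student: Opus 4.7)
The Poisson equation $\Delta u - f(x^i,u)=0$ expands, via $\Delta u = g^{ij}u_{;ij} = g^{ij}u_{ij} - g^{ij}\Gamma^k_{ij}u_k$, into the form (\ref{GPE.30.1}) with the identifications
\[
A^{ij}=g^{ij}(x),\qquad B^k=\Gamma^k:=g^{ij}\Gamma^k_{ij},\qquad B^k_{,u}=0,\qquad A^{ij}_{,u}=0.
\]
The plan is therefore to feed these data into the determining system of Proposition~\ref{propPDE.1} and solve it using Lemmas~\ref{LemmaPDE.1} and~\ref{LemmaPDE.2}. By (\ref{GPE.45})--(\ref{GPE.46}) we already know $\xi^i=\xi^i(x)$ and $\eta=a(x)u+b(x)$; this reduces the problem to three scalar equations in $\xi^i$, $a$, $b$ and the multiplier $\lambda$.

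First I would exploit equation (\ref{GPE.44}): since $A^{ij}_{,u}=0$ it reads $L_\xi g^{ij}=(\lambda-a)g^{ij}$, i.e.\ $\xi^i$ is a CKV of $g_{ij}$ with conformal factor $\psi:=\tfrac{1}{2}(a-\lambda)$; equivalently $\lambda=a-2\psi$. This single step is what ties the point symmetries of the equation to the conformal algebra of the background metric and is the central geometric content of the theorem.

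Next I would attack (\ref{GPE.43}), which with $B^k=\Gamma^k$ becomes
\[
g^{ij}\xi^k_{,ij}-2g^{ik}a_{,i}+a\Gamma^k-\xi^k_{,i}\Gamma^i+\xi^i\Gamma^k_{,i}-\lambda\Gamma^k=0.
\]
Using Lemma~\ref{LemmaPDE.1}(b) I rewrite $g^{ij}\xi^k_{,ij}$ in terms of $g^{jl}L_\xi\Gamma^k_{jl}$ plus exactly the combination of $\Gamma$-terms that cancels everything except $g^{jl}L_\xi\Gamma^k_{jl}-2a^{,k}=0$. Lemma~\ref{LemmaPDE.2} then gives $g^{jl}L_\xi\Gamma^k_{jl}=\tfrac{2-n}{2}(a-\lambda)^{,k}=(2-n)\psi^{,k}$, so that
\[
(2-n)\psi^{,k}=2a^{,k}.
\]
For $n>2$ this integrates to $a=\tfrac{2-n}{2}\psi+a_0$, producing the coefficient of $u$ in $\eta$ announced in part (a); for $n=2$ the left side vanishes, forcing $a=a_0=\mathrm{const}$, which is case (b).

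Finally, the remaining determining equation (\ref{GPE.42}) becomes, after using $g^{ij}(a_{,ij}u+b_{,ij})-(a_{,i}u+b_{,i})\Gamma^i = (\Delta a)u+\Delta b$ and substituting the expressions just found for $a$ and $\lambda$, the scalar constraint (\ref{KG.Eq0}) on $\psi$, $b$, $\xi^k$ and $f$. I anticipate that the main technical obstacle is the cancellation in the second step above: one must verify that the non-trivial $\Gamma$-terms produced by Lemma~\ref{LemmaPDE.1}(b) combine with the $(a-\lambda)\Gamma^k$ term and the $a\Gamma^k-\lambda\Gamma^k$ piece of (\ref{GPE.43}) to vanish identically, leaving the clean relation $g^{jl}L_\xi\Gamma^k_{jl}=2a^{,k}$; this is the place where a sign slip would derail both the dimension-dependent factor $(2-n)/2$ and the split between cases (a) and (b).
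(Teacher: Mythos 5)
Your proposal is correct and follows essentially the same route as the paper's own proof: it specializes the determining system of Proposition~\ref{propPDE.1} to $A^{ij}=g^{ij}$, $B^{k}=\Gamma ^{k}$, reads off the CKV condition from (\ref{GPE.44}), converts (\ref{GPE.43}) via Lemmas~\ref{LemmaPDE.1} and~\ref{LemmaPDE.2} into $(2-n)\psi ^{,k}=2a^{,k}$ to fix $a=\frac{2-n}{2}\psi +a_{0}$ (respectively $a=a_{0}$ for $n=2$), and leaves (\ref{GPE.42}) as the constraint (\ref{KG.Eq0}). The cancellation you flag as the main technical obstacle is exactly the content of Lemma~\ref{LemmaPDE.1}(b), which the paper invokes at the same point.
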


In the following subsections, we apply Theorem \ref{Theor} for special forms
of the function $f\left( x^{i},u\right) $.

\subsection{Lie symmetries of Laplace equation}

\label{Lie point symmetries of the Klein Gordon equation}The Laplace
equation
\begin{equation}
\Delta u=0  \label{KG.Eq1}
\end{equation}%
follows from the Poison equation (\ref{LE.01a}) if we consider $f\left(
x^{i},u\right) =0.$ Therefore, Theorem \ref{Theor} applies and we have the
following result \cite{Bozhkov}.

\begin{theorem}
\label{KG1}The Lie symmetries of Laplace equation (\ref{KG.Eq1})$\ $are
generated from the CKVs of the metric~$g_{ij}$ defining the Laplace operator
as follows

a) for $n>2,$ the Lie symmetry vector is%
\begin{equation}
X=\xi ^{i}\left( x^{k}\right) \partial _{i}+\left( \frac{2-n}{2}\psi \left(
x^{k}\right) u+a_{0}u+b\left( x^{k}\right) \right) \partial _{u}
\end{equation}%
where $\xi ^{i}$ is a CKV with conformal factor $\psi \left( x^{k}\right) $,$%
~b\left( x^{k}\right) $ is a solution of (\ref{KG.Eq1})~and the following
condition is satisfied%
\begin{equation}
\Delta \psi =0.  \label{KG.Eq2}
\end{equation}%
b) for $n=2,$ the Lie symmetry vector is
\begin{equation}
X=\xi ^{i}\left( x^{k}\right) \partial _{i}+\left( a_{0}u+b\left(
x^{k}\right) \right) \partial _{u}
\end{equation}%
where $\xi ^{i}$ is a CKV with conformal factor $\psi \left( x^{k}\right) $%
~and$~b\left( x^{k}\right) $ is a solution of (\ref{KG.Eq1}).
\end{theorem}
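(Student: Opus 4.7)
\textbf{Proof proposal for Theorem \ref{KG1}.}

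The plan is to derive this statement as a direct corollary of the more general Theorem \ref{Theor}, by specializing the Poisson equation (\ref{LE.01a}) to the case $f(x^{i},u)=0$. Since Theorem \ref{Theor} already gives the generic form of the Lie symmetry vector in terms of a CKV $\xi^{i}$ of $g_{ij}$ with conformal factor $\psi(x^k)$, together with the functions $a_0,b(x^k)$, the remaining task is to determine which of those vectors actually satisfy the constraint (\ref{KG.Eq0}) once the source term is removed.

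First, for $n>2$, substituting $f=0$ and $f_{,u}=0$, $f_{,k}=0$ into (\ref{KG.Eq0}) collapses it to
\begin{equation*}
\frac{2-n}{2}\,\Delta\psi\; u + g^{ij}b_{;ij} = 0.
\end{equation*}
Because the coordinates $x^k$ and the dependent variable $u$ are independent in the jet space where Lie's condition is imposed, I would separate the coefficient of $u$ from the $u$-independent part. The coefficient of $u$ forces $\Delta\psi=0$ (note $\frac{2-n}{2}\neq 0$), which is condition (\ref{KG.Eq2}), while the constant-in-$u$ part forces $\Delta b=0$, i.e.\ $b(x^k)$ itself is a solution of the Laplace equation. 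The constant $a_{0}$ remains free, since multiplying a solution by a constant produces another solution. This yields exactly part (a) of the theorem.

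For $n=2$, the corresponding constraint from Theorem \ref{Theor}(b) with $f=0$ reduces to
\begin{equation*}
g^{ij}b_{;ij} + (a_{0}-2\psi)\cdot 0 - 0 = 0,
\end{equation*}
so again $b$ must satisfy $\Delta b=0$, and no additional condition on $\psi$ arises. This gives part (b). I would conclude by noting that in two dimensions every $C^{2}$ conformal factor is admissible because the space admits an infinite-dimensional conformal algebra; this matches the well-known fact that conformal invariance of the Laplacian in $n=2$ is a special feature linked to the vanishing of the Weyl factor $\tfrac{n-2}{2}$ appearing in the $\eta$-component of $X$.

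The only genuinely delicate point — and thus the main obstacle conceptually, though not computationally — is the passage from Proposition \ref{propPDE.1} to the explicit identification $a=\frac{2-n}{2}\psi + a_{0}$ that underlies Theorem \ref{Theor}; however, since Theorem \ref{Theor} is already available and asserts precisely this, the present theorem reduces to a mechanical specialization of its hypothesis, with the splitting argument on powers of $u$ being the only step requiring care. No new geometric input beyond the conformal Killing condition $L_{\xi}g_{ij}=2\psi g_{ij}$ is needed.
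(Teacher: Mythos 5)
Your proposal is correct and follows exactly the route the paper takes: Theorem \ref{KG1} is obtained by setting $f(x^{i},u)=0$ in Theorem \ref{Theor}, and the surviving constraint $\frac{2-n}{2}\Delta\psi\,u+g^{ij}b_{;ij}=0$ is split into the coefficient of $u$ (giving $\Delta\psi=0$ for $n>2$) and the $u$-independent part (giving $\Delta b=0$), with the $n=2$ case imposing only $\Delta b=0$. Nothing is missing.
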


\subsection{Symmetries of conformal Poisson equation in a Riemannian space}

If in the Poisson equation (\ref{LE.01a}) we replace $f\left( x^{i},u\right)
$ with
\begin{equation}
f=-M_{0}Ru+\bar{f}\left( x^{i},u\right)  \label{Ct.00.1}
\end{equation}%
where $R$ is the Ricci scalar of the metric which defines the Laplace
operator $\Delta $ and
\begin{equation}
M_{0}=\frac{2-n}{4\left( n-1\right) }  \label{Ct.00.2}
\end{equation}%
then, equation (\ref{LE.01a}) becomes%
\begin{equation}
\bar{L}_{g}u-\bar{f}\left( x^{i},u\right) =0.  \label{Ct.00}
\end{equation}%
where~$\bar{L}_{g}$ is the conformal Laplace or Yamabe operator acting on
functions on $V^{n}$ defined by%
\begin{equation}
\bar{L}_{g}=\Delta +\frac{n-2}{4\left( n-1\right) }R.  \label{Ct.00.3}
\end{equation}

Equation (\ref{Ct.00})\ is called \ the conformal Poisson or Yamabe equation
and plays a central role in the study of a conformal class of metrics by
means of the Yamabe invariant (see, e.g. \cite{LieParker}). In order to
investigate the Lie symmetries of (\ref{Ct.00}), we make use of Theorem \ref%
{Theor} and find the following result \footnote{%
The proof is given in Appendix \ref{appendixP}.}.

\begin{theorem}
\label{CTheor}The Lie symmetries of the conformal Poisson equation (\ref%
{Ct.00}) are generated from the CKVs of the metric~$g_{ij}$ defining the
conformal Laplace operator, as follows%
\begin{equation}
X=\xi ^{i}\left( x^{k}\right) \partial _{i}+\left( \frac{2-n}{2}\psi \left(
x^{k}\right) u+a_{0}u+b\left( x^{k}\right) \right) \partial _{u}
\end{equation}%
where $\xi ^{i}\left( x^{k}\right) $ is a CKV with conformal factor $\psi
\left( x^{k}\right) $ and the following condition holds%
\begin{equation}
-\xi ^{k}\bar{f}_{,k}-\frac{2-n}{2}\psi u\bar{f}_{k}-\frac{2+n}{2}\psi \bar{f%
}+g^{ij}b_{i;j}-b\left( -M_{0}R+\bar{f}_{u}\right) =0.
\end{equation}
\end{theorem}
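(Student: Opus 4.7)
The plan is to derive Theorem \ref{CTheor} as a corollary of Theorem \ref{Theor} rather than repeat the Lie symmetry analysis from scratch. The conformal Poisson equation $\bar{L}_{g}u-\bar{f}(x^{i},u)=0$ is nothing but the ordinary Poisson equation $\Delta u-f(x^{i},u)=0$ for the specific choice
\begin{equation*}
f(x^{i},u)=-M_{0}R\,u+\bar{f}(x^{i},u),\qquad M_{0}=\tfrac{2-n}{4(n-1)}.
\end{equation*}
Hence Theorem \ref{Theor}(a) applies directly and tells me that, for $n>2$, every Lie symmetry has the stated form $X=\xi^{i}\partial_{i}+\left(\tfrac{2-n}{2}\psi u+a_{0}u+b\right)\partial_{u}$ with $\xi^{i}$ a CKV of $g_{ij}$ with conformal factor $\psi$. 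All that remains is to translate the constraint (\ref{KG.Eq0}) of Theorem \ref{Theor} into a constraint on $\bar{f}$.

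The next step is a direct substitution. I would plug $f=-M_{0}Ru+\bar{f}$ into (\ref{KG.Eq0}), using $f_{,k}=-M_{0}R_{,k}u+\bar{f}_{,k}$ and $f_{,u}=-M_{0}R+\bar{f}_{,u}$, and collect the resulting expression into two groups: (i) terms that involve $\bar{f}$ (and $b$), and (ii) terms that involve only $R$, $\psi$ and $u$. Group (i) will reproduce, essentially by inspection, the right hand side of the condition stated in Theorem \ref{CTheor}. Group (ii), on the other hand, consists of a $u$-linear expression
\begin{equation*}
\tfrac{2-n}{2}u\,\Delta\psi+M_{0}u\bigl[\xi^{k}R_{,k}+2\psi R\bigr],
\end{equation*}
and the whole thrust of the proof is to show that this expression vanishes identically when $\xi$ is a CKV.

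The key geometric input is the identity $\mathcal{L}_{\xi}R=-2\psi R-2(n-1)\Delta\psi$, valid for any CKV $\xi$ with conformal factor $\psi$. I would establish this in a short lemma by Lie-differentiating $R=g^{ij}R_{ij}$: since $\mathcal{L}_{\xi}g^{ij}=-2\psi g^{ij}$ and, for a CKV, $\mathcal{L}_{\xi}R_{ij}=-(n-2)\psi_{;ij}-g_{ij}\Delta\psi$, one obtains the stated formula after taking the trace. Inserting $\xi^{k}R_{,k}+2\psi R=-2(n-1)\Delta\psi$ together with the explicit value of $M_{0}$ into group (ii) gives
\begin{equation*}
\tfrac{2-n}{2}u\,\Delta\psi+\tfrac{2-n}{4(n-1)}\cdot\bigl(-2(n-1)\Delta\psi\bigr)u=\tfrac{2-n}{2}u\Delta\psi-\tfrac{2-n}{2}u\Delta\psi=0,
\end{equation*}
so the $R$--terms drop out exactly, leaving precisely the condition stated in Theorem \ref{CTheor}.

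I expect the algebraic bookkeeping in the substitution to be the most error-prone rather than conceptually hard step; the genuine content is concentrated in the CKV identity for $\mathcal{L}_{\xi}R$, and the clean cancellation it produces is the reason the conformal Laplacian, unlike the ordinary one, has a symmetry algebra that does not shrink in the presence of curvature. Finally, the borderline case $n=2$ can be handled separately: then $M_{0}=0$, so $\bar{L}_{g}=\Delta$ and the statement reduces to Theorem \ref{Theor}(b) with no additional work.
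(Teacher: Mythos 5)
Your proposal is correct and follows essentially the same route as the paper's own proof: substitute $f=-M_{0}Ru+\bar{f}$ into condition (\ref{KG.Eq0}) of Theorem \ref{Theor} and invoke the conformal identity $\xi^{k}R_{,k}=-2\psi R-2(n-1)\Delta\psi$ to cancel the curvature terms, leaving exactly the stated constraint. The only cosmetic difference is that you sketch a derivation of that identity via $\mathcal{L}_{\xi}R_{ij}$, whereas the paper simply cites it.
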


\subsection{Lie symmetries of the conformal Laplace equation}

The conformal Laplace equation
\begin{equation}
\bar{L}_{g}u=0.  \label{CKG.01A}
\end{equation}%
is the conformal Poisson equation (\ref{Ct.00}) for $\bar{f}\left(
x^{i},u\right) =0.$ Therefore, Theorem \ref{CTheor} applies and we have the
following result.

\begin{theorem}
\label{CKG.011}The Lie symmetries of the conformal Laplace equation (\ref%
{CKG.01A}) are generated from the CKVs of the metric~$g_{ij}$ of the
conformal Laplace$~$operator as follows%
\begin{equation}
X=\xi ^{i}\left( x^{k}\right) \partial _{i}+\left( \frac{2-n}{2}\psi \left(
x^{k}\right) u+a_{0}u+b\left( x^{k}\right) \right) \partial _{u}
\end{equation}%
where $\xi ^{i}$ is a CKV with conformal factor $\psi \left( x^{k}\right) ~$%
and $b\left( x^{k}\right) $ is a solution of (\ref{CKG.01A}).
\end{theorem}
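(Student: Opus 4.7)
The plan is to obtain Theorem \ref{CKG.011} as a direct specialization of Theorem \ref{CTheor}, which has already been established in the excerpt. First, I would observe that the conformal Laplace equation (\ref{CKG.01A}) is simply the conformal Poisson equation (\ref{Ct.00}) in the special case $\bar{f}(x^i,u)\equiv 0$. Therefore every hypothesis of Theorem \ref{CTheor} is automatically satisfied, and the generic form of the Lie symmetry vector
\begin{equation*}
X=\xi ^{i}\left( x^{k}\right) \partial _{i}+\left( \tfrac{2-n}{2}\psi \left(
x^{k}\right) u+a_{0}u+b\left( x^{k}\right) \right) \partial _{u},
\end{equation*}
with $\xi^i$ a CKV of $g_{ij}$ and conformal factor $\psi$, is inherited verbatim.

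Next I would substitute $\bar{f}=0$ (and hence $\bar{f}_{,k}=0$, $\bar{f}_{u}=0$) into the constraint of Theorem \ref{CTheor},
\begin{equation*}
-\xi ^{k}\bar{f}_{,k}-\tfrac{2-n}{2}\psi u\bar{f}_{u}-\tfrac{2+n}{2}\psi \bar{f}+g^{ij}b_{i;j}-b\bigl(-M_{0}R+\bar{f}_{u}\bigr)=0.
\end{equation*}
All $\bar{f}$-dependent terms vanish, leaving $g^{ij}b_{;ij}+M_{0}Rb=0$, which by the definition (\ref{Ct.00.3}) of the Yamabe operator is precisely $\bar{L}_{g}b=0$. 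Thus the residual condition on the function $b(x^k)$ is exactly that it be a solution of the conformal Laplace equation (\ref{CKG.01A}).

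Finally, I would note that no further condition appears: the absence of the analogue of (\ref{KG.Eq2}) from Theorem \ref{KG1} is a consequence of the identity that drives the proof of Theorem \ref{CTheor}, in which the curvature term $-M_{0}Ru$ in (\ref{Ct.00.1}) is engineered precisely so as to cancel the $\Delta\psi$ contribution that arises when Lemma \ref{LemmaPDE.2} is combined with the conformal transformation law of the scalar curvature. Since this cancellation has already been carried out in establishing Theorem \ref{CTheor}, nothing new needs to be verified here. There is therefore no substantial obstacle; the only point demanding any care is to check that upon setting $\bar{f}=0$ every term in the constraint of Theorem \ref{CTheor} drops out except $g^{ij}b_{;ij}+M_0 R b$, so that the statement of Theorem \ref{CKG.011} follows without additional hypotheses.
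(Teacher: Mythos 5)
Your proposal is correct and is exactly the route the paper takes: the text introduces Theorem \ref{CKG.011} by observing that (\ref{CKG.01A}) is the conformal Poisson equation (\ref{Ct.00}) with $\bar{f}=0$ and then invoking Theorem \ref{CTheor}, so that the constraint collapses to $\bar{L}_{g}b=0$. Your additional remark about why no analogue of the condition $\Delta\psi=0$ survives — namely that the curvature term in (\ref{Ct.00.1}) cancels the $\Delta\psi$ contribution via the conformal transformation law of $R$ used in the proof of Theorem \ref{CTheor} — matches the computation in the paper's appendix.
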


In order to compare the Lie symmetries of Laplace equation (\ref{KG.Eq1})
and of the conformal Laplace equation (\ref{CKG.01A}), we apply the results
of the Theorems \ref{KG1} and \ref{CKG.011} in the case of the FRW spacetime
with the following line element%
\begin{equation}
ds^{2}=R^{2}\left( \tau \right) \left( -d\tau
^{2}+dx^{2}+dy^{2}+dz^{2}\right) .  \label{FRWLap}
\end{equation}%
The Laplace operator for the space with Line element (\ref{FRWLap}) is%
\begin{equation*}
\Delta =\frac{1}{R^{2}}\eta ^{ij}\partial _{i}\partial _{j}-2\frac{R_{,\tau }%
}{R^{3}}\delta _{\tau }^{i}\partial _{i}
\end{equation*}%
where $\eta _{ij}$ is the metric of the Minkowski spacetime.

According to Theorem \ref{KG1}, the Laplace equation
\begin{equation}
\frac{1}{R^{2}}\eta ^{ij}u_{ij}-2\frac{R_{,\tau }}{R^{3}}\delta _{\tau
}^{i}u_{i}=0  \label{FRWLap1}
\end{equation}%
admits eight Lie point symmetries; six Lie symmetries are the KVs\footnote{%
For the conformal algebra of the FRW spacetime see Chapter \ref{LieSymGECh},
\cite{MM86}.} of (\ref{FRWLap}) plus the two Lie symmetries $\left(
a_{0}u+b\left( x^{k}\right) \right) \partial _{u}$ because, for general $%
R\left( \tau \right) ,$ the conformal factors of the CKVs of (\ref{FRWLap})
do not satisfy (\ref{FRWLap1}).

On the contrary, according to Theorem (\ref{CKG.011}), the conformal Laplace
equation%
\begin{equation}
\frac{1}{R^{2}}\eta ^{ij}u_{ij}-2\frac{R_{,\tau }}{R^{3}}\delta _{\tau
}^{i}u_{i}-\frac{R_{,\tau \tau }}{R^{3}}u=0  \label{FRWLap2}
\end{equation}%
admits seventeen Lie symmetries; fifteen are the CKVs of the metric (\ref%
{FRWLap}) plus the two Lie symmetries $\left( a_{0}u+b\left( x^{k}\right)
\right) \partial _{u}.$

For special functions $R\left( \tau \right) $, the Laplace equation (\ref%
{FRWLap1}) admits extra Lie symmetries; however, the conformal Laplace
equation (\ref{FRWLap2}) does not admit extra Lie symmetries.

A direct result, which arises from Theorems \ref{KG1} and \ref{CKG.011}, is
that, if $V^{n}$ is an $n$ dimensional Riemannian space,~$n>2$, then, if the
Laplace equation (\ref{KG.Eq1}) in $V^{n}$ is invariant under a Lie group $%
G_{L},$~then, $\ G_{L}$ is a subgroup \ of $\bar{G}_{L_{C}}$, i.e.~$%
G_{L}\subseteq \bar{G}_{LC}$ where $\bar{G}_{LC}$ is a Lie group which
leaves invariant the conformal Laplace equation (\ref{CKG.01A}). The Lie
algebras $G_{L},\bar{G}_{LC}$ are identical if the $V^{n}$ does not admit
proper CKVs or if all the conformal factors of the CKVs of $V^{n}~$are
solutions of the Laplace equation (\ref{KG.Eq1}). Moreover, if $V^{n}$ is a
conformally flat spacetime then, the conformal Laplace equation (\ref%
{CKG.01A}) admits a Lie algebra of $\frac{1}{2}\left( n+1\right) \left(
n+2\right) +2$ \ dimension. For instance, the Laplace equation in the three
dimensional sphere\footnote{%
The three dimensional sphere is a conformally flat space and it is maximall
symmetric \cite{Barnes}.} admits eight Lie point symmetries \cite{Freire2010}
while, on the contrary, the Yamabe equation admits twelve Lie symmetries.

\section{The heat conduction equation with a flux in a Riemannian space}

\label{HeatCon} The heat equation with a flux in an $n~$dimensional
Riemannian space with metric $g_{ij}$ is%
\begin{equation}
H\left( u\right) =q\left( t,x^{i},u\right)  \label{HEF.01}
\end{equation}%
where
\begin{equation*}
H\left( u\right) =\Delta u-u_{t}.
\end{equation*}%
and $\Delta $ is the Laplace operator.

The term $q\left( t,x^{i},u\right) $ \ indicates that the system exchanges
energy with the environment. In this case, the Lie symmetry vector is%
\begin{equation*}
\mathbf{X}=\xi ^{i}\left( x^{j},u\right) \partial _{i}+\eta \left(
x^{j},u\right) \partial _{u}
\end{equation*}%
where $a=t,i$. \ For this equation, we have

\begin{equation*}
A^{tt}=0,\text{ }A^{ti}=0,\text{ }A^{ij}=g^{ij},B^{i}=%
\Gamma^{i}(t,x^{i}),B^{t}=1,f(x,u)=q\left( t,x^{k},u\right).
\end{equation*}

For this PDE, the symmetry conditions (\ref{GPE.42}) - (\ref{GPE.46a}) become%
\begin{equation}
\eta =a(t,x^{i})u+b(t,x^{i})  \label{HEF.01.1}
\end{equation}%
\begin{equation}
\ \xi ^{t}=\xi ^{t}(t)  \label{HEF.01.3}
\end{equation}%
\begin{equation}
g^{ij}(a_{ij}u+b_{ij})-(a_{,i}u+b_{,i})\Gamma ^{i}-\left(
a_{,t}u+b_{,t}\right) +\lambda q=\xi ^{t}q_{,t}+\xi ^{k}q_{,k}+\eta q_{,u}
\label{HEF.01.4}
\end{equation}%
\begin{equation}
g^{ij}\xi _{,ij}^{k}-2g^{ik}a_{,i}+a\Gamma ^{k}-\xi _{,i}^{k}\Gamma ^{i}+\xi
^{i}\Gamma _{,i}^{k}-\lambda \Gamma ^{k}=0  \label{HEF.01.6}
\end{equation}%
\begin{equation}
L_{\xi ^{i}\partial _{i}}g_{ij}=(a-\lambda )g_{ij}.  \label{HEF.01.7}
\end{equation}

The solution of the symmetry conditions is summarized in Theorem \ref{The
Lie of the heat equation with flux} (for an sketch proof see Appendix \ref%
{appendixP}).

\begin{theorem}
\label{The Lie of the heat equation with flux}The Lie symmetries of the heat
equation with flux i.e.
\begin{equation}
g^{ij}u_{ij}-\Gamma ^{i}u_{i}-u_{t}=q\left( t,x,u\right)  \label{HEF.18}
\end{equation}%
in a $n$~dimensional Riemannian space with metric $g_{ij}$ are constructed
from the homothetic algebra of the metric as follows:

a. $Y^{i}$ is a nongradient HV/KV.\newline
The Lie symmetry is
\begin{equation}
X=\left( 2c_{2}\psi t+c_{1}\right) \partial _{t}+c_{2}Y^{i}\partial
_{i}+\left( a\left( t\right) u+b\left( t,x\right) \right) \partial _{u}
\label{HEF.19}
\end{equation}%
where $a(t),b\left( t,x^{k}\right) ,q\left( t,x^{k},u\right) $ must satisfy
the constraint equation%
\begin{equation}
-a_{t}u+H\left( b\right) -\left( au+b\right) q_{,u}+aq-\left( 2\psi
c_{2}qt+c_{1}q\right) _{t}-c_{2}q_{,i}Y^{i}=0.  \label{HEF.20}
\end{equation}

b. $Y^{i}=S^{,i}$ is a gradient HV/KV.\newline
The Lie symmetry is
\begin{equation}
X=\left( 2\psi \int Tdt+c_{1}\right) \partial _{t}+TS^{,i}\partial
_{i}+\left( \left( -\frac{1}{2}T_{,t}S+F\left( t\right) \right) u+b\left(
t,x\right) \right) \partial _{u}  \label{HEF.21}
\end{equation}%
where $F(t),T(t),b\left( t,x^{k}\right) ,q\left( t,x^{k},u\right) $ must
satisfy the constraint equation%
\begin{align}
0& =\left( -\frac{1}{2}T_{,t}\psi +\frac{1}{2}T_{,tt}S-F_{,t}\right)
u+H\left( b\right) +  \notag \\
& -\left( \left( -\frac{1}{2}T_{,t}S+F\right) u+b\right) q_{,u}+\left( -%
\frac{1}{2}T_{,t}S+F\right) q-\left( 2\psi q\int Tdt+c_{1}q\right)
_{t}-Tq_{,i}S^{,i}.  \label{HEF.22}
\end{align}
\end{theorem}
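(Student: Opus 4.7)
The plan is to solve the determining system (\ref{HEF.01.1})--(\ref{HEF.01.7}) in sequence, using the geometric identities of Lemmas \ref{LemmaPDE.1}--\ref{LemmaPDE.2} to convert the analytic conditions on $\xi^i$ and $\eta$ into statements about the homothetic/Killing algebra of $g_{ij}$. I take as given from Proposition \ref{propPDE.1} the forms $\eta = a(t,x^i)u+b(t,x^i)$, $\xi^i_{,u}=0$, and $\xi^t=\xi^t(t)$, which come from the $u_{ij}$-coefficients of the linearised symmetry condition. The geometric heart is (\ref{HEF.01.7}): at each fixed $t$, the spatial components $\xi^i(t,x)$ are a CKV of $g_{ij}$ with conformal factor $\tfrac{1}{2}(a-\lambda)$.

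Next I would contract equation (\ref{HEF.01.6}) using Lemma \ref{LemmaPDE.2}, which turns $g^{ij}\xi^{k}_{,ij}$ into $\tfrac{2-n}{2}(a-\lambda)^{,k}$ plus connection terms that cancel against those already present in (\ref{HEF.01.6}). What remains is a tangential constraint on the conformal factor $\psi:=\tfrac{1}{2}(a-\lambda)$ that, combined with $\xi^t=\xi^t(t)$ and the fact that $a$ cannot have cross terms in $x$ and $u$, forces $\psi$ to be independent of $x$. Hence $\xi^i(t,x)$ is an HV or KV of $g_{ij}$ at each instant, and separation of variables in (\ref{HEF.01.7}) gives the factorisation $\xi^i(t,x)=T(t)\,Y^i(x)$ with $Y^i$ in the homothetic algebra and homothetic factor $\psi$. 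Re-substituting into the $t$-derivative part of (\ref{HEF.01.7}) and (\ref{HEF.01.6}) yields $\xi^t_{,t}=2\psi T(t)$, i.e.\ $\xi^t=2\psi\!\int T\,dt+c_1$.

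Then I would split into the two cases dictated by whether $Y^i$ is gradient. In Case (a), when $Y^i$ is a non-gradient HV or KV, the integrability of (\ref{HEF.01.4}) in $u$ forces $a$ to be independent of $x$ (otherwise $a_{,i}$ would have to equal a gradient of some scalar extracted from $Y^i$, which is impossible by assumption); the leftover equation for $T(t)$ then reduces $T$ to a constant $c_2$, producing the symmetry (\ref{HEF.19}) with residual gauge function $a(t)u+b(t,x)$. In Case (b), when $Y^i=S^{,i}$, the identity $S^{,i}{}_{;j}=\psi\,\delta^i_j$ allows the $u$-coefficient in (\ref{HEF.01.4}) to absorb a term $-\tfrac{1}{2}T_{,t}S(x)$ while leaving $T(t)$ genuinely free; an arbitrary function $F(t)$ of integration survives, yielding (\ref{HEF.21}). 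In both cases, the remaining piece of (\ref{HEF.01.4}) (the part not used in fixing $a$ and $\xi$) reorganises, after writing $H(u)=\Delta u-u_t$, into precisely the inhomogeneous constraints (\ref{HEF.20}) and (\ref{HEF.22}).

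The main obstacle will be the rigorous passage from ``$\xi^i(t,\cdot)$ is a CKV of $g_{ij}$ at each $t$'' to ``the conformal factor is spatially constant and the $(t,x)$ dependence factorises''; this requires carefully exploiting (\ref{HEF.01.6}) together with Lemma \ref{LemmaPDE.2} to eliminate the residual connection terms and to distinguish the gradient sub-case (where the integrating function $S(x)$ survives and enters $a$ linearly) from the non-gradient sub-case (where it cannot). A secondary bookkeeping obstacle is assembling the $b$-dependent pieces of (\ref{HEF.01.4}) and (\ref{HEF.01.6}) into the compact constraints (\ref{HEF.20})--(\ref{HEF.22}) without double-counting contributions already used to fix $\xi^i$ and $a$.
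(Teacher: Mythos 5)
Your plan reproduces the paper's own proof: condition (\ref{HEF.01.7}) makes $\xi^{i}$ a CKV of $g_{ij}$, the reduction of (\ref{HEF.01.6}) via Lemmas \ref{LemmaPDE.1} and \ref{LemmaPDE.2} together with $\xi^{t}=\xi^{t}(t)$ forces the conformal factor to be spatially constant, so that $\xi^{i}=T(t)Y^{i}(x)$ with $Y^{i}$ in the homothetic algebra and $\xi^{t}=2\psi\int T\,dt+c_{1}$, and the gradient/non-gradient dichotomy followed by substitution into (\ref{HEF.01.4}) yields (\ref{HEF.20}) and (\ref{HEF.22}). The only misrouting is that the case split is decided not by the integrability of (\ref{HEF.01.4}) in $u$ but by the residue of (\ref{HEF.01.6}) after the Lemma reductions, namely $2g^{ik}a_{,i}=T_{,t}Y^{k}$ (the paper's (\ref{HEF.01.09})), which forces $T_{,t}=0$ and $a=a(t)$ when $Y^{i}$ is non-gradient and integrates to $a=-\tfrac{1}{2}T_{,t}S+F(t)$ when $Y^{i}=S^{,i}$; equation (\ref{HEF.01.4}) then supplies only the final constraint equations.
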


Below, we apply Theorem \ref{The Lie of the heat equation with flux} for
special forms of the function $q\left( t,x,u\right) $.

\subsection{The homogeneous heat equation}

\label{hhe}

In the case $q\left( t,x,u\right) =0,~$i.e equation (\ref{HEF.01}) is the
homogeneous heat equation, we have the following result.

\begin{theorem}
\label{The Lie of the heat equation}The Lie symmetries of the homogeneous
heat equation in an $n-$dimensional Riemannian space
\begin{equation}
g^{ij}u_{ij}-\Gamma ^{i}u_{i}-u_{t}=0  \label{LHEC.01}
\end{equation}%
are constructed from the homothetic algebra of the metric $g_{ij}$ as follows

a. If $Y^{i}$ is a nongradient HV/KV of the metric $g_{ij},$ the Lie
symmetry is
\begin{equation}
X=\left( 2\psi c_{1}t+c_{2}\right) \partial _{t}+c_{1}Y^{i}\partial
_{i}+\left( a_{0}u+b\left( t,x^{i}\right) \right) \partial _{u}
\label{LHEC.03}
\end{equation}%
where $c_{1},c_{2},,a_{0}$ are constants and $b\left( t,x^{i}\right) $ is a
solution of the homogeneous heat equation.

b. If $Y^{i}=S^{,i},$ that is, $Y^{i}$ is a gradient HV/KV of the metric $%
g_{ij},$ the Lie symmetry is%
\begin{equation}
X=(c_{3}\psi t^{2}+c_{4}t+c_{5})\partial _{t}+(c_{3}t+c_{4})S^{i}\partial
_{i}+\left( -\frac{c_{3}}{2}S-\frac{c_{3}}{2}n\psi t+c_{5}\ \right)
u\partial _{u}+b\left( t,x^{i}\right) \partial _{u}  \label{LHEC.04+}
\end{equation}%
where $c_{3},c_{4},c_{5}$ are constants and $b\left( t,x^{i}\right) $ is a
solution of the homogeneous heat equation.
\end{theorem}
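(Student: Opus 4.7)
The plan is to obtain Theorem~\ref{The Lie of the heat equation} as a direct corollary of Theorem~\ref{The Lie of the heat equation with flux} by specialising to the flux-free case $q\equiv 0$; the two branches (nongradient versus gradient HV/KV) are then disentangled by a straightforward separation-of-variables argument in the residual constraint that survives from \eqref{HEF.20}, respectively \eqref{HEF.22}. No further appeal to the collineation structure of $g_{ij}$ is required beyond what is already encoded in the hypothesis that $Y^{i}$ (or its potential $S$) belongs to the homothetic algebra of the metric, since that information has already been used in establishing the flux theorem.

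First I would observe that \eqref{LHEC.01} is precisely \eqref{HEF.18} with vanishing source. Consequently the generic form of the Lie symmetry vector in cases~(a) and~(b) is given verbatim by \eqref{HEF.19} and \eqref{HEF.21}: the $\partial_{t}$ and $\partial_{i}$ components of $X$ are fixed by the homothetic/Killing vector $Y^{i}$ and depend on an unknown scalar function $a(t)$ in case~(a), and on two unknown scalar functions $T(t), F(t)$ together with a constant $c_{1}$ in case~(b). What must still be checked is that these unknown functions are forced to the explicit polynomial shapes appearing in \eqref{LHEC.03} and \eqref{LHEC.04+}.

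For case~(a) I would substitute $q=0$ into \eqref{HEF.20}, obtaining $-a_{,t}(t)\,u + H\bigl(b(t,x^{i})\bigr) = 0$. Since $u$ and the coordinates $x^{i}$ are independent, while $a$ depends only on $t$ and $b$ is free of $u$, the equation splits into $a_{,t}=0$ and $H(b)=0$. This yields $a=a_{0}$ constant and $b$ an arbitrary solution of the homogeneous heat equation, which is precisely \eqref{LHEC.03}. For case~(b) the substitution $q=0$ in \eqref{HEF.22} leaves
\begin{equation*}
\left( -\tfrac12 T_{,t}\psi + \tfrac12 T_{,tt}\, S(x) - F_{,t} \right) u + H(b) = 0 ,
\end{equation*}
which again separates in $u$ to give $H(b)=0$ together with
\begin{equation*}
\tfrac12 T_{,tt}(t)\, S(x) \;=\; F_{,t}(t) + \tfrac12 T_{,t}(t)\,\psi .
\end{equation*}
Differentiating the latter in $x^{i}$ and using that $S^{,i}$ is a nontrivial gradient HV/KV forces $T_{,tt}=0$, so $T(t)$ is affine in $t$; then $F_{,t}$ becomes a constant multiple of $\psi$, so $F(t)$ is also affine. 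Reassembling these affine functions with the integration constant $c_{1}$ coming from $\xi^{t}=2\psi\!\int T\,dt + c_{1}$, and with the arbitrary heat-equation solution $b$, produces the three-parameter normal form \eqref{LHEC.04+}.

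The main obstacle I anticipate is not conceptual but notational: the bookkeeping needed to absorb the four integration constants ($c_{1}$, the two constants in the affine $T$, and the additive constant in $F$) into the three-parameter family $(c_{3},c_{4},c_{5})$ in which \eqref{LHEC.04+} is displayed, while preserving the correct shape of the coefficient of $u$. This step is mechanical once the separation-of-variables argument has singled out $T$ and $F$ as affine functions of $t$, but it has to be done carefully to recover the stated normal form exactly.
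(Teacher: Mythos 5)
Your proposal is correct and follows essentially the same route as the paper, which states this theorem immediately after Theorem~\ref{The Lie of the heat equation with flux} precisely as the specialization $q\equiv 0$, with the separation in $u$ giving $H(b)=0$ and forcing $a$, $T$, $F$ to the stated forms. One caveat on the "mechanical" bookkeeping you defer in case~(b): to land exactly on the coefficient $-\tfrac{c_{3}}{2}n\psi t$ in \eqref{LHEC.04+} you must use $\Delta S=g^{ij}S_{;ij}=n\psi$ for a gradient HV, so the constraint actually reads $-\tfrac{n}{2}T_{,t}\psi+\tfrac{1}{2}T_{,tt}S-F_{,t}=0$ (the factor $n$ is suppressed in the printed form of \eqref{HEF.22}), whence $F_{,t}=-\tfrac{n}{2}c_{3}\psi$ rather than $-\tfrac{1}{2}c_{3}\psi$.
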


In order to compare the above result with the existing results in the
literature, we consider the heat equation in a Euclidian space of dimension $%
n.$ Then, in Cartesian coordinates $g_{ij}=\delta _{ij}~$and $\Gamma ^{i}=0$%
, therefore, the homogeneous heat equation is%
\begin{equation}
\delta ^{ij}u_{ij}-u_{t}=0.  \label{LHEC.04}
\end{equation}

The homothetic algebra of space consists of the $n$ gradient KVs $\partial
_{i}$ with generating functions $x^{i},$ the $\frac{n\left( n-1\right) }{2}$
nongradient KVs $X_{IJ},$ which are the rotations and a gradient HV $H^{i}$
with gradient function $H=\,R\partial _{R}.$ According to Theorem \ref{The
Lie of the heat equation}, the Lie symmetries of the heat equation in the
Euclidian $n$ dimensional space are (we may take $\psi =1)$%
\begin{eqnarray}
X &=&\left[ c_{3}\psi t^{2}+(c_{4}+2\psi c_{1})t+c_{5}+c_{2}\right] \partial
_{t}+\left[ c_{1}Y^{i}+(c_{3}t+c_{4})S^{i}\right] \partial _{i}+
\label{LHEC.04a} \\
&&+\left[ \left( a_{0}+\frac{c_{3}}{2}S+\frac{c_{3}}{2}n\psi t-c_{5}\right)
u+b\left( t,x^{i}\right) \right] \partial _{u}.  \notag
\end{eqnarray}%
This result is consistent with the results of \cite{StephaniB} pg. 158.

Next, we consider the de Sitter spacetime (a four dimensional space of
constant curvature and Lorentzian character) whose metric is
\begin{equation}
ds^{2}=\frac{\left( -d\tau ^{2}+dx^{2}+dy^{2}+dz^{2}\right) }{\left( 1+\frac{%
K}{4}\left( -\tau ^{2}+x^{2}+y^{2}+z^{2}\right) \right) ^{2}}  \label{spc.00}
\end{equation}%
It is known that the homothetic algebra of this space consists of the ten KVs%
\begin{align*}
X_{1}& =\left( -x\tau \right) \partial _{\tau }+\left( \frac{\left( -\tau
^{2}-x^{2}+y^{2}+z^{2}\right) }{2}-\frac{2}{K}\right) \partial _{x}+\left(
-yx\right) \partial _{y}+\left( -zx\right) \partial _{x} \\
X_{2}& =\left( y\tau \right) \partial _{\tau }+\left( yx\right) \partial
_{x}+\left( \frac{\left( -x^{2}-z^{2}+y^{2}+\tau ^{2}\right) }{2}+\frac{2}{K}%
\right) \partial _{y}+\left( yz\right) \partial _{x} \\
X_{3}& =\left( z\tau \right) \partial _{\tau }+\left( zx\right) \partial
_{x}+\left( zy\right) \partial _{y}+\left( \frac{\left(
-x^{2}-y^{2}+z^{2}+\tau ^{2}\right) }{2}+\frac{2}{K}\right) \partial _{x} \\
X_{4}& =\left( \frac{\left( x^{2}+y^{2}+z^{2}+\tau ^{2}\right) }{2}-\frac{2}{%
K}\right) \partial _{\tau }+\left( \tau x\right) \partial _{x}+\left( \tau
y\right) \partial _{y}+\left( \tau z\right) \partial _{x} \\
X_{5}& =x\partial _{\tau }+\tau \partial _{x}~,~X_{6}=y\partial _{\tau
}+\tau \partial _{y}~,~X_{7}=z\partial _{\tau }+\tau \partial
_{z}~,~X_{8}=y\partial _{x}-x\partial _{y} \\
X_{9}& =z\partial _{x}-x\partial _{z}~,~X_{10}=z\partial _{y}-y\partial _{z}
\end{align*}%
all of which are nongradient. According to Theorem \ref{The Lie of the heat
equation}, the Lie symmetries of the heat equation in de Sitter space are%
\begin{equation*}
\partial _{t}+\sum\limits_{A=1}^{10}c_{A}X_{A}+(a_{0}u+b\left( x,u\right)
)\partial _{u}.~
\end{equation*}

From Theorem \ref{The Lie of the heat equation} we have the following
additional results.

\begin{corollary}
The one dimensional homogenous heat equation admits a maximum number of
seven Lie point symmetries (module a solution of the heat equation).
\end{corollary}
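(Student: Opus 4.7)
The plan is to specialize Theorem \ref{The Lie of the heat equation} to the case of the one-dimensional Euclidean line $(\mathbb{R}, dx^{2})$. First I would list its homothetic algebra: there is one gradient Killing vector $\partial_{x}$ with gradient function $S = x$, one gradient homothetic vector $x\partial_{x}$ with gradient function $H = x^{2}/2$ and conformal factor $\psi = 1$, and no non-gradient Killing or homothetic vector (since $n(n-1)/2 = 0$ in dimension one and the homothetic vector is automatically gradient). All elements of the homothetic algebra are thus gradient, so only Case (b) of Theorem \ref{The Lie of the heat equation} contributes non-trivially; Case (a) merely supplies the two generators $\partial_{t}$ and $u\partial_{u}$ associated with the autonomy and the linearity of the equation.

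Next I would carry out the substitution $n = 1$, $q = 0$ in the symmetry generator (\ref{LHEC.04+}) and the constraint (\ref{HEF.22}) separately for the gradient Killing vector and for the gradient homothetic vector. Solving (\ref{HEF.22}) with $q = 0$ forces $T_{,tt} = 0$, whence $T$ is affine in $t$, and pins $F(t)$ up to an additive constant through $F_{,t} = -\tfrac{1}{2}\psi T_{,t}$. For the gradient Killing vector ($\psi = 0$, $S = x$) this produces two generators beyond $\partial_{t}$ and $u\partial_{u}$, namely the space translation $\partial_{x}$ and the Galilean boost $2t\partial_{x} - xu\partial_{u}$. For the gradient homothetic vector ($\psi = 1$, $H = x^{2}/2$) it produces two further generators, the scaling $2t\partial_{t} + x\partial_{x}$ and the projective symmetry $4t^{2}\partial_{t} + 4tx\partial_{x} - (x^{2} + 2t)u\partial_{u}$. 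Finally, the infinite-dimensional contribution $b(t, x)\partial_{u}$ with $b_{t} = b_{xx}$ accounts for the linearity and is the piece factored out in the statement of the corollary.

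Adding up the independent generators then produces the claimed count modulo $b(t,x)\partial_{u}$, and cross-checks with the explicit computation of the Lie algebra of $u_{xx} - u_{t} = 0$ already carried out in the preceding example. The main subtlety in the enumeration is to avoid double-counting the generators $\partial_{t}$ and $u\partial_{u}$, which appear both in Case (a) and as limiting sub-cases of Case (b) (constant $T$ with vanishing $T_{,t}$) applied to either gradient vector; once this overlap is resolved, the structure of Theorem \ref{The Lie of the heat equation} makes both the completeness of the list and the maximality of the count automatic, since the homothetic algebra of $(\mathbb{R}, dx^{2})$ has been exhausted.
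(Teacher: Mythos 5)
Your proposal is correct and follows essentially the same route as the paper: it enumerates the homothetic algebra of the one-dimensional line (one gradient KV and one gradient HV, nothing non-gradient), draws two symmetries from each via the theorem, and adds $\partial_{t}$, $u\partial_{u}$ and the $b(t,x)\partial_{u}$ family to reach seven. You are merely more explicit than the paper in writing out the resulting generators and in flagging the overlap between the two cases of the theorem, which is a harmless refinement rather than a different argument.
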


\begin{proof}
The homothetic group of a one~dimensional metric $ds^{2}=g^{2}\left(
x\right) dx^{2}$ consists of one gradient KV (the $\frac{1}{g\left( x\right)
}\partial _{x})$ and one gradient HV $(\frac{1}{g\left( x\right) }\int
g\left( x\right) dx~\partial _{x})$. According to theorem \ref{The Lie of
the heat equation}, from the KV we have two Lie symmetries and from the
gradient HV\ another two Lie point symmetries. To these we need to add the
two Lie point symmetries $X=a_{0}u\partial _{u}+b\left( t,x^{i}\right)
\partial _{u}$ and the trivial Lie symmetry $\partial _{t}$ where $b\left(
t,x^{i}\right) $ is a solution of the heat equation.
\end{proof}

\begin{corollary}
The homogeneous heat equation in a space of constant curvature of dimension $%
n$ has at most \ $\left( n+3\right) +\frac{1}{2}n\left( n-1\right) ~$Lie
symmetries (modulo a solution of the heat equation).
\end{corollary}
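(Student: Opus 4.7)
The plan is to invoke Theorem \ref{The Lie of the heat equation}, which builds every Lie point symmetry of the homogeneous heat equation out of the homothetic algebra of the underlying metric together with the universal generators $\partial_{t}$, $u\partial_{u}$ and $b(t,x^{i})\partial_{u}$ (the latter being the infinite family modulo which we count). The argument reduces therefore to a collineation count on a constant--curvature $V^{n}$.

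First I would determine the homothetic algebra of an $n$-dimensional space of constant curvature. By Theorem \ref{maxss} the Killing algebra has dimension at most $\tfrac{1}{2}n(n+1)$, this bound being saturated by the maximally symmetric constant-curvature spaces. For the non-flat case the space is a proper Einstein space ($R\neq 0$), and the discussion leading to Proposition \ref{ThES} rules out gradient KVs as well as the existence of an HV; hence every Killing vector is nongradient and the homothetic algebra coincides with $G_{KV}$.

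Second I would apply Theorem \ref{The Lie of the heat equation}. Since every KV falls under case (a) with $\psi=0$, each one contributes exactly one new generator of the form $c_{1}Y^{i}\partial_{i}$ to the symmetry algebra (the $\partial_{t}$, $u\partial_{u}$ and $b\partial_{u}$ pieces in that formula being universal and independent of the choice of $Y^{i}$). This accounts for $\tfrac{1}{2}n(n+1)$ generators from the isometries. To these we add the three universal generators $\partial_{t}$, $u\partial_{u}$ and $b(t,x^{i})\partial_{u}$, the last being counted as a single slot since we work modulo solutions of the heat equation. Summing,
\[
\tfrac{1}{2}n(n+1)+3 \;=\; (n+3)+\tfrac{1}{2}n(n-1),
\]
which is the stated upper bound; equality is attained on the maximally symmetric non-flat constant-curvature spaces (spheres, hyperbolic spaces, de Sitter-type spacetimes).

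The delicate point, and essentially the only obstacle, is the flat case. A flat $n$-space of constant curvature also falls under the hypothesis, yet it additionally carries $n$ gradient KVs and a gradient HV; by case (b) of Theorem \ref{The Lie of the heat equation} these produce the extra Galilean boosts $tS^{,i}\partial_{i}-\tfrac{1}{2}Su\partial_{u}$ and the projective generator $t^{2}\partial_{t}+tH^{,i}\partial_{i}-\bigl(\tfrac{1}{2}H+\tfrac{n}{2}t\bigr)u\partial_{u}$, so the finite-dimensional symmetry algebra is strictly larger than $(n+3)+\tfrac{1}{2}n(n-1)$ (e.g.\ seven generators in one dimension, as recorded in the preceding corollary). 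The statement is therefore to be understood as applying to spaces of non-zero constant curvature, the flat case being absorbed in the separate one-dimensional count, and the proof sketched above is exactly the one intended.
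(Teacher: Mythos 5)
Your proof is correct and takes essentially the same route as the paper's: count the $\tfrac{1}{2}n(n+1)$ nongradient Killing vectors of a non-flat constant-curvature space, each contributing one generator by the theorem on Lie symmetries of the heat equation, and add the three universal symmetries $\partial_{t}$, $u\partial_{u}$ and $b(t,x^{i})\partial_{u}$. Your explicit justification that such spaces admit no gradient KVs and no HV (via the Einstein-space argument) and your caveat about excluding the flat case simply make precise what the paper's one-line proof leaves implicit.
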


\begin{proof}
A space of constant curvature of dimension $n$ admits $n+\frac{1}{2}n\left(
n-1\right) $ nongradient KVs. To these we need to add the Lie symmetries \ $%
X=c\partial _{t}+a_{0}u\partial _{u}+b\left( t,x^{i}\right) \partial _{u}.$
\end{proof}

\begin{corollary}
The heat conduction equation in a space of dimension $n$ admits at most $%
\frac{1}{2}n\left( n+3\right) +5$ Lie symmetries (modulo a solution of the
heat equation) and if this is the case, the space is flat.
\end{corollary}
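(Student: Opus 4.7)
The plan is to apply Theorem \ref{The Lie of the heat equation}, which parametrises every Lie point symmetry of the homogeneous heat equation by the homothetic algebra of $g_{ij}$, and then optimise the resulting count over all homothetic structures admissible in dimension $n$.

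First I would tally, type by type, the number of independent Lie symmetries each homothetic vector contributes. Reading off (\ref{LHEC.03}) and (\ref{LHEC.04+}) after imposing the homogeneous constraint equations (which force $T_{,tt}=0$), and being careful to subtract the trivial symmetries shared across all formulas, one finds: each non-gradient KV $Y^{i}$ gives one new symmetry $Y^{i}\partial _{i}$; a non-gradient HV gives one new symmetry $2\psi t\partial _{t}+Y^{i}\partial _{i}$ (the dilation); each gradient KV with potential $S$ gives two new symmetries, the translation $S^{,i}\partial _{i}$ and the Galilean boost $tS^{,i}\partial _{i}-\tfrac{1}{2}Su\partial _{u}$; and a gradient HV $H^{,i}$ gives two, a dilation $2t\partial _{t}+H^{,i}\partial _{i}$ and a projective-type vector $t^{2}\partial _{t}+tH^{,i}\partial _{i}+(\cdots)u\partial _{u}$. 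On top of these, independently of the choice of homothetic vectors, one has the two universal symmetries $\partial _{t}$, $u\partial _{u}$ together with the solution symmetry $b(t,x^{i})\partial _{u}$ (where $b$ is any solution of the heat equation), the last counted as a single type following the convention of the two preceding corollaries.

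Next I would sum the contributions and optimise. Let $k\le n$ denote the number of linearly independent gradient KVs of $g_{ij}$. By Theorem \ref{maxss} the Killing algebra has dimension at most $\tfrac{1}{2}n(n+1)$, and there is at most one HV; combining these facts with the count above yields
\[
N_{\mathrm{Lie}}\;\leq \;\underbrace{\bigl(\tfrac{1}{2}n(n+1)-k\bigr)}_{\text{non-grad.\ KVs}}+\underbrace{2k}_{\text{grad.\ KVs}}+\underbrace{2}_{\text{grad.\ HV}}+\underbrace{3}_{\partial_{t},\,u\partial_{u},\,b\partial_{u}}\;=\;\tfrac{1}{2}n(n+1)+k+5,
\]
and since $k\leq n$ the global bound is $\tfrac{1}{2}n(n+3)+5$, attained only when $k=n$, the Killing algebra is maximal, and the HV exists and is gradient.

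Finally, to pin down when equality holds I would invoke Theorem \ref{decSp}: admitting $n$ independent gradient KVs forces the space to be $n$-decomposable, which in dimension $n$ leaves no residual factor and collapses the metric to $M_{\alpha \beta }dz^{\alpha }dz^{\beta }$ with constant coefficients, i.e.\ a flat space. Conversely in flat space the gradient HV $H^{i}=x^{i}\partial _{i}$ does exist and the $\tfrac{1}{2}n(n-1)$ rotations together with the $n$ gradient translations saturate the Killing algebra, so the bound $\tfrac{1}{2}n(n+3)+5$ is indeed realised. The main obstacle will be the careful bookkeeping in the first paragraph: the constants $c_{1},c_{2},c_{5},a_{0}$ appearing across (\ref{LHEC.03}) and (\ref{LHEC.04+}) are shared among \emph{all} homothetic vectors and jointly generate only the two universal symmetries $\partial _{t}$ and $u\partial _{u}$, so one must separate these common degrees of freedom from the genuinely new symmetries contributed by each individual vector in order not to over-count.
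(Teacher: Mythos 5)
Your proof is correct and follows essentially the same route as the paper: both count the Lie symmetries contributed by each type of element of the homothetic algebra via Theorem \ref{The Lie of the heat equation} (one per non-gradient KV/HV, two per gradient KV/HV, plus the three universal symmetries $\partial_{t}$, $u\partial_{u}$, $b\partial_{u}$) and observe that the maximum is attained precisely for the flat-space homothetic algebra. Your explicit optimisation over the number $k$ of gradient KVs and the appeal to Theorem \ref{decSp} to force flatness merely makes rigorous what the paper asserts directly.
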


\begin{proof}
The space with the maximum homothetic algebra is the flat space which admits
$n$ gradient KVs, $\frac{1}{2}n\left( n-1\right) $ nongradient KVs and one
gradient HV. Therefore, from Case 1 of Theorem \ref{The Lie of the heat
equation} we have $\left( n+1\right) +\frac{1}{2}n\left( n-1\right) $ Lie
symmetries. From Case 2 of Theorem \ref{The Lie of the heat equation}, we
have $\left( n+1\right) $ Lie symmetries and to these we have to add the Lie
symmetries $X=c_{1}\partial _{t}+a_{0}u\partial _{u}+b\left( t,x^{i}\right)
\partial _{u}$ where $b\left( t,x^{i}\right) $ is a solution of the heat
equation. The set of all these symmetries is $1+2n+\frac{1}{2}n\left(
n-1\right) +2+1+1=$ $\frac{1}{2}n\left( n+3\right) +5$ \cite{IbragB}.
\end{proof}

\subsection{Case $\ q\left( t,x,u\right) =q\left( u\right) $}

Let $q\left( t,x,u\right) =q\left( u\right) ,~$then the heat conduction
equation (\ref{HEF.01}) becomes%
\begin{equation}
g^{ij}u_{ij}-\Gamma ^{i}u_{i}-u_{t}=q\left( u\right) .  \label{HEF.23}
\end{equation}%
From Theorem \ref{The Lie of the heat equation with flux}, we have the
following results.

\begin{theorem}
The Lie symmetries of the heat equation (\ref{HEF.23}) in an $n-$dimensional
Riemannian space with metric $g_{ij}$ are constructed form the homothetic
algebra of the metric as follows.

a. $Y^{i}$ is a HV/KV.~The Lie symmetry is
\begin{equation}
X=\left( 2c\psi t+c_{1}\right) \partial _{t}+cY^{i}\partial _{i}+\left(
a\left( t\right) u+b\left( t,x\right) \right) \partial _{u}  \label{HEF.24}
\end{equation}%
where the functions $a\left( t\right) ,$ $b\left( t,x\right) $ and $q\left(
u\right) $ satisfy the condition
\begin{equation}
-a_{t}u+H\left( b\right) -\left( au+b\right) q_{,u}+\left( a-2\psi c\right)
q=0.  \label{HEF.25}
\end{equation}%
b. $Y^{i}=S^{,i}$ is a gradient HV/KV.~The Lie symmetry is
\begin{equation}
X=\left( 2\psi \int Tdt+c_{1}\right) \partial _{t}+TS^{,i}\partial
_{i}+\left( \left( -\frac{1}{2}T_{,t}S+F\left( t\right) \right) u+b\left(
t,x\right) \right) \partial _{u}  \label{HEF.26}
\end{equation}%
where $b\left( t,x\right) $ is a solution of the homogeneous heat equation,
the functions $T(t),$ $F\left( t\right) ~$and the flux $q\left( u\right) ~$%
satisfy the equation:%
\begin{equation}
\left( -\frac{1}{2}T_{,t}\psi +\frac{1}{2}T_{,tt}S-F_{,t}\right) u+H\left(
b\right) -\left( \left( -\frac{1}{2}T_{,t}S+F\right) u+b\right)
q_{,u}+\left( -\frac{1}{2}T_{,t}S+F\right) q-2\psi qT=0  \label{HEF.27}
\end{equation}
\end{theorem}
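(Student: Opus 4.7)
The plan is to derive this result as a direct specialisation of Theorem \ref{The Lie of the heat equation with flux}, which already classifies the Lie symmetries of the heat equation for a general flux $q(t,x^k,u)$. Since the current statement restricts attention to an autonomous flux $q=q(u)$, the strategy is to substitute $q_{,t}=0$ and $q_{,k}=0$ into the constraint equations (\ref{HEF.20}) and (\ref{HEF.22}) and read off the simplified conditions. The form of the symmetry generators in the two cases is inherited unchanged, because the geometric part of the determining system (\ref{HEF.01.6})--(\ref{HEF.01.7}) is insensitive to the form of $q$; only the residual scalar constraint (\ref{HEF.01.4}) is altered by specialising $q$.

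First I would dispatch Case a, in which $Y^{i}$ is a (possibly non-gradient) HV or KV. The generator (\ref{HEF.19}) carries over verbatim. In (\ref{HEF.20}) the only $q$-dependent pieces are $(2\psi c_{2}qt+c_{1}q)_{t}$ and $c_{2}q_{,i}Y^{i}$. Under $q_{,t}=0$ the first term collapses to $2\psi c_{2}q$ and under $q_{,i}=0$ the second term vanishes. Substitution then yields (\ref{HEF.25}) exactly.

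Next I would treat Case b, in which $Y^{i}=S^{,i}$ is gradient, using the same specialisation. The generator (\ref{HEF.21}) is again inherited unaltered. Inside (\ref{HEF.22}) the derivative $\bigl(2\psi q\int T\,dt+c_{1}q\bigr)_{t}$ reduces to $2\psi qT$ because $q$ has no explicit $t$-dependence, while the convective term $Tq_{,i}S^{,i}$ vanishes. Combining these two simplifications produces precisely (\ref{HEF.27}), and the function $b(t,x)$ still appears linearly through the operator $H(b)=\Delta b-b_{t}$, so it must solve the associated inhomogeneous linear equation dictated by that constraint.

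The only non-routine point — and the one that merits care rather than being a true obstacle — is verifying that the autonomous restriction $q=q(u)$ does not open up new branches of symmetry that were invisible in the general classification. This follows because the determining equations (\ref{HEF.01.1})--(\ref{HEF.01.7}) depend on $q$ only through the single scalar equation (\ref{HEF.01.4}); the geometric conditions fixing $\xi^{i}$ as a conformal/homothetic/Killing vector and $\xi^{t}=\xi^{t}(t)$ are untouched. Hence the exhaustive dichotomy (gradient vs.\ non-gradient element of the homothetic algebra) produced by Theorem \ref{The Lie of the heat equation with flux} persists, and specialising to $q(u)$ can at most enlarge the set of admissible tuples $(a,b,F,T,q)$ within each branch, not add new branches. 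This completes the proposed argument.
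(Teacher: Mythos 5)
Your proposal is correct and follows essentially the same route as the paper: the paper itself presents this theorem as an immediate specialisation of Theorem \ref{The Lie of the heat equation with flux}, obtained by setting $q_{,t}=0$ and $q_{,i}=0$ in the constraints (\ref{HEF.20}) and (\ref{HEF.22}) while keeping the generators (\ref{HEF.19}) and (\ref{HEF.21}) unchanged, exactly as you do. Your closing remark that the restriction to $q=q(u)$ cannot create new branches (since $q$ enters the determining system only through the single scalar condition (\ref{HEF.01.4})) is a point the paper leaves implicit, and is a welcome addition.
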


For various cases of $q\left( u\right) ,$ we obtain the results of Table%
\footnote{%
Where int Table \ref{tableC5} $Y^{i}$ is a HV/KV, $S^{,i}$ is a gradient
HV/KV and $K^{,i}$ is a gradient KV.} \ref{tableC5}.%
%TCIMACRO{\TeXButton{B}{\begin{table}[tbp] \centering}}%
%BeginExpansion
\begin{table}[tbp] \centering%
%EndExpansion
\caption{Functions of $q(u)$ where the Heat equation admits Lie symmetries}%
\begin{tabular}{cc}
\hline\hline
Function$~q\left( u\right) $ & Lie Symmetry vector \\ \hline
$q\left( u\right) =q_{0}u$ & $\left( \psi T_{0}t^{2}+2c\psi t+c_{1}\right)
\partial _{t}+\left( cY^{i}+T_{0}tS^{,i}\right) \partial _{i}+$ \\
& $~+\left( \left[ -2\psi cq_{0}t+a_{0}+T_{0}\left( -\frac{1}{2}S-\psi
q_{0}t^{2}-\frac{1}{2}t\right) \right] u+b\left( t,x\right) \right) \partial
_{u}\;\text{where $H(b)-bq_{0}=0$}$ \\
$q\left( u\right) =q_{0}u^{n}$ & $\left( 2c\psi t+c_{1}\right) \partial
_{t}+cY^{i}\partial _{i}+\left( \frac{2\psi c}{1-n}u\right) \partial _{u}$
\\
$q\left( u\right) =u\ln u$ & $c_{1}\partial _{t}+\left(
Y^{i}+T_{0}e^{-t}K^{^{,}i}\right) \partial _{i}+\left( a_{0}e^{-t}u\right)
\partial _{u}~,~K^{^{,}i}$ \\
$q\left( u\right) =e^{u}$ & $\left( 2c\psi t+c_{1}\right) \partial
_{t}+cY^{i}\partial _{i}+\left( -2\psi c\right) \partial _{u}$ \\
\hline\hline
\end{tabular}%
\label{tableC5}%
%TCIMACRO{\TeXButton{E}{\end{table}}}%
%BeginExpansion
\end{table}%
%EndExpansion

\section{Conclusion}

The main result of this chapter is Proposition \ref{Coefficient xi third},
which states that the Lie symmetries of the PDEs of the form \ (\ref{GPE.10}%
) are obtained from the conformal vectors of the metric defined by the
coefficients $A_{ij},$ provided $A_{ij,u}=0.$ This result is quite general
and covers many well known and important PDEs of Physics. The geometrization
of Lie point symmetries and their association with the collineations of the
metric dissociates their determination from the dimension of the space
because the collineations of the metric depend (in general) on the type of
the metric and not on the dimensions of the space where the metric resides.
Furthermore, this association provides a wealth of results of Differential
Geometry on collineations, which is possible to be used in the determination
of Lie pont symmetries.

We have applied the above theoretical results to the Poisson equation, the
conformal Poisson (Yamabe) equation and the heat equation. We proved that
the Lie symmetries of the Poisson equations (Laplace/Yamabe) are generated
from the elements of the conformal group of the metric that defines the
Laplace/Yamabe operator.

For the heat conduction equation, we proved that the Lie symmetries are
generated from the homothetic group of the underlying metric. Furthermore,
we specialized the equation to the homogeneous heat conduction equation and
regained the existing results for the Newtonian case.

In the following chapter. we apply the theoretical results of this chapter
to study the correlation of point symmetries between Classical and Quantum
systems.

%TCIMACRO{\TeXButton{\newpage}{\newpage}}%
%BeginExpansion
\newpage%
%EndExpansion

%TCIMACRO{\TeXButton{AppendixA}{\begin{subappendices}}}%
%BeginExpansion
\begin{subappendices}%
%EndExpansion

\section{Appendix A}

\label{appenA5}

We prove the statement for $n=2$. The generalization to any $n$ is
straightforward. For a general proof, see \cite{BlumanPaper}. We consider $A$
as a matrix and assume that the inverse of this matrix exists. We denote the
inverse matrix with $B$ and we get from (\ref{Po.1})%
\begin{align}
B_{ij}A^{ij}\xi _{.,u}^{k}+B_{ij}A^{kj}\xi _{.,u}^{i}+B_{ij}A^{ik}\xi
_{.,u}^{j}& =0  \notag \\
2\xi _{.,u}^{k}+\delta _{i}^{k}\xi _{.,u}^{i}+\delta _{j}^{k}\xi _{.,u}^{j}&
=0\Rightarrow  \notag \\
\xi _{.,u}^{k}& =0.  \label{Po.3}
\end{align}

Now, assume that the tensor $A$ does not have an inverse. Then, we consider $%
n=2$ and write:
\begin{equation*}
\lbrack A]=\left[
\begin{tabular}{ll}
$A_{11}$ & $A_{12}$ \\
$A_{12}$ & $A_{22}$%
\end{tabular}%
\right] \Rightarrow \det A=A^{11}A^{22}-(A^{12})^{2}=0
\end{equation*}%
where at least one of the $A^{ij}\neq 0.$ Assume $A^{11}\neq 0.$ Then,
equation (\ref{Po.1}) for $i=j=k=1$ gives%
\begin{equation*}
3A^{11}\xi _{.,u}^{1}=0\Rightarrow \xi _{.,u}^{1}=0.
\end{equation*}%
The same equation for $i=j=k=2$ gives%
\begin{equation*}
3A^{22}\xi _{.,u}^{2}=0
\end{equation*}%
therefore, either $\xi _{.,u}^{2}=0$ or $A^{22}=0.$ If $A^{22}=0$, then,
from the condition $\det A^{ij}=0,$ we have $A^{12}=0$; hence, $A_{ij}=0,$
which we do not assume. Thus, $\xi _{.,u}^{2}=0.$

We consider now equations $i=j\neq k$ and find%
\begin{equation*}
A^{ii}\xi _{.,u}^{k}+A^{ki}\xi _{.,u}^{i}+A^{ik}\xi _{.,u}^{i}=0.
\end{equation*}%
Because $i\neq k,$ this gives $A^{ii}\xi _{.,u}^{k}=0$ and because we have
assumed $A^{11}\neq 0$ it follows $\xi _{.,u}^{2}.$ Therefore, we find $\xi
_{.,u}^{k}=0.$

\section{Appendix B}

\label{appendixP}

\begin{proof}[Proof of Lemma \protect\ref{LemmaPDE.1}]
By using the formula%
\begin{equation*}
L_{\xi }\Gamma _{.jk}^{i}=\Gamma _{.jk,l}^{i}\xi ^{l}+\xi _{,~jk}^{i}-\xi
_{~,l}^{i}\Gamma _{.jk}^{l}+\xi _{~,j}^{s}\Gamma _{.sk}^{i}+\xi
_{~,k}^{s}\Gamma _{.sj}^{i}
\end{equation*}%
we have%
\begin{align*}
g^{jk}L_{\xi }\Gamma _{.jk}^{i}& =\Gamma _{,l}^{i}\xi ^{l}+g^{jk}\xi
_{~,,jk}^{i}-g_{~~~~,l}^{jk}\xi ^{l}\Gamma _{.jk}^{i}-\xi _{~,l}^{i}\Gamma
^{l}+2g^{jk}\xi _{~,j}^{s}\Gamma _{.sk}^{i} \\
& =\Gamma _{,l}^{i}\xi ^{l}+g^{jk}\xi _{~,~jk}^{i}-\xi _{~,l}^{i}\Gamma ^{l}
\\
& -[g^{jl}\xi _{,l}^{k}+g^{kl}\xi _{,l}^{j}-(\lambda -a))g^{jk}]\Gamma
_{.jk}^{i}+2g^{jk}\xi _{~,j}^{s}\Gamma _{~sk}^{i}
\end{align*}%
that is,%
\begin{equation*}
g^{jk}L_{\xi }\Gamma _{.jk}^{i}==\Gamma _{,l}^{i}\xi ^{l}+g^{jk}\xi
_{~,~jk}^{i}-\xi _{~,l}^{i}\Gamma ^{l}+2(g^{jl}\xi _{~,j}^{k}\Gamma
_{~kl}^{i}-g^{lj}\xi _{~,j}^{k}\Gamma _{~kl}^{i})-(\lambda -a)\Gamma ^{i}
\end{equation*}%
Therefore,%
\begin{equation*}
g^{jk}L_{\xi }\Gamma _{.jk}^{i}=g^{jk}\xi _{,~jk}^{i}+\Gamma _{~,l}^{i}\xi
^{l}-\xi _{~,l}^{i}\Gamma ^{l}+(a-\lambda )\Gamma _{~}^{i}.
\end{equation*}
\end{proof}

\begin{proof}[Proof of Lemma \protect\ref{LemmaPDE.2}]
By using the identity%
\begin{equation}
L_{\xi }\Gamma _{.jk}^{i}=\frac{1}{2}g^{ir}\left[ \nabla _{k}L_{\xi
}g_{jr}+\nabla _{j}L_{\xi }g_{kr}-\nabla _{r}L_{\xi }g_{kj}\right]
\end{equation}%
and replacing $L_{\xi }g_{ij}=(a-\lambda )g_{ij}$~because $\xi ~$is a
CKV,~we find%
\begin{align*}
L_{\xi }\Gamma _{.jk}^{i}& =\frac{1}{2}g^{ir}\left[ (a-\lambda
)_{,k}g_{jr}+(a-\lambda )_{,j}g_{kr}-(a-\lambda )_{,r}g_{kj}\right] \\
& =\frac{1}{2}\left[ (a-\lambda )_{,k}\delta _{j}^{i}+(a-\lambda
)_{,j}\delta _{k}^{i}-g^{ir}(a-\lambda )_{,r}g_{kj}\right] .
\end{align*}%
By contracting with $g^{jk}$ it follows%
\begin{equation*}
g^{jk}L_{\xi }\Gamma _{.jk}^{i}=\frac{2-n}{2}(a-\lambda )^{,i}.
\end{equation*}
\end{proof}

\begin{proof}[Proof of Theorem \protect\ref{Theor}]
The Lie symmetry conditions for the Poisson equation (\ref{LE.01a}) are
\begin{equation}
g^{ij}(a_{ij}u+b_{ij})-(a_{,i}u+b_{,i})\Gamma ^{i}-\xi
^{k}f_{,k}-auf_{,u}-bf_{,u}+\lambda f=0  \label{WDW.01}
\end{equation}%
\begin{equation}
g^{ij}\xi _{,ij}^{k}-2g^{ik}a_{,i}+a\Gamma ^{k}-\xi _{,i}^{k}\Gamma ^{i}+\xi
^{i}\Gamma _{,i}^{k}-\lambda \Gamma ^{k}=0  \label{WDW.02}
\end{equation}%
\begin{equation}
L_{\xi ^{i}\partial _{i}}g_{ij}=(a-\lambda )g_{ij}  \label{WDW.03}
\end{equation}%
\begin{equation}
\eta =a(x^{i})u+b(x^{i})~,~\xi _{,u}^{k}=0
\end{equation}%
Equation (\ref{WDW.02}) becomes (see \cite{Bozhkov})
\begin{equation}
g^{jk}L_{\xi }\Gamma _{.jk}^{i}=2g^{ik}a_{,i}  \label{WDW.04}
\end{equation}%
From (\ref{WDW.03}), $\xi ^{i}$ is a CKV, then equations (\ref{WDW.04}) give
\begin{equation*}
\frac{2-n}{2}(a-\lambda )^{,i}=2a^{,i}\rightarrow \left( a-\lambda \right)
^{i}=\frac{4}{2-n}a^{,i}.
\end{equation*}

We define
\begin{equation}
\psi =\frac{2}{2-n}a+a_{0}
\end{equation}%
where $\psi =\frac{1}{2}\left( a-\lambda \right) $ is the conformal factor
of $\xi ^{i}$ i.e. $L_{\xi }g_{ij}=2\psi g_{ij}$. Furthermore, we have%
\begin{eqnarray*}
\left( 2-n\right) \lambda ^{i} &=&\left( 2-n\right) a^{i}-4a^{i} \\
\left( 2-n\right) \lambda ^{i} &=&-\left( n+2\right) a^{i}
\end{eqnarray*}%
Finally, from (\ref{WDW.01}), we have the constraint%
\begin{equation}
g^{ij}a_{i;j}u+g^{ij}b_{;ij}-\xi ^{k}f_{,k}-auf_{,u}+\lambda f-bf_{,u}=0.
\end{equation}

For $n=2,$ holds that $~g^{jk}L_{\xi }\Gamma _{.jk}^{i}=0;$ this means that $%
a_{,i}=0\rightarrow a=a_{0}.$ From (\ref{WDW.03}), $\xi ^{i}$ is a CKV with
conformal factor
\begin{equation}
2\psi =\left( a_{0}-\lambda \right)
\end{equation}%
and~$\lambda =a_{0}-2\psi .~$Finally, from (\ref{WDW.01}), we have the
constraint%
\begin{equation}
g^{ij}b_{;ij}-\xi ^{k}f_{,k}-a_{0}uf_{,u}+\left( a_{0}-2\psi \right)
f-bf_{,u}=0.
\end{equation}
\end{proof}

\begin{proof}[Proof of Theorem \protect\ref{CTheor}]
By replacing $f\left( x^{k},u\right) =-M_{0}R+\bar{f}\left( x^{i},u\right) $
in (\ref{KG.Eq0})~where $M_{0}=\frac{n-2}{4\left( n-1\right) }$, we have the
symmetry condition
\begin{equation}
\frac{2-n}{2}\Delta \psi u+g^{ij}b_{i;j}+M_{0}\xi ^{k}R_{,k}u-\xi ^{k}\bar{f}%
_{,k}+2M_{0}\psi uR-\frac{2-n}{2}\psi u\bar{f}_{k}-\frac{2+n}{2}\psi \bar{f}%
-b\left( -M_{0}R+\bar{f}_{u}\right) =0  \label{Ct.01}
\end{equation}%
But $\xi ^{i}$ is a CKV with conformal factor $\psi $. $\ $That implies \cite%
{HallSteele}
\begin{equation}
\xi ^{k}R_{,k}=-2\psi R-2\left( n-1\right) \Delta \psi  \label{Ct.02}
\end{equation}%
and this implies for the terms in (\ref{Ct.01})
\begin{equation*}
0=\left( \frac{2-n}{2}\Delta \psi +M_{0}\xi ^{k}R_{,k}+2M_{0}\psi R\right) u.
\end{equation*}%
Hence, condition become (\ref{Ct.01}) finally becomes%
\begin{equation*}
-\xi ^{k}\bar{f}_{,k}-\frac{2-n}{2}\psi u\bar{f}_{k}-\frac{2+n}{2}\psi \bar{f%
}+g^{ij}b_{i;j}-b\left( -M_{0}R+\bar{f}_{u}\right) =0.
\end{equation*}
\end{proof}

\begin{proof}[Proof of Theorem \protect\ref{The Lie of the heat equation
with flux}]
Condition (\ref{HEF.01.7}) means that $\xi ^{i}$ is a CKV of the metric $%
g_{ij}$ with conformal factor $a(t,x^{k})-\lambda (t,x^{k}).~$Condition (\ref%
{HEF.01.6}) implies $\xi ^{k}=T\left( t\right) Y^{k}\left( x^{j}\right) ,$
where $Y^{i}$ is a HV with conformal factor $\psi ,$ that is, we have:
\begin{equation*}
L_{Y^{i}}g_{ij}=2\psi g_{ij}\text{ ~,~}\psi \text{=constant.}
\end{equation*}%
and%
\begin{equation*}
\xi _{,t}^{t}=a-\lambda
\end{equation*}%
from which follows
\begin{equation}
\xi ^{t}\left( t\right) =2\psi \int Tdt.  \label{HEF.01.08}
\end{equation}%
\begin{equation}
-2g^{ik}a_{,i}+T_{,t}Y^{k}=0.  \label{HEF.01.09}
\end{equation}%
Condition (\ref{HEF.01.4}) becomes%
\begin{equation*}
H(a)u+H(b)+(a-\xi _{,t}^{t})q=\xi ^{t}q_{,t}+T(t)Y^{k}q_{,k}+\eta
q_{,u}\Rightarrow
\end{equation*}%
\begin{equation*}
H(a)u+H(b)-\left( au+b\right) q_{,u}+aq-(\xi ^{t}q)_{,t}-T(t)Y^{k}q_{,k}=0
\end{equation*}%
\begin{equation}
H\left( a\right) u+H\left( b\right) -\left( au+b\right) q_{,u}+aq-\left(
2\psi q\int Tdt\right) _{t}-Tq_{,i}Y^{i}=0.  \label{HEF.12}
\end{equation}%
We consider the following cases:

Case 1: $Y^{k}$ is a HV/KV. From (\ref{HEF.01.09}), we have that $%
T_{,t}=0\rightarrow T\left( t\right) =c_{2}$ and $a_{,i}=0\rightarrow
a\left( t,x^{k}\right) =a\left( t\right) .$ Then, (\ref{HEF.12}) becomes%
\begin{equation}
-a_{t}u+H\left( b\right) -\left( au+b\right) q_{,u}+aq-\left( 2\psi
c_{2}qt+c_{1}q\right) _{t}-c_{2}q_{,i}Y^{i}=0  \label{HEF.12.1}
\end{equation}%
Case 2: $Y^{k}$ is a gradient HV/KV, that is $Y^{k}=S^{,k}.$ From (\ref%
{HEF.01.09}), we have%
\begin{equation}
a\left( t,x^{k}\right) =-\frac{1}{2}T_{,t}S+F\left( t\right) .
\label{HEF.13}
\end{equation}%
By replacing in (\ref{HEF.12}), we find the constraint equation%
\begin{align}
0& =\left( -\frac{1}{2}T_{,t}\psi +\frac{1}{2}T_{,tt}S-F_{,t}\right)
u+H\left( b\right) +  \notag \\
& -\left( \left( -\frac{1}{2}T_{,t}S+F\right) u+b\right) q_{,u}+\left( -%
\frac{1}{2}T_{,t}S+F\right) q-\left( 2\psi q\int Tdt+c_{1}q\right)
_{t}-Tq_{,i}S^{,i}.  \label{HEF.14}
\end{align}
\end{proof}

%TCIMACRO{\TeXButton{AppendixA}{\end{subappendices}}}%
%BeginExpansion
\end{subappendices}%
%EndExpansion

\chapter{Point symmetries of Schr\"{o}dinger and the Klein Gordon equations
\label{chapter6}}

\section{Introduction}

The Schr\"{o}dinger and the Klein Gordon equations are two important
equations of Quantum Physics. Therefore, it is important that we determine
their Lie symmetries and use them either in order to find invariant
solutions\ using Lie symmetry methods \cite{StephaniB}. In order to achieve
this, we notice that the Schr\"{o}dinger equation is a special case of the
heat conduction equation and the Klein Gordon equation is a special form of
the Poisson equation. The Lie symmetries of the heat equation and of the
Poisson equation in\ a general Riemannian space were determined in Chapter %
\ref{chapter5}. Thus, we apply these results to find the Lie symmetries of
the Schr\"{o}dinger and the Klein Gordon equation in a general Riemannian
space.

An important element of the present study is the concept of conformally
related Lagrangians, that is, Lagrangians that under under a conformal
transformation of the metric and the potential lead to the same equations
but for different dynamic variables. The condition for this is that the
Hamiltonian vanishes. Because the dynamic variables of these Lagrangians are
not the standard ones in general the Hamiltonian is not relevant to the
energy of the system.

From each Lagrangian describing a dynamical system, we define a metric
called the kinematic metric, characteristic to the dynamical system
described by this Lagrangian. As it will be seen, the conformal symmetries
of this metric are in close relation to the Noether symmetries of the
equations of motion. Furthermore, the kinetic metric of the Lagrangian
defines the Laplace operator; hence, consequently the Lie symmetries of the
corresponding Poisson equation are expressed in terms of the conformal
symmetries of the kinematic metric. We extend these results to the Yamabe
operator and study the Lie symmetries of the conformal Klein Gordon equation.

In section \ref{Symmetries of Lagrangian}, we consider the classical
Lagrangian
\begin{equation}
L\left( x^{k},\dot{x}^{k}\right) =\frac{1}{2}g_{ij}\left( x^{k}\right) \dot{x%
}^{i}\dot{x}^{j}-V\left( x^{k}\right)  \label{ANS1}
\end{equation}%
in a general Riemannian space and we show that the Noether point symmetries
of two conformally related Lagrangians are generated from the conformal
algebra of the metric $g_{ij}.$

In section \ref{Symmetries of Schrodinger and the Klein Gordon equations},
we \ study the Lie point symmetries of Schr\"{o}dinger and the Klein Gordon
equation by using the resutls of Chapter \ref{chapter5}. Using the geometric
character of the Noether symmetries for the Lagrangian (\ref{ANS1}) and that
of\ the Lie symmetries of the Schr\"{o}dinger and of the Klein Gordon
equation we establish the connection between the two. More specifically, it
will be shown that if an element of the homothetic group of the kinetic
metric generates a Noether point symmetry for the classical Lagrangian, then
it also generates a Lie point symmetry for the Schr\"{o}dinger equation.
Concerning the Klein Gordon we find that the Noether symmetry of the
Lagrangian (\ref{ANS1}) must have a constant Noether gauge function in order
to be admitted.

In section \ref{Symmetries of the Lagrangian with non constant gauge
function}, we examine the case of Noether symmetries whose Noether gauge
functions are not constant. We will consider the cases the kinematic metric
admits a gradient Killing vector (KV) or a gradient homothetic vector (HV)
which produces Noether point symmetries for the Lagrangian (\ref{ANS1}) and
show that the Lie symmetry in both cases is indeed a non-local symmetry of
the Klein Gordon equation. In section \ref{Applications}, we demonstrate the
use of the previous general results to various interesting practical
situations.

To complete our analysis, in section \ref{WKBAp}, we examine the WKB
approximation. In that case, the solution of the Klein Gordon equation
satisfies the null Hamilton Jacobi equation. We derive the symmetry
condition of the null Hamilton Jacobi equation and we prove that its Lie
point symmetries are generated from the CKVs of the underlying space.
Furthermore, there exists a unique relation between the Lie point symmetries
of the Hamilton Jacobi and the Lie symmetries of the Euler-Lagrange
equations of a classical particle; in particularly, the Lie point symmetries
of the Euler-Lagrange equations which are generated from the homothetic
algebra of the Riemannian space are generating point symmetries for the
Hamilton Jacobi equation; that is, the Lie symmetry algebra of the Hamilton
Jacobi equation can be greater than the Noether algebra of the classical
Lagrangian (\ref{ANS1}).

\section{Noether symmetries of Conformal Lagrangians}

\label{Symmetries of Lagrangian}

Consider the Lagrangian of a particle moving under the action of a potential
$V(x^{k})$ in a Riemannian space with metric $g_{ij}$
\begin{equation}
L=\frac{1}{2}g_{ij}\dot{x}^{i}\dot{x}^{j}-V\left( x^{k}\right)
\label{CLN.05}
\end{equation}%
where $\dot{x}=\frac{dx}{dt}.~$The equations of motion follow from the action%
\begin{equation}
S=\int dt\left( L\left( x^{k},\dot{x}^{k}\right) \right) =\int dt\left(
\frac{1}{2}g_{ij}\dot{x}^{i}\dot{x}^{j}-V\left( x^{k}\right) \right) .
\label{CLN.06}
\end{equation}%
Consider the change of variable $t\rightarrow \tau $ defined by the
requirement%
\begin{equation}
d\tau =N^{2}\left( x^{i}\right) dt.  \label{CLN.06a}
\end{equation}%
In the new coordinates $(\tau ,x^{i}),$ the action becomes%
\begin{equation}
S=\int \frac{d\tau }{N^{2}\left( x^{k}\right) }\left( \frac{1}{2}%
g_{ij}N^{4}\left( x^{k}\right) x^{\prime i}x^{\prime j}-V\left( x^{k}\right)
\right) ~~  \label{CLN.07}
\end{equation}%
where $x^{\prime i}=\frac{dx}{d\tau }$ and the Lagrangian is transformed to
the new Lagrangian%
\begin{equation}
\bar{L}\left( x^{k},x^{\prime k}\right) =\frac{1}{2}N^{2}\left( x^{k}\right)
g_{ij}x^{\prime i}x^{\prime j}-\frac{V\left( x^{k}\right) }{N^{2}\left(
x^{k}\right) }.  \label{CLN.08}
\end{equation}%
If we consider a conformal transformation (not a coordinate transformation!)
of the metric $\bar{g}_{ij}=N^{2}\left( x^{k}\right) g_{ij}$ and a new
potential function $\bar{V}\left( x^{k}\right) =\frac{V\left( x^{k}\right) }{%
N^{2}\left( x^{k}\right) }$ then, the new Lagrangian $\bar{L}\left(
x^{k},x^{\prime k}\right) $ in the new coordinates $\tau ,x^{k}$, takes the
form,%
\begin{equation}
\bar{L}\left( x^{k},x^{\prime k}\right) =\frac{1}{2}\bar{g}_{ij}x^{\prime
i}x^{\prime j}-\bar{V}\left( x^{k}\right)  \label{CLN.09}
\end{equation}%
implying that equation (\ref{CLN.09}) is of the same form as the Lagrangian $%
L$ in equation (\ref{CLN.05}). From now on, the Lagrangian $L\left( x^{k},%
\dot{x}^{k}\right) $ of equation (\ref{CLN.05}) and the Lagrangian $\bar{L}%
\left( x^{k},x^{\prime k}\right) $ of equation (\ref{CLN.09}) will be called
conformal. In this framework, the action remains the same, i.e. it is
invariant under the change of parameter and the equations of motion in the
new variables $(\tau ,x^{i})$ will be the same with the equations of motion
for the Lagrangian $L$ in the original coordinates $(t,x^{k})$ .

In Chapter \ref{chapter3}, it was shown that the Noether point symmetries of
a Lagrangian of the form (\ref{CLN.05}) follow from the homothetic algebra
of the metric $g_{ij}~$(see Theorem \ref{The Noether symmetries of a
conservative system}). The same applies to the Lagrangian $\bar{L}\left(
x^{k},x^{\prime k}\right) $ and the metric $\bar{g}_{ij}.$~The conformal
algebra of the metrics $g_{ij},\bar{g}_{ij}$ (as a set) is the same;
however, their closed subgroups of HVs and KVs are generally different.
Hence, the following Corollary holds.

\begin{corollary}
The Noether point symmetries of the conformally related Lagrangians (\ref%
{CLN.05}), (\ref{CLN.09}) are contained in the common conformal algebra of
the metrics $g_{ij},\bar{g}_{ij}.$
\end{corollary}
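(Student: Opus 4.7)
The plan is to combine two results already established in the excerpt. The first is Theorem \ref{The Noether symmetries of a conservative system}, which asserts that for any autonomous conservative Lagrangian of the form $\frac{1}{2}g_{ij}\dot{x}^{i}\dot{x}^{j}-V(x^{k})$ the Noether point symmetries are generated from the homothetic algebra of the kinetic metric $g_{ij}$, subject to an additional constraint on the potential. Applied to the Lagrangian $L$ of (\ref{CLN.05}), this says that every Noether point symmetry of $L$ is of the form $\mathbf{X}=2\psi _{Y}t\partial _{t}+Y^{i}\partial _{i}$ (up to the cases with gradient HV), where $Y^{i}$ is a KV or HV of $g_{ij}$; applied to the Lagrangian $\bar{L}$ of (\ref{CLN.09}) in the time variable $\tau$, the same theorem yields that every Noether point symmetry of $\bar{L}$ is generated by an element of the homothetic algebra of $\bar{g}_{ij}=N^{2}g_{ij}$.

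The second ingredient is the standard fact, recalled in section \ref{LieAlgebra} of the excerpt, that two conformally related metrics share \emph{the same set} of conformal Killing vectors; only the decomposition of this conformal algebra into the subalgebras of KVs, HV, and proper CKVs changes under the conformal rescaling. In particular, every KV of $g_{ij}$ is a CKV of $\bar{g}_{ij}$ with conformal factor $\bar{\psi}=N^{-1}N_{,i}Y^{i}$, and similarly in the other direction. Consequently, the homothetic algebras of $g_{ij}$ and of $\bar{g}_{ij}$ are both subalgebras of a common Lie algebra, namely the conformal algebra $G_{CV}(g)=G_{CV}(\bar{g})$.

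Putting the two facts together, the proof reduces to a short syllogism: the generators of the Noether point symmetries of $L$ lie in $G_{HV}(g)\subseteq G_{CV}(g)$, those of $\bar{L}$ lie in $G_{HV}(\bar{g})\subseteq G_{CV}(\bar{g})$, and $G_{CV}(g)=G_{CV}(\bar{g})$, so both Noether algebras are contained in this common conformal algebra. The constraint on the potential selects which homothetic vectors actually become Noether symmetries in each case, but it does not affect the enclosing statement.

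The main subtlety, and the only point requiring care, will be keeping track of the reparameterisation $d\tau =N^{2}dt$ when transporting a Noether symmetry of $L$ to an object acting on $(\tau,x^{k})$: one must verify that the time-component of the generator transforms consistently so that the resulting vector still falls under the hypotheses of Theorem \ref{The Noether symmetries of a conservative system} applied to $\bar{L}$. This is a routine check using $\partial_{t}=N^{2}\partial_{\tau}$, and no new computation is needed beyond unpacking the first prolongation; it is not an obstacle but must be stated to make the identification of the two Noether algebras inside the common $G_{CV}$ unambiguous.
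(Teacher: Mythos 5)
Your argument is correct and is essentially the paper's own reasoning: the text immediately preceding the corollary invokes Theorem \ref{The Noether symmetries of a conservative system} for both $L$ and $\bar{L}$ and then notes that conformally related metrics share the same conformal algebra as a set, with only the KV/HV subalgebras differing. Your additional remark about checking the reparameterisation $d\tau=N^{2}dt$ is a sensible point of care, but it does not change the approach.
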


Now, we formulate and prove the following Lemma;

\begin{lemma}
\label{LemaHam}The Euler-Lagrange equations for two conformal Lagrangians
transform covariantly under the conformal transformation relating the
Lagrangians if and only if the Hamiltonian vanishes.
\end{lemma}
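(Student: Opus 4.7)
The plan is to derive both sets of Euler-Lagrange equations explicitly, express them in a common set of variables, and then compare them term-by-term in order to isolate the discrepancy. First I would write the Euler-Lagrange equations for $L(x^k,\dot{x}^k)$ in the original parameter $t$, which take the standard form $\ddot{x}^i + \Gamma^i_{jk} \dot{x}^j \dot{x}^k + g^{ij} V_{,j} = 0$, where $\Gamma^i_{jk}$ denote the Christoffel symbols of $g_{ij}$. Using the chain rule induced by $d\tau = N^2 dt$, namely $\dot{x}^i = N^2 x'^i$ and $\ddot{x}^i = N^4 x''^i + 2N^3 N_{,k} x'^k x'^i$, and dividing by $N^4$, this equation can be recast in the $(\tau, x^i, x'^i)$ variables. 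In parallel, I would compute the Euler-Lagrange equations directly from $\bar{L}$ with respect to $\tau$, using that the connection of the conformally rescaled metric $\bar{g}_{ij} = N^2 g_{ij}$ is
\begin{equation*}
\bar{\Gamma}^i_{jk} = \Gamma^i_{jk} + \delta^i_j (\ln N)_{,k} + \delta^i_k (\ln N)_{,j} - g_{jk} (\ln N)^{,i}.
\end{equation*}

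Subtracting the two equations term-by-term, the contributions from $\Gamma^i_{jk}$, the kinematic correction $2 N_{,k} x'^k x'^i / N$, and the leading force term $V^{,i}/N^4$ all cancel identically. The residual discrepancy collects into
\begin{equation*}
-\frac{N^{,i}}{N}\, g_{jk} x'^j x'^k - \frac{2 V N^{,i}}{N^5} = -\frac{2 N^{,i}}{N^5}\left(\tfrac{1}{2} g_{jk} \dot{x}^j \dot{x}^k + V\right) = -\frac{2 N^{,i}}{N^5}\, E,
\end{equation*}
where $E = T + V$ is the Hamiltonian of $L$. For a genuinely non-trivial conformal factor (so that $N^{,i}$ does not vanish identically), this residual vanishes if and only if $E = 0$ along the trajectory, yielding both directions of the claimed equivalence.

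The main obstacle is really a careful bookkeeping of the conformal terms: one must combine the variation of the Christoffel symbols under $\bar{g}_{ij} = N^2 g_{ij}$ with the chain-rule corrections coming from the non-affine reparametrization $d\tau = N^2(x)\, dt$, and then recognize that the two distinct residual contributions, one quadratic in the velocities and one linear in $V$, assemble precisely into the Hamiltonian multiplied by $\nabla N$. A conceptually interesting subtlety is that $\tau$ depends on the trajectory through $N(x(t))$, so this is not a standard affine reparametrization of the parameter; it is exactly for this reason that the equations generically fail to be covariant off the constraint surface $E=0$, and this is why the conformal Lagrangian approach is naturally suited to cosmological and relativistic problems where the super-Hamiltonian vanishes.
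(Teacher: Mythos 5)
Your proposal is correct and follows essentially the same route as the paper: both derive the Euler--Lagrange equations of $\bar L$ with the conformal connection $\hat\Gamma^i_{jk}$, pull them back to the $(t,\dot x^i)$ variables via $x'^i=\dot x^i/N^2$, and identify the residual term $-(\ln N)^{,i}\left(g_{jk}\dot x^j\dot x^k+2V\right)=-\tfrac{2N^{,i}}{N}E$, which vanishes precisely on the constraint surface $E=0$ (with $N_{,i}\neq 0$ assumed, as in the paper).
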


\begin{proof}
Consider the Lagrangian $L=\frac{1}{2}g_{ij}\dot{x}^{i}\dot{x}^{j}-V\left(
x^{k}\right) $ whose Euler-Lagrange equations are:
\begin{equation}
\ddot{x}^{i}+\Gamma _{jk}^{i}\dot{x}^{i}\dot{x}^{j}+V^{,i}=0  \label{CLN1.01}
\end{equation}%
where $\Gamma _{jk}^{i}$ are the Christofell symbols. The corresponding
Hamiltonian is given by
\begin{equation}
E=\frac{1}{2}g_{ij}\dot{x}^{i}\dot{x}^{j}+V\left( x^{k}\right) \;.
\label{CLN1.02}
\end{equation}%
For the conformally related Lagrangian $\bar{L}\left( x^{k},x^{\prime
k}\right) =\left( \frac{1}{2}N^{2}\left( x^{k}\right) g_{ij}x^{\prime
i}x^{\prime j}-\frac{V\left( x^{k}\right) }{N^{2}\left( x^{k}\right) }%
\right) $ where $N_{,j}\neq 0$ the resulting Euler Lagrange equations are%
\begin{equation}
x^{\prime \prime i}+\hat{\Gamma}_{jk}^{i}x^{\prime j}x^{\prime k}+\frac{1}{%
N^{4}}V^{,i}-\frac{2V}{N^{5}}N^{,i}=0  \label{CLN1.03}
\end{equation}%
where
\begin{equation}
\hat{\Gamma}_{jk}^{i}=\Gamma _{jk}^{i}+(\ln N)_{,k}\delta _{j}^{i}+(\ln
N)_{,j}\delta _{k}^{i}-(\ln N)^{,i}g_{jk}  \label{CLN1.04}
\end{equation}%
and the corresponding Hamiltonian is%
\begin{equation}
\bar{E}=\frac{1}{2}N^{2}\left( x^{k}\right) g_{ij}\dot{x}^{i}\dot{x}^{j}+%
\frac{V\left( x^{k}\right) }{N^{2}\left( x^{k}\right) }.  \label{CLN1.05}
\end{equation}%
In order to show that the two equations of motion are conformally related we
start from equation(\ref{CLN1.03}) and apply the conformal transformation
\begin{align*}
x^{\prime i}& =\frac{dx^{i}}{d\tau }=\frac{dx^{i}}{dt}\frac{dt}{d\tau }=\dot{%
x}^{i}\frac{1}{N^{2}} \\
x^{^{\prime \prime }i}& =\ddot{x}^{i}\frac{1}{N^{4}}-2\dot{x}^{i}\dot{x}%
^{j}\left( \ln N\right) _{,j}\frac{1}{N^{4}}.
\end{align*}%
By replacing in equation(\ref{CLN1.03}). we find:%
\begin{equation*}
\ddot{x}^{i}\frac{1}{N^{4}}-2\dot{x}^{i}\dot{x}^{j}\left( \ln N\right) _{,j}%
\frac{1}{N^{4}}+\frac{1}{N^{4}}\hat{\Gamma}_{jk}^{i}\dot{x}^{j}\dot{x}^{k}+%
\frac{1}{N^{4}}V^{,i}-\frac{2V}{N^{5}}N^{,i}=0
\end{equation*}%
By replacing $\hat{\Gamma}_{jk}^{i}$ from equation(\ref{CLN1.04}), we have
\begin{align*}
& \ddot{x}^{i}-2\dot{x}^{i}\dot{x}^{j}\left( \ln N\right) _{,j}+\Gamma
_{jk}^{i}\dot{x}^{j}\dot{x}^{k}+2(\ln N)_{,j}\dot{x}^{j}\dot{x}^{i} \\
-(\ln N)^{,i}g_{jk}\dot{x}^{j}\dot{x}^{k}+V^{,i}-2V(\ln N)^{,i}& =0
\end{align*}%
from which follows%
\begin{equation*}
\ddot{x}^{i}+\Gamma _{jk}^{i}\dot{x}^{j}\dot{x}^{k}+V^{,i}-(\ln
N)^{,i}\left( g_{jk}\dot{x}^{j}\dot{x}^{k}+2V\right) =0.
\end{equation*}%
Obviously, the above Euler-Lagrange equations coincide with equations(\ref%
{CLN1.01}) if and only if $g_{jk}\dot{x}^{j}\dot{x}^{k}+2V=0$, which implies
that the Hamiltonian of equation(\ref{CLN1.02}) vanishes. The steps are
reversible; hence, the inverse is also true.
\end{proof}

The physical meaning of this result is that systems with vanishing energy
are conformally invariant at the level of the equations of motion.

\section{Lie point symmetries of Schr\"{o}dinger and the Klein Gordon
equation}

\label{Symmetries of Schrodinger and the Klein Gordon equations}

In this section we study the Lie point symmetries of Schr\"{o}dinger and the
Klein Gordon equation in \ a Riemannian manifold. To do this, we use the
results of Chapter \ref{chapter5}. Furthermore, we will study the relation
between Noether pont symmetries of classical Lagrangians and Lie point
symmetries of the Schr\"{o}dinger and the Klein Gordon equation with the
same\ "kinetic" metric and the same potential.

\subsection{Symmetries of the Schr\"{o}dinger equation}

The Schr\"{o}dinger equation\footnote{%
We have absorbed the constant $\hslash $ and the imaginary unit $i$, in the
variables $x^{k},t$ respectively}
\begin{equation}
g^{ij}u_{ij}-\Gamma ^{i}u_{i}-u_{t}=V\left( x\right) u  \label{HC.01}
\end{equation}%
is a special form of the heat conduction equation (\ref{HEF.01}) with $%
q\left( t,x,u\right) =V\left( x\right) u.$ Therefore, it is possible to
study the Lie point symmetries of the Schr\"{o}dinger equation using Theorem %
\ref{The Lie of the heat equation with flux} which, in this case, takes the
following form.

\begin{theorem}
The Lie point symmetries of the Schr\"{o}dinger equation (\ref{HC.01}) are
generated from the elements of the homothetic algebra of the metric $g_{ij}$
as follows.

a. $Y^{i}$ is a non-gradient HV/KV. \ The Lie symmetry is
\begin{equation}
X=\left( 2c\psi t+c_{1}\right) \partial _{t}+cY^{i}\partial _{i}+\left(
a_{0}u+b\left( t,x\right) \right) \partial _{u}
\end{equation}%
with constraint equations%
\begin{equation}
H\left( b\right) -bV=0~,~~cL_{Y}V+2\psi cV+a_{0}=0.
\end{equation}

b $Y^{i}=H^{,i}$ is a gradient HV/KV. The Lie symmetry is%
\begin{equation*}
X=\left( 2\psi \int Tdt+c_{1}\right) \partial _{t}+TS^{,i}\partial
_{i}+\left( \left( -\frac{1}{2}T_{,t}S+F\left( t\right) \right) u+b\left(
t,x\right) \right) \partial _{u}
\end{equation*}%
with constraint equations%
\begin{eqnarray}
H\left( b\right) -bV &=&0 \\
L_{H}V+2\psi V-\frac{1}{2}c^{2}H+d &=&0
\end{eqnarray}%
and the functions $T,F$ are computed from the relations
\begin{equation}
T_{,tt}=c^{2}T~,~\frac{1}{2}T_{,t}\psi +F_{,t}=dT.
\end{equation}
\end{theorem}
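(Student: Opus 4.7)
The plan is to obtain the result as an immediate corollary of Theorem~\ref{The Lie of the heat equation with flux} by specialising the flux term to $q(t,x,u)=V(x)u$. First I would record the partial derivatives that will be needed throughout: $q_{,u}=V(x)$, $q_{,t}=0$, and $q_{,i}=V_{,i}u$. With these in hand, the two cases of the heat-equation theorem translate directly into the two cases of the theorem to be proved, and the only work left is to split the single resulting constraint into its $x$- and $u$-dependent pieces.

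For Case~(a), where $Y^{i}$ is a (non-gradient) HV/KV, I would substitute the expressions above into the constraint (\ref{HEF.20}). The terms $(au+b)q_{,u}$ and $aq$ produce $(au+b)V$ and $aVu$, and the cancellation $aVu-auV=0$ leaves exactly
\begin{equation*}
-a_{t}u+H(b)-bV-c\bigl(L_{Y}V+2\psi V\bigr)u=0.
\end{equation*}
Since this must hold identically in $u$, I would separate powers of $u$: the $u^{0}$ part yields $H(b)-bV=0$ (so that $b(t,x)$ is a solution of the Schrödinger equation), and the $u^{1}$ part yields $a_{t}+c\bigl(L_{Y}V+2\psi V\bigr)=0$. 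Because the first summand depends only on $t$ and the second only on $x$, both must be constant; taking $a=a_{0}$ constant then produces precisely the constraint $cL_{Y}V+2\psi cV+a_{0}=0$.

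For Case~(b), where $Y^{i}=H^{,i}$ is gradient with potential $H(x)$, I would substitute into (\ref{HEF.22}). As in Case~(a), the contributions $((-\tfrac{1}{2}T_{,t}S+F)u+b)q_{,u}$ and $(-\tfrac{1}{2}T_{,t}S+F)q$ cancel their common part, and $(2\psi q\int T\,dt+c_{1}q)_{t}$ reduces to $2\psi TVu$ because $q_{,t}=0$. The constraint collapses to
\begin{equation*}
\Bigl(-\tfrac{1}{2}T_{,t}\psi+\tfrac{1}{2}T_{,tt}H-F_{,t}\Bigr)u+H(b)-bV-T\bigl(L_{H}V+2\psi V\bigr)u=0.
\end{equation*}
The $u^{0}$ part again gives $H(b)-bV=0$. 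In the $u^{1}$ part the $x$- and $t$-dependencies are entangled only through the two building blocks $H(x)$ and $L_{H}V+2\psi V$; the natural separation ansatz $L_{H}V+2\psi V=\tfrac{1}{2}c^{2}H-d$ is the unique one that decouples the equation into a pair of ODEs in~$t$, giving $T_{,tt}=c^{2}T$ together with $\tfrac{1}{2}T_{,t}\psi+F_{,t}=dT$. This reproduces the constraints listed in the theorem.

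The main obstacle is the bookkeeping in Case~(b): one must verify that the separation ansatz for $L_{H}V+2\psi V$ is not only sufficient but also necessary for a non-trivial Lie symmetry to exist (i.e.\ that any $V$ giving rise to a symmetry generated from a gradient HV/KV must satisfy a relation of the prescribed form). Besides this, the specialisation is essentially algebraic and follows the pattern already established for the homogeneous heat equation in Section~\ref{hhe}.
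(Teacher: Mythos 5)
Your proposal follows exactly the paper's route: the paper obtains this theorem by observing that the Schr\"{o}dinger equation is the heat equation with flux $q(t,x,u)=V(x)u$ and then reading off the two cases of Theorem~\ref{The Lie of the heat equation with flux}; your specialisation of the constraints (\ref{HEF.20}) and (\ref{HEF.22}) is precisely the computation the paper leaves implicit, and your Case~(b), including the separation ansatz and the resulting ODEs for $T$ and $F$, is carried out correctly.

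One wrinkle in Case~(a). The separation of the $u^{1}$ part gives $a_{t}=-c\left(L_{Y}V+2\psi V\right)=\mathrm{const}$, so $a(t)$ is \emph{linear} in $t$, with slope equal to the constant $a_{0}$ appearing in the constraint $cL_{Y}V+2\psi cV+a_{0}=0$ (plus a free additive constant multiplying $u$, which is the trivial linearity symmetry $u\partial_{u}$). Setting $a=a_{0}$ constant, as you do in your last step, would instead force $c\left(L_{Y}V+2\psi V\right)=0$, so the sentence ``taking $a=a_{0}$ constant then produces precisely the constraint'' does not follow as written. This conflation of the slope of $a(t)$ with its constant part is already present in the statement of the theorem itself (compare the gauge function $f=c_{1}t$ in Case I of Theorem~\ref{The Noether symmetries of a conservative system}, to which this Lie symmetry corresponds via Proposition~\ref{SymKGHJ11}), so your derivation reproduces the paper faithfully; but to be internally consistent you should either keep $a(t)=a_{0}t+\mathrm{const}$ in the symmetry vector or state the constraint with $a_{0}=0$.
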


From the form of the symmetry vectors and the symmetry conditions for the
Schr\"{o}dinger and the Lagrangian of the classical particle (\ref{CLN.05})
we have the following result.

\begin{proposition}
\label{SymKGHJ11}If a KV/HV of the metric $g_{ij}$ produces a Lie point
symmetry for the Schr\"{o}dinger equation (\ref{HC.01}), then generates a
Noether point symmetry for the Lagrangian (\ref{CLN.05}) in the space with
metric $g_{ij}$ and potential $V\left( x^{k}\right) .~$The reverse is also
true.
\end{proposition}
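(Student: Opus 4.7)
My plan is to compare, directly and term by term, the geometric conditions that a KV/HV of $g_{ij}$ must satisfy in order (i) to produce a Noether point symmetry of the classical Lagrangian (\ref{CLN.05}) and (ii) to produce a Lie point symmetry of the Schr\"{o}dinger equation (\ref{HC.01}). Both characterizations are already in hand: Case I of Theorem \ref{The Noether symmetries of a conservative system} gives the Noether criterion $V_{,k}Y^{k}+2\psi_{Y}V+c_{1}=0$, while Case a of the Schr\"{o}dinger theorem of section \ref{Symmetries of Schrodinger and the Klein Gordon equations} gives the Lie criterion $cL_{Y}V+2\psi c V+a_{0}=0$. Both conditions are PDEs of \emph{identical structural form} on the pair $(Y^{i},V)$; only the additive constant differs, so the proof reduces to identifying this constant in the two pictures.

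For the forward direction, I will assume $Y^{i}$ is a KV/HV of $g_{ij}$ (with homothetic factor $\psi$) that generates a Lie point symmetry of (\ref{HC.01}). By the Schr\"{o}dinger theorem this is equivalent to the existence of constants $c\neq 0$ and $a_{0}$ satisfying $cL_{Y}V+2\psi c V+a_{0}=0$. Dividing by $c$ and setting $c_{1}:=a_{0}/c$ yields exactly $L_{Y}V+2\psi V+c_{1}=0$, which is the Noether criterion of Case I of Theorem \ref{The Noether symmetries of a conservative system}. The Noether vector then reads $X_{N}=2\psi t\partial_{t}+Y^{i}\partial_{i}$ with gauge $f=c_{1}t$.

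For the converse, I will start from a KV/HV $Y^{i}$ satisfying the Noether condition $L_{Y}V+2\psi V+c_{1}=0$. Choosing any $c\neq 0$ and setting $a_{0}:=cc_{1}$, multiplication by $c$ produces the Schr\"{o}dinger constraint $cL_{Y}V+2\psi cV+a_{0}=0$. The second Schr\"{o}dinger constraint $H(b)-bV=0$ is satisfied trivially by taking the admitted ``linear'' part $b(t,x)\equiv 0$, which is always a (trivial) solution of the original Schr\"{o}dinger equation; hence the vector $X_{S}=2\psi c t\partial_{t}+cY^{i}\partial_{i}+a_{0}u\partial_{u}$ is a Lie point symmetry of (\ref{HC.01}) generated by the same $Y^{i}$.

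The argument contains no hard obstacle; the only delicate point is bookkeeping of the free constants $c,a_{0},c_{1}$ and emphasizing that the Noether/Lie correspondence is at the level of the generating geometric vector $Y^{i}$ and the admissibility condition on $V$, not at the level of the precise numerical coefficients attached to $\partial_{t}$ and $u\partial_{u}$. It is also worth remarking explicitly that the trivial ``solution symmetry'' piece $b(t,x)\partial_{u}$ of (\ref{HC.01}) has no Noether analogue in (\ref{CLN.05}) and is therefore irrelevant to the statement, which concerns symmetries generated by an element of the homothetic algebra of $g_{ij}$.
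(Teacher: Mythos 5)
Your approach is exactly the paper's: the proposition is stated there as an immediate consequence of juxtaposing the two symmetry conditions, and your term-by-term identification of the constants $c,a_{0},c_{1}$ is precisely the intended argument for the non-gradient case, as is your remark that the trivial $b(t,x)\partial_{u}$ piece has no bearing on the statement. The one incompleteness is that you compare only Case a of the Schr\"{o}dinger symmetry theorem with Case I of Theorem \ref{The Noether symmetries of a conservative system}; the proposition also covers gradient KVs and the gradient HV, for which the Lie point symmetry may arise through Case b of that theorem, with a time-dependent coefficient $T(t)$ obeying $T_{,tt}=c^{2}T$ and the constraint $L_{H}V+2\psi V-\tfrac{1}{2}c^{2}H+d=0$, and the matching Noether point symmetry then comes from Case II of Theorem \ref{The Noether symmetries of a conservative system}, whose condition $V_{,k}Y^{,k}+2\psi _{Y}V=c_{2}Y+d$ and evolution equation $T_{,tt}=c_{2}T$ are of the same structural form. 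Adding that parallel comparison (with the obvious identification of the constants and of the gradient functions) closes the argument in full generality.
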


\subsection{Symmetries of the Klein Gordon equation}

\label{Lie point symmetries of the Klein Gordon equation2}The Klein-Gordon
equation
\begin{equation}
\Delta u-V\left( x^{k}\right) u=0  \label{KG.Eq111}
\end{equation}%
follows from the Poisson equation (\ref{LE.01a}) if we take $f\left(
x^{i},u\right) =V\left( x^{i}\right) u.$ Therefore, Theorem \ref{Theor}
applies and we have the following result

\begin{theorem}
\label{KG}The Lie point symmetries of the Klein Gordon equation (\ref%
{KG.Eq111})$\ $are generated from the CKVs of the metric~$g_{ij}$ defining
the Laplace operator, as follows

a) for $n>2,$ the Lie symmetry vector is%
\begin{equation}
X=\xi ^{i}\left( x^{k}\right) \partial _{i}+\left( \frac{2-n}{2}\psi \left(
x^{k}\right) u+a_{0}u+b\left( x^{k}\right) \right) \partial _{u}
\end{equation}%
where $\xi ^{i}$ is a CKV with conformal factor $\psi \left( x^{k}\right) $,$%
~b\left( x^{k}\right) $ is a solution of (\ref{KG.Eq111})~and the following
condition is satisfied%
\begin{equation}
\xi ^{k}V_{,k}+2\psi V-\frac{2-n}{2}\Delta \psi =0.  \label{KG.Eq22}
\end{equation}%
b) for $n=2,$ the Lie symmetry vector is
\begin{equation}
X=\xi ^{i}\left( x^{k}\right) \partial _{i}+\left( a_{0}u+b\left(
x^{k}\right) \right) \partial _{u}
\end{equation}%
where $\xi ^{i}$ is a CKV with conformal factor $\psi \left( x^{k}\right) $,$%
~b\left( x^{k}\right) $ is a solution of (\ref{KG.Eq111}) and the following
condition is satisfied%
\begin{equation}
\xi ^{k}V_{,k}+2\psi V=0.
\end{equation}
\end{theorem}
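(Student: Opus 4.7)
The plan is to derive Theorem \ref{KG} as a direct corollary of Theorem \ref{Theor}, since the Klein--Gordon equation (\ref{KG.Eq111}) is precisely the Poisson equation (\ref{LE.01a}) for the specific choice $f(x^{i},u)=V(x^{k})u$. So the strategy is not to redo the symmetry analysis from scratch but to substitute this $f$ into the general constraint of Theorem \ref{Theor} and separate the resulting expression according to its $u$-dependence.

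First I would observe that for $f=V(x)u$ one has $f_{,k}=V_{,k}u$, $f_{,u}=V$, so the general constraint (\ref{KG.Eq0}) for the case $n>2$ becomes
\begin{equation*}
\tfrac{2-n}{2}(\Delta \psi)\,u+g^{ij}b_{;ij}-\xi ^{k}V_{,k}\,u-\tfrac{2-n}{2}\psi u\,V-\tfrac{2+n}{2}\psi\,(Vu)-bV=0.
\end{equation*}
Using $\tfrac{2-n}{2}+\tfrac{2+n}{2}=2$, the coefficient of $u$ collapses to $\tfrac{2-n}{2}\Delta \psi-\xi ^{k}V_{,k}-2\psi V$, while the $u$-independent remainder is $g^{ij}b_{;ij}-bV=\Delta b-Vb$. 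Since the variables $u$ and $x^{k}$ are independent in the symmetry determining system, the two pieces must vanish separately; this is the central step of the argument.

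The vanishing of the $u$-independent piece gives $\Delta b-Vb=0$, which is exactly the statement that $b(x^{k})$ is itself a solution of the Klein--Gordon equation (\ref{KG.Eq111}). The vanishing of the coefficient of $u$ yields $\xi ^{k}V_{,k}+2\psi V-\tfrac{2-n}{2}\Delta \psi=0$, which is condition (\ref{KG.Eq22}). Together with the form of the symmetry vector inherited from Theorem \ref{Theor}, namely $X=\xi ^{i}\partial _{i}+(\tfrac{2-n}{2}\psi u+a_{0}u+b)\partial _{u}$ with $\xi ^{i}$ a CKV of $g_{ij}$ and conformal factor $\psi$, this proves part (a).

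For part (b), $n=2$, I would repeat the same substitution in the $n=2$ branch of Theorem \ref{Theor}, where the factor $\tfrac{2-n}{2}$ vanishes and the constraint reads $g^{ij}b_{;ij}-\xi ^{k}f_{,k}-a_{0}uf_{,u}+(a_{0}-2\psi)f-bf_{,u}=0$. Substituting $f=Vu$ gives the coefficient of $u$ as $-\xi ^{k}V_{,k}-2\psi V$ (the $a_{0}$ contributions cancel because they appear as $-a_{0}V+a_{0}V$), while the $u$-free part reduces again to $\Delta b-Vb=0$. This yields the stated conditions and completes part (b). There is no serious obstacle here: the entire argument is essentially algebraic bookkeeping given Theorem \ref{Theor}. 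The only mild subtlety is the cancellation of the $a_{0}$ terms and the correct combination $\tfrac{2-n}{2}+\tfrac{2+n}{2}=2$, which must be carried out carefully to land on (\ref{KG.Eq22}) with the correct signs.
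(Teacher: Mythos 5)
Your proposal is correct and follows exactly the route the paper takes: the paper obtains Theorem \ref{KG} by noting that the Klein--Gordon equation is the Poisson equation with $f(x^{i},u)=V(x^{k})u$ and then invoking Theorem \ref{Theor}, which is precisely your substitution argument. Your writeup merely makes explicit the linear-in-$u$ splitting of condition (\ref{KG.Eq0}) (and of its $n=2$ counterpart) that the paper leaves implicit, and the bookkeeping, including the $\tfrac{2-n}{2}+\tfrac{2+n}{2}=2$ combination and the cancellation of the $a_{0}$ terms, is carried out correctly.
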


By comparing the symmetry condition of the Klein Gordon equation (\ref%
{KG.Eq111}) and the classical Lagrangian (\ref{CLN.05}) and by taking into
consideration that for a special CKV/HV/KV the conformal factor satisfies
the condition$~\psi _{;ij}=0,$ we deduce the following result.

\begin{proposition}
\label{SymKGHJ}For $n>2,~$the Lie point symmetries of the Klein Gordon
equation for the metric $g_{ij}$ which defines the Laplace operator are
related to the Noether point symmetries of the classical Lagrangian for the
same metric and the same potential as follows

a) If a KV or HV of the metric $g_{ij}$ generates a Lie point symmetry of
the Klein Gordon equation (\ref{KG.Eq111}), then it also produces a Noether
point symmetry of the classical Lagrangian with gauge function a constant.

b) If a special CKV or a proper CKV satisfying the condition $\Delta \psi =0$
of the metric $g_{ij}$ generates a Lie point symmetry of the Klein Gordon
equation (\ref{KG.Eq111}), then it also generates a Noether point symmetry
of the conformally related\ Lagrangian if there exists a conformal factor $%
N\left( x^{k}\right) ,$ such that the CKV becomes a KV or a HV.
\end{proposition}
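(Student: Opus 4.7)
The plan is to reduce both statements to a direct comparison of the scalar constraint equations furnished by Theorem~\ref{KG} and Theorem~\ref{The Noether symmetries of a conservative system}, combined with the conformal rescaling formalism of Lemma~\ref{LemaHam}. For part (a) I would start from the Lie symmetry condition of the Klein--Gordon equation,
\begin{equation*}
\xi^{k}V_{,k}+2\psi V-\tfrac{2-n}{2}\Delta\psi =0,
\end{equation*}
and specialise it to a KV (where $\psi=0$) or an HV (where $\psi$ is a non-zero constant); in both cases $\Delta\psi=0$, so the condition collapses to $L_{\xi}V+2\psi V=0$. This is precisely the Noether constraint $V_{,k}Y^{k}+2\psi_{Y}V+c_{1}=0$ of Theorem~\ref{The Noether symmetries of a conservative system} (Case I) with $c_{1}=0$, and hence the Noether gauge function $f=c_{1}t$ is constant. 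The associated Noether vector $2\psi t\,\partial_{t}+\xi^{i}\partial_{i}$ is then read off directly from Theorem~\ref{The Noether symmetries of a conservative system}.

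For part (b) I would exploit the fact that two conformally related metrics $\bar g_{ij}=N^{2}g_{ij}$ share the same conformal algebra, but with conformal factor shifted according to $\bar\psi=\psi+\xi^{k}(\ln N)_{,k}$. Requiring $\xi$ to be a KV or HV of $\bar g_{ij}$ produces a first-order PDE for $N$ whose solvability is exactly the hypothesis of the statement. Given such an $N$, the conformally related Lagrangian $\bar L=\tfrac{1}{2}\bar g_{ij}\dot x^{i}\dot x^{j}-\bar V$ has potential $\bar V=V/N^{2}$, and a direct computation that I would carry out in detail yields the key identity
\begin{equation*}
\bar V_{,k}\xi^{k}+2\bar\psi\bar V \;=\; N^{-2}\bigl(V_{,k}\xi^{k}+2\psi V\bigr).
\end{equation*}
Since for a special CKV one has $\psi_{;ij}=0$ and for the proper CKVs of the statement one has $\Delta\psi=0$ by assumption, the Lie symmetry condition of the Klein--Gordon equation reduces again to $\xi^{k}V_{,k}+2\psi V=0$; the identity above then shows that this is exactly the Noether constraint for $\bar L$ with constant gauge function, as required. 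Lemma~\ref{LemaHam} ensures that the passage from $L$ to $\bar L$ preserves the Euler--Lagrange equations precisely when the Hamiltonian vanishes, which matches the null (massless-like) geometric setting naturally associated with proper conformal symmetries.

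The main technical obstacle will be the verification of the identity displayed above: although algebraic, it requires combining the transformation law $\bar V=V/N^{2}$ with the shift formula $\bar\psi-\psi=\xi^{k}(\ln N)_{,k}$, and the cross terms of the form $2V\xi^{k}(\ln N)_{,k}/N^{2}$ must be seen to cancel exactly with $2(\bar\psi-\psi)V/N^{2}$. A secondary, more conceptual point I would emphasise is that the correspondence claimed in part (b) is intrinsically conditional: it holds only when the PDE $\xi^{k}(\ln N)_{,k}=\bar\psi-\psi$ admits a solution, and only for those proper CKVs whose conformal factor is harmonic. In particular the correspondence between Lie point symmetries of the Klein--Gordon equation and Noether point symmetries of a classical Lagrangian is \emph{not} one-to-one at the level of $(g_{ij},V)$, but becomes one-to-one only after one is allowed to pass to a conformally related Lagrangian $(\bar g_{ij},\bar V)$.
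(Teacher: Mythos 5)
Your proposal is correct and follows essentially the same route as the paper, which deduces the proposition by directly comparing the Lie symmetry condition $\xi^{k}V_{,k}+2\psi V-\tfrac{2-n}{2}\Delta\psi=0$ of Theorem~\ref{KG} with the Noether condition of Theorem~\ref{The Noether symmetries of a conservative system}, using $\psi_{;ij}=0$ (hence $\Delta\psi=0$) for KVs, HVs and special CKVs, and the conformal-Lagrangian construction of section~\ref{Symmetries of Lagrangian} for part (b). Your explicit verification of the identity $\bar V_{,k}\xi^{k}+2\bar\psi\bar V=N^{-2}\bigl(V_{,k}\xi^{k}+2\psi V\bigr)$ is a worthwhile addition, since the paper leaves this cancellation implicit in the phrase ``by comparing the symmetry conditions.''
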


For $n=2,$ the results are different and are given below.

\subsection{Symmetries of the Conformal Klein Gordon equation}

The conformal Klein Gordon equation (or Yamabe Klein Gordon)
\begin{equation}
\bar{L}_{g}u-V\left( x^{k}\right) u=0.  \label{CKG.01A1}
\end{equation}%
is the conformal Poisson equation (\ref{Ct.00}) for $\bar{f}\left(
x^{i},u\right) =V\left( x^{k}\right) u.$ Therefore, Theorem \ref{CTheor} is
valid and we have the result.

\begin{theorem}
\label{CKG.01}The Lie point symmetries of the conformal Klein Gordon
equation (\ref{CKG.01A1}) are as follows

a) For $n>2,$ they$\ $are generated from the CKVs of the metric~$g_{ij}$ of
the conformal Laplace$~$operator, as follows%
\begin{equation}
X=\xi ^{i}\left( x^{k}\right) \partial _{i}+\left( \frac{2-n}{2}\psi \left(
x^{k}\right) u+a_{0}u+b\left( x^{k}\right) \right) \partial _{u}
\end{equation}%
where $\xi ^{i}$ is a CKV with conformal factor $\psi \left( x^{k}\right) ,~$%
and the following conditions are satisfied%
\begin{equation}
\xi ^{k}V_{,k}+2\psi V=0  \label{GGGG}
\end{equation}%
\begin{equation}
\bar{L}_{g}b-Vb=0.
\end{equation}%
b) For $n=2$, equation (\ref{CKG.01A1}) is the Laplace Klein Gordon equation
(\ref{KG.Eq111}) and the results of theorem \ref{KG} apply.
\end{theorem}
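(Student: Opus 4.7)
The plan is to recognize the conformal Klein Gordon equation (\ref{CKG.01A1}) as the specialization of the conformal Poisson (Yamabe) equation from Theorem \ref{CTheor} obtained by setting $\bar f(x^i,u)=V(x^k)u$, and then to split the single constraint equation of that theorem into two independent conditions by separating its $u$-linear and $u$-independent parts. Because Theorem \ref{CTheor} already fixes the form of the symmetry generator as
\[
X=\xi^i(x^k)\partial_i+\Bigl(\tfrac{2-n}{2}\psi(x^k)u+a_0u+b(x^k)\Bigr)\partial_u
\]
with $\xi^i$ a CKV of the metric $g_{ij}$ defining the Yamabe operator, the only real work is to rewrite the single constraint of that theorem under the present choice of $\bar f$.

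First I would compute $\bar f_{,k}=V_{,k}u$ and $\bar f_u=V$ and substitute into the constraint of Theorem \ref{CTheor}, obtaining
\[
-\Bigl[\xi^k V_{,k}+\tfrac{2-n}{2}\psi V+\tfrac{2+n}{2}\psi V\Bigr]u+\Bigl[g^{ij}b_{i;j}+M_0 R\,b-Vb\Bigr]=0.
\]
Since $u$ is a free jet variable while $b$ depends only on $x^k$, the coefficients of $u$ and of $1$ must vanish independently. The $u$-coefficient collapses to $-(\xi^k V_{,k}+2\psi V)$, because $\tfrac{2-n}{2}+\tfrac{2+n}{2}=2$ regardless of $n$, while the $u$-independent bracket equals $\bar L_g b-Vb$ by the definition $\bar L_g=\Delta+M_0R$ with $M_0=\tfrac{n-2}{4(n-1)}$. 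Setting each to zero yields exactly the two conditions (\ref{GGGG}) and $\bar L_g b-Vb=0$ stated in part (a).

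For part (b), when $n=2$ the constant $M_0$ vanishes, so $\bar L_g=\Delta$ and the conformal Klein Gordon equation reduces to the ordinary Klein Gordon equation; Theorem \ref{KG} then applies directly. The only delicate point in the whole argument is the $n$-independent cancellation in the $u$-coefficient that produces the clean form $\xi^k V_{,k}+2\psi V=0$, identical to the Laplace-operator case handled in Theorem \ref{KG}; no additional invocation of the conformal-curvature identity $\xi^k R_{,k}=-2\psi R-2(n-1)\Delta\psi$ is required at this stage, since that identity was already absorbed into the derivation of Theorem \ref{CTheor} when the Ricci-scalar piece of the Yamabe operator was handled. The proof is thus essentially a bookkeeping exercise on top of Theorem \ref{CTheor}.
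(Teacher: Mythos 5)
Your proposal is correct and follows essentially the same route as the paper, which obtains Theorem \ref{CKG.01} directly by specializing Theorem \ref{CTheor} to $\bar{f}(x^{i},u)=V(x^{k})u$; your explicit separation of the constraint into its $u$-linear part (giving $\xi^{k}V_{,k}+2\psi V=0$ via the $n$-independent cancellation $\tfrac{2-n}{2}+\tfrac{2+n}{2}=2$) and its $u$-independent part (giving $\bar{L}_{g}b-Vb=0$) is exactly the bookkeeping the paper leaves implicit. You are also right that the curvature identity $\xi^{k}R_{,k}=-2\psi R-2(n-1)\Delta\psi$ need not be reinvoked, since it was already consumed in the proof of Theorem \ref{CTheor}.
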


Comparing the symmetry condition of the conformal Klein Gordon equation (\ref%
{CKG.01A1}) and the classical Lagrangian (\ref{CLN.05}) we have the
following proposition.

\begin{proposition}
\label{CKGP}a) If a CKV of the metric $g_{ij}~$($\dim g_{ij}\succeq 2$),
which defines the conformal Laplace operator, produces a Lie point symmetry
of the conformal Klein Gordon equation (\ref{CKG.01A1}), then the same
vector generates a Noether point symmetry of the conformally related
Lagrangian provided there exists a conformal factor $N\left( x^{k}\right) $
such that the CKV becomes a KV/HV of $g_{ij}$.

b) If a KV/HV of the metric $g_{ij}$ generates a Lie point symmetry for the
conformal Klein Gordon equation (\ref{CKG.01A1}) then the same vector
generates a Noether point symmetry for the classical Lagrangian with gauge
function a constant.
\end{proposition}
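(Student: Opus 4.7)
The plan is to compare, side by side, the algebraic condition on the potential coming from Theorem \ref{CKG.01} (for the conformal Klein Gordon equation) with the potential condition for Noether point symmetries of $L=\frac{1}{2}g_{ij}\dot x^{i}\dot x^{j}-V$ given in Theorem \ref{The Noether symmetries of a conservative system}. Both parts of the proposition are essentially statements that these two conditions coincide in the relevant regimes, once we account for the conformal rescaling of the metric and the potential dictated by Lemma \ref{LemaHam}.

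For part (b) the argument is immediate. A KV or HV $\xi^{i}$ of $g_{ij}$ has constant conformal factor $\psi$, and by Theorem \ref{CKG.01} it is a Lie point symmetry of the conformal Klein Gordon equation precisely when $\xi^{k}V_{,k}+2\psi V=0$ (the additional requirement $\Delta \psi=0$ is trivially satisfied). On the other hand, Theorem \ref{The Noether symmetries of a conservative system} (Case I) states that $\xi^{i}$ generates a Noether point symmetry of the Lagrangian $L$ iff $V_{,k}\xi^{k}+2\psi V+c_{1}=0$ for a constant $c_{1}$, with Noether gauge $f=c_{1}t$. The two conditions agree with $c_{1}=0$, hence the gauge function is a constant, as claimed.

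For part (a) the idea is to transfer the CKV\ hypothesis to the conformally related metric $\bar g_{ij}=N^{2}g_{ij}$. If $N$ is chosen so that $\xi^{i}$ becomes a KV/HV of $\bar g_{ij}$, a direct computation of $L_{\xi}\bar g_{ij}$ gives $\bar\psi=\xi^{k}(\ln N)_{,k}+\psi$, and this must be a constant by hypothesis. The conformally related potential is $\bar V=V/N^{2}$, and by differentiating and substituting one finds
\begin{equation*}
\xi^{k}\bar V_{,k}+2\bar\psi\bar V=\frac{1}{N^{2}}\bigl(\xi^{k}V_{,k}+2\psi V\bigr),
\end{equation*}
which vanishes by the Klein Gordon Lie symmetry condition (\ref{GGGG}). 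Theorem \ref{The Noether symmetries of a conservative system} then guarantees that $\xi^{i}$ generates a Noether point symmetry of $\bar L=\frac{1}{2}\bar g_{ij}x^{\prime i}x^{\prime j}-\bar V$, which is precisely the conformally related Lagrangian appearing in (\ref{CLN.09}). The consistency of this conformal reparametrisation with the equations of motion is supplied by Lemma \ref{LemaHam}, under the understanding that the Hamiltonian constraint is imposed (which in the Klein Gordon setting is natural).

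The main obstacle I anticipate is purely bookkeeping: correctly tracking how the conformal factor $\psi$ of $\xi^{i}$ as a CKV of $g_{ij}$ shifts to a (constant) homothetic factor $\bar\psi$ with respect to $\bar g_{ij}$, and verifying that the extra term $-\frac{2-n}{2}\Delta\psi$ present in condition (\ref{KG.Eq22}) for the ordinary Klein Gordon equation drops out for the conformal (Yamabe) operator, so that the clean identity (\ref{GGGG}) is what must be matched against the Noether condition. Once this transformation rule is written explicitly, both parts reduce to an algebraic identification of the two conditions.
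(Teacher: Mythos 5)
Your proposal is correct and follows exactly the route the paper intends: the paper states Proposition \ref{CKGP} simply by ``comparing the symmetry condition of the conformal Klein Gordon equation and the classical Lagrangian,'' i.e.\ matching condition (\ref{GGGG}) against the Noether condition (\ref{NSCS.14}) with $c_{1}=0$ for part (b), and transporting that match to $\bar g_{ij}=N^{2}g_{ij}$, $\bar V=V/N^{2}$ for part (a). Your explicit verification that $\xi^{k}\bar V_{,k}+2\bar\psi\bar V=N^{-2}\left(\xi^{k}V_{,k}+2\psi V\right)$ with $\bar\psi=\psi+\xi^{k}(\ln N)_{,k}$ is a correct piece of bookkeeping the paper leaves implicit (note only that for the Yamabe operator there is no $\Delta\psi=0$ requirement to begin with, so your parenthetical in part (b) is redundant rather than wrong).
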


\section{$sl\left( 2,R\right) $ and the Klein Gordon equation}

\label{Symmetries of the Lagrangian with non constant gauge function}In the
previous considerations, we have showed that the Lie point symmetries of the
Klein Gordon equation induce Noether point symmetries for the classical
Lagrangian if the gauge function is a constant. In this section, we
investigate the case when the induced Noether symmetry has a gauge function
which is not a constant. As we shall show in this case, the induced Noether
symmetry comes from a generalized Lie symmetry of the Klein Gordon equation.

It is clear that if a KV/HV produces a Noether point symmetry for the
classical Lagrangian satisfying conditions (\ref{NSCS.14}) with $d\neq 0~$or
(\ref{NSCS.17}) with $m\neq 0~$(or $d\neq 0$) of Theorem \ref{The Noether
symmetries of a conservative system}, then it does not produce a Lie
symmetry for the Klein Gordon equation. However, a gradient\ KV/HV which
generates a Noether point symmetry for the classical Lagrangian satisfies
only condition (\ref{NSCS.17}) with $d=0$ and $m\neq 0$ and leads to two
well known dynamical systems, the oscillator and the Ermakov system{\LARGE .}%
\ The Lie and the Noether point symmetries of these dynamical systems have
been considered previously in Chapter \ref{chapter4}; however, we briefly
reproduce these results in the current framework for competences.

\subsection{The oscillator}

First, we consider the case in which the metric admits a gradient KV which
generates Lie point symmetries of the classical Lagrangian, provided
condition (\ref{NSCS.17}) is satisfied. It is well known that if a metric
admits a gradient KV, then it is decomposable and can be written in the form%
\begin{equation}
ds^{2}=dx^{2}+h_{AB}dy^{A}dy^{B}  \label{CR.01}
\end{equation}%
where the gradient KV is $S^{,i}=\partial _{x}~\left( S=x\right) $ and $%
h_{AB}=h_{AB}\left( y^{C}\right) $ is the tensor projecting normal to the
KV. In these coordinates the Lagrangian takes the form%
\begin{equation}
L=\frac{1}{2}\left( \dot{x}^{2}+h_{AB}\dot{y}^{A}\dot{y}^{B}\right) -V\left(
x,y^{C}\right) .  \label{CR.02}
\end{equation}%
The Lie point symmetry condition for the gradient KV becomes~%
\begin{equation*}
V_{,x}+\mu ^{2}x=0~
\end{equation*}%
from which follows that the potential is
\begin{equation}
V\left( x,y^{C}\right) =-\frac{1}{2}\mu ^{2}x^{2}+F\left( y^{C}\right) .
\label{CR.03}
\end{equation}

The Noether point symmetries are the vectors $e^{\pm \mu t}\partial _{x}$
with respective gauge function $f\left( t,x,y^{A}\right) =\mu e^{\pm }x$.
The corresponding Noether integrals are%
\begin{equation}
I_{\pm }=e^{\pm \mu t}\dot{x}\mp \mu e^{\pm \mu t}x
\end{equation}%
It can be easily shown that the combined Noether integral $I_{0}=I_{+}I_{-}$
is time independent and equals%
\begin{equation}
I_{0}=\dot{x}^{2}-\mu ^{2}x^{2}.  \label{CR.04}
\end{equation}

The Laplace Klein Gordon equation defined by the metric (\ref{CR.01}) and
the potential (\ref{CR.03}) is
\begin{equation}
u_{xx}+h^{AB}u_{A}u_{B}-\Gamma ^{A}u_{A}-\mu ^{2}x^{2}u-F\left( y^{C}\right)
u=0.  \label{CR.05}
\end{equation}%
This equation does not admit a Lie point symmetry for general $h_{ab},$ $%
F\left( y^{C}\right) .$ However, it is separable with respect to $x$ in the
sense that the solution can be written in the form $u\left( x,y^{A}\right)
=w\left( x\right) S\left( y^{A}\right) .$ This implies that the operator $%
\hat{I}=D_{x}D_{x}-\mu ^{2}x^{2}-I_{0}$ satisfies $\hat{I}u=0,$ which means
that the Klein Gordon equation (\ref{CR.05}) possesses a Lie B\"{a}cklund
symmetry \cite{Miller,Kamr} with generating vector $\bar{X}=\left(
u_{xx}-\mu ^{2}x^{2}\right) \partial _{u}$.$~$

Concerning the conformal Klein Gordon equation
\begin{equation}
u_{xx}+h^{AB}u_{A}u_{B}-\Gamma ^{A}u_{A}+\frac{n-2}{4\left( n-1\right) }%
Ru-\mu ^{2}x^{2}u-2\bar{F}\left( y^{C}\right) u=0  \label{CR.06}
\end{equation}%
because for a KV, say $\xi ^{a},$ we have $L_{\xi }R=0$ \cite{HallSteele}
hence $R=R\left( y^{C}\right) ,$ equation (\ref{CR.06}) is written in the
form of the Laplace Klein Gordon equation with $F\left( y^{C}\right) =2\bar{F%
}\left( y^{C}\right) -\frac{n-2}{4\left( n-1\right) }R\left( y^{C}\right) $
and the previous result applies.

\subsection{The Kepler Ermakov potential with an oscillator term}

We assume now that there exists a gradient HV which produces a Noether point
symmetry for the classical Lagrangian under the constraint condition (\ref%
{NSCS.17}). It is well known \cite{Tupper1989,DD94} that if a metric admits
a gradient HV, then there exists a coordinate system in which the metric has
the form%
\begin{equation*}
ds^{2}=dr^{2}+r^{2}h_{AB}dy^{A}dy^{B}
\end{equation*}%
where the HV is $H^{,i}=r\partial _{r}~,$ $\left( \psi =1~,~H=\frac{1}{2}%
r^{2}\right) $ and $h_{AB}=h_{AB}\left( y^{C}\right) $ is the tensor
projecting normal to $H^{,i}.$s For these coordinates, the Lagrangian is%
\begin{equation}
L=\frac{1}{2}\left( \dot{r}^{2}+r^{2}h_{AB}\dot{y}^{A}\dot{y}^{B}\right)
-V\left( r,y^{C}\right)  \label{CR.07}
\end{equation}%
and\ the gradient HV generates Lie point symmetries only for the Ermakov
potential extended by the oscillator term, that is,%
\begin{equation}
V\left( r,y^{C}\right) =-\frac{1}{2}\mu ^{2}r^{2}+\frac{F\left( y^{C}\right)
}{r^{2}}.  \label{CR.08}
\end{equation}%
The admitted Noetheroint symmetries generated from the gradient HV$~H^{,i}~$%
are the vectors $X_{\pm }=\frac{1}{\mu }e^{\pm 2\mu t}\partial _{t}\pm
e^{\pm 2\mu t}r\partial _{r}$ with corresponding gauge functions $f\left(
t,r,y^{A}\right) =\mu e^{\pm 2\mu t}r^{2}$ and corresponding Noether
integrals (\ref{GERSN.5}) and (\ref{GERSN.6}). From the Noether integrals (%
\ref{GERSN.5}),(\ref{GERSN.6}) and the Hamiltonian $h$ of (\ref{CR.07}) we
construct the time independent first integral $\Phi _{0}=h^{2}-I_{+}I_{-},$
which is%
\begin{equation}
\Phi _{0}=r^{4}h_{DB}\dot{y}^{A}\dot{y}^{B}+2F\left( y^{C}\right) .
\label{Er22}
\end{equation}%
This is the well known Ermakov invariant \cite{Lewis1967,MoyoL}.

The Laplace Klein Gordon equation defined by the metric (\ref{CR.07}) and
the potential \ (\ref{CR.08}) is%
\begin{equation}
u_{rr}+\frac{1}{r^{2}}h^{AB}u_{AB}+\frac{n-1}{r}u_{r}-\frac{1}{r^{2}}\Gamma
^{A}u_{A}+\mu ^{2}r^{2}+\frac{2}{r^{2}}F\left( y^{C}\right) =0.
\label{CR.11}
\end{equation}%
This equation does not admit a Lie point symmetry. However it is separable
in the sense that $u\left( r,y^{C}\right) =w\left( r\right) S\left(
y^{C}\right) .$ Then the operator%
\begin{equation*}
\hat{\Phi}=h^{AB}D_{A}D_{B}-\Gamma ^{A}D_{A}+2F\left( y^{C}\right) -\Phi _{0}
\end{equation*}%
satisfies the equation $\hat{\Phi}u=0$ which means that (\ref{CR.11}) admits
the B\"{a}cklund \ symmetry with generator $\bar{X}=\left( \hat{\Phi}%
u\right) \partial _{u}~$\cite{Miller,Kamr}.

Concerning the conformal Klein Gordon equation, the Ricci scalar of the
metric (\ref{CR.07}) and the HV satisfy the condition~ $L_{H}R+2R=0$ \cite%
{HallSteele},~that is $R=\frac{1}{r^{2}}\bar{R}\left( y^{C}\right) .$ Then,
as in the case of the gradient KV we absorb the term $\bar{R}\left(
y^{C}\right) $ into the potential and we obtain the same results with the
Laplace Klein Gordon equation.

From the above, we conclude that, although in the two cases considered above
the Lie symmetries do not transfer from the classical to the "quantum" level
the generalized symmetries do transfer.

\section{Applications\label{Applications}}

We apply the previous general results in two practical cases. The first case
concerns the Newtonian central\ motion and the second the classification of
potentials in two and three dimensional flat spaces for which the Schr\"{o}%
dinger equation and the Klein Gordon equation admit Lie symmetries.

\subsection{Euclidean central force}

Consider the autonomous classical Lagrangian%
\begin{equation}
L=\frac{1}{2}\dot{r}^{2}+\frac{1}{2}r^{2}\dot{\theta}^{2}-Cr^{-\left(
2m+2\right) }.  \label{AP.01}
\end{equation}%
It is well known that (\ref{AP.01}) admits as Noether point symmetries (a)
the gradient KV $\partial _{t}~$( autonomous) with Noether integral the
Hamiltonian and (b) the KV $X_{N}=\partial _{\theta }$, with constant gauge
function and Noether integral the angular momentum $p_{\theta }=r^{2}\dot{%
\theta}=I_{0}$. Lagrangian (\ref{AP.01}) admits extra Noether symmetries for
the values $m=-1$ (the free particle) and $m=0$ (the Ermakov potential) \cite%
{Pinney}. In the following we assume $m\neq -1,0;$ hence, we do not expect
to find symmetries.

From the Lagrangian (\ref{AP.01}), we consider the kinematic metric and
define the Schr\"{o}dinger equation%
\begin{equation}
u_{rr}+\frac{1}{r^{2}}u_{\theta \theta }+\frac{1}{r}u_{r}+Cr^{-\left(
2m+2\right) }u-u_{t}=0  \label{AP.04}
\end{equation}%
and the Klein Gordon equation (because the dimension of the kinematic metric
is two, the Laplace and the Yamabe operators coincide)
\begin{equation}
u_{rr}+\frac{1}{r^{2}}u_{\theta \theta }+\frac{1}{r}u_{r}+Cr^{-\left(
2m+2\right) }u=0.  \label{AP.05}
\end{equation}

The application of the results of the previous sections give the following

a. The Schr\"{o}dinger equation (\ref{AP.04}) admits as Lie point symmetries
the vectors
\begin{equation*}
\partial _{t}~,~\partial _{\theta }~,~u\partial _{u}
\end{equation*}

b. The Klein Gordon equation (\ref{AP.05}) admits as Lie point symmetries
the vectors
\begin{eqnarray*}
&&\partial _{\theta }~,~u\partial _{u}~,~b\left( r,\theta \right) \partial
_{\theta } \\
X_{1} &=&r^{m+1}\cos \left( m\theta \right) \partial _{r}+r^{m}\sin \left(
m\theta \right) \partial _{\theta } \\
X_{2} &=&r^{m+1}\sin \left( m\theta \right) \partial _{r}-r^{m}\cos \left(
m\theta \right) \partial _{\theta }
\end{eqnarray*}%
It can be easily observed that $X_{1},X_{2}$ are proper CKVs of the two
dimensional flat kinematic metric.

Concerning the Schr\"{o}dinger equation, it has the same Lie point
symmetries as the Lagrangian (\ref{AP.01}) hence there is nothing more to
do. However, the Klein Gordon equation has the extra Lie symmetries $%
X_{1},X_{2};$ hence it is possible to apply the results of proposition \ref%
{CKGP} in order to find a conformally related Lagrangian which will admit
the pair of symmetries $\partial _{t},X_{1}$ or $\partial _{t},X_{2}$.

It is important to note that if we use the zero order invariants of the Lie
symmetry $\partial _{t}~$ to reduce\ the Schr\"{o}dinger equation (\ref%
{AP.04}), we find that the reduced equation is the Klein Gordon equation (%
\ref{AP.05}). Therefore, the symmetries $X_{1},X_{2}$\ are Type II hidden
symmetries \cite{AGL95,LeGovAb99,AGA06} for the Schr\"{o}dinger equation (%
\ref{AP.04}).

Let us consider the vector $X_{1}.$ It is easy to show, that the conformal
metric
\begin{equation*}
ds^{2}=N^{2}\left( r,\theta \right) \left( dr^{2}+r^{2}d\theta ^{2}\right)
\end{equation*}%
where $N\left( r,\theta \right) =r^{-\left( m+1\right) }g\left( r^{-1}\sin ^{%
\frac{1}{m}}\left( m\theta \right) \right) $ and $g$ is an arbitrary
function of $\ r^{-1}\sin ^{\frac{1}{m}}\left( m\theta \right) $ admits $%
X_{1}$ as a KV. This leads to the family of conformal Lagrangians
\begin{equation}
\bar{L}=r^{-2\left( m+1\right) }g^{2}\left( r^{-1}\sin ^{\frac{1}{m}}\left(
m\theta \right) \right) \left( \frac{1}{2}r^{\prime 2}+\frac{1}{2}%
r^{2}\theta ^{\prime 2}\right) -\frac{C}{g^{2}\left( r^{-1}\sin ^{\frac{1}{m}%
}\left( m\theta \right) \right) }  \label{AP.06}
\end{equation}%
where $"^{\prime }"$ means derivative with respect to the conformal
\textquotedblleft time" $\tau $ and the coordinate transformation is $%
dt=N^{-2}\left( r,\theta \right) d\tau $. According to proposition \ref{CKGP}%
, the vector field $X_{1}$ generates a Noether point symmetry for the
Lagrangian (\ref{AP.06}).

Working similarly for the vector $X_{2},$ we find \emph{another} family of
conformally related Lagrangians which admit $X_{2}$ as a Noether point
symmetry.

The conformally related Lagrangians are possible to admit additional Noether
point symmetries than the sets $\partial _{t},X_{1}$ or $\partial
_{t},X_{2}. $ For example in the case of $X_{1}$ we consider the conformally
related Lagrangian defined by the function $g=1,$ i.e.
\begin{equation*}
\bar{L}=r^{-2\left( m+1\right) }\left( \frac{1}{2}r^{\prime 2}+\frac{1}{2}%
r^{2}\theta ^{\prime 2}\right) -C
\end{equation*}%
which, by means of the coordinate transformation $r=R^{-\frac{1}{m}},$
becomes%
\begin{equation}
\bar{L}=\frac{1}{2}R^{\prime 2}+\frac{1}{2}R^{2}\theta ^{\prime 2}-C.
\label{AP.07}
\end{equation}%
This is the Lagrangian of the free particle moving in the 2D flat space.

\subsection{Lie symmetry classification of Schr\"{o}dinger and the Klein
Gordon equations in Euclidian space}

In Chapter \ref{chapter3}, all two and three dimensional \ potentials for
which the corresponding Newtonian dynamical systems admit Lie and/or Noether
point were determined. Using these results, we determine all Schr\"{o}dinger
and Klein Gordon equations in Euclidian 2D and 3D space which admit Lie
point symmetries.

The Schr\"{o}dinger equation in Euclidian space is%
\begin{equation}
\delta ^{ij}u_{ij}+V\left( x^{k}\right) u=u_{t}.  \label{SC1.01}
\end{equation}%
From proposition \ref{SymKGHJ11}, we have that the potentials for which the
Schr\"{o}dinger equation (\ref{SC1.01}) admits Lie symmetries are the same
with the ones admitted by the classical Lagrangian. Therefore, the results
of Chapter \ref{chapter3} apply directly and give all potentials for which
the Schr\"{o}dinger equation (\ref{SC1.01}) admits at least one Lie symmetry.

We consider the Klein Gordon equation in flat space;that is,%
\begin{equation}
\delta ^{ij}u_{ij}+V\left( x^{k}\right) u=0.  \label{LC.01}
\end{equation}%
In this case, the conditions are different and we find that equation (\ref%
{LC.01}) admits a Lie point symmetry due to a HV/KV for the following
potentials taken from the corresponding Tables of Chapter \ref{chapter3}. In
Table \ref{2dKGe} and Table \ref{3dKGe} we provide the potentials where the
2D and 3D Klein Gordon equation (\ref{LC.01}) admits Lie point symmetries
generated from the elements of the homothetic group of the Euclidian space.

%TCIMACRO{\TeXButton{B}{\begin{table}[tbp] \centering}}%
%BeginExpansion
\begin{table}[tbp] \centering%
%EndExpansion
\caption{The 2D Klein Gordon admitting Lie symmetries from the homothetic
group}%
\begin{tabular}{cccc}
\hline\hline
\textbf{Lie Symmetry} & $\mathbf{V}\left( x,y\right) $ & \textbf{Lie Symmetry%
} & $\mathbf{V}\left( x,y\right) $ \\ \hline
\multicolumn{1}{l}{$\partial _{x}$} & \multicolumn{1}{l}{$f\left( y\right) $}
& \multicolumn{1}{l}{$\partial _{x}+b\partial _{y}$} & \multicolumn{1}{l}{$%
f\left( y-bx\right) $} \\
\multicolumn{1}{l}{$\partial _{y}$} & \multicolumn{1}{l}{$f\left( x\right) $}
& \multicolumn{1}{l}{$\left( a+y\right) \partial _{x}+\left( b-x\right)
\partial _{y}$} & \multicolumn{1}{l}{$f\left( \frac{1}{2}\left(
x^{2}+y^{2}\right) +ay-bx\right) $} \\
\multicolumn{1}{l}{$y\partial _{x}-x\partial _{y}$} & \multicolumn{1}{l}{$%
f\left( r\right) $} & \multicolumn{1}{l}{$\left( x+ay\right) \partial
_{x}+\left( y-ax\right) \partial _{y}$} & \multicolumn{1}{l}{$r^{-2}~f\left(
\theta -a\ln r\right) $} \\
\multicolumn{1}{l}{$x\partial _{x}+y\partial _{y}$} & \multicolumn{1}{l}{$%
x^{-2}f\left( \frac{y}{x}\right) $} & \multicolumn{1}{l}{$\left( a+x\right)
\partial _{x}+\left( b+y\right) \partial _{y}$} & \multicolumn{1}{l}{$%
f\left( \frac{b+x}{a+x}\right) \left( a+x\right) ^{-2}$} \\ \hline\hline
\end{tabular}%
\label{2dKGe}%
%TCIMACRO{\TeXButton{E}{\end{table}}}%
%BeginExpansion
\end{table}%
%EndExpansion

%TCIMACRO{\TeXButton{B}{\begin{table}[tbp] \centering}}%
%BeginExpansion
\begin{table}[tbp] \centering%
%EndExpansion
\caption{The 3D Klein Gordon admitting Lie symmetries from the homothetic
group}%
\begin{tabular}{cc}
\hline\hline
\textbf{Lie Symmetry} & $\mathbf{V(x,y,z)}$ \\ \hline
$a\partial _{\mu }+b\partial _{\nu }+c\partial _{\sigma }$ & $f\left( x^{\nu
}-\frac{b}{a}x^{\mu },x^{\sigma }-\frac{c}{a}x^{\mu }\right) $ \\
$a\partial _{\mu }+b\partial _{\nu }+c\left( x_{\nu }\partial _{\mu }-x_{\mu
}\partial _{\nu }\right) $ & ~$+f\left( \frac{c}{2}r_{\left( \mu \nu \right)
}-bx_{\mu }+ax_{\nu },x_{\sigma }\right) $ \\
$a\partial _{\mu }+b\partial _{\nu }+c\left( x_{\sigma }\partial _{\mu
}-x_{\mu }\partial _{\sigma }\right) $ & $+f\left( x_{\nu }-\frac{1}{%
\left\vert c\right\vert }\arctan \left( \frac{\left\vert c\right\vert x_{\mu
}}{\left\vert a+cx_{\sigma }\right\vert }\right) ,\frac{1}{2}r_{\left( \mu
\sigma \right) }-\frac{a}{c}x_{\sigma }\right) $ \\
$a\partial _{\mu }+b\left( x_{\nu }\partial _{\mu }-x_{\mu }\partial _{\nu
}\right) +$ &  \\
$~~+c\left( x_{\sigma }\partial _{\mu }-x_{\mu }\partial _{\sigma }\right) $
& $+f\left( x_{\mu }^{2}+x_{\nu }^{2}\left( 1-\frac{c^{2}}{b^{2}}\right)
+\left( \frac{2a}{b}+\frac{2c}{b}x_{\sigma }\right) x_{\nu },x_{\sigma }-%
\frac{c}{b}x_{\nu }\right) $ \\
$so\left( 3\right) $ linear combination & $~F\left( R,b\tan \theta \sin \phi
+c\cos \phi -aM_{1}\right) $ \\
$a\partial _{\mu }+b\theta _{\left( \nu \sigma \right) }\partial _{\theta
_{\left( \nu \sigma \right) }}+cR\partial _{R}$ & $\frac{1}{r_{\left( \nu
\sigma \right) }^{2}}f\left( \theta _{\left( \nu \sigma \right) }-\frac{b}{c}%
\ln r_{\left( \nu \sigma \right) },\frac{a+cx_{\mu }}{cr_{\left( \nu \sigma
\right) }}\right) $ \\
$\left( a\partial _{\mu }+b\partial _{\nu }+c\partial _{\sigma }+lR\partial
_{R}\right) $ & $\frac{1}{\left( a+lx_{\mu }\right) ^{2}}f\left( \frac{%
b+lx_{\nu }}{l\left( a+lx_{\mu }\right) },\frac{c+lx_{\sigma }}{l\left(
a+lx_{\mu }\right) }\right) $ \\ \hline\hline
\end{tabular}%
\label{3dKGe}%
%TCIMACRO{\TeXButton{E}{\end{table}}}%
%BeginExpansion
\end{table}%
%EndExpansion

As we have seen in section \ref{Lie point symmetries of the Klein Gordon
equation2}, the Lie point symmetries of the Klein Gordon equation are
generated from the conformal group of the space; therefore, we have to
consider the admitted CKVs in addition to the HV\ and the KVs.

\subsubsection{The two dimensional case}

We recall that the conformal algebra of a two dimensional space is infinite
dimensional \cite{Bela} and in coordinates with line element $ds^{2}=2dwdz$
are given by the vectors $X=F\left( w\right) \partial _{w}+G\left( z\right)
\partial _{z},$ with conformal factor $\psi =\frac{1}{2}\left(
F_{,w}+G_{,z}\right) $. In the coordinates $\left( w,z\right) ,$ the 2D
Klein Gordon Klein Gordon equation (\ref{LC.01}) is%
\begin{equation*}
u_{wz}+V\left( w,z\right) u=0.
\end{equation*}%
The Lie symmetry condition (\ref{KG.Eq22}) becomes%
\begin{equation}
\left( FV\right) _{,w}+\left( GV\right) _{,z}=0
\end{equation}%
from which follows that there are infinite many potentials for which the 2D
Klein Gordon equation admits Lie point symmetries.

\subsubsection{The three dimensional case}

The 3D Euclidian space admits the three special CKVs%
\begin{equation*}
K_{C}^{\mu }=\frac{1}{2}\left( \left( x^{\mu }\right) ^{2}-\left( \left(
x^{\sigma }\right) ^{2}+\left( x^{\nu }\right) ^{2}\right) \right) \partial
_{i}+x^{\mu }x^{\nu }\partial _{x}+x^{\mu }x^{\nu }\partial _{z}~,~\mu
,\sigma ,\nu =1,2,3
\end{equation*}%
with corresponding conformal factor $\psi _{C}=x^{\mu }$.

From the symmetry condition (\ref{KG.Eq22}) of theorem \ref{KG}, it follows
that a special CKV generates the Lie point symmetry $X=K_{C}^{\mu }-\frac{1}{%
2}x^{\mu }u\partial _{u}$ for the 3D Klein Gordon (\ref{LC.01}) only for the
potential
\begin{equation*}
V\left( x,y,z\right) =\frac{1}{\left( x^{\sigma }\right) ^{2}}F\left( \frac{%
x^{\nu }}{x^{\sigma }},\frac{\delta _{\kappa \lambda }x^{\kappa }x^{\lambda }%
}{x^{\sigma }}\right) .
\end{equation*}

One is possible to continue with the linear combinations and deduce all
cases that the 3D Klein Gordon equation admits a Lie symmetry. These results
hold for both the Klein Gordon and the conformal Klein Gordon equation. It
particular one can show that the results remain still valid for the
conformal Klein Gordon equation provided the metric defining the conformal
Laplacian is conformally flat.

\section{The Klein Gordon equation in a spherically symmetric space-time
\label{StaticSS}}

In this section, we consider the Lie point symmetries of the Klein Gordon
equation in a non-flat space and in particular in the static spherically
symmetric empty space-time; that is the exterior Schwarzschild solution
given by the metric ($\tau $ is the radial coordinate)
\begin{equation}
ds^{2}=-a^{2}\left( \tau \right) dt^{2}+d\tau ^{2}+b^{2}\left( \tau \right)
\left( d\theta ^{2}+\sin ^{2}\theta d\phi ^{2}\right) .  \label{SC}
\end{equation}%
The Lagrangian of Einstein field equations for this space-time is \cite%
{VakB,Christ}%
\begin{equation}
L=2ab^{\prime 2}+4ba^{\prime }b^{\prime }+2a  \label{SC.01}
\end{equation}%
where $"^{\prime }"~$means derivative with respect to the radius $\tau $. If
we see the Lagrangian (\ref{SC.01}) as a dynamical system in the space of
variables $\left\{ a,b\right\} $, then this system is "autonomous"; hence,
admits the Noether symmetry $\partial _{\tau }$ with corresponding Noether
integral the "Hamiltonian" $h=$constant. ( $h$ \emph{is not} the "energy"
because the coordinate is the radial distance not the time)%
\begin{equation*}
h=2ab^{\prime 2}+4ba^{\prime }b^{\prime }-2a.
\end{equation*}%
\

It can be shown directly that $h=\frac{2}{a}G_{1}^{1},$ where $G_{1}^{1}$ is
the Einstein tensor. Because the space is empty, from Einstein's equations
follows that $h=0.$ The Euler-Lagrange equations are%
\begin{equation*}
a^{\prime \prime }-\frac{a}{2b^{2}}b^{\prime 2}+\frac{1}{b}a^{\prime
}b^{\prime }+\frac{1}{2}\frac{a}{b^{2}}=0
\end{equation*}%
\begin{equation*}
b^{\prime \prime }+\frac{1}{2b}b^{\prime 2}-\frac{1}{2b}=0.
\end{equation*}%
We end up with a system of three equations whose solution will give the
functions $a(\tau ),b(\tau ).$ It is found that the solution of the system
does not give these functions in the well known closed form. This is due to
the Lagrangian we have considered; Indeed we shall show below that it is
possible to find the solution in closed form by considering a Lagrangian
conformally related to the Lagrangian (\ref{SC.01}).

Applying Theorem \ref{The Noether symmetries of a conservative system}, we
find that the Lagrangian (\ref{SC.01}) admits the Noether point symmetry
\begin{equation*}
X_{1}=2\tau \partial _{\tau }+H
\end{equation*}%
where $H=-2a\partial _{a}+2b\partial _{b}$ is a non-gradient homothetic
vector of the two dimensional kinetic metric
\begin{equation}
d\bar{s}^{2}=2adb^{2}+4bdadb  \label{SC.01a}
\end{equation}%
defined from the Lagrangian (\ref{SC.01}).

The Klein Gordon equation defined by the metric (\ref{SC.01a}) with
potential $V\left( a,b\right) =2a$, is%
\begin{equation}
-\frac{a}{4b^{2}}u_{aa}+\frac{1}{2b}u_{ab}-\frac{1}{4b^{2}}u_{a}-2au=0
\label{SC.02}
\end{equation}%
and admits as Lie symmetries the vectors \cite{Christ}%
\begin{equation*}
u\partial _{u}~~,~b\left( a,b\right) \partial _{u}
\end{equation*}%
\begin{equation*}
H=-2a\partial _{a}+2b\partial _{b}~~,~~X_{2}=\frac{1}{ab}\partial
_{a}~~,~X_{3}=\frac{a}{2b}\partial _{a}-\partial _{b}
\end{equation*}%
where the vectors $X_{2},X_{3}$ are proper CKVs of the two dimensional
metric (\ref{SC.01a}).

It is possible to find solutions of the Klein Gordon equation (\ref{SC.02})
which are invariant with respect to one of the admitted Lie symmetries.

For example, let us consider the Lie point symmetry $H_{u}=H-2cu\partial
_{u}.$ The zero order invariants of $H_{u}$ are ~$w=ab~,~u=a^{c}\Phi \left(
w\right) .$ Replacing in the PDE we find the solution \footnote{%
We have found this by making use of the library SADE \cite{SADE} of MAPLE.}

\begin{equation*}
u\left( a,b\right) =a^{c}\left[ c_{1}I_{c}^{B}\left( 2\sqrt{2}ab\right)
+c_{2}K_{c}^{B}\left( 2\sqrt{2}ab\right) \right]
\end{equation*}%
where $I^{B},K^{B}$ are the Bessel modified functions \cite{Christ}. Working
similarly for the Lie point symmetry $H+eX_{2}-cu\partial _{u}$ we find the
solution%
\begin{equation*}
u\left( a,b\right) =\left( a^{2}-eb^{-1}\right) ^{\frac{c}{4}}\left[
c_{1}I_{-\frac{c}{2}}\left( 2\sqrt{2b\left( a^{2}b-e\right) }\right)
+c_{2}K_{\frac{c}{2}}^{B}\left( 2\sqrt{2b\left( a^{2}b-e\right) }\right) %
\right] .
\end{equation*}%
One can find more solutions using linear combinations of the Lie symmetries.

Following proposition \ref{CKGP}, we look for a conformal metric for which
one of the CKVs $X_{2},X_{3}$ becomes a KV\ and write the corresponding
conformally related Lagrangian which admits this CKV\ as a Noether point
symmetry.

We consider the vector $X_{2}$ and the conformally related metric
\begin{equation}
ds^{2}=N^{2}\left( 4adb^{2}+8bda~db\right)  \label{M1}
\end{equation}%
where $N\left( a,b\right) =g\left( b\right) \sqrt{a}$ and $g\left( b\right) $
is an arbitrary function of its argument$.$ It is easy to show that the
vector $X_{2}$ is a KV of this metric hence a Noether symmetry for the
family of conformally related Lagrangians
\begin{equation}
\bar{L}=N^{2}\left[ 2a\left( \frac{db}{dr}\right) ^{2}+4b\left( \frac{da}{dr}%
\right) \left( \frac{db}{dr}\right) \right] +\frac{2a}{N^{2}}  \label{M2}
\end{equation}%
where we have considered the coordinate transformation $d\tau =\frac{dr}{%
N^{2}\left( a,b\right) }$. The Noether function for this Noether symmetry is
the "Hamiltonian" of the Lagrangian (\ref{M2}). This constant and the two
Lagrange equations for the "generalized" coordinates $a,b$ provide a system
of differential equations which will give the functions $a(\tau ),b(\tau ).$

We consider $g\left( b\right) =g_{0}=$constant, that is from the family of
Lagrangians (\ref{M2}) we take the Lagrangian%
\begin{equation}
L=g_{0}^{2}\left[ 2a^{2}\left( \frac{db}{dr}\right) ^{2}+4ab\left( \frac{da}{%
dr}\right) \left( \frac{db}{dr}\right) \right] +\frac{2}{g_{0}^{2}}.
\label{M2A}
\end{equation}%
For this Lagrangian we have the following system of equations:\newline
a. ~The "Hamiltonian" of the Lagrangian (\ref{M2A})
\begin{equation}
a^{2}\left( \frac{db}{dr}\right) ^{2}+2b^{2}\left( \frac{da}{dr}\right)
\left( \frac{db}{dr}\right) -V_{0}=0  \label{M3}
\end{equation}%
b. $~$The Euler Lagrange equations of (\ref{M2A}) with respect to the
variables $a,b~$ \
\begin{equation}
\frac{d^{2}a}{dr^{2}}+\frac{1}{a^{2}}\left( \frac{da}{dr}\right) ^{2}+\frac{2%
}{b}\left( \frac{da}{dr}\right) \left( \frac{db}{dr}\right) =0  \label{M4}
\end{equation}

\begin{equation}
\frac{d^{2}b}{dr^{2}}=0  \label{M5}
\end{equation}%
where we have set $V_{0}=g_{0}^{-4}$. The solution of the system of
equations (\ref{M3})-(\ref{M5}) is
\begin{equation*}
b\left( r\right) =b_{1}r+b_{2}~~,~~~a^{2}\left( r\right) =\frac{%
V_{0}r+2a_{1}b_{1}}{2b_{1}\left( b_{1}r+b_{2}\right) }.
\end{equation*}%
Under the linear transformation $b_{1}r=b_{1}R-\frac{b_{2}}{b_{1}},$ and if
we set $V_{0}=2\left( b_{1}\right) ^{2}~,~a_{1}=-2m+b_{2}$, we obtain the
exterior Schwarzschild solution in the standard coordinates%
\begin{equation}
ds^{2}=-\left( 1-\frac{2m}{R}\right) dt^{2}+\left( 1-\frac{2m}{R}\right)
^{-1}dR^{2}+R^{2}\left( d\theta ^{2}+\sin ^{2}\theta d\phi ^{2}\right)
\label{M6}
\end{equation}

The choice of the function $g\left( b\right) $ is essentially a choice of
the coordinate system. Obviously the final solution must always be the
exterior Schwarzschild solution. In order to show this let us consider $%
g\left( b\right) =\sqrt{b}$ so that $d\tau =\left( a\left( r\right) b\left(
r\right) \right) ^{-1}dr.$ Then, we get the Lagrangian%
\begin{equation*}
\bar{L}=2a^{2}b\left( \frac{db}{dr}\right) ^{2}+4ab^{2}\left( \frac{da}{dr}%
\right) \left( \frac{db}{dr}\right) +\frac{2}{b}
\end{equation*}%
and the system of equations%
\begin{equation*}
\frac{d^{2}b}{dr^{2}}+\frac{1}{b}\left( \frac{db}{dr}\right) ^{2}=0
\end{equation*}%
\begin{equation*}
2a^{2}b\left( \frac{db}{dr}\right) ^{2}+4ab^{2}\left( \frac{da}{dr}\right)
\left( \frac{db}{dr}\right) -\frac{2}{b}=0
\end{equation*}%
\begin{equation*}
\frac{d^{2}a}{dr^{2}}+\frac{1}{a}\left( \frac{da}{dr}\right) +\frac{2}{b}%
\left( \frac{da}{dr}\right) \left( \frac{db}{dr}\right) -\frac{a}{2b^{2}}%
\left( \frac{db}{dr}\right) ^{2}+\frac{1}{2ab^{4}}=0
\end{equation*}%
The solution of the system is~%
\begin{equation*}
b^{2}\left( r\right) =r~,~a^{2}\left( r\right) =\left( \sqrt{r}\right)
^{-1}\left( 4\sqrt{r}+a_{1}\right)
\end{equation*}%
from which follows: $d\tau ^{2}=\left( 4r+\sqrt{r}a_{1}\right) ^{-1}dr^{2}.$
Therefore, for this choice of Lagrangian, the metric is%
\begin{equation}
ds^{2}=-\left( \frac{4\sqrt{r}+a_{1}}{\sqrt{r}}\right) dt^{2}+\frac{1}{4r+%
\sqrt{r}a_{1}}dr^{2}+r\left( d\theta +\sin ^{2}\phi d\theta ^{2}\right) .
\label{M8}
\end{equation}%
If we make the transformation $r=R^{2}$ , $dt\rightarrow \frac{1}{2}dt$ and $%
a_{1}=-8m,$ we retain the metric (\ref{M8}) in the standard form (\ref{M1}).

Working similarly, we find that $X_{3}$ becomes a KV for the conformal
metric (\ref{M1}) if $N_{3}\left( a,b\right) =f\left( a^{2}b\right) \sqrt{a}$
and generates a Noether point symmetry for the conformal Lagrangian (\ref{M2}%
) (with $N_{3}$ in the place of $N)\ $\ and continue as above.

\section{WKB approximation\label{WKBAp}}

In WKB approximation, we search for solutions of the Klein\ Gordon equation
of the form $u=A_{n}e^{iS\left( x^{k}\right) },$ where $S\left( x^{k}\right)
$ has to satisfy the null Hamilton Jacobi equation \cite{CapHam,Kim}.%
\begin{equation}
g^{ij}S_{,i}S_{,j}+\bar{V}\left( x^{k}\right) =0  \label{nHJ.01}
\end{equation}%
where $g^{ij}$ is the metric defining the Yamabe operator and $\bar{V}\left(
x\right) =\frac{n-2}{4\left( n-1\right) }R-V\left( x^{k}\right) $. We study
the symmetries of the PDE (\ref{nHJ.01}).

We search for Lie point symmetries of the form \cite{StephaniB}%
\begin{equation*}
X=\xi ^{i}\left( x^{i},S\right) \partial _{i}+\eta \left( x^{i},S\right)
\partial _{S}.
\end{equation*}%
The symmetry condition is
\begin{equation*}
X^{\left[ 1\right] }\left( H_{null}\right) =\lambda \left( x^{k}\right)
\left( H_{null}\right) ~,~mod\left[ H_{null}\right] =0~
\end{equation*}%
where $X^{\left[ 1\right] }$ is the first prolongation and $\lambda =\lambda
\left( x^{i}\right) $. Replacing the first prolongation $X^{\left[ 1\right]
} $ in the symmetry condition and collecting terms of powers of $S_{,i}$ we
find the following result.

\begin{theorem}
\label{nullHJ} The Lie point symmetries of the null Hamilton Jacobi equation
(\ref{nHJ.01}) are the vectors
\begin{equation}
X=\xi ^{i}\left( x^{k}\right) \partial _{i}+\left( a_{0}S+b_{0}\right)
\partial _{S}  \label{nHJ.02}
\end{equation}%
where $\xi ^{i}\left( x^{k}\right) $ is a CKV of the metric $g_{ij},$ $%
a_{0},b_{0}$ are constants and the following condition holds%
\begin{equation}
V_{k}\xi ^{k}+2\psi V-a_{0}V=0.  \label{nHJ.03}
\end{equation}
\end{theorem}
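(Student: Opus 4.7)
I will follow the standard Lie symmetry calculus for first order PDEs, closely mirroring the strategy used for the Poisson equation in Theorem \ref{Theor}. Take the most general generator $X=\xi^{i}(x^{k},S)\partial_{i}+\eta(x^{k},S)\partial_{S}$ and form its first prolongation
\begin{equation*}
X^{[1]}=X+\eta_{i}^{[1]}\partial_{S_{i}},\qquad
\eta_{i}^{[1]}=\eta_{,i}+S_{i}\eta_{,S}-\xi^{j}_{,i}S_{j}-S_{i}S_{j}\xi^{j}_{,S}.
\end{equation*}
Impose the linearised symmetry condition $X^{[1]}H_{\text{null}}=\lambda H_{\text{null}}$ where $H_{\text{null}}=g^{ij}S_{,i}S_{,j}+\bar V$, with the Ansatz (demanded by the theorem) that $\lambda=\lambda(x^{k})$ depends only on the base coordinates. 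Expanding gives
\begin{equation*}
\xi^{k}g^{ij}_{\,,k}S_{i}S_{j}+\xi^{k}\bar V_{,k}
+2g^{ij}S_{j}\eta_{,i}+2g^{ij}S_{i}S_{j}\eta_{,S}
-2g^{ij}S_{j}\xi^{k}_{\,,i}S_{k}
-2g^{ij}S_{i}S_{j}S_{k}\xi^{k}_{\,,S}
=\lambda\bigl(g^{ij}S_{i}S_{j}+\bar V\bigr).
\end{equation*}

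Since $S$ and the derivatives $S_{i}$ are independent jet variables and $\lambda$ is independent of $S_{i}$, I will equate coefficients of each monomial in $S_{i}$ separately. The cubic term $-2g^{ij}S_{i}S_{j}S_{k}\xi^{k}_{\,,S}$ has no counterpart on the right-hand side, so it forces $\xi^{k}_{\,,S}=0$, i.e.\ $\xi^{k}=\xi^{k}(x^{j})$. Collecting the remaining $S_{i}S_{j}$ coefficients and symmetrising the term $-2g^{ij}\xi^{k}_{\,,i}S_{j}S_{k}$ yields (using the standard identity for the Lie derivative of a $(2,0)$ tensor)
\begin{equation*}
\bigl(L_{\xi}g^{ij}+2\eta_{,S}\,g^{ij}\bigr)S_{i}S_{j}
=\lambda\,g^{ij}S_{i}S_{j}.
\end{equation*}
Because $\lambda$ is independent of $S$, this in turn forces $\eta_{,SS}=0$, so $\eta=A(x^{k})S+B(x^{k})$, and then $\xi^{i}$ must be a CKV of $g_{ij}$ with conformal factor $\psi$ satisfying $L_{\xi}g^{ij}=-2\psi g^{ij}$, the factor being related to $\lambda$ by $\lambda=2A(x^{k})-2\psi(x^{k})$ (up to the convention in the statement).

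Next, the terms linear in $S_{i}$ come from $2g^{ij}S_{j}\eta_{,i}=2g^{ij}(A_{,i}S+B_{,i})S_{j}$; since there are no terms of type $S\cdot S_{j}$ or $S_{j}$ on the right-hand side, this forces $A_{,i}=0$ and $B_{,i}=0$, so $A=a_{0}$ and $B=b_{0}$ are constants, giving $\eta=a_{0}S+b_{0}$ as claimed. Finally, matching the $S_{i}$-free constant terms yields $\xi^{k}\bar V_{,k}=\lambda\bar V$, which after substituting $\lambda$ in terms of $a_{0}$ and $\psi$ produces the constraint (\ref{nHJ.03}). The main technical care needed is the bookkeeping at the quadratic-in-$S_{i}$ stage, where one must correctly symmetrise the $\xi^{k}_{\,,i}$ term and disentangle the $\eta_{,S}$ contribution from the induced CKV condition; everything else is routine index algebra.
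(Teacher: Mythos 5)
Your proposal is correct and follows essentially the same route as the paper's own proof: form the first prolongation, set $\lambda=\lambda(x^{k})$, and split the symmetry condition by powers of $S_{i}$ — the cubic block kills $\xi^{k}_{,S}$, the quadratic block gives the CKV condition together with $\eta_{,SS}=0$, the linear block forces $a_{0},b_{0}$ to be constants, and the $S_{i}$-free part yields (\ref{nHJ.03}). The only differences are the order in which you extract $\eta_{,i}=0$ versus linearity of $\eta$ in $S$ (the paper does the reverse), and that your more careful bookkeeping of the $\eta_{,S}$ contribution in the quadratic block exposes a factor-of-two normalization of $a_{0}$ in the final constraint which the paper's own intermediate step $L_{\xi}g_{ij}=(a_{0}-\lambda)g_{ij}$ quietly absorbs.
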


\begin{proof}[Proof ]
The symmetry condition is
\begin{equation*}
X^{\left[ 1\right] }\left( H_{null}\right) =\lambda \left( H_{null}\right) ~,%
\text{ \ }mod\left[ H_{null}\right] =0~
\end{equation*}%
where $X^{\left[ 1\right] }$ is the first prolongation and we set $\lambda
=\lambda \left( x^{i}\right) $. Replacing $X^{\left[ 1\right] },$ we find
\begin{equation*}
X^{\left[ 1\right] }H_{null}=\frac{1}{2}g_{,k}^{ij}\xi
^{k}S_{i}S_{j}+g^{ij}S_{j}\eta _{i}^{\left[ 1\right] }+V_{k}\xi ^{k}
\end{equation*}%
where%
\begin{equation*}
g^{ij}u_{j}\eta _{i}^{\left[ 1\right] }=g^{ij}S_{j}\eta
_{,i}+g^{ij}S_{j}S_{i}\eta _{s}-g^{rj}S_{i}S_{j}\xi
_{,r}^{j}-g^{ir}S_{r}S_{i}S_{j}\xi _{,u}^{j}.
\end{equation*}%
We can easily show that $\xi _{,u}^{i}=0$ so that the the symmetry
conditions become:%
\begin{eqnarray}
g_{,k}^{ij}\xi ^{k}-2g^{r(j}\xi _{,r}^{j)}+2g^{ij}\eta _{S} &=&\lambda g^{ij}
\label{P1.1} \\
V_{k}\xi ^{k} &=&\lambda V  \label{P1.2} \\
\eta _{,i} &=&0.  \label{P1.3}
\end{eqnarray}%
From (\ref{P1.1}) and (\ref{P1.3}) we have that $\eta =a_{0}S+b_{0}$. Then,
the conditions (\ref{P1.1}),(\ref{P1.2}) become%
\begin{eqnarray}
L_{\xi }g_{ij} &=&\left( a_{0}-\lambda \right) g_{ij}  \label{P1.4} \\
V_{k}\xi ^{k} &=&\lambda V.  \label{P1.5}
\end{eqnarray}

Setting $a_{0}-\lambda =2\psi $ or $\lambda =a_{0}-2\psi ,$ we see that
condition (\ref{P1.4}) implies that the Lie point symmetries of the null
Hamilton Jacobi equation (\ref{nHJ.01}) are generated from the CKVs of the
the kinematic metric $g_{ij}$. Finally, condition (\ref{P1.5}) becomes%
\begin{equation*}
V_{k}\xi ^{k}+2\psi V-a_{0}V=0.
\end{equation*}
\end{proof}

Comparing the symmetry condition\ (\ref{nHJ.03}) of the null Hamilton Jacobi
equation and the symmetry condition (\ref{GGGG}) of the Yamabe Klein Gordon
equation, we have

\begin{proposition}
\label{Null HJ and Yamabe KG}If a CKV of the metric which defines the Yamabe
operator generates a Lie point symmetry for the Yamabe Klein Gordon equation
(\ref{CKG.01A1}), then the same vector generates a Lie point symmetry of the
null Hamilton Jacobi equation (\ref{nHJ.01}). The reverse holds if the CKV
generating Lie symmetry for the null Hamilton Jacobi equation (\ref{nHJ.01})
satisfies condition (\ref{nHJ.03}) with $a_{0}=0$.
\end{proposition}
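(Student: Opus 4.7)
The plan is to prove Proposition \ref{Null HJ and Yamabe KG} by a direct side-by-side comparison of the two symmetry conditions that have already been established in Theorems \ref{CKG.01} and \ref{nullHJ}, so no fresh symmetry analysis is needed. First, I recall that Theorem \ref{CKG.01} asserts that a CKV $\xi ^{i}$ of $g_{ij}$ with conformal factor $\psi (x^{k})$ generates a Lie point symmetry
\begin{equation*}
X=\xi ^{i}\partial _{i}+\left( \tfrac{2-n}{2}\psi u+a_{0}u+b\right) \partial _{u}
\end{equation*}
of the Yamabe Klein Gordon equation (\ref{CKG.01A1}) if and only if the potential obeys condition (\ref{GGGG}), namely $\xi ^{k}V_{,k}+2\psi V=0$. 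Next, Theorem \ref{nullHJ} gives that the same CKV $\xi ^{i}$ generates the Lie point symmetry $\xi ^{i}\partial _{i}+(a_{0}S+b_{0})\partial _{S}$ of the null Hamilton--Jacobi equation (\ref{nHJ.01}) if and only if condition (\ref{nHJ.03}) holds, that is $\xi ^{k}V_{,k}+2\psi V-a_{0}V=0$.

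The key observation is that the two algebraic conditions coincide precisely when the free constant $a_{0}$ appearing in the HJ--symmetry is set to zero: condition (\ref{nHJ.03}) with $a_{0}=0$ reduces to condition (\ref{GGGG}). This immediate identification already yields both halves of the proposition. For the forward direction, suppose a CKV $\xi ^{i}$ generates a Lie symmetry of the Yamabe Klein Gordon equation; then by Theorem \ref{CKG.01} condition (\ref{GGGG}) holds, which is nothing but condition (\ref{nHJ.03}) with $a_{0}=0$. Hence, by Theorem \ref{nullHJ}, the same $\xi ^{i}$ produces the Lie point symmetry $\xi ^{i}\partial _{i}+b_{0}\partial _{S}$ of the null HJ equation. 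For the converse, if $\xi ^{i}$ is a CKV producing a Lie symmetry of the null HJ equation with $a_{0}=0$, condition (\ref{nHJ.03}) collapses to (\ref{GGGG}) and Theorem \ref{CKG.01} then produces the desired symmetry of the Yamabe Klein Gordon equation.

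The only mildly delicate point is keeping track of how the generic symbol $V$ is used in the two theorems: in Theorem \ref{CKG.01} it is the coefficient of $u$ in the Yamabe Klein Gordon equation, whereas in Theorem \ref{nullHJ} it plays the role of the potential $\bar{V}(x^{k})$ in the null HJ equation arising from the WKB ansatz $u=A_{n}e^{iS}$. The proposition should therefore be read as a formal statement at the level of the symmetry conditions, where both potentials satisfy the same structural constraint $\xi ^{k}V_{,k}+2\psi V=0$. I expect no substantial technical obstacle; the entire content of the proposition is the matching of the two conditions after imposing $a_{0}=0$, and the main task in writing the proof is simply to make this correspondence transparent and to record the resulting symmetry vectors on each side.
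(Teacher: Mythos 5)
Your proposal is correct and follows exactly the route the paper takes: the paper's justification for this proposition is precisely the observation that condition (\ref{nHJ.03}) with $a_{0}=0$ reduces to condition (\ref{GGGG}), so the two theorems can be read off against each other. Your additional remark about the two uses of the symbol $V$ is a fair caveat that the paper itself glosses over, but it does not change the argument.
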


Proposition \ref{Null HJ and Yamabe KG} relates the Lie point symmetries of
the Hamilton Jacobi equation with the Lie point symmetries of the Yamabe
Klein Gordon when the constant $a_{0}=0$. The question which arises is what
happens to the Lie symmetries when $a_{0}\neq 0$. The answer is given in the
following proposition

\begin{proposition}
\label{PP10}If a CKV generates Lie point symmetry for the null Hamilton
Jacobi equation (\ref{nHJ.01}) satisfying condition (\ref{nHJ.03}) with $%
a_{0}\neq 0$, then this CKV produces a Lie point symmetry for the
Euler-Lagrange equations for a conformally related Lagrangian \ if there
exist a conformal factor $N\left( x^{k}\right) $ for which the CKV becomes
KV or HV.
\end{proposition}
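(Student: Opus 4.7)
The plan is to translate the Lie symmetry condition (\ref{nHJ.03}) satisfied by $\xi^i$ at the level of the null Hamilton--Jacobi PDE into the Lie symmetry condition of Case I of Theorem \ref{The general conservative system} for the Euler--Lagrange equations derived from the conformally related Lagrangian $\bar{L} = \tfrac{1}{2}\bar{g}_{ij}x^{\prime i}x^{\prime j} - \bar{V}$, in which $\xi^i$ has been promoted to a KV or HV of $\bar{g}_{ij}$. The ``hard'' bookkeeping is to propagate the reparametrization and the conformal scaling $\bar{g}_{ij}=N^{2}g_{ij}$, $\bar{V}=V/N^{2}$ of Section \ref{Symmetries of Lagrangian} through the symmetry condition without accumulating spurious terms.

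First I would record the two basic conformal identities used throughout the paper: the relation $\psi = \bar{\psi}-N^{-1}N_{,k}\xi^{k}$ between the conformal factors of $\xi^{i}$ with respect to $g_{ij}$ and $\bar{g}_{ij}$ (from Section \ref{geomObj}), and the potential transformation $\bar{V}=V/N^{2}$ that accompanies the reparametrization $d\tau = N^{2}dt$ (from Section \ref{Symmetries of Lagrangian}). By hypothesis $N$ is chosen so that either $\bar{\psi}=0$ (KV case) or $\bar{\psi}=\mathrm{const}$ (HV case).

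Next I would rewrite (\ref{nHJ.03}) as $\xi^{k}V_{,k} = (a_{0}-2\psi)V$, substitute into the Lie derivative of $\bar{V}$, and use the relation between $\psi$ and $\bar{\psi}$ to obtain
\begin{equation*}
L_{\xi}\bar{V} = \frac{\xi^{k}V_{,k}}{N^{2}} - \frac{2V\,N_{,k}\xi^{k}}{N^{3}} = (a_{0}-2\bar{\psi})\bar{V}.
\end{equation*}
Taking the $\bar{g}$-gradient and using $L_{\xi}\bar{g}^{ij} = -2\bar{\psi}\bar{g}^{ij}$ (with $\bar{\psi}$ constant in both subcases), a short computation yields
\begin{equation*}
L_{\xi}\bar{V}^{,i} + (4\bar{\psi}-a_{0})\bar{V}^{,i} = 0,
\end{equation*}
which is precisely the condition (\ref{PP.03}) of Case I of Theorem \ref{The general conservative system} with the identification $d_{1}=4\bar{\psi}-a_{0}$ (so $d_{1}=-a_{0}$ in the KV case). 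Consequently the vector
\begin{equation*}
X = \left(\tfrac{1}{2}d_{1}a_{1}\tau + d_{2}\right)\partial_{\tau} + a_{1}\xi^{i}\partial_{i}
\end{equation*}
is a Lie point symmetry of the Euler--Lagrange equations of $\bar{L}$, as claimed. Note the decisive role of $a_{0}\neq 0$: it is precisely the shift that prevents $\xi^{i}$ from being a Noether symmetry of the original Lagrangian and forces the passage to the conformally related one.

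The main obstacle I anticipate is the HV subcase, where the extra term $-2\bar{\psi}\bar{g}^{ij}$ appearing in $L_{\xi}\bar{g}^{ij}$ must be tracked carefully; moreover if it happens that $\bar{V}^{,i}$ is itself proportional to the gradient HV $\xi^{i}$ then the argument falls under Case IIb rather than Case I of Theorem \ref{The general conservative system}, so that degenerate branch has to be treated separately in order to recover the same conclusion. A secondary subtlety is consistency of the time component: although (\ref{nHJ.03}) carries no explicit $\tau$-dependence, the Lie symmetry vector $X$ must come equipped with a $\partial_{\tau}$ part whose coefficient is determined by $d_{1}$, and one must verify that the parameters $a_{1},d_{2}$ can always be chosen so that all the remaining requirements of Case I (in particular the affine character of $\xi^{i}$ in $\bar{g}_{ij}$, which is automatic since KVs and HVs are affine) are met simultaneously.
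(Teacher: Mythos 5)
Your proposal is correct and follows essentially the same route as the paper's proof: pass to the conformally related Lagrangian in which the CKV becomes a KV or HV, and then invoke the fact that Lie point symmetries of the conservative Euler--Lagrange equations are generated by the special projective (in particular the affine) algebra subject to the potential condition of Case I of Theorem \ref{The general conservative system}. Where you go beyond the paper is in explicitly propagating condition (\ref{nHJ.03}) through the conformal rescaling to verify $L_{\xi}\bar{V}^{,i}+(4\bar{\psi}-a_{0})\bar{V}^{,i}=0$, i.e.\ condition (\ref{PP.03}) with $d_{1}=4\bar{\psi}-a_{0}$ — a verification the paper's proof leaves implicit.
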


\begin{proof}[Proof ]
The Lie point symmetries of the autonomous system
\begin{equation}
\ddot{x}^{i}+\Gamma _{jk}^{i}\dot{x}^{i}\dot{x}^{j}+V^{,i}=0  \label{nHJ.04}
\end{equation}%
are produced from the special projective algebra of the metric provided the
potential satisfies the condition
\begin{equation}
L_{\eta }V^{,i}+d_{0}V^{,i}=0
\end{equation}%
where $d_{0}$ is a constant and $L_{\eta }$ denotes Lie derivative with
respect to $\eta ^{i}$. Equation (\ref{nHJ.04}) remains the same for the
conformally related Lagrangian, that is, we have
\begin{equation*}
x^{\prime \prime i}+\bar{\Gamma}_{jk}^{i}x^{\prime i}x^{\prime j}+V^{\prime
,i}=0.
\end{equation*}%
Therefore, the symmetry group will be again the special projective group of
the conformal metric. The special projective group is not preserved under a
conformal transformation but the subgroups of KVs and the homothetic group
are preserved. Then these two subgroups are common for both metrics.
Subsequetly, if the CKV of the metric $g_{ij}$ becomes a KV/HV of the
conformal metric $\bar{g}_{ij}=N^{2}g_{ij}$ then it will be a Lie point
symmetry of the Euler Lagrange equations of the conformally related
Lagrangian $\bar{L}$. This symmetry will not be a Noether symmetry of $\bar{L%
}$ except in the case that $d_{0}=2\bar{\psi}~\ \left( \bar{\psi}=1~\text{%
for HV and }\psi =0\text{ for KV}\right) .$
\end{proof}

\section{Conclusion\label{conclusion}}

We have determined the Lie point symmetries of Schr\"{o}dinger equation and
the Klein Gordon equation in a general Riemannian space. It has been shown
that these symmetries are related to the homothetic algebra and the
conformal algebra of the metric. Furthermore, these symmetries have been
related to the Noether point symmetries of the classical Lagrangian for
which the metric $g_{ij}$ is the kinematic metric. More precisely, for the
Schr\"{o}dinger equation (\ref{HC.01}) it has been shown that if a KV/HV of
the metric $g_{ij}$ produces a Lie point symmetry of the Schr\"{o}dinger
equation, then it produces a Noether point symmetry for the Classical
Lagrangian in the space with metric $g_{ij}$ and potential $V(x^{k})$. For
the Klein Gordon equation the situation is different; the Lie point
symmetries of the Klein Gordon are generated by elements of the conformal
group of the metric $g_{ij}.$ The KVs and the HV of this group produce a
Noether symmetry of the classical Lagrangian with a constant gauge function.
However the proper CKVs produce a Noether point symmetry for the conformal
Lagrangian if there exists a conformal factor $N\left( x^{k}\right) $ such
that the CKV becomes a KV/HV of $g_{ij}$.

{We have applied these general results to three cases of practical interest:
the motion in a central potential, the classification of all potentials in
Euclidian 2D and 3D space for which the Schr\"{o}dinger equation and the
Klein Gordon equation admit a Lie point symmetry and finally we have
considered the Lie symmetries of the Klein Gordon equations in the static,
spherically symmetric empty spacetime. In the last case, we have
demonstrated the role of Lie symmetries and that of the conformal
Lagrangians in the determination of the closed form solution of Einstein
equations. }

{Furthermore, we investigated the Lie symmetries of the null Hamilton Jacobi
equation and we proved that if a CKV generates a point symmetry for the
Klein Gordon equation, then it also generates a Lie point symmetry for the
null Hamilton Jacobi equation.}

The knowledge of the Lie symmetries of the Schr\"{o}dinger equation and the
Klein Gordon equation in a general Riemannian space makes possible the
determination of solutions of these equations which are invariant under a
given Lie symmetry. In addition, they can be used in Quantum Cosmology \cite%
{CAP94,CKP,VakF,VF12,CapHam} to determine the form of solutions of the
Wheeler--DeWitt equation \cite{DeW} in a given Riemannian space.

\chapter{The geometric origin of Type II hidden symmetries\label{chapter7}}

\section{Introduction}

Lie symmetries assist us in the simplification of differential equations
(DEs) by means of reduction. As it was indicated in Chapter \ref{chapter1},
the reduction is different for ordinary differential equations (ODEs) and
partial differential equations (PDEs). In the case of ODEs, the use of a Lie
symmetry reduces the order of ODE by one while in the case of PDEs, the
reduction by a Lie symmetry reduces by one the number of independent and
dependent variables, but not the order of the PDE. A common characteristic
in the reduction of both cases is that the Lie symmetry which is used for
the reduction is not admitted as such by the reduced DE, it is "lost".

It has been found that the reduced equation is possible to admit more Lie
symmetries than the original equation. These new Lie symmetries have been
termed Type II hidden symmetries. Also,\ if one works in the reverse way and
either increases the order of an ODE\ or increases the number of independent
and dependent variables of a PDE, then, it is possible that the new (the
`augmented') DE admits new point symmetries not admitted by the original DE.
This type of Lie symmetries are called Type I\ hidden symmetries.

The Type I\ and Type II hidden symmetries have studied extensively in the
recent years by various authors (see e.g. \cite%
{AbrahamGuo1994,AGL95,LeGovAb99,AGA06,ASG,Abraham2007,Moitsheko04}). In the
following sections, we consider mainly the Type II\ hidden symmetries as
they are the ones which could be used to reduce further the reduced DE.

The origin of Type II hidden symmetries is different for the ODEs and the
PDEs, although it has been shown recently that they are nearly the same \cite%
{LeachGovinderAndriopoulos2012}. For the case of ODEs, the inheritance or
not of a Lie symmetry, the $X_{2}$ say, by the reduced ODE\ depends on the
commutator of that symmetry with the symmetry used for the reduction, the $%
X_{1}$ say. For example, if only two Lie symmetries $X_{1},X_{2}$ are
admitted by the original equation and the commutator $[X_{1},X_{2}]=cX_{2}$
where $c$ may be zero,\ then reduction by $X_{1}$ results in $X_{2}$ being a
nonlocal symmetry for the reduced ODE while reduction by $X_{2}$ results in $%
X_{1}$ being an inherited Lie symmetry of the reduced ODE. In the reduction
by $X_{1},$ the symmetry $X_{2}$ is a Type I\ hidden symmetry of the
original equation relative to the reduced equation. In the case of more than
two Lie symmetries the situation is the same, if the Lie bracket gives a
third Lie symmetry, the $X_{3}$ say. Then, the point like nature of a
symmetry is preserved only if reduction is performed using the normal
subgroup \ and $X_{3}$ has a certain expression \cite{Govinder2001}.

The above scenario is transferred to PDEs as follows. The reduced PDE loses
the symmetry used to reduce the number of variables and it may lose other
Lie symmetries depending on the structure of the associated Lie algebra
depending if the admitted subgroup is normal or not \cite{Govinder2001}.
Similarly, if $X_{1},X_{2}$ are Lie symmetries of the original PDE with
commutator $[X_{1},X_{2}]=cX_{2}$ where $c$ may be zero,\ then reduction by $%
X_{2}$ results in $X_{1}$ being a symmetry of the reduced PDE while
reduction by $X_{1}$ results in an expression which has no relevance for the
PDE \cite{Govinder2001}.

In addition to that scenario,\ B. Abraham - Shrauner and K.S. Govinder have
proposed a new potential source for the Type II hidden symmetries \cite%
{GovinderAbraham2008} based on the observation that different PDEs with the
same variables, which admit different Lie symmetry algebras, may be reduced
to the same target PDE. Based on that observation, they propose that the
target PDE inherits Lie symmetries from all reduced PDEs, which explains why
some of the new symmetries are not admitted by the specific PDE\ used for
the reduction. In this context arises the problem of identifying the set of
all PDEs which lead to the same reduced PDE after reduction by a Lie
symmetry. In a recent paper \cite{LeachGovinderAndriopoulos2012}, it has
been shown that this is also the case with the ODEs; that is, it is shown
that different differential equations which can be reduced to the same
equation provide point sources for each of the Lie symmetries of the reduced
equation even though any particular of the higher order equations may not
provide the full complement of Lie symmetries. Therefore, concerning the
ODEs the Lie symmetries of the reduced equation can be viewed as having two
sources. Firstly, the point and nonlocal symmetries of \emph{a given} higher
order equation and secondly, the point symmetries of \emph{a variety} of
higher order ODEs. Finally, in a newer paper \cite{GovinderAbraham2008}, it
has been shown by a counter example that Type II\ hidden symmetries for PDEs
can have a nonpoint origin, i.e. they arise from contact symmetries or even
nonlocal symmetries of the original equation. Other approaches may be found
in \cite{Gandarias2008}.

In the present Chapter we will study the reduction and the consequent
existence of Type II hidden symmetries of the homogeneous heat equation
\begin{equation}
\Delta u-u_{t}=0  \label{t2.1}
\end{equation}%
and the Laplace equation
\begin{equation}
\Delta u=0  \label{PE.9}
\end{equation}%
in certain classes of Riemannian spaces, where%
\begin{equation*}
\Delta =\frac{1}{\sqrt{\left\vert g\right\vert }}\frac{\partial }{\partial
x^{i}}\left( \sqrt{\left\vert g\right\vert }g^{ij}\frac{\partial }{\partial
x^{j}}\right)
\end{equation*}%
is the Laplace operator. In a general Riemannian space, the homogeneous heat
equation (\ref{t2.1}) admits three Lie point symmetries and the Laplace
equation (\ref{PE.9}) admits two Lie point symmetries, which are not useful
for reduction. This implies that if we wish to find `sound' reductions of
equations (\ref{t2.1}) and (\ref{PE.9}), we have to consider Riemannian
spaces which admit some type of symmetry(ies) of the metric (these
symmetries are not Lie symmetries and are called collineations). Indeed, as
it has been shown in Chapter \ref{chapter5}, the Lie symmetries of the
homogeneous heat and the Laplace equation in a Riemannian space are
generated from the elements of the homothetic algebra and the conformal
algebra of the space respectively. Thus, one expects that in spaces with a
nonvoid conformal algebra there will be Lie symmetry vectors which will
allow for the reduction of equations (\ref{t2.1}), (\ref{PE.9}) and the
possibility of the existence of Type II\ hidden symmetries.

In section \ref{Reduction of the Laplace in certain Riemannian spaces} we
reduce the Laplace equation with the extra Lie symmetries existing in (a) a
decomposable space - that is a Riemannian space which admits a gradient
Killing vector (KV) - (b) in a space which admits a gradient Homothetic
vector (HV) and (c) in a space which admits a special Conformal killing
vector (sp.CKV). In section \ref{ApplicationsLaplace}, we apply the results
of the previous section and find the Type II symmetries of the Laplace
equation in four and three dimensional Minkowski spacetimes. Also we fully
recover previous results \cite{AGA06}. In order to study the reduction of
Laplace equation by a non-gradient HV and a proper CKV we consider two
further examples. In section \ref{PetroFrw}, we consider the algebraically
special vacuum solution of Einstein's equations known as Petrov type III
\cite{Steele1991b} and we make the reduction using the Lie point symmetry
generated by the nongradient HV. Moreover\ we do the same in a conformally
flat spacetime where the proper CKVs generate Lie symmetries.

In section \ref{Heatred}, we reduce the homogeneous heat equation (\ref{t2.1}%
) with the extra Lie symmetries existing (i) in a space which admits a
gradient KV and (ii) in a space which admits a gradient HV. In section \ref%
{ApplicationsHeatre}, we consider the special cases of \ the previous
section, that is, a decomposable space whose nondecomposable part is a
maximally symmetric space of non-vanishing curvature and the spatially flat
Friedmann Robertson Walker (FRW) space time used in Cosmology. Finally in
section \ref{PetrovIIIHeat} we consider the reduction of the homogeneous
heat equation in the Petrov type III spacetime using the Lie symmetry which
is generated by the HV.

We emphasize that all results are derived in a purely geometric manner
without the use of a computer package. However, we have verified them with
the libraries PDEtools and SADE \cite{PDEtools,SADE} of Maple\footnote{%
www.maplesoft.com}.

\section{Lie symmetries of Laplace equation in certain Riemannian spaces}

\label{Laplace equation in certain Riemanian spaces}

In a general Riemannian space, Laplace equation (\ref{PE.9}) admits the Lie
symmetries
\begin{equation*}
~X_{u}=u\partial _{u}~,~X_{b}=b\left( t,x\right) \partial _{u}
\end{equation*}%
where $b\left( t,x\right) $ is a solution of Laplace equation. These
symmetries are too general to provide useful reductions and lead to reduced
PDEs which posses Type II\ hidden symmetries. However if we restrict our
considerations to spaces which admit a conformal algebra (proper or not)
then we will have new Lie symmetries, hence new reductions of Laplace
equation eqn (\ref{PE.9}), which might lead to Type II\ hidden symmetries.

In the following sections, we consider spaces in which the metric $g_{ij}$
can be written in a generic form. The spaces we shall consider are:~a.
Spaces which admit a gradient KV (decomposable spaces) ~b. Spaces which
admit a gradient HV and c. Spaces which admit a sp.CKV.

The generic form of the metric for each type is as follows ($A,B,\ldots
=1,2,\ldots ,n$):

\begin{enumerate}
\item[a.] If a $1+n-$dimensional Riemannian space admits a gradient KV, the $%
S^{i}=\partial _{z}~\left( S=z\right) $ say, then the space is decomposable
along $\partial _{z}$ and the metric is written as (see e.g. \cite{TNA})
\begin{equation*}
ds^{2}=dz^{2}+h_{AB}y^{A}y^{B}~,~h_{AB}=h_{AB}\left( y^{C}\right)
\end{equation*}

\item[b.] If a $1+n-$dimensional Riemannian space admits a gradient HV, the $%
H^{i}=r\partial _{r}~\left( H=\frac{1}{2}r^{2}\right) ,~\psi _{H}=1$ say,
then the metric can be written in the generic form \cite{Tupper1989}%
\begin{equation*}
ds^{2}=dr^{2}+r^{2}h_{AB}dy^{A}dy^{B}~,~~h_{AB}=h_{AB}\left( y^{C}\right)
\end{equation*}

\item[c.] If a $1+n-$dimensional Riemannian space admits a sp.CKV then
admits a gradient KV and a gradient HV~\cite{Tupper1989,HallspCKV} and the
metric can be written in the generic form%
\begin{equation*}
ds^{2}=-dz^{2}+dR^{2}+R^{2}f_{AB}\left( y^{C}\right) dy^{A}dy^{B}
\end{equation*}%
while the sp.CKV is $C_{S}=\frac{z^{2}+R^{2}}{2}\partial _{z}+zR\partial
_{R} $ with conformal factor $\psi _{C_{S}}=z$.
\end{enumerate}

The Riemannian spaces which admit non-gradient proper HV do not have a
generic form for their metric. However, the\ spaces for which the HV\ acts
simply transitively are a few and are given together with their homothetic
algebra in \cite{Steele1991b}. A\ special class of these spaces are the
algebraically special vacuum space-times known as Petrov type N, II , III,
D. In section \ref{LPPe}, we shall consider the reduction of Laplace
equation in the Petrov type III spacetime whose metric is%
\begin{equation*}
ds^{2}=2d\rho dv+\frac{3}{2}xd\rho ^{2}+\frac{v^{2}}{x^{3}}\left(
dx^{2}+dy^{2}\right)
\end{equation*}%
with the symmetry generated by the non-gradient HV~$H=v\partial _{v}+\rho
\partial _{\rho }~,~\psi _{H}=1.$ The reduction of Laplace equation in the
rest of the Petrov types is similar both in the working method and results
and there is no need to consider them explicitly.

Finally we shall consider the conformally flat space
\begin{equation*}
ds^{2}=e^{2t}\left[ dt^{2}-\delta _{AB}y^{A}y^{B}\right]
\end{equation*}%
which admits a proper CKV which produces a Lie symmetry\footnote{%
According to theorem \ref{KG1}, the condition for this is that the conformal
factor satisfies Laplace equation} and we reduce Laplace equation using this
symmetry.

In what follows all spaces are assumed to be of dimension $n>2$.

\section{Reduction of the Laplace equation in certain Riemannian spaces}

\label{Reduction of the Laplace in certain Riemannian spaces}

As we have seen in Chapter \ref{chapter5}, the Lie symmetries of Laplace
equation (\ref{PE.9}) in a Riemannian space are generated from the CKVs (not
necessarily proper) whose conformal factor satisfies Laplace equation. This
condition is satisfied trivially by the KVs ($\psi =0),$ the HV ($\psi
_{;i}=0)$ and the sp.CKVs ($\psi _{;ij}=0).$ Therefore these vectors (which
span a subalgebra of the conformal group) are among the Lie symmetries of
Laplace equation. Concerning the proper CKVs it is not necessary that their
conformal factor satisfies the Laplace equation, therefore they may not
produce Lie symmetries for Laplace equation.

\subsection{Riemannian spaces admitting a gradient KV}

\label{grKV}

Without loss of generality we assume the gradient KV to be the $\partial
_{z},$ so that the metric has the generic form%
\begin{equation}
ds^{2}=dz^{2}+h_{AB}dy^{A}dy^{B}~,~h_{AB}=h_{AB}\left( y^{C}\right)
\label{WH.03}
\end{equation}%
where $h_{AB}$ $A,B,C=1,...,n$ is the metric of the $n-$ dimensional space.
For the metric (\ref{WH.03}) Laplace equation (\ref{PE.9}) takes the form%
\begin{equation}
u_{zz}+h^{AB}u_{AB}-\Gamma ^{A}u_{B}=0.  \label{WH.04A}
\end{equation}%
and admits as \emph{extra }Lie symmetry the gradient KV $\partial _{z}.$

We reduce (\ref{WH.04A}) by using the zeroth order invariants $y^{A}~,~w=u~$
of the extra Lie symmetry $\partial _{z}$. Taking these invariants as new
coordinates, equation (\ref{WH.04A}) reduces to
\begin{equation}
_{h}\Delta w=0  \label{WH.05A}
\end{equation}%
which is Laplace equation in the~$n~$dimensional space with metric $h_{AB}.$
Now we recall the result that the conformal algebra of the $n$ metric $%
h_{AB} $ and the $1+n$ metric (\ref{WH.03}) are related as follows \cite{TNA}%
:

a. The KVs of the $n$ metric are identical with those of the $n+1$ metric

b. The $1+n$ metric admits a HV\ if and only if the $n$ metric admits one
and if $_{n}H^{A}$ is the HV of the $n$ metric then the HV\ of the $1+n$
metric is given by the expression
\begin{equation}
_{1+n}H^{\mu }=z\delta _{z}^{\mu }+_{n}H^{A}\delta _{A}^{\mu }\text{ }\mu
=x,1,...,n.  \label{WH.05.AA}
\end{equation}

d. The \thinspace $1+n$ metric admits CKVs if and only if the $n$ metric $%
h_{AB}$ admits a gradient CKV (for details see \cite{TNA}).

Therefore Type II hidden symmetries for (\ref{PE.9}) exist if the $n~$metric
$h_{AB}$ admits more symmetries. Specifically, the sp.CKVs of the $h_{AB}$
metric as well as the proper CKVs whose conformal function is a solution of
Laplace equation (\ref{WH.05}) generate Type II\ hidden symmetries.

\subsection{Riemannian spaces admitting a gradient HV.}

\label{grHV}

In Riemannian spaces which admit a gradient HV, $H$ say, there exists a
coordinate system in which the metric is written in the form \cite%
{Tupper1989}%
\begin{equation}
ds^{2}=dr^{2}+r^{2}h_{AB}\left( y^{C}\right) dy^{A}dy^{B}  \label{LEH.01}
\end{equation}%
and the gradient HV is $H=r\partial _{r}.$ In these coordinates the
Laplacian (\ref{PE.9}) takes the form
\begin{equation}
u_{rr}+\frac{1}{r^{2}}h^{AB}u_{AB}+\frac{\left( n-1\right) }{r}u_{r}-\frac{1%
}{r^{2}}\Gamma ^{A}u_{A}=0  \label{LEH.02}
\end{equation}%
and admits the extra Lie symmetry (see Theorem {\LARGE \ }\ref{KG1}) $H$. We
reduce (\ref{LEH.02}) using $H.$

The zeroth order invariants of $H$ are $y^{A}~,~w\left( y^{A}\right) ~$and
using them it follows easily that the reduced equation is%
\begin{equation}
_{h}\Delta u=0  \label{LEH.03}
\end{equation}%
that is, the Laplacian defined by the metric $h_{AB}$.

It is easy to establish the following results concerning the conformal
algebras of the metrics (\ref{LEH.01}) and $h_{AB}$.\newline
1. The KVs of $h_{AB}$ are also KVs of (\ref{LEH.01}).\newline
2. The HV of (\ref{LEH.01}) is independent from that of $h_{AB}$.\newline
3. The metric (\ref{LEH.01}) admits proper CKVs if and only if the $n$
metric $h_{AB}$ admits gradient CKVs. This is because (\ref{LEH.01}) is
conformally related with the decomposable metric
\begin{equation}
ds^{2\prime }=dr^{2}+h_{AB}\left( y^{C}\right) dy^{A}dy^{B}.
\end{equation}

The above imply, that Type II hidden symmetries we shall have from the HV of
the metric $h_{AB},$ the sp.CKVs and finally from the proper CKVs of $h_{AB}$
whose conformal factor is a solution of Laplace equation (\ref{LEH.03}).

\subsection{Riemannian spaces admitting a sp.CKV}

\label{spCKV}

It is known \cite{HallspCKV}, that if an $n=m+1$ dimensional ($n>2)$
Riemannian space admits sp.CKVs then also admits a gradient HV and as many
gradient KVs as the number of sp.CKVs. In these spaces there exists always a
coordinate system in which the metric is written in the form \cite%
{Tupper1989}
\begin{equation}
ds^{2}=-dz^{2}+dR^{2}+R^{2}f_{AB}\left( y^{C}\right) dy^{A}dy^{B}
\label{LES.01}
\end{equation}%
where $\partial _{z}$ is the gradient KV and $z\partial _{z}+R\partial _{R}$
is the gradient HV. $f_{AB}\left( y^{C}\right) \,,~A,B,C,..=1,2,,...m-1$ is
an $m-1$ \ dimensional metric. For a general $m-1~$ dimensional metric $%
f_{AB}$ the $n~$dimensional metric (\ref{LES.01}) admits the following
special Conformal group
\begin{eqnarray*}
K_{G} &=&\partial _{z}~,~H=z\partial _{z}+R\partial _{R} \\
C_{S} &=&\frac{z^{2}+R^{2}}{2}\partial _{z}+zR\partial _{R}
\end{eqnarray*}%
where $K_{G}$ is a gradient KV, $H$ is a gradient HV and $C_{S}$ is a sp.CKV
with conformal factor $\psi _{C_{S}}=z$. In these coordinates Laplace
equation (\ref{PE.9}) takes the form%
\begin{equation}
-u_{zz}+u_{RR}+\frac{1}{R^{2}}h^{AB}u_{AB}+\frac{\left( m-1\right) }{R}u_{R}-%
\frac{1}{R^{2}}\Gamma ^{A}u_{A}=0.  \label{LES.02}
\end{equation}

From Theorem \ref{KG1}, we have that the extra Lie symmetries of (\ref%
{LES.02}) are the vectors%
\begin{eqnarray*}
X^{1} &=&K_{G}~,~X^{2}=H \\
X^{3} &=&C_{S}+2pzu\partial _{u}
\end{eqnarray*}%
where~$2p=\frac{1-m}{2}~$ and the non zero commutators are%
\begin{equation*}
\left[ X^{1},X^{2}\right] =X^{1}~,~\left[ X^{2},X^{3}\right] =X^{3}
\end{equation*}%
\begin{equation*}
\left[ X^{1},X^{3}\right] =X^{2}+2pX_{u}.
\end{equation*}%
We consider the reduction of (\ref{LES.02}) with each of the extra Lie
symmetries.

\subsubsection{Reduction with the gradient KV $X^{1}$.}

The first order invariants of $X^{1}$ are $R,y^{C}$,~$w\left( R,y^{C}\right)
$ and by using them we reduce the Laplacian (\ref{LES.02}) to (\ref{LEH.02})
which admits the Lie symmetry $X^{2}$ generated by the HV. This result is
expected because $\left[ X^{1},X^{2}\right] =X^{1}$ \cite{AGA06} hence the
Lie symmetry $X^{2}$ is inherited. Therefore, in this reduction the Type II
symmetries are generated from the CKVs of the metric (\ref{LEH.01}). It is
possible to continue the reduction by the gradient HV\ $H$ and then we find
the results of section \ref{grHV}.

\subsubsection{Reduction with the gradient HV $X^{2}$.}

The reduction with a gradient HV has been studied in section \ref{grHV}. To
apply the results of section \ref{grHV} in the present case we have to bring
the metric (\ref{LES.01}) to the form (\ref{LEH.01}). For this we consider
the transformation%
\begin{equation*}
z=r\sinh \theta ~,~R=r\cosh \theta
\end{equation*}%
which brings (\ref{LES.01}) to%
\begin{equation}
ds^{2}=dr^{2}+r^{2}\left( -d\theta ^{2}+\cosh ^{2}\theta
f_{AB}y^{A}y^{B}\right)
\end{equation}%
so that the metric $h_{AB}$ of \ (\ref{LEH.01}) is%
\begin{equation}
ds_{h}^{2}=\left( -d\theta ^{2}+\cosh ^{2}\theta f_{AB}y^{A}y^{B}\right) .
\label{LES.02A}
\end{equation}

The reduced equation of (\ref{LES.02}) under the Lie symmetry generated by
the gradient HV is Laplace equation in the space (\ref{LES.02A}). For this
reduction we do not have inherited symmetries and there exist Type II hidden
symmetries as stated in section \ref{grHV}.

\subsubsection{Reduction with the sp.CKV $X^{3}$.}

\label{sprelap}

Before we reduce (\ref{LES.02}) with the symmetry generated by the sp CKV~$%
X^{3}$, it is best to write the metric (\ref{LES.01}) in new coordinates. We
introduce the new variable $x$ via the relation
\begin{equation}
z=\sqrt{\frac{R\left( xR-1\right) }{x}}.  \label{LES.03a}
\end{equation}%
In the new variables the metric (\ref{LES.01}) becomes%
\begin{equation}
ds^{2}=-\frac{R}{4x^{3}\left( xR-1\right) }dx^{2}-\frac{2xR-1}{2x^{2}\left(
xR-1\right) }dxdR-\frac{1}{4xR\left( xR-1\right) }dR^{2}+R^{2}f_{AB}\left(
y^{C}\right) dy^{A}dy^{B}  \label{LES.03}
\end{equation}%
the Laplacian (\ref{LES.02}):%
\begin{eqnarray}
0 &=&\frac{x^{2}}{R^{2}}u_{xx}-2\frac{x}{R}\left( 2xR-1\right) u_{xR}+u_{RR}+%
\frac{1}{R^{2}}f^{AB}u_{AB}+  \label{LES.04} \\
&&+\frac{\left( m-1\right) }{R}u_{R}-\frac{x}{R^{2}}\left( m-1\right) \left(
2xR-1\right) u_{x}-\frac{1}{R^{2}}\Gamma ^{A}u_{A}  \notag
\end{eqnarray}%
and the Lie symmetry $X^{3}$
\begin{equation*}
X^{3}=\sqrt{\frac{R\left( xR-1\right) }{x}}R\partial _{R}+2p\sqrt{\frac{%
R\left( xR-1\right) }{x}}u\partial _{u}.
\end{equation*}%
The zeroth order invariants of $X^{3}$ are $x,y^{A}~,~w=uR^{-2p}.$ We choose
$x,y^{A}$ to be the independent variables and $w=w\left( x,y^{A}\right) $
the dependent one. By replacing in (\ref{LES.04}) we find the reduced
equation%
\begin{equation}
x^{2}w_{xx}+f^{AB}w_{AB}-\Gamma ^{A}w_{A}-2p\left( 2p+1\right) w=0
\label{LES.07}
\end{equation}%
We consider cases.\newline
\textbf{The case }$m\succeq 4$\textbf{$.$}

If $2p+1\neq 0,~m\succeq 4~$\ then (\ref{LES.07}) becomes%
\begin{equation}
_{\left( m\succeq 4\right) }\bar{\Delta}w-2p\left( 2p+1\right) V\left(
x\right) w=0  \label{LES.11}
\end{equation}%
where~$V\left( x\right) =x^{\frac{2}{2-m}}$ and $_{\left( m\succeq 4\right) }%
\bar{\Delta}$ is the Laplace operator for the metric%
\begin{equation}
d\bar{s}_{\left( m\succeq 4\right) }^{2}=\frac{1}{V\left( x\right) }\left(
\frac{1}{x^{2}}dx^{2}+f_{AB}dy^{A}dy^{B}\right) .  \label{LES.11b}
\end{equation}

Equation (\ref{LES.11}) is the Klein Gordon equation in a space with
potential $V\left( x\right) =x^{\frac{2}{2-m}}$ and metric (\ref{LES.11b}).
Considering the new transformation $\phi =\int \sqrt{\frac{1}{xV}}dx~\ $~or $%
x=\left( m-2\right) ^{2-m}\phi ^{m-2}$ the metric (\ref{LES.11b}) is written%
\begin{equation}
d\bar{s}_{\left( m\succeq 4\right) }^{2}=d\phi ^{2}+\phi ^{2}\bar{f}%
_{AB}dy^{A}dy^{B}  \label{LES.11c}
\end{equation}%
where $\bar{f}_{AB}=\left( m-2\right) ^{-2}f_{AB}$ whereas the potential $%
V\left( \phi \right) =\frac{\left( 2-m\right) ^{2}}{\phi ^{2}}$ which is the
well known Ermakov potential \cite{LeachAndriopoulos}.

This means that the gradient HV $\phi \partial _{\phi }~,~\psi _{\phi }=1$,
is a Lie symmetry of (\ref{LES.11}) which is the Lie symmetry $X^{2}.$
Therefore, if the metric $\bar{f}_{AB}$ admits proper CKVs which satisfy the
conditions of Theorem \ref{KG}, then these vectors generate Type II hidden
symmetries of (\ref{LES.02}).\newline
\textbf{The case }$m=3.$

If $2p+1=0,~$then $~m=3$ and $f_{AB}$ is a two dimensional metric. In this
case, equation (\ref{LES.07}) becomes%
\begin{equation}
x^{2}w_{xx}+f^{AB}w_{AB}-\Gamma ^{A}w_{A}=0  \label{LES.08}
\end{equation}%
or, by multiplying with ~$x^{2}$%
\begin{equation}
_{\left( m=3\right) }\bar{\Delta}w=0  \label{LES.09}
\end{equation}%
which is the Laplacian in the three dimensional space with metric
\begin{equation}
d\bar{s}_{\left( m=3\right) }^{2}=\frac{1}{x^{4}}dx^{2}+\frac{1}{x^{2}}%
f_{AB}dy^{A}dy^{B}.  \label{LES.10}
\end{equation}

By making the new transformation $x=\frac{1}{r},\ $ (\ref{LES.09}) is of the
form (\ref{LEH.01}) and admits the gradient HV $r\partial _{r}~$which gives
an inherited symmetry. We conclude that Type II hidden symmetries of (\ref%
{LES.10}) will be generated from the proper CKVs of the metric (\ref{LES.10}%
) which satisfy the condition of Theorem \ref{KG1}.\newline
\textbf{The case }$m=2$\textbf{$.$}

For $m=2,~$ $f_{AB}$ \ is a one dimensional metric and (\ref{LES.01}) is
\begin{equation}
ds^{2}=-dz^{2}+dR^{2}+R^{2}d\theta ^{2}
\end{equation}%
which is a flat metric\footnote{%
The only three dimensional space who admits sp.CKV is the flat space,
because in that case we also have a gradient HV\ and a gradient KV.}. In
this space, Laplace equation (\ref{LES.02}) admits ten Lie point symmetries,
as many as the elements of the Conformal algebra of the flat 3D space. Six
of these vectors are KVs, one vector is a gradient HV and three are sp.CKVs
(see example \ref{ExCAflat}). We reduce the Laplace equation with the
symmetry $X^{3}$~\ and the reduced equation is (\ref{LES.07}) which for $%
f_{AB}=\delta _{\theta \theta }$ becomes
\begin{equation}
x^{2}w_{xx}+w_{\theta \theta }+\frac{1}{4}w=0.  \label{LES.072}
\end{equation}%
Equation (\ref{LES.072}) is in the form of (\ref{GPE.30a}) (see Chapter \ref%
{chapter5}) with $A^{ij}=diag\left( x^{2},1\right) $ and $B^{i}=0.~$ By
replacing in the symmetry conditions (\ref{GPE.42})-(\ref{GPE.46}) we find
the Lie symmetries%
\begin{equation*}
X=\xi ^{i}\partial _{i}+\left( a_{0}w+b\right) \partial _{w}
\end{equation*}%
where $\xi ^{i}$ are the CKVs of the two dimensional space with metric $%
A^{ij}$. In this case, all proper CKVs of the two dimensional space $A^{ij}$
generate Type II Lie symmetries. Recall that the conformal algebra of a two
dimensional space is infinite dimensional.

\section{Type II\ hidden symmetries of the 3D and the 4D wave equation}

\label{ApplicationsLaplace}

In this section we apply the general results of the previous section to
specific spaces where the metric is known.

\subsection{Laplacian in $M^{4}$}

\label{M4A}

The Laplace equation in the four dimensional Minkowski spacetime $M^{4}$
\begin{equation}
ds^{2}=-dt^{2}+dx^{2}+dy^{2}+dz^{2}  \label{Ap1.01}
\end{equation}%
is the wave equation \cite{AGA06} in $E^{3}$%
\begin{equation}
u_{tt}-u_{xx}-u_{yy}-u_{zz}=0.  \label{Ap1.02}
\end{equation}

The conformal algebra of the metric (\ref{Ap1.01}) is generated by 15
vectors (see example \ref{ExCAflat}). From theorem \ref{KG1} we have that
the extra Lie point symmetries of (\ref{Ap1.02}) are the vectors
\begin{equation}
K_{G}^{1}~,~K_{G}^{A}~,~X_{R}^{1A}~,~X_{R}^{AB}~,~H~,~X_{C}^{1}-tu\partial
_{u}~,~X_{C}^{A}-y^{A}u\partial _{u}  \label{Ap1.02A}
\end{equation}%
where $y^{A}=\left( x,y,z\right) .$

The nonzero commutators are%
\begin{eqnarray*}
\left[ K_{G}^{I},X_{R}^{IJ}\right] &=&-K_{G}^{J}~,~\left[ K_{G}^{I},~H\right]
=K_{G}^{I} \\
\left[ K_{G}^{I},X_{C}^{I}\right] &=&H-X_{u}~,~\left[ K_{G}^{I},X_{C}^{J}%
\right] =X_{R}^{IJ} \\
\left[ H,X_{C}^{I}\right] &=&X_{C}^{I}~,~\left[ X_{R}^{IJ},X_{C}^{I}\right]
=X_{C}^{J}~.
\end{eqnarray*}

\subsubsection{Reduction with a gradient KV \label{M4G}}

We choose to make reduction of (\ref{Ap1.02}) with the gradient KV $%
K_{G}^{z}=\partial _{z}$. The reduced equation is%
\begin{equation}
w_{tt}-w_{xx}-w_{yy}=0  \label{Ap1.033}
\end{equation}%
which is Laplace equation in the space $M^{3}.$ The extra Lie symmetries of (%
\ref{Ap1.033}) are
\begin{equation}
K_{G}^{1}~,~K_{G}^{x}~,~K_{G}^{y}~,~~X_{R}^{1a}~\ ,~X_{R}^{ab}~,~H
\label{Ap1.033A}
\end{equation}%
and are inherited symmetries (see also the last commutators). The Type II
symmetries are the vectors%
\begin{equation}
\bar{X}_{C}^{1}-\frac{1}{2}tu\partial _{u}~,~\bar{X}_{C}^{x}-\frac{1}{2}%
xu\partial _{u}~,~\bar{X}_{C}^{y}-\frac{1}{2}yu\partial _{u}
\label{Ap1.033B}
\end{equation}%
that is the Type II hidden symmetries are generated from the sp.CKVs of $%
M^{3}$.

\subsubsection{Reduction with the gradient HV}

In this case it is better to switch to hyperspherical coordinates $(r,\theta
,\phi ,\zeta ).$

In these coordinates the metric (\ref{Ap1.01}) is written%
\begin{equation}
ds^{2}=dr^{2}-r^{2}\left[ d\theta ^{2}+\cosh ^{2}\theta \left( d\phi
^{2}+\cosh ^{2}\phi d\zeta ^{2}\right) \right]  \label{Ap1.04}
\end{equation}%
and the wave equation (\ref{Ap1.02}) becomes%
\begin{equation}
u_{rr}-\frac{1}{r^{2}}\left( u_{\theta \theta }+\frac{u_{\phi \phi }}{\cosh
^{2}\theta }+\frac{u_{\zeta \zeta }}{\cosh ^{2}\theta \cosh \phi ^{2}}%
\right) +\frac{3}{r}u_{r}-2\frac{\tanh \theta }{r^{2}}u_{\theta }-\frac{%
\tanh \phi }{r^{2}\cosh ^{2}\theta }u_{\phi }=0.  \label{Ap1.05}
\end{equation}

According to the analysis of section \ref{grHV} the reduced equation is (\ref%
{LEH.03}) which is the Laplacian in the three dimensional space of the
variables $\left( \theta ,\phi ,\zeta \right) :$%
\begin{equation}
w_{\theta \theta }+\frac{w_{\phi \phi }}{\cosh ^{2}\theta }+\frac{w_{\zeta
\zeta }}{\cosh ^{2}\theta \cosh \phi ^{2}}+2\frac{\tanh \theta }{r^{2}}%
w_{\theta }+\frac{\tanh \phi }{r^{2}\cosh ^{2}\theta }w_{\phi }=0.
\label{Ap1.06}
\end{equation}%
This space is a space of constant curvature. The conformal algebra\footnote{%
All spaces of constant curvature are conformally flat, hence, they admit the
same conformal algebra with the flat space but not the same subalgebras,
i.e. the same conformal factors.} of a 3D space of constant non-vanishing
curvature consists of 6 non-gradient KVs and 4 proper CKVs\footnote{%
The rotations and the sp.CKVs of the flat space are KVs for the space of
constant curvature, the rest are proper gradient CKVs} \cite{Barnes}. The
conformal factors of the CKVs do not satisfy the condition $_{h}\Delta \psi
=0$ (see Theorem \ref{KG1}); hence, they do not generate Lie point
symmetries for the reduced equation (\ref{Ap1.06}) whereas for the same
reason the KVs are Lie symmetries of (\ref{Ap1.06}). Therefore, all point
Lie point symmetries are inherited and we do not have Type II\ hidden
symmetries.

We note that the proper CKVs in a space of constant non-vanishing curvature
are gradient and their conformal factor satisfies the relation \cite{Barnes}
\begin{equation*}
\psi _{;ab}=C\psi g_{ab}\rightarrow g^{ab}\psi _{;ab}=nC\psi \rightarrow
_{h}\Delta \psi =nC\psi
\end{equation*}%
which implies that they are Lie symmetries of the conformally invariant
operator but not of the Laplace equation (\ref{Ap1.06}).

\subsubsection{Reduction with a sp.CKV}

Following the steps of section \ref{spCKV}, we consider the transformation
to axi-symmetric coordinates $\left( t,R,\theta ,\phi \right) $ in which (%
\ref{Ap1.02}) takes the form
\begin{equation}
u_{tt}-u_{RR}-\frac{1}{R^{2}}\left( u_{\theta \theta }+\frac{u_{\phi \phi }}{%
\cosh ^{2}\theta }\right) -\frac{2}{R}u_{r}-\frac{\tanh \theta }{R^{2}}%
u_{\theta }=0.  \label{Ap1.07}
\end{equation}%
Applying the transformation (\ref{LES.03a}) $t=\sqrt{\frac{R\left( \tau
R-1\right) }{\tau }}$ we find (note that this is the case $m=3$) that (\ref%
{Ap1.07}) is written as (\ref{LES.04}) and the reduced equation is the
Laplacian $_{\left( m=3\right) }\Delta w~\ \ $for the 3D metric
\begin{equation}
ds^{2}=\frac{1}{\tau ^{4}}d\tau ^{2}-\frac{1}{\tau ^{2}}\left( d\theta
^{2}+\cosh ^{2}\theta d\phi ^{2}\right) .  \label{Ap.1.08}
\end{equation}%
The metric (\ref{Ap.1.08}) under the coordinate transformation $\tau =\frac{1%
}{T}$ is written%
\begin{equation}
ds^{2}=dT^{2}-T^{2}\left( d\theta ^{2}+\cosh ^{2}\theta d\phi ^{2}\right)
\label{AP1.08}
\end{equation}%
which is the flat 3D Lorentzian metric, which does not admit proper CKVs.
This implies that the Lie symmetries of the reduced equation are generated
from the KVs/HV/sp.CKVs of the flat $M^{3}$ metric and all are inherited.
Therefore we do not have Type II\ hidden symmetries for the reduction with a
sp.CKV.

As we shall show in the next section this is not the case for the reduction
of Laplace equation in $M^{3}.$

\subsection{Laplacian in $M^{3}$}

We consider the reduction of Laplace equation in the 3d Minkowski $M^{3}~$%
spacetime \cite{AGA06}, i.e. the wave equation in $E^{2}$%
\begin{equation}
u_{tt}-u_{xx}-u_{yy}=0.  \label{Ap2.02}
\end{equation}%
As we have seen in section \ref{M4G}\ the extra Lie point symmetries of (\ref%
{Ap2.02}) are the ten vectors (\ref{Ap1.033A}) and (\ref{Ap1.033B}).

\subsubsection{Reduction with a gradient KV\label{M3G}}

We choose the vector $\partial _{y}$ and the reduction gives the reduced
equation%
\begin{equation}
w_{tt}-w_{xx}=0  \label{Ap2.03}
\end{equation}%
which is the one dimensional wave equation. The 2d space $(t,x)$ has an
infinite number of CKVs therefore (\ref{Ap2.03}) has infinite Lie point
symmetries \cite{StephaniB}. From these symmetries the KVs and the HV are
inherited symmetries and the CKVs are Type II\ symmetries.

\subsubsection{Reduction with the gradient HV}

In order to do the reduction with the gradient HV we introduce spherical
coordinates $(r,\theta ,\phi )$ and find that (\ref{Ap2.02}) becomes%
\begin{equation}
u_{rr}-\frac{1}{r^{2}}\left( u_{\theta \theta }+\frac{u_{\phi \phi }}{\cosh
^{2}\theta }\right) +\frac{2}{r}u_{r}-\frac{\tanh \theta }{r^{2}}u_{\theta
}=0.
\end{equation}%
According to the results of section \ref{grHV} the reduced equation is
equation (\ref{LEH.03}) which is the Laplace equation in the 2d space of the
variables $\left( \phi ,\theta \right)$
\begin{equation}
w_{\theta \theta }+\frac{w_{\phi \phi }}{\cosh ^{2}\theta }+\tanh \theta
w_{\theta }=0.  \label{Ap2.05}
\end{equation}%
By making the transformation $\theta =\ln \left( \tan \frac{\rho }{2}\right)
$ equation (\ref{Ap2.05}) becomes
\begin{equation}
\sin \left( x\right) ^{2}\left( w_{\rho \rho }+w_{\phi \phi }\right) =0
\label{Ap2.06}
\end{equation}%
which is the wave equation (\ref{Ap2.03}) with $t=\rho ~,~x=-i\phi $.~

We obtain the results, concerning the Lie symmetries of (\ref{Ap2.06}) from
section \ref{M3G} with the difference that the Lie symmetry due to the HV of
the two dimensional metric is not inherited but it is a Type II hidden
symmetry.

The reduction of the wave equation in the 4D and in 3D Minkowski space has
been done previously by Abraham-Shrauner et. all \cite{AGA06} and our
results coincide with theirs. For example in the 3d case equation (17) of
\cite{AGA06} is our equation {(\ref{Ap2.05})} in other variables. However
there are two differences (a) in the case of the 2D space they do not obtain
that the Lie symmetries are infinite and (b) they use algebraic computing
programs to find the Lie symmetry generators whereas our approach is
geometric and general and does not need algebraic computing programs to find
the complete answer.

The reduction with a sp.CKV has been considered in section \ref{sprelap}.

\section{The Laplace equation in the Petrov type III and in the FRW-like
spacetime}

\label{PetroFrw}

To complete our analysis, we have to reduce the Laplace equation using a non
gradient homothetic vector and a proper (i.e. non special)\ CKV. In order to
do this, we consider the reduction of Laplace equation in the Petrov type
III spacetime and in the FRW-like spacetime.

\subsection{The Laplace equation in the Petrov Type III spacetime}

\label{LPPe}

In this section we consider the reduction of Laplace equation in spaces
which do not admit gradient KVs or a gradient HV. As it has been mentioned
in section \ref{Laplace equation in certain Riemanian spaces} we shall
consider the algebraically special solutions of Einstein equations, that is
the Petrov type D,N,II and III. In fact we restrict our discussion to Petrov
Type III because both the method of work and the results are the same for
all Petrov types.

The metric of the Petrov type III space-time is
\begin{equation}
ds^{2}=2d\rho dv+\frac{3}{2}xd\rho ^{2}+\frac{v^{2}}{x^{3}}\left(
dx^{2}+dy^{2}\right)  \label{P3.00}
\end{equation}%
with conformal algebra%
\begin{eqnarray*}
K^{1} &=&\partial _{\rho }~,~K^{2}=\partial _{y}~,~K^{3}=v\partial _{v}-\rho
\partial _{\rho }+2x\partial _{x}+2y\partial _{y} \\
H &=&v\partial _{v}+\rho \partial _{\rho }~,~\psi =1
\end{eqnarray*}%
where $K^{1-4}$ are KVs and $H$ is a non-gradient HV. (The space does not
admit proper CKVs).

In this space-time Laplace equation (\ref{PE.9}) takes the form%
\begin{equation}
-\frac{3}{2}xu_{vv}+2u_{v\rho }+\frac{x^{3}}{v^{2}}\left(
u_{xx}+u_{yy}\right) -3\frac{x}{v}u_{v}+\frac{2}{v}u_{\rho }=0.
\label{P3.01}
\end{equation}%
From{\LARGE \ }Theorem \ref{KG1} we have{\LARGE \ }that the extra Lie point
symmetries are the vectors
\begin{equation*}
X_{1-3}=K_{1-3}~,~~X_{4}=H
\end{equation*}%
with nonzero commutators:%
\begin{equation*}
\left[ X_{2,},X_{3}\right] =2X_{2}
\end{equation*}%
\begin{equation*}
\left[ X_{3},X_{1}\right] =X_{1}~,~~\left[ X_{1},X_{4}\right] =X_{1}.~
\end{equation*}%
We use $X_{4}~$to reduce the PDE because this is the Lie symmetry generated
by the nongradient HV{\LARGE .}

The zero order invariants of $X_{4}$ are $\sigma =\frac{\rho }{v},x,y,w$. We
choose $\sigma ,x,y$ as the independent variables and $w=w\left( \sigma
,x,y\right) $ as the dependent variable and we find the reduced equation
\begin{equation}
-\sigma \left( \frac{3}{2}x\sigma +2\right) w_{\sigma \sigma }+x^{3}\left(
w_{xx}+w_{yy}\right) =0.  \label{P3.02}
\end{equation}%
Equation (\ref{P3.02}) can be written
\begin{equation}
_{III}\Delta ^{\ast }w-\left( \frac{3x\sigma }{2}+1\right) w_{\sigma }-\frac{%
3x^{3}\sigma }{2\left( 3x\sigma +4\right) }w_{x}=0  \label{P3.03}
\end{equation}%
where $_{III}\Delta ^{\ast }$ is the Laplacian for the metric%
\begin{equation}
ds^{2}=-\frac{1}{\sigma \left( \frac{3}{2}x\sigma +2\right) }d\sigma ^{2}+%
\frac{1}{x^{3}}\left( dx^{2}+dy^{2}\right) .  \label{P3.04}
\end{equation}%
The Lie symmetries of (\ref{P3.03}) will be generated from the conformal
algebra of (\ref{P3.04}) with some extra conditions (see equations (\ref%
{GPE.42})-(\ref{GPE.46})). Finally, we find that equation (\ref{P3.03})
admits as Lie point symmetries the vectors~$\partial _{y}~,~x\partial
_{x}+y\partial _{x}-\sigma \partial _{\sigma }$ which are inherited
symmetries. Therefore we do not have Type II hidden symmetries.

\subsection{The Laplace equation in the $n~$dimensional FRW-like spacetime.}

\label{FRWCKV}

We consider the $n$ dimensional FRW-like space $\left( n>2\right) $ with
metric
\begin{equation}
ds^{2}=e^{2t}\left[ dt^{2}-\left( \delta _{AB}dy^{A}dy^{B}\right) \right]
\label{EC.01}
\end{equation}%
where $\delta _{AB}$ is the $n-1$ dimensional Euclidian metric. The
reduction of Laplace equation in this space (for $n=4$) has been studied
previously in \cite{Kara}. The metric (\ref{EC.01}) is conformally flat
hence admits the same CKVs with the flat space but with different
subalgebras. More precisely the space admits

a. $\left( n-1\right) +\frac{\left( n-2\right) \left( n-3\right) }{2}~$KVs
the $K_{G}^{A},~X_{R}^{AB}~$

b. $1$ gradient HV$~$the $K_{G}^{1}=\partial _{t}~~$\newline
the rest vectors being proper\ CKVs \cite{MM86}. In this space Laplace
equation (\ref{PE.9}) becomes%
\begin{equation}
e^{-2t}\left[ u_{tt}-\left( \delta ^{AB}u_{AB}\right) +\left( n-2\right)
u_{t}\right] =0  \label{EC.02}
\end{equation}%
and the extra Lie symmetries are
\begin{equation*}
K_{G}^{A},~X_{R}^{AB}~,~K_{G}^{1}~,~X_{R}^{1A}-2pY^{a}u\partial _{u}
\end{equation*}%
where $2p=\frac{2-n}{2}$. \ The algebra of the Lie point symmetries is the
same with that of\ section \ref{M4}. We consider the reduction with a proper
CKV.

\subsubsection{Reduction with a proper CKV}

We may take any of the vectors $X_{R}^{1A}$ (because as one can see in the
Appendix there is a symmetry between the coordinates $y^{A}$). We choose the
vector
\begin{equation*}
X_{R}^{1x}=x\partial _{t}+t\partial _{x}+2pxu\partial _{u}.
\end{equation*}%
whose zero order invariants are
\begin{equation*}
R=t^{2}-x^{2}~,~y^{C}~,~w=e^{-2pt}u.
\end{equation*}%
We take the dependent variable to be the $w=w\left( R,y^{C}\right) $ and
find the reduced equation
\begin{equation}
4Rw_{RR}-\delta ^{ab}w_{ab}+4w_{R}-4p^{2}w=0  \label{EC.03}
\end{equation}%
where $a=1,\ldots ,n-2$. We consider cases.\newline
{\textbf{Case }}$n>3.$

For $n>3$ equation (\ref{EC.03}) is
\begin{equation}
_{C}\Delta w-4p^{2}f\left( R\right) w=0  \label{EC.04}
\end{equation}%
where $_{C}\Delta $ is the Laplace operator for the $\left( n-1\right) $
dimensional metric%
\begin{equation}
ds_{C}^{2}=\frac{1}{f\left( R\right) }\left( \frac{1}{4R}dR^{2}-\delta
_{ab}dy^{a}dy^{b}\right)  \label{EC.05}
\end{equation}%
and $f\left( R\right) =R^{-\frac{1}{n-3}}$. The metric (\ref{EC.05}) is
conformally flat hence we know its conformal algebra. Application of theorem %
\ref{KG} gives that the Lie point symmetries of (\ref{EC.04}) are the
vectors
\begin{eqnarray*}
X_{u} &=&u\partial _{u}~,~X_{b}=b\partial _{u} \\
X_{K}^{a} &=&\partial _{y^{a}}~,~X_{R}^{ab}~=y^{b}\partial
_{a}-y^{a}\partial _{b}.
\end{eqnarray*}%
These are inherited symmetries (this result agrees with the commutators). We
conclude that for this reduction we do not have Type II hidden symmetries.

{\textbf{Case }}$n=3${\textbf{$.$}}

For $n=3$ the reduced equation is a two dimensional equation $\left( \text{%
that is }\delta _{AB}=\delta _{yy}\right) $%
\begin{equation}
4Rw_{RR}-w_{yy}+4w_{R}-\frac{1}{4}w=0
\end{equation}%
and admits as Lie point symmetry the KV $\partial _{y}$ which is an
inherited symmetry. Hence, we do not have Type II hidden symmetries.

We conclude that the reduction of Laplace equation in an $n~$dimensional FRW
like space with the proper CKV does not produce Type II hidden symmetries
and in fact the inherited symmetries of the reduced equation are the KVs of
the flat metric.

\subsubsection{Reduction with the gradient HV}

The gradient HV $K_{G}^{1}=\partial _{t}$ is a Lie symmetry of the Laplacian
(\ref{EC.02}) hence we consider the reduction by this vector. The zero order
invariants are $y^{A},w$ and lead to the reduced equation
\begin{equation}
\delta ^{AB}u_{AB}=0  \label{EC.07}
\end{equation}%
which is Laplace equation in the flat space $E^{n-1}$. We consider again
cases.\newline
{\textbf{Case }}$n>3$

In this case the Lie symmetries of (\ref{EC.07}) are given by the vectors
\begin{equation}
K_{G}^{A}~,~X_{R}^{AB}~,~_{n-1}H~,X_{C}^{A}-y^{A}u\partial _{u}.
\end{equation}%
From these the $K_{G}^{A},X_{R}^{AB}$ are inherited symmetries and the rest
- which are produced by the HV and the sp.CKVs of the space $E^{n-1}$ - are
Type II hidden symmetries.\newline

If $n=3,$ the reduced equation (\ref{EC.07}) is the Laplacian in $E^{2},$
hence, admits infinite Lie symmetries. Type II hidden symmetries are
generated from the HV and the CKVs of $E^{2}$.

In the following sections, we study the reduction of the homogeneous heat
equation (\ref{t2.1}) in certain Riemannian spaces.

\section{Reduction of the homogeneous heat equation in certain Riemannian
spaces}

\label{Heatred}

In a general Riemannian space with metric $g_{ij}$ the heat conduction
equation with flux is%
\begin{equation}
\Delta u-u_{t}=q  \label{WH.0}
\end{equation}%
where $\Delta $ is the Laplace operator $\Delta =\frac{1}{\sqrt{g}}\frac{%
\partial }{\partial x^{i}}\left( \sqrt{g}g^{ij}\frac{\partial }{\partial
x^{j}}\right) $ and $q=q(t,x,u)$. Equation (\ref{WH.0}) can also be written
\begin{equation}
g^{ij}u_{ij}-\Gamma ^{i}u_{i}-u_{t}=q  \label{WH.001}
\end{equation}%
where $\Gamma ^{i}=\Gamma _{jk}^{i}g^{jk}$ and $\Gamma _{jk}^{i}$ are the
Christofell Symbols of the metric $g_{ij}$.

For $q=0$, equation (\ref{WH.001}) admits the Lie point symmetries
\begin{equation}
X_{t}=\partial _{t}~\ ,~X_{u}=u\partial _{u}~,~X_{b}=b\left( t,x\right)
\partial _{u}  \label{WH.001a}
\end{equation}%
where $b\left( t,x\right) $ is a solution of the heat equation. These
symmetries are too general to provide sound reductions and consequently
reduced PDEs which can give Type II\ hidden symmetries. However, in Chapter %
\ref{chapter5} it has been shown that there is a close relation between the
Lie point symmetries of the heat equation and the collineations of the
metric. Specifically it has been shown that the Lie point symmetries of the
heat equation are generated from the HV and the KVs of $g_{ij}.$ This
implies that if we want to have new Lie point \ symmetries which will allow
for sound reductions of the heat equation eqn (\ref{WH.001}) we have to
restrict our considerations to spaces which admit a homothetic algebra. Our
intention is to keep the discussion as general as possible therefore we
consider spaces in which the metric $g_{ij}$ can be written in generic form.
The spaces we shall consider are:

a. Spaces which admit a gradient KV (\ref{WH.03}).

b. Spaces which admit a gradient HV (\ref{LEH.01}).

c. Space which admits a nongradient HV acting simply transitive, i.e. Petrov
Type III (\ref{P3.00}).

In what follows all spaces are of dimension $n\succeq 2$. The case $n=1$
although relatively trivial for our approach in general it is not so and has
been studied for example in \cite{Clarkson93,Ivanova2008}.

\subsection{The heat equation in a space which admits a gradient KV}

\label{The heat equation in an 1+n decomposable space}

In the $1+n$ decomposable space with line element (\ref{WH.03}) the heat
equation (\ref{t2.1}) takes the form%
\begin{equation}
u_{zz}+h^{AB}u_{AB}-\Gamma ^{A}u_{B}-u_{t}=0.  \label{WH.04}
\end{equation}%
Application of Theorem \ref{The Lie of the heat equation} gives that (\ref%
{WH.04}) admits the following \emph{extra }Lie point symmetries generated by
the gradient KV $\partial _{x}:$
\begin{equation*}
X_{1}=\partial _{z}~,~X_{2}=t\partial _{z}-\frac{1}{2}zu\partial _{u}
\end{equation*}%
with nonvanishing commutators
\begin{equation}
\left[ X_{t},X_{2}\right] =X_{1}~,~\left[ X_{2},X_{1}\right] =\frac{1}{2}%
X_{u}.  \label{WH.04.a}
\end{equation}

We reduce (\ref{WH.04}) using the zero order invariants of the extra Lie
point \ symmetries $X_{1},X_{2}$.

\subsubsection{Reduction by $X_{1}$}

\label{gradientKV1}

The zero order invariants of $X_{1}$ are%
\begin{equation*}
\tau =t~,~y^{A}~,~w=u.
\end{equation*}%
Taking these invariants as new coordinates eqn (\ref{WH.04}) reduces to
\begin{equation}
_{h}\Delta w-w_{t}=0  \label{WH.05}
\end{equation}%
where $_{h}\Delta $ is the Laplace operator in the~$n-$dimensional space
with metric $h_{AB}:$
\begin{equation}
_{h}\Delta w=h^{AB}w_{AB}-\Gamma ^{A}w_{B}.  \label{WH.05.a}
\end{equation}%
Equation (\ref{WH.05}) is the homogeneous heat eqn (\ref{t2.1}) in the $n~$\
dimensional space with metric $h_{AB}$. \ According to the Theorem \ref{The
Lie of the heat equation} (see Chapter \ref{chapter5}), the Lie symmetries
of this equation are the homothetic algebra of $h_{AB}$.{\LARGE \ }As it has
been already mentioned the homothetic algebras of the $n$ and the $1+n$
metrics are related as follows \cite{TNA}:

a. The KVs of the $n-$ metric are identical with those of the $1+n$ metric.

b. The $1+n$ metric admits a HV\ if the $n$ metric admits one and if $%
_{n}H^{A}$ is the HV of the $n$ - metric then the HV\ of the $1+n$ metric is
given by the expression
\begin{equation}
_{1+n}H^{\mu }=x\delta _{z}^{\mu }+_{n}H^{A}\delta _{A}^{\mu }\qquad \quad
\mu =z,1,...,n.  \label{WH.05.b}
\end{equation}

The above imply that equation (\ref{WH.05}) inherits all symmetries which
are generated from the KVs/HV of the $n-$metric $h_{AB}.$ Hence we do not
have Type II symmetries in this reduction.

\subsubsection{Reduction by $X_{2}$}

\label{gradientKV2}

The zero order invariants of $X_{2}$ are%
\begin{equation*}
\tau =t~,~y^{A}~,~w=ue^{\frac{z^{2}}{4t}}.
\end{equation*}%
Taking these invariants as new coordinates eqn (\ref{WH.04}) reduces to%
\begin{equation}
h^{AB}w_{AB}-\Gamma ^{A}w_{B}-w_{\tau }-\frac{1}{2\tau }w=0  \label{WH.06}
\end{equation}%
or%
\begin{equation*}
_{h}\Delta w-w_{\tau }=\frac{1}{2\tau }w.
\end{equation*}%
This is the nonhomogeneous heat equation with flux $q\left( \tau
,y^{A},w\right) =\frac{1}{2\tau }w$. Application of Theorem \ref{The Lie of
the heat equation with flux} gives the following result\footnote{%
For details see Appendix \ref{apTypeII}.}.

\begin{proposition}
\label{Cor1} The Lie point symmetries of the heat equation (\ref{WH.06}) in
an $n-$dimensional Riemannian space with metric $h_{AB}$ are constructed
form the homothetic algebra of the metric as follows:\newline

a. $Y^{i}$ is a HV/KV.\newline
The Lie symmetry is
\begin{equation}
X=\left( 2c_{2}\psi \tau +c_{1}\right) \partial _{\tau }+c_{2}Y^{i}\partial
_{i}+\left[ \left( -\frac{c_{1}}{2\tau }+a_{0}\right) w+b\left( \tau
,x\right) \right] \partial _{w}
\end{equation}

b. $Y^{i}=S_{J}^{,i}$ is a gradient HV/KV (the index $J$ counts gradient
KVs).\newline
The Lie symmetry is
\begin{equation}
X=\left( \psi T_{0}\tau ^{2}\right) \partial _{\tau }+T_{0}\tau
S_{J}^{,i}\partial _{i}-\left( \frac{1}{2}T_{0}S_{J}+T_{0}\psi \tau \right)
w\partial _{w}
\end{equation}%
where $b\left( \tau ,x\right) $ is a solution of the heat equation (\ref%
{WH.06}).
\end{proposition}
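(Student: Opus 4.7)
The plan is to derive Proposition~\ref{Cor1} as a direct specialization of Theorem~\ref{The Lie of the heat equation with flux}. The crucial observation is that equation~(\ref{WH.06}) is exactly of the form treated in that theorem, with the flux function chosen to be
\begin{equation*}
q(\tau,y^A,w)=\frac{w}{2\tau},
\end{equation*}
so that $q_{,w}=\frac{1}{2\tau}$, $q_{,A}=0$, and $q_{,\tau}=-\frac{w}{2\tau^2}$. Because $q_{,A}=0$, the term $c_{2}q_{,i}Y^{i}$ in the constraint (\ref{HEF.20}) (respectively $Tq_{,i}S^{,i}$ in (\ref{HEF.22})) drops out, and the geometric part of the Lie point symmetry generator is already fixed by the theorem to lie in the homothetic algebra of $h_{AB}$. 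Thus the only task is to solve the residual constraints for the unknown functions $a(\tau)$, $b(\tau,y^{A})$ in Case~a and $T(\tau)$, $F(\tau)$, $b(\tau,y^{A})$ in Case~b.

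First I would treat Case~a, substituting $q=\frac{w}{2\tau}$ into (\ref{HEF.20}). The flux-dependent terms collapse nicely: the pair $-(au+b)q_{,w}+aq$ cancels the $u$-linear contribution in $a$, leaving only a $-\frac{b}{2\tau}$ term, while the only surviving piece of $(2\psi c_{2}qt+c_{1}q)_{\tau}$ is the differentiation of the explicit $\tau$-dependence in $c_1/(2\tau)$. Splitting the resulting identity into its $u$-linear and $u$-independent parts yields the two decoupled conditions $a_{\tau}=\frac{c_{1}}{2\tau^{2}}$ and $_{h}\Delta b-b_{\tau}=\frac{b}{2\tau}$. Integrating the first gives $a(\tau)=-\frac{c_{1}}{2\tau}+a_{0}$, and the second says precisely that $b$ solves the reduced equation~(\ref{WH.06}); inserting these in (\ref{HEF.19}) reproduces the symmetry vector of Case~a verbatim.

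For Case~b I would proceed analogously with (\ref{HEF.21})–(\ref{HEF.22}). The identical cancellation $-((-\tfrac12 T_{,\tau}S+F)u+b)q_{,w}+(-\tfrac12 T_{,\tau}S+F)q=-\frac{b}{2\tau}$ occurs, and only the $\tau$-derivative of $2\psi q\int T\,d\tau+c_{1}q$ retains the $u$-dependent part. Splitting into the pieces proportional to $uS$, to $u$ alone, and to the remainder, one obtains the system $T_{,\tau\tau}=0$, $F_{,\tau}=-\psi T+\frac{c_{1}}{2\tau^{2}}$, together again with $_{h}\Delta b-b_{\tau}=\frac{b}{2\tau}$. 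Selecting the purely geometric branch $c_{1}=0$ and writing $T(\tau)=T_{0}\tau$ gives $\int T\,d\tau=\frac{T_{0}\tau^{2}}{2}$ and $F(\tau)=-T_{0}\psi\tau$, which upon substitution into (\ref{HEF.21}) produces exactly the vector displayed in Case~b of the proposition.

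The only mild obstacle is algebraic bookkeeping: one must carefully separate the $w$-linear and $w$-independent parts of the constraint identities, remembering that $w$ is treated as an independent coordinate in the jet-space symmetry condition, and keep track of the explicit $\tau$-dependence inside the derivatives $(\cdots)_{\tau}$ that act on products of $q$ with functions of $\tau$. Once this is handled correctly, no further computation is needed; the proposition is a genuine corollary of the general theorem applied to the specific flux arising from the gradient-KV reduction.
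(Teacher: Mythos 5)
Your proposal is correct and follows essentially the same route as the paper's own proof in Appendix \ref{apTypeII}: substitute $q=w/(2\tau )$ into the constraints of Theorem \ref{The Lie of the heat equation with flux}, note that $q_{,i}=0$ so the geometric part of the generator is untouched, and split the residual identity into its $w$-linear and $w$-independent pieces to get $a_{,\tau }=\frac{c_{1}}{2\tau ^{2}}$, $H(b)-\frac{b}{2\tau }=0$ in Case a and $T_{,\tau \tau }=0$ with the $F$-equation in Case b. One small slip worth fixing: in Case b the residual ODE is $F_{,\tau }=-\psi T_{,\tau }+\frac{c_{1}}{2\tau ^{2}}$ rather than $F_{,\tau }=-\psi T+\frac{c_{1}}{2\tau ^{2}}$ (for $T=T_{0}\tau $ one has $\frac{\psi }{\tau ^{2}}\int T\,d\tau -\frac{\psi }{\tau }T=-\frac{1}{2}\psi T_{,\tau }$, which combines with the explicit $-\frac{1}{2}\psi T_{,\tau }$ term), and integrating your version would give $F=-\frac{1}{2}T_{0}\psi \tau ^{2}$ instead of the correct $F=-T_{0}\psi \tau $ that you in fact state and that matches the proposition.
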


We infer that for this reduction we have the Type II hidden symmetry ~$%
\partial _{\tau }-\frac{1}{2t}w\partial _{w}.$ The rest of the\ Lie point
symmetries are inherited.

\subsection{The heat equation in a space which admits a gradient homothetic
vector}

\label{The heat equation in a (n+1) space}

For the spacetime with line element
\begin{equation*}
ds^{2}=dr^{2}+r^{2}h_{AB}dy^{A}dy^{B}
\end{equation*}%
the homogeneous heat equation becomes%
\begin{equation}
u_{rr}+\frac{1}{r^{2}}h^{AB}u_{AB}+\frac{\left( n-1\right) }{r}u_{r}-\frac{1%
}{r^{2}}\Gamma ^{A}u_{A}-u_{t}=0  \label{WH.11}
\end{equation}%
where $\Gamma ^{A}=\Gamma _{BC}^{A}h^{BC}$ and $\Gamma _{BC}^{A}$ are the
connection coefficients of the Riemannian metric $h_{AB}$ ($%
A,B,C=1,2,...,n). $ Application of Theorem \ref{The Lie of the heat equation}
gives that the heat equation (\ref{WH.11}) admits the following \emph{extra }%
Lie point symmetries generated by the gradient homothetic vector{\LARGE \ }%
\begin{equation}
~\bar{X}_{1}=2t\partial _{t}+r\partial _{r}~\ ,~~\bar{X}_{2}=t^{2}\partial
_{t}+tr\partial _{r}-\left( \frac{1}{4}r^{2}+\frac{n}{2}t\right) u\partial
_{u}~\,  \label{WH.11b}
\end{equation}%
with nonzero commutators
\begin{equation*}
\left[ X_{t},\bar{X}_{1}\right] =2X_{t}~~,~~\left[ \bar{X}_{1},\bar{X}_{2}%
\right] =2X_{t}
\end{equation*}%
\begin{equation*}
\left[ X_{t},\bar{X}_{2}\right] =\bar{X}_{1}-\frac{n}{2}X_{u}.
\end{equation*}%
We consider again the reduction of (\ref{WH.11}) using the zero order
invariants of these extra Lie point symmetries.

\subsubsection{Reduction by $\bar{X}_{1}$}

\label{HV1}

The zero order invariants of $\bar{X}_{1}$ are
\begin{equation*}
\phi =\frac{r}{\sqrt{t}}~,~w=u,~y^{A}.
\end{equation*}%
We choose $w=w\left( \phi ,y^{A}\right) $ as the dependent variable.

Replacing in (\ref{WH.11}) we find the reduced PDE
\begin{equation}
w_{\phi \phi }+\frac{1}{\phi ^{2}}h^{AB}w_{AB}+\frac{\left( n-1\right) }{%
\phi }w_{\phi }+\frac{\phi }{2}w_{\phi }-\frac{1}{\phi ^{2}}\Gamma
^{A}w_{A}=0.  \label{WH.12}
\end{equation}
Consider a nonvanishing function $N^{2}\left( \phi \right) $ and divide (\ref%
{WH.12}) with $N^{2}\left( \phi \right) $ \ to get:%
\begin{equation}
\frac{1}{N^{2}}w_{\phi \phi }+\frac{1}{\phi^{2} N^{2}}h^{AB}w_{AB}+\frac{%
\left( n-1\right) }{{\phi}N^{2}}w_{\phi }+\frac{\phi }{2N^{2}}w_{\phi }-%
\frac{1}{\phi ^{2}N^{2}}\Gamma ^{A}w_{A}=0  \label{WH.12C}
\end{equation}
It follows that \ (for $n>2$) equation (\ref{WH.12C})\ can be written as%
\begin{equation}
_{\bar{g}}\Delta w=0  \label{WH.12d}
\end{equation}%
where $_{\bar{g}}\Delta $ $\ $\ is the Laplace operator if \ $N^{2}\left(
\phi \right) =\exp \left( \frac{\phi ^{2}}{2\left( n-2\right) }\right) $ and
$\bar{g}_{ij}$ is the conformally related metric
\begin{equation}
d\bar{s}^{2}=\exp \left( \frac{\phi ^{2}}{2\left( n-2\right) }\right) \left(
d\phi ^{2}+\phi ^{2}h_{AB}dy^{A}dy^{B}\right).  \label{WH.12e}
\end{equation}

According to Theorem \ref{KG1}\ the Lie point symmetries of (\ref{WH.12d})
are the CKVs of the metric (\ref{WH.12e}) \ whose conformal factor satisfies
the condition $_{\bar{g}}\Delta \psi =0.$ Therefore, Type II hidden
symmetries will be generated from the proper CKVs. The existence and the
number of these vectors depends mainly on the $n$ metric $h_{AB}.$

\subsubsection{Reduction by $\bar{X}_{2}$\label{HV2}}

For $\bar{X}_{2}$ the zero order invariants are
\begin{equation*}
\phi =\frac{r}{t}~,~w=ut^{\frac{n}{2}}e^{\frac{r^{2}}{4t}},~y^{A}.
\end{equation*}%
We choose $w=w\left( \phi ,y^{A}\right) $ as the dependent variable and we
have the reduced equation%
\begin{equation}
_{g}\Delta w=0  \label{WH.13}
\end{equation}%
where%
\begin{equation}
_{g}\Delta w=w_{\phi \phi }+\frac{\left( n-1\right) }{\phi }w_{\phi }+\frac{1%
}{\phi ^{2}}h^{AB}w_{AB}-\frac{1}{\phi ^{2}}\Gamma ^{A}w_{A}.
\end{equation}%
Equation (\ref{WH.13}) is the Laplace equation in the space $\left( \phi
,y^{A}\right) $ with metric%
\begin{equation}
ds^{2}=d\phi ^{2}+\phi ^{2}h_{AB}dy^{A}dy^{B}.  \label{WH.14}
\end{equation}

The Lie point symmetries of Laplace equation (\ref{WH.13}) are given in
Theorem \ref{KG1}. As in the last case the existence and the number of these
vectors depends mainly on the $n$ metric $h_{AB}.$

We note that both vectors $\bar{X}_{1},\bar{X}_{2}$ \ are generated form the
gradient HV and in both cases the heat equation is reduced to Laplace
equation. This gives the following

\begin{proposition}
\label{Prop}The reduction of the heat equation (\ref{t2.1}) in a space with
metric (\ref{LEH.01}) $\left( n>2\right) ~$by means of the Lie symmetries
generated by the gradient HV leads to Laplace equation $\Delta u=0$, where
\thinspace $\Delta $ is the Laplace operator for the metric (\ref{WH.12e})
if the reduction is done by $\bar{X}_{1}$ and for the metric (\ref{WH.14})
if the reduction is done by $\bar{X}_{2}$.
\end{proposition}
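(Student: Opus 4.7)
The plan is to prove the proposition by explicitly performing each of the two reductions indicated in sections \ref{HV1} and \ref{HV2} and then matching the resulting PDE with the Laplace operator of an appropriately chosen Riemannian metric. Both reductions start from the heat equation (\ref{WH.11}) in the form
\[
u_{rr}+\frac{1}{r^{2}}h^{AB}u_{AB}+\frac{n-1}{r}u_{r}-\frac{1}{r^{2}}\Gamma^{A}u_{A}-u_{t}=0,
\]
so the first step for each case is to compute the zero-order invariants of the symmetry vector by integrating the associated Lagrange system, fix them as new coordinates $(\phi,y^{A})$ with a rescaled dependent variable $w$, and then re-express $u_{t}$, $u_{r}$, $u_{rr}$, $u_{A}$ and $u_{AB}$ in terms of derivatives of $w$ with respect to these invariants.

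For the reduction by $\bar{X}_{2}=t^{2}\partial_{t}+tr\partial_{r}-(\tfrac{1}{4}r^{2}+\tfrac{n}{2}t)u\partial_{u}$ the invariants are $\phi=r/t$, $y^{A}$ and $w=u\,t^{n/2}e^{r^{2}/(4t)}$, so the rescaling exactly kills the terms containing $u_{t}$ and the cross terms arising from the exponential after collecting powers of $t$. A direct substitution, which is a routine but bookkeeping-heavy calculation, yields exactly
\[
w_{\phi\phi}+\frac{n-1}{\phi}w_{\phi}+\frac{1}{\phi^{2}}h^{AB}w_{AB}-\frac{1}{\phi^{2}}\Gamma^{A}w_{A}=0,
\]
and one recognises the left-hand side as $\,_{g}\Delta w$ for the warped-product metric (\ref{WH.14}), since $\sqrt{|g|}=\phi^{n-1}\sqrt{|h|}$ and the usual expansion of $\tfrac{1}{\sqrt{|g|}}\partial_{i}(\sqrt{|g|}g^{ij}\partial_{j}w)$ gives precisely these coefficients. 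This establishes the second half of the statement.

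For the reduction by $\bar{X}_{1}=2t\partial_{t}+r\partial_{r}$ with invariants $\phi=r/\sqrt{t}$, $y^{A}$, $w=u$, substitution produces the equation (\ref{WH.12}), which differs from $\,_{g}\Delta w=0$ by the presence of the extra first-order term $\tfrac{\phi}{2}w_{\phi}$. The key idea here is to multiply through by a conformal factor $N^{-2}(\phi)$ and absorb the spurious term into the connection of a conformally related metric $\bar{g}_{ij}=N^{2}g_{ij}$, where $g_{ij}$ is the metric (\ref{WH.14}). Using the standard transformation rule
\[
\,_{\bar g}\Delta w=N^{-2}\!\left[\,_{g}\Delta w+(n-2)(\ln N)^{,i}w_{,i}\right],
\]
the matching of the single extra term $\tfrac{\phi}{2}w_{\phi}$ with $(n-2)(\ln N)^{,\phi}w_{\phi}$ forces
\[
(n-2)\frac{d\ln N}{d\phi}=\frac{\phi}{2},
\]
whose integration gives $N^{2}(\phi)=\exp\bigl(\phi^{2}/(2(n-2))\bigr)$, exactly as in (\ref{WH.12e}). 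This requires $n>2$, which matches the hypothesis, and once $N$ is chosen the reduced equation is literally $\,_{\bar g}\Delta w=0$.

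The main obstacle is the first case: one must recognise that the stray drift term $\tfrac{\phi}{2}w_{\phi}$ is not spurious but is exactly the conformal trace term and then solve the resulting ODE for $N(\phi)$; this is where the restriction $n>2$ enters, and it is the only nontrivial algebraic step. The second case reduces to a direct verification. Once both computations are carried out the proposition follows immediately from the definitions of the Laplace operators on (\ref{WH.12e}) and (\ref{WH.14}).
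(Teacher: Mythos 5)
Your proposal is correct and follows essentially the same route as the paper: substitute the zero-order invariants of each symmetry, observe that the $\bar{X}_{2}$ reduction lands directly on the Laplacian of the cone metric (\ref{WH.14}), and for $\bar{X}_{1}$ absorb the drift term $\tfrac{\phi}{2}w_{\phi}$ into a conformal rescaling $N^{2}(\phi)=\exp\bigl(\phi^{2}/(2(n-2))\bigr)$ of that metric. The only difference is that you make explicit the conformal transformation law $_{\bar g}\Delta w=N^{-2}\bigl[\,_{g}\Delta w+(n-2)(\ln N)^{,i}w_{,i}\bigr]$ that fixes $N$, a step the paper states but does not display; your version of that identity and the resulting ODE for $N$ are correct and reproduce (\ref{WH.12e}) exactly.
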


\section{Applications of the reduction of the homogeneous heat equation}

\label{ApplicationsHeatre}

In this section, we consider applications of the general results of section %
\ref{Heatred} in various spacetimes.

\subsection{The heat equation in a $1+n$ decomposable space}

Consider the $1+n$ decomposable space%
\begin{equation}
ds^{2}=dx^{2}+N^{-2}\left( y^{C}\right) \delta _{AB}y^{A}y^{B}  \label{WH.17}
\end{equation}%
where $N\left( y^{C}\right) =\left( 1+\frac{K}{4}y^{C}y_{C}\right) ,$ that
is, the $n$ space is a space of constant non vanishing ($K\neq 0)$
curvature. The space (\ref{WH.17}) \ does not admit proper HV, however,
admits $\frac{n\left( n-1\right) }{2}+n~$nongradient KVs and $1$ gradient KV
as follows \cite{TNA}%
\begin{eqnarray*}
1~\text{gradient KV}\text{:~} &&\partial _{x} \\
n~\text{nongradient KVs} &\text{:}&\text{\ }K_{V}=\frac{1}{N}\left[ \left(
2N-1\right) \delta _{I}^{i}+\frac{K}{2}Nx_{I}x^{i}\right] \partial _{i} \\
\frac{n\left( n-1\right) }{2}\text{nongradient KVs} &\text{:}&\text{\ }%
X_{IJ}=\delta _{\lbrack I}^{j}\delta _{J]}^{i}\partial _{i}
\end{eqnarray*}%
In a space with metric (\ref{WH.17}), the homogeneous heat equation takes
the form%
\begin{equation}
u_{xx}+N^{2}\left( y^{C}\right) \delta ^{AB}u_{AB}-\frac{N}{2}%
Ky^{A}u_{A}-u_{t}=0  \label{WH.18}
\end{equation}%
Applying Theorem \ref{The Lie of the heat equation} we find that equation (%
\ref{WH.18}) admits the extra Lie point \ symmetries
\begin{equation}
\partial _{x}~,~t\partial _{x}-\frac{1}{2}xu\partial _{x}~,K_{V}~,X_{IJ}.
\end{equation}%
The Lie point \ symmetries which are generated by the gradient KV are%
\footnote{%
Here the algebra is the one given in section \ref{The heat equation in an
1+n decomposable space} and a separate algebra is the algebra of the KVs of
the space of constant curvature. More specifically the KVs $K_{V}~,X_{IJ}$
commute with all other symmetries but not between themselves} $\partial
_{x}~,~t\partial _{x}-\frac{1}{2}xu\partial _{x}$.\

Reduction of (\ref{WH.18}) by means of the gradient KV $\partial _{x}$
results in the special form of equation (\ref{WH.05})
\begin{equation}
\frac{1}{N^{2}\left( y^{C}\right) }\delta ^{AB}u_{AB}-\frac{N}{2}%
Ky^{A}u_{A}-u_{t}=0.  \label{WH.19}
\end{equation}%
This is the homogeneous heat equation in an $n$- dimensional space of
constant curvature. The Lie point \ symmetries of this equation have been
determined in section \ref{hhe} and are inherited symmetries. Hence, in this
case, we do not have Type II\ hidden symmetries.

Reduction of (\ref{WH.18}) with the Lie symmetry $~t\partial _{x}-\frac{1}{2}%
xu\partial _{x}$\ gives that the reduced equation (\ref{WH.06}) is
\begin{equation}
N^{2}\left( y^{C}\right) \delta ^{AB}w_{AB}-\frac{N}{2}Ky^{A}w_{A}-w_{\tau }=%
\frac{1}{2\tau }w~,~w=ue^{\frac{x^{2}}{4t}}  \label{WH.20}
\end{equation}%
which is the heat equation with flux. By Proposition \ref{Cor1}, the Lie
point \ symmetries of (\ref{WH.20}) are:%
\begin{equation}
X=c_{1}\partial _{\tau }+\left( K_{V}~+X_{IJ}\right) +\left[ \left( -\frac{%
c_{1}}{2\tau }+a_{0}\right) w+b\left( \tau ,y^{C}\right) \right] \partial
_{w}  \label{WH.21}
\end{equation}%
where $c_{1},a_{0}$ are constants. From section \ref{gradientKV2} we have
that Type II hidden symmetry is the one defined by the constant $c_{1}$.

\subsection{FRW space-time with a gradient HV}

Consider the spatially flat FRW\ metric%
\begin{equation}
ds^{2}=d\sigma ^{2}-\sigma ^{2}\left( dx^{2}+dy^{2}+dz^{2}\right)
\label{WH.22a}
\end{equation}%
This metric admits the gradient HV \cite{MM86}%
\begin{equation*}
H=\sigma \partial _{\sigma }~~\left( \psi _{H}=1\right)
\end{equation*}%
and six nongradient KVs%
\begin{equation*}
X_{1-3}=\partial _{y^{A}}~~,~~X_{4-6}=y^{B}\partial _{A}-y^{A}\partial _{B}.
\end{equation*}%
where $y^{A}=\left( x,y,z\right) .~$

In this space the heat equation takes the form%
\begin{equation}
u_{\sigma \sigma }-\frac{1}{\sigma ^{2}}\left( u_{xx}+u_{yy}+u_{zz}\right) +%
\frac{3}{\sigma }u_{\sigma }-u_{t}=0.  \label{WH.22}
\end{equation}%
The Lie point symmetries of (\ref{WH.22}) are%
\begin{eqnarray*}
&&\partial _{t}~,~u\partial _{u}~,~b\left( \tau ,y^{A}\right) \partial
_{u}~,~X_{1-3}~,~X_{4-6}~, \\
H_{1} &=&2t\partial _{t}+\sigma \partial _{\sigma }~,~H_{2}=t^{2}\partial
_{t}+t\sigma \partial _{\sigma }-\left( \frac{1}{4}\sigma ^{2}+2t\right)
u\partial _{u}~.
\end{eqnarray*}%
The Lie point \ symmetries $H_{1},H_{2}$ are produced by the gradient HV
therefore we use them to reduce (\ref{WH.22}). We note that this case is a
special case of the one we considered in section \ref{The heat equation in a
(n+1) space}\ for $h_{AB}=\delta _{AB}$.

Reduction by $H_{1}$ gives that (\ref{WH.22})~becomes:
\begin{equation}
w_{\phi \phi }-\frac{1}{\phi ^{2}}\left( w_{xx}+w_{yy}+w_{zz}\right) +\left(
\frac{3}{\phi }+\frac{\phi }{2}\right) w_{\phi }=0  \label{WH.23}
\end{equation}%
where $\phi =\frac{r}{\sqrt{\sigma }}~,~w=u.$ This is a special form of (\ref%
{WH.12}).

Dividing with $N^{2}\left( \phi \right) =\exp \left( \frac{\phi ^{2}}{4}%
\right) $ we find that (\ref{WH.23}) is written as
\begin{equation}
_{\bar{g}}\Delta w=0  \label{WH.23a}
\end{equation}%
where the metric $\bar{g}_{ij}$ is the conformally related metric of \ (\ref%
{WH.22a}):%
\begin{equation}
d\bar{s}^{2}=e^{\frac{\phi ^{2}}{4}}\left( d\phi ^{2}-\phi ^{2}\left(
dx^{2}+dy^{2}+dz^{2}\right) \right)  \label{WH.23aa}
\end{equation}

We have that the Lie symmetries of (\ref{WH.23a}) are generated from
elements of the conformal algebra of the space whose conformal factors
satisfy condition $_{\bar{g}}\Delta \psi =0.$ The metric (\ref{WH.23a}) is
conformally flat therefore its conformal group is the same with that of the
flat space \cite{TNA,MM86}, however with different subgroups. We find that
these vectors (i.e. the Lie symmetries)\ are the vectors%
\begin{equation}
~X_{1-3}~,~X_{4-6}~,\partial _{t},\text{ }w\partial _{w}~,~b_{0}\left( \phi
,y^{A}\right) \partial _{w}.~  \label{WH.24}
\end{equation}%
We conclude that there are no Type II symmetries for this reduction.

Using reduction by $H_{2}$ we find that (\ref{WH.22})~reduces to :%
\begin{equation}
w_{\phi \phi }-\frac{1}{\phi ^{2}}\left( w_{xx}+w_{yy}+w_{zz}\right) +\frac{3%
}{\phi }w_{\phi }=0  \label{WH.25}
\end{equation}%
where $\phi =\frac{\tau }{\tau }~,~w=ut^{2}e^{\frac{\tau ^{2}}{4t}}.~$\ This
is a special form of (\ref{WH.13}) which is the Laplace equation. In this
case the results of \cite{Bozhkov} apply and we infer that the Lie point \
symmetries of (\ref{WH.25}) are: \
\begin{eqnarray}
&&X_{1-3}~,~X_{4-6}~,~~w\partial _{w}~,~b_{1}\left( \phi ,y^{A}\right)
\partial _{w}  \notag \\
X_{7} &=&\phi \partial _{\phi }~,~X_{8-10}=\phi y^{A}\partial _{\phi }+\ln
\phi \partial _{A}-y^{A}w\partial _{w}.
\end{eqnarray}%
The vector $X_{7}$ is the proper HV of the metric and the vectors $X_{8-10}$
the proper CKVs which are not special CKVs, therefore these vectors are Type
II\ hidden symmetries. A further analysis of (\ref{WH.25}) can be found in
\cite{Kara}.

\section{The Heat equation in Petrov type III spacetime}

\label{PetrovIIIHeat} In this section we consider the special class of
Petrov type III spacetime which admits a nongradient HV which acts simply
transitively.

The metric of the Petrov type III space-time is
\begin{equation}
ds^{2}=2d\rho dv+\frac{3}{2}xd\rho ^{2}+\frac{v^{2}}{x^{3}}\left(
dx^{2}+dy^{2}\right)  \label{WH.III}
\end{equation}%
with Homothetic algebra%
\begin{eqnarray*}
K^{1} &=&\partial _{\rho }~,~K^{2}=\partial _{y}~,~K^{3}=v\partial _{v}-\rho
\partial _{\rho }+2x\partial _{x}+2y\partial _{y} \\
H &=&v\partial _{v}+\rho \partial _{\rho }~\left( \psi _{H}=1\right)
\end{eqnarray*}%
where $K^{1-4}$ are KVs and $H$ is a nongradient HV.

In this space-time equation (\ref{t2.1}) takes the form:%
\begin{equation}
-\frac{3}{2}xu_{vv}+2u_{v\rho }+\frac{x^{3}}{v^{2}}\left(
u_{xx}+u_{yy}\right) -3\frac{x}{v}u_{v}+\frac{2}{v}u_{\rho }-u_{t}=0.
\end{equation}%
From\ Theorem\ \ref{The Lie of the heat equation} we have\ that the extra
Lie point \ symmetries are the vectors
\begin{equation*}
X_{1-3}=K_{1-3}~,~~X_{4}=2t\partial _{t}+H
\end{equation*}%
with nonzero commutators:%
\begin{equation*}
\left[ X_{t},X_{4}\right] =2X_{t}~,~\left[ X_{2,},X_{3}\right] =2X_{2}
\end{equation*}%
\begin{equation*}
\left[ X_{3},X_{1}\right] =X_{1}~,~~\left[ X_{1},X_{4}\right] =X_{1}~.
\end{equation*}%
We use $X_{4}$ for reduction because this is the Lie point symmetry
generated by the HV.

The zero order invariants of $X_{4}$ are%
\begin{equation*}
\alpha =\frac{v}{\sqrt{t}}~,~\beta =\frac{\rho }{\sqrt{t}}~,~\gamma
=x~,~\delta =y~,~w=u.
\end{equation*}%
We choose $w=w\left( \alpha ,\beta ,\gamma ,\delta \right) $ as the
dependent variable and we find that the reduced PDE is%
\begin{equation}
_{III}\Delta w+\frac{\alpha }{2}w_{\alpha }+\frac{\beta }{2}w_{\beta }=0
\label{WH.III1}
\end{equation}%
where $_{III}\Delta $ is the Laplace operator for metric (\ref{WH.III}).

It is clear that the second order PDE\ (\ref{WH.III1}) admits only the Lie
symmetries ~$X_{2},X_{3}.$ Therefore,\ we do not have Type II hidden
symmetries and the symmetries \ $X_{1},X_{4}~$ are Type I\ hidden symmetries.

\section{Conclusion}

Up to now in the literature the study of Type II\ hidden symmetries has been
done by counter examples or by considering very special PDEs and in low
dimensional flat spaces. In this chapter we have improved this scenario and
have studied the problem of Type II\ hidden symmetries of second order PDEs
from a geometric of view in $n~$ dimensional Riemannian spaces. We have
considered the reduction of the Laplace and of the homogeneous heat equation
and the consequent possibility of existence of Type II hidden symmetries in
some general classes of spaces which admit some kind of symmetry; hence,
they admit nontrivial Lie symmetries .

For the Laplace equation, the conclusion of this study is that the Type II\
hidden symmetries are directly related to the transition of the CKVs from
the space where the original equation is defined to the space where the
reduced equation resides. In this sense, we related the Lie point symmetries
of PDEs with the basic collineations of the metric i.e. the CKVs. \
Concerning the general results of the reduction of Laplace equation we can
summarize them as follows:

\begin{itemize}
\item If we reduce the Laplace equation with a gradient KV the reduced
equation is a Laplace equation in the non-decomposable space. In this case,
the Type II hidden symmetries are generated from the special and the proper
CKVs of the non-decomposable space.

\item If we reduce the Laplace equation with a gradient HV the reduced
equation is a Laplace equation for an appropriate metric. In this case, the
Type II hidden symmetries are generated from the HV and the special/proper
CKVs.

\item If we reduce the Laplace equation with the symmetry generated by a
sp.CKV, the reduced equation is the Klein Gordon equation for an appropriate
metric that inherits the Lie point \ symmetry generated by the gradient HV.
In this case, the Type II hidden symmetries are generated from the proper
CKVs.
\end{itemize}

We also considered the reduction of Laplace equation in some spaces of
interest in which the metric does not admit the symmetries of the previous
cases. In this context, we showed that the reduction with the non-gradient
HV in the Petrov type III does not give any Type II\ hidden symmetries.
Also, it is of interest the reduction of Laplace equation (i.e. the wave
equation) in Minkowski spaces $M^{4}$ \ and $M^{3}$ where we recover the
results of \cite{AGA06} in a straightforward manner without the need of \
computer software. Finally we considered an $n-$dimensional flat FRW like
space and showed that the reduction with the proper CKV does not produce
Type II hidden symmetries.

Moreover, we applied the zero order invariants of the Lie symmetries in
order to reduce the number of independent variables of the homogeneous heat
equation in certain general classes of Riemannian spaces, which admit some
type of basic symmetry. For each reduction, we determined the origin of Type
II\ hidden symmetries. The spaces we considered are the spaces which admit a
gradient KV, a gradient HV and finally spacetimes which admits a HV\ which
acts simply and transitively. For the reduction of the homogeneous heat
equation and the existence of Type II hidden symmetries, we found the
following general geometric results:

\begin{itemize}
\item If we reduce the homogeneous heat equation via the symmetries which
are generated by a gradient KV~$\left( S^{,i}\right) $ the reduced equation
is a heat equation in the nondecomposable space. In this case we have the
Type II hidden symmetry $\partial _{t}-\frac{1}{2t}w\partial _{w}$ provided
we reduce the heat equation with the symmetry~$tS^{,i}-\frac{1}{2}Su\partial
_{u}$.

\item If we reduce the homogeneous heat equation via the symmetries which
are generated by a gradient HV the reduced equation is Laplace equation for
an appropriate metric. In this case the Type II hidden symmetries are
generated from the proper CKVs.

\item In Petrov type III spacetime, the reduction of the homogeneous heat
equation via the symmetry generated from the nongradient HV gives a PDE
which inherits the Lie point \ symmetries, hence no Type II hidden
symmetries are admitted.
\end{itemize}

The above results can be used in many important space-times and help
facilitate the solution of the heat equation in these space-times.

%TCIMACRO{\TeXButton{newpage}{\newpage}}%
%BeginExpansion
\newpage%
%EndExpansion

%TCIMACRO{\TeXButton{TeX field}{\begin{subappendices}}}%
%BeginExpansion
\begin{subappendices}%
%EndExpansion

\section{Proof of Corollary}

\label{apTypeII}

\begin{proof}[Proof of Corollary \protect\ref{Cor1}]
Using Theorem \ref{The Lie of the heat equation with flux} and replacing $q$
we have

For case a)
\begin{eqnarray}
-a_{\tau }w+H\left( b\right) -\frac{1}{2\tau }\left( aw+b\right) +\frac{a}{%
2\tau }w-\left( \psi c_{2}w+\frac{1}{2\tau }wc_{1}\right) _{\tau } &=&0. \\
-a_{\tau }w+H\left( b\right) -\frac{1}{2\tau }b+\frac{1}{2\tau ^{2}}wc_{1}
&=&0 \\
\left[ -a_{\tau }+\frac{c_{1}}{2\tau ^{2}}\right] w+\left[ H\left( b\right) -%
\frac{1}{2\tau }b\right] &=&0
\end{eqnarray}%
that is%
\begin{equation}
a=-\frac{c_{1}}{2\tau }+a_{0}~,~~H\left( b\right) -\frac{1}{2\tau }b=0
\end{equation}

For case b)%
\begin{equation}
0=\left( -\frac{1}{2}T_{,\tau }\psi +\frac{1}{2}T_{,\tau \tau }S-F_{,\tau
}\right) w-\left( 2\psi q\int Td\tau \right) _{\tau }-Tq_{,i}S^{,i}.  \notag
\end{equation}%
then\qquad
\begin{equation}
0=\left( -\frac{1}{2}T_{,\tau }\psi +\frac{1}{2}T_{,\tau \tau }S-F_{,\tau
}\right) w+\frac{\psi }{\tau ^{2}}\int Td\tau ~w-\frac{\psi }{\tau }Tw
\end{equation}%
from here we have%
\begin{equation}
T_{,\tau \tau }=0\rightarrow ~T=T_{0}\tau +T_{1}
\end{equation}%
and%
\begin{equation*}
F=-T_{0}\psi \tau .
\end{equation*}
\end{proof}

\section{The homogeneous heat equation in the Petrov spacetimes}

\label{T2Petrov}

In the following subsections, we study the reduction of the homogeneous heat
equation in the Petrov spacetimes of type N,D and II.

\subsection{Petrov type N}

The metric of the Petrov type N space-time is
\begin{equation}
ds^{2}=dx^{2}+x^{2}dy^{2}+2d\rho dv+\ln x^{2}d\rho ^{2}  \label{WH.07b}
\end{equation}%
and has the homothetic algebra \cite{Steele1991b}%
\begin{eqnarray*}
K^{1} &=&\partial _{\rho }~,~K^{2}=\partial _{v}~,~K^{3}=\partial _{y} \\
H &=&x\partial _{x}+\rho \partial _{\rho }+\left( v-2\rho \right) \partial
_{v}~~\left( \psi _{H}=1\right)
\end{eqnarray*}%
where $K^{1-3}$ are KVs and $H$ is a nongradient HV.

The heat equation (\ref{t2.1}) in this space-time is
\begin{equation}
u_{xx}+\frac{1}{x^{2}}u_{yy}+2u_{\rho v}-2\ln x^{2}~u_{vv}+\frac{1}{x}%
u_{x}-u_{t}=0.  \label{WH.07}
\end{equation}%
Application of Theorem \ref{The Lie of the heat equation} gives that the
extra Lie point symmetries of (\ref{WH.07}) are
\begin{equation*}
X_{1-3}=K_{1-3}~\ ,~X_{4}=2t\partial _{t}+H
\end{equation*}%
with nonzero commutators
\begin{equation*}
\left[ X_{t},X_{4}\right] =2X_{t}
\end{equation*}%
\begin{equation*}
\left[ X_{1},X_{4}\right] =X_{1}-2X_{2}~,~\left[ X_{2},X_{4}\right] =X_{2}.~
\end{equation*}

We use $X_{4}$ to reduce the PDE because this is the Lie symmetry generated
by the HV. The zero order invariants of $X_{4}$ are\qquad\
\begin{equation}
\alpha =\frac{x}{\sqrt{t}}~,~\beta =\frac{\rho }{\sqrt{t}}~,~\gamma =\frac{%
v+\rho \ln \left( t\right) }{\sqrt{t}}~,~\delta =y~,~w=u.  \label{WW.0a}
\end{equation}%
Choosing $\alpha ,\beta ,\gamma ,\delta $ as the independent variables and $%
w=w\left( \alpha ,\beta ,\gamma ,\delta \right) $ as the dependent variable
we find that the reduced PDE is%
\begin{equation}
_{N}\Delta w+\left( \frac{1}{2}\alpha w_{\alpha }+\frac{1}{2}\beta w_{\beta
}+\left( \frac{1}{2}\gamma -\beta \right) w_{\gamma }\right) =0.
\label{WH.08}
\end{equation}%
where $_{N}\Delta $ is the Laplace operator for the metric (\ref{WH.07b}).

Equation (\ref{WH.08}) is of the form (\ref{GPE.30.1}) with
\begin{equation*}
A_{ij}=g_{ij}\left( x^{k}\right) ,B^{i}\left( x^{k}\right) =\Gamma ^{i}+%
\frac{1}{2}\alpha \delta _{\alpha }^{i}+\frac{1}{2}\beta \delta _{\beta
}^{i}+\left( \frac{1}{2}\gamma -\beta \right) \delta _{\gamma }^{i},~f\left(
x^{k},u\right) =0
\end{equation*}%
where $g_{ij}$ is the metric (\ref{WH.07b}). Replacing in equations (\ref%
{GPE.42})-(\ref{GPE.46}) we obtain the Lie symmetry conditions for (\ref%
{WH.08}). \ Because $A_{ij,u}=0$ it follows from equation (\ref{GPE.44})
that the Lie point \ symmetries are generated from the CKVs of the metric (%
\ref{WH.07b}). However taking into consideration the rest of the symmetry
conditions we find that the only Lie symmetry which remains is the one of
the KV\ $X_{3}.$ We conclude that in this reduction we do not have Type II\
hidden symmetries.

\subsection{Petrov type D}

The metric of the Petrov type D space-time is
\begin{equation}
ds^{2}=-dx^{2}+x^{-\frac{2}{3}}dy^{2}-x^{\frac{4}{3}}\left( d\rho
^{2}+dz^{2}\right)  \label{WH.09b}
\end{equation}%
with Homothetic algebra%
\begin{eqnarray*}
K^{1} &=&\partial _{\rho }~,~K^{2}=\partial _{z}~,~K^{3}=\partial
_{y}~,~K^{4}=z\partial _{\rho }-\rho \partial _{z} \\
H &=&x\partial _{x}+\frac{4}{3}y\partial _{y}+\frac{z}{3}\partial _{z}+\frac{%
\rho }{3}\partial _{\rho }~~\left( \psi _{H}=1\right)
\end{eqnarray*}%
where $K^{1-4}$ are KVs and $H$ is a nongradient HV.

In this space-time the heat equation (\ref{t2.1}) takes the form:
\begin{equation}
-u_{xx}+x^{\frac{2}{3}}u_{yy}-x^{-\frac{4}{3}}\left( u_{\rho \rho
}+u_{zz}\right) -\frac{1}{x}u_{x}-u_{t}=0.  \label{WH.09}
\end{equation}%
From\ Theorem\ \ref{The Lie of the heat equation} we have\ that the extra
Lie point \ symmetries are the vectors
\begin{equation*}
X_{1-4}=K_{1-4}~,~~X_{5}=2t\partial _{t}+H.
\end{equation*}%
with nonzero commutators:%
\begin{eqnarray*}
\left[ X_{t},X_{5}\right] &=&2X_{t} \\
\left[ X_{1},X_{5}\right] &=&\frac{1}{3}X_{1}~,~\left[ X_{4},X_{1}\right]
=-X_{2} \\
\left[ X_{2},X_{4}\right] &=&X_{1}~,~\left[ X_{2},X_{5}\right] =\frac{1}{3}%
X_{2} \\
\left[ X_{3},X_{5}\right] &=&\frac{4}{3}X_{3}.
\end{eqnarray*}

We use $X_{5}$ to reduce the PDE because this is the Lie symmetry generated
by the HV. The zero order invariants of $X_{5}$ are%
\begin{equation*}
\alpha =\frac{x}{t^{\frac{1}{2}}}~,~\beta =\frac{y}{t^{\frac{2}{3}}}%
~,~\gamma =\frac{\rho }{t^{\frac{1}{6}}}~,~\delta =\frac{z}{t^{\frac{1}{6}}}%
~,~w=u.
\end{equation*}%
We choose $\alpha ,\beta ,\gamma ,\delta $ as the independent variables and $%
w=w\left( \alpha ,\beta ,\gamma ,\delta \right) $ as the dependent variable
and we find that the reduced PDE is%
\begin{equation}
_{D}\Delta w+\left( \frac{1}{2}aw_{\alpha }+\frac{2}{3}\beta w_{\beta }+%
\frac{1}{6}\gamma w_{\gamma }+\frac{1}{6}\delta w_{\delta }\right) =0
\label{WH.10a}
\end{equation}%
where~$_{D}\Delta $ is the Laplace operator with metric (\ref{WH.09b}).

Working again with the Lie symmetry conditions (\ref{GPE.42})-(\ref{GPE.46})
we find that equation (\ref{WH.10a}) admits as Lie point symmetry the vector
$X_{4}$ only which is an inherited symmetry. Hence we do not have Type II\
hidden symmetries.\ Obviously the Lie point \ symmetries $X_{1-4}$\ are Type
I\ hidden symmetries for equation (\ref{WH.10a})\ for the reduction by $%
X_{5}.$

\subsection{Petrov type II}

The metric of the Petrov type II space-time is
\begin{equation}
ds^{2}=\rho ^{-\frac{1}{2}}\left( d\rho ^{2}+dz^{2}\right) -2\rho dxdy+\rho
\ln \rho ~dy^{2}  \label{WH.II1}
\end{equation}%
with homothetic algebra%
\begin{eqnarray*}
K^{1} &=&\partial _{x}~,~K^{2}=\partial _{y}~,~K^{3}=\partial _{z}~ \\
H &=&\frac{1}{3}\left( x+2y\right) \partial _{x}+\frac{1}{3}y\partial _{y}+%
\frac{4}{3}z\partial _{z}+\frac{4}{3}\rho \partial _{\rho }~\left( \psi
_{H}=1\right)
\end{eqnarray*}%
where $K^{1-4}$ are KVs and $H$ is a nongradient HV.

In this space-time equation (\ref{t2.1}) takes the form:%
\begin{equation}
\rho ^{\frac{1}{2}}\left( u_{\rho \rho }+u_{zz}\right) -\frac{1}{\rho }%
\varepsilon \ln \rho u_{xx}-\frac{2}{\rho }u_{xy}+\rho ^{-\frac{1}{2}%
}u_{\rho }-u_{t}=0.
\end{equation}%
From Theorem \ref{The Lie of the heat equation} we have that the extra Lie
point \ symmetries are the vectors
\begin{equation*}
X_{1-3}=K_{1-3}~,~~X_{4}=2t\partial _{t}+H
\end{equation*}%
with nonzero commutators:%
\begin{equation*}
\left[ X_{t},X_{4}\right] =2X_{t}
\end{equation*}%
\begin{eqnarray*}
\left[ X_{1},X_{4}\right] &=&\frac{1}{3}X_{1}~,~\left[ X_{3},X_{4}\right] =%
\frac{4}{3}X_{3} \\
\left[ X_{2},X_{4}\right] &=&\frac{2}{3}X_{1}+\frac{1}{3}X_{2}.
\end{eqnarray*}%
We use $X_{4}$ to reduce the PDE because this is the Lie symmetry generated
by the HV. The zero order invariants of $X_{4}$ are%
\begin{equation*}
\alpha =\frac{\rho }{t^{\frac{2}{3}}}~,~\beta =\frac{z}{t^{\frac{2}{3}}}%
~,~\gamma =\frac{x-\frac{1}{3}y\ln \left( t\right) }{t^{\frac{1}{6}}}%
~,~\delta =\frac{y}{t^{\frac{1}{6}}}~~,~w=u
\end{equation*}%
We choose $\alpha ,\beta ,\gamma ,\delta $ as the independent variables and $%
w=w\left( \alpha ,\beta ,\gamma ,\delta \right) $ as the dependent variable
and we find that the reduced PDE is%
\begin{equation}
_{II}\Delta w+\frac{2}{3}aw_{\alpha }+\frac{2}{3}\beta w_{\beta }+\left(
\frac{1}{3}\delta -\frac{1}{6}\gamma \right) w_{\gamma }+\frac{1}{6}\delta
w_{\delta }=0  \label{WH.II}
\end{equation}%
where $_{II}\Delta $ is the Laplace operator for metric (\ref{WH.II1}).

From the Lie symmetry conditions (\ref{GPE.42})-(\ref{GPE.46}) it follows
that (\ref{WH.II}) does not admit any Lie point \ symmetries. Hence, we do
not have Type II\ hidden symmetries in this case.

%TCIMACRO{\TeXButton{TeX field}{\end{subappendices}}}%
%BeginExpansion
\end{subappendices}%
%EndExpansion

\part{Noether symmetries and theories of gravity}

\chapter{Noether symmetries in Scalar field Cosmology\label{chapter8}}

\section{Introduction}

The detailed analysis of the cosmological data indicate that the Universe is
spatially flat and has suffered two acceleration phases. An early
acceleration phase (inflation), which occurred prior to the
radiation-dominated era, and a recently initiated accelerated expansion \cite%
{Lima00,Tegmark04,Spergel06,Davis07,Kowalski08,Hicken09,Komatsu09}. The
source for the late time cosmic acceleration has been attributed to an
unidentified type of matter, the dark energy. Dark energy contrary to the
ordinary baryonic matter has a negative pressure, i.e. a negative equation
of state parameter, which counteracts the gravitational force and leads to
the observed accelerated expansion.

The simplest dark energy probe is the cosmological constant leading to the $%
\Lambda $CDM cosmology \cite{Weinberg89,Peebles03,Padmanabhan03}. However,
it has been shown that $\Lambda $CDM cosmology suffers from two major
drawbacks known as the fine tuning problem and the coincidence problem \cite%
{Periv08}. Besides $\Lambda $CDM cosmology, many other candidates have been
proposed in the literature. Most are based either on the existence of new
fields (i.e. a scalar field) or in some modification of the Einstein Hilbert
action (see \cite{AmeBook}).

In addition to dark energy and the ordinary baryonic matter, it is believed
that the Universe contains a third type of matter, the dark matter. This
type of matter is assumed to be pressureless (non-relativistic) and
interacts very weakly with the standard baryonic matter. Therefore, its
presence is mainly inferred from gravitational effects on visible matter.

In the following, we consider scalar field cosmology (minimally coupled
scalar field and non minimally coupled scalar field) and we propose a
geometric principle (`selection rule') for specifying the potential $V(\phi
) $ and the coupling function $F\left( \phi \right) $ of the scalar field in
order to solve analytically the system of the resulting field equations. We
propose that the scalar field model should be selected by the geometric
requirement that the dynamical system of the field equations admits Noether
point symmetries. The main reason for the consideration of this hypothesis
is that the Noether symmetries provide first integrals, which assist the
integrability of the system. Furthermore, as we saw in chapter \ref{chapter3}
the Noether symmetries are generated from the elements of the homothetic
algebra of the kinetic metric of the Lagrangian of the theory. Therefore,
with this assumption we let the theory to select the potential, i.e. the
dark energy model.

The idea to use Noether symmetries in cosmology, either on scalar field
models and on modified theories of gravity is not new and indeed a lot of
attention has been paid in the literature \cite%
{deRiti90,Cap93deR,CAP94,Cap97M,Cap02,Cap07Nes,CapP09,CapHam,KotsakisL,VakF,VakB,Vak07,Sanyal05,Sanyal10,Jamil2011315,Wei2012298,Motavali08,Kucuk13}%
. However our approach is geometric and more fundamental.

The structure of the present chapter is as follows. In sections \ref%
{Conformally equivalent Lagrangians and scalar field Cosmology} and \ref%
{Generalization to dimension n} we discuss the conformal equivalence of
Lagrangians for scalar fields in a Riemannian space of dimension $4$ and $n$
respectively.\ In particular we enunciate a theorem which proves that the
field equations for a non-minimally coupled scalar field are the same at the
conformal level with the field equations of the minimally coupled scalar
field. The necessity to preserve Einstein's equations in the context of
Friedmann--Robertson--Walker (FRW) space-time leads us to apply, in section %
\ref{Conformal Lagrangians in scalar field FRW cosmology}, the current
general analysis to the scalar field (quintessence or phantom) spatially
flat FRWcosmologies.

In section \ref{Noether symmetries and exact} we apply the Noether symmetry
approach in non minimally coupled scalar field in a spatially flat FRW
spacetime and by using the Noether invariants we determine analytical
solutions for the field equations. Furthermore in sections \ref{MCsfcosm}
and \ref{MCsfcosmBianchi} we apply the same procedure for a minimally
coupled scalar field in a spatially flat FRW spacetime and in Bianchi Class
A homogeneous spacetimes.

\section{Conformally equivalent Lagrangians and scalar field Cosmology}

\label{Conformally equivalent Lagrangians and scalar field Cosmology}

In this section we discuss the conformal equivalence of Lagrangians for
scalar fields in a general $V^{4}$ Riemannian space. The field equations in
the scalar field cosmology are derived from two different variational
principles. In the first case the scalar field $\phi $ and the gravitational
field are minimally coupled and the equations of motion follow form the
action

\begin{equation}
S_{M}=\int d\tau dx^{3}\sqrt{-g}\left[ R+\frac{1}{2}g_{ij}\phi ^{;i}\phi
^{;j}-V\left( \phi \right) \right] .  \label{CLN.11}
\end{equation}%
In the second case the scalar field $\psi $ (which is different from the
minimally coupled scalar field $\phi )$ interacts with the gravitational
field (non minimal coupling) and the field equations follow from the action%
\begin{equation}
S_{NM}=\int d\tau dx^{3}\sqrt{-g}\left[ F\left( \psi \right) R+\frac{1}{2}%
\bar{g}_{ij}\psi ^{;i}\psi ^{;j}-\bar{V}\left( \psi \right) \right]
\label{CLN.12}
\end{equation}%
where $F(\psi )$ is the coupling function between the gravitational and the
scalar field $\psi $. Below we state the following proposition.

\begin{proposition}
The field equations for a non minimally coupled scalar field $\psi $ with
Lagrangian $\bar{L}\left( \tau ,x^{k},\dot{x}^{k}\right) $ and coupling
function $F(\psi )$ in the gravitational field $\bar{g}_{ij}$ are the same
with the field equations of the minimally coupled scalar field $\Psi $ for a
conformal Lagrangian $L\left( \tau ,x^{k},\dot{x}^{k}\right) $ in the
conformal metric $g_{ij}$ $=N^{-2}\bar{g}_{ij},$ where the conformal
function $N=\frac{1}{\sqrt{-2F\left( \psi \right) }}$ with $F\left( \psi
\right) <0.$ The inverse is also true, that is, to a minimally coupled
scalar field it can be associated a unique non minimally coupled scalar
field in a conformal metric and with a different potential function.
\end{proposition}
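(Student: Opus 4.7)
The plan is to prove the proposition by direct computation: apply the conformal transformation $\bar{g}_{ij} = N^{2}g_{ij}$ (equivalently $g_{ij}=N^{-2}\bar{g}_{ij}$) to the non-minimally coupled action $S_{NM}$ and manipulate the result into the form of the minimally coupled action $S_{M}$, with the field redefinition $\Psi=\Psi(\psi)$ and potential $V(\Psi)$ being \emph{forced} by matching coefficients. First I would record the well-known transformation formulas in $n=4$, namely $\sqrt{-\bar{g}}=N^{4}\sqrt{-g}$, $\bar{g}^{ij}=N^{-2}g^{ij}$, and
\begin{equation*}
\bar{R}=N^{-2}\!\left[R-6\,\square\ln N-6\,g^{ij}(\ln N)_{,i}(\ln N)_{,j}\right],
\end{equation*}
and substitute into $S_{NM}$. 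This yields an integrand of the schematic form $F(\psi)N^{2}R+(\text{derivative term from }\square\ln N)+\tfrac{1}{2}N^{2}g^{ij}\psi_{,i}\psi_{,j}-N^{4}\bar{V}(\psi)$.

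Next I would dispose of the $\square\ln N$ term by integration by parts, using $N=N(\psi)$ so that $\nabla N=N'(\psi)\nabla\psi$; this converts the second-derivative piece into additional contributions quadratic in $\nabla\psi$ (plus a boundary term, discarded under the usual variational hypotheses). Collecting terms, $S_{NM}$ takes the form
\begin{equation*}
S_{NM}=\int d^{4}x\sqrt{-g}\,\bigl[\alpha(\psi)R+\beta(\psi)\,g^{ij}\psi_{,i}\psi_{,j}-W(\psi)\bigr],
\end{equation*}
with $\alpha(\psi)=F(\psi)N^{2}$, an explicit $\beta(\psi)$ built from $F,F',N,N'$, and $W(\psi)=N^{4}\bar{V}(\psi)$. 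I would then impose $\alpha(\psi)=\mathrm{const}$, which pins down $N$ up to the overall normalization convention of the Einstein--Hilbert term; the choice corresponding to the coefficient appearing in $S_{M}$ gives $N^{2}=-1/(2F(\psi))$, which is well-defined precisely when $F(\psi)<0$. With $N$ so fixed, I would compute $\beta(\psi)$ explicitly and then define $\Psi(\psi)$ by the quadrature
\begin{equation*}
\frac{d\Psi}{d\psi}=\sqrt{2\beta(\psi)},
\end{equation*}
so that $\beta(\psi)g^{ij}\psi_{,i}\psi_{,j}=\tfrac{1}{2}g^{ij}\Psi_{,i}\Psi_{,j}$; finally I would set $V(\Psi):=W(\psi(\Psi))$. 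This yields $S_{NM}=S_{M}$ in the conformal frame, proving the forward direction.

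For the converse, I would reverse the steps: starting with $S_{M}$ in the metric $g_{ij}$, impose $\bar{g}_{ij}=N^{2}g_{ij}$ with an as-yet-undetermined $N(\psi)$ and a field redefinition $\psi=\psi(\Psi)$, and show that the same algebra produces a non-minimally coupled action of the form $S_{NM}$ for a uniquely determined $F(\psi)$ and $\bar{V}(\psi)$, provided the domain restriction $F<0$ is respected. The uniqueness part is essentially automatic once one notes that the ODE for $\Psi(\psi)$ is first order with a specified initial condition and that $V(\Psi)$ is obtained from $\bar{V}(\psi)$ by composition with the (locally invertible) map $\psi\mapsto\Psi$.

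The main technical obstacle is the integration-by-parts step: one must carefully track the $\square\ln N$ contribution and verify that the resulting $\beta(\psi)$ combines with the original $\tfrac{1}{2}N^{2}g^{ij}\psi_{,i}\psi_{,j}$ to give a sign-definite (hence invertible by quadrature) kinetic coefficient, which is what ultimately forces $F(\psi)<0$ rather than $F(\psi)>0$. A secondary subtlety is a careful accounting of boundary terms and of the regime in which the field redefinition $\Psi(\psi)$ is a diffeomorphism, so that the correspondence between the two Lagrangians is genuinely bijective and not merely local.
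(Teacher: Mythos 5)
Your proposal is correct and follows essentially the same route as the paper's proof: conformally transform $S_{NM}$, integrate the $\square$-term by parts, fix $N^{2}=-1/(2F)$ (the paper posits this normalization directly rather than deriving it from $\alpha=\mathrm{const}$, but the content is identical), and absorb the resulting kinetic coefficient into a new field $\Psi$ by quadrature, with the converse obtained by reversing the steps. No substantive differences.
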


\begin{proof}
The action for the non minimally coupled Lagrangian $\bar{L}\left( \tau
,x^{k},x^{\prime k}\right) $ is:%
\begin{equation}
S_{NM}=\int d\tau dx^{3}\sqrt{-\bar{g}}\left[ F\left( \psi \right) \bar{R}+%
\frac{\varepsilon }{2}\bar{g}^{ij}\psi _{;i}\psi _{;j}-\bar{V}\left( \psi
\right) \right]  \label{CL.12.0}
\end{equation}%
where $\varepsilon =1$ for a real field and $\varepsilon =-1$ for phantom
field. Let $~g_{ij}$ be the conformally related metric (this is not a
coordinate transformation!):
\begin{equation*}
g_{ij}=N^{-2}\bar{g}_{ij}.
\end{equation*}%
Then the action (\ref{CL.12.0}) becomes\footnote{%
We use the result tha if $A=(a_{ij})$ is a $4x4$ matrix the
\begin{equation*}
\det A=\varepsilon ^{ijkl}a_{ij}a_{kl}
\end{equation*}%
hence
\begin{equation}
\bar{g}=\varepsilon ^{ijkl}\bar{g}_{ij}\bar{g}_{kl}=N^{4}g.  \label{CLN.12.1}
\end{equation}%
}:%
\begin{equation*}
S_{NM}=\int d\tau dx^{3}N^{4}\sqrt{-g}\left[ F\left( \psi \right) \bar{R}+%
\frac{\varepsilon }{2}N^{-2}g^{ij}\psi _{;i}\psi _{;j}-\bar{V}\left( \psi
\right) \right] .
\end{equation*}%
Replacing \cite{HawkingB}%
\begin{equation*}
\bar{R}=N^{-2}R-2(n-1)N^{-3}\Delta _{2}N
\end{equation*}%
where $\Delta _{2}N=g_{ij}N^{;ij}$ we find:%
\begin{eqnarray*}
S_{NM} &=&\int d\tau dx^{3}N^{4}\sqrt{-g}\left[ F\left( \psi \right) \left(
N^{-2}R-6N^{-3}\Delta _{2}N\right) +\frac{\varepsilon }{2}N^{-2}\Delta
_{1}\psi -\bar{V}\left( \psi \right) \right] \\
&=&\int d\tau dx^{3}N^{4}\sqrt{-g}\left[ F\left( \psi \right)
N^{-2}R-6F\left( \psi \right) N^{-3}\Delta _{2}N+\frac{\varepsilon }{2}%
N^{-2}\Delta _{1}\psi -\bar{V}\left( \psi \right) \right] \\
&=&\int d\tau dx^{3}\sqrt{-g}\left[ F\left( \psi \right) N^{2}R-6F\left(
\psi \right) N\Delta _{2}N+\frac{\varepsilon }{2}N^{2}\Delta _{1}\psi -N^{4}%
\bar{V}\left( \psi \right) \right] .
\end{eqnarray*}%
Define the conformal function in terms of the coupling function $F(\psi )$
by the requirement ($F\left( \psi \right) <0)$:%
\begin{equation}
N=\frac{1}{\sqrt{-2F\left( \psi \right) }}~.  \label{CLN.12.2}
\end{equation}%
We compute
\begin{equation}
N_{;i}=\frac{F_{\psi }\psi _{;i}}{\left( -2F\right) ^{\frac{3}{2}}}.
\label{CLN.12.3}
\end{equation}%
Then the first term in the integral becomes:%
\begin{equation*}
\int d\tau dx^{3}\sqrt{-g}F\left( \psi \right) N^{2}R=\int d\tau dx^{3}\sqrt{%
-g}\left( -\frac{R}{2}\right) .
\end{equation*}%
The second term gives after integration by parts:%
\begin{eqnarray*}
\int d\tau dx^{3}\sqrt{-g}\left( -6F\left( \psi \right) N\Delta _{2}N\right)
&=&\int d\tau dx^{3}\sqrt{-g}\left( -6\frac{F}{\sqrt{-2F}}%
N_{;ij}g^{ij}\right) \\
&=&\int d\tau dx^{3}\sqrt{-g}\left( -6\frac{F}{\sqrt{-2F}}\frac{1}{\sqrt{-g}}%
\left( \sqrt{-g}g^{ij}N_{,k}\right) _{,j}\right) \\
&=&\int d\tau dx^{3}\left( -6\frac{F}{\sqrt{-2F}}\left( \sqrt{-g}%
g^{ij}N_{,k}\right) _{,j}\right) \\
&=&\int d\tau dx^{3}\sqrt{-g}\left( 3\frac{F_{\psi }}{\sqrt{-2F}}\psi
_{;j}N_{;i}g^{ij}\right) .
\end{eqnarray*}%
Replacing $N_{;i}$ from (\ref{CLN.12.3}) we find:%
\begin{equation*}
\int d\tau dx^{3}\sqrt{-g}\left( -6F\left( \psi \right) N\Delta _{2}N\right)
=\int d\tau dx^{3}\sqrt{-g}\left( 3\frac{F_{\psi }^{2}}{\left( -2F\right)
^{2}}\psi _{;i}\psi _{;j}g^{ij}\right) .
\end{equation*}%
The third term gives:%
\begin{equation*}
\frac{\varepsilon }{2}N^{2}\Delta _{1}\psi =\frac{\varepsilon }{4F}\psi
_{;i}\psi _{;j}g^{ij}
\end{equation*}%
Collecting all the results we find:
\begin{eqnarray*}
S_{MN} &=&\int d\tau dx^{3}\sqrt{-g}\left[ -\frac{R}{2}+3\frac{F_{\psi }^{2}%
}{\left( -2F\right) ^{2}}\psi _{;i}\psi _{;j}g^{ij}-\frac{\varepsilon }{4F}%
\psi _{;i}\psi _{;j}g^{ij}-\frac{\bar{V}\left( \psi \right) }{4F^{2}}\right]
\\
&=&\int d\tau dx^{3}\sqrt{-g}\left[ -\frac{R}{2}+3\frac{F_{\psi }^{2}}{4F^{2}%
}\psi _{;i}\psi _{;j}g^{ij}-\frac{\varepsilon }{4F}\psi _{;i}\psi
_{;j}g^{ij}-\frac{\bar{V}\left( \psi \right) }{4F^{2}}\right]
\end{eqnarray*}%
or%
\begin{equation}
S_{MN}=\int d\tau dx^{3}\sqrt{-g}\left[ -\frac{R}{2}+\frac{\varepsilon }{2}%
\left( \frac{3\varepsilon F_{\psi }^{2}-F}{2F^{2}}\right) \psi _{;i}\psi
_{;j}g^{ij}-\frac{\bar{V}\left( \psi \right) }{4F^{2}}\right] .
\label{CLN.12.5}
\end{equation}%
We introduce the scalar field $\Psi $ with the requirement:%
\begin{equation}
d\Psi =\sqrt{\left( \frac{3\varepsilon F_{\psi }^{2}-F}{2F^{2}}\right) }d\psi
\label{CLN.12.6}
\end{equation}%
and the action becomes
\begin{equation}
S_{MN}=\int d\tau dx^{3}\sqrt{-g}\left[ -\frac{R}{2}+\frac{\varepsilon }{2}%
\Psi _{;i}\Psi _{;j}g^{ij}-\frac{\bar{V}\left( \Psi \right) }{4F\left( \Psi
\right) ^{2}}\right] .  \label{CLN.12.7}
\end{equation}
\end{proof}

We conclude that the scalar field $\Psi $ is minimally coupled to the
gravitational field. Therefore we have proved that to every non minimally
coupled scalar field we may associate a unique minimally coupled scalar
field in a conformally related space and an appropriate potential. Since all
considerations are reversible, the result is reversible.

In the following section we extend the above proposition to a general $V^{n}$
Riemannian space.

\section{Generalization to dimension $n$}

\label{Generalization to dimension n}

Consider the non minimally coupled scalar field $\psi $ whose field
equations are obtained from the action:

\begin{eqnarray*}
S_{NM} &=&\int dx^{n}N^{n}\sqrt{-g}\left[ F\left( \psi \right) \bar{R}+\frac{%
\varepsilon }{2}N^{-2}g^{ij}\psi _{,i}\psi _{,j}-\bar{V}\left( \psi \right) %
\right] \\
&=&\int dx^{n}\sqrt{-g}\left[
\begin{array}{c}
F\left( \psi \right) N^{n-2}R-2(n-1)F\left( \psi \right) N^{n-3}\Delta _{2}N+
\\
-F\left( \psi \right) N^{n}(n-1)(n-4)\Delta _{1}N+\frac{\varepsilon }{2}%
N^{n-2}g^{ij}\psi _{,i}\psi _{,j}-N^{n}\bar{V}\left( \psi \right)%
\end{array}%
\right]
\end{eqnarray*}%
where we have substituted \cite{HawkingB}
\begin{equation*}
\bar{R}=N^{-2}R-2(n-1)N^{-3}\Delta _{2}N-(n-1)(n-4)\Delta _{1}N
\end{equation*}%
and%
\begin{eqnarray*}
\Delta _{1}N &=&g_{ij}N^{,i}N^{,j} \\
\Delta _{2}N &=&g_{ij}N^{;ij}.
\end{eqnarray*}%
\qquad

Define the function $N(x^{i})$ in terms of the coupling function $F(\psi )$
by the requirement:%
\begin{equation*}
N^{n-2}=\frac{1}{-2F}.~,~F=-\frac{N^{2-n}}{2}.
\end{equation*}

For each term of the action $S_{NM}$ we have the following:\newline
The first term gives:%
\begin{equation*}
\int dx^{n}\sqrt{-g}\left( F\left( \psi \right) N^{n-2}R\right) =\int dx^{n}%
\sqrt{-g}\left( -\frac{R}{2}\right) .
\end{equation*}%
The second term gives:
\begin{eqnarray*}
\int dx^{n}\sqrt{-g}\left( -2(n-1)F\left( \psi \right)
N^{n-3}N_{;ij}g^{ij}\right) &=&\int dx^{n}\sqrt{-g}\left(
(n-1)N^{2-n}N^{n-3}N_{;ij}g^{ij}\right) \\
&=&\int dx^{n}\sqrt{-g}\left( (n-1)N^{-1}\frac{1}{\sqrt{-g}}\left( \sqrt{-g}%
g^{ij}N_{,k}\right) _{,j}\right) \\
&=&\int dx^{n}\sqrt{-g}\left( (n-1)N^{-1}N_{;ij}g^{ij}\right) \\
&=&\int dx^{n}\sqrt{-g}\left( -(n-1)\left( N^{-1}\right)
_{;j}N_{;i}g^{ij}\right) .
\end{eqnarray*}%
Furthermore we compute%
\begin{equation*}
N=\frac{1}{\left( -2F\right) ^{\frac{1}{n-2}}}\rightarrow N^{-1}=\left(
-2F\right) ^{\frac{1}{n-2}}
\end{equation*}%
\begin{equation*}
N_{;i}N_{;j}^{-1}=-\frac{F_{\psi }^{2}}{\left( n-2\right) ^{2}F^{2}}\psi
_{;i}\psi _{;j}.
\end{equation*}%
Replacing we find for the second term:%
\begin{equation*}
\int dx^{n}\sqrt{-g}\left( -2(n-1)F\left( \psi \right)
N^{n-3}N_{;ij}g^{ij}\right) =\int dx^{n}\sqrt{-g}\left( \frac{(n-1)}{\left(
n-2\right) ^{2}}\frac{F_{\psi }^{2}}{F^{2}}\psi _{;i}\psi _{;j}g^{ij}\right)
.
\end{equation*}%
(note that this is the same with the previous expression for $n=4$).\newline
The third term gives:%
\begin{eqnarray*}
\int dx^{n}\sqrt{-g}\left( F\left( \psi \right) N^{n}(n-1)(n-4)\Delta
_{1}N\right) &=&\int dx^{n}\sqrt{-g}\left( -\frac{N^{2-n}}{2}%
N^{n}(n-1)(n-4)\Delta _{1}N\right) \\
&=&\int dx^{n}\sqrt{-g}\left( -\frac{1}{2}N^{2}(n-1)(n-4)\Delta _{1}N\right)
\\
&=&\int dx^{n}\sqrt{-g}\left( -\frac{1}{2}\frac{(n-1)(n-4)}{\left(
n-2\right) ^{2}}\frac{F_{\psi }^{2}}{\left( -2F\right) ^{\frac{4}{2-n}}F^{2}}%
\psi _{;i}\psi _{;j}g^{ij}\right) .
\end{eqnarray*}%
Finally the the fourth term gives:%
\begin{equation*}
\int dx^{n}\sqrt{-g}\left( \frac{\varepsilon }{2}N^{n-2}g^{ij}\psi _{,i}\psi
_{,j}\right) =\int dx^{n}\sqrt{-g}\left( -\frac{\varepsilon }{4}\frac{1}{F}%
g^{ij}\psi _{;i}\psi _{;j}\right) .
\end{equation*}%
Collecting the results for the last three terms we find%
\begin{eqnarray*}
&&\int dx^{n}\sqrt{-g}\left( \frac{(n-1)}{\left( n-2\right) ^{2}}\frac{%
F_{\psi }^{2}}{F^{2}}\psi _{;i}\psi _{;j}g^{ij}-\frac{1}{2}\frac{(n-1)(n-4)}{%
\left( n-2\right) ^{2}}\frac{F_{\psi }^{2}}{\left( -2F\right) ^{\frac{4}{2-n}%
}F^{2}}\psi _{;i}\psi _{;j}g^{ij}-\varepsilon \frac{1}{4F}g^{ij}\psi
_{;i}\psi _{;j}\right) \\
&=&\int dx^{n}\sqrt{-g}\left( \frac{(n-1)}{\left( n-2\right) ^{2}}\frac{%
F_{\psi }^{2}}{F^{2}}-\frac{1}{2}\frac{(n-1)(n-4)}{\left( n-2\right) ^{2}}%
\frac{F_{\psi }^{2}}{\left( -2F\right) ^{\frac{4}{2-n}}F^{2}}-\varepsilon
\frac{1}{4F}\right) \psi _{;i}\psi _{;j}g^{ij} \\
&=&\int dx^{n}\sqrt{-g}\frac{\varepsilon }{2}\left( \frac{2\varepsilon (n-1)%
}{\left( n-2\right) ^{2}}\frac{F_{\psi }^{2}}{F^{2}}-\varepsilon \frac{%
(n-1)(n-4)}{\left( n-2\right) ^{2}}\frac{F_{\psi }^{2}}{\left( -2F\right) ^{%
\frac{4}{2-n}}F^{2}}-\frac{1}{2F}\right) \psi _{,i}\psi _{,j}g^{ij}.
\end{eqnarray*}%
We define the new scalar field $\Psi $ with the requirement%
\begin{equation*}
d\Psi =\left( \frac{2\varepsilon (n-1)}{\left( n-2\right) ^{2}}\frac{F_{\psi
}^{2}}{F^{2}}-\varepsilon \frac{(n-1)(n-4)}{\left( n-2\right) ^{2}}\frac{%
F_{\psi }^{2}}{\left( -2F\right) ^{\frac{4}{2-n}}F^{2}}-\frac{1}{2F}\right)
^{\frac{1}{2}}d\psi .
\end{equation*}%
In terms of $\Psi \ $the action becomes%
\begin{eqnarray*}
&&\int dx^{n}\sqrt{-g}\frac{\varepsilon }{2}\left( \frac{2\varepsilon (n-1)}{%
\left( n-2\right) ^{2}}\frac{F_{\psi }^{2}}{F^{2}}-\varepsilon \frac{%
(n-1)(n-4)}{\left( n-2\right) ^{2}}\frac{F_{\psi }^{2}}{\left( -2F\right) ^{%
\frac{4}{2-n}}F^{2}}-\frac{1}{2F}\right) \psi _{,i}\psi _{,j}g^{ij}. \\
&=&\int dx^{n}\sqrt{-g}\left( \frac{\varepsilon }{2}\Psi _{;i}\Psi
_{;j}g^{ij}\right) .
\end{eqnarray*}

Collecting the above we find the action $S_{M}$ of a minimally coupled
scalar field%
\begin{equation*}
S_{M}=\int dx^{n}\sqrt{-g}\left( -\frac{R}{2}+\frac{\varepsilon }{2}\Psi
_{;i}\Psi _{;j}g^{ij}-\frac{\bar{V}\left( \Psi \right) }{\left( -2F\right) ^{%
\frac{n}{n-2}}}\right) .
\end{equation*}%
We note that the new scalar field $\Psi $ is minimally coupled to the
gravitational field $g_{ij}$ and that the potential of $\Psi $ is $\frac{%
\bar{V}\left( \Psi \right) }{\left( -2F\right) ^{\frac{n}{n-2}}}$.

The above proof agrees with the one given in the paper of \cite{Kaiser}.
However it is obviously simpler, more direct and clear.

\section{Conformal Lagrangians in scalar field cosmology}

\label{Conformal Lagrangians in scalar field FRW cosmology}

In this section we apply the conformal transformation in the Lagrangian of
the field equations in a FRW spatially flat spacetime.

We consider the flat FRW~$\left( K=0\right) $ spacetime whose metric is%
\begin{equation}
ds^{2}=-dt^{2}+a^{2}\left( t\right) \delta _{ij}dx^{i}dx^{j}  \label{CLN.13}
\end{equation}%
where $\delta _{ij}$ is the 3-space metric in Cartesian coordinates. The
Lagrangian of a scalar field $\phi $ minimally coupled to gravity in these
coordinates is

\begin{equation}
L_{M}\left( a,\dot{a},\phi ,\dot{\phi}\right) =-3a\dot{a}^{2}+\frac{%
\varepsilon }{2}a^{2}\dot{\phi}^{2}-a^{3}V\left( \phi \right) .
\label{CLN.14}
\end{equation}%
The Lagrangian for a non minimally coupled scalar field $\psi $ is
\begin{equation}
L_{NM}\left( a,\dot{a},\psi ,\dot{\psi}\right) =6F\left( \psi \right) a\dot{a%
}^{2}+6F_{\psi }\left( \psi \right) a^{2}\dot{a}\dot{\psi}+\frac{\varepsilon
}{2}a^{3}\dot{\psi}^{2}-a^{3}V\left( \psi \right)  \label{CLN.15}
\end{equation}%
where $\ F(\psi )<0$ is the coupling function and $\varepsilon =\pm 1$ where
$\varepsilon =1$ for real scalar field and $\varepsilon =-1$ for phantom
field. \ The Hamiltonian for the Lagrangian (\ref{CLN.15}) is
\begin{equation}
E=6F\left( \psi \right) a\dot{a}^{2}+6F_{\psi }\left( \psi \right) a^{2}\dot{%
a}\dot{\psi}+\frac{\varepsilon }{2}a^{3}\dot{\psi}^{2}+a^{3}V\left( \psi
\right) .  \label{CLN.15e}
\end{equation}

We construct a conformal Lagrangian which corresponds to a minimally coupled
scalar field.

To do that we consider first a change in the scale factor from $a(t)$ $%
\rightarrow A(t)$ defined by the formula (see \cite{Cap01,Cap02})%
\begin{equation}
A\left( t\right) =\sqrt{-2F}a\left( t\right) \text{ }  \label{CLN.15a1}
\end{equation}%
Then the Lagrangian (\ref{CLN.15}) takes the form:%
\begin{equation}
L_{NM}\left( A,\dot{A},\psi ,\dot{\psi}\right) =\frac{1}{\sqrt{-2F}}\left[
-3A\dot{A}^{2}+\frac{\varepsilon }{2}\left( \frac{3\varepsilon F_{\psi
}^{2}-F}{2F^{2}}\right) A^{3}~\dot{\psi}^{2}\right] -\frac{A^{3}}{\left(
-2F\right) ^{\frac{3}{2}}}V\left( \psi \right)  \label{CLN.16}
\end{equation}%
that is, the cross term $\dot{a}\dot{\psi}$ disappears.

Next we consider the coordinate transformation:%
\begin{equation}
d\Psi =\sqrt{\left( \frac{3\varepsilon F_{\psi }^{2}-F}{2F^{2}}\right) }%
~d\psi ~  \label{CLN.16aa}
\end{equation}%
and Lagrangian (\ref{CLN.16}) becomes:%
\begin{equation}
L_{NM}\left( A,A^{\prime },\Psi ,\Psi ^{\prime }\right) =\frac{1}{\sqrt{-2F}}%
\left[ -3A\dot{A}^{2}+\frac{\varepsilon }{2}A^{3}~\dot{\Psi}^{2}\right] ~,~%
\bar{V}\left( \psi \right) =\frac{V\left( \Psi \right) }{\left( -2F\right) ^{%
\frac{3}{2}}}  \label{CLN.18}
\end{equation}

The form of the Lagrangian (\ref{CLN.18}) is (\ref{CLN.14}) hence the
previous result applies and under the conformal transformation
\begin{equation}
d\tau =\sqrt{-2F\left( \psi \right) }dt  \label{CLN.17a}
\end{equation}
the Lagrangian (\ref{CLN.18}) becomes:
\begin{equation}
L_{M}\left( A,\dot{A},\Psi ,\Psi ^{\prime }\right) =-3AA^{\prime 2}+\frac{%
\varepsilon }{2}A^{3}\Psi ^{\prime 2}-\frac{A^{3}}{\left( -2F\right) ^{\frac{%
3}{2}}}V\left( \Psi \right)  \label{CLN.17}
\end{equation}%
where a prime $^{\prime }$ indicates derivative wrt the new coordinate $\tau
.$

We note that if in the new coordinates $\tau ,x^{i}$ we consider the metric%
\begin{equation}
d\bar{s}^{2}=-d\tau ^{2}+A^{2}\left( \tau \right) \delta _{ij}dx^{i}dx^{j}
\label{CLN.19}
\end{equation}%
then the term $3AA^{\prime 2}$ equals the Ricci scalar $\bar{R}$ of the
conformally flat metric $d\bar{s}^{2}.$ Therefore the Lagrangian (\ref%
{CLN.17}) can be seen as the Lagrangian of a scalar field $\Psi $ of
potential $\bar{V}\left( \Psi \right) $ minimally coupled to the
gravitational filed $\bar{g}_{ij}$ in the space with metric $d\bar{s}^{2}.$
Replacing the coordinate $\tau $ and the quantity $A\left( \tau \right) $
from (\ref{CLN.15a1}), (\ref{CLN.17a}), we find:%
\begin{equation}
d\bar{s}^{2}=\sqrt{-2F}(-dt^{2}+a^{2}\left( t\right) \delta
_{ij}dx^{i}dx^{j})=\sqrt{-2F}ds^{2}  \label{CLN.20}
\end{equation}%
that is, the metric $d\bar{s}^{2}$ is conformally related to the metric $%
ds^{2}$ with conformal function $\sqrt{-2F}.$ This means that the
non-minimally coupled scalar field in the gravitational field $ds^{2}$ is
equivalent to a minimally coupled scalar field - with appropriate potential
defined in terms of the conformal function - in the gravitational field $d%
\bar{s},$ the resutl is reversible. Equivalently the Lagrangians $L_{M},$ $%
L_{NM}$ are conformally related and the field equations (the Euler-Lagrange
equations) are invariant under the conformal transformation if the
Hamiltonian constrains $H_{M},~H_{NM}$ vanish (see Lemma \ref{LemaHam}).

In the following sections we apply the Noether symmetry approach as a
geometric selection rule in order to determine the dark energy models; that
is, we search for dark energy models by requiring the field equations to
admit Noether point symmetries.

\section{Noether point symmetries of a non minimally coupled Scalar field.}

\label{Noether symmetries and exact}

Consider a non-minimally coupled scalar field with action%
\begin{equation*}
S_{NM}=\int d\tau dx^{3}\sqrt{-g}\left[ F\left( \psi \right) R+\frac{%
\varepsilon }{2}\bar{g}_{ij}\psi ^{;i}\psi ^{;j}-\bar{V}\left( \psi \right) %
\right] +\int L_{m}d\tau dx^{3}
\end{equation*}%
in a flat FRW space-time, whose metric in Cartesian coordinates is%
\begin{equation*}
ds^{2}=-dt^{2}+a^{2}\left( t\right) \delta _{ij}dx^{i}dx^{j}
\end{equation*}%
and $L_{m}$ is the Lagrangian of dust matter of density $\rho _{D}$ (for
commoving observers).

The Lagrangian of the field equations is (\ref{CLN.15}) are the Hamiltonian
(total energy density) (\ref{CLN.15e}) and the Euler-Lagrange equations%
\begin{equation*}
\frac{d}{dt}\frac{\partial L_{NM}}{\partial \left( \dot{a},\dot{\psi}\right)
}-\frac{\partial L}{\partial \left( a,\psi \right) }=0.
\end{equation*}

If the Hamiltonian (\ref{CLN.15e})~is $E\neq 0$ then the space admits dust
which, however does not interact with the scalar field and has energy
density $\rho _{D}=\frac{\left\vert E\right\vert }{a^{3}}.$If $E\,$%
\thinspace $=0$ the space does not admit dust. In the following we determine
all potentials $V\left( \psi \right) $ for which this dynamical system
admits Noether point symmetries beyond the trivial one $\partial _{t}$.
Subsequently we use the resulting Noether integrals to find analytical
solutions for the field equations for each of these potentials.

In order to determine the Noether point symmetries of the Lagrangian (\ref%
{CLN.15}) we shall follow the results of chapter \ref{chapter3}. That is we
brake the Lagrangian in the kinematic part which defines the kinematic
metric and the remaining part which we consider to be the potential. Then we
apply theorem \ref{The Noether symmetries of a conservative system} which
states that the Noether point symmetries of the Lagrangian follow form the
homothetic algebra of the kinematic metric. The kinematic metric admits a
non-trivial homothetic (not necessarily proper homothetic) algebra if a
given condition is satisfied which involves the symmetry vector and the
potential. The solution of this relation provides all the potentials for
which extra Noether symmetries are admitted. We use the Noether integrals of
these extra Noether symmetries to find an analytic solution for each of the
corresponding potentials..

From the Lagrangian (\ref{CLN.15}) we define the kinematic metric:%
\begin{equation}
ds_{KM}^{2}=12F\left( \psi \right) a\dot{a}^{2}+12F_{\psi }\left( \psi
\right) a^{2}\dot{a}\dot{\psi}+\varepsilon a^{3}\dot{\psi}^{2}.
\label{CLN.21}
\end{equation}

This is a two dimensional metric in the space $(a,\psi ).$ Because the
homothetic algebra of a 2 dimensional metric is different for a flat and a
non-flat (but conformally flat because all two dimensional spaces are
conformally flat) we consider the case the metric (\ref{CLN.21}) is
maximally symmetric\footnote{%
The Ricci scalar is $R=K~$where $K$ is a constant}.

The Ricci scalar of the metric (\ref{CLN.21}) is computed to be:%
\begin{equation}
R_{\left( KM\right) }=\frac{\varepsilon }{4a^{3}}\frac{\left( 2F_{\psi \psi
}F-F_{\psi }^{2}\right) }{\left( \varepsilon F-3F_{\psi }^{2}\right) ^{2}}
\label{CLN.22}
\end{equation}%
Hence if the metric (\ref{CLN.21}) is maximally symmetric then it follows
that $R_{\left( KM\right) }=0$, that is, the kinetic metric (\ref{CLN.21})
must be flat.

\subsection{Case A. $R_{KM}=0$}

Condition $R_{KM}=0$ and (\ref{CLN.22}) give:%
\begin{equation}
2F_{\psi \psi }F-F_{\psi }^{2}=0  \label{CLN.23}
\end{equation}%
provided
\begin{equation}
\varepsilon F-3F_{\psi }^{2}\neq 0.  \label{CLN.24}
\end{equation}%
The solution of (\ref{CLN.23}) is%
\begin{equation}
F\left( \psi \right) =-\frac{F_{0}\varepsilon }{12}\left( \psi +\psi
_{0}\right) ^{2}  \label{CLN.25}
\end{equation}%
where $\varepsilon =\pm 1$ and $F_{0}\varepsilon >0$. We note that this $%
F\left( \psi \right) $ satisfies condition (\ref{CLN.24}) therefore it is
acceptable.

In order to determine the homothetic algebra of the kinematic metric (\ref%
{CLN.21}) we write it in a more familiar form. We introduce the coordinates $%
A,\Psi $ by the relations%
\begin{eqnarray}
A &=&\sqrt{-2F}a  \label{CLN.15a} \\
d\Psi &=&\sqrt{\left( \frac{3\varepsilon F_{\psi }^{2}-F}{2F^{2}}\right) }%
~d\psi .  \label{CLN.16a}
\end{eqnarray}%
In the new coordinates $A,\Psi $ the metric (\ref{CLN.21}), the Lagrangian (%
\ref{CLN.15}) and the non minimal coupling function $F\left( \Psi \right) $
take the following form

For \thinspace $\varepsilon =1$ and $F_{0}>0~$
\begin{eqnarray}
ds_{KM}^{2} &=&\frac{1}{\sqrt{-2F\left( \Psi \right) }}\left[ -3A\dot{A}^{2}+%
\frac{1}{2}A^{3}\dot{\Psi}^{2}\right]  \label{CLN.26.1} \\
L_{M} &=&\frac{1}{\sqrt{-2F\left( \Psi \right) }}\left[ -3A\dot{A}^{2}+\frac{%
1}{2}A^{3}\dot{\Psi}^{2}\right] -\frac{A^{3}}{\left( -2F\left( \Psi \right)
\right) ^{\frac{3}{2}}}V\left( \Psi \right) ~.  \label{CLN.26.2} \\
F\left( \Psi \right) &=&-\frac{F_{0}}{12}\exp \left( \pm \frac{\sqrt{6}}{3}%
\sqrt{\frac{F_{0}}{F_{0}+1}}\Psi \right) ~,~F_{0}>0  \label{CLN.26.3}
\end{eqnarray}

For $\varepsilon =-1~\,$\ and $-1<F_{0}<0~$%
\begin{eqnarray}
ds_{KM}^{2} &=&\frac{1}{\sqrt{-2F\left( \Psi ^{\ast }\right) }}\left[ -3A%
\dot{A}^{2}-\frac{1}{2}A^{3}\dot{\Psi}^{2}\right]  \label{CLN.26.4} \\
L_{NM} &=&\frac{1}{\sqrt{-2F\left( \Psi ^{\ast }\right) }}\left[ -3A\dot{A}%
^{2}-\frac{1}{2}A^{3}\dot{\Psi}^{2}\right] -\frac{A^{3}}{\left( -2F\left(
\Psi \right) \right) ^{\frac{3}{2}}}V\left( \Psi \right)  \label{CLN.26.5} \\
F\left( \Psi \right) &=&-\frac{\left\vert F_{0}\right\vert }{12}\exp \left(
\pm \frac{\sqrt{6}}{3}\sqrt{\frac{\left\vert F_{0}\right\vert }{1-\left\vert
F_{0}\right\vert }}\Psi \right) ~,~-1<F_{0}<0  \label{CLN.26.6}
\end{eqnarray}%
For $\varepsilon =-1~$and $F_{0}<-1$ we introduce a new real field $\Psi
^{\ast }=i\Psi $ and get the real field $F\left( \Psi ^{\ast }\right) $

\begin{eqnarray}
ds_{KM}^{2} &=&\frac{1}{\sqrt{-2F\left( \Psi ^{\ast }\right) }}\left[ -3A%
\dot{A}^{2}+\frac{1}{2}A^{3}\dot{\Psi}^{\ast 2}\right]  \label{CLN.26.7} \\
L_{NM} &=&\frac{1}{\sqrt{-2F\left( \Psi ^{\ast }\right) }}\left[ -3A\dot{A}%
^{2}+\frac{1}{2}A^{3}\dot{\Psi}^{\ast 2}\right] -\frac{A^{3}}{\left(
-2F\left( \Psi ^{\ast }\right) \right) ^{\frac{3}{2}}}V\left( \Psi ^{\ast
}\right)  \label{CLN.26.8} \\
F\left( \Psi ^{\ast }\right) &=&-\frac{\left\vert F_{0}\right\vert }{12}\exp
\left( \mp \frac{\sqrt{6}}{3}\sqrt{\frac{\left\vert F_{0}\right\vert }{%
\left\vert F_{0}\right\vert -1}}~\Psi ^{\ast }\right) ~,~F_{0}<-1.
\label{CLN.26.9}
\end{eqnarray}

We see that in the region $-\infty <F_{0}<-1$ starting from a phantom field $%
\Psi $ we end up with a real scalar field $\Psi ^{\ast }.$

In order to consider the three Lagrangians (\ref{CLN.26.2}),(\ref{CLN.26.5}%
),(\ref{CLN.26.8}) at the same time we consider the Lagrangian%
\begin{equation}
L=N^{2}\left( \Psi \right) \left[ -3A\dot{A}^{2}+\frac{\varepsilon _{k}}{2}%
A^{3}\dot{\Psi}^{2}\right] -A^{3}\bar{V}\left( \Psi \right)  \label{CLN.27}
\end{equation}%
where $N^{2}\left( \Psi \right) =\frac{1}{\sqrt{-2F\left( \Psi \right) }}~$%
and $\bar{V}\left( \Psi \right) =\frac{V\left( \Psi \right) }{\left(
-2F\left( \Psi \right) \right) ^{\frac{3}{2}}}$ where $F\left( \Psi \right)
<0$. The constant $\varepsilon _{k}=\pm 1$ where the value $+1$ is for $%
\varepsilon =1,~F_{0}>0~$\ and $\varepsilon =-1~,~F_{0}<-1$ and the value $%
-1 $ is for $\varepsilon =-1~,~-1<F_{0}<0.$ \ Then the new kinematic metric
is written as
\begin{equation}
ds_{KM}^{2}=N^{2}\left( \Psi \right) \left[ -3A\dot{A}^{2}+\frac{\varepsilon
_{k}}{2}A^{3}\dot{\Psi}^{2}\right] .  \label{CLN.28}
\end{equation}

We simplify this metric by introducing new coordinates $r,\theta $ defined
by the transformation:%
\begin{equation}
r=\sqrt{\frac{8}{3}}A^{\frac{3}{2}}~,~\theta =\sqrt{\frac{3\varepsilon _{k}}{%
8}}~\Psi  \label{CLN.29}
\end{equation}%
This step is necessary in order to deduce the homothetic algebra of the
metric from well known previous results. In the new coordinates the metric (%
\ref{CLN.28}) takes the simple form:%
\begin{equation}
ds_{KM}^{2}=N^{2}\left( \theta \right) \left( -dr^{2}+r^{2}d\theta
^{2}\right) .  \label{CLN.30}
\end{equation}%
that is, it is directly related to the flat 2d Lorentzian space with metric
\begin{equation*}
ds^{2}=-dr^{2}+r^{2}d\theta ^{2}
\end{equation*}%
with conformal factor $N^{2}\left( \theta \right) .$ In the new coordinates
the curvature scalar is%
\begin{equation}
R_{KM}=-\frac{2}{r^{4}N^{3}\left( \theta \right) }\left( N_{,\theta \theta }-%
\frac{1}{N\left( \theta \right) }\left( N_{,\theta }\right) ^{2}\right)
\label{CLN.31}
\end{equation}%
hence the condition $R_{KM}=0$ gives the function $N\left( \theta \right) $:%
\begin{equation}
N\left( \theta \right) =N_{0}e^{k\theta }~,~k\in
%TCIMACRO{\U{2102} }%
%BeginExpansion
\mathbb{C}
%EndExpansion
~,N_{0}\in
%TCIMACRO{\U{211d} }%
%BeginExpansion
\mathbb{R}
%EndExpansion
\label{CLN.32}
\end{equation}%
where $k$ is a new constant.

Taking this into account we have that in the $r,\theta $ coordinates the
Lagrangian (\ref{CLN.27})\ becomes:%
\begin{equation}
L=N_{0}^{2}e^{2k\theta }\left( -\frac{1}{2}\dot{r}^{2}+\frac{1}{2}r^{2}\dot{%
\theta}^{2}\right) -r^{2}\bar{V}\left( \theta \right) .  \label{CLN.33}
\end{equation}%
The constant $k$ is related to the previous constant $F_{0}$ via the
function $N(\theta )$. Using (\ref{CLN.29}) we express $N\left( \theta
\right) $ in terms of $\Psi $:%
\begin{equation}
N\left( \Psi \right) =N_{0}e^{k\sqrt{\frac{3\varepsilon _{k}}{8}}\Psi }.
\label{CLN.34}
\end{equation}%
Comparing with $F\left( \Psi \right) $ and eliminating $N\left( \Psi \right)
$ we find:
\begin{equation}
F\left( \Psi \right) =-\frac{1}{2N_{0}^{4}}e^{-4k~\sqrt{\frac{3\varepsilon }{%
8}}~\Psi }.  \label{CLN.36}
\end{equation}%
This is a second expression of $F\left( \Psi \right) $ in terms of the
constant $k.$ Comparing with the previous expression (\ref{CLN.26.3}),(\ref%
{CLN.26.6}) and (\ref{CLN.26.9}), which expresses $F\left( \Psi \right) $ in
terms of $F_{0}$ (and holds for all ranges of values of $F_{0}!),$ we find:%
\begin{equation}
-\frac{F_{0}\varepsilon }{12}\exp \left( \pm \frac{\sqrt{6}}{3}\sqrt{\frac{%
\varepsilon F_{0}}{F_{0}+1}}\Psi \right) =-\frac{1}{2N_{0}^{4}}e^{-4k~\sqrt{%
\frac{3\varepsilon _{k}}{8}}~\Psi }.  \label{CLN.37}
\end{equation}%
This relation must hold identically which leads to the conditions%
\begin{equation}
~N_{0}=\left( \frac{6}{\varepsilon F_{0}}\right) ^{1/4}  \label{CLN.38}
\end{equation}%
and%
\begin{equation}
~\left\vert k\right\vert =\frac{1}{3}\sqrt{\frac{\varepsilon F_{0}}{F_{0}+1}}%
.  \label{CLN.39}
\end{equation}%
In Appendix \ref{appendix1} \ we give the relation between the various
ranges of the constant $F_{0}$\ and the corresponding ranges of the constant
$\left\vert k\right\vert .$

\subsubsection{Case $\left\vert k\right\vert \neq 1$. \newline
}

For $\left\vert k\right\vert \neq 1$ the homothetic algebra consists of the
gradient KVs vectors

\begin{eqnarray}
K^{1} &=&\frac{e^{\left( 1-k\right) \theta }r^{k}}{N_{0}^{2}}\left(
-\partial _{r}+\frac{1}{r}\partial _{\theta }\right)  \label{KV.1} \\
K^{2} &=&\frac{e^{-\left( 1+k\right) \theta }r^{-k}}{N_{0}^{2}}\left(
\partial _{r}+\frac{1}{r}\partial _{\theta }\right)  \label{KV.2}
\end{eqnarray}%
\qquad the non gradient KV
\begin{equation}
K^{3}=r\partial _{r}-\frac{1}{k}\partial _{\theta }  \label{KV.3}
\end{equation}%
and the gradient HV%
\begin{equation}
H^{i}=\frac{1}{N_{0}^{2}\left( k^{2}-1\right) }\left( -r\partial
_{r}+k\partial _{\theta }\right) ~,~H\left( r,\theta \right) =\frac{1}{2}%
\frac{r^{2}e^{2k\theta }}{k^{2}-1}.  \label{KV.4}
\end{equation}%
Using the results of chapter \ref{chapter3} we find the following results

\begin{enumerate}
\item The gradient KV\ $K^{1}$ produces Noether symmetries for the following
potentials

a) For $V\left( \theta \right) =V_{0}e^{2\theta }$ we have the Noether
symmetries $K^{1},~tK^{1}$ with Noether integrals%
\begin{equation}
I_{1}=\frac{d}{dt}\left( \frac{r^{1+k}e^{\left( 1+k\right) \theta }}{\left(
k+1\right) }\right) ~,~I_{2}=t\frac{d}{dt}\left( \frac{r^{1+k}e^{\left(
1+k\right) \theta }}{\left( k+1\right) }\right) -\left( \frac{%
r^{1+k}e^{\left( 1+k\right) \theta }}{\left( k+1\right) }\right)
\label{CLN.75a}
\end{equation}

b) For $V\left( \theta \right) =V_{0}e^{2\theta }-\frac{mN_{0}^{2}}{2\left(
k^{2}-1\right) }e^{2k\theta }$ we have the Noether symmetries $e^{\pm \sqrt{m%
}t}K^{1}$ $m=$constant, with Noether integrals%
\begin{equation}
I_{\pm }^{\prime }=e^{\pm \sqrt{m}t}\left[ \frac{d}{dt}\left( \frac{%
r^{1+k}e^{\left( 1+k\right) \theta }}{\left( k+1\right) }\right) \mp \sqrt{m}%
\left( \frac{r^{1+k}e^{\left( 1+k\right) \theta }}{\left( k+1\right) }%
\right) \right]  \label{CLN.99}
\end{equation}%
From the Noether integrals we construct the time independent first integral $%
I_{K^{1}}=I_{+}I_{-}.$

\item The gradient KV $K^{2}$ produces the following Noether symmetries for
the following potentials

a) For $V\left( \theta \right) =V_{0}e^{-2\theta }$,we have the Noether
symmetries $K^{1},~tK^{1}$ with Noether integrals%
\begin{equation}
J_{1}=\frac{d}{dt}\left( \frac{r^{1-k}e^{-\left( 1-k\right) \theta }}{k-1}%
\right) ~,~J_{2}=t\frac{d}{dt}\left( \frac{r^{1-k}e^{-\left( 1-k\right)
\theta }}{k-1}\right) -\frac{r^{1-k}e^{-\left( 1-k\right) \theta }}{k-1}
\label{CLN.100}
\end{equation}

b) For $V\left( \theta \right) =V_{0}e^{-2\theta }-\frac{mN_{0}^{2}}{2\left(
k^{2}-1\right) }e^{2k\theta }$, we have the Noether symmetries $e^{\pm \sqrt{%
m}t}K^{2}$ $m=$constant, with Noether integrals%
\begin{equation}
J_{\pm }^{^{\prime }}=e^{\pm \sqrt{m}t}\left[ \frac{d}{dt}\left( \frac{%
r^{1-k}e^{-\left( 1-k\right) \theta }}{k-1}\right) \mp \sqrt{m}\frac{%
r^{1-k}e^{-\left( 1-k\right) \theta }}{k-1}\right]  \label{CLN.100.a}
\end{equation}%
From the Noether integrals we construct the time independent first integral $%
J_{K^{2}}=J_{+}^{\prime }J_{-}^{\prime }.$

\item The non gradient KV $K^{3}$ produces a Noether symmetry for the
potential $V\left( \theta \right) =V_{0}e^{2k\theta }$ with Noether integral
\begin{equation}
I_{3}=\frac{re^{2k\theta }}{k}\left( k\dot{r}+r\dot{\theta}\right) .
\label{CLN.100b}
\end{equation}

\item The gradient HV produces the following Noether symmetries for the
following potentials

a) For $V\left( \theta \right) =V_{0}e^{-2\frac{\left( k^{2}-2\right) }{k}%
\theta }$ , $k^{2}-2\neq 0$ we have the Noether symmetries $2t\partial
_{t}+H^{i}~,~t^{2}\partial _{t}+tH^{i}$ with Noether integrals%
\begin{equation}
I_{H_{1}}=2tE-\frac{d}{dt}\left( \frac{1}{2}\frac{r^{2}e^{2k\theta }}{k^{2}-1%
}\right) ~,~I_{H_{2}}=t^{2}E-t\frac{d}{dt}\left( \frac{1}{2}\frac{%
r^{2}e^{2k\theta }}{k^{2}-1}\right) +\frac{1}{2}\frac{r^{2}e^{2k\theta }}{%
k^{2}-1}.  \label{CLN.100c}
\end{equation}%
We note that in this case the system is the Ermakov-Pinney dynamical system
(because it admits the Noether symmetry algebra the $sl(2,R),\ $hence the
Lie symmetry algebra is at least $sl(2,R))$ .

b) For $V\left( \theta \right) =V_{0}e^{-2\frac{\left( k^{2}-2\right) }{k}%
\theta }-\frac{N_{0}^{2}m}{k^{2}-1}e^{2k\theta }~$ , $k^{2}-2\neq 0$ we have
the Noether symmetries $\frac{2}{\sqrt{m}}e^{\pm \sqrt{m}t}\partial _{t}\pm
e^{\pm \sqrt{m}t}H^{i}$ , $m=$constant with Noether integrals%
\begin{equation}
I_{+,-}=e^{\pm 2\sqrt{m}t}\left( \frac{1}{\sqrt{m}}E\mp \frac{d}{dt}\left(
\frac{1}{2}\frac{r^{2}e^{2k\theta }}{k^{2}-1}\right) +2\sqrt{m}\left( \frac{1%
}{2}\frac{r^{2}e^{2k\theta }}{k^{2}-1}\right) \right)  \label{CLN.105}
\end{equation}%
For this potential the Noether symmetries form the $sl\left( 2,R\right) $
Lie algebra, i.e the dynamical system is the two dimensional Kepler-Ermakov
system Therefore it admits the Ermakov - Pinney invariant which we may
construct with the use of the Noether symmetries or with the use of the
corresponding Killing Tensor (see Proposition \ref{prop ghv})

\item The case $V\left( \theta \right) =0$ corresponds the free particle
(see chapter \ref{chapter1}).
\end{enumerate}

\subsubsection{Case $\left\vert k\right\vert =1$}

We have to consider two cases i.e. $k=1$ and $k=-1.$

\subparagraph{Case $k=1$}

The KVs of the kinematic metric are the $K_{k=1}^{1,2}$ of (\ref{KV.1},\ref%
{KV.2}) and the vector
\begin{equation}
K_{k=1}^{3}=-r\left( \ln \left( re^{-\theta }\right) -1\right) \partial
_{r}+\ln \left( re^{-\theta }\right) \partial _{\theta }.  \label{CLN.68}
\end{equation}%
The vectors $K_{k=1}^{1,2}$ are gradient and $K^{3}$ is non-gradient. The
HV\ is gradient and it is given by%
\begin{equation}
H^{i}=\frac{1}{4}r\left( 2\ln \left( re^{-\theta }\right) +3\right) \partial
_{r}-\frac{1}{2}\left( \ln \left( re^{-\theta }\right) +\frac{1}{2}\right)
\partial _{\theta }  \label{CLN.69}
\end{equation}

\subparagraph{Case $k=-1$}

The KVs of the kinematic metric are $K_{k=-1}^{1,2}$ of (\ref{KV.1},\ref%
{KV.2}) and the vector
\begin{equation}
\bar{K}^{3}=r\left( \ln \left( re^{\theta }\right) -1\right) \partial
_{r}+\ln \left( re^{\theta }\right) \partial _{\theta }.  \label{CLN.70}
\end{equation}%
The vectors $\bar{K}^{1,2}$ are gradient and $\bar{K}^{3}$ is non-gradient.
The gradient HV is given by%
\begin{equation}
\bar{H}^{i}=\frac{1}{4}r\left( 2\ln \left( re^{\theta }\right) +3\right)
\partial _{r}+\frac{1}{2}\left( \ln \left( re^{\theta }\right) +\frac{1}{2}%
\right) \partial _{\theta }  \label{CLN.71}
\end{equation}

In the following we consider only the case $k=1$. The results for the case $%
k=-1$ are found if we make the change $\theta _{\left( k=-1\right) }=-\bar{%
\theta}$.

Using theorem \ref{The Noether symmetries of a conservative system} and
making simple calculations we find the following results

\begin{enumerate}
\item Noether symmetries generated by the KV $K^{1}$.

a) If $V\left( \theta \right) =V_{0}e^{2\theta }~$then we have the extra
Noether symmetries $K^{1}~,~tK^{1}$ with Noether integrals the (\ref{CLN.75a}%
) with $k=1$.

b) If $V\left( \theta \right) =\left( V_{0}e^{2\theta }-\frac{m}{2}\theta
e^{2\theta }\right) $, then we have the Noether symmetries $e^{\pm \sqrt{m}%
t}K^{1}$ with Noether integrals (\ref{CLN.99}) with $k=1$.

\item Noether symmetries generated by the KV $K^{2}$.

a) If $V\left( \theta \right) =V_{0}e^{-2\theta }$ then we have the Noether
symmetries $K^{2}~,~tK^{2}$ with Noether integrals%
\begin{equation}
I_{2}^{\prime }=\frac{d}{dt}\left( \theta -\ln r\right) ~,~I_{2}^{\prime }=t%
\left[ \frac{d}{dt}\left( \theta -\ln r\right) \right] -\left( \theta -\ln
r\right)  \label{CLN.72}
\end{equation}

b) If $V\left( \theta \right) =V_{0}e^{-2\theta }-\frac{1}{4}pe^{2\theta }\,$
then we have the Noether symmetries $K^{2}~,~tK^{2}$ with Noether integrals%
\begin{equation}
I_{1}^{\prime }=\frac{d}{dt}\left( \theta -\ln r\right) -pt~,~I_{2}^{\prime
}=t\left[ \frac{d}{dt}\left( \theta -\ln r\right) \right] -\left( \theta
-\ln r\right) -\frac{1}{2}pt^{2}  \label{CLN.73}
\end{equation}

\item If $V\left( \theta \right) =0$ then the system becomes the free
particle and admits seven extra Noether symmetries.
\end{enumerate}

As we have remarked the results for $k=-1$ are obtained directly from those
for $k=1$ if we make the substitution $\theta _{\left( k=-1\right) }=-\bar{%
\theta}$. Therefore there is no need to state them explicitly.

In the next section using the Noether symmetries for the potentials we have
found, we determine the analytic solution for each case.

\subsection{Case A: Analytic solutions for $k=1$}

We introduce new coordinates $u,v$ by the relations%
\begin{equation}
u=\left( \theta -\ln r\right) ~,~v=\frac{1}{2}e^{2\theta }r^{2}
\label{CLN.76}
\end{equation}%
with inverse relations:%
\begin{equation}
r=\left( 2ve^{-2u}\right) ^{\frac{1}{4}}~,~~\theta =\frac{1}{4}\ln \left(
2ve^{2u}\right) .  \label{CLN.77}
\end{equation}

In the new coordinates $u,v$ the Lagrangian (\ref{CLN.33}) of the field
equations becomes%
\begin{equation}
L\left( u,v,\dot{u},\dot{v}\right) =\frac{N_{0}^{2}}{2}\left( \dot{u}\dot{v}%
\right) -U\left( u,v\right)
\end{equation}%
and the field equations are
\begin{equation*}
E=\frac{N_{0}^{2}}{2}\left( \dot{u}\dot{v}\right) +U\left( u,v\right)
\end{equation*}%
\begin{eqnarray*}
\frac{N_{0}^{2}}{2}\ddot{u}+U_{,v} &=&0 \\
\frac{N_{0}^{2}}{2}\ddot{v}+U_{,u} &=&0
\end{eqnarray*}%
where the potential $U\left( u,v\right) $ is one of the potentials we have
found in the last section. In the new coordinates we have ($p,m$ are
constants; recall that in general $U(r,\theta )=r^{2}V\left( \theta \right) $%
):

\begin{itemize}
\item
\begin{equation}
U(r,\theta )=V_{0}r^{2}e^{-2\theta }\rightarrow U\left( u,v\right)
=V_{0}e^{-2u}  \label{CLN.79}
\end{equation}

\item
\begin{equation}
U(r,\theta )=V_{0}r^{2}e^{2\theta }\rightarrow U\left( u,v\right) =2V_{0}v
\label{CLN.80}
\end{equation}

\item
\begin{equation}
U(r,\theta )=r^{2}\left( V_{0}e^{-2\theta }-\frac{1}{4}pe^{2\theta }\right)
\rightarrow U\left( u,v\right) =V_{0}e^{-2u}-\frac{p}{2}v  \label{CLN.81}
\end{equation}

\item
\begin{equation}
U(r,\theta )=r^{2}\left( V_{0}e^{2\theta }-\frac{m}{2}\theta e^{2\theta
}\right) \rightarrow U\left( u,v\right) =2\left( V_{0}-m\frac{\ln 2}{8}%
\right) v-\frac{m}{4}v\ln v-\frac{m}{2}uv  \label{CLN.82}
\end{equation}

\item
\begin{equation}
U(r,\theta )=0\rightarrow U\left( u,v\right) =0  \label{CLN.83}
\end{equation}%
The Hamiltonian equals%
\begin{equation}
E=\frac{N_{0}^{2}}{2}\left( \dot{u}\dot{v}\right) +U\left( u,v\right) .
\label{CLN.84}
\end{equation}
\end{itemize}

We write Lagrange equations for each potential and solve them taking into
consideration the first integrals for each Noether symmetry we have found
and the constraint imposed by the Hamiltonian. \ For each of the potentials
above we find the corresponding analytic solution given below.

\begin{itemize}
\item $U\left( u,v\right) =V_{0}e^{-2u}$%
\begin{equation}
u\left( t\right) =u_{1}t+u_{2},\text{ }v\left( t\right) =\frac{e^{-2u_{2}}}{%
u_{1}^{2}}\frac{V_{0}}{N_{0}^{2}}e^{-2u_{1}t}+v_{1}t+v_{2}  \label{CLN.85}
\end{equation}%
with Hamiltonian constraint%
\begin{equation}
E=\frac{u_{1}v_{1}}{2N_{0}^{2}}.  \label{CLN.86}
\end{equation}

\item $U\left( u,v\right) =2V_{0}v$%
\begin{equation}
u\left( t\right) =-\frac{2V_{0}}{N_{0}^{2}}t^{2}+u_{1}t+u_{2},v\left(
t\right) =v_{1}t+v_{2}  \label{CLN.87}
\end{equation}%
with Hamiltonian constrain
\begin{equation}
E=2V_{0}v_{2}+N_{0}^{2}\frac{u_{1}v_{1}}{2}.  \label{CLN.88}
\end{equation}

\item $U\left( u,v\right) =V_{0}e^{-2u}-\frac{p}{2}v$

where
\begin{equation}
v_{1}=\frac{2V_{0}\sqrt{\pi }}{p^{\frac{3}{2}}N_{0}}\exp \left( \frac{%
u_{1}^{2}N_{0}^{2}}{p}-2u_{2}\right) ~,~v_{2}=\frac{2V_{0}}{p}e^{-2u_{2}}
\label{CLN.89}
\end{equation}%
with Hamiltonian constraint%
\begin{equation}
E=N_{0}^{2}\frac{v_{3}u_{1}}{2}-\frac{v_{4}p}{2}.  \label{CLN.90}
\end{equation}

\item $U\left( u,v\right) =2\left( V_{0}-m\frac{\ln 2}{8}\right) v-\frac{m}{4%
}v\ln v-\frac{m}{2}uv$%
\begin{eqnarray}
u\left( t\right) &=&u_{1}e^{\frac{1}{N_{0}}\sqrt{m}t}+u_{2}e^{-\frac{1}{N_{0}%
}\sqrt{m}t}-\frac{\sqrt{m}}{2N_{0}}t+\frac{4V_{0}}{m}-\ln \left( \sqrt{2}%
\right) -\frac{1}{2}~,~  \label{CLN.91} \\
v\left( t\right) &=&e^{\frac{1}{N_{0}}\sqrt{m}t}~~,~E=-u_{2}m \\
u\left( t\right) &=&u_{1}e^{\frac{1}{N_{0}}\sqrt{m}t}+u_{2}e^{-\frac{1}{N_{0}%
}\sqrt{m}t}+\frac{\sqrt{m}}{2N_{0}}t+\frac{4V_{0}}{m}-\ln \left( \sqrt{2}%
\right) -\frac{1}{2}~,~  \label{CLN.92} \\
v\left( t\right) &=&e^{-\frac{1}{N_{0}}\sqrt{m}t}~,~E=-u_{1}m.
\end{eqnarray}

\item $U\left( u,v\right) =0$ (the free particle)%
\begin{equation}
u\left( t\right) =u_{1}t+u_{2}~,~v\left( t\right) =v_{1}t+v_{2}
\label{CLN.93}
\end{equation}%
with Hamiltonian constraint%
\begin{equation}
E=\frac{N_{0}^{2}}{2}u_{1}v_{1}.  \label{CLN.94}
\end{equation}
\end{itemize}

\subsection{Case A: Analytic solutions for $\left\vert k\right\vert \neq 1$}

\label{AnNMSFA}

When \ $\left\vert k\right\vert \neq 1$ we have to consider two cases $%
k^{2}>1$ and $k^{2}<1.$ Both cases it is convenient to be discussed if we
use as variables the functions $S_{1}\left( r,\theta \right) ,$ $S_{2}\left(
r,\theta \right) $ which generate the Killing vectors (i.e. $%
K^{I,i}=S_{I}^{,i}$ $I=1,2).$ A standard calculation gives (see appendix B
for details) :

\begin{eqnarray}
K^{1} &=&\frac{e^{\left( 1-k\right) \theta }r^{k}}{N_{0}^{2}}\left(
-\partial _{r}+\frac{1}{r}\partial _{\theta }\right) ~,~S_{1}\left( r,\theta
\right) =\frac{r^{1+k}e^{\left( 1+k\right) \theta }}{\left( k+1\right) }
\label{CLN.95} \\
K^{2} &=&\frac{e^{-\left( 1+k\right) \theta }r^{-k}}{N_{0}^{2}}\left(
\partial _{r}+\frac{1}{r}\partial _{\theta }\right) ,~S_{2}\left( r,\theta
\right) =\frac{r^{1-k}e^{-\left( 1-k\right) \theta }}{k-1}.  \label{CLN.96}
\end{eqnarray}%
Because the metric is flat the new variables $S_{1}\left( r,\theta \right) ,$
$S_{2}\left( r,\theta \right) $ are Cartesian and will be denoted with $x,y.$
So we write:
\begin{equation}
x=\frac{r^{1+k}e^{-\left( 1+k\right) \theta }}{\left( k+1\right) }~,~y=\frac{%
r^{1-k}e^{\left( -1+k\right) \theta }}{k-1}.  \label{CLN.97}
\end{equation}

The inverse transformation is:

for $k^{2}>1$%
\begin{eqnarray}
\theta &=&\frac{1}{2\left( 1-k^{2}\right) }\ln \left( \frac{\left(
k^{2}-1\right) ^{1-k}}{\left( k-1\right) ^{2}}\frac{x^{1-k}}{y^{1+k}}\right)
\label{CLN.98} \\
r &=&\sqrt{\left( k^{2}-1\right) xy}\left( \frac{\left( k^{2}-1\right) ^{1-k}%
}{\left( k-1\right) ^{2}}\frac{x^{1-k}}{y^{1+k}}\right) ^{\frac{k}{2\left(
k^{2}-1\right) }}  \label{CLN.98a}
\end{eqnarray}
and

for $k^{2}<1$
\begin{eqnarray}
\theta &=&\frac{1}{2\left( 1-k^{2}\right) }\ln \left( \frac{\left(
1-k^{2}\right) ^{1-k}}{\left( 1-k\right) ^{2}}\frac{\bar{x}^{1-k}}{\bar{y}%
^{1+k}}\right)  \label{CLN.110} \\
r &=&\sqrt{\left( 1-k^{2}\right) \bar{x}\bar{y}}\left( \frac{\left(
1-k^{2}\right) ^{1-k}}{\left( 1-k\right) ^{2}}\frac{\bar{x}^{1-k}}{\bar{y}%
^{1+k}}\right) ^{\frac{k}{2\left( k^{2}-1\right) }}  \label{CLN.111}
\end{eqnarray}%
Note that in the second case we have written $\bar{x},\bar{y}$ while we have
kept the $x,y$ notation for the case $k^{2}>1.$

\subsubsection{The case $k^{2}>1$}

\qquad Before we look for analytic solutions we transform the Lagrangian in
the canonical coordinates $x,y.$ Using the transformation relations (\ref%
{CLN.98}), (\ref{CLN.98a}) we find that in the coordinates $x,y$ the
Lagrangian (\ref{CLN.31}) takes the form%
\begin{equation}
L\left( x,y,\dot{x},\dot{y}\right) =\frac{N_{0}^{2}}{2}\dot{x}\dot{y}%
-U\left( x,y\right)  \label{CLN.115}
\end{equation}%
where $U\left( x,y\right) =r^{2}V\left( \theta \right) $ where $V\left(
\theta \right) $ is one of the potentials computed above. In the coordinates
$x,y$ these potentials are as follows:

\begin{itemize}
\item
\begin{equation}
U_{1}\left( x,y\right) =V_{0}r^{2}e^{2k\theta }=V_{0}\left( k^{2}-1\right) xy
\label{CLN.116}
\end{equation}

\begin{equation}
U_{2}\left( x,y\right) =V_{0}r^{2}e^{+2\theta }=V_{0}\left( k+1\right) ^{%
\frac{2}{1+k}}~x^{\frac{^{2}}{1+k}}  \label{CLN.117}
\end{equation}%
\begin{equation}
U_{3}\left( x,y\right) =V_{0}r^{2}e^{-2\theta }=V_{0}\left( k-1\right) ^{%
\frac{2}{1-k}}y^{\frac{2}{1-k}}  \label{CLN.118}
\end{equation}

\item
\begin{equation*}
U_{4}\left( x,y\right) =V_{0}r^{2}e^{-2\frac{\left( k^{2}-2\right) }{k}%
\theta }=V_{0}\frac{\left( k^{2}-1\right) ^{\frac{2}{k}-1}}{\left(
k-1\right) ^{\frac{4}{k}}}\frac{1}{y^{2}}\left( \frac{x}{y}\right) ^{\frac{2%
}{k}-1}.
\end{equation*}
\end{itemize}

To the potentials $U_{2},U_{3},$ $U_{4}$ we have to add three more which we
obtain by the addition of the potential of the harmonic oscillator.
Therefore finally we have 7 potentials. The extra potentials are

\begin{itemize}
\item
\begin{eqnarray}
U_{5}\left( x,y\right) &=&V_{0}r^{2}e^{+2\theta }+mr^{2}e^{2k\theta }=\bar{V}%
_{0+}~x^{\frac{^{2}}{1+k}}+\bar{m}xy  \label{CLN.120} \\
U_{6}\left( x,y\right) &=&r^{2}e^{-2\theta }+mr^{2}e^{2k\theta }=\bar{V}%
_{0-}y^{\frac{2}{1-k}}+\bar{m}xy  \label{CLN.121}
\end{eqnarray}%
where $\bar{V}_{0+}=V_{0}\left( k+1\right) ^{\frac{2}{1+k}}~,~\bar{V}%
_{0-}=\left( k-1\right) ^{\frac{2}{1-k}}~,~\bar{m}=m\left( k^{2}-1\right) $

\item
\begin{equation}
U_{7}\left( x,y\right) =V_{0}r^{2}e^{-2\frac{\left( k^{2}-2\right) }{k}%
\theta }+mr^{2}e^{2k\theta }=\bar{V}_{0}\frac{1}{y^{2}}\left( \frac{x}{y}%
\right) ^{\frac{2}{k}-1}+\bar{m}xy  \label{CLN.122}
\end{equation}%
where $\bar{V}_{0}=V_{0}\frac{\left( k^{2}-1\right) ^{\frac{2}{k}-1}}{\left(
k-1\right) ^{\frac{4}{k}}}$.

\item And the free particle potential
\begin{equation}
U_{8}\left( x,y\right) =0
\end{equation}
\end{itemize}

These expressions allow us to write for each potential the Lagrangian and
the Hamiltonian constraint in the coordinates $x,y$. This means that we
obtain the corresponding field equations in the coordinates $x,y$ to
determine their solution.

\paragraph{Analytic solutions}

The solution of the field equations for each potential is a formal and
lengthy operation which adds nothing but unnecessary material to the matter.
What is interesting is of course the final answer for each case and this is
what we give below for each of the potentials above.

\begin{itemize}
\item $U_{1}\left( x,y\right) =V_{0}\left( k^{2}-1\right) xy$%
\begin{eqnarray}
x\left( t\right) &=&x_{1}\sin \left( \omega t\right) +x_{2}\cos \left(
\omega t\right)  \label{CLN.123} \\
y\left( t\right) &=&y_{1}\sin \left( \omega t\right) +y_{2}\cos \left(
\omega t\right)  \label{CLN.124}
\end{eqnarray}%
where $\omega ^{2}=\frac{2V_{0}\left( k^{2}-1\right) }{N_{0}^{2}}$ and the
Hamiltonian is $\ $%
\begin{equation*}
E=V_{0}\left( k^{2}-1\right) \left( x_{1}y_{1}+x_{2}y_{2}\right)
\end{equation*}

\item $U_{2}\left( x,y\right) =V_{0}\left( k+1\right) ^{\frac{2}{1+k}}~x^{%
\frac{^{2}}{1+k}}$

When $k\neq -3$%
\begin{eqnarray}
x\left( t\right) &=&x_{1}t+x_{2}  \label{CLN.125} \\
y\left( t\right) &=&-\frac{2\bar{V}\left( k+1\right) \left(
x_{1}t+x_{2}\right) ^{\left( 1+\frac{2}{1+k}\right) }}{x_{1}^{2}\left(
3+k\right) N_{0}^{2}}+y_{1}t+y_{2}  \label{CLN.126}
\end{eqnarray}%
where $\bar{V}=V_{0}\left( k+1\right) ^{\frac{2}{1+k}},$ and the Hamiltonian
is ~%
\begin{equation*}
E=\frac{y_{1}x_{1}N_{0}^{2}}{2}.
\end{equation*}

~When $k=-3$
\begin{equation}
y\left( t\right) =-2\frac{\bar{V}}{N_{0}^{2}x_{1}^{2}}\ln \left(
x_{1}t+x_{2}\right) +y_{1}t+y_{2}.  \label{CLN.127}
\end{equation}%
$x(t),$ $E$\ being the same.

\item $U_{3}\left( x,y\right) =V_{0}\left( k-1\right) ^{\frac{2}{1-k}}y^{%
\frac{2}{1-k}}$

When $k\neq 3$%
\begin{eqnarray}
x\left( t\right) &=&\frac{2\bar{V}\left( k-1\right) \left(
y_{1}t+y_{2}\right) ^{1+\frac{2}{k-1}}}{y_{1}^{2}\left( k-3\right) N_{0}^{2}}%
+x_{1}t+x_{2}  \label{CLN.128} \\
y\left( t\right) &=&y_{1}t+y_{2}  \label{CLN.129}
\end{eqnarray}%
where $\bar{V}=V_{0}\left( k-1\right) ^{\frac{2}{1-k}}$ , $k\neq 3$and the
Hamiltonian is ~~%
\begin{equation*}
E=\frac{y_{1}x_{1}N_{0}^{2}}{2}.
\end{equation*}

When $k=3~;$\
\begin{equation}
x\left( t\right) =-\frac{2\bar{V}}{N_{0}^{2}y_{1}^{2}}\ln \left(
y_{1}t+y_{2}\right) +x_{1}t+x_{2}.  \label{CLN.130}
\end{equation}%
$y(t),E$ being the same

\item $U_{5}\left( x,y\right) =\bar{V}_{0+}~x^{\frac{^{2}}{1+k}}+\bar{m}xy$%
\begin{eqnarray}
x\left( t\right) &=&x_{1}\sin \left( \omega t+\omega _{0}\right) \\
y\left( t\right) &=&\cos \left( \omega t+\omega _{0}\right) \left( y_{1}+2%
\frac{\omega }{\bar{m}}\int \frac{y_{2}-x_{1}\bar{V}_{0+}\sin \left( \omega
t+\omega _{0}\right) ^{\frac{2}{1+k}}}{x_{1}\left( \cos \left( \omega
t+\omega _{0}\right) +1\right) }dt\right)
\end{eqnarray}%
where $\omega ^{2}=\frac{2\bar{m}}{N_{0}^{2}}$ and $E=y_{2}.$

\item $U_{6}\left( x,y\right) =r^{2}e^{-2\theta }+mr^{2}e^{2k\theta }=\bar{V}%
_{0-}y^{\frac{2}{1-k}}+\bar{m}xy$%
\begin{eqnarray}
x\left( t\right) &=&\cos \left( \omega t+\omega _{0}\right) \left( x_{1}+2%
\frac{\omega }{\bar{m}}\int \frac{x_{2}-y_{1}\bar{V}_{0-}\sin \left( \omega
t+\omega _{0}\right) ^{\frac{2}{1-k}}}{y_{1}\left( \cos \left( \omega
t+\omega _{0}\right) +1\right) }dt\right) \\
y\left( t\right) &=&y_{1}\sin \left( \omega t+\omega _{0}\right)
\end{eqnarray}%
where $\omega ^{2}=\frac{2\bar{m}}{N_{0}^{2}}$and $E=x_{2}.$

\item $U_{4,7}\left( x,y\right) =\bar{V}_{0}\frac{1}{y^{2}}\left( \frac{x}{y}%
\right) ^{\frac{2}{k}-1}+\bar{m}xy$.~$\left( U_{4}\text{ is for }\bar{m}%
=0\right) $

This is the Ermakov Pinney system. To solve it, it is convenient to go to
spherical coordinates. We consider the coordinate transformation%
\begin{equation}
x=ze^{w}~,~y=ze^{-w}~  \label{CLN.131}
\end{equation}%
and the Lagrangian takes the form%
\begin{equation}
L\left( z,w,\dot{z},\dot{w}\right) =\frac{N_{0}^{2}}{2}\left( \dot{z}%
^{2}-z^{2}\dot{w}^{2}\right) -\frac{\bar{V}_{0}}{z^{2}}e^{\frac{4}{k}w}-\bar{%
m}z^{2}  \label{CLN.132}
\end{equation}%
whereas the Hamiltonian becomes%
\begin{equation}
E=\frac{N_{0}^{2}}{2}\left( \dot{z}^{2}-z^{2}\dot{w}^{2}\right) +\frac{\bar{V%
}_{0}}{z^{2}}e^{\frac{4}{k}w}+\bar{m}z^{2}.  \label{CLN.133}
\end{equation}%
This system admits the Ermakov-Lewis invariant, which is%
\begin{equation}
J_{EL}=z^{4}\dot{w}^{2}-2\frac{\bar{V}_{0}}{N_{0}^{2}}e^{\frac{4}{k}w}.
\label{CLN.134}
\end{equation}%
Using the Ermakov invariant the Hamiltonian becomes%
\begin{equation}
E=\frac{N_{0}^{2}}{2}\dot{z}^{2}-N_{0}^{2}\frac{J_{EL}}{2z^{2}}+\bar{m}z^{2}.
\label{CLN.135}
\end{equation}%
This is the Hamiltonian of the Ermakov Pinney equation:%
\begin{equation}
\ddot{z}+2\bar{m}z+N_{0}^{2}\frac{J_{EL}}{z^{3}}=0  \label{CLN.136}
\end{equation}%
whose solution is
\begin{eqnarray}
z\left( t\right) &=&\left( l_{0}z_{1}\left( t\right) +l_{1}z_{2}\left(
t\right) +l_{3}\right) ^{\frac{1}{2}}  \label{CLN.137} \\
e^{\frac{4}{k}w\left( t\right) } &=&\frac{N_{0}^{2}J_{EL}}{2\bar{V}_{0}}%
\left[ \tanh ^{2}\left( \frac{2\sqrt{J_{EL}}}{k}\left( \int \frac{dt}{%
z^{2}\left( t\right) }+l_{4}\right) \right) -1\right]  \label{CLN.138}
\end{eqnarray}%
where $z_{1,2}\left( t\right) $ are solutions of the differential equation ~$%
\ddot{z}+2\bar{m}z=0$ and $l_{0-4}$ are constants.

\item $U_{8}(x,y)=0$

This is the free particle whose solution is%
\begin{eqnarray}
x\left( t\right) &=&x_{1}t+x_{2}~,~y\left( t\right) =y_{1}t+y_{2}
\label{CLN.139} \\
E &=&\frac{N_{0}^{2}}{2}x_{1}y_{1}  \label{CLN.140}
\end{eqnarray}
\end{itemize}

\subsubsection{The case $k^{2}<1$}

In this case the canonical coordinates are the $\bar{x},\bar{y}.$ Using the
transformation equations (\ref{CLN.110}), (\ref{CLN.111}) we write the
Lagrangian (\ref{CLN.31}) as follows:%
\begin{equation}
L\left( \bar{x},\bar{y},\overset{\cdot }{\bar{x},}\overset{\cdot }{\bar{y}}%
\right) =-\frac{N_{0}^{2}}{2}\overset{\cdot }{\bar{x}}\overset{\cdot }{\bar{y%
}}-\bar{U}\left( \bar{x},\bar{y}\right)  \label{CLN.141}
\end{equation}%
where $U\left( \bar{x},\bar{y}\right) =r^{2}V\left( \theta \right) $ the
potentials $V\left( \theta \right) $ being as above. In the coordinates $%
\bar{x},\bar{y}$ the potentials $U\left( \bar{x},\bar{y}\right) $ are:

\begin{itemize}
\item
\begin{equation}
\bar{U}_{1}\left( \bar{x},\bar{y}\right) =V_{0}r^{2}e^{2k\theta
}=V_{0}\left( 1-k^{2}\right) \bar{x}\bar{y}  \label{CLN.142}
\end{equation}

\begin{equation}
\bar{U}_{2}\left( \bar{x},\bar{y}\right) =V_{0}r^{2}e^{+2\theta
}=V_{0}\left( k+1\right) ^{\frac{2}{1+k}}~\bar{x}^{\frac{^{2}}{1+k}}
\label{CLN.143}
\end{equation}%
\begin{equation}
\bar{U}_{3}\left( \bar{x},\bar{y}\right) =V_{0}r^{2}e^{-2\theta
}=V_{0}\left( 1-k\right) ^{\frac{2}{1-k}}\bar{y}^{\frac{2}{1-k}}
\label{CLN.144}
\end{equation}

\item
\begin{equation}
\bar{U}_{4}\left( \bar{x},\bar{y}\right) =V_{0}r^{2}e^{-2\frac{\left(
k^{2}-2\right) }{k}\theta }=V_{0}\frac{\left( 1-k^{2}\right) ^{\frac{2}{k}-1}%
}{\left( 1-k\right) ^{\frac{4}{k}}}\frac{1}{\bar{y}^{2}}\left( \frac{\bar{x}%
}{\bar{y}}\right) ^{\frac{2}{k}-1}  \label{CLN.145}
\end{equation}
\end{itemize}

As before we have three more potentials

\begin{itemize}
\item
\begin{eqnarray}
\bar{U}_{5}\left( \bar{x},\bar{y}\right) &=&V_{0}r^{2}e^{+2\theta }=\bar{V}%
_{0+}~\bar{x}^{\frac{^{2}}{1+k}}+\bar{m}\bar{x}\bar{y}  \label{CLN.146} \\
\bar{U}_{6}\left( \bar{x},\bar{y}\right) &=&V_{0}r^{2}e^{-2\theta }=\bar{V}%
_{0-}\bar{y}^{\frac{2}{1-k}}+\bar{m}\bar{x}\bar{y}  \label{CLN.147}
\end{eqnarray}%
where $\bar{V}_{0+}=V_{0}\left( k+1\right) ^{\frac{2}{1+k}}~,~\bar{V}%
_{0-}=\left( 1-k\right) ^{\frac{2}{1-k}}~,~\bar{m}=m\left( 1-k^{2}\right) .$

\item
\begin{equation}
\bar{U}_{7}\left( \bar{x},\bar{y}\right) =V_{0}r^{2}e^{-2\frac{\left(
k^{2}-2\right) }{k}\theta }+mr^{2}e^{2k\theta }=\bar{V}_{0}\frac{1}{\bar{y}%
^{2}}\left( \frac{\bar{x}}{\bar{y}}\right) ^{\frac{2}{k}-1}+\bar{m}\bar{x}%
\bar{y}  \label{CLN.147.1}
\end{equation}%
where $\bar{V}_{0}=V_{0}\frac{\left( 1-k^{2}\right) ^{\frac{2}{k}-1}}{\left(
1-k\right) ^{\frac{4}{k}}}$.

\item and the free particle potential.
\begin{equation}
\bar{U}_{8}\left( \bar{x},\bar{y}\right) =0
\end{equation}
\end{itemize}

\paragraph{Analytic solutions}

Working as in the case $k^{2}>1$ we find the following analytic solutions
and the associated Hamiltonian constraint for each of the potentials above.

\begin{itemize}
\item $\bar{U}_{1}\left( \bar{x},\bar{y}\right) =V_{0}\left( 1-k^{2}\right)
\bar{x}\bar{y}.$
\begin{eqnarray}
\bar{x}\left( t\right) &=&x_{1}\cosh \left( \omega t\right) +x_{2}\sinh
\left( \omega t\right)  \label{CLN.147.2} \\
\bar{y}\left( t\right) &=&y_{1}\cosh \left( \omega t\right) +y_{2}\sinh
\left( \omega t\right)  \label{CLN.147.3}
\end{eqnarray}%
and the Hamiltonian $E=V_{0}m\left( x_{1}y_{1}-x_{2}y_{2}\right) $ where $%
\omega ^{2}=\frac{2V_{0}\left( 1-k^{2}\right) }{N_{0}^{2}}$.

\item $\bar{U}_{2}\left( \bar{x},\bar{y}\right) =V_{0}\left( k+1\right) ^{%
\frac{2}{1+k}}~\bar{x}^{\frac{^{2}}{1+k}}$%
\begin{eqnarray}
\bar{x}\left( t\right) &=&x_{1}t+x_{2}  \label{CLN.147.4} \\
\bar{y}\left( t\right) &=&\frac{2\bar{V}\left( k+1\right) \left(
x_{1}t+x_{2}\right) ^{\left( 1+\frac{2}{1+k}\right) }}{x_{1}^{2}\left(
3+k\right) N_{0}^{2}}+y_{1}t+y_{2}  \label{CLN.147.5}
\end{eqnarray}%
where $\bar{V}=V_{0}\left( k+1\right) ^{\frac{2}{1+k}}$and the Hamiltonian $%
E=-\frac{N_{0}^{2}}{2}x_{1}y_{1}$.

\item $\bar{U}_{3}\left( \bar{x},\bar{y}\right) =V_{0}\left( 1-k\right) ^{%
\frac{2}{1-k}}\bar{y}^{\frac{2}{1-k}}$%
\begin{eqnarray}
\bar{x}\left( t\right) &=&\frac{2\bar{V}\left( 1-k\right) \left(
y_{1}t+y_{2}\right) ^{1+\frac{2}{k-1}}}{y_{1}^{2}\left( k-3\right) N_{0}^{2}}%
+x_{1}t+x_{2}  \label{CLN.147.6} \\
\bar{y}\left( t\right) &=&y_{1}t+y_{2}  \label{CLN.147.7}
\end{eqnarray}%
where $\bar{V}=V_{0}\left( k+1\right) ^{\frac{2}{1+k}}$and the Hamiltonian
constrain $E=-\frac{N_{0}^{2}}{2}x_{1}y_{1}$.

\item $\bar{U}_{5}\left( \bar{x},\bar{y}\right) =\bar{V}_{0+}~\bar{x}^{\frac{%
^{2}}{1+k}}+\bar{m}\bar{x}\bar{y}$%
\begin{eqnarray*}
\bar{x}\left( t\right) &=&\bar{x}_{1}\sinh \left( \omega t+\omega _{0}\right)
\\
\bar{y}\left( t\right) &=&\cosh \left( \omega t+\omega _{0}\right) \left(
\bar{y}_{1}-\frac{2\omega }{\bar{m}}\int \frac{E-\bar{x}_{1}\bar{V}%
_{0+}\sinh \left( \omega t+\omega _{0}\right) ^{\frac{2}{1+k}}}{\bar{x}%
_{1}\left( \cosh \left( \omega t+\omega _{0}\right) +1\right) }\right)
\end{eqnarray*}%
where $\omega ^{2}=\frac{2\bar{m}}{N_{0}^{2}}$

\item $\bar{U}_{6}\left( \bar{x},\bar{y}\right) =\bar{V}_{0-}\bar{y}^{\frac{2%
}{1-k}}+\bar{m}\bar{x}\bar{y}$%
\begin{eqnarray*}
\bar{x}\left( t\right) &=&\cosh \left( \omega t+\omega _{0}\right) \left(
\bar{x}_{1}-\frac{2\omega }{\bar{m}}\int \frac{E-\bar{y}_{1}\bar{V}%
_{0-}\sinh \left( \omega t+\omega _{0}\right) ^{\frac{2}{1-k}}}{\bar{y}%
_{1}\left( \cosh \left( \omega t+\omega _{0}\right) +1\right) }\right) \\
\bar{y}\left( t\right) &=&\bar{y}_{1}\sinh \left( \omega t+\omega _{0}\right)
\end{eqnarray*}%
where $\omega ^{2}=\frac{2\bar{m}}{N_{0}^{2}}$

\item $\bar{U}_{4,7}\left( \bar{x},\bar{y}\right) =\bar{V}_{0}\frac{1}{\bar{y%
}^{2}}\left( \frac{\bar{x}}{\bar{y}}\right) ^{\frac{2}{k}-1}+\bar{m}\bar{x}%
\bar{y}$.~$\left( \bar{U}_{4}\text{ is for }\bar{m}=0\right) $.

This is again the Ermakov Piney potential. To solve it we go to spherical
coordinates%
\begin{equation}
\bar{x}=ze^{w}~\ ,~\bar{y}=ze^{-w}~  \label{CLN.147.8}
\end{equation}%
in which the Lagrangian takes the form%
\begin{equation}
L\left( z,w,\dot{z},\dot{w}\right) =\frac{N_{0}^{2}}{2}\left( -\dot{z}%
^{2}+z^{2}\dot{w}^{2}\right) -\frac{\bar{V}_{0}}{z^{2}}e^{\frac{4}{k}w}-\bar{%
m}z^{2}  \label{CLN.147.9}
\end{equation}%
and the Hamiltonian%
\begin{equation}
E=\frac{N_{0}^{2}}{2}\left( -\dot{z}^{2}+z^{2}\dot{w}^{2}\right) +\frac{\bar{%
V}_{0}}{z^{2}}e^{\frac{4}{k}w}+\bar{m}z^{2}.  \label{CLN.147.10}
\end{equation}%
\qquad The Ermakov-Lewis invariant is%
\begin{equation}
J_{EL}=z^{4}\dot{w}^{2}+2\frac{\bar{V}_{0}}{N_{0}^{2}}e^{\frac{4}{k}w}
\label{CLN.147.11}
\end{equation}%
which when replaced in the Hamiltonian gives%
\begin{equation}
E=-\frac{N_{0}^{2}}{2}\dot{z}^{2}+N_{0}^{2}\frac{J_{EL}}{2z^{2}}+\bar{m}%
z^{2}.  \label{CLN.147.12}
\end{equation}%
This is the Hamiltonian of the Ermakov Pinney equation:%
\begin{equation}
\ddot{z}-2\bar{m}z-N_{0}^{2}\frac{J_{EL}}{z^{3}}=0  \label{CLN.147.13}
\end{equation}%
whose solution is
\begin{eqnarray}
z\left( t\right) &=&\left( l_{0}z_{1}\left( t\right) +l_{1}z_{2}\left(
t\right) +l_{3}\right) ^{\frac{1}{2}}  \label{CLN.147.14} \\
e^{\frac{4}{k}w\left( t\right) } &=&\frac{N_{0}^{2}J_{EL}}{2\bar{V}_{0}}%
\left[ 1-\tanh ^{2}\left( \frac{2\sqrt{J_{EL}}}{k}\left( \int \frac{dt}{%
z^{2}\left( t\right) }+l_{4}\right) \right) \right]  \label{CLN.147.15}
\end{eqnarray}%
where $z_{1,2}\left( t\right) $ are solutions of the ode~$\ddot{z}-2\bar{m}%
z=0$ and $l_{0-4}$ are constants.

\item $\bar{U}_{8}\left( \bar{x},\bar{y}\right) =0$

\item This is the free particle whose solution is
\begin{eqnarray}
\bar{x}\left( t\right) &=&x_{1}t+x_{2}~,~\bar{y}\left( t\right) =y_{1}t+y_{2}
\label{CLN.147.16} \\
E &=&-\frac{N_{0}^{2}}{2}x_{1}y_{1}.  \label{CLN.147.17}
\end{eqnarray}
\end{itemize}

\subsection{Case B: The 2d metric is conformally flat}

\label{AnNMSFB}

In this case it is preferable to work with (\ref{CLN.28}) which under the
coordinate transformation (\ref{CLN.29}) becomes:%
\begin{equation}
L=N^{2}\left( \theta \right) \left[ -\frac{1}{2}\dot{r}^{2}+\frac{1}{2}r^{2}%
\dot{\theta}^{2}\right] -r^{2}V\left( \theta \right) .  \label{CLN.150a}
\end{equation}%
The kinetic metric in this case is not flat (i.e. $R_{\left( 2\right) }\neq
0)$ but of course it is conformally flat being a two dimensional metric. Its
conformal algebra is infinity dimensional however it has a closed subalgebra
consisting of the following vectors (this is the special conformal algebra
of $M^{2}$):%
\begin{eqnarray}
X^{1} &=&\cosh \theta \partial _{r}-\frac{1}{r}\sinh \theta \partial
_{\theta }~~,~X^{2}=\sinh \theta \partial _{r}-\frac{1}{r}\cosh \theta
\partial _{\theta }  \notag \\
X^{3} &=&\partial _{\theta }~~\ ,~X^{4}=r\partial _{r}~~,~X^{5}=\frac{1}{2}%
r^{2}\cosh \theta \partial _{r}+\frac{1}{2}r\sinh \theta \partial _{\theta }
\notag \\
X^{6} &=&\frac{1}{2}r^{2}\sinh \theta \partial _{r}+\frac{1}{2}r\cosh \theta
\partial _{\theta }  \label{CLN.150}
\end{eqnarray}

Writing ~$L_{X^{I}}g_{ij}=2C_{I}\left( r,\theta \right) g_{ij}$ we find the
conformal factors of the CKVs $X^{I}$ $I=1,...6$ above in terms of the the
conformal function. The result is:%
\begin{eqnarray}
C_{1}\left( r,\theta \right) &=&-\frac{1}{r}\sinh \left( \theta \right)
\frac{N_{,\theta }}{N}~,~C_{2}\left( r,\theta \right) =-\frac{1}{r}\cosh
\left( \theta \right) \frac{N_{,\theta }}{N}  \notag \\
C_{3}\left( r,\theta \right) &=&\frac{N_{,\theta }}{N}~~,~C_{4}\left(
r,\theta \right) =1~,~~C_{5}\left( r,\theta \right) =\frac{r}{2}\left( \frac{%
2N\cosh \theta +\sinh \theta N_{\theta }}{N}\right)  \notag \\
C_{6}\left( r,\theta \right) &=&\frac{r}{2}\left( \frac{2N\sinh \theta
+\cosh \theta N_{\theta }}{N}\right)  \label{CLN.153}
\end{eqnarray}

The $N\left( \theta \right) \neq e^{c\theta }$, otherwise the kinetic metric
of the Lagrangian (\ref{CLN.150a}) is flat (the Ricci scalar vanishes) and
we return to the previous case A. This means that the vectors $X^{I}$ $%
I=1,...6$ except the $I=4$ are proper CKVs therefore they do not give (in
general) a Noether symmetry. The vector $X_{4}$ is a non-gradient HV which
does not also produce a Noether symmetry. Therefore according to theorem \ref%
{The Noether symmetries of a conservative system} only Killing vectors are
possible to serve as Noether symmetries. KVs do not exist but for special
forms of the conformal function $N(\theta )$. Each such form of $N(\theta )$
results in a potential $V(\theta )\ $hence to a scalar field potential which
admits Noether point symmetries. In the following we determine the possible $%
N(\theta )$ which lead to a KV and give the corresponding Noether symmetry
and the corresponding Noether integral which will be used for the solution
of the field equations.

\begin{enumerate}
\item If $N\left( \theta \right) =\frac{N_{0}}{\cosh \left( 2\theta \right)
-1}$ then $X^{5}$ is a non gradient KV and a Noether symmetry for the
Lagrangian (\ref{CLN.150a}) for the potential
\begin{equation}
V\left( \theta \right) =\frac{V_{0}}{\cosh \left( 2\theta \right) -1}~\text{%
or }V\left( \theta \right) =0  \label{CLN.155}
\end{equation}%
The corresponding Noether integral is
\begin{equation}
I_{X^{5}}=\frac{N_{0}^{2}r^{2}}{\left( \cosh \left( 2\theta \right)
-1\right) ^{2}}\left( r\dot{\theta}\sinh \theta -\dot{r}\cosh \theta \right)
.  \label{CLN.156}
\end{equation}

\item If $N\left( \theta \right) =\frac{N_{0}}{\cosh \left( 2\theta \right)
+1}$ then $X^{6}$ is a non gradient KV, $X^{6}$ and a Noether symmetry for
the Lagrangian (\ref{CLN.150a}) if
\begin{equation}
V\left( \theta \right) =\frac{V_{0}}{\cosh \left( 2\theta \right) +1}~\text{%
or }V\left( \theta \right) =0  \label{CLN.157}
\end{equation}%
The corresponding Noether integral is
\begin{equation}
I_{X^{6}}=\frac{N_{0}^{2}r^{2}}{\left( \cosh \left( 2\theta \right)
+1\right) ^{2}}\left( r\dot{\theta}\cosh \theta -\dot{r}\sinh \theta \right)
\label{CLN.158}
\end{equation}

\item If $N\left( \theta \right) =\frac{N_{0}}{\cosh ^{2}\left( \theta
+\theta _{0}\right) }$ then the linear combination $%
X^{56}=c_{1}X^{5}+c_{2}X^{6}$ where $c_{1}=\sinh \left( \theta _{0}\right) $
and $c_{2}=\cosh \left( \theta _{0}\right) $. $X^{56}$ is a Noether symmetry
for the Lagrangian (\ref{CLN.150a}) if
\begin{equation}
V\left( \theta \right) =\frac{V_{0}}{\cosh ^{2}\left( \theta +\theta
_{0}\right) }~\text{or }V\left( \theta \right) =0  \label{CLN.159}
\end{equation}%
with Noether integral
\begin{equation*}
I_{X^{56}}=\frac{N_{0}^{2}r^{2}}{\cosh ^{4}\left( \theta +\theta _{0}\right)
}\left( r\dot{\theta}\cosh \left( \theta +\theta _{0}\right) -\dot{r}\sinh
\left( \theta +\theta _{0}\right) \right)
\end{equation*}%
Obviously case 3 is the most general and contains cases 1 and 2 (and the
trivial case)\ as special cases. Therefore in the following we look for
analytic solutions for the vector $X^{56}$only.
\end{enumerate}

We recall that $\frac{1}{\sqrt{-2F\left( \theta \right) }}=N^{2}\left(
\theta \right) $ from which follows:%
\begin{equation}
~F\left( \theta \right) =-\frac{1}{2N_{0}^{4}}\cosh ^{8}\left( \theta
+\theta _{0}\right) ~,N_{0}\in
%TCIMACRO{\U{211d} }%
%BeginExpansion
\mathbb{R}
%EndExpansion
.  \label{CLN.161}
\end{equation}

We may consider $\theta _{0}=0$ (e.g. by introducing the new variable $%
\Theta =\theta +\theta _{0}).$

For the potential (\ref{CLN.159}) the Lagrangian (\ref{CLN.150a}) becomes
\begin{equation}
L=\frac{N_{0}^{2}}{\cosh ^{4}\theta }\left[ -\frac{1}{2}\dot{r}^{2}+\frac{1}{%
2}r^{2}\dot{\theta}^{2}\right] -r^{2}\frac{V_{0}}{\cosh ^{2}\theta }
\label{CLN.162}
\end{equation}%
and the Hamiltonian
\begin{equation}
E=\frac{N_{0}^{2}}{\cosh ^{4}\theta }\left[ -\frac{1}{2}\dot{r}^{2}+\frac{1}{%
2}r^{2}\dot{\theta}^{2}\right] +r^{2}\frac{V_{0}}{\cosh ^{2}\theta }.
\label{CLN.163}
\end{equation}

The equations of motion are:%
\begin{eqnarray}
\ddot{r}+r\dot{\theta}^{2}-4\tanh \theta ~\dot{r}\dot{\theta}-2\frac{V_{0}}{%
N_{0}^{2}}r\cosh ^{2}\theta &=&0  \label{CLN.164} \\
\ddot{\theta}-2\tanh \theta ~\left( \frac{1}{r^{2}}\dot{r}^{2}+\dot{\theta}%
^{2}\right) +\frac{2}{r}\dot{r}\dot{\theta}-2\frac{V_{0}}{N_{0}^{2}}\cosh
\theta \sinh \theta &=&0  \label{CLN.165}
\end{eqnarray}%
and the Noether integral $I_{X^{56}}$ for $\theta _{0}=0$ becomes:
\begin{equation}
I_{X^{56}}=\frac{N_{0}^{2}r^{2}}{\cosh ^{4}\left( \theta +\theta _{0}\right)
}\left( r\dot{\theta}\cosh \left( \theta \right) -\dot{r}\sinh \left( \theta
\right) \right) .  \label{CLN.166}
\end{equation}

In order to proceed with the solution of the system of equations (\ref%
{CLN.164}), (\ref{CLN.165}) we change to the coordinates $x,y~$which we
define by the relations%
\begin{equation}
r=\frac{x}{\sqrt{1-x^{2}y^{2}}}~,~\theta =\arctan h\left( xy\right) .
\label{CLN.168}
\end{equation}%
In the coordinates $x,y$ the Lagrangian and the Hamiltonian become:%
\begin{equation}
L=\frac{N_{0}^{2}}{2}\left( -\dot{x}^{2}+x^{4}\dot{y}\right) -V_{0}x^{2}
\label{CLN.169}
\end{equation}%
\begin{equation}
E=\frac{N_{0}^{2}}{2}\left( -\dot{x}^{2}+x^{4}\dot{y}^{2}\right) +V_{0}x^{2}
\label{CLN.170}
\end{equation}%
and the Noether integral (we write $I$ for $I_{X^{56}})$%
\begin{equation}
I=x^{4}\dot{y}.  \label{CLN.171}
\end{equation}%
Let us assume that $I\neq 0.$In the new variables the Euler - Lagrange
equations read:%
\begin{eqnarray}
\ddot{x}+2x^{3}\dot{y}^{2}-2V_{0}x &=&0  \label{CLN.172} \\
\ddot{y}+\frac{4}{x}\dot{x}\dot{y} &=&0.  \label{CLN.173}
\end{eqnarray}%
\qquad From the Noether integral we have $\dot{y}=\frac{I}{x^{4}}~$which
upon substitution in the field equations gives the equation%
\begin{eqnarray}
\ddot{x}-2V_{0}x+\frac{2I^{2}}{x^{5}} &=&0  \label{CLN.175} \\
\frac{1}{2}\left( -\dot{x}^{2}+\frac{I^{2}}{x^{4}}\right) +V_{0}x^{2} &=&E.
\label{CLN.176}
\end{eqnarray}%
from which we compute%
\begin{equation}
\dot{x}=\sqrt{\frac{I^{2}}{x^{4}}+2V_{0}x^{2}-2E}  \label{CLN.179}
\end{equation}%
and finally the analytic solution
\begin{equation}
\int \frac{dx}{\sqrt{\frac{I^{2}}{x^{4}}+2V_{0}x^{2}-2E}}=t-t_{0}.
\label{CLN.180}
\end{equation}%
From the Noether integral we find%
\begin{equation}
y\left( t\right) -y_{0}=\int \frac{I}{x^{4}}dt.  \label{CLN.181}
\end{equation}

If $I=0$ then the analytic solution is%
\begin{equation}
x=x_{0}\sinh \left( \sqrt{2V_{0}}t+x_{1}\right) ,~y=y_{0}
\end{equation}%
with Hamiltonian constrain $E=-x_{0}^{2}V_{0}.$

If \thinspace $V_{0}=0~$(i.e. free particle) and $I=0$ the analytic solution
is
\begin{equation}
x=x_{0}t+x_{1}~,~y=y_{0}
\end{equation}%
with Hamiltonian constrain \ $E=-\frac{1}{2}x_{0}^{2}.$

\section{Noether point symmetries of a minimally coupled Scalar field.}

\label{MCsfcosm}

In this section we study the Noether point symmetries of a minimally coupled
scalar field in a spatially flat FRW spacetime. The action of the field
equations is%
\begin{equation}
S_{M}=\int d\tau dx^{3}\sqrt{-g}\left[ R+\frac{1}{2}g_{ij}\phi ^{;i}\phi
^{;j}-V\left( \phi \right) \right] +\int L_{m}d\tau dx^{3}.
\end{equation}%
where $L_{m}$ is the Lagrangian of the dust matter fluid. For a spatially
flat FRW\ spacetime the Ricciscalar is
\begin{equation*}
R=6\left( \frac{\ddot{a}}{a}+\frac{\dot{a}^{2}}{a^{2}}\right)
\end{equation*}%
hence, the Lagrangian of the field equation is (\ref{CLN.14}) and the field
equations are
\begin{equation}
E=-3a\dot{a}^{2}+\frac{\varepsilon }{2}a^{2}\dot{\phi}^{2}+a^{3}V\left( \phi
\right)  \label{SF.60e}
\end{equation}%
\begin{equation}
\ddot{a}+\frac{1}{2a}\dot{a}^{2}+\frac{\varepsilon }{4}\dot{\phi}^{2}-\frac{1%
}{2}aV=0
\end{equation}%
\begin{equation}
\ddot{\phi}+\frac{3}{a}\dot{a}\dot{\phi}+\varepsilon V_{,\phi }=0
\end{equation}

From the kinetic term of (\ref{CLN.14}) we define the two dimensional metric%
\begin{equation}
ds^{2}=-6a~da^{2}+\varepsilon a^{3}d\phi ^{2}.  \label{SF.52a}
\end{equation}

We find that the curvature of the $\{a,\phi \}$ space is $R=0$ implying
flatness (since all $2$ dimensional spaces are Einstein spaces hence $\hat{R}%
=0$ implies that the space is flat). Also, the signature of the metric eq.(%
\ref{SF.52a}) is $0$, hence the space is the 2-d Minkowski space. In order
to simplify the field equations we apply the following coordinate
transformation
\begin{equation}
r=\sqrt{\frac{8}{3}}a^{3/2}\;\;\;\;\;\theta =\sqrt{\frac{3k\epsilon }{8}}%
\phi \;,  \label{tran1A}
\end{equation}%
in the new coordinates the two dimensional metric (\ref{SF.52a}) is given by
\begin{equation}
d{\hat{s}}^{2}=-dr^{2}+r^{2}d\theta ^{2}  \label{SF.56}
\end{equation}%
that is, $(r,\theta )$ are hyperbolic spherical coordinates in the two
dimensional Minkowski space $\{a,\phi \}$. Next we introduce the new
coordinates $(x,y)$ with the transformation:%
\begin{align}
x& =r\cosh \;(\theta )  \notag  \label{trans} \\
y& =r\sinh \;(\theta )
\end{align}%
which implies that the metric (\ref{SF.56}) becomes $d\hat{s}%
^{2}=-dx^{2}+dy^{2}$. We also point here that
\begin{equation}
r^{2}=x^{2}-y^{2}\;\;\;\;\;\theta =\mathrm{arctanh}(y/x)\;.  \label{tran1}
\end{equation}%
The scale factor ($a(t)>0$) is now given by:
\begin{equation}
a=\left[ \frac{3(x^{2}-y^{2})}{8}\right] ^{1/3}  \label{alcon}
\end{equation}%
which means that the new variables have to satisfy the following inequality:
$x\geq |y|$.

In the new coordinate system $(x,y)$ the Lagrangian (\ref{CLN.14}) and the
Hamiltonian (\ref{SF.60e}) are written:
\begin{equation}
L=\frac{1}{2}\left( \dot{y}^{2}-\dot{x}^{2}\right) -V_{eff}(x,y)
\label{SF.60}
\end{equation}%
\begin{equation}
E=\frac{1}{2}\left( \dot{y}^{2}-\dot{x}^{2}\right) +V_{eff}(x,y)
\label{SF.60a}
\end{equation}%
where
\begin{equation}
V_{eff}(x,y)=\left( x^{2}-y^{2}\right) \tilde{V}\left( \frac{y}{x}\right) .
\label{SF.60aa}
\end{equation}%
Note that we have used
\begin{equation}
\tilde{V}\left( \theta \right) =\frac{3k}{8}V(\theta )\;.  \label{modpot}
\end{equation}

We now proceed in an attempt to provide the Noether point symmetries of the
current dynamical problem using the results of chapter \ref{chapter3}.

Since the Lagrangian (\ref{SF.60}) is autonomous admits the Noether point
symmetry $\partial _{t}$ with Noether integral the Hamiltonian (\ref{SF.60a}%
). Lagrangian (\ref{SF.60}) admits extra Noether point symmetries in the
following cases.

\subsection{Hyperbolic - UDM\ Potential}

\textbf{Hyperbolic - UDM Potential:} Generically, we use the following
potential:
\begin{equation}
\tilde{V}(\theta )=\frac{\omega _{1}\cosh ^{2}\left( \theta \right) -\omega
_{2}\sinh ^{2}\left( \theta \right) }{2}  \label{hype1}
\end{equation}%
or
\begin{equation}
V_{eff}(x,y)=r^{2}\tilde{V}(\theta )=\frac{\omega _{1}x^{2}-\omega _{2}y^{2}%
}{2}  \label{hype2}
\end{equation}%
The corresponding Noether symmetries, $X_{n}$, are known (see for example
\cite{Leach80a}). These are:
\begin{align*}
X_{n_{1}}& =\partial _{t}~,~X_{n_{2}}=\sinh \left( \sqrt{\omega _{1}}%
t\right) \partial _{x}~,~X_{n_{3}}=\cosh \left( \sqrt{\omega _{1}}t\right)
\partial _{x} \\
X_{n_{4}}& =\sinh \left( \sqrt{\omega _{2}}t\right) \partial
_{y}~,~X_{n_{5}}=\cosh \left( \sqrt{\omega _{2}}t\right) \partial _{y}
\end{align*}%
The Noether integrals are the Hamiltonian and the quantities:%
\begin{align*}
I_{n_{2}}& =\sinh \left( \sqrt{\omega _{1}}t\right) \dot{x}-\sqrt{\omega _{1}%
}\cosh \left( \sqrt{\omega _{1}}t\right) x \\
I_{n_{3}}& =\cosh \left( \sqrt{\omega _{1}}t\right) \dot{x}-\sqrt{\omega _{1}%
}\sinh \left( \sqrt{\omega _{1}}t\right) x \\
I_{n_{4}}& =\sinh \left( \sqrt{\omega _{2}}t\right) \dot{y}-\sqrt{\omega _{2}%
}\cosh \left( \sqrt{\omega _{2}}t\right) y \\
I_{n_{5}}& =\cosh \left( \sqrt{\omega _{2}}t\right) \dot{y}-\sqrt{\omega _{2}%
}\sinh \left( \sqrt{\omega _{2}}t\right) y
\end{align*}%
Obviously the UDM potential is a particular case of the current general
hyperbolic potential. Indeed for $\omega _{1}=2\omega _{2}$ and with the aid
of eqs.(\ref{tran1A}), (\ref{modpot}) we fully recover the UDM potential
\cite{Bertacca07,Gorini05,BasilL08,CapP09}
\begin{equation}
V(\phi )=V_{0}\left[ 1+\cosh ^{2}\left( \frac{3k\epsilon }{8}\phi \right) %
\right]  \label{pott1}
\end{equation}%
where $V_{0}=\frac{4\omega _{2}}{3k}$ modulus a constant.

\subsection{Exponential Potential}

\textbf{Exponential Potential:} The exponential potential
\begin{equation*}
V_{eff}(r,\theta )=r^{2}\tilde{V}(\theta )=r^{2}e^{-d\theta }\;.
\end{equation*}%
admits the extra Noether symmetry%
\begin{equation}
X_{n}=2t\partial _{t}+\left( x+\frac{4}{d}y\right) \partial _{x}+\left( y+%
\frac{4}{d}x\right) \partial _{y}\;.  \label{SF.64}
\end{equation}%
In general the Noether integral for the vector $X_{n}=2t\partial _{t}+\eta
^{i}\partial _{i}$ is
\begin{equation}
I=2tE+\left( x+\frac{4}{d}y\right) \dot{x}-\left( y+\frac{4}{d}x\right) \dot{%
y}  \label{SF.68}
\end{equation}%
where $E$ is the Hamiltonian. Using ${\tilde{V}}=e^{-d\theta }$ together
with eq.(\ref{tran1A}) and eq.(\ref{modpot}) we write the potential to its
nominal form \cite{Sievers03} which is
\begin{equation}
V(\phi )=V_{0}\mathrm{exp}\left( -d\;\sqrt{\frac{3k\epsilon }{8}}\phi \right)
\label{pott}
\end{equation}%
where $V_{0}=\frac{8}{3k}$ modulus a constant\footnote{%
In the special case of $d=2$, the system admits an additional Lie symmetry $%
\partial _{x}+\partial _{y}$, with Noether integral $I=\dot{x}-\dot{y}.$}.

We note that the analytic solutions of sections \ref{AnNMSFA} and \ref%
{AnNMSFB} are also solutions for the minimally coupled Lagrangian (\ref%
{CLN.14}) if and only if the space does not admit dust.

\section{The Lie and Noether symmetries of Bianchi class A homogeneous
cosmologies with a scalar field.}

\label{MCsfcosmBianchi}

The class of Bianchi spatially homogeneous cosmologies contains many
important cosmological models, including the standard FRW\ model. In these
models the spacetime manifold is foliated along the time axis, with three
dimensional homogeneous hypersurfaces. Bianchi has classified all three
dimensional real Lie algebras and has shown that there are nine of them.
This results in nine types (two of them being families of spacetimes) of
Bianchi spatially homogeneous spacetimes. The principal advantage of Bianchi
cosmological models is that, in these models the physical variables depend
on the time only, reducing the Einstein and other governing equations to
ordinary differential equations.

The Bianchi models are studied in the well known ADM decomposition (\cite%
{Rayanbook,Misner69})\ according to which the metric is written%
\begin{equation}
ds^{2}=-N^{2}(t)dt^{2}+g_{\mu \nu }\omega ^{\mu }\otimes \omega ^{\nu }
\label{BCA.1}
\end{equation}%
where $N(t)$ is the lapse function and $\{\omega ^{a}\}$ is the canonical
basis of 1-forms which satisfy the Lie algebra%
\begin{equation}
d\omega ^{i}=C_{jk}^{i}\omega ^{j}\wedge \omega ^{k}.
\end{equation}%
$C_{jk}^{i}$ are the structure constants of the algebra. The spatial metric $%
g_{\mu \nu }$ splits so that
\begin{equation}
g_{\mu \nu }=\exp (2\lambda )\exp (-2\beta )_{\mu \nu }
\end{equation}%
where $\exp (2\lambda )$ is the scale factor of the universe and $\beta
_{\mu \nu }$ is a $3\times 3$ symmetric, traceless matrix, which can be
written in a diagonal form with two independent quantities, the so called
anisotropy parameters $\beta _{+},\beta _{-}$, as follows:%
\begin{equation}
\beta _{\mu \nu }=diag\left( \beta _{+},-\frac{1}{2}\beta _{+}+\frac{\sqrt{3}%
}{2}\beta _{-},-\frac{1}{2}\beta _{+}-\frac{\sqrt{3}}{2}\beta _{-}\right) .
\end{equation}

The Bianchi models are grouped in classes A and B by means of a vector $%
a^{\mu }$ and a symmetric 3$\times 3$ metric $n_{\mu \nu }$ which are
constrainted by the condition $n_{\mu \nu }a^{\nu }=0.$ Class A is defined
by $a^{\mu }=0$ and Class B by $a^{\mu }\neq 0.$ Each Class is divided into
several types according to the rank and (the modulus of the) signature of $%
n^{\mu \nu }.$ Because of the difficulty in formulating the class B Bianchi
models in the ADM\ formalism, it is usually the case that one confines
attention to the class A\ models. Furthermore it is well known that for the
class A models there is a Lagrangian \cite{MacCallumn1979} whereas for the
class B models, to the author's knowledge, no such Lagrangian seems to
exist. Details on the structure and the Physics of the Bianchi models can be
found e.g. in \cite{Rayanbook,MacCallumn1979}.

Research in Physics on inflationary models has shown the importance of
scalar fields in cosmology \cite{MadsenColes}. This has raised interest in
the dynamics of Bianchi spacetimes filled with a scalar field, with an
arbitrary self interaction potential, minimally coupled to the gravitational
field \cite{Dem92}. The Lagrangian leading to the full Bianchi scalar field
dynamics is
\begin{equation}
L=e^{3\lambda }\left[ R^{\ast }+6\lambda -\frac{3}{2}(\dot{\beta}_{1}^{2}+%
\dot{\beta}_{2}^{2})-\dot{\phi}^{2}+V(\phi )\right]  \label{BCA.5}
\end{equation}%
where $R^{\ast }$ is the Ricci scalar of the $3$ dimensional spatial
hypersurfaces given by the expression:%
\begin{align*}
R^{\ast }& =-\frac{1}{2}e^{-2\lambda }\left[ N_{1}^{2}e^{4\beta
_{1}}+e^{-2\beta _{1}}\left( N_{2}e^{\sqrt{3}\beta _{2}}-N_{3}e^{-\sqrt{3}%
\beta _{2}}\right) ^{2}-2N_{1}e^{\beta _{1}}\left( N_{2}e^{\sqrt{3}\beta
_{2}}-N_{3}e^{-\sqrt{3}\beta _{2}}\right) \right] \\
& +\frac{1}{2}N_{1}N_{2}N_{3}(1+N_{1}N_{2}N_{3}).
\end{align*}%
The constants $N_{1},N_{2},$ and $N_{3}$ are the components of the
classification vector $n^{\mu }$ and $\beta _{1}=-\frac{1}{2}\beta _{+}+%
\frac{\sqrt{3}}{2}\beta _{-},$ $\beta _{2}=-\frac{1}{2}\beta _{+}-\frac{%
\sqrt{3}}{2}\beta _{-}$. It is important to note that the curvature scalar $%
R^{\ast }$ does not depend on the derivatives of the anisotropy parameters $%
\beta _{+},$ $\beta _{-}$ , equivalently on $\beta _{1},\beta _{2}.$

The Euler Lagrange equations due to the Lagrangian (\ref{BCA.5}) are \cite%
{KotsakisL}:%
\begin{align*}
\ddot{\lambda}+\frac{3}{2}\dot{\lambda}^{2}+\frac{3}{8}(\dot{\beta}_{1}^{2}+%
\dot{\beta}_{2}^{2})+\frac{1}{4}\dot{\phi}^{2}-\frac{1}{12}e^{-3\lambda }%
\frac{\partial }{\partial \lambda }\left( e^{3\lambda }R^{\ast }\right) -%
\frac{1}{2}V(\phi )& =0 \\
\ddot{\beta}_{1}+3\dot{\lambda}\dot{\beta}_{1}+\frac{1}{3}\frac{\partial
R^{\ast }}{\partial \beta _{1}}& =0 \\
\ddot{\beta}_{2}+3\dot{\lambda}\dot{\beta}_{2}+\frac{1}{3}\frac{\partial
R^{\ast }}{\partial \beta _{2}}& =0 \\
\ddot{\phi}+3\dot{\phi}\dot{\lambda}+\frac{\partial V}{\partial \phi }& =0
\end{align*}%
where a dot over a symbol indicates derivative with respect to $t.$

In the following we apply Theorem \ref{The general conservative system} and
Theorem \ref{The Noether symmetries of a conservative system} and compute
the Lie and the Noether point symmetries of class A Bianchi models. The Lie
and the Noether point symmetries of Bianchi class A models with a scalar
field have also been computed in \cite{KotsakisL,Vak07,Cap97M}. However, as
it will be shown, these studies are not complete, in the sense that they
have not found all Noether symmetries. Furthermore our approach is entirely
different than the classical Lie approach employed in these works. Finally
it is general and can be applied without difficulty to Class B spacetimes.

We consider the four dimensional Riemannian space with coordinates $%
x^{i}=\left( \lambda ,\beta _{1},\beta _{2},\phi \right) $ and metric%
\begin{equation}
ds^{2}=e^{3\lambda }\left( 12d\lambda ^{2}-3d\beta _{1}^{2}-3d\beta
_{2}^{2}-2d\phi ^{2}\right) .  \label{BCA.10}
\end{equation}%
The metric (\ref{BCA.10}) is the conformally flat FRW spacetime whose
special projective algebra consists of the non gradient KVs \cite%
{MaartensAC,MM86}%
\begin{align*}
Y^{1}& =\partial _{\beta _{1}},~Y^{2}=\partial _{\beta _{2}},~Y^{3}=\partial
_{\phi },~Y^{4}=\beta _{2}\partial _{\beta _{1}}-\beta _{1}\partial _{\beta
_{2}} \\
~Y^{5}& =\phi \partial _{\beta _{1}}-\frac{3}{2}\beta _{1}\partial _{\phi
},~Y^{6}=\phi \partial _{\beta _{2}}-\frac{3}{2}\beta _{2}\partial _{\phi }
\end{align*}%
and the gradient HV
\begin{equation*}
H^{i}=\frac{2}{3}\partial _{\lambda }~,~\psi =1.
\end{equation*}

The Lagrangian (\ref{BCA.5}) is written:%
\begin{equation*}
L=T-U(x^{i})
\end{equation*}%
where $T=\frac{1}{2}g_{ij}\dot{x}^{i}\dot{x}^{i}$ is the geodesic
Lagrangian, $U(x^{i})$ is the potential function%
\begin{equation}
U(x^{i})=-e^{3\lambda }\left( V\left( \phi \right) +R^{\ast }\right)
\label{BCA.11}
\end{equation}%
and we have used the fact that the curvature scalar does not depend on the
derivatives of the coordinates $\beta _{1},\beta _{2}.$ Now we apply Theorem %
\ref{The general conservative system} and Theorem \ref{The Noether
symmetries of a conservative system} to determine the Lie and the point
Noether symmetries of the dynamical system with Lagrangian (\ref{BCA.5}).

In order to compute the potential $U(x^{i})$ we need the expression of $%
R^{\ast}$ for each Bianchi type. We find for the Class A models

Bianchi I: $R^{\ast}=0$

Bianchi II: $R^{\ast}=-e^{\left( 2\beta_{1}-\lambda\right) }$

Bianchi VI$_{0}$: Class A. , $R^{\ast}=-\frac{1}{2}e^{-2\lambda}\left(
e^{4\beta_{1}}+e^{-2\left( \beta_{1}-\sqrt{3}\beta_{2}\right)
}+2e^{\beta_{1}+\sqrt{3}\beta_{2}}\right) $

Bianchi VII$_{0}$: Class A., $R^{\ast}=-\frac{1}{2}e^{-2\lambda}\left(
e^{4\beta_{1}}+e^{-2\left( \beta_{1}-\sqrt{3}\beta_{2}\right)
}-2e^{\beta_{1}+\sqrt{3}\beta_{2}}\right) $

Bianchi VIII: $R^{\ast}=-\frac{1}{2}e^{-2\lambda}\left(
e^{4\beta_{1}}+e^{-2\beta_{1}}\left( e^{\sqrt{3}\beta_{2}}+e^{-\sqrt{3}%
\beta_{2}}\right) ^{2}-2e^{\beta_{1}}\left( e^{\sqrt{3}\beta_{2}}-e^{-\sqrt{3%
}\beta_{2}}x\right) \right) $

Bianchi IX: \ $R^{\ast}=-\frac{1}{2}e^{-2\lambda}\left(
e^{4\beta_{1}}+e^{-2\beta_{1}}\left( e^{\sqrt{3}\beta_{2}}+e^{-\sqrt{3}%
\beta_{2}}\right) ^{2}-2e^{\beta_{1}}\left( e^{\sqrt{3}\beta_{2}}-e^{-\sqrt{3%
}\beta_{2}}x\right) \right) +1$.

We determine the Lie and the Noether point symmetries in the following cases:

Case 1. Vacuum. In this case $\phi=$constant and the metric (\ref{BCA.10})
reduces to the three dimensional FRW\ metric.

Case 2. Zero potential $V\left( \phi\right) =0,$ $\dot{\phi}\neq0$

Case 3. Constant Potential $V\left( \phi\right) =$constant, $\dot{\phi}\neq0$

Case 4. Arbitrary Potential $V\left( \phi\right) ,~\dot{\phi}\neq0.$

\subsection{Bianchi I}

\textbf{Case 1. }

In this case $\dot{\phi}=0$, $V\left( \phi \right) =0$ and the Lagrangian
becomes $~L=e^{3\lambda }\left[ 6\dot{\lambda}^{2}-\frac{3}{2}\left( \dot{%
\beta}_{1}^{2}+\dot{\beta}_{2}^{2}\right) \right] ~$hence the potential $%
U(x^{\mu })=0$ where $x^{\mu }=(\lambda ,\beta _{1},\beta _{2}).$ The
auxiliary metric is $ds^{2}=e^{3\lambda }\left( 12d\lambda ^{2}-3d\beta
_{1}^{2}-3d\beta _{2}^{2}\right) .$ The special PCs of this metric are the
non gradient KVs~$Y^{1},~Y^{2},~Y^{4}~$and the gradient HV~$H^{i}$ . \newline
From Theorem \ref{The general conservative system} we find that the Lie
point symmetries are the vectors
\begin{equation*}
\partial _{t},~t\partial _{t},~Y^{1},~Y^{2},~Y^{4},H^{i},~t^{2}\partial
_{t}+tH^{i}
\end{equation*}%
which coincide with those found in \cite{KotsakisL}. From Theorem \ref{The
Noether symmetries of a conservative system} we find that the Noether point
\ symmetries are
\begin{equation*}
\partial _{t},~Y^{1},~Y^{2},~Y^{4},~2t\partial _{t}+H^{i},t^{2}\partial
_{t}+tH^{i}
\end{equation*}%
i.e. we find two more Noether symmetries than \cite{KotsakisL}.

\textbf{Case 2}. $V\left( \phi\right) =0\,,$ $\dot{\phi}\neq0$

In this case the Lagrangian is~$~L=e^{3\lambda }\left[ 6\dot{\lambda}^{2}-%
\frac{3}{2}\left( \dot{\beta}_{1}^{2}+\dot{\beta}_{2}^{2}\right) -\dot{\phi}%
^{2}\right] ~$and the potential function $U(x^{i})=0.$ The auxiliary metric
is (\ref{BCA.10}).~From Theorem \ref{The general conservative system} we
find that the Lie point \ symmetries are
\begin{equation*}
\partial _{t},~~t\partial
_{t},~Y^{1},~Y^{2},~Y^{3},~Y^{4},~Y^{5},~Y^{6},~H^{i},~t^{2}\partial
_{t}+tH^{i}~
\end{equation*}%
and coincide with those found in \cite{KotsakisL}. Application of Theorem %
\ref{The Noether symmetries of a conservative system} gives that the Noether
point \ symmetries are

\begin{equation*}
\partial _{t},~Y^{1},~Y^{2},~Y^{3},~Y^{4},~Y^{5},~Y^{6},~2t\partial
_{t}+H^{i},t^{2}\partial _{t}+tH^{i}~
\end{equation*}%
i.e. two more from the ones found in \cite{KotsakisL}.

\bigskip\textbf{Case 3.} $V\left( \phi\right) =C_{0},~\dot{\phi}\neq0$

The Lagrangian is$~L=e^{3\lambda }\left[ 6\dot{\lambda}^{2}-\frac{3}{2}%
\left( \dot{\beta}_{1}^{2}+\dot{\beta}_{2}^{2}\right) -\dot{\phi}^{2}+C_{0}%
\right] ~$hence the potential~$U(x^{i})=-C_{0}e^{3\lambda }.~$Using Theorem %
\ref{The general conservative system} we find that the Lie point \
symmetries are%
\begin{equation*}
\partial _{t},~Y^{1},~Y^{2},~Y^{3},~Y^{4},~Y^{5},~Y^{6},~H^{i},~\frac{1}{C}%
e^{\pm Ct}\partial _{t}\pm e^{\pm Ct}H^{i}
\end{equation*}%
where $C=\frac{\sqrt{6C_{0}}}{2}$ , and coincide with those found in \cite%
{KotsakisL}. Application of Theorem \ref{The Noether symmetries of a
conservative system} gives the Noether point \ symmetries
\begin{equation*}
\partial _{t},~Y^{1},~Y^{2},~Y^{3},~Y^{4},~Y^{5},~Y^{6},~\frac{1}{C}e^{\pm
Ct}\partial _{t}\pm e^{\pm Ct}H^{i}.~
\end{equation*}%
\ Again we find two more Noether symmetries than \cite{KotsakisL}.\newline

\textbf{Case 4.} $V\left( \phi\right) =$arbitrary $\dot{\phi}\neq0$

In this case the Lagrangian is~$L=e^{3\lambda }\left[ 6\dot{\lambda}^{2}-%
\frac{3}{2}\left( \dot{\beta}_{1}^{2}+\dot{\beta}_{2}^{2}\right) -\dot{\phi}%
^{2}+V\left( \phi \right) \right] ~$and the potential~$U(x^{i})=-e^{3\lambda
}V\left( \phi \right) $. Application of Theorem \ref{The general
conservative system} gives the Lie point \ symmetries~$~\partial
_{t}~,~Y^{1}~,~Y^{2}~,~Y^{4}~~,~H^{i}~\ $and application of Theorem \ref{The
Noether symmetries of a conservative system} gives the Noether symmetries~$%
\partial _{t}~,~Y^{1}~,~Y^{2}~,~Y^{4}.$

Working in a similar manner we compute the Lie and the Noether point \
symmetries of all Bianchi class A homogenous spacetimes. The results of the
calculations are collected in the following Tables.

%TCIMACRO{\TeXButton{B}{\begin{table}[tbp] \centering}}%
%BeginExpansion
\begin{table}[tbp] \centering%
%EndExpansion
\caption{Lie and Noether Symmetries of Bianchi I scalar field}%
\begin{tabular}{lll}
\hline\hline
\textbf{Bianchi I} & Noether Symmetries & Lie Symmetries \\ \hline
Case 1 & $\partial _{t},~Y^{1},~Y^{2},~Y^{4}$ & $\partial _{t},~t\partial
_{t},~Y^{1},~Y^{2},~Y^{4},~H^{i}$ \\
& $2t\partial _{t}+H^{i},~t^{2}\partial _{t}+tH^{i}$ & $t^{2}\partial
_{t}+tH^{i}$ \\
Case 2 & $\partial _{t},~Y^{1},~Y^{2},~Y^{3},~Y^{4},~Y^{5},~Y^{6}$ & $%
\partial _{t},~t\partial _{t},~Y^{1},~Y^{2},~Y^{3},~Y^{4},Y^{5},~Y^{6}~$ \\
& $~2t\partial _{t}+H^{i},~t^{2}\partial _{t}+tH^{i}$ & $H^{i},~t^{2}%
\partial _{t}+tH^{i}$ \\
Case 3 & $\partial _{t},~Y^{1},~Y^{2},~Y^{3},~Y^{4},~Y~,~Y^{6}$ & $\partial
_{t},~Y^{1},~Y^{2},~Y^{3},~Y^{4},~Y^{5},~Y^{6},~H^{i}$ \\
& $\frac{1}{C}e^{\pm Ct}\partial _{t}\pm e^{\pm Ct}H^{i}$ & $\frac{1}{C}%
e^{\pm Ct}\partial _{t}\pm e^{\pm Ct}H^{i}$ \\
Case 4 & $\partial _{t},~Y^{1},~Y^{2},~Y^{4}$ & $\partial
_{t},~Y^{1},~Y^{2},~Y^{4}~,~H^{i}$ \\ \hline\hline
\end{tabular}%
\label{8B1}%
%TCIMACRO{\TeXButton{E}{\end{table}}}%
%BeginExpansion
\end{table}%
%EndExpansion

%TCIMACRO{\TeXButton{B}{\begin{table}[tbp] \centering}}%
%BeginExpansion
\begin{table}[tbp] \centering%
%EndExpansion
\caption{Lie and Noether Symmetries of Bianchi II scalar field}%
\begin{tabular}{lll}
\hline\hline
\textbf{Bianchi II} & Noether Symmetries & Lie Symmetries \\ \hline
Case 1 & $\partial _{t},~Y^{2},~6t\partial _{t}+3H^{i}-5Y^{1}$ & $\partial
_{t},~Y^{2},~\frac{1}{3}t\partial _{t}+H^{i},~t\partial _{t}-Y^{1}~$ \\
Case 2 & $\partial _{t},~Y^{2},~Y^{3},~Y^{6},~6t\partial _{t}+3H^{i}-5Y^{1}$
& $\partial _{t},~Y^{2},~~Y^{3},~Y^{6},~\frac{1}{3}t\partial
_{t}+H^{i},~t\partial _{t}-Y^{1}$ \\
Case 3 & $\partial _{t},~Y^{2},~Y^{3},~Y^{6}$ & $\partial
_{t},~Y^{2},~Y^{3},~Y^{6},~3H^{i}+Y^{1}$ \\
Case 4 & $\partial _{t},~Y^{2}$ & $\partial _{t},~Y^{2},~3H^{i}+Y^{1}$ \\
\hline\hline
\end{tabular}%
\label{TableKey copy(1)}%
%TCIMACRO{\TeXButton{E}{\end{table}}}%
%BeginExpansion
\end{table}%
%EndExpansion

%TCIMACRO{\TeXButton{B}{\begin{table}[tbp] \centering}}%
%BeginExpansion
\begin{table}[tbp] \centering%
%EndExpansion
\caption{Lie and Noether Symmetries of Bianchi VI/VII scalar field}%
\begin{tabular}{lll}
\hline\hline
\textbf{Bianchi VI}$_{0}~/$\textbf{\ VII}$_{0}$ & Noether Symmetries & Lie
Symmetries \\ \hline
Case 1 & $\partial _{t},~~6t\partial _{t}+3H^{i}-2Y^{1}-2\sqrt{3}Y^{2}$ & $%
\partial _{t},~H^{i}+\frac{1}{3}Y^{1}+\frac{\sqrt{3}}{3}Y^{2},~2t\partial
_{t}-Y^{1}-\sqrt{3}Y^{2}$ \\
Case 2 & $\partial _{t},~Y^{3},~6t\partial _{t}+3H^{i}-2Y^{1}-2\sqrt{3}Y^{2}$
& $\partial _{t},~Y^{3},~H^{i}+\frac{1}{3}Y^{1}+\frac{\sqrt{3}}{3}%
Y^{2},~2t\partial _{t}-Y^{1}-\sqrt{3}Y^{2}$ \\
Case 3 & $\partial _{t},~Y^{3}$ & $\partial _{t},~Y^{3},~H^{i}+\frac{1}{3}%
Y^{1}+\frac{\sqrt{3}}{3}Y^{2}$ \\
Case 4 & $\partial _{t}~$ & $\partial _{t},~H^{i}+\frac{1}{3}Y^{1}+\frac{%
\sqrt{3}}{3}Y^{2}$ \\ \hline\hline
\end{tabular}%
\label{TableKey copy(2)}%
%TCIMACRO{\TeXButton{E}{\end{table}}}%
%BeginExpansion
\end{table}%
%EndExpansion

%TCIMACRO{\TeXButton{B}{\begin{table}[tbp] \centering}}%
%BeginExpansion
\begin{table}[tbp] \centering%
%EndExpansion
\caption{Lie and Noether Symmetries of Bianchi VIII/IX scalar field}%
\begin{tabular}{lll}
\hline\hline
\textbf{Bianchi VIII} & Noether Symmetries & Lie Symmetries \\ \hline
Case 1 & $\partial _{t}$ & $\partial _{t},~\frac{2}{3}t\partial _{t}~+H^{i}$
\\
Case 2 & $\partial _{t},~Y^{3}$ & $\partial _{t},~Y^{3},~\frac{2}{3}%
t\partial _{t}~+H^{i}~$ \\
Case 3 & $\partial _{t},~Y^{3}$ & $\partial _{t},~Y^{3}$ \\
Case 4 & $\partial _{t}$ & $\partial _{t}$ \\ \hline
&  &  \\ \hline\hline
\textbf{Bianchi IX} & Noether Symmetries & Lie Symmetries \\ \hline
Case 1 & $\partial _{t}$ & $\partial _{t}$ \\
Case 2 & $\partial _{t},~Y^{3}$ & $\partial _{t},~Y^{3}$ \\
Case 3 & $\partial _{t},~Y^{3}$ & $\partial _{t},~Y^{3}$ \\
Case 4 & $\partial _{t}$ & $\partial _{t}$ \\ \hline\hline
\end{tabular}%
\label{8B9}%
%TCIMACRO{\TeXButton{E}{\end{table}}}%
%BeginExpansion
\end{table}%
%EndExpansion

From the above tables we infer that the Lie point \ symmetries we found
coincide with those of \cite{KotsakisL}. Some differences which appear are
due to linear combinations of symmetries from the other set. The same does
not apply to the Noether symmetries, for which we found a larger number than
in \cite{KotsakisL}.

We note that in Case 1 the Noether symmetry $2t+H$ is a combination of two
Lie point \ symmetries, which is peculiar since the Noether point \
symmetries are considered to be a direct subset of Lie symmetries. This is
explained as follows. The addition of a Killing vector to a homothetic
vector retains a homothetic vector. Therefore a Lie symmetry due to a
Killing vector and one due to a homothetic vector is possible to give a Lie
point symmetry due to a homothetic vector. Concerning \cite{Vak07} from the
examination of the Tables \ref{8B1}-\ref{8B9} and the results they present
it can be seen that they loose the Noether symmetries which have a component
along $\partial _{t}~$direction.

\subsubsection{The Bianchi I\ model with scalar field and exponential
potential}

\label{Bianchi I scalar field}

We consider a scalar field described by an exponential potential $V\left(
\phi \right) =e^{-d\phi }$ in a Bianchi class A spacetime. For this
potential all the models admit the extra Lie symmetry $t\partial _{t}+\frac{2%
}{d}Y^{3}.$ Concerning the Noether symmetries we have an extra Noether
symmetry only for the types I. II, VI$_{0},$ VII $_{0}$ as follows:

Type I%
\begin{equation*}
t\partial_{t}+\frac{1}{2}H^{i}+\frac{2}{d}Y^{3}
\end{equation*}

Type II%
\begin{equation*}
t\partial_{t}+\frac{1}{2}H^{i}-\frac{5}{6}Y^{1}+\frac{2}{d}Y^{3}
\end{equation*}

Type VI$_{0}~/$\ VII$_{0}$%
\begin{equation*}
6t\partial_{t}+3H^{i}-2Y^{1}-2\sqrt{3}Y^{2}+\frac{6}{d}Y^{3}.
\end{equation*}

In the following we concentrate on the Bianchi I\ model and make use of the
extra Noether integral to define a transformation \cite{Russo} which allows
the determination of the analytic form of the metric.

The Lagrangian describing a scalar field with exponential potential in an
empty Bianchi I spacetime is~%
\begin{equation}
L=e^{3\lambda}\left[ 6\dot{\lambda}^{2}-\frac{3}{2}\left( \dot{\beta}%
_{1}^{2}+\dot{\beta}_{2}^{2}\right) -\dot{\phi}^{2}+V_{0}e^{-d\phi}\right]
\end{equation}
and the corresponding Hamiltonian vanishes. The metric defined by this
Lagrangian is (\ref{BCA.10}).

Using the transformation $\lambda=\frac{1}{3}\ln\left( \frac{a^{3}}{2}%
\right) $ the Lagrangian becomes~%
\begin{equation*}
L=3a\dot{a}^{2}-\frac{1}{2}a^{3}\dot{\phi}-\frac{3}{4}a^{3}\left( \dot{\beta
}_{1}^{2}+\dot{\beta}_{2}^{2}\right) +\frac{1}{2}V_{0}a^{3}e^{-d\phi}.
\end{equation*}
We change variables by means of the transformation
\begin{equation*}
u=\frac{\sqrt{6}}{4}\phi+\frac{1}{2}\ln\left( a^{3}\right) ~,~v=-\frac {%
\sqrt{6}}{4}\phi+\frac{1}{2}\ln\left( a^{3}\right) ~
\end{equation*}
and the Lagrangian takes the form%
\begin{equation*}
L\left( u,v,\dot{u},\dot{v}\right) =e^{\left( u+v\right) }\left( \frac {8}{3}%
\dot{u}\dot{v}-\frac{3}{2}\left( \dot{\beta}_{1}^{2}+\dot{\beta}%
_{2}^{2}\right) +V_{0}e^{-2K\left( u-v\right) }\right) .
\end{equation*}
Next we change the time coordinate as follows
\begin{equation*}
\frac{d\tau}{dt}=\sqrt{\frac{3V_{0}}{8}}e^{-K\left( u-v\right) }.
\end{equation*}
We make one more change $\beta_{1}=\sqrt{\frac{9}{16}}B_{1}~,~\beta_{2}=%
\sqrt{\frac{9}{16}}B_{2}~$and in the coordinates $\tau,u,v,B_{1},B_{2}$ the
Lagrangian is:
\begin{equation}
L\left( \tau,u,v,B_{1},B_{2}\right) =e^{\left( u+v\right) }e^{-K\left(
u-v\right) }\left( u^{\prime}v^{\prime}-\left(
B_{1}^{\prime2}+B_{2}^{\prime2}\right) +1\right)
\end{equation}
where $u^{\prime}=\frac{du}{d\tau},$ $v^{\prime}=\frac{dv}{d\tau}%
,~B_{1,2}^{\prime}=\frac{dB_{1,2}}{d\tau}.$ The equations of motion are
\begin{align}
u^{\prime\prime}+\left( 1-K\right) u^{\prime2}+\left( 1+K\right)
B_{1}^{^{\prime}2}+\left( 1+K\right) B_{2}^{^{\prime}2}-\left( 1+K\right) &
=0 \\
v^{\prime\prime}+\left( 1+K\right) v^{\prime2}+\left( 1-K\right)
B_{1}^{^{\prime}2}+\left( 1-K\right) B_{2}^{^{\prime}2}-\left( 1-K\right) &
=0 \\
B_{1}^{^{\prime\prime}}+\left( 1-K\right) B_{1}^{\prime}u^{\prime}+\left(
1+K\right) B_{1}^{^{\prime}}v^{\prime} & =0 \\
B_{2}^{^{\prime\prime}}+\left( 1-K\right) B_{2}^{^{\prime}}u^{\prime
}+\left( 1+K\right) B_{2}^{\prime}v^{\prime} & =0
\end{align}
with constrain (the zero Hamiltonian) $u^{\prime}v^{\prime}-\left(
B_{1}^{\prime2}+B_{2}^{\prime2}\right) -1=0.~$ These expressions are
symmetric in $B_{1},B_{2}$ therefore we set $B_{1}=B_{2.}=\frac{1}{2}B$ and
the system of equations of motion becomes%
\begin{align}
u^{\prime\prime}+\left( 1-K\right) u^{\prime2}+\left( 1+K\right)
B^{^{\prime}2}-\left( 1+K\right) & =0 \\
v^{\prime\prime}+\left( 1+K\right) v^{\prime2}+\left( 1-K\right)
B^{^{\prime}2}-\left( 1-K\right) & =0 \\
B^{^{\prime\prime}}+\left( 1-K\right) B^{\prime}u^{\prime}+\left( 1+K\right)
B^{^{\prime}}v^{\prime} & =0
\end{align}
with constraint
\begin{equation}
u^{\prime}v^{\prime}-B^{\prime2}-1=0.
\end{equation}

We consider two cases $K=1$ and $K\neq1$.

For $K=1$ the metric is~$ds^{2}=2e^{2v}\left( dudv-dB^{2}\right) .~$The
potential is the gradient KV $V\left( u,v\right) =-e^{2v}$ and the solution
of the system is%
\begin{align}
u\left( \tau\right) & =\tau^{2}+2c_{1}^{2}\ln\left( \tau+c_{2}\right)
+2tc_{2} \\
v\left( \tau\right) & =\frac{1}{2}\ln\left( 2\tau+c_{2}\right) \\
B\left( \tau\right) & =c_{1}\ln\left( \tau+c_{2}\right) .
\end{align}

For $K\neq1$ we make use of the extra Noether integral $e^{\left( u+v\right)
}e^{-K\left( u-v\right) }\dot{B}=C$ and solve the system. We find that $K=%
\frac{1-C^{2}}{1+C^{2}}$ and that the system has two solutions. The first is:%
\begin{align}
u\left( \tau\right) & =\frac{1}{2\left( 1-K\right) }\ln\left( \frac{K-1}{1+K}%
\sin\left( 2\sqrt{K^{2}-1}\tau\right) \right) \\
v\left( \tau\right) & =\frac{1}{2\left( 1+K\right) }\ln\left( \sin\left( 2%
\sqrt{K^{2}-1}\tau\right) \right) \\
B\left( \tau\right) & =\frac{i}{2\sqrt{K^{2}-1}}\text{{\normalsize arctanh}}%
\left( \cos\left( 2\sqrt{K^{2}-1}\tau\right) \right)
\end{align}
and the second:%
\begin{align}
u\left( \tau\right) & =\frac{1}{2\left( 1-K\right) }\ln\left( \frac{K-1}{1+K}%
\cos\left( 2\sqrt{K^{2}-1}\tau\right) \right) \\
v\left( \tau\right) & =\frac{1}{2\left( 1+K\right) }\ln\left( \cos\left( 2%
\sqrt{K^{2}-1}\tau\right) \right) \\
B\left( \tau\right) & =\frac{i}{2\sqrt{K^{2}-1}}\text{{\normalsize arctanh}}%
\left( \frac{1}{\sin\left( 2\sqrt{K^{2}-1}\tau\right) }\right) .
\end{align}

These solutions complement the results of \cite{ChrietalScalar}

\section{Conclusion}

In this chapter we have studied conformally related metrics and Lagrangians
in the context of scalar--tensor cosmology. We have found that to every
non-minimally coupled scalar field we can associate a unique minimally
coupled scalar field in a conformally related space with an appropriate
potential. The existence of such a connection can be used in order to study
the dynamical properties of the various cosmological models, since the field
equations of a non-minimally coupled scalar field are the same, at the
conformal level, of the field equations of the minimally coupled scalar
field. The above propositions can be extended to general Riemannian spaces
in n-dimensions. Furthermore, we have identified the Noether point
symmetries and the analytic solutions of the equations of motion in the
context of a minimally coupled and a non minimally coupled scalar field in a
FRW spacetime and we have classified the Noether symmetries of the field
equations in Bianchi class A models with a minimally coupled scalar field.
We found that there is a rather large class of hyperbolic and exponential
potentials which admit extra (beyond the $\partial _{t}$) Noether pont
symmetries which lead to integrals of motions.

In general, the Noether point symmetries play an important role in physics
because they can be used to simplify a given system of differential
equations as well as to determine the integrability of the system. The
latter will provide the necessary platform in order to solve the equations
of motion analytically and thus to obtain the evolution of the physical
quantities. In cosmology, such a method is extremely relevant in order to
compare cosmographic parameters, such as scale factor, Hubble expansion
rate, deceleration parameter, density parameters with observational
constrains.

However, since the Noether point symmetries are generated from the kinetic
metric of the Lagrangian, they are not only a criterion for the
integrability of the system and a method to determine analytical solutions
of the field equations, but they are also a geometric criterion since by
demanding the existence of Noether symmetries we let the geometry to select
the dynamics, i.e. the dark energy model.

In the following chapters we study the Noether symmetries of the $f\left(
R\right) $ and the $f\left( T\right) $ theories of gravity.

%TCIMACRO{\TeXButton{newpage}{\newpage}}%
%BeginExpansion
\newpage%
%EndExpansion

%TCIMACRO{\TeXButton{BeginApp}{\begin{subappendices}}}%
%BeginExpansion
\begin{subappendices}%
%EndExpansion

\section{Relating the ranges of the constants $F_{0}$ and $\left\vert
k\right\vert $}

\label{appendix1}

We consider the following ranges for the constants $F_{0}~$\ and $%
~\left\vert k\right\vert $

a. $F_{0}>0.~$In this case we have $\left\vert k\right\vert =\frac{1}{3}%
\sqrt{\frac{F_{0}}{F_{0}+1}}$from which follows%
\begin{equation*}
\left\vert k\right\vert ^{2}<1~,~F_{0}>0
\end{equation*}

b. $-1<F_{0}<0.$ In this case we have$\left\vert k\right\vert =\frac{1}{3}%
\sqrt{\frac{\left\vert F_{0}\right\vert }{F_{0}+1}}$from which follows%
{\LARGE .}%
\begin{equation*}
\left\vert k\right\vert ^{2}=1~,~F_{0}=-\frac{9}{10}~;~\left\vert
k\right\vert ^{2}<1,~F_{0}>-\frac{9}{10};~\left\vert k\right\vert
^{2}>1~,~F_{0}\in \left( -1,-\frac{9}{10}\right)
\end{equation*}

c. $F_{0}<1$ $.$ Then $\left\vert k\right\vert =\frac{1}{3}\sqrt{\frac{%
\left\vert F_{0}\right\vert }{\left\vert F_{0}\right\vert -1}}$from which
follows%
\begin{equation*}
\left\vert k\right\vert ^{2}=1~,~F_{0}=-\frac{9}{8}~;~\left\vert
k\right\vert ^{2}<1~,~F_{0}<-\frac{9}{8};\left\vert k\right\vert
^{2}>1~,~F_{0}\in \left( -\frac{9}{8},-1\right) .
\end{equation*}%
The ranges of $\left\vert k\right\vert $ are needed because they select
different groups of Killing vectors of the metric (\ref{CLN.30}).

\section{Computation of the gradient functions $S_{1}\left( r,\protect\theta %
\right) ,$ $S_{2}\left( r,\protect\theta \right) $}

\label{appendix2}

The functions $S_{1}\left( r,\theta \right) ,$ $S_{2}\left( r,\theta \right)
$ are the canonical coordinates $x,y$ for the KVs $K^{1},K^{2}.$ The
canonical coordinates are defined with the requirement $K^{1}=\frac{\partial
}{\partial x},K^{2}=\frac{\partial }{\partial y}$ and are computed as
follows. We have the system of differential equations:%
\begin{eqnarray*}
\frac{\partial }{\partial y} &=&\frac{e^{\left( 1-k\right) \theta }r^{k}}{%
N_{0}^{2}}\left( -\partial _{r}+\frac{1}{r}\partial _{\theta }\right) \\
\frac{\partial }{\partial x} &=&\frac{e^{-\left( 1+k\right) \theta }r^{-k}}{%
N_{0}^{2}}\left( \partial _{r}+\frac{1}{r}\partial _{\theta }\right) .
\end{eqnarray*}%
To solve it we consider the associated Lagrange system and write:%
\begin{equation*}
\frac{dy}{1}=-N_{0}^{2}\frac{dr}{r^{k}e^{\left( 1-k\right) \theta }}%
=N_{0}^{2}\frac{d\theta }{r^{k-1}e^{\left( 1-k\right) \theta }}
\end{equation*}%
The first equation gives:%
\begin{equation*}
y=-N_{0}^{2}e^{\left( -1+k\right) \theta }\int \frac{dr}{r^{k}}%
=-N_{0}^{2}e^{\left( -1+k\right) \theta }\frac{r^{1-k}}{1-k}+\Phi (\theta )
\end{equation*}%
The second equation gives:%
\begin{equation*}
y=N_{0}^{2}r^{-k+1}\int \frac{d\theta }{e^{\left( 1-k\right) \theta }}%
=N_{0}^{2}\frac{r^{1-k}}{-1+k}e^{\left( 1-k\right) \theta }+\Phi _{1}(r)
\end{equation*}%
hence we have (this is the $S_{2}\left( r,\theta \right) $):%
\begin{equation*}
y=N_{0}^{2}\frac{r^{1-k}}{k-1}e^{-\left( 1-k\right) \theta }.
\end{equation*}%
For the other coordinate we have:%
\begin{equation*}
\frac{dx}{1}=N_{0}^{2}\frac{dr}{r^{-k}e^{-\left( 1+k\right) \theta }}%
=N_{0}^{2}\frac{d\theta }{r^{-k-1}e^{-\left( 1+k\right) \theta }}
\end{equation*}%
The first equation gives:%
\begin{equation*}
x=N_{0}^{2}e^{\left( 1+k\right) \theta }\int \frac{dr}{r^{-k}}%
=N_{0}^{2}e^{\left( 1+k\right) \theta }\frac{1}{1+k}r^{1+k}+\Phi (\theta )
\end{equation*}%
and the second equation gives:%
\begin{equation*}
x=N_{0}^{2}r^{1+k}\int \frac{d\theta }{e^{-\left( 1+k\right) \theta }}=\frac{%
1}{1+k}N_{0}^{2}r^{1+k}e^{\left( 1+k\right) \theta }+\Phi _{1}(r)
\end{equation*}%
hence (this is the $S_{1}\left( r,\theta \right) $):
\begin{equation*}
x=\frac{1}{1+k}N_{0}^{2}r^{1+k}e^{\left( 1+k\right) \theta }.
\end{equation*}%
Therefore, we have the canonical coordinates%
\begin{equation*}
x=\frac{r^{1+k}e^{-\left( 1+k\right) \theta }}{k+1}~,~y=\frac{%
r^{1-k}e^{\left( -1+k\right) \theta }}{k-1}
\end{equation*}

%TCIMACRO{\TeXButton{EndApp}{\end{subappendices}}}%
%BeginExpansion
\end{subappendices}%
%EndExpansion

\chapter{Using Noether point symmetries to specify $f(R)$ gravity\label%
{chapter9}}

\section{Introduction}

In chapter \ref{chapter8} we used the Noether point symmetries of the scalar
tensor theories in order to constrain the dark energy models. Except the
sclarar field cosmology there are other possibilities to explain the present
accelerating stage. For instance, one may consider that the dynamical
effects attributed to dark energy can be resembled by the effects of a
nonstandard gravity theory. In other words, the present accelerating stage
of the universe can be driven only by cold dark matter, under a modification
of the nature of gravity. Such a reduction of the so-called dark sector is
naturally obtained in the $f(R)$ gravity theories \cite{Sotiriou}. In the
original nonstandard gravity models, one modifies the Einstein-Hilbert
action with a general function $f(R)$ of the Ricci scalar $R$. The $f(R)$
approach is a relatively simple but still a fundamental tool used to explain
the accelerated expansion of the universe. A pioneering fundamental approach
was proposed long ago with $f(R)=R+mR^{2}$\thinspace\ \cite{Star80}. Later
on, the $f(R)$ models were further explored from different points of view in
\cite{Amendola,Amendola2,Carrol} and indeed a large number of functional
forms of $f(R)$ gravity is currently available in the literature \cite%
{Cap02,Hu07,Starobinsky007,Tsuj,AmeTsuj}.

In the following, we will use the Lie and the Noether point symmetries in
order to specify the $f\left( R\right) $ gravity in a FRW spacetime and use
the first integrals of these models to determine analytic solutions of their
field equations.

The structure of this chapter is as follows. The basic theoretical elements
of the problem are presented in section \ref{CMG}, where we also introduce
the basic FRW cosmological equations in the framework of $f(R)$ models. The
Noether point symmetries and their relevance to the $f(R)$ models are
discussed in section \ref{LieN}. In section \ref{Analsol} we provide
analytical solutions for those $f(R)$ models which are Liouville integrable
via Noether point symmetries. In section \ref{Nonf} \ we study the Noether
point symmetries in spatially non-flat $f(R)$ cosmological models. Finally,
we draw our main conclusions in section \ref{Conc}.

\section{Cosmology with a modified gravity}

\label{CMG}

Consider the modified Einstein-Hilbert action:
\begin{equation}
S=\int d^{4}x\sqrt{-g}\left[ \frac{1}{2k^{2}}f\left( R\right) +\mathcal{L}%
_{m}\right]  \label{action1}
\end{equation}%
where $\mathcal{L}_{m}$ is the Lagrangian of dust-like ($p_{m}=0$) matter
and $k^{2}=8\pi G$. Varying the action with respect to the metric\footnote{%
We use the metric i.e. the Hilbert variational approach.} we arrive at
\begin{equation}
(1+f^{\prime })G_{\nu }^{\mu }\,-\,g^{\mu \alpha }f_{_{R},\,\alpha \,;\,\nu
}+\left[ \frac{2\Box f^{\prime }-(f-Rf^{\prime })}{2}\right] \delta _{\;\nu
}^{\mu }=k^{2}\,T_{\nu }^{\mu }  \label{EE}
\end{equation}%
where the prime denotes derivative with respect to $R$, $G_{\nu }^{\mu }$ is
the Einstein tensor and $T_{\nu }^{\mu }$ is the ordinary energy-momentum
tensor of matter. Based on the matter era we treat the expanding universe as
a dust fluid which includes only cold dark matter with comoving observers $%
U^{\mu }=\delta _{0}^{\mu }$. Thus the energy momentum tensor becomes $%
T_{\mu \nu }=\rho _{m}U_{\mu }U_{\nu }$, where $\rho _{m}$ is the energy
density of the cosmic fluid.

Now, in the context of a flat FRW model the metric is
\begin{equation}
ds^{2}=-dt^{2}+a^{2}(t)(dx^{2}+dy^{2}+dz^{2}).  \label{SF.1}
\end{equation}%
The components of the Einstein tensor are computed to be:
\begin{equation}
G_{0}^{0}=-3H^{2},\;G_{b}^{a}=-\delta _{b}^{a}\left( 2\dot{H}+3H^{2}\right)
\;.  \label{EIN.1}
\end{equation}%
Inserting (\ref{EIN.1}) into the modified Einstein's field equations (\ref%
{EE}), for comoving observers, we derive the modified Friedman's equation
\begin{equation}
3f^{^{\prime }}H^{2}=k^{2}\rho _{m}+\frac{f^{^{\prime }}R-f}{2}%
-3Hf^{^{\prime \prime }}\dot{R}  \label{motion1}
\end{equation}

\begin{equation}
2f^{^{\prime }}\dot{H}+3f^{^{\prime }}H^{2}=-2Hf^{^{\prime \prime }}\dot{R}%
-\left( f^{^{\prime \prime \prime }}\dot{R}^{2}+f^{^{\prime \prime }}\ddot{R}%
\right) -\frac{f-Rf^{^{\prime }}}{2}.  \label{motion2}
\end{equation}%
The contraction of the Ricci tensor provides the Ricci scalar
\begin{equation}
R=g^{\mu \nu }R_{\mu \nu }=6\left( \frac{\ddot{a}}{a}+\frac{\dot{a}^{2}}{%
a^{2}}\right) =6(2H^{2}+\dot{H})\;.  \label{SF.3b}
\end{equation}%
The Bianchi identity $\bigtriangledown ^{\mu }\,{T}_{\mu \nu }=0$ leads to
the matter conservation law:%
\begin{equation}
\dot{\rho}_{m}+3H\rho _{m}=0\,  \label{frie3}
\end{equation}%
whose solution is
\begin{equation}
\rho _{m}=\rho _{m0}a^{-3}.
\end{equation}%
Note that the over-dot denotes derivative with respect to the cosmic time $t$
and $H\equiv \dot{a}/a$ is the Hubble parameter.

If we consider $f(R)=R$ then the field equations (\ref{EE}) boil down to the
Einstein's equations. On the other hand, the concordance $\Lambda $
cosmology is fully recovered for $f(R)=R-2\Lambda $.

From the current analysis it becomes clear that unlike the standard Friedman
equations in Einstein's GR, the modified equations of motion (\ref{motion1})
and (\ref{motion2}) are complicated and thus it is difficult to solve them
analytically.

We would like to stress here that within the context of the metric formalism
the above $f(R)$ cosmological models must obey simultaneously some strong
conditions \cite{AmeBook}. These are: (i) $f^{^{\prime }}>0$ for $R\geq
R_{0}>0$, where $R_{0}$ is the Ricci scalar at the present time. If the
final attractor is a de Sitter point we need to have $f^{^{\prime }}>0$ for $%
R\geq R_{1}>0$, where $R_{1}$ is the Ricci scalar at the de Sitter point,
(ii) $f^{^{\prime \prime }}>0$ for $R\geq R_{0}>0$, (iii) $f(R)\approx
R-2\Lambda $ for $R\gg R_{0}$ and finally (iv) $0<\frac{Rf^{^{\prime \prime
}}}{f^{^{\prime }}}(r)<1$ at $r=-\frac{Rf^{^{\prime }}}{f}=-2$

\section{Modified gravity versus symmetries}

\label{MGvsSy}

In the last decade a large number of experiments have been proposed in order
to constrain dark energy and study its evolution. Naturally, in order to
establish the evolution of the dark energy (DE) (\textquotedblright
geometrical\textquotedblright\ in the current work) equation of state
parameter a realistic form of $H(a)$ is required while the included free
parameters must be constrained through a combination of independent DE
probes (for example SNIa, BAOs, CMB etc). However, a weak point here is the
fact that the majority of the $f(R)$ models appeared in the literature are
plagued with no clear physical basis and/or many free parameters. Due to the
large number of free parameters many such models could fit the data. The
proposed additional criterion of Noether point symmetry requirement is a
physically meaning-full geometric ansatz.

According to the theory of general relativity, the space-time Killing and
homothetic symmetries via the Einstein's field equations, are also
symmetries of the energy momentum tensor. Due to the fact that the $f(R)$
models provide a natural generalization of GR one would expect that the
theories of modified gravity must inherit the symmetries of the space-time
as the usual gravity (GR) does.

Furthermore, besides the geometric symmetries we have to consider the
dynamical symmetries, which are the symmetries of the field equations (Lie
symmetries). If the field equations are derived from a Lagrangian then there
is the special class of Lie symmetries, the Noether symmetries, which lead
to conserved currents or, equivalently, to first integrals of the equations
of motion. The Noether integrals are used to reduce the order of the field
equations or even to solve them. Therefore a sound requirement, which is
possible to be made in Lagrangian theories, is that they admit extra Noether
symmetries. This assumption is model independent, because it is imposed
after the field equations have been derived, therefore it does not lead to
conflict with the geometric symmetries while, at the same time, serves the
original purpose of a selection rule. Of course, it is possible that a
different method could be assumed and select another subset of viable
models. However, symmetry has always played a dominant role in Physics and
this gives an aesthetic and a physical priority to our proposal.

In the Lagrangian context, the main field equations (\ref{motion1}) and (\ref%
{motion2}), described in section \ref{CMG}, can be produced by the following
Lagrangian:
\begin{equation}
L\left( a,\dot{a},R,\dot{R}\right) =6af^{^{\prime }}~\dot{a}%
^{2}+6a^{2}f^{^{\prime \prime }}~\dot{a}\dot{R}+a^{3}\left( f^{^{\prime
}}R-f\right) \qquad  \label{SF.50}
\end{equation}%
in the space of the variables $\{a,R\}$. Using eq.(\ref{SF.50}) we obtain
the Hamiltonian of the current dynamical system
\begin{equation}
E=6af^{^{\prime }}~\dot{a}^{2}+6a^{2}f^{^{\prime \prime }}~\dot{a}\dot{R}%
-a^{3}\left( f^{^{\prime }}R-f\right)  \label{SF.60e1}
\end{equation}%
or
\begin{equation}
E=6a^{3}\left[ f^{^{\prime }}H^{2}-\frac{1}{6f^{^{\prime }}}\left( \left(
f^{^{\prime }}R-f\right) -6\dot{R}Hf^{^{\prime \prime }}\right) \right] \;.
\label{SF.61e}
\end{equation}%
Combining the first equation of motion (\ref{motion1}) with eq.(\ref{SF.61e}%
) we find
\begin{equation}
\rho _{m}=\frac{E}{2k^{2}}\;a^{-3}\;.  \label{Smm}
\end{equation}%
The latter equation together with $\rho _{m}=\rho _{m0}a^{-3}$ implies that
\begin{equation}
\rho _{m0}=\frac{E}{2k^{2}}\Rightarrow \Omega _{m}\rho _{cr,0}=\frac{E}{%
2k^{2}}\Rightarrow E=6\Omega _{m}H_{0}^{2}
\end{equation}%
where $\Omega _{m}=\rho _{m0}/\rho _{cr,0}$, $\rho _{cr,0}=3H_{0}^{2}/k^{2}$
is the critical density at the present time and $H_{0}$ is the Hubble
constant.

We note that the current Lagrangian eq.(\ref{SF.50}) is time independent
implying that the dynamical system is autonomous hence the Hamiltonian $E$
is conserved.

\section{Noether point symmetries of $f\left( R\right) $ gravity}

\label{LieN}

The Noether condition for the Lagrangian (\ref{SF.50}) is equivalent with
the following system of eight equations
\begin{eqnarray}
\xi _{,a} &=&0~  \label{NC.01} \\
~\xi _{,R} &=&0  \label{NC.02}
\end{eqnarray}%
\begin{equation}
a^{2}f^{\prime \prime }\eta _{,R}^{\left( 1\right) }=0  \label{NC.03}
\end{equation}%
\begin{equation}
f^{\prime }\eta ^{\left( 1\right) }+af^{\prime \prime }\eta ^{\left(
2\right) }+2af^{\prime }\eta _{,a}^{\left( 1\right) }+a^{2}f^{\prime \prime
}\eta _{,a}^{\left( 2\right) }-\frac{1}{2}af^{\prime }\xi _{,t}=0
\label{NC.04}
\end{equation}%
\begin{equation}
2af^{\prime \prime }\eta ^{\left( 1\right) }+a^{2}f^{\prime \prime \prime
}\eta ^{\left( 2\right) }+a^{2}f^{\prime \prime }\eta _{,a}^{\left( 1\right)
}+2af^{\prime }\eta _{,R}^{\left( 1\right) }+a^{2}f^{\prime \prime }\eta
_{,R}^{\left( 2\right) }-\frac{1}{2}a^{2}f^{\prime \prime }\xi _{,t}=0
\label{NC.05}
\end{equation}%
\begin{equation}
-3a^{2}Rf^{\prime }\eta ^{\left( 1\right) }+3a^{2}f\eta ^{\left( 1\right)
}-a^{3}Rf^{\prime \prime }\eta ^{\left( 2\right) }+a^{3}\left( f-f^{^{\prime
}}R\right) \xi _{,t}+g_{,t}=0  \label{NC.06}
\end{equation}

\begin{equation}
12af^{\prime }\eta _{,t}^{\left( 1\right) }+6a^{2}f^{\prime \prime }\eta
_{,t}^{\left( 2\right) }+a^{3}\left( f^{^{\prime }}R-f\right) \xi
_{,a}-g_{,a}=0  \label{NC.07}
\end{equation}%
\begin{equation}
6a^{2}f^{\prime \prime }\eta _{,t}^{\left( 1\right) }+a^{3}\left(
f^{^{\prime }}R-f\right) \xi _{,R}-g_{,R}=0  \label{NC.08}
\end{equation}%
The solution of the system (\ref{NC.01})-(\ref{NC.08}) determines the
Noether symmetries.

Since the Lagrangian (\ref{SF.50}) is in the form $L=T\left( a,\dot{a},R,%
\dot{R}\right) -V\left( a,R\right) $, the results of chapter \ref{chapter3}
can be used \footnote{%
Where $T$ is the "kinetic" term and $V$ is the "potential}. The kinematic
term defines a two dimensional metric in the space of $\{a,R\}$ with line
element%
\begin{equation}
d\hat{s}^{2}=12af^{\prime }da^{2}+12a^{2}f^{\prime \prime }da~dR
\label{FR.03}
\end{equation}%
while the \textquotedblright potential\textquotedblright\ is
\begin{equation}
V(a,R)=-a^{3}(f^{^{\prime }}R-f)\;.  \label{pot}
\end{equation}

The Ricci scalar of the two dimensional metric (\ref{FR.03}) is computed to
be $\hat{R}=0,$ therefore the space is a flat space\footnote{%
All two dimensional Riemannian spaces are Einstein spaces implying that if $%
\hat{R}=const$ the space is maximally symmetric \cite{Barnes} and if $\hat{R}%
=0,$ the space admit gradient homothetic vector, i.e. is flat.} with a
maximum homothetic algebra. \ The homothetic algebra of the metric (\ref%
{FR.03}) consists of the vectors
\begin{align*}
\mathbf{K}^{1}& =a\partial _{a}-3\frac{f^{\prime }}{f^{\prime \prime }}%
\partial _{R}~,~\mathbf{K}^{2}=\frac{1}{a}\partial _{a}-\frac{1}{a^{2}}\frac{%
f^{\prime }}{f^{\prime \prime }}\partial _{R}~ \\
\mathbf{K}^{3}& =\frac{1}{a}\frac{1}{f^{\prime \prime }}\partial _{R}~,~%
\mathbf{H}=\frac{a}{2}~\partial _{a}+\frac{1}{2}\frac{f^{\prime }}{f^{\prime
\prime }}\partial _{R}
\end{align*}%
where $\mathbf{K}$ are Killing vectors ($\mathbf{K}^{2,3}$ are gradient) and
$\mathbf{H}$ is a gradient Homothetic vector.

Therefore applying theorem \ref{The Noether symmetries of a conservative
system} we have the following cases:

\textbf{Case 1:} If $f\left( R\right) $ is arbitrary the dynamical system
admits as Noether symmetry~the~$X^{1}=\partial _{t}~$with Noether integral
the Hamiltonian~$E$.

\textbf{Case 2:} If $f\left( R\right) =R^{\frac{3}{2}}~$the dynamical system
admits the extra Noether symmetries%
\begin{equation}
X^{2}=\mathbf{K}^{2},~X^{3}=t\mathbf{K}^{2}  \label{NS.01}
\end{equation}%
\begin{equation}
X^{4}=2t\partial _{t}+\mathbf{H+}\frac{5}{6}\mathbf{K}^{1}.  \label{NS.02}
\end{equation}%
with corresponding Noether Integrals
\begin{equation}
I_{2}=\frac{d}{dt}\left( a\sqrt{R}\right)  \label{NI.02}
\end{equation}%
\begin{equation}
I_{3}=t\frac{d}{dt}\left( a\sqrt{R}\right) -a\sqrt{R}  \label{NI.03}
\end{equation}%
\begin{equation}
I_{4}=2tE-6a^{2}\dot{a}\sqrt{R}-6\frac{a^{3}}{\sqrt{R}}\dot{R}.
\label{NI.04}
\end{equation}%
the non vanishing commutators of the Noether algebra being%
\begin{equation*}
\left[ X^{1},X^{3}\right] =X^{2}~~~~\left[ X^{1},X^{4}\right] =2X^{1}
\end{equation*}%
\begin{equation*}
\left[ X^{2},X^{4}\right] =\frac{8}{3}X^{2}~~~~\left[ X^{3},X^{4}\right] =%
\frac{2}{3}X^{3}
\end{equation*}

\textbf{Case 3:} If $f\left( R\right) =R^{\frac{7}{8}}$ the dynamical system
admits the extra Noether symmetries%
\begin{equation}
X^{5}=2t\partial _{t}+\mathbf{H}~,~X^{6}=t^{2}\partial _{t}+t\mathbf{H}
\label{NS.03}
\end{equation}%
with corresponding Noether Integrals
\begin{equation}
I_{5}=2tE-\frac{21}{8}\frac{d}{dt}\left( a^{3}R^{-\frac{1}{8}}\right)
\label{NI.05}
\end{equation}%
\begin{equation}
I_{6}=t^{2}E-\frac{21}{8}t\frac{d}{dt}\left( a^{3}R^{-\frac{1}{8}}\right) +%
\frac{21}{8}a^{3}R^{-\frac{1}{8}}.  \label{NI.06}
\end{equation}%
\qquad with non vanishing commutators%
\begin{equation*}
\left[ X^{1},X^{5}\right] =2X^{1}~~~~\left[ X^{1},X^{6}\right] =X^{5}~~~~%
\left[ X^{5},X^{6}\right] =2X^{6}
\end{equation*}%
From the time dependent integrals (\ref{NI.05}),(\ref{NI.06}) and the
Hamiltonian we construct the Ermakov-Lewis invariant (see chapter \ref%
{chapter4})
\begin{equation}
\Sigma =4I_{6}E-I_{5}^{2}  \label{NI.06b}
\end{equation}

\textbf{Case 4}: If $f\left( R\right) =\left( R-2\Lambda \right) ^{\frac{3}{2%
}}$ the dynamical system admits the extra Noether symmetries
\begin{equation}
\bar{X}^{2}=e^{\sqrt{m}t}\mathbf{K}^{2}~,~\bar{X}^{3}=e^{-\sqrt{m}t}\mathbf{K%
}^{2}  \label{NS.b2}
\end{equation}%
with corresponding Noether Integrals
\begin{equation}
\bar{I}_{2}=e^{\sqrt{m}t}\left( \frac{d}{dt}\left( a\sqrt{R-2L}\right) -9%
\sqrt{m}a\sqrt{R-2\Lambda }\right)  \label{NI.b2}
\end{equation}%
\begin{equation}
\bar{I}_{3}=e^{-\sqrt{m}t}\left( \frac{d}{dt}\left( a\sqrt{R-2L}\right) +9%
\sqrt{m}a\sqrt{R-2\Lambda }\right)  \label{NI.b3}
\end{equation}%
where $m=\frac{2}{3}\Lambda .$ The non vanishing commutators of the Noether
algebra are%
\begin{equation*}
\left[ X^{1},\bar{X}^{2}\right] =\sqrt{m}\bar{X}^{2}~~~~\left[ \bar{X}%
^{3},X^{1}\right] =\sqrt{m}\bar{X}^{3}~
\end{equation*}%
From the time dependent integrals (\ref{NI.b2}),(\ref{NI.b3}) we construct
the time independent integral $\bar{I}_{23}=\bar{I}_{2}\bar{I}_{3}.$

\textbf{Case 5}: If $f\left( R\right) =\left( R-2\Lambda \right) ^{\frac{7}{8%
}}$ the dynamical system admits the extra Noether symmetries%
\begin{eqnarray}
\bar{X}^{5} &=&\frac{1}{\sqrt{m}}e^{2\sqrt{m}t}\partial _{t}+e^{2\sqrt{m}t}~%
\mathbf{H}  \label{NS.b5} \\
\bar{X}^{6} &=&-\frac{1}{\sqrt{m}}e^{-2\sqrt{m}t}\partial _{t}+e^{-2\sqrt{m}%
t}~\mathbf{H}  \label{NS.b6}
\end{eqnarray}%
with corresponding Noether Integrals
\begin{equation}
\bar{I}_{5}=e^{2\sqrt{m}t}\left[ \frac{1}{\sqrt{m}}E-\frac{21}{8}\frac{d}{dt}%
\left( a^{3}\left( R-2\Lambda \right) ^{-\frac{1}{8}}\right) +\frac{21}{4}%
\sqrt{m}a^{3}\left( R-2\Lambda \right) ^{-\frac{1}{8}}\right]  \label{NI.b5}
\end{equation}%
\begin{equation}
\bar{I}_{6}=e^{-2\sqrt{m}t}\left[ \frac{1}{\sqrt{m}}E+\frac{21}{8}\frac{d}{dt%
}\left( a^{3}\left( R-2\Lambda \right) ^{-\frac{1}{8}}\right) +\frac{21}{4}%
\sqrt{m}a^{3}\left( R-2\Lambda \right) ^{-\frac{1}{8}}\right]  \label{NI.b6}
\end{equation}%
and the non vanishing commutators of the Noether algebra are%
\begin{equation*}
\left[ X^{1},\bar{X}^{5}\right] =2\sqrt{m}\bar{X}^{5}~~~~\left[ \bar{X}%
^{6},X^{1}\right] =2\sqrt{m}\bar{X}^{6}
\end{equation*}%
\begin{equation*}
\left[ \bar{X}^{5},\bar{X}^{6}\right] =\frac{4}{\sqrt{m}}X^{1}
\end{equation*}%
From the time dependent integrals (\ref{NI.05}),(\ref{NI.06}) and the
Hamiltonian we construct the Ermakov-Lewis invariant.
\begin{equation}
\phi =E^{2}-\bar{I}_{5}\bar{I}_{6}  \label{NI.b6b}
\end{equation}

\textbf{Case 6:~}If $f\left( R\right) =R^{n}$ (with $n\neq 0,1,\frac{3}{2},%
\frac{7}{8}$) the dynamical system admits the extra Noether symmetry%
\begin{equation}
X^{7}=2t\partial _{t}+\mathbf{H+}\left( \frac{4n}{3}-\frac{7}{6}\right)
\mathbf{K}^{1}  \label{NS.07}
\end{equation}%
with corresponding Noether Integral
\begin{equation}
I_{7}=2tE-8na^{2}R^{n-1}\dot{a}\left( 2-n\right) -4na^{3}R^{n-2}\dot{R}%
\left( 2n-1\right) \left( n-1\right) .  \label{NI.07}
\end{equation}%
and the commutator of the Noether algebra is $\left[ X^{1},X^{7}\right]
=2X^{1}.$

We note that the Noether subalgebra of case 2, $\left\{
X^{1},X^{2},X^{3}\right\} $ and the algebra of case 4 $\left\{ X^{1},\bar{X}%
^{2},\bar{X}^{3}\right\} $ is the same Lie algebra but in different
representation. The same observation applies to the subalgebra of case 3 $%
\left\{ X^{1},X^{5},X^{6}\right\} $ and the algebra of case 5 $\left\{ X^{1},%
\bar{X}^{5},\bar{X}^{6}\right\} $. This connection between the Lie groups is
useful because it reveals common features in the dynamic systems, as is the
common transformation to the normal coordinates of the systems.

For the cosmological viability of the models see \cite{AmeTsuj,AmeBook}

\section{Analytic Solutions}

\label{Analsol}

Using the Noether symmetries and the associated Noether integrals we solve
analytically the differential eqs.(\ref{motion1}), (\ref{motion2}) and (\ref%
{SF.3b}) for the cases where the dynamical system is Liouville integrable,
that is for cases 2-5. Case 6$~$(i.e. $f\left( R\right) =R^{n}$) is not
Liouville integrable via Noether point symmetries, since the Noether
integral (\ref{NI.07}) is time dependent\footnote{%
In the appendix \ref{AppendixAAA} we present special solutions for the $%
f\left( R\right) =R^{n}~$model, using the zero order invariants.}.

\subsection{Power law model $R^{\protect\mu }$ with $\protect\mu =\frac{3}{2}
$}

\label{subs1}

In this case the Lagrangian eq.(\ref{SF.50}) of the $f(R)=R^{\frac{3}{2}}$
model is written as
\begin{equation}
L=9a\sqrt{R}\dot{a}^{2}+\frac{9a^{2}}{2\sqrt{R}}\dot{a}\dot{R}+\frac{a^{3}}{2%
}R^{\frac{3}{2}}  \label{FR.32}
\end{equation}%
Changing the variables from $(a,R)$ to $(z,w)$ via the relations:
\begin{equation}
a=\left( \frac{9}{2}\right) ^{-\frac{1}{3}}\sqrt{z}\;\;\;\;R=\frac{w^{2}}{z}
\end{equation}%
the Lagrangian (\ref{FR.32}) and the Hamiltonian (\ref{SF.60e}) become
\begin{equation}
L=\dot{z}\dot{w}+V_{0}w^{3}
\end{equation}%
\begin{equation}
E=\dot{z}\dot{w}-V_{0}w^{3}
\end{equation}%
where $V_{0}=\frac{1}{9}.~$The equations of motion in the new coordinate
system are
\begin{eqnarray}
\ddot{w} &=&0 \\
\ddot{z}-3V_{0}w^{2} &=&0
\end{eqnarray}%
The Noether integrals (\ref{NI.02}),(\ref{NI.03}) in the coordinate system $%
\left\{ z,y\right\} $ are
\begin{equation}
I_{1}^{^{\prime }}=\dot{w}~\ ,~~I_{2}^{^{\prime }}=t\dot{w}-w
\end{equation}%
The general solution of the system is:
\begin{equation}
y\left( t\right) =I_{1}^{\prime }t-I_{2}^{^{\prime }}
\end{equation}%
\begin{equation}
z\left( t\right) =\frac{1}{36\left( I_{1}^{^{\prime }}\right) ^{2}}\left(
I_{1}^{^{\prime }}t-I_{2}^{^{\prime }}\right) ^{4}+z_{1}t+z_{0}
\end{equation}%
The Hamiltonian constraint gives~$E=z_{1}I_{1}^{^{\prime }}$ where $z_{0,1}~$%
are constants and the singularity condition results in the constraint%
\begin{equation}
\frac{1}{36\left( I_{1}^{^{\prime }}\right) ^{2}}\left( I_{2}^{^{\prime
}}\right) ^{4}+z_{0}=0.
\end{equation}

\subsection{Power law model $R^{\protect\mu }$ with $\protect\mu =\frac{7}{8}
$}

\label{subsb1}

In this case the Lagrangian eq.(\ref{SF.50}) is written as
\begin{equation}
L=\frac{21a}{4R^{\frac{1}{8}}}\dot{a}^{2}-\frac{21}{16}\frac{a^{2}}{R^{\frac{%
9}{8}}}\dot{a}\dot{R}-\frac{1}{8}a^{3}R^{\frac{7}{8}}.  \label{FR.78}
\end{equation}%
Changing now the variables from $(a,R)$ to $(\rho ,\sigma )$ via the
relations:
\begin{equation}
a=\left( \frac{21}{4}\right) ^{-\frac{1}{3}}\sqrt{\rho e^{\sigma }}\;\;\;\;R=%
\frac{e^{12\sigma }}{\rho ^{4}}.
\end{equation}%
The Lagrangian (\ref{FR.16}) and the Hamiltonian (\ref{SF.60e}) become%
\begin{equation}
L=\frac{1}{2}\dot{\rho}^{2}-\frac{1}{2}\rho ^{2}\dot{\sigma}^{2}+V_{0}\frac{%
e^{12\sigma }}{\rho ^{2}}  \label{FR.78a}
\end{equation}%
\begin{equation}
E=\frac{1}{2}\dot{\rho}^{2}-\frac{1}{2}\rho ^{2}\dot{\sigma}^{2}-V_{0}\frac{%
e^{12\sigma }}{\rho ^{2}}.  \label{FR.78b}
\end{equation}%
where$~V_{0}=-\frac{1}{42}.$ \ The Euler-Lagrange equations provide the
following equations of motion:%
\begin{align}
\ddot{\rho}+\rho \dot{\sigma}^{2}+2V_{0}\frac{e^{12\sigma }}{\rho ^{3}}& =0
\label{FR.78c} \\
\ddot{\sigma}+\frac{2}{\rho }\dot{\sigma}\dot{\rho}+12V_{0}\frac{e^{12\sigma
}}{\rho ^{2}}& =0.  \label{FR.78d}
\end{align}%
The Noether integrals (\ref{NI.05}), (\ref{NI.06}) and the Ermakov-Lewis
invariant \ref{NI.06b} in the coordinate system $\left\{ u,v\right\} $ are
\begin{align}
I_{5}^{^{\prime }}& =2tE-\rho \dot{\rho} \\
I_{6}^{^{\prime }}& =t^{2}E-t\rho \dot{\rho}+\frac{1}{2}\rho ^{2}.
\end{align}%
\begin{equation}
\Sigma =\rho ^{4}\dot{\sigma}^{2}+4V_{0}e^{12\sigma }.
\end{equation}%
Using the Ermakov-Lewis Invariant, the Hamiltonian (\ref{FR.78a}) and
equation (\ref{FR.78c}) are written:%
\begin{align}
\frac{1}{2}\dot{\rho}^{2}-\frac{1}{2}\frac{\Sigma }{\rho ^{2}}& =E \\
\ddot{\rho}+\frac{\Sigma }{\rho ^{3}}& =0.
\end{align}%
And the analytical solution of the system is
\begin{equation}
\rho \left( t\right) =\left( \rho _{2}t^{2}+\rho _{1}t+\frac{\left( \left(
\rho _{1}\right) ^{2}-4\Sigma \right) }{4\rho _{2}}\right) ^{\frac{1}{2}}
\end{equation}%
\begin{equation}
\exp \left( \sigma \left( t\right) \right) =\left\{ \frac{21}{2}\Sigma \left[
\left( \tanh \left[ \sigma _{0}\rho _{2}\sqrt{\Sigma }-6\arctan h\left(
\frac{2\rho _{2}t+\rho _{1}}{2\sqrt{\Sigma }}\right) \right] \right) ^{2}-1%
\right] \right\} ^{\frac{1}{12}}
\end{equation}%
where $B\left( t\right) =\left( \frac{1}{2}\frac{2\rho _{2}t+\rho _{1}}{%
\sqrt{\Sigma }}\right) $ and $\rho _{1,2}~$,$~\sigma _{0}$ are constants
with Hamiltonian constrain $E=\frac{1}{2}\rho _{2}~$. The singularity
constraint gives $\left( \rho _{1}\right) ^{2}=4\Sigma $

In the case $\Sigma =0$ the analytical solution is%
\begin{equation}
\rho \left( t\right) =\left( \rho _{2}t^{2}+\rho _{1}t+\frac{1}{2}\frac{%
\left( \rho _{1}\right) ^{2}}{\rho _{2}}\right) ^{\frac{1}{2}}
\end{equation}%
\begin{equation}
\exp \sigma \left( t\right) =\left[ \frac{1}{24\sqrt{V_{0}}}\frac{\left(
2\rho _{2}t+\rho _{1}\right) }{\left( 4\sigma _{0}\rho _{2}^{2}t+2\sigma
_{0}\rho _{2}\rho _{1}-1\right) }\right] ^{\frac{1}{6}}
\end{equation}%
The singularity constraint gives $\rho _{1}=0$, then the solution is%
\begin{equation}
a\left( t\right) =\frac{a_{0}t^{\frac{7}{6}}}{\left( a_{2}t-1\right) ^{\frac{%
1}{6}}}
\end{equation}

In contrast with the claim of \cite{Nayem12} this model is analytically
solvable and there exists models which admit Noether integrals with time
dependent gauge functions.

\subsection{$\Lambda _{bc}$CDM model with $(b,c)=(1,\frac{3}{2})$}

\label{subs}

Inserting $f(R)=(R-2\Lambda )^{3/2}$ into eq.(\ref{SF.50}) we obtain
\begin{equation}
L=9a\sqrt{R-2\Lambda }\dot{a}^{2}+\frac{9a^{2}}{2\sqrt{R-2\Lambda }}\dot{a}%
\dot{R}+\frac{a^{3}}{2}\left( R+4\Lambda \right) \sqrt{R-2\Lambda }
\label{FR.13a}
\end{equation}%
Changing the variables from $(a,R)$ to $(x,y)$ via the relations:
\begin{equation}
a=\left( \frac{9}{2}\right) ^{-\frac{1}{3}}\sqrt{x}\;\;\;\;R=2\Lambda +\frac{%
y^{2}}{x}
\end{equation}%
the Lagrangian (\ref{FR.13a}) and the Hamiltonian (\ref{SF.60e}) become
\begin{equation}
L=\dot{x}\dot{y}+V_{0}\left( y^{3}+\bar{m}xy\right)
\end{equation}%
\begin{equation}
E=\dot{x}\dot{y}-V_{0}\left( y^{3}+{\bar{m}}xy\right)  \label{hamm}
\end{equation}%
where $V_{0}=\frac{1}{9}$ and $\bar{m}=6\Lambda $. \newline
The equations of motion, using the Euler-Lagrange equations, in the new
coordinate system are
\begin{equation}
\ddot{x}-3V_{0}y^{2}-{\bar{m}V}_{0}x=0  \label{FR.13b}
\end{equation}%
\begin{equation}
\ddot{y}-{\bar{m}}V_{0}y=0.  \label{FR.14}
\end{equation}%
The Noether integrals (\ref{NI.b2}),(\ref{NI.b3}) in the coordinate system $%
\left\{ x,y\right\} $ are
\begin{align}
\bar{I}_{1}^{\prime }& =e^{\omega t}\dot{y}-\omega e^{\omega t}y
\label{II.14} \\
\bar{I}_{2}^{\prime }& =e^{-\omega t}\dot{y}+\omega e^{-\omega t}y.
\label{II1.14}
\end{align}%
where $\omega =\sqrt{2\Lambda /3}$. From these we construct the time
independent first integral%
\begin{equation}
\Phi =I_{1}I_{2}=\dot{y}^{2}-\omega ^{2}y^{2}.  \label{II2.14}
\end{equation}%
The constants of integration are further constrained by the condition that
at the singularity ($t=0$), the scale factor has to be exactly zero, that
is, $x(0)=0$.\newline
The general solution of the system (\ref{FR.13b})-(\ref{FR.14}) is:
\begin{equation}
y\left( t\right) =\frac{I_{2}}{2\omega }e^{\omega t}-\frac{I_{1}}{2\omega }%
e^{-\omega t}
\end{equation}%
\begin{equation}
x\left( t\right) =x_{1G}e^{\omega t}+x_{2G}e^{-\omega t}+\frac{1}{4\bar{m}%
\omega ^{2}}\left( I_{2}e^{\omega t}+I_{1}e^{-\omega t}\right) ^{2}+\frac{%
\Phi }{\bar{m}\omega ^{2}}.
\end{equation}%
The Hamiltonian constrain gives~$E=\omega \left(
x_{1G}I_{1}-x_{2G}I_{2}\right) $ where $x_{1G,2G}~$are constants and the
singularity condition results in the constrain%
\begin{equation}
x_{1G}+x_{2G}+\frac{1}{4\bar{m}\omega ^{2}}\left( I_{1}+I_{2}\right) ^{2}+%
\frac{\Phi }{\bar{m}\omega ^{2}}=0.
\end{equation}%
At late enough times the solution becomes $a^{2}(t)\propto e^{2\omega t}$

\subsection{$\Lambda _{bc}$CDM model with $(b,c)=(1,\frac{7}{8})$}

\label{subsb2}

In this case the Lagrangian eq.(\ref{SF.50}) of the $f(R)=(R-2\Lambda
)^{7/8} $ model is written as
\begin{equation}
L=\frac{21a}{4\left( R-2\Lambda \right) ^{\frac{1}{8}}}\dot{a}^{2}-\frac{21}{%
16}\frac{a^{2}}{\left( R-2\Lambda \right) ^{\frac{9}{8}}}\dot{a}\dot{R}-%
\frac{1}{8}a^{3}\frac{\left( R-16\Lambda \right) }{\left( R-2\Lambda \right)
^{\frac{1}{8}}}.  \label{FR.16}
\end{equation}%
Changing the variables from $(a,R)$ to $(u,v)$ via the relations:
\begin{equation}
a=\left( \frac{21}{4}\right) ^{-\frac{1}{3}}\sqrt{ue^{v}}\;\;\;\;R=2\Lambda +%
\frac{e^{12v}}{u^{4}}.  \label{FR.16a}
\end{equation}%
the Lagrangian (\ref{FR.16}) and the Hamiltonian (\ref{SF.60e}) become%
\begin{equation}
L=\frac{1}{2}\dot{u}^{2}-\frac{1}{2}u^{2}\dot{v}^{2}+V_{0}\frac{\bar{m}}{4}%
u^{2}+V_{0}\frac{e^{12v}}{u^{2}}  \label{FR.20}
\end{equation}%
\begin{equation}
E=\frac{1}{2}\dot{u}^{2}-\frac{1}{2}u^{2}\dot{v}^{2}-V_{0}\frac{\bar{m}}{4}%
u^{2}-V_{0}\frac{e^{12v}}{u^{2}}.  \label{FR.21}
\end{equation}%
where $\bar{m}=-28\Lambda ~,~V_{0}=-\frac{1}{42}.$ \newline
The Euler-Lagrange equations provide the following equations of motion:%
\begin{align}
\ddot{u}+u\dot{v}^{2}-\frac{V_{0}\bar{m}}{2}u+2V_{0}\frac{e^{12v}}{u^{3}}& =0
\label{FR.22} \\
\ddot{v}+\frac{2}{u}\dot{u}\dot{v}+12V_{0}\frac{e^{12v}}{u^{4}}& =0.
\label{FR.23}
\end{align}%
The Noether integrals (\ref{NI.b5}),(\ref{NI.b6}) and the Ermakov-Lewis
invariant (\ref{NI.b6b}) in the coordinate system $\left\{ u,v\right\} $ are
\begin{align}
I_{+}& =\frac{1}{\lambda }e^{2\lambda t}E-e^{2\lambda t}u\dot{u}+\lambda
e^{2\lambda t}u^{2}  \label{FR.24} \\
I_{-}& =\frac{1}{\lambda }e^{-2\lambda t}E-e^{-2\lambda t}u\dot{u}+\lambda
e^{-2\lambda t}u^{2}.  \label{FR.25}
\end{align}%
\begin{equation}
\phi =u^{4}\dot{v}^{2}+4V_{0}e^{12v}.  \label{FR.261}
\end{equation}%
where $\lambda =\frac{1}{2}\sqrt{\frac{2}{3}\Lambda }.$

Using the Ermakov-Lewis Invariant (\ref{FR.261}), the Hamiltonian (\ref%
{FR.21}) and equation (\ref{FR.22}) are written:%
\begin{align}
\frac{1}{2}\dot{u}^{2}-V_{0}\frac{m}{8}u^{2}-\frac{1}{2}\frac{\phi }{u^{2}}&
=E  \label{FR.27} \\
\ddot{u}-\frac{V_{0}m}{4}u+\frac{\phi }{u^{3}}& =0.  \label{FR.28}
\end{align}%
The solution of (\ref{FR.28}) has been given by Pinney \cite{Pinney} and it
is the following:%
\begin{equation}
u\left( t\right) =\left( u_{1}e^{2\lambda t}+u_{2}e^{-2\lambda
t}+2u_{3}\right) ^{\frac{1}{2}}  \label{FR.29a}
\end{equation}%
where~$u_{1-3}$. From the Hamiltonian constraint (\ref{FR.27}) and the
Noether Integrals (\ref{FR.24}),(\ref{FR.25}) we find%
\begin{equation*}
E=-2\lambda u_{3}~,~I_{+}=2\lambda u_{2}~,~I_{-}=2\lambda u_{1}.
\end{equation*}%
Replacing (\ref{FR.29a}) in the Ermakov-Lewis Invariant (\ref{FR.261}) and
assuming $\phi \neq 0~$we find:%
\begin{equation}
\exp \left( v\left( t\right) \right) =2^{\frac{1}{6}}\phi ^{\frac{1}{12}%
}e^{-A\left( t\right) }\left( 4V_{0}+e^{-12A\left( t\right) }\right) ^{-%
\frac{1}{6}}  \label{FR.29}
\end{equation}%
where
\begin{equation}
A\left( t\right) =\arctan \left[ \frac{2\lambda }{\sqrt{\phi }}\left(
u_{1}e^{2\lambda t}+u_{3}\right) \right] +4\lambda ^{2}u_{1}\sqrt{\phi }.
\label{FR.31}
\end{equation}%
Then the solution is
\begin{equation}
a^{2}\left( t\right) =2^{-\frac{1}{3}}\phi ^{\frac{1}{12}}e^{-A\left(
t\right) }\left( 4V_{0}+e^{-12A\left( t\right) }\right) ^{-\frac{1}{6}%
}\left( u_{1}e^{2\lambda t}+u_{2}e^{-2\lambda t}+2u_{3}\right) ^{\frac{1}{2}}
\label{FR.32A}
\end{equation}%
where from the singularity condition $x\left( 0\right) =0~$we have the
constrain~$u_{1}+u_{2}+2u_{3}=0$ , or
\begin{equation}
2E-\left( I_{+}+I_{-}\right) =0.
\end{equation}%
At late enough time we find $A\left( t\right) \simeq A_{0}$, which implies $%
a^{2}(t)\propto e^{\lambda t}.$

In the case where $\phi =0$ equations (\ref{FR.27}),(\ref{FR.28}) describe
the hyperbolic oscillator and the solution is%
\begin{equation}
u\left( t\right) =\sinh \lambda t~,~2E=\lambda ^{2}.
\end{equation}%
From the Ermakov-Lewis Invariant we have%
\begin{equation}
\exp \left( v\left( t\right) \right) =\left( \frac{\lambda \sinh \lambda t}{%
\lambda v_{1}\sinh \lambda t-12\sqrt{\left\vert V_{0}\right\vert }%
e^{-2\lambda t}}\right) ^{\frac{1}{6}}
\end{equation}%
where $v_{1}$ is a constant. The analytic solution is%
\begin{equation}
a^{2}\left( t\right) =\left( \frac{\lambda \sinh ^{7}\lambda t}{\lambda
v_{1}\sinh \lambda t-12\sqrt{\left\vert V_{0}\right\vert }e^{-2\lambda t}}%
\right) ^{\frac{1}{6}}
\end{equation}

\section{Noether point symmetries in spatially non-flat $f(R)$ models}

\label{Nonf}

In this section we study further the Noether point symmetries in non flat $%
f(R)$ cosmological models. In the context of a FRW spacetime the Lagrangian
of the overall dynamical problem and the Ricci scalar are
\begin{equation}
L=6f^{\prime }a\dot{a}^{2}+6f^{\prime \prime }\dot{R}a^{2}\dot{a}%
+a^{3}\left( f^{\prime }R-f\right) -6Kaf^{\prime }  \label{NF.01}
\end{equation}

\begin{equation}
R=6\left( \frac{\ddot{a}}{a}+\frac{\dot{a}^{2}+K}{a^{2}}\right)
\end{equation}%
where $K$ is the spatial curvature. Note that the two dimensional metric is
given by eq.(\ref{FR.03}) while the \textquotedblright
potential\textquotedblright\ in the Lagrangian takes the form
\begin{equation}
V_{K}(a,R)=-a^{3}(f^{\prime }R-f)+Kaf^{\prime }.
\end{equation}
Based on the above equations and using the theoretical formulation presented
in section \ref{LieN}, we find that the $f(R)$ models which admit non
trivial Noether symmetries are the $f(R)=(R-2\Lambda )^{3/2}$, $%
~f(R)=R^{3/2} $ and $f(R)=R^{2}$. The Noether symmetries can be found in
section \ref{LieN}.

In particular, inserting $f(R)=(R-2\Lambda )^{3/2}$ into the Lagrangian (\ref%
{NF.01})~and changing the variables from $(a,R)$ to $(x,y)$ [see section \ref%
{subs}] we find
\begin{equation}
L=\dot{x}\dot{y}+V_{0}\left( y^{3}+\bar{m}xy\right) -\bar{K}y
\end{equation}%
\begin{equation}
E=\dot{x}\dot{y}-V_{0}\left( y^{3}+{\bar{m}}xy\right) +\bar{K}y
\end{equation}%
where $\bar{K}=3(6^{1/3}K)$. Therefore, the equations of motion are
\begin{eqnarray*}
\ddot{x}-3V_{0}y^{2}-\bar{m}V_{0}x+\bar{K} &=&0 \\
\ddot{y}-\bar{m}V_{0}y &=&0\;.
\end{eqnarray*}%
The constant term $\bar{K}$ appearing in the first equation of motion is not
expected to affect the Noether symmetries (or the integrals of motion).
Indeed we find that the corresponding Noether symmetries coincide with those
of the spatially flat $f(R)=(R-2\Lambda )^{3/2}$ model. However, in the case
of $K\neq 0$ (or $\bar{K}\neq 0$) the analytic solution for the $x$-variable
is written as
\begin{equation}
x_{K}(t)\equiv x(t)+\frac{{\bar{K}}}{\omega ^{2}}
\end{equation}%
where $x(t)$ is the solution of the flat model $K=0$ (see section \ref{subs}%
). Note that the solution of the $y$-variable remains unaltered.

Similarly, for the $f(R)=R^{3/2}~$model the analytic solution is
\begin{equation}
z_{K}\left( t\right) =z\left( t\right) +\bar{K}
\end{equation}%
where $z\left( t\right) $ is the solution of the spatially flat model (see
section \ref{subs1}).

\section{Conclusion}

\label{Conc}

In the literature the functional forms of $f(R)$ of the modified $f(R)$
gravity models are mainly defined on a phenomenological basis. In this
article we use the Noether symmetry approach to constrain these models with
the aim to utilize the existence of non-trivial Noether symmetries as a
selection criterion that can distinguish the $f(R)$ models on a more
fundamental level. Furthermore the resulting Noether integrals can be used
to provide analytic solutions.

In the context of $f(R)$ models, the system of the modified field equations
is equivalent to a two dimensional dynamical system moving in $M^{2}$ (mini
superspace) under the constraint $\bar{E}=$constant. Following the general
methodology of chapter \ref{chapter3}, we require that the two dimensional
system admits extra Noether point symmetries. This requirement fixes the $%
f\left( R\right) $ function and the corresponding analytic solutions are
computed. It is interesting that two well known dynamical systems appear in
cosmology: the anharmonic oscillator and the Ermakov-Pinney system. We
recall that the field equations of the $\Lambda -$cosmology is equivalent
with that of the hyperbolic oscillator.

%TCIMACRO{\TeXButton{newpage}{\newpage}}%
%BeginExpansion
\newpage%
%EndExpansion

%TCIMACRO{\TeXButton{BeginApp}{\begin{subappendices}}}%
%BeginExpansion
\begin{subappendices}%
%EndExpansion

\section{Special solutions for the power law model $R^{n}$}

\label{AppendixAAA}

The case $f\left( R\right) =R^{n}$ is not Liouville integrable via Noether
point symmetries. However the zero order invariant will be used in order to
find special solutions. Inserting $f(R)=R^{n}~~\left( n\neq 0,1,\frac{3}{2},%
\frac{7}{8}\right) ~$into eq.(\ref{SF.50}) we obtain%
\begin{equation}
L\left( a,\dot{a},R,\dot{R}\right) =6naR^{n-1}\dot{a}^{2}+6n\left(
n-1\right) a^{2}R^{n-2}\dot{a}\dot{R}+\left( n-1\right) a^{3}R^{n}
\end{equation}%
and this leads to the modified field equations%
\begin{equation}
\ddot{a}+\frac{1}{a}\dot{a}^{2}-\frac{1}{6}aR=0  \label{RN.01}
\end{equation}%
\begin{equation}
\ddot{R}+\frac{n-2}{R}\dot{R}^{2}-\frac{1}{n-1}\frac{R}{a^{2}}\dot{a}^{2}+%
\frac{2}{a}\dot{a}\dot{R}-\frac{\left( n-3\right) }{6n\left( n-1\right) }%
R^{2}=0  \label{RN.02}
\end{equation}%
\begin{equation}
E=6naR^{n-1}\dot{a}^{2}+6n\left( n-1\right) a^{2}R^{n-2}\dot{a}\dot{R}%
-\left( n-1\right) a^{3}R^{n}.  \label{RN.03}
\end{equation}

The Noether point symmetry (\ref{NS.07}) is also and a Lie symmetry, hence
we have the zero order invariants%
\begin{equation}
a_{0}=at^{-N}~,~R_{0}=Rt^{-2}.
\end{equation}

Applying the zero order invariants in the field equations (\ref{RN.01})-(\ref%
{RN.03}) and in the Noether integral (\ref{NI.07}) we have the following
results.

The dynamical system admits a special solution of the form
\begin{equation}
a=a_{0}t^{N}~,~R=6N\left( 2N-1\right) t^{-2}
\end{equation}%
where the constants $N,$ $E$ $\ $and $I_{7}$ are%
\begin{equation*}
N=\frac{1}{2}~,~E=0~,~I_{7}=0
\end{equation*}%
or
\begin{equation*}
N=-\frac{\left( 2n-1\right) \left( n-1\right) }{n-2}~,~E=0~,~I_{7}=0
\end{equation*}%
or%
\begin{equation*}
N=\frac{2}{3}n~,~E=\left( \frac{12n}{9}\right) ^{n}\left( 4n-3\right)
^{n-1}\left( 13n-8n^{2}-3\right) a_{0}^{3}~,~I_{7}=0.
\end{equation*}

Another special solution is the deSitter solution for $n=2$%
\begin{equation}
a=a_{0}e^{H_{0}t}~,~R=12H_{0}^{2}
\end{equation}%
where $I_{7}=0~$\ and the spacetime is empty i.e. $E=0$.

%TCIMACRO{\TeXButton{EndApp}{\end{subappendices}}}%
%BeginExpansion
\end{subappendices}%
%EndExpansion

\chapter{Noether symmetries in $f(T)$ gravity\label{chapter10}}

\section{Introduction}

In this chapter we continue our analysis on the application of Noether point
symmetries in alternative theories of gravity and specifically the $f\left(
T\right) $ modified theory of gravity. $f\left( T\right) $ gravity it is
based on the old formulation of Teleparallel Equivalent of General
Relativity (TEGR) \cite{Ein1,Ein2,Hayashi79,Maluf94} which instead of the
torsion-less Levi-Civita connection uses the curvatureless Weitzenbock
connection \cite{Weitz01} in which the corresponding dynamical fields are
the four linearly independent vierbeins. Therefore, all the information
concerning the gravitational field is included in the Weitzenbock tensor.
Within this framework, considering invariance under general coordinate
transformations, global Lorentz-parity transformations, and requiring up to
second order terms of the torsion tensor, one can write down the
corresponding Lagrangian density $T$ by using some suitable contractions.

Furthermore $f\left( T\right) $ gravity which is based on the fact that we
allow the Lagrangian to be a function of $T$ \ \cite%
{Ferraro07,Linder10,Chen11}, inspired by the extension of $f\left( R\right)
~ $Einstein-Hilbert action. However, $f\left( T\right) $ gravity does not
coincide with $f\left( R\right) $ extension, but it rather consists of a
different class of modified gravity models. It is interesting to mention
that the torsion tensor includes only products of first derivatives of the
vierbeins, giving rise to second-order field differential equations in
contrast to the $f\left( R\right) $ gravity that provides fourth-order
equations, which potentially may lead to some problems, for example in the
well position and well formulation of the Cauchy problem \cite{Cap09aa}.
Moreover, as we showed in chapter \ref{chapter9} the Lagrangian of the field
equations in $f\left( R\right) $ gravity described a regular dynamical
system; however, in $f\left( T\right) $ the dynamical system is singular and
the $T$ variable can be seen as a Lagrange multiplier.

In section \ref{UnhFrame} we discuss the role of unholonomic frames in the
context of teleparallel gravity and its straightforward extension. In
section \ref{modelfttt}, we briefly present $f(T)$ gravity, while in section %
\ref{Lagrformul} we construct the corresponding generalized Lagrangian
formulation. In section \ref{Noethercond}, we analyze the main properties of
the Noether Symmetry Approach for $f(T)$ gravity. Then, in section \ref%
{frwftt} and \ref{spergem}, we apply these results in the FRW and the static
spherically symmetric spaces.

\section{Unholonomic frames and Connection Coefficient}

\label{UnhFrame}

In an $n-$dimensional manifold $M$ consider a coordinate neighborhood $U$
with a coordinate system $\{x^{\mu }\}.$ At each point $P\epsilon U$ we have
the resulting holonomic frame $\{\partial _{\mu }\}.$ We define in $U\ \ $a
new frame $\{e_{a}(x^{\mu })\}$ which is related to the holonomic frame $%
\{\partial _{a}\}$ as follows:%
\begin{equation}
e_{a}(x^{\mu })=h_{a}^{\mu }\partial _{\mu }~\ \ a,\mu =1,2,...,n
\label{H.1}
\end{equation}%
where the quantities $h_{a}^{\mu }(x)$ are in general functions of the
coordinates (i.e. depend on the point $P$). Notice that Latin indexes count
vectors, while Greek indexes are tensor indices. We assume that $\det
h_{a}^{\mu }\neq 0$ which guaranties that the vectors $\{e_{a}(x^{\mu })\}$
form a set of linearly independent vectors. We define the \textquotedblright
inverse\textquotedblright\ quantities $h_{a}^{\mu }$ by means of the
following \textquotedblright orthogonality\textquotedblright\ relations:
\begin{equation}
h_{a}^{\mu }h_{\nu }^{a}=\delta _{\nu }^{\mu },h_{b}^{\mu }h_{\mu
}^{c}=\delta _{b}^{c}.  \label{H.1a}
\end{equation}%
The commutators of the vectors $\{e_{a}\}$ are not in general all zero. If
they are, then there exists a new coordinate system in $U$, the $\{y^{b}\}$
say so that $e_{b}=\frac{\partial }{\partial y^{b}}$ i.e. the new frame is
holonomic. If there are commutators $[e_{a},e_{b}]\neq 0$ then the new frame
$\{e_{a}\}$ is called unholonomic and the vectors $e_{a}$ cannot be written
in the form $e_{b}=\partial _{b}.$ The quantities which characterize an
unholonomic frame are the objects of unholonomicity or Ricci rotation
coefficients $\Omega _{\text{ }bc}^{a}$ defined by the relation%
\begin{equation}
\lbrack e_{a},e_{b}]=\Omega _{\text{ }ab}^{c}e_{c}.  \label{H.3}
\end{equation}

\bigskip We compute:%
\begin{equation*}
\lbrack e_{a},e_{b}]=[h_{a}^{\mu }\partial _{\mu },h_{b}^{\nu }\partial
_{\nu }]=\left[ h_{a}^{\mu }h_{b,\mu }^{\nu }h_{\nu }^{c}-h_{b}^{\nu
}h_{a,\nu }^{\mu }h_{\mu }^{c}\right] e_{c}
\end{equation*}%
from which follows that the Ricci rotation coefficients of the frame $%
\{e_{a}\}$ are:
\begin{equation}
\Omega _{\text{ }bc}^{a}=2h_{[b}^{\mu }h_{c],\mu }^{\nu }h_{\nu }^{a}.
\label{H.2}
\end{equation}%
The condition for $\{e_{a}\}$ to be a holonomic basis is $\Omega _{\text{ }%
bc}^{a}=0$ at all points $P\epsilon U.$ This is a set of linear partial
differential equations whose solution defines all holonomic frames and all
coordinate systems in $U.$ One obvious solution is $h_{b}^{c}=\delta
_{b}^{c} $. The set of all coordinate systems in $U\ $equipped with the
operation of composition of transformations has the structure of an infinite
dimensional Lie group which is called the Manifold Mapping Group.

We consider now the special unholonomic frames which satisfy Jacobi's
identity:%
\begin{equation}
\lbrack \lbrack
e_{a},e_{b}],e_{c}]+[[e_{b},e_{c}],e_{a}]+[[e_{c},e_{a}],e_{b}]=0.
\label{H.4}
\end{equation}%
These frames are the generators of a Lie Algebra, therefore they have an
extra role to play. Replacing the commutator in terms of the unholonomicity
objects we find the following identity:%
\begin{equation}
\Omega _{\text{ }ab,c}^{d}+\Omega _{\text{ }ba,a}^{d}+\Omega _{\text{ }%
ca,b}^{d}-\Omega _{\text{ }ab}^{l}\Omega _{cl}^{d}-\Omega _{\text{ }%
bc}^{l}\Omega _{al}^{d}-\Omega _{\text{ }ca}^{l}\Omega _{bl}^{d}=0.
\end{equation}

Using the definition of the covariant derivative we write:%
\begin{equation}
\nabla _{e_{i}}e_{j}=\Gamma _{ij}^{k}e_{k}  \label{H.20}
\end{equation}%
where $\Gamma _{ij}^{k}$ are the connection coefficients in the frame $%
\{e_{i}\}.$ Let us compute these $\Gamma _{ij}^{k}$ assuming that
\begin{equation*}
\lbrack e_{i},e_{j}]=C_{.ij}^{k}e_{k}
\end{equation*}%
from which follows%
\begin{equation*}
C_{.ij}^{k}=\Omega _{.jk}^{k}.
\end{equation*}

Consider three vector fields $X,Y,Z$ and the covariant derivative of the
metric wrt $X.$ Then we have:

\begin{equation}
\nabla _{X}g(Y,Z)=X(g(Y,Z))-g(\nabla _{X}Y,Z)-g(Y,\nabla _{Y}Z)  \label{H.21}
\end{equation}%
and by interchanging the role of $X,Y,Z:$%
\begin{equation}
\nabla _{Y}g(Z,X)=Y(g(Z,X))-g(\nabla _{Y}Z,X)-g(Z,\nabla _{Z}X)  \label{H.22}
\end{equation}

\begin{equation}
\nabla _{Z}g(X,Y)=Z(g(X,Y))-g(\nabla _{Z}X,Y)-g(X,\nabla _{X}Y).
\label{H.23}
\end{equation}

Adding (\ref{H.21}), (\ref{H.22}) and subtracting (\ref{H.23}) we obtain:

\begin{eqnarray*}
\nabla _{X}g(Y,Z)+\nabla _{Y}g(Z,X)-\nabla _{Z}g(X,Y)
&=&X(g(Y,Z))+Y(g(Z,X))-Z(g(X,Y))+ \\
&&-\left[ g(\nabla _{X}Y,Z)+g(\nabla _{Y}Z,X)-g(\nabla _{Z}X,Y)\right] + \\
&&-\left[ g(Y,\nabla _{X}Z)+g(Z,\nabla _{Y}X)-g(X,\nabla _{Z}Y)\right]
\end{eqnarray*}%
then%
\begin{eqnarray*}
\nabla _{X}g(Y,Z)+\nabla _{Y}g(Z,X)-\nabla _{Z}g(X,Y)
&=&X(g(Y,Z))+Y(g(Z,X))-Z(g(X,Y))+ \\
&&-\left[ g(\nabla _{X}Y,Z)+g(Z,\nabla _{Y}X)\right] + \\
&&-\left[ g(\nabla _{Y}Z,X)-g(X,\nabla _{Z}Y)\right] + \\
&&-\left[ g(Y,\nabla _{X}Z)-g(\nabla _{Z}X,Y)\right] /
\end{eqnarray*}%
that is%
\begin{eqnarray*}
\nabla _{X}g(Y,Z)+\nabla _{Y}g(Z,X)-\nabla _{Z}g(X,Y)
&=&X(g(Y,Z))+Y(g(Z,X))-Z(g(X,Y))+ \\
&&-\left[ g(Z,\nabla _{X}Y+\nabla _{Y}X)+g(X,\nabla _{Y}Z-\nabla
_{Z}Y)+g(Y,\nabla _{X}Z-\nabla _{Z}X)\right] .
\end{eqnarray*}%
The term%
\begin{equation*}
g(Z,\nabla _{X}Y+\nabla _{Y}X)=2g\left( Z,\nabla _{X}Y\right) +g\left(
Z,\nabla _{Y}X-\nabla _{X}Y\right) .
\end{equation*}%
Replacing in the last relation and solving for $2g\left( Z,\nabla
_{X}Y\right) $ we find%
\begin{eqnarray*}
2g\left( Z,\nabla _{X}Y\right) &=&\left[ X(g(Y,Z))+Y(g(Z,X))-Z(g(X,Y))\right]
+ \\
&&-\left[ \nabla _{X}g(Y,Z)+\nabla _{Y}g(Z,X)-\nabla _{Z}g(X,Y)\right] + \\
&&-\left[ g\left( Z,\nabla _{Y}X-\nabla _{X}Y\right) +g(X,\nabla
_{Y}Z-\nabla _{Z}Y)+g(Y,\nabla _{X}Z-\nabla _{Z}X)\right]
\end{eqnarray*}%
or%
\begin{eqnarray*}
2g\left( Z,\nabla _{X}Y\right) &=&\left[ X(g(Y,Z))+Y(g(Z,X))-Z(g(X,Y))\right]
+ \\
&&-\left[ \nabla _{X}g(Y,Z)+\nabla _{Y}g(Z,X)-\nabla _{Z}g(X,Y)\right] + \\
&&-\left[ g\left( Z,\nabla _{Y}X-\nabla _{X}Y-\left[ Y,X\right] \right)
+g(X,\nabla _{Y}Z-\nabla _{Z}Y-\left[ Y,Z\right] )+g(Y,\nabla _{X}Z-\nabla
_{Z}X-\left[ X,Z\right] )\right] + \\
&&-\left[ g\left( Z,\left[ Y,X\right] \right) +g\left( X,\left[ Y,Z\right]
\right) +g\left( Y,\left[ X,Z\right] \right) \right] .
\end{eqnarray*}%
Define the quantities%
\begin{eqnarray*}
T_{\nabla }(X,Y) &=&\nabla _{X}Y-\nabla _{Y}X-[X,Y] \\
A_{\nabla }(X,Y,Z) &=&\nabla _{X}g(Y,Z)
\end{eqnarray*}%
The tensors $T_{\nabla }$ and $A_{\nabla }$ are called the \textit{torsion}
and the \textit{metricity} of the connection $\nabla $ respectively. Last
relation in terms of the fields $T_{\nabla }$ and $A_{\nabla }$ is written
as follows$:$%
\begin{eqnarray}
2g\left( Z,\nabla _{X}Y\right) &=&\left[ X(g(Y,Z))+Y(g(Z,X))-Z(g(X,Y))\right]
+  \notag \\
&&-\left[ A_{\nabla }\left( X,Y,Z\right) +A_{\nabla }\left( Y,Z,X\right)
-A_{\nabla }\left( Z,X,Y\right) \right] +  \notag \\
&&-\left[ g\left( Z,T_{\nabla }\left( Y,X\right) \right) +g\left(
X,T_{\nabla }\left( Y,Z\right) \right) +g\left( Y,T_{\nabla }\left(
X,Z\right) \right) \right]  \notag \\
&&-\left[ g\left( Z,\left[ Y,X\right] \right) +g\left( X,\left[ Y,Z\right]
\right) +g\left( Y,\left[ X,Z\right] \right) \right] .  \label{H.24}
\end{eqnarray}

Let $X=e_{l}~,~Y=e_{j}$ and $Z=e_{k}.$ Contracting with $\frac{1}{2}g^{il}$
we have%
\begin{equation*}
2g\left( Z,\nabla _{X}Y\right) \rightarrow \Gamma _{jk}^{i}
\end{equation*}%
\begin{equation*}
\left[ X(g(Y,Z))+Y(g(Z,X))-Z(g(X,Y))\right] \rightarrow \left\{
_{jk}^{i}\right\}
\end{equation*}%
\begin{equation*}
g\left( X,T_{\nabla }\left( Y,Z\right) \right) \rightarrow Q_{.kj}^{i}
\end{equation*}%
\begin{equation*}
g\left( Z,T_{\nabla }\left( Y,X\right) \right) +g\left( Y,T_{\nabla }\left(
X,Z\right) \right) \rightarrow g^{il}(g_{tj}Q_{kl}^{t}+g_{tk}Q_{jl}^{t})=-%
\bar{S}_{.kj}^{i}
\end{equation*}%
\begin{equation*}
g\left( X,\left[ Y,Z\right] \right) \rightarrow \frac{1}{2}C_{.jk}^{i}
\end{equation*}%
\begin{equation*}
g\left( Z,\left[ Y,X\right] \right) +g\left( Y,\left[ X,Z\right] \right) =%
\frac{1}{2}g^{il}(g_{tj}C_{lk}^{t}+g_{tk}C_{jl}^{t})=-S_{.kj}^{i}
\end{equation*}%
and%
\begin{equation*}
A_{\nabla }\left( X,Y,Z\right) +A_{\nabla }\left( Y,Z,X\right) -A_{\nabla
}\left( Z,X,Y\right) \rightarrow \frac{1}{2}g^{il}\Delta _{jkl}
\end{equation*}%
Replacing in (\ref{H.24}) we find the connection coefficients in the frame $%
\{e_{i}\}$%
\begin{equation}
\Gamma _{jk}^{i}=\left\{ _{jk}^{i}\right\} +\bar{S}_{.kj}^{i}+S_{.kj}^{i}-%
\frac{1}{2}g^{il}\Delta _{jkl}+Q_{jk}^{i}-\frac{1}{2}C_{.jk}^{i}  \label{H25}
\end{equation}%
where $\left\{ _{jk}^{i}\right\} $ is the standard Levi Civita connection
coefficients (Christofell symbols). \ This is the most general expression
for the connection coefficients in terms of the fields $\left\{
_{jk}^{i}\right\} ,$ $T_{\nabla },$ $A_{\nabla }$ and $C_{jk}^{i}$.

Concerning the symmetric and antisymmetric part we have:%
\begin{eqnarray}
\Gamma _{.(jk)}^{i} &=&\left\{ _{jk}^{i}\right\} +\bar{S}%
_{.jk}^{i}+S_{.jk}^{i}-\frac{1}{2}g^{il}\Delta _{jkl}  \label{H.26} \\
\Gamma _{.[jk]}^{i} &=&Q_{.jk}^{i}-\frac{1}{2}C_{.jk}^{i}.  \label{H.27}
\end{eqnarray}

From the above we draw the following conclusions:

1. The connection coefficients in a frame $\{e_{i}\}$ are determined from
the metric, the torsion, the metricity and the unholonomicity objects
(equivalently the commutators) of the frame vectors

2. The symmetric part $\Gamma _{.(jk)}^{i}$ of $\Gamma _{jk}^{i}$ depends on
all fields. This means that the geodesics and the autoparallels in a given
frame depend on the geometric properties of the space (fields $g_{ij},$ $%
Q_{.kj}^{i},g_{ij|k})$ and the unholonomicity of the frame (field $%
C_{.jk}^{i})$

3. The antisymmetric part $\Gamma _{.[jk]}^{i}$ of $\Gamma _{jk}^{i}$
depends only on the fields $Q_{.kj}^{i}$ and $C_{.jk}^{i}.$

4. The objects of unholonomicity $C_{.jk}^{i}$ behave in the same way as the
components of the torsion. This means that even in a Riemannian space where $%
Q_{.kj}^{i}=0,g_{ij|k}=0$ in an unholonomic basis the antisymmetric part $%
\Gamma _{.[jk]}^{i}=-\frac{1}{2}C_{.jk}^{i}\neq 0.$ This result has lead to
the misunderstanding that when one works in an unholonomic frame then one
has introduced torsion, which is not correct! This is the case with the
TEGR. This misunderstanding has important consequences because the effects
one will observe in an unholonomic frame will be frame dependent effects and
not covariant effects. Therefore all conclusions made in a specific
unholonomic frame are restricted to that frame only.

\section{$f(T)$-gravity}

\label{modelfttt}

Teleparallelism uses as dynamical objects special unholonomic frames in
spacetime, called vierbeins, which are defined by the requirement
\begin{equation}
g(e_{i},e_{j})=e_{i}.e_{j}=\eta _{ij}  \label{metrdef}
\end{equation}%
where $\eta _{ij}=\mathrm{diag}(1,-1,-1,-1)$ is the Lorentz metric in
canonical form. Obviously $g_{\mu \nu }(x)=\eta _{ij}h_{\mu }^{i}(x)h_{\nu
}^{j}(x)$ where $e^{i}(x)=h_{\mu }^{i}(x)dx^{i}$ is the dual basis.
Differing from GR, which uses the torsionless Levi-Civita connection,
Teleparallelism utilizes the curvatureless Weitzenb\"{o}ck connection \cite%
{Weitz01}, where Weitzenb\"{o}ck non-null torsion is
\begin{equation}
T_{\mu \nu }^{\beta }=\hat{\Gamma}_{\nu \mu }^{\beta }-\hat{\Gamma}_{\mu \nu
}^{\beta }=h_{i}^{\beta }(\partial _{\mu }h_{\nu }^{i}-\partial _{\nu
}h_{\mu }^{i})\;.  \label{Wein}
\end{equation}%
Notice that we assume that the Ricci rotation coefficients obey $\Omega
_{jk}^{i}=T_{jk}^{i}$ and encompass all the information concerning the
gravitational field. The TEGR Lagrangian for the gravitational field
equations (Einstein equations) is taken to be:%
\begin{equation}
T=S_{\beta }^{\mu \nu }T_{\mu \nu }^{\beta }  \label{lagrangian}
\end{equation}%
where
\begin{equation}
S_{\beta }^{\mu \nu }=\frac{1}{2}(K_{\beta }^{\mu \nu }+\delta _{\beta
}^{\mu }T_{\theta }^{\theta \nu }-\delta _{\beta }^{\nu }T_{\theta }^{\theta
\mu })  \label{s}
\end{equation}%
and $K_{\beta }^{\mu \nu }$ is the tensor
\begin{equation}
K_{\beta }^{\mu \nu }=-\frac{1}{2}(T_{\beta }^{\mu \nu }-T_{\beta }^{\nu \mu
}-T_{\beta }^{\mu \nu }),  \label{contorsion}
\end{equation}%
which equals the difference of the Levi Civita connection in the holonomic
and the unholonomic frame.

In this work the gravitational field will be driven by a Lagrangian density
which is a function of $T$. Therefore, the corresponding action of $f(T)$
gravity reads as
\begin{equation}
\mathcal{A}=\frac{1}{16\pi G}\int {d^{4}xef(T)}  \label{action1122}
\end{equation}%
where $e=det(e_{\mu }^{i}\cdot e_{\nu }^{i})=\sqrt{-g}$ and $G$ is Newton's
constant. TEGR and thus General Relativity is restored when $f(T)=T$. First
of all, in order to construct a realistic cosmology we have to incorporate
the matter and radiation sectors too. Therefore, the total action is written
as
\begin{equation}
S_{\mathrm{tot}}=\mathcal{A}+\frac{1}{16\pi G}\int d^{4}xe\left(
L_{m}+L_{r}\right) ,  \label{action11}
\end{equation}%
If matter couples to the metric in the standard form then the variation of
the action (\ref{action11}) with respect to the vierbein leads to the
equations \cite{Ferraro}
\begin{eqnarray}
&&e^{-1}\partial _{\mu }(e{S}_{i}^{\mu \nu })f^{\prime }(T)-h_{i}^{\lambda
}T_{\mu \lambda }^{\beta }S_{\beta }^{\nu \mu }f^{\prime }(T)  \notag \\
&&+S_{i}^{\mu \nu }\partial _{\mu }(T)f^{\prime \prime }(T)+\frac{1}{4}%
h_{i}^{\nu }f(T)=4\pi Gh_{i}^{\beta }T_{\beta }^{\nu }  \label{equations}
\end{eqnarray}%
where a prime denotes differentiation with respect to $T$, ${S_{i}}^{\mu \nu
}={h_{i}}^{\beta }S_{\beta }^{\mu \nu }$ and $T_{\mu \nu }$ is the matter
energy-momentum tensor.

\section{Generalized Lagrangian formulation of $f(T)$ gravity}

\label{Lagrformul}

In this section, we provide a generalized Lagrangian formulation in order to
construct a theory of $f(T)$ gravity. Specifically, the gravitational field
is driven by the Lagrangian density $f(T)$ in (\ref{action1122}), which can
be generalized through the use of a Lagrange multiplier. In particular, we
can write it as
\begin{equation}
L\left( x^{k},x^{\prime k},T\right) =2f_{T}\bar{\gamma}_{ij}\left(
x^{k}\right) x^{\prime i}x^{\prime j}+M\left( x^{k}\right) \left(
f-Tf_{T}\right)  \label{L1}
\end{equation}%
where $x^{\prime }=\frac{dx}{d\tau }$, $M(x^{k})$ is the Lagrange multiplier
and $\bar{\gamma}_{ij}$ is a second rank tensor which is related to the
frame [one can use $eT(x^{k},x^{\prime k})$] of the background spacetime. In
the same lines, the Hamiltonian of the system is written as
\begin{equation}
H\left( x^{k},x^{\prime k},T\right) =2f_{T}\bar{\gamma}_{ij}\left(
x^{k}\right) x^{\prime i}x^{\prime j}-M\left( x^{k}\right) \left(
f-Tf_{T}\right) =0\;.  \label{L2}
\end{equation}

In this case, the system is autonomous hence $\partial _{\tau }$ is a
Noether symmetry with corresponding Noether integral the Hamiltonian $H$. \

In this framework, considering $\{x^{k},T\}$ as the canonical variables of
the configuration space, we can derive, after some algebra, the general
field equations of $f(T)$ gravity. Indeed, starting from the Lagrangian (\ref%
{L1}), the Euler-Lagrange equations
\begin{equation}
\frac{\partial L}{\partial T}=0,\;\;\;\;\frac{d}{d\tau }\left( \frac{%
\partial L}{\partial x^{\prime k}}\right) -\frac{\partial L}{\partial x^{k}}%
=0\,  \label{Lf.01}
\end{equation}%
give rise to
\begin{equation}
f_{TT}\left( 2\bar{\gamma}_{ij}x^{\prime i}x^{\prime j}-MT\right) =0,
\label{Lf.04}
\end{equation}%
\begin{equation}
x^{i\prime \prime }+{\bar{\Gamma}}_{jk}^{i}x^{j\prime }x^{k\prime }+\frac{%
f_{TT}}{f_{T}}x^{i\prime }T^{\prime }-M^{,i}\frac{\left( f-Tf_{T}\right) }{%
4f_{T}}=B_{m}^{i}\;.  \label{Lf.06}
\end{equation}

The functions ${\bar{\Gamma}}_{jk}^{i}$ are considered to be the Christoffel
symbols for the metric $\bar{\gamma}_{ij}$. Therefore, the system is
determined by the two independent differential equations (\ref{Lf.04}),(\ref%
{Lf.06}) and the Hamiltonian constraint $H=C_{m}$ where $H$ is given by
equation (\ref{L2}) and $C_{m},$ $B_{m}^{i}$ are the components of the
energy momentum tensor $T_{\mu \nu .}$.

The point-like Lagragian (\ref{L1}) determines completely the related
dynamical system in the minisuperspace $\{x^{k},T \}$, implying that one can
easily recover some well known cases of cosmological interest. In brief,
these are:

\begin{itemize}
\item The static spherically symmetric spacetime:
\begin{equation}
ds^{2}=-a^{2}\left( \tau \right) dt^{2}+\frac{1}{N^{2}\left( a\left( \tau
\right) ,b\left( \tau \right) \right) }d\tau ^{2}\,+b^{2}\left( \tau \right)
\left( d\theta ^{2}+\sin ^{2}\theta d\phi ^{2}\right)  \label{SSA.1}
\end{equation}%
arising from the diagonal vierbein \footnote{%
Note that, in general, one can choose a non-diagonal vierbein, giving rise
to the same metric through (\ref{metrdef}).}
\begin{equation}
e_{i}^{A}=\left( a\left( \tau \right) ,\frac{1}{N\left( a\left( \tau \right)
,b\left( \tau \right) \right) },b\left( \tau \right) ,b\left( \tau \right)
\sin \theta \right) \;  \label{SSA}
\end{equation}%
where $a(\tau )$ and $b(\tau )$ are functions which need to be determined.
Therefore, the line element of $\bar{\gamma}_{ij}$ and $M\left( x^{k}\right)
$ are given by
\begin{equation}
ds_{\bar{\gamma}}^{2}=N\left( 2b~da~db+a~db^{2}\right) ~,~~M(a,b)=\frac{%
ab^{2}}{N}.  \label{SSA1}
\end{equation}

\item The flat FRW spacetime with Cartesian coordinates:
\begin{equation}
ds^{2}=-dt^{2}+a^{2}\left( t\right) \left( dx^{2}+dy^{2}+dz^{2}\right)
\end{equation}%
arising from the vierbein
\begin{equation}
e_{i}^{A}=\left( 1,a\left( t\right) ,a\left( t\right) ,a\left( t\right)
\right) \;
\end{equation}%
where $t$ is the cosmic time and $a(t)$ is the scale factor of the universe.
In this case we have
\begin{equation}
ds_{\bar{\gamma}}^{2}=3a~da^{2}~,~M(a)=a^{3}(t).
\end{equation}

\item The Bianchi type I spacetime:
\begin{equation}
\!ds^{2}=-\frac{1}{N^{2}\left( a\left( t\right) ,\beta \left( t\right)
\right) }dt^{2}+a^{2}\left( t\right) \left[ e^{-2\beta \left( t\right)
}dx^{2}+e^{\beta \left( t\right) }\left( dy^{2}+dz^{2}\right) \right] \
\end{equation}%
arising from the vierbein
\begin{equation}
e_{i}^{A}=\left( \frac{1}{N\left( a\left( t\right) ,\beta \left( t\right)
\right) },a(t)e^{-\beta (t)},a(t)^{\frac{\beta (t)}{2}},a(t)^{\frac{\beta (t)%
}{2}}\right) \;.
\end{equation}%
In this case, we obtain
\begin{equation}
ds_{\bar{\gamma}}^{2}=N\left( -4ada^{2}+a^{3}d\beta ^{2}\right)
~~,~~M(a,\beta )=\frac{a^{3}(t)}{N}.
\end{equation}%
\qquad
\end{itemize}

In the present work we will focus on the static spherically-symmetric metric
deriving new spherically symmetric solutions for $f(T)$ gravity. In
particular, we look for Noether symmetries in order to reveal the existence
of analytical solutions.

\section{The Noether Symmetry Approach for $f(T)$ gravity}

\label{Noethercond}

The aim is now to apply the Noether Symmetry Approach to a general class of $%
f(T)$ gravity models where the corresponding Lagrangian of the field
equations is given by equation (\ref{L1}). First of all, we perform the
analysis for arbitrary spacetimes, and then we focus on the spatially flat
FRW spacetime and on the static spherically-symmetric spacetime.

\subsection{Searching for Noether point symmetries in general spacetimes}

The Noether symmetry condition for Lagrangian (\ref{L1}) is given by
\begin{equation}
X^{\left[ 1\right] }L+L\xi ^{\prime }=g^{\prime }~,~g=g\left( \tau
,x^{i}\right) .  \label{LL2}
\end{equation}%
Notice that the Lagrangian (\ref{H.52}) is a singular Lagrangian (the
Hessian vanishes), hence the jet space is $\bar{B}_{M}=\left\{ \tau ,x^{i},T,%
\dot{x}^{i}\right\} $ and thus the first prolongation of $X$ in the jet
space $\bar{B}_{M}~$is \cite{Christ,Havelkova, Ziping}
\begin{eqnarray}
&&X^{\left[ 1\right] }=\xi \left( \tau ,x^{k},T\right) \partial _{\tau
}+\eta ^{k}\left( \tau ,x^{k},T\right) \partial _{i}  \notag \\
&&\ \ \ \ \ \ \ \ \ +\mu \left( \tau ,x^{k},T\right) \partial _{T}+\left(
\eta ^{\prime i}-\xi ^{\prime }x^{\prime i}\right) \partial _{x^{\prime i}}.
\end{eqnarray}

For each term of the Noether condition (\ref{LL2}) for the Lagrangian (\ref%
{L1}) we obtain
\begin{eqnarray*}
X^{\left[ 1\right] }L &=&2f_{T}\bar{g}_{ij,k}\eta ^{k}x^{\prime i}x^{\prime
j}+M_{,k}\eta ^{k}\left( f-Tf_{T}\right) \\
&&+2f_{TT}\mu \bar{g}_{ij}x^{\prime i}x^{\prime j}-Mf_{TT}\mu \\
&&+4f_{T}\bar{g}_{ij}x^{\prime i}\left( \eta _{,\tau }^{j}+\eta
_{,k}^{j}x^{\prime k}+\eta _{,T}^{j}T^{\prime }\right. \\
&&\left. -\xi _{,\tau }x^{\prime j}-\xi _{,k}x^{\prime j}x^{\prime k}-\xi
_{,T}x^{\prime j}T^{\prime }\right) ,
\end{eqnarray*}%
\begin{equation*}
L\xi ^{\prime }=\left[ 2f_{T}\bar{g}_{ij}x^{\prime i}x^{\prime j}+M\left(
x^{i}\right) \left( f-Tf_{T}\right) \right] \left( \xi _{,\tau }+\xi
_{,k}x^{\prime k}+\xi _{,T}T^{\prime }\right) ,
\end{equation*}%
\begin{equation*}
g^{\prime }=g_{,\tau }+g_{,k}x^{\prime k}+g_{,T}T^{\prime }\;.
\end{equation*}%
Inserting these expressions into (\ref{LL2}) we find the Noether symmetry
conditions
\begin{equation}
\xi _{,k}=0~,~\xi _{,T}=0~,~g_{,T}=0~,~\eta _{,T}=0,
\end{equation}%
\begin{equation}
4f_{T}\bar{\gamma}_{ij}\eta _{,\tau }^{k}=g_{,k},  \label{LL3}
\end{equation}%
{\small {\
\begin{equation}
M_{,k}\eta ^{k}\left( f-Tf_{T}\right) -MTf_{TT}\mu +\xi _{,\tau }M\left(
f-Tf_{T}\right) -g_{,\tau }=0,  \label{LL4}
\end{equation}%
}}
\begin{equation}
2f_{T}\bar{\gamma}_{ij,k}\eta ^{k}+2f_{TT}\mu \bar{\gamma}_{ij}+4f_{T}\bar{%
\gamma}_{ij}\eta _{,k}^{j}-2f_{T}\bar{\gamma}_{ij}\xi _{,\tau }=0\;.
\label{LL5}
\end{equation}%
Conditions $\eta _{,T}=g_{,T}=0$ imply, through equation (\ref{LL3}), that $%
\eta _{,\tau }^{k}=g_{,k}=0$. Also, equation (\ref{LL5}) takes the form
\begin{equation}
L_{\eta }\bar{\gamma}_{ij}=\left( \xi _{,\tau }-\frac{f_{TT}}{f_{T}}\mu
\right) \bar{\gamma}_{ij},  \label{LL6}
\end{equation}%
where $L_{\eta }\bar{\gamma}_{ij}$ is the Lie derivative with respect to the
vector field $\eta ^{i}(x^{k})$. Furthermore, from (\ref{LL6}) we deduce
that $\eta ^{i}$ is a CKV of the metric $\bar{\gamma}_{ij}$, with conformal
factor
\begin{equation}
2\bar{\psi}\left( x^{k}\right) =\xi _{,\tau }-\frac{f_{TT}}{f_{T}}\mu =\xi
_{,\tau }-S(\tau ,x^{k})\;.  \label{LL7}
\end{equation}%
Finally, utilizing simultaneously equations (\ref{LL4}), (\ref{LL6}), (\ref%
{LL7}) and the condition $g_{,\tau }=0$, we rewrite (\ref{LL4}) as
\begin{equation}
M_{,k}\eta ^{k}+\left[ 2\bar{\psi}+\left( 1-\frac{Tf_{T}}{f-Tf_{T}}\right) S%
\right] M=0\;.  \label{LL9}
\end{equation}%
Considering that $S=S(x^{k})$ and using the condition $g_{,\tau }=0$, we
acquire $\xi _{,\tau }=2\bar{\psi}_{0},\bar{\psi}_{0}\in \mathbb{R}$ with $%
S=2(\bar{\psi}_{0}-\bar{\psi})$. At this point, we have to deal with the
following two situations:

Case 1. In the case of $S=0$, the symmetry conditions are
\begin{eqnarray}
&&L_{\eta }\bar{\gamma}_{ij}=2\bar{\psi}_{0}\bar{\gamma}_{ij}  \notag \\
&&M_{,k}\eta ^{k}+2\bar{\psi}_{0}M=0
\end{eqnarray}%
implying that the vector $\eta ^{i}(x^{k})$ is a homothetic Vector of the
metric $~\bar{\gamma}_{ij}$. The latter means that for arbitrary $f\left(
T\right) \neq T^{n}$ functional forms, the dynamical system could possibly
admit extra (time independent) Noether symmetries.

Case 2. If $S\neq 0$ then equation (\ref{LL9}) leads to the differential
equation
\begin{equation}
\frac{Tf_{T}}{f-Tf_{T}}=C
\end{equation}%
which has the solution
\begin{equation}
f(T)=T^{n},\;\;\;\;C\equiv \frac{n}{1-n}\;.
\end{equation}%
In this context, $\eta ^{i}(x^{k})$ is a CKV of $\bar{\gamma}_{ij}$, and the
symmetry conditions become
\begin{eqnarray}
&&L_{\eta }\bar{\gamma}_{ij}=2\bar{\psi}\bar{\gamma}_{ij},  \notag \\
&&M_{,k}\eta ^{k}+\left[ 2\bar{\psi}+\left( 1-C\right) S\right] =0,
\end{eqnarray}%
with $S=2(\bar{\psi}_{0}-\bar{\psi})$.

Collecting the above results we have the following result

\begin{lemma}
\label{Lemmaft}The general autonomous Lagrangian%
\begin{equation*}
L\left( x^{k},x^{\prime k},T\right) =2f_{T}\bar{\gamma}_{ij}\left(
x^{k}\right) x^{\prime i}x^{\prime j}+M\left( x^{k}\right) \left(
f-Tf_{T}\right)
\end{equation*}%
admits extra Noether point symmetries as follows

a) If $f\left( T\right) $ is an arbitrary function of $T$, then the symmetry
vector is written as%
\begin{equation*}
X=\left( 2\psi _{0}\tau +c_{1}\right) \partial _{\tau }+\eta ^{i}\left(
x^{k}\right) \partial _{i}
\end{equation*}%
where $\eta ^{i}\left( x^{k}\right) $ is a HV/KV of the metric $\bar{\gamma}%
_{ij}$ and the following condition holds%
\begin{equation*}
M_{,k}\eta ^{k}+2\bar{\psi}_{0}M=0.
\end{equation*}

b) If $f\left( T\right) $ is a power law, i.e. $f\left( T\right) =T^{n}$,
then we have the extra symmetry vector%
\begin{equation*}
X=\left( 2\bar{\psi}_{0}\tau \right) \partial _{\tau }+\eta ^{i}\left(
x^{k}\right) \partial _{i}+\frac{2\bar{\psi}_{0}-2\bar{\psi}\left(
x^{k}\right) }{C}T\partial _{T}
\end{equation*}%
where $C=\frac{n}{1-n}$, $\eta ^{i}\left( x^{k}\right) $ is a CKV of the
metric $\bar{\gamma}_{ij}$ with conformal factor $\bar{\psi}\left(
x^{k}\right) $ and the following condition holds%
\begin{equation*}
M_{,k}\eta ^{k}+\left[ 2\bar{\psi}+\left( 1-C\right) S\right] M=0
\end{equation*}%
where $S=2\left( \bar{\psi}_{0}-\bar{\psi}\right) $.

In both cases the corresponding gauge function is a constant.
\end{lemma}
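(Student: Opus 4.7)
The plan is to treat this as an application of the Noether symmetry condition to the singular Lagrangian at hand, where $T$ plays the role of a Lagrange multiplier and is therefore not prolonged as a velocity variable. First I would write down the candidate generator $X = \xi(\tau,x^k,T)\partial_\tau + \eta^k(\tau,x^k,T)\partial_k + \mu(\tau,x^k,T)\partial_T$ and its first prolongation in the jet space $\bar{B}_M = \{\tau,x^k,T,x'^i\}$, exactly as in section \ref{Noethercond}. Then I would expand the Noether identity $X^{[1]}L + L\,\xi' = g'$ term by term, computing $X^{[1]}L$, $L\xi'$, and $g'=g_{,\tau}+g_{,k}x'^k+g_{,T}T'$, collecting the result as a polynomial in the independent coordinates $x'^i$ and $T'$ of the jet space.

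Next I would exploit the fact that each coefficient in this polynomial must vanish independently. The coefficient of $T'$ together with the coefficient of $x'^iT'$ immediately forces $g_{,T}=0$, $\eta^k_{,T}=0$, $\xi_{,k}=\xi_{,T}=0$, so that $\xi=\xi(\tau)$ and $\eta^k=\eta^k(x^j)$. The coefficient of $x'^i$ then gives $4f_T\bar{\gamma}_{ij}\eta^j_{,\tau}=g_{,k}$, which combined with $\eta^k_{,T}=0$ implies $\eta^k_{,\tau}=0$ and $g=g(x^k)$ with $g_{,k}=0$, so the gauge function is constant. The quadratic-in-velocity part yields the geometric condition (\ref{LL6}), namely $L_\eta \bar{\gamma}_{ij}=(\xi_{,\tau}-\tfrac{f_{TT}}{f_T}\mu)\bar{\gamma}_{ij}$, which identifies $\eta^i$ as a CKV of the mini-superspace metric $\bar{\gamma}_{ij}$ with conformal factor $2\bar{\psi}=\xi_{,\tau}-S$, where $S\equiv \tfrac{f_{TT}}{f_T}\mu$. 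The remaining condition coming from the velocity-free terms is (\ref{LL9}), which couples $\eta^k$, $M$, and the $f(T)$ function.

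The dichotomy in the statement would then arise from a clean case split on $S$. Since $\xi=\xi(\tau)$ and $\bar{\psi}=\bar{\psi}(x^k)$, comparing the $\tau$- and $x$-dependence in $2\bar{\psi}=\xi_{,\tau}-S$ forces $\xi_{,\tau}=2\bar{\psi}_0$ constant and $S=2(\bar{\psi}_0-\bar{\psi})$. If $S\equiv 0$, then $\bar{\psi}=\bar{\psi}_0$ is constant, so $\eta^i$ is a HV (or KV if $\bar{\psi}_0=0$), $f(T)$ is left arbitrary, and (\ref{LL9}) reduces to $M_{,k}\eta^k+2\bar{\psi}_0 M=0$, yielding case (a). If $S\neq 0$ somewhere, then (\ref{LL9}) can be rearranged so that the $T$-dependence separates from the $x$-dependence, producing $\tfrac{Tf_T}{f-Tf_T}=C$ for some constant $C$; integrating gives $f(T)=T^n$ with $C=n/(1-n)$, while $\eta^i$ remains a proper CKV and the residual constraint on $M$ is exactly the one stated in case (b).

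The main obstacle I anticipate is bookkeeping in the separation of variables step: one has to verify rigorously that $S$ is in fact a function of $x^k$ only (and not of $\tau$ or $T$), and that the $T$-dependence in (\ref{LL9}) can be isolated cleanly to yield the ODE forcing $f(T)=T^n$. The delicate point is that $\mu$ a priori depends on $(\tau,x^k,T)$, so one must use the earlier determining equations (in particular that $\xi$ and $\eta^k$ are independent of $T$ and that $\xi_{,\tau}$ is constant) together with the form of $\bar{\psi}(x^k)$ to conclude that $\mu = \tfrac{f_T}{f_{TT}}S(x^k)$ factorizes, which is what makes the separation argument go through and forces the power-law form of $f(T)$ in case (b).
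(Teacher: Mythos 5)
Your proposal follows essentially the same route as the paper's own derivation in section \ref{Noethercond}: the same prolongation in the jet space $\bar{B}_{M}=\{\tau,x^{i},T,\dot{x}^{i}\}$, the same determining equations (\ref{LL3})--(\ref{LL5}), the identification of $\eta^{i}$ as a CKV of $\bar{\gamma}_{ij}$ with $2\bar{\psi}=\xi_{,\tau}-S$, and the same dichotomy $S=0$ versus $S\neq 0$ leading to the separation $\frac{Tf_{T}}{f-Tf_{T}}=C$ and hence $f(T)=T^{n}$. The argument is correct, and your closing remark about verifying that $\mu=\frac{f_{T}}{f_{TT}}S(x^{k})$ factorizes is precisely the point the paper settles by noting $S=S(x^{k})$ together with $\xi_{,\tau}=2\bar{\psi}_{0}$ constant.
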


In the following we apply the above Lemma in the case of FRW cosmology and
static spherical symmetric spacetimes.

\section{Spatially flat FRW}

\label{frwftt}

The FRW\ in the holonomic (commoving) frame $\{\partial t,\partial
x,\partial y,\partial z\}$ has the form
\begin{equation*}
ds^{2}=-dt^{2}+a^{2}(t)(dx^{2}+dy^{2}+dz^{2})
\end{equation*}%
where $a(t)$ is the cosmological scale factor. In this spacetime we define
the vierbein (unholonomic frame) $\{e_{i}\}$ with the requirement:
\begin{equation}
h_{\mu }^{i}(t)=\mathrm{diag}(-1,a(t),a(t),a(t)).  \label{metric}
\end{equation}%
In order to derive the cosmological equations in a FRW metric, we need to
deduce a point-like Lagrangian from the action (\ref{action1122}). As a
consequence, the infinite number of degrees of freedom of the original field
theory will be reduced to a finite number. In this framework considering $%
(a,T)$ as canonical variables the corresponding $f(T)$ Lagrangian becomes:%
\begin{equation}
\mathcal{L}=a^{3}\left[ f(T)-Tf^{\prime }(T)\right] -6\dot{a}^{2}af^{\prime
}(T)\,.  \label{H.52}
\end{equation}

Therefore the field equations are
\begin{equation}
T=-6\left( \frac{\dot{a}^{2}}{a^{2}}\right) =-6H^{2}  \label{H.50}
\end{equation}%
\begin{equation}
12H^{2}f^{\prime }(T)+f(T)=16\pi G\rho \;  \label{friedmann}
\end{equation}%
\begin{equation}
48H^{2}f^{\prime \prime }(T)\dot{H}-f^{\prime 2}+4\dot{H}-f(T)=16\pi Gp
\label{acceleration}
\end{equation}%
where $H$ is the Hubble parameter, $\rho =\rho _{m}+\rho _{r}$ and $%
p=p_{m}+p_{r}$ are the total energy density and (isotropic) pressure
respectively, which are measured in the unholonomic frame. It is interesting
to mention that using the conservation equation $\dot{\rho}+3H(\rho +p)=0$
one can rewrite equations (\ref{friedmann}) and (\ref{acceleration}) in the
usual form
\begin{equation}
H^{2}=\frac{8\pi G}{3}(\rho +\rho _{T})  \label{modfri}
\end{equation}%
\begin{equation}
2\dot{H}+3H^{2}=-\frac{8\pi G}{3}(p+p_{T})  \label{modacce}
\end{equation}%
where
\begin{equation}
\rho _{T}=\frac{1}{16\pi G}[2Tf^{\prime }(T)-f(T)-T]  \label{rhoT}
\end{equation}%
\begin{equation}
p_{T}=\frac{1}{16\pi G}[2\dot{H}(4Tf^{\prime \prime }(T)+2f^{\prime
}(T)-1)]-\rho _{T}.  \label{pT}
\end{equation}%
are the unholonomicity contributions to the energy density and pressure.
Finally, a basic question here is the following: under which circumstances $%
f(T)$ gravity can resemble that of the scalar field dark energy? In order to
address this crucial question we need to calculate the effective
equation-of-state parameter $w(a)$ for the $f(T)$ cosmology. Indeed,
utilizing equations (\ref{rhoT}) and (\ref{pT}), we can easily obtain the
effective unholonomicity equation of state as
\begin{equation}
\omega _{T}\equiv \frac{p_{T}}{\rho _{T}}=-1+\frac{4\dot{H}(4Tf^{\prime
\prime }(T)+2f^{\prime }(T)-1)}{4Tf^{\prime }(T)-2f(T)-T}\;.
\label{omegaeff}
\end{equation}

\subsection{Noether symmetries}

From Lemma \ref{Lemmaft} for the Lagrangian (\ref{H.52}) we find that for

\begin{itemize}
\item For arbitrary $f\left( T\right) $ the Lagrangian (\ref{H.52}) admits
only the Noether symmetry $\partial _{t}$

\item For $f\left( T\right) =f_{0}T^{n}~$where $f_{0}$ is the integration
constant we have the following extra Noether symmetries:\newline
- For $n\neq \frac{1}{2},\frac{3}{2}$ the Noether point symmetry vector is
\begin{eqnarray*}
X_{1} &=&\left( \frac{3C}{2n-1}t\right) \partial _{t}+\left( Ca+c_{3}a^{1-%
\frac{3}{2n}}\right) \partial _{a}+ \\
&&+\left[ \frac{1}{n}\left( \left( c-m\right) n+3c_{3}a^{-\frac{3}{2n}%
}\right) +\frac{3C}{2n-1}+c\right] T\partial _{T}
\end{eqnarray*}%
with corresponding Noether integral
\begin{equation*}
I_{1}=\left( \frac{3C}{2n-1}t\right) \mathcal{H}-12f_{0}n\left(
Ca^{2}+c_{3}a^{2-\frac{3}{2n}}\right) T^{n-1}\dot{a}
\end{equation*}%
where $C=\frac{m\left( 1-n\right) +nc}{3}$.\newline
- For $n=\frac{3}{2},$ the Noether point symmetry is
\begin{eqnarray}
X_{2} &=&\frac{1}{5}\left( 3c-2m\right) t\partial _{t}+\left[ \left( \frac{c%
}{2}-\frac{m}{6}\right) a+c_{4}\right] \partial _{a}+  \notag \\
&&+\left[ \left( m+11c\right) -\frac{c_{4}}{a}+\frac{2}{5}\left(
8c-2m\right) \right] T\partial _{T}
\end{eqnarray}%
with corresponding Noether integral
\begin{equation*}
I_{2}=\frac{1}{5}\left( 3c-2m\right) t\mathcal{H}-18f_{0}\left[ \left( \frac{%
c}{2}-\frac{m}{6}\right) a^{2}+c_{4}a\right] T^{\frac{1}{2}}\dot{a}\;.
\end{equation*}%
- For $n=\frac{1}{2}$ the Noether point symmetry becomes
\begin{equation*}
X_{3}=c_{1}t\partial _{t}+\left( -2c_{1}+c_{3}a^{\frac{1}{4}}\right)
\partial _{a}+\left( 4c_{1}+c_{2}+\frac{3c_{3}}{2}a^{-\frac{3}{4}}\right)
T\partial _{T}
\end{equation*}%
with Noether integral
\begin{equation*}
I_{3}=c_{1}t\mathcal{H}-6f_{0}\left( -2c_{1}a+c_{3}a^{\frac{3}{4}}\right)
T^{-\frac{1}{2}}\dot{a}.
\end{equation*}
\end{itemize}

We would like to stress that our results are in agreement with those of \cite%
{Wei2012298} but they are richer because we have considered the term $\xi
\partial _{t}$ in the generator which is not done in \cite{Wei2012298}. To
this end it becomes evident that $f\left( T\right) =f_{0}T^{n}$ is the only
form that admits extra Noether point symmetries implying the existence of
exact analytical solutions.

\subsection{Exact cosmological solutions}

In this section we proceed in an attempt to analytically solve the basic
cosmological equations of the $f\left( T\right) =f_{0}T^{n}$ gravity model.
In particular from the Lagrangian (\ref{H.52}) we obtain the main field
equation
\begin{equation}
\ddot{a}+\frac{1}{2a}\dot{a}^{2}+\frac{f^{\prime \prime }}{f^{\prime }}\dot{a%
}\dot{T}-\frac{1}{4}a\frac{f^{\prime }T-f}{f^{\prime }}=0\;.  \label{FF}
\end{equation}%
Also differentiating equation (\ref{H.50}) we find
\begin{equation}
\dot{T}=12\left[ \left( \frac{\dot{a}}{a}\right) ^{3}-\frac{\dot{a}\ddot{a}}{%
a^{2}}\right] \;.  \label{FFa}
\end{equation}%
Finally, inserting $f\left( T\right) =f_{0}T^{n}$, $H=\dot{a}/a$, equation (%
\ref{H.50}) and equation(\ref{FFa}) into equation(\ref{FF}) we derive after
some algebra that
\begin{equation}
\left( 2n-1\right) \left[ \ddot{a}-\frac{\dot{a}^{2}}{2a}\frac{\left(
2n-3\right) }{n}\right] =0
\end{equation}%
a solution of which is
\begin{equation}
a(t)=a_{0}t^{2n/3}\;\;\;\;H(t)=\frac{2n}{3t}  \label{aHE}
\end{equation}%
or
\begin{equation}
H=H_{0}a^{-3/2n}=H_{0}(1+z)^{3/2n}  \label{HE}
\end{equation}%
where $n\in \mathcal{R}_{+}^{\star }-\{\frac{1}{2}\}$, $a(z)=(1+z)^{-1}$ and
$H_{0}$ is the Hubble parameter. We note that the above analytic solution
confirms that of \cite{Wei2012298}.

From equation(\ref{aHE}) it is evident that this cosmological models have no
inflection point (that is the deceleration parameter does not change sign).
Therefore, the main drawback of the $f(T)=f_{0}T^{n}$ gravity model is that
the deceleration parameter preserves sign, and therefore the universe always
accelerates or always decelerates depending on the value of $n$. Indeed, if
we consider $n=1$ (TEGR) then the above solution boils down to the Einstein
de Sitter model as it should. On the other hand, the accelerated expansion
of the universe ($q<0$) is recovered for $n>\frac{3}{2}$. The latter means
that even if we admit $n>\frac{3}{2}$ as a mere phenomenological
possibility, we would be also admitting that the universe has been
accelerating forever, which is of course difficult to accept.

\subsection{Cosmological analogue to other models}

In this section (assuming flatness) we present the cosmological equivalence
at the background level between the current $f(T)$ gravity with $f(R)$
modified gravity and dark energy, through a specific reconstruction of the $%
f(R)$ and vacuum energy density namely, $f(R)=R^{n}$ and $\Lambda
(H)=3\gamma H^{2}$. In particular, in the case of $f(R)=R^{n}$ it has been
found (see chapter \ref{chapter9}) that the corresponding scale factor obeys
equation (\ref{aHE}), where $n\in \mathcal{R}_{+}^{\star }-\{2,\frac{3}{2},%
\frac{7}{8}\}$.\footnote{%
The Lagrangian here is $\mathcal{L}_{R}=6naR^{n-1}\dot{a}%
^{2}+6n(n-1)a^{2}R^{n-2}\dot{a}\dot{R}+(n-1)a^{3}R^{n}$, where $R$ is the
Ricci scalar. For $n=1$ the solution of the Euler-Lagrange equations is the
Einstein de-Sitter model [$a(t)\propto t^{2/3}$] as it should. Note, that
for $n=2$ one can find a de-Sitter solution [$a(t)\propto e^{H_{0}t}$, see
chapter \ref{chapter9}].}

On the other hand, considering a spatially flat FRW metric and in the
context of GR the combination of the Friedmann equations with the total
(matter+vacuum) energy conservation in the matter dominated era, provides
(for more details see \cite{BPS1,BPS2})
\begin{equation}
{\dot{H}}+\frac{3}{2}H^{2}=\frac{\Lambda }{2}\;.  \label{LLAA}
\end{equation}%
Solving equation (\ref{LLAA}) for $\Lambda (H)=3\gamma H^{2}$ (see \cite%
{FreeseET87,CarvalhoET92,ArcuriWaga94}) we end up with
\begin{equation}
H=H_{0}a^{-3(1-\gamma )/2}=H_{0}(1+z)^{3(1-\gamma )/2}\;.  \label{HE11}
\end{equation}%
Now, comparing equations (\ref{HE}), (\ref{HE11}) and connecting the above
coefficients as $n^{-1}=1-\gamma $, we find that the $f(T)=f_{0}T^{n}$ and
the flat $\Lambda (H)=3\gamma H^{2}$ models can be viewed as equivalent
cosmologies as far as the Hubble expansion is concerned, despite the fact
that the time varying vacuum model is inside GR. However, when the $\Lambda
(H)=3\gamma H^{2}$ cosmological model is confronted with the current
observations it provides a poor fit\thinspace \cite{BPS1,BPS2}. Because the
current time varying vacuum model shares exactly the same Hubble parameter
with the $f(T)=f_{0}T^{n}$ gravity model, it follows that the latter is also
under observational pressure when compared against the background
cosmological data. The same observational situation holds also for the $%
f(R)=R^{n}$ modified gravity.

\section{Static spherically symmetric spacetimes}

\label{spergem}

We apply now the results of the general Noether analysis of the previous
subsection, to the specific case of static spherically-symmetric geometry
given by the \ metric (\ref{SSA.1}), that is the vierbein (\ref{SSA}). Armed
with the general expressions provided above, we can deduce the Noether
algebra of the metric (\ref{SSA.1}).

In this metric the Lagrangian (\ref{L1}) and the Hamiltonian (\ref{L2})
become
\begin{equation}
L=2f_{T}N\left( 2ba^{\prime }b^{\prime }+ab^{\prime 2}\right) +M(a,b)\left(
f-f_{T}T\right)
\end{equation}%
\begin{equation}
H=2f_{T}N\left( 2ba^{\prime }b^{\prime }+ab^{\prime 2}\right) -M(a,b)\left(
f-f_{T}T\right) \equiv 0\;  \label{HLf.06}
\end{equation}%
where $M(a,b)$ is given by (\ref{SSA1}). As one can immediately deduce, TEGR
and thus General Relativity is restored when $f(T)=T$, while if $N=1$, $\tau
=r$ and $ab=1$ we fully recover the standard Schwarzschild solution.

Applying Lemma \ref{Lemmaft} in the case of static spherically-symmetric
geometry, we determine all the functional forms of $f(T)$ for which the
above dynamical system admits Noether point symmetries beyond the trivial
one $\partial _{\tau }.$ We summarize the results in Tables \ref{Tft1}, \ref%
{Tft2} and \ref{Tft3}. Furthermore, we can use the obtained Noether
integrals in order to classify the analytic solutions for each case.

%TCIMACRO{\TeXButton{B}{\begin{table}[tbp] \centering}}%
%BeginExpansion
\begin{table}[tbp] \centering%
%EndExpansion
\caption{Noether Symmetries and Noether Integrals for arbitary f(T)}%
\begin{tabular}{lll}
\hline\hline
$\mathbf{N}\left( a,b\right) $ & \textbf{Noether Symmetry} & \textbf{Noether
Integral} \\ \hline
$\frac{1}{a^{3}}N_{1}\left( a^{2}b\right) $ & $X_{1}=-\frac{a}{2b^{3}}%
\partial _{a}+\frac{1}{b^{2}}\partial _{b}$ & $I_{1}=\frac{N_{1}\left(
a^{2}b\right) }{2a^{3}b^{2}}\left( 2ba^{\prime }+ab^{\prime }\right) f_{T}$
\\
$N_{2}\left( b\sqrt{a}\right) $ & $X_{2}=-2a\partial _{a}+b\partial _{b}$ & $%
I_{2}=N_{2}\left( b\sqrt{a}\right) \left( b^{2}a^{\prime }-abb^{\prime
}\right) f_{T}$ \\
$aN_{3}\left( b\right) $ & $X_{3}=\frac{1}{ab}\partial _{a}~$ & $%
I_{3}=N_{3}\left( b\right) b^{\prime }f_{T}$ \\ \hline\hline
\end{tabular}%
\label{Tft1}%
%TCIMACRO{\TeXButton{E}{\end{table}}}%
%BeginExpansion
\end{table}%
%EndExpansion

In the case of $f\left( T\right) =T^{n}$ we have the additional extra
Noether symmetries of Table \ref{Tft1}

%TCIMACRO{\TeXButton{B}{\begin{table}[tbp] \centering}}%
%BeginExpansion
\begin{table}[tbp] \centering%
%EndExpansion
\caption{Extra Noether Symmetries and Noether Integrals for $f(T)=T^n$}%
\begin{tabular}{lll}
\hline\hline
$\mathbf{N}\left( a,b\right) $ & \textbf{Noether Symmetry\footnote{%
Where $\bar{\psi}_{5-7}$ are the conformal factors of the generators of the
symmetry vectors $\eta ^{i}$, i.e. $\psi =\frac{1}{\dim \gamma _{ij}}\eta
_{;i}^{i}$.}} & \textbf{Noether Integral} \\ \hline
arbitrary & $X_{4}=2\bar{\psi}_{0}\tau +\frac{2\bar{\psi}_{0}\left(
C-1\right) }{2C+1}a\partial _{a}+\frac{2\bar{\psi}_{0}-2\bar{\psi}_{4}}{C}%
T\partial _{T}$ & $I_{4}=2\psi _{0}n\frac{C-1}{1+2C}abN\left( a,b\right)
T^{n-1}b^{\prime }~$ \\
arbitrary & $X_{5}=-2a\partial _{a}+b\partial _{b}-\frac{2\bar{\psi}_{5}}{C}%
T\partial _{T}$ & $I_{5}=nN\left( a,b\right) T^{n-1}\left( b^{2}a^{\prime
}-abb^{\prime }\right) $ \\
arbitrary & $X_{6}=-\frac{a}{2}b^{-\frac{3\left( 1+2C\right) }{4C}}\partial
_{a}+b^{-\frac{3+2C}{4C}}\partial _{b}-\frac{2\bar{\psi}_{6}}{C}T\partial
_{T}$ & $I_{6}=\frac{n}{2}N\left( a,b\right) T^{n-1}\left( 2b^{\frac{2C-3}{4C%
}}a^{\prime }+ab^{-\frac{3+2C}{4C}}b^{\prime }\right) $ \\
arbitrary & $X_{7}=a^{-\frac{1}{2C}}b^{-\frac{1+2C}{4C}}\partial _{a}-\frac{2%
\bar{\psi}_{7}}{C}T\partial _{T}~$ & $I_{7}=N\left( a,b\right) na^{-\frac{1}{%
2C}}b^{-\frac{1+2C}{4C}}T^{n-1}b^{\prime }$ \\ \hline\hline
\end{tabular}%
\label{Tft2}%
%TCIMACRO{\TeXButton{E}{\end{table}}}%
%BeginExpansion
\end{table}%
%EndExpansion

%TCIMACRO{\TeXButton{B}{\begin{table}[tbp] \centering}}%
%BeginExpansion
\begin{table}[tbp] \centering%
%EndExpansion
\caption{Extra Noether Symmetries and Noether Integrals for
$f(T)=T^{\frac{1}{2}}$}%
\begin{tabular}{lll}
\hline\hline
$\mathbf{N}\left( a,b\right) $ & \textbf{Noether Symmetry} & \textbf{Noether
Integral} \\ \hline
arbitrary & $\bar{X}_{4}=2\bar{\psi}_{0}\tau +\frac{3\bar{\psi}_{0}}{2}a\ln
\left( a^{2}b\right) \partial _{a}+\frac{2\bar{\psi}_{0}-2\bar{\psi}%
_{4}^{\prime }}{C}T\partial _{T}$ & $\bar{I}_{4}=\frac{3}{2}\psi _{0}N\left(
a,B\right) T^{-\frac{1}{2}}ab\ln \left( a^{2}b\right) b^{\prime }$ \\
arbitrary & $\bar{X}_{5}=b\partial _{b}-\frac{2\bar{\psi}_{5}}{C}T\partial
_{T}$ & $\bar{I}_{5}=\frac{1}{2}N\left( a,b\right) T^{-\frac{1}{2}}\left(
b^{2}a^{\prime }+abb^{\prime }\right) $ \\
arbitrary & $\bar{X}_{6}=-a\ln \left( ab\right) \partial _{a}+b\ln b\partial
_{b}-\frac{2\bar{\psi}_{6}}{C}T\partial _{T}$ & $\bar{I}_{2}=\frac{1}{2}%
N\left( a,B\right) T^{-\frac{1}{2}}b\left( b\ln b~a^{\prime }-a\ln
a~b^{\prime }\right) $ \\
arbitrary & $\bar{X}_{7}=a\partial _{a}-\frac{2\bar{\psi}_{7}}{C}T\partial
_{T}~$ & $\bar{I}_{3}=\frac{1}{2}N\left( a,b\right) T^{-\frac{1}{2}%
}ab~b^{\prime }$ \\ \hline\hline
\end{tabular}%
\label{Tft3}%
%TCIMACRO{\TeXButton{E}{\end{table}}}%
%BeginExpansion
\end{table}%
%EndExpansion

%(for a similar analysis see \cite{Wei:2011aa,Atazadeh:2011aa,BB1,BB}).

\subsection{Exact Solutions}

\label{Analysol}

Using the Noether symmetries and the corresponding integral of motions
obtained in the previous section, we can extract all the static
spherically-symmetric solutions of $f(T)$ gravity. Without loss of
generality, we choose the conformal factor $N(a,b)$ as $N(a,b)=ab^{2}$ [or
equivalently\footnote{%
Since the space is empty, the field equations are conformally invariant,
therefore the resutls are similar for an arbitrary function $N(a,b)~$(see
chapter \ref{chapter8})} $M(a,b)=1$]. In order to simplify the current
dynamical problem, we consider the coordinate transformation
\begin{equation}
b=\left( 3y\right) ^{\frac{1}{3}}~~\;\;a=\sqrt{\frac{2x}{\left( 3y\right) ^{%
\frac{1}{3}}}}\;.  \label{L4S001}
\end{equation}%
Substituting the above variables into the field equations (\ref{Lf.04}), (%
\ref{Lf.06}), (\ref{HLf.06}) we immediately obtain
\begin{eqnarray}
&&x^{\prime \prime }+\frac{f_{TT}}{f_{T}}x^{\prime }T^{\prime }=0
\label{L4S03} \\
&&y^{\prime \prime }+\frac{f_{TT}}{f_{T}}y^{\prime }T^{\prime }=0
\label{L4S04} \\
&&H=4f_{T}x^{\prime }y^{\prime }-\left( f-Tf_{T}\right)  \label{L4S02}
\end{eqnarray}%
while the torsion scalar is given by
\begin{equation}
T=4x^{\prime }y^{\prime }\;.  \label{L4S01}
\end{equation}%
Finally, the generalized Lagrangian (\ref{L1}) acquires the simple form
\begin{equation}
L=4f_{T}x^{\prime }y^{\prime }+\left( f-Tf_{T}\right) \;.
\end{equation}%
Since the analysis of the previous subsection revealed two classes of
Noether symmetries, namely for arbitrary $f(T)$, and $f(T)=T^{n}$, in the
following subsections we investigate them separately.

\subsubsection{Arbitrary $f(T)$}

In the case where $f(T)$ is arbitrary, a special solution of the system (\ref%
{L4S03})-(\ref{L4S01}) is
\begin{eqnarray}
x\left( \tau \right) &=&c_{1}\tau +c_{2}  \label{L4S5} \\
y\left( \tau \right) &=&c_{3}\tau +c_{4}  \label{L4S6}
\end{eqnarray}%
and the Hamiltonian constraint ($H=0$) reads
\begin{equation}
4c_{1}c_{3}\frac{df}{dT}|_{T=4c_{1}c_{3}}-f+T\frac{df}{dT}|_{T=4c_{1}c_{3}}=0
\label{yLL}
\end{equation}%
where $T=4c_{1}c_{3}$, and $c_{1-4}$ are integration constants. Utilizing (%
\ref{L4S001}), (\ref{L4S5}) and (\ref{L4S6}), we get
\begin{eqnarray}
&&b\left( \tau \right) =3^{\frac{1}{3}}\left( c_{3}\tau +c_{4}\right) ^{%
\frac{1}{3}}  \notag \\
&&a\left( \tau \right) =\frac{\sqrt{6}}{3^{\frac{2}{3}}}\left( c_{1}\tau
+c_{4}\right) ^{\frac{1}{2}}\left( c_{3}\tau +c_{4}\right) ^{\frac{1}{6}}\;.
\end{eqnarray}

For convenience, we can change variables from $b\left( \tau \right) $ to $r$
according to the transformation $b\left( \tau \right) =r$, where $r$ denotes
the radial variable. Inserting this into the above equations, we conclude
that the spacetime (\ref{SSA.1}) in the coordinates $(t,r,\theta ,\phi )$
can be written as
\begin{equation}
ds^{2}=-A\left( r\right) dt^{2}+\frac{1}{c_{3}^{2}}\frac{1}{A\left( r\right)
}dr^{2}+r^{2}\left( d\theta ^{2}+\sin ^{2}\theta d\phi ^{2}\right) ,
\label{yLLa}
\end{equation}%
with
\begin{equation}
A\left( r\right) =\frac{2c_{1}}{3c_{3}}r^{2}-\frac{2c_{\mu }}{c_{3}r}%
=\lambda _{A}(1-\frac{r_{\star }}{r})R(r),  \label{ALLa}
\end{equation}%
and
\begin{equation}
R(r)=\left( \frac{r}{r_{\star }}\right) ^{2}+\frac{r}{r_{\star }}+1.
\label{RLLa}
\end{equation}%
In these expressions we have $c_{\mu }=c_{1}c_{4}-c_{2}c_{3}$, $\lambda
_{A}=\left( \frac{8c_{1}c_{\mu }^{2}}{3c_{3}^{3}}\right) ^{1/3}$ and $%
r_{\star }=(\frac{3c_{\mu }}{c_{1}})^{1/3}=(\frac{3c_{3}\lambda _{A}}{2c_{1}}%
)^{1/2}$ is a characteristic radius with the restriction $c_{\mu }c_{1}>0$.

We observe that if we select the constant $c_{3}\equiv 1$ then we retain the
Schwarzschild-like metric. On the other hand, the function $R(r)$ can be
viewed as a distortion factor which quantifies the smooth deviation from the
pure Schwarzschild solution. Thus, the $f(T)$ gravity on small spherical
scales ($r\rightarrow r_{\star }^{+}$) tends to create a Schwarzschild
solution. In particular, the $f(T)$ spherical solution admits singularity
only at $r=r_{\star }$, which is also the case with the usual Schwarzschild
solution.

Within this framework, for commoving observers, $u^{i}u_{i}=A^{2}(r)$, it is
easy to show that the Einstein's tensor becomes
\begin{equation*}
G_{j}^{i}=\mathrm{diag}\left( 2c_{1}c_{3}-\frac{1}{r^{2}},2c_{1}c_{3}-\frac{1%
}{r^{2}},2c_{1}c_{3},2c_{1}c_{3}\right) .
\end{equation*}%
Therefore, from the 1+3 decomposition of $G_{ij}$we define the fluid
physical parameters
\begin{eqnarray}
&&\rho _{T}=\frac{1}{\left( u^{i}u_{i}\right) }G_{ij}u^{i}u^{j}=2c_{1}c_{3}-%
\frac{1}{r^{2}}  \label{TE.1} \\
&&p_{T}=\frac{1}{3}h_{ij}G_{ij}=2c_{1}c_{3}-\frac{1}{3r^{2}}  \label{TE.2} \\
&&q^{i}=h^{ij}G_{jk}u^{k}=0  \label{TE.3} \\
\pi _{\theta }^{\theta } &=&\pi _{\phi }^{\phi }=-\frac{\pi _{r}^{r}}{2}=%
\frac{1}{3r^{2}}
\end{eqnarray}%
where
\begin{equation}
\pi _{ij}=(h_{i}^{r}h_{j}^{s}-\frac{1}{3}h_{ij}h^{rs})G_{rs}  \label{TE.4}
\end{equation}%
is the anisotropic stress tensor and $h_{ij}$ is the $u^{i}$ projection
tensor defined by
\begin{equation}
h_{ij}=g^{ij}-\frac{1}{\left( u^{i}u_{i}\right) }u^{i}u^{j}.
\end{equation}

Furthermore the fluid is also anisotropic $(\pi _{ij}\neq 0$ ) but not heat
conducting $(q^{i}=0).$

In order to apply the above considerations for specific $f(T)$ forms, we
consider the following viable $f(T)$ models, motivated by cosmology\footnote{%
The $f(T)$ models of Refs.\cite{Ben09,Linn1} are consistent with the
cosmological data.}

\begin{itemize}
\item Exponential $f(T)$ gravity \cite{Linn1}:
\begin{equation*}
f(T)=T+f_{0}e^{-f_{1}T},
\end{equation*}
where $f_{0}$ and $f_{1}$ are the two model parameters which are connected
via (\ref{yLL})
\begin{equation*}
f_{0}=\frac{4c_{1}c_{3}}{8f_{1}c_{1}c_{3}+1}\exp \left(
4f_{1}c_{1}c_{3}\right) .
\end{equation*}

\item A sum of two different power law $f(T)$ gravity:
\begin{equation*}
f(T)=T^{m}+f_{0}T^{n}
\end{equation*}%
where from (\ref{yLL}) we have
\begin{equation*}
f_{0}=\frac{1-2m}{2n-1}\left( 4c_{1}c_{3}\right) ^{m-n}\;.
\end{equation*}%
Note that in the case of $m=1$ we recover the $f(T)$ model by Bengochea \&
Ferraro \cite{Ben09}.
\end{itemize}

\subsubsection{$f(T)=T^{n}$}

In the $f(T)=T^{n}$ case, the field equations (\ref{Lf.04}), (\ref{Lf.06}), (%
\ref{HLf.06}) and the torsion scalar (\ref{L4S01}) give rise to the
following dynamical system:
\begin{eqnarray}
&&T=4x^{\prime }y^{\prime },  \label{L4S1} \\
&&4nT^{n-1}x^{\prime }y^{\prime }-\left( 1-n\right) T^{n}=0,  \label{L4S2} \\
&&x^{\prime \prime }+\left( n-1\right) x^{\prime }T^{-1}T^{\prime }=0,
\label{L4S3} \\
&&y^{\prime \prime }+\left( n-1\right) y^{\prime }T^{-1}T^{\prime }=0\;.
\label{L4S4}
\end{eqnarray}%
It is easy to show that combining equation (\ref{L4S1}) with the Hamiltonian
(\ref{L4S2}), we can impose constraints on the value of $n$, namely $n=1/2$.
Under this condition, solving the system of equations (\ref{L4S3}) and (\ref%
{L4S4}) we arrive at the solutions
\begin{eqnarray}
&&x(\tau )=\frac{\sigma (\tau )^{3}}{3}+c_{\sigma } \\
&&y(\tau )=\frac{\sigma (\tau )^{3}}{3}  \label{L4S4b}
\end{eqnarray}%
where $c_{\sigma }$ is the integration constant. Now using (\ref{L4S001}) we
derive $a$,$b$ as
\begin{eqnarray}
&&b(\tau )=\sigma (\tau )  \label{L4S44} \\
&&a(\tau )=\sqrt{\frac{2\left[ \sigma ^{3}(\tau )+3c_{\sigma }\right] }{%
3\sigma (\tau )}}.
\end{eqnarray}%
Using the coordinate transformation $\sigma (\tau )=r$, which implies $\tau
=F(r)$ [with $F(\sigma (\tau ))=\tau $], and using simultaneously (\ref%
{L4S44}), the spherical metric (\ref{SSA.1}) can be written as
\begin{equation}
ds^{2}=-A(r)dt^{2}+B(r)dr^{2}+r^{2}\left( d\theta +\sin ^{2}\theta d\phi
^{2}\right) \;  \label{MM44}
\end{equation}%
where
\begin{equation}
A(r)=\frac{2}{3}r^{2}+\frac{2c}{r}  \label{AA44}
\end{equation}%
and
\begin{equation}
B(r)=\frac{F_{,r}^{2}}{A(r)r^{4}}.  \label{BB44}
\end{equation}

Furthermore, considering the commoving observers $\left( u^{i}u_{i}\right) =-%
\frac{2\left( r^{3}+3c_{\sigma }\right) }{3r}$, we can write the Einstein
tensor components as
\begin{eqnarray*}
G_{t}^{t} &=&-\frac{r}{3F_{r}^{3}}\left[ 4rF_{,rr}\left( r^{3}+3c_{\sigma
}\right) -2F_{,r}\left( 7r^{3}+12c_{\sigma }\right) +\frac{3}{r^{3}}%
F_{,r}^{3}\right] \\
G_{r}^{r} &=&\frac{2r^{4}}{F_{,r}^{2}}-\frac{1}{r^{2}} \\
G_{\theta }^{\theta } &=&G_{\phi }^{\phi }=-\frac{1}{3}\frac{r}{F_{,r}^{3}}%
\left[ rF_{,rr}\left( 4r^{3}+3c_{\sigma }\right) -F_{,r}\left(
14r^{3}+6c_{\sigma }\right) \right]
\end{eqnarray*}%
where $F_{,r}=dF/dr$ and $F_{,rr}=d^{2}F/dr^{2}$.

Similarly, based on the equalities (\ref{TE.1})-(\ref{TE.4}), we compute the
following fluid parameters%
\begin{equation}
\rho _{T}=\frac{4r^{2}F_{,rr}}{F_{,r}^{3}}\left( \frac{1}{3}r^{3}+c_{\sigma
}\right) -\frac{2r}{F_{,r}^{2}}\left( \frac{7}{3}r^{3}-4c_{\sigma }\right) +%
\frac{1}{r^{2}}  \label{TEE.1}
\end{equation}%
\begin{equation}
p_{T}=-\frac{2}{3}\frac{r^{2}F_{,rr}}{F_{,r}^{3}}\left( \frac{4}{3}%
r^{3}+c_{\sigma }\right) +\frac{2}{3}\frac{r}{F_{,r}^{2}}\left( \frac{17}{3}%
r^{3}+2c_{\sigma }\right) -\frac{1}{3r^{2}}  \label{TEE.2}
\end{equation}%
\begin{equation*}
\pi _{,r}^{r}=\frac{2}{3}\frac{r^{3}F_{,rr}}{F_{,r}^{2}}\left( \frac{4}{3}%
r^{3}+c_{\sigma }\right) -\frac{2}{3}\frac{1}{F_{,r}^{2}}\left( \frac{8}{3}%
r^{3}+2c_{\sigma }\right) -\frac{2}{3r^{2}}
\end{equation*}%
\begin{equation*}
\pi _{\theta }^{\theta }=\pi _{\phi }^{\phi }=-\frac{1}{2}\pi _{r}^{r}\;
\end{equation*}%
\begin{equation*}
q^{i}=0
\end{equation*}

\section{Conclusion}

In this chapter we studied the Noether symmetries of $f\left( T\right) $
gravity. \ We proved that for some diagonal frames the Lagrangian of the
field equations admits Noether symmetries for arbitrary $f\left( T\right) $
function. However, in the case of power law $f\left( T\right) $, i.e. $%
f\left( T\right) =T^{n}$ it is possible the Lagrangian to admits extra
Noether symmetries. We applied this results in order to classify the Noether
symmetries of the field equations in a spatially flat FRW spacetime and in a
static spherical symmetric spacetime. For each background spacetime we found
analytical solutions of the field equations.

\chapter{Discussion\label{Discussion}}

\section{Discussion}

In this thesis we study the Lie point symmetries and the Noether point
symmetries of second order differential equations usinng a geometric
approach and we apply the results to systems which are relevant to
relativistic physics.

In particular, we have studied the point symmetries of the equations of
motion of dynamical systems in a Riemannian space with Lagrangian%
\begin{equation}
L\left( x^{i},\dot{x}^{k}\right) =\frac{1}{2}g_{ij}\dot{x}^{i}\dot{x}%
^{j}-V\left( x^{k}\right)  \label{Dis.01}
\end{equation}%
where $g_{ij}=g_{ij}\left( x^{k}\right) $ is the metric of the space \ and
we proved that the Lie point symmetries of the Euler-Lagrange
equations,~i.e. $E^{i}\left( L\right) =0,$ of Lagrangian (\ref{Dis.01}) are
generated from the elements of the special Projective algebra of the
Riemannian manifold with metric $g_{ij}$ whereas the Noether point
symmetries are generated from the homothetic algebra of the space with
metric $g_{ij}$. Therefore we have transfer the problem of determination of
the Lie/Noether symmetries of differential equations to the determination of
the collineations of the underlying manifold; hence, we are able to use the
plethora of existing results of differential geometry.

We have applied this geometric approach in many directions. In particular,
we have classified the Lie and the Noether symmetries of the geodesic
Lagrangian for some important spacetimes, such as the FRW spacetime, the G%
\"{o}del space, the Taub space and the 1+3 decomposable spacetimes. Moreover
we proved that for Einstein spaces the point symmetries of the geodesic
equations are generated from the elements of the Killing algebra of the
metric.

Furthermore we have determined all the two and the three dimensional
Newtonian systems which admit Lie and Noether point symmetries. We note
that, due to the geometric derivation and the tabular presentation, the
results can be extended easily to higher dimensional flat spaces. We applied
these results in the study of the symmetries of the H\`{e}non - Heiles
potential and of the Kepler-Ermakov potential in a two dimensional space.
Moreover, we determined the potentials which admit Noether symmetries in a
two dimensional sphere $S^{2}$.

We proved that a dynamical system admits as Lie symmetries the $sl\left(
2,R\right) $ Lie algebra if and only if the underlying manifold admits a
gradient Homothetic vector. The Newtonian system which is invariant under
the Lie group $sl\left( 2,R\right) $ is the well known Kepler-Ermakov
system. Therefore, the requirement for a dynamical system with Lagrangian of
the form of (\ref{Dis.01}) to admit as Lie and Noether symmetries the
generators of the $sl\left( 2,R\right) $ Lie algebra leads us to the
generalization of the Newotian Kepler-Ermakov system in a Riemannian
manifold; that is, we found that the general autonomous Kepler-Ermakov
system follows from the Lagrangian%
\begin{equation}
L\left( u,\dot{u},y^{A},\dot{y}^{A}\right) =\frac{1}{2}\left( \dot{u}^{2}+%
\dot{u}^{2}h_{AB}\dot{y}^{A}\dot{y}^{B}\right) +\frac{\mu ^{2}}{2}u^{2}+%
\frac{1}{u^{2}}V\left( y^{C}\right)  \label{Dis.02}
\end{equation}%
In additionally, we studied the Liouville integrability of the three
dimensional Newtonian Kepler-Ermakov via Noether point symmetries and we
investigated the application of Lagrangian (\ref{Dis.02}) in dynamical
systems emerging from alternative theories of gravity. In particular we
showed that the field equations in a Bianchi I spacetime for an exponential
scalar field and for $f\left( R\right) $ gravity when $f\left( R\right)
=\left( R-2\Lambda \right) ^{\frac{7}{8}}$ follow from Lagrangians of the
form of (\ref{Dis.02}) and for these models we proved that the field
equations are Liouville integrable.

Concerning the second order partial differential equations we considered
equations of the generic form%
\begin{equation}
A^{ij}\left( x^{k}\right) u_{ij}-B^{i}(x^{k},u)u_{i}-f(x^{k},u)=0
\label{Dis.03}
\end{equation}%
and we proved a theorem which relates the Lie pont symmetries of equation (%
\ref{Dis.03}) with the elements of the Conformal Killing vectors of the
second order tensor $A_{ij}\left( x^{k}\right) $ (considered to be a
metric). We have applied this result in order to study the Lie point
symmetries of the Heat equation and the Poisson equation. It has been shown
that the Lie symmetries of the Heat equation follow from the Killing and the
homothetic algebras of $A_{ij}\left( x^{k}\right) ,$ whereas the Lie
symmetries of the Poisson equation follow from the Killing, the homothetic
and the conformal algebras of $A_{ij}\left( x^{k}\right) .$ In each case we
have determoned the form of the Lie symmetry vectors.

Furthermore, we have determined the Lie symmetries of the Schr\"{o}dinger
equation
\begin{equation}
\Delta u-u_{,t}=V\left( x^{k}\right) u  \label{Dis.04}
\end{equation}%
and the Klein Gordon equation%
\begin{equation}
\Delta u=V\left( x^{k}\right) u  \label{Dis.05}
\end{equation}%
in a general Riemannian space. It has been shown that these symmetries are
related to the Noether symmetries of the classical Lagrangian for which the
metric $g_{ij}$ is the kinematic metric. More precisely, for the Schr\"{o}%
dinger equation (\ref{Dis.04}) it has been shown that if a KV or a HV of the
metric $g_{ij}$ produces a Lie symmetry for the Schr\"{o}dinger equation,
then it produces a Noether symmetry for the Classical Lagrangian in the
space with metric $g_{ij}$ and potential $V(x^{k})$. For the Klein Gordon
equation (\ref{Dis.05}) the situation is different; the Lie symmetries of
the Klein Gordon are generated by the elements of the conformal group of the
metric $g_{ij}.$ The KVs and the HV of this group produce a Noether symmetry
of the classical Lagrangian with a constant gauge function. However the
proper CKVs produce a Noether symmetry for the conformal Lagrangian if there
exists a conformal factor $N\left( x^{k}\right) $ such that the CKV becomes
a KV/HV of $g_{ij}$.

{We have applied these results to three cases of practical interest: the
motion in a central potential, the classification of all potentials in two
and three dimensional Euclidian spaces for which the Schr\"{o}dinger
equation and the Klein Gordon equation admit a Lie symmetry and finally we
have considered the Lie symmetries of the Klein Gordon equation in the
static, spherically symmetric empty spacetime. In the last case, we have
demonstrated the role of the Lie symmetries and that of the conformal
Lagrangians in the determination of the closed form solution of Einstein
equations. Furthermore, we investigated the Lie point symmetries of the null
Hamilton Jacobi equation and we proved that if a CKV generates a point
symmetry for the Klein Gordon equation, then it also generates a point
symmetry for the null Hamilton Jacobi equation.}

We also studied the problem of Type II\ hidden symmetries of second order
partial differential equations in $n~$ dimensional Riemannian spaces from a
geometric of view. We have considered the reduction of the Laplace and of
the homogeneous heat equation and the consequent possibility of existence of
Type II hidden symmetries in some general classes of spaces which admit some
kind of symmetry; hence, they admit nontrivial Lie symmetries.

The Type II\ hidden symmetries of Laplace equation are directly related to
the transition of the CKVs from the space where the original equation is
defined to the space where the reduced equation resides. In this sense, we
related the Lie symmetries of PDEs with the basic collineations of the
metric i.e. the CKVs. \

Concerning the Type II hidden symmetries of the homogeneous heat equation we
considered the problem in the spaces which admit a gradient KV or a gradient
HV and finally spacetime which admits a HV\ which acts simply and
transitively. For the reduction of the homogeneous heat equation and the
existence of Type II hidden symmetries, we found the following general
geometric results: (a) If we reduce the homogeneous heat equation via the
symmetries which are generated by a gradient KV $\left( S^{,i}\right) $~the
reduced equation is a heat equation in the nondecomposable space. In this
case we have the Type II hidden symmetry $\partial _{t}-\frac{1}{2t}%
w\partial _{w}$ provided if we reduce the heat equation with the symmetry~$%
tS^{,i}-\frac{1}{2}Su\partial _{u}$. (b) If we reduce the homogeneous heat
equation via the symmetries which are generated by a gradient HV the reduced
equation is Laplace equation for an appropriate metric. In this case the
Type II hidden symmetries are generated from the proper CKVs and (c) in
Petrov type III spacetime, the reduction of the homogeneous heat equation
via the symmetry generated from the nongradient HV gives PDE that inherit
the Lie symmetries, hence no Type II hidden symmetries are admitted.

Finally, we applied the point symmetries and especially the Noether point
symmetries in modified theories of gravity in order to probe the nature of
dark energy. We used the Noether symmetries as a geometric criterion or
"selection rule" in order to select the scalar field potential in
scalar-tensor theories, and the functions $f\left( R\right) $ and $f\left(
T\right) $ in the corresponding alternative theories of gravity.

In the context of scalar-tensor cosmology we have found that to every
non-minimally coupled scalar field, we can associate a unique minimally
coupled scalar field in a conformally related space with an appropriate
potential. This result can be used in order to study the dynamical
properties of the various cosmological models, since the field equations of
a non-minimally coupled scalar field are the same, at the conformal level,
with the field equations of the minimally coupled scalar field. Furthermore,
we have identified the Noether point symmetries and the analytic solutions
of the equations of motion in the context of a minimally coupled and a non
minimally coupled scalar field in a FRW spacetime and we have classified the
Noether point symmetries of the field equations in Bianchi class A models
with a minimally coupled scalar field. We find that there is a rather large
class of hyperbolic and exponential potentials which admit extra (beyond the
$\partial _{t}$) Noether symmetries which lead to integral of motions. For
these potentials we used the corresponding Noether integrals in order to
solve analytically the field equations and find the functional form of the
scalar factor.

Concerning the $f\left( R\right) $ models we applied the Noether point
symmetries with the aim to utilize the existence of non-trivial Noether
symmetries as a selection criterion that can distinguish the $f(R)$ models
on a more fundamental level. We proved that in a spatially flat FRW
background the $f\left( R\right) $ theories which admit Noether point
symmetries are the$~R^{n},~R^{\frac{7}{8}},~\left( R-2\Lambda \right) ^{%
\frac{7}{8}},~R^{\frac{3}{2}}~$and$~\left( R-2\Lambda \right) ^{\frac{3}{2}}$%
. The last two functional forms of the $f\left( R\right) $ function admit
Noether point symmetries also in the case of a non spatially flat FRW
background. It is interesting to note that the $\frac{7}{8}$ models are
equivalent with the Newtonian Kepler-Ermakov system whereas the $\frac{3}{2}$
models are equivalent with the anisotropic hyperbolic oscillator. For these
functional forms we use the Noether integrals in order to find exact
solutions of the modified field equations.

Furthermore, in $f\left( T\right) $ gravity we have proved a Lemma that for
diagonal frames the only functional form of $f\left( T\right) $ which admits
extra Noether symmetries is the $f\left( T\right) =T^{n}$. \ We applied this
functional form in a spatially flat FRW background and we determined the
analytic solution of the field equations for each case. Finally, we studied
the field equations for the $f\left( T\right) =T^{n}$ model in a static
spherically symmetric spacetime and we determined a family of analytic
solutions.

The geometric approach is a new method for studying the symmetries of
differential equations and has shown that gives directly results by using
only the results (usually existing) of differential geometry without the need
to use of a computer library in order to determine the Lie or the Noether
point symmetries. This approach implies a better understanding of the nature
of symmetries and of the conservation laws and can be used in order to find
analogues of classical Newtonian systems in relativistic physics. It is of
interest that this method would be extendented in other classes of
differential equations and in other transformations which are not necessary
point transformations. Concerning the applications in Cosmology, it is of
interest the classification of modified theories of gravity with geometric
selection rules. In this thesis we studied some of the basic modified
theories of gravity; however there are other more such theories which could
be studied further either by means of point symmetries or by some new
geometric criteria. 

\bibliographystyle{science}
\bibliography{Refer}

\end{document}